\newcommand{\executeiffilenewer}[3]{%
\ifnum\pdfstrcmp{\pdffilemoddate{#1}}%
{\pdffilemoddate{#2}}>0%
{\immediate\write18{#3}}\fi%
}
\newcommand{\includesvg}[2]{%
\executeiffilenewer{figures/#2.svg}{figures/#2.pdf}%
{inkscape -z -D --file=figures/#2.svg %
--export-pdf=figures/#2.pdf}%
{\includegraphics[width=#1]{figures/#2.pdf}}}%
\newcommand{\svg}[2]{\includesvg{#1}{#2}}
\newcommand{\etal}{et al.~}
\newcommand{\ie}{i.e.~}
\newcommand{\eg}{e.g.~}
\newcommand{\wloge}{w.l.o.g.~}
\newcommand{\mc}{\mathcal}
\newcommand{\poly}{\mathrm{poly}}
\newcommand{\dist}{\mathrm{dist}}
\DeclareSymbolFont{AMSb}{U}{msb}{m}{n}
\DeclareSymbolFontAlphabet{\mathbb}{AMSb}
\newtheorem{theorem}{Theorem}[section]
\newtheorem{claim}[theorem]{Claim}
\newtheorem{lemma}[theorem]{Lemma}
\newtheorem{corollary}[theorem]{Corollary}
\newtheorem{definition}[theorem]{Definition}
\newtheorem{reduction}{Reduction Rule}[section]
\newtheorem{example2}[theorem]{Example}
\newcommand{\knip}[1]{}
\newcommand{\pST}{\textsc{Planar Steiner Tree}\xspace}
\newcommand{\problemST}{\textsc{Steiner Tree}\xspace}
\newcommand{\pSF}{\textsc{Planar Steiner Forest}\xspace}
\newcommand{\problemSF}{\textsc{Steiner Forest}\xspace}
\newcommand{\problemMWC}{\textsc{Multiway Cut}\xspace}
\newcommand{\pSAT}{$3$-\textsc{SAT}\xspace}
\newcommand{\rSAT}{$R$-\textsc{SAT}\xspace}
\newcommand{\nG}{\ensuremath{\hat{G}}}
\newcommand{\intr}{\mathrm{int}}
\newcommand{\eps}{\varepsilon}
\newcommand{\prm}{\ensuremath{\partial}}
\newcommand{\tree}{T}
\newcommand{\terms}{S}
\newcommand{\termpairs}{\mathcal{S}}
\newcommand{\Tapx}{\ensuremath{\tree_{apx}}}
\newcommand{\Oh}{\ensuremath{\mathcal{O}}}
\newcommand{\Bb}{\ensuremath{\mathcal{B}}}
\newcommand{\cqed}{\renewcommand{\qedsymbol}{$\lrcorner$}}
\newcommand{\MR}{\mathit{MR}}
\newcommand{\nice}{\ensuremath{\tau}}
\newcommand{\weismb}{w}
\newcommand{\wei}[1]{\weismb(#1)}
\newcommand{\portalbound}{\theta}
\newcommand{\plane}{\Pi}
\newcommand{\Z}{\mathbb{Z}}
\newcommand{\mou}[2]{(#1 \wedge #2)}
\newcommand{\xP}{\mathfrak{P}}
\newcommand{\regs}{\mathcal{R}}
\newcommand{\regspm}[1]{\mathcal{R}^{#1}}
\newcommand{\proj}{\pi}
\newcommand{\extproj}{\tilde{\proj}}
\newcommand{\Proj}{\xi}
\newcommand{\mouth}{\kappa}
\newcommand{\porset}{\mathbf{P}}
\newcommand{\coreface}{f_{\mathtt{core}}}
\newcommand{\closeB}{B_{\textrm{close}}}
\newcommand{\pemwcname}{\textsc{Planar Edge Multiway Cut}}
\newcommand{\pemwc}{\textsc{PEMwC}}
\def\case #1{{\bf Case}\ {\it #1:}}
\newcommand{\defproblemnoparam}[3]{
  \vspace{1mm}
\noindent\fbox{
  \begin{minipage}{0.95\textwidth}
  #1 \\
  {\bf{Input:}} #2  \\
  {\bf{Task:}} #3
  \end{minipage}
  }
  \vspace{1mm}
}
\begin{document}
\pagenumbering{gobble}
\thispagestyle{empty}

\date{}
\title{Network Sparsification for Steiner Problems on Planar and Bounded-Genus Graphs%
\thanks{An extended abstract of this work has appeared at FOCS 2014.
The research leading to these results has received funding from the European Research Council under the European Union's Seventh Framework Programme (FP/2007-2013) / ERC Grant Agreement n.~267959
(Marcin Pilipczuk, Micha\l{} Pilipczuk), ERC Grant Agreement n.~259515 (Marcin Pilipczuk, Piotr Sankowski) and Foundation for Polish Science (Marcin Pilipczuk, Piotr Sankowski) and
Polish funds for years 2011-2014 for co-financed international projects.}}
\author{%
Marcin Pilipczuk\footnote{Institute of Informatics, University of Warsaw, Poland, \texttt{malcin@mimuw.edu.pl}. Part of the research was done while the author was at University of Bergen, Norway.} \and
Micha{\l} Pilipczuk\footnote{Institute of Informatics, University of Warsaw, Poland, \texttt{michal.pilipczuk@mimuw.edu.pl}. Research was done while the author was at University of Bergen, Norway.} \and
Piotr Sankowski\footnote{Institute of Informatics, University of Warsaw, Poland, \texttt{sank@mimuw.edu.pl}.} \and
Erik Jan van Leeuwen\footnote{Department of Information and Computing Sciences, Utrecht University, The Netherlands, \texttt{e.j.vanleeuwen@uu.nl}. Part of the research was done while the author was at Sapienza University of Rome, Italy and at Max-Planck Institut f\"{u}r Informatik, Saarland Informatics Campus, Saarbr\"{u}cken, Germany.}
}
\maketitle
%\vspace{-0.5cm}
\begin{abstract}
We propose polynomial-time algorithms that sparsify planar and bounded-genus graphs while preserving optimal or near-optimal solutions to Steiner problems.
%We propose polynomial-time algorithms that sparsify Steiner problems on planar and bounded-genus graphs.
Our main contribution is a polynomial-time algorithm that, given an unweighted graph $G$ embedded on a surface of genus $g$ and a designated face $f$ bounded by a simple cycle of length $k$, uncovers a set $F \subseteq E(G)$ of size polynomial in $g$ and $k$ that contains an optimal Steiner tree for \emph{any} set of terminals that is a subset of the vertices of $f$.

We apply this general theorem to prove that:
\begin{itemize}
\item given an unweighted graph $G$ embedded on a surface of genus $g$ and a terminal set $\terms \subseteq V(G)$, one can in polynomial time find a set $F \subseteq E(G)$ that contains an optimal Steiner tree $T$ for $\terms$ and that has size polynomial in $g$ and $|E(T)|$;
%\item an analogous result holds for the \textsc{Steiner Forest} problem;
\item an analogous result holds for an optimal Steiner forest for a set $\mc{\terms}$ of terminal pairs;
\item given an unweighted planar graph $G$ and a terminal set $\terms \subseteq V(G)$, one can in polynomial time
find a set $F \subseteq E(G)$ that contains an optimal (edge) multiway cut $C$ separating $\terms$ (i.e., a cutset that intersects any path with endpoints in different terminals from $\terms$) and that has size polynomial in $|C|$.
\end{itemize}
In the language of parameterized complexity, these results imply the first polynomial kernels for \textsc{Steiner Tree} and \textsc{Steiner Forest} on planar and bounded-genus graphs (parameterized by the size of the tree and forest, respectively) and for \textsc{(Edge) Multiway Cut} on planar graphs (parameterized by the size of the cutset).
%Polynomial kernels for \problemST and similar ``subset'' problems were called for in [Demaine, Hajiaghayi, Computer J., 2008] as part of the quest to widen the reach of the broad theory of bidimensionality ([Demaine \etaln, JACM 2005], [Fomin \etaln, SODA 2010]).
%\problemST and similar ``subset'' problems were identified in [Demaine, Hajiaghayi, Computer J., 2008] as important to the quest to widen the reach of the theory of bidimensionality ([Demaine~\etaln, JACM 2005], [Fomin~\etaln, SODA 2010]). Therefore, our results can be seen as a leap forward to achieve this broader goal.

Additionally, we obtain a weighted variant of our main contribution: a polynomial-time algorithm that, given an edge-weighted plane graph $G$, a designated face $f$ bounded by a simple cycle of weight $\wei{f}$, and an accuracy parameter $\eps > 0$, uncovers a set $F \subseteq E(G)$ of total weight at most $\poly(\eps^{-1}) \wei{f}$ that, for any set of terminal pairs that lie on $f$, contains a Steiner forest within additive error $\eps \wei{f}$ from the optimal Steiner forest.
%This result deepens the understanding of the recent framework of approximation schemes for network design problems on planar graphs
%([Klein, SICOMP 2008], [Borradaile, Klein, Mathieu, ACM TALG 2009], and later works)
%by explaining the structure of the solution space within a brick of the so-called \emph{mortar graph}, the central notion of this framework.
\end{abstract}

\clearpage

\pagenumbering{arabic}

%!TEX root = pst-kernel.tex

\section{Introduction}\label{sec:intro}
Preprocessing algorithms seek out and remove chunks of instances of hard problems that are irrelevant or easy to resolve. The strongest preprocessing algorithms reduce instances to the point that even an exponential-time brute-force algorithm can solve the remaining instance within limited time. The power of many preprocessing algorithms can be explained through the relatively recent framework of kernelization~\cite{downey-fellows:book,niedermeier:book}. In this framework, each problem instance $I$ has an associated parameter $k(I)$, often the desired or optimal size of a solution to the instance. %(the standard parameter). 
Then a \emph{kernel} is a polynomial-time algorithm that preprocesses the instance so that its size shrinks to at most $g(k(I))$, for some computable function $g$. If $g$ is a polynomial, then we call it a \emph{polynomial kernel}.

The ability to measure the strength of a kernel through the function $g$ has led to a concerted research effort to determine, for each problem, the function $g$ of smallest order that can be attained by a kernel for it.
Initial insight into this function, in particular a proof of its existence, is usually given by a \emph{parameterized algorithm}: an algorithm that solves an instance $I$ in time $g(k(I)) \cdot |I|^{O(1)}$. 
%The question whether a kernel even exists is usually answered by giving a \emph{parameterized algorithm}~\cite{downey-fellows:book,niedermeier:book}: an algorithm that solves an instance $I$ in time $g(k(I)) \cdot |I|^{O(1)}$. 
Such an algorithm implies a kernel with the same function $g$,
while, if the considered problem is decidable, then any kernel immediately gives a parameterized algorithm as well~\cite{downey-fellows:book,niedermeier:book}.
     %: if $|I| < g(k(I))$, then return the instance, and otherwise the parameterized algorithm actually takes polynomial time. 
%The question whether a kernel, and thus such a function $g$, even exists is usually answered by giving a \emph{parameterized algorithm}~\cite{downey-fellows:book,niedermeier:book}: an algorithm that solves an instance $I$ in time $g(k(I)) \cdot n^{O(1)}$. 
However, if the problem is NP-hard, then this approach can only yield a kernel of superpolynomial size, unless P$=$NP. Therefore, different insights are needed to find the function $g$ of smallest order, and in particular to find a polynomial kernel. This fact, combined with the discovery that for many problems the existence of a polynomial kernel would imply a collapse in the polynomial hierarchy~\cite{hans:lb,lance:lb,andrew:lb}, has recently led to a spike in research on polynomial kernels.

A focal point of research into polynomial kernels are problems on planar graphs. Many problems that on general graphs have no polynomial kernel or even no kernel at all, possess a polynomial kernel on planar graphs.
%With few exceptions~\cite{planar-capdom}, all problems considered thus far on planar graphs have a parameterized algorithm, and thus a kernel. Moreover, many problems actually possess a polynomial kernel on planar graphs (the most notable exceptions are \textsc{Longest Path} and similar `pattern-search' problems~\cite{hans:lb}).
The existence of almost all of these polynomial kernels can be explained from the theory of bidimensionality~\cite{metakernel,bidim:jacm,bidim:kernels}. The core assumption behind this theory is that the considered problem is {\em{bidimensional}}: informally speaking, the solution to an instance must be dense in the input graph. However, this assumption clearly fails for a lot of problems, which has led to gaps in our understanding of the power of preprocessing algorithms for planar graphs. In their survey, Demaine and Hajiaghayi \cite{dh:enc,dh:cj} pointed out `subset' problems, in particular \problemST, as an important research goal
in the quest to generalize the theory of bidimensionality.

In this paper, we pick up this line of research and positively resolve the question to the existence of a polynomial kernel on planar graphs for three well-known `subset' problems: \problemST, \problemSF, and \problemMWC. 
We remark that the theory of bidimensionality does not apply to any of these three problems, and that for the first two problems a polynomial kernel on general graphs is unlikely to exist~\cite{dom:ids} and for the third the existence of a polynomial kernel on general graphs is a major open problem~\cite{worker2013-opl,dagstuhl2012,stefan-magnus2}. %Additionally, the theory of bidimensionality does not apply to any of these three problems. 
All kernelization results in this paper are a consequence of a single, generic sparsification algorithm for Steiner trees in planar graphs, which is of independent interest. This sparsification algorithm extends to edge-weighted planar graphs, and we demonstrate its impact on approximation algorithms for problems on planar graphs, in particular on the EPTAS for \problemST on planar graphs~\cite{klein:planar-st-eptas}.

\subsection{Reading guide}
The paper presents three views on our results, with increasing level of detail. In the first view, Section~\ref{ss-intro:results} states our results, and briefly describes our techniques and how they (vastly) differ from previous papers on planar graph problems. We discuss possible limitations and extensions in Section~\ref{ss-intro:discussion}. The second view (in Section~\ref{sec:overview}) provides a rich overview of the proofs of our results. Finally, the third view (Sections~\ref{sec:preliminaries} through~\ref{sec:sf-lb}) gives full and detailed proofs.

\subsection{Results} \label{ss-intro:results}
We present an overview of the three major results that make up this paper. First, we describe the generic sparsification algorithm for Steiner trees in planar graphs. Second, we show how this sparsification algorithm powers the kernelization results in this paper. Third, we exhibit the extension of the sparsification algorithm to edge-weighted planar graphs, and its implications for approximation algorithms on planar graphs.

\medskip
\noindent\textbf{The Main Theorem.} In our main contribution, we characterize the behavior of Steiner trees in bricks. In our work, a \emph{brick} is simply a connected plane graph $B$ with one designated face formed by a simple cycle~$\prm B$, which \wloge is the outer (infinite) face of the plane drawing of $B$, and called the \emph{perimeter of $B$}. 
Recall that a \emph{Steiner tree} of a graph $G$ is a tree in $G$ that contains a given set $\terms \subseteq V(G)$ (called \emph{terminals}).
We also say that the Steiner tree \emph{connects} $\terms$.
In the unweighted setting, a Steiner tree $\tree$ that connects $\terms$ is \emph{optimal} if every Steiner tree that connects $\terms$ has at least as many edges as $\tree$.
We apply our characterization of Steiner trees in bricks to obtain the following sparsification algorithm:

\begin{theorem}[Main Theorem]\label{thm:main}
Let $B$ be a brick. Then one can find in $\Oh(|\prm B|^{142} \cdot |V(B)|)$ time a subgraph $H$ of $B$ such that
\begin{enumerate}
\renewcommand{\theenumi}{\roman{enumi}}
\renewcommand{\labelenumi}{(\theenumi)}
\item $\prm B \subseteq H$,
\item $|E(H)| = \Oh(|\prm B|^{142})$, and
\item for every set $\terms \subseteq V(\prm B)$, $H$ contains some optimal Steiner tree in $B$ that connects $\terms$.
\end{enumerate}
\end{theorem}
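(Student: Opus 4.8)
The plan is to prove the Main Theorem by analyzing the structure of optimal Steiner trees with terminals on the perimeter, and showing that there are few "essentially different" ways an optimal tree can interact with the brick. The key conceptual point is that although there are $2^{|\prm B|}$ possible terminal sets, an optimal Steiner tree for a terminal set $\terms \subseteq V(\prm B)$ drawn inside the brick is a planar object whose "boundary behavior'' is controlled: after contracting the tree, the perimeter cycle is split into a bounded number of arcs, and one expects that only $\poly(|\prm B|)$ distinct "subtrees'' can ever appear as building blocks. So the first step would be to set up a notion of a \emph{partial solution} or \emph{canonical piece}: given a terminal set, decompose an optimal Steiner tree along the perimeter into components hanging off the boundary, and argue — using planarity and a cut-and-paste exchange argument — that each such component can be replaced by one drawn from a small canonical family without increasing the total number of edges.

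\textbf{Recursive splitting via shortest paths.} The main engine I would use is a divide-and-conquer on the brick: repeatedly cut the brick along a carefully chosen shortest path (or a shortest path between two boundary vertices, or a shortest cycle separating parts of the perimeter), splitting $B$ into two smaller bricks whose perimeters are formed by pieces of $\prm B$ together with the cut path. Since the graph is unweighted, a shortest path between two perimeter vertices has length at most $|\prm B|$ (it can be routed along the perimeter), so the perimeters of the child bricks grow only additively and in a controlled way. An optimal Steiner tree can always be assumed to cross such a shortest path in few places — here one uses an exchange argument: if the tree crossed the path many times, one could reroute portions of it along the (short) path and save edges, contradicting optimality. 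Recursing, one obtains a laminar family of $\poly(|\prm B|)$ shortest paths whose union is the desired set $H$; every optimal Steiner tree for every boundary terminal set can be "snapped'' onto this skeleton. Keeping track of the exact exponent ($142$) is a matter of bounding (a) how much the perimeter length grows per level, (b) the number of interaction points per cut, and (c) the recursion depth — each contributing a polynomial factor.

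\textbf{From structure to algorithm.} Once the structural statement is in place — "there is a subgraph $H$ of size $\poly(|\prm B|)$ containing, for every boundary terminal set, some optimal Steiner tree'' — the algorithmic part is to actually compute such an $H$ in the claimed $\Oh(|\prm B|^{142}\cdot |V(B)|)$ time. Here I would not try to enumerate terminal sets; instead I would directly build the laminar family of cut paths greedily (each shortest-path computation is near-linear in $|V(B)|$, and there are $\poly(|\prm B|)$ of them), take the union of all cut paths together with $\prm B$ itself, and prove by the structural lemma that the resulting $H$ works. Properties (i) and (ii) are then immediate from the construction; property (iii) is exactly the structural statement.

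\textbf{The main obstacle.} The hard part will be the exchange/rerouting argument that simultaneously handles \emph{all} boundary terminal sets: a single cut path must be "good'' for every optimal Steiner tree of every $\terms \subseteq V(\prm B)$, and the rerouting that proves "few crossings'' must produce a tree that is still optimal and still connects the same terminals. This requires a careful choice of which path to cut along (likely the shortest path, or a lexicographically-shortest path under some tie-breaking, so that snapping a tree onto it never strictly increases length), and a delicate argument that after snapping, connectivity among the terminals is preserved — the snapped pieces must still join up through the cut path. Controlling the interaction so that the recursion terminates with only a polynomial blow-up, rather than an exponential one, is the technical heart of the proof; I expect this is where the bulk of the work (and the large constant in the exponent) comes from.
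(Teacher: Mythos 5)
There is a genuine gap in your plan: cutting along a shortest path between two boundary vertices does not, by itself, make measurable progress, and this is exactly the obstruction the paper has to overcome. If the shortest path cuts off only a very thin sliver of the brick, one of the two resulting subbricks has perimeter almost equal to $|\prm B|$, so the recursion depth can be $\Theta(|\prm B|)$ and the number of recursive calls exponential. For a polynomial bound one needs, as in Lemma~\ref{lem:recursion}, that \emph{each} subbrick has perimeter at most $(1-\tau)|\prm B|$ for a fixed constant $\tau>0$ \emph{and} the total perimeter of the subbricks is $O(|\prm B|)$; the second property is easy (a tree of length $\leq |\prm B|$ adds at most $2|\prm B|$ to the total), but the first is precisely what an arbitrary shortest-path cut fails to guarantee. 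You correctly flag "controlling the recursion'' as the technical heart, but you do not supply a mechanism for it, and no choice of cut path alone can fix the double-comet example in Figure~\ref{fig-intro:double-comet}, where \emph{every} optimal Steiner tree leaves a subbrick of near-full perimeter.

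The paper resolves this with a dichotomy rather than a single recursive cut rule. It first tests (Theorem~\ref{thm:nice-testing}, a dynamic program in the spirit of Erickson et al.) whether $B$ admits a short $\tau$-nice tree, i.e.\ a brickable tree whose induced brick partition shrinks every piece by a constant factor. If yes, it recurses on that partition. If no, it proves a strong structural consequence: there is a \emph{core face} $\coreface$ that cannot be enclosed by any short carve (Theorem~\ref{thm:core}, via Borsuk's non-retraction theorem), and via the Mountain Range Theorem (\ref{thm:mountain-range}) one extracts a cycle $C$ of length $O(|\prm B|)$ near the perimeter such that, for every terminal set $S\subseteq V(\prm B)$, some optimal Steiner tree has no vertex of degree $\geq 3$ strictly inside $C$ (Theorem~\ref{thm:ananas}). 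This makes the region inside $C$ sparsifiable directly by marking one shortest path per pair of vertices on $C$ --- crucially, \emph{without} recursing there --- while the annulus between $C$ and $\prm B$ decomposes into subbricks that \emph{do} shrink by a constant factor. Your crossing-number/exchange argument is reasonable in spirit, but it cannot substitute for this structural analysis, which is what both bounds the recursion depth and certifies that marking only shortest paths inside $C$ suffices for property (iii).
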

%We also prove an analogue of this theorem for graphs of bounded genus, with a polynomial dependence on the genus in the size bound --- see Theorem~\ref{thm:main-genus}.

The result of Theorem~\ref{thm:main} is stronger than just a polynomial kernel, because the graph $H$ contains an optimal Steiner tree for \emph{any} terminal set that is a subset of the brick's perimeter. The result fits in a line of sparsification algorithms that reduce an instance and enable fast queries or computations (unknown at the current time) on the original instance, such as sparsification algorithms that approximately preserve vertex distances (so-called graph spanners)~\cite{spanners-origin,spanners-greedy}, that preserve connectivity~\cite{ibaraki}, or that conserve flows and cuts~\cite{mimicking,benczur-karger,bss09,krauthgamer}. 
Such sparsification algorithms are a common tool in, among others, dynamic graph algorithms~\cite{Eppstein97}, especially for planar graphs~\cite{Eppstein96,EppsteinGIS98,Diks07,KleinS98,Subramanian93}.
%To the best of our knowledge, however, this is the first sparsification algorithm that preserves Steiner trees.

We also emphasise that the purely combinatorial (non-algorithmic) statement of Theorem~\ref{thm:main}, which asserts the existence of a subgraph $H$ that has property (iii) and polynomial size, is nontrivial and, in our opinion, interesting on its own.
A naive construction of a subgraph $H$ that has property (iii) would mark an optimal Steiner tree for each set $\terms \subseteq V(\prm B)$. Combined with the observation that any optimal Steiner tree of a set $\terms \subseteq V(\prm B)$ has size at most $|\prm B|$ (as $\prm B$ is a Steiner tree that connects $\terms$), we obtain a bound on the size of $H$ of 
%Observe that a naive construction of a subgraph $H$ with property (iii) would mark an optimal Steiner tree for each choice of the terminal set on the perimeter, leading to a bound on the size of $H$ of 
$|\prm B|\cdot 2^{|\prm B|}$.
%as opposed to the polynomial bound of Theorem~\ref{thm:main}.
%much worse than the polynomial bound of Theorem~\ref{thm:main}.
The polynomial bound of Theorem~\ref{thm:main} presents a significant improvement over this naive bound.
% than the ones used in previous works. %\cite{klein:planar-st-eptas,ours-stacs}.
%and, consequently, all our further arguments in the proof of Theorem~\ref{thm:main} are based on completely different ideas.
%In both previous works, the brick is cut into so-called strips, and then each strip is cut with a `perpendicular' column.
%In~\cite{ours-stacs} the algorithm quite closely follows the approach of the EPTAS~\cite{klein:planar-st-eptas}, cutting the brick using shortest paths into strips, and then cutting a strip with a `perpendicular' column.

The starting point of our work is the observation that an optimal Steiner tree for some choice of terminals on the perimeter decomposes the brick into smaller subbricks (see Figure~\ref{fig-intro:partition}), on which we can subsequently recurse.
However, the depth of the recursion may become too large if for \emph{any} optimal Steiner tree for \emph{any} choice of terminals on the perimeter, there are one or two subbricks that have perimeter almost equal to $|\prm B|$, as in Figure~\ref{fig-intro:double-comet}. Therefore, the main part of our proof aims to understand the structure of the brick when this happens.
%Therefore, the main part of our proof aims to understand the structure of the brick when \emph{no} such optimal Steiner tree can be used to decompose the brick.
%Intuitively, this happens when among the subbricks that are `cut out' by any optimal Steiner tree for any terminal set,
%there are always one or two subbricks that have perimeter almost equal to $|\prm B|$, as in Figure~\ref{fig-intro:double-comet}.
In this case, we show that any optimal tree for any terminal set avoids a well-defined inside of the brick (the core), and give an algorithm to find it.
Using the core and a deep topological analysis of the brick, we then find a cycle $C$ of length $\Oh(|\prm B|)$ that lies close to the perimeter of $B$ and that separates the core from all vertices of degree at least three of some optimal solution, for any set of terminals (see Figure~\ref{fig-intro:pineapple}).
Therefore, for any set of terminals, there is some optimal solution whose intersection with the area inside $C$ is a disjoint union of shortest paths, and thus we can sparsify this area by keeping a shortest path inside $C$ between any pair of vertices of $C$.
After this, we decompose the area between $C$ and the perimeter of the brick into several smaller pieces, which we recursively sparsify.
% and we also sparsify the area inside the cycle.
Using an inductive argument, we show that this yields the polynomial bound on the size of the returned graph $H$.

\begin{figure}[tb]
\centering
\begin{subfigure}{.33\textwidth}
\centering
\includegraphics[width=.9\linewidth]{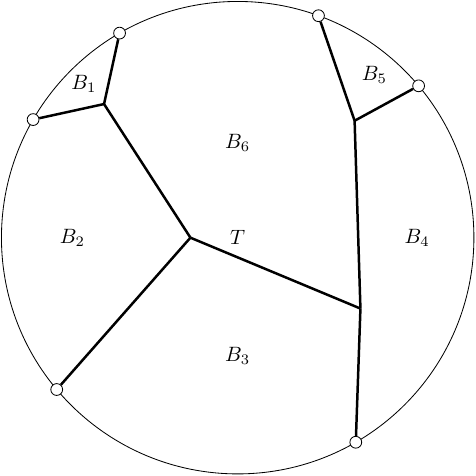}
\caption{}
\label{fig-intro:partition}
\end{subfigure}%
\begin{subfigure}{.33\textwidth}
 \centering
 \includegraphics[width=.9\linewidth]{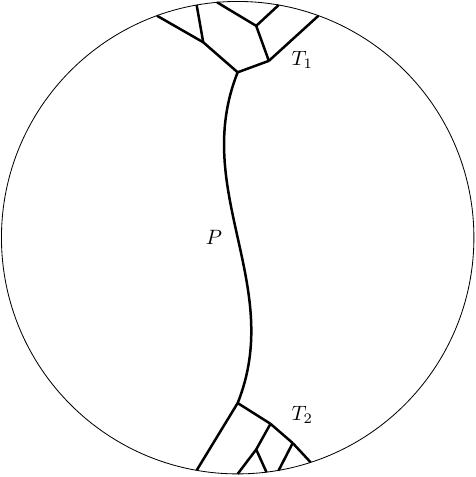}
 \caption{}
 \label{fig-intro:double-comet}
\end{subfigure}%
\begin{subfigure}{.33\textwidth}
 \centering
 \includegraphics[width=.9\linewidth]{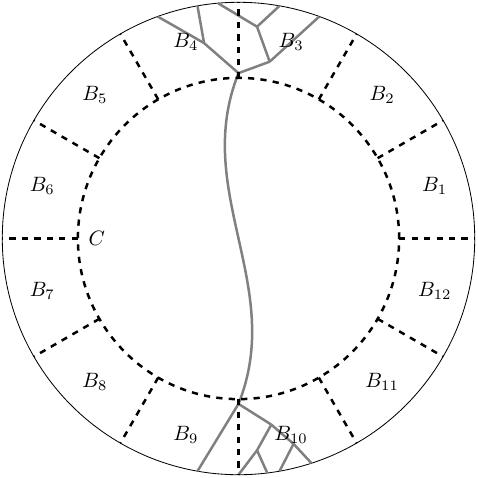}
 \caption{}
 \label{fig-intro:pineapple}
\end{subfigure}
\caption{(a) shows an optimal Steiner tree $T$ and how it partitions
the brick $B$ into smaller bricks $B_1,\ldots,B_r$.
  (b) shows an optimal Steiner tree that connects a set of vertices on the perimeter of $B$ and that consists of two small trees $T_{1},T_{2}$ that are connected by a long path~$P$;
note that both bricks neighbouring $P$ may have perimeter very close to $|\prm B|$.
  (c) shows a cycle $C$ that (in particular) hides the small trees $T_1,T_2$ in the ring between $C$ and $\prm B$, and a subsequent decomposition of $B$ into smaller bricks.}
\label{fig-intro:intro-comet}
\end{figure}

We give a more detailed overview of the proof of Theorem~\ref{thm:main} in Section~\ref{sec:overview}.
The full proof is contained in Sections~\ref{sec:preliminaries}--\ref{sec:dp}. 
We also prove an analogue of Theorem~\ref{thm:main} for graphs of bounded genus, with a polynomial dependence on the genus in the size bound. This analogue is sketched in Section~\ref{ss-over:genus} and presented in full in Section~\ref{sec:genus}.

The approach that we take in this paper is very different from previous approaches to tackle problems on planar graphs or on bricks. In particular, our ideas are disjoint from those developed in both an EPTAS~\cite{klein:planar-st-eptas} and a subexponential-time parameterized algorithm~\cite{ours-stacs} for \pST{}. In those works, a brick was cut into so-called strips and then each strip was cut with a `perpendicular column'. Therefore, already our starting observation (to use an optimal Steiner tree to decompose the brick) seems novel. Moreover, to the best of our knowledge, there is no work that aims to understand the behavior of a Steiner tree in a brick when all optimal Steiner trees leave one or two large subbricks (as in Figure~\ref{fig-intro:double-comet}). Most of our paper is devoted to developing the tools and techniques to understand this case. We also stress that we do not employ any techniques used in the theory of bidimensionality. In particular, we do not use any tools from Graph Minors theory, such as the Excluded Grid Theorem~\cite{dh:linear-grid,rst:grid} --- the engine of the theory of bidimensionality.

%Observe that the proof of Theorem~\ref{thm:main} is largely disjoint from previous approaches to tackle problems on planar graphs. In particular, we do not use any tools from Graph Minors theory, such as the Excluded Grid Theorem~\cite{dh:linear-grid,rst:grid} --- the engine of the theory of bidimensionality. %Instead, we use a divide-and-conquer principle: we decompose the brick $B$ into smaller pieces, which we recursively sparsify.
%A divide-and-conquer approach for \pST{} on bricks was explored in both an EPTAS~\cite{klein:planar-st-eptas} and a subexponential-time parameterized algorithm~\cite{ours-stacs}. In both works, a brick was cut into so-called strips and then each strip was cut with a `perpendicular column'. Although at some level we still follow a divide-and-conquer approach, we employ completely different decomposition methods.

%Note that the above approach is also disparate from the theory of bidimensionality. In particular, we do not use any tools from Graph Minors theory, such as the Excluded Grid Theorem~\cite{dh:linear-grid,rst:grid}, which are the engine of bidimensionality theory.

\medskip
\noindent\textbf{Applications of Theorem~\ref{thm:main}.}
We give three applications of Theorem~\ref{thm:main}. For each application, we state the result and its significance, and give an intuition of the proof. More detailed sketches of the proofs are provided in Section~\ref{ss-over:applications}, and for details we refer to Section~\ref{sec:applications}.

For the first application of Theorem~\ref{thm:main}, we consider \problemST{}. For this problem, a polynomial kernel on general graphs would imply a collapse of the polynomial hierarchy~\cite{dom:ids}. At the same time, the core assumption of bidimensionality theory fails, and whether a polynomial kernel exists for \textsc{Steiner Tree} on planar graphs was hitherto unknown. Using Theorem~\ref{thm:main}, we can resolve the existence of a polynomial kernel for \problemST{} on planar graphs.

\begin{theorem}\label{thm:pst-intro}
Given a $\pST{}$ instance $(G,\terms)$,
one can in
$\Oh(k_{OPT}^{142} |G|)$ time find a set $F \subseteq E(G)$
of $\Oh(k_{OPT}^{142})$ edges that contains an optimal Steiner tree
connecting $\terms$ in $G$, where $k_{OPT}$ is the size of an optimal
Steiner tree.
\end{theorem}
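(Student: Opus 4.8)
The plan is to reduce $(G,\terms)$, in polynomial time, to a single application of Theorem~\ref{thm:main} on a brick whose perimeter has length $\Oh(k_{OPT})$. First I would compute a $2$-approximate Steiner tree $\Tapx$ for $(G,\terms)$ (in time $\Oh(k_{OPT}\cdot|G|)$ by the classical shortest-path/MST algorithm), so that $\terms\subseteq V(\Tapx)$ and $|E(\Tapx)|\le 2k_{OPT}$. Then I would \emph{cut $G$ open along $\Tapx$}: the standard planar operation that, for each $v\in V(\Tapx)$, replaces $v$ by $\deg_{\Tapx}(v)$ copies by splitting the rotation at $v$ along the tree edges (so each edge of $\Tapx$ is duplicated) and leaves all other vertices and edges untouched. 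This produces a connected plane graph $\nG$ with a new face bounded by a \emph{simple} cycle of length exactly $2|E(\Tapx)|=\Oh(k_{OPT})$; taking that face to be the outer one turns $\nG$ into a brick $B$ with $|\prm B|=\Oh(k_{OPT})$. Let $\proj$ denote the natural map re-identifying copies; it induces a surjection $E(\nG)\to E(G)$, and every terminal has a preimage on $\prm B$ since $\terms\subseteq V(\Tapx)$.

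I would then apply Theorem~\ref{thm:main} to $B$, obtaining $H\subseteq B$ with $\prm B\subseteq H$, $|E(H)|=\Oh(|\prm B|^{142})=\Oh(k_{OPT}^{142})$, and such that for \emph{every} $Q\subseteq V(\prm B)$, $H$ contains an optimal Steiner tree of $B$ connecting $Q$. The output is $F:=\proj(E(H))$, so $|F|\le|E(H)|=\Oh(k_{OPT}^{142})$; since $|V(B)|=\Oh(|V(G)|)$, the running time is dominated by the $\Oh(|\prm B|^{142}\cdot|V(B)|)=\Oh(k_{OPT}^{142}|G|)$ call to Theorem~\ref{thm:main}.

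The crux is to show that $F$ contains an optimal Steiner tree of $G$. Fix an optimal Steiner tree $\tree^\star$ for $\terms$, put $F_0:=E(\tree^\star)\cap E(\Tapx)$, and let $\mathcal F:=\tree^\star-F_0$, a forest. Its edges, being non-tree edges, lift uniquely into $\nG$, giving a forest $\widehat{\mathcal F}$ with components $\widehat D_1,\dots,\widehat D_N$. I claim every $\widehat D_j$ meets $\prm B$: otherwise $\proj(\widehat D_j)$ would have to be a whole component of $\mathcal F$ disjoint from $V(\Tapx)$, which is impossible, as such a component attaches to the rest of $\tree^\star$ only through $\Tapx$-edges (whose endpoints lie in $V(\Tapx)$), and in the degenerate case $\mathcal F=\tree^\star$ it must still contain the terminals. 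Hence $P_j:=V(\widehat D_j)\cap V(\prm B)\ne\emptyset$, and $\widehat D_j$ is a Steiner tree of $B$ connecting $P_j$; by Theorem~\ref{thm:main}, $H$ contains an optimal Steiner tree $\widehat S_j$ of $B$ connecting $P_j$, with $|E(\widehat S_j)|\le|E(\widehat D_j)|$.

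Finally, set $\tree^{\star\star}:=F_0\cup\bigcup_j\proj(\widehat S_j)$, which is contained in $F$ (as $F_0\subseteq E(\Tapx)=\proj(E(\prm B))$ and $\widehat S_j\subseteq H$). The edge count telescopes exactly: $|E(\tree^{\star\star})|\le|F_0|+\sum_j|E(\widehat S_j)|\le|F_0|+\sum_j|E(\widehat D_j)|=|F_0|+|E(\mathcal F)|=|E(\tree^\star)|=k_{OPT}$. For connectivity and spanning: every vertex where $\proj(\widehat D_j)$ meets an $F_0$-edge or another piece $\proj(\widehat D_{j'})$ (a split vertex) lies in $V(\Tapx)$ and is incident to an edge of $\proj(\widehat D_j)$, hence lies in $\proj(P_j)$; since $\proj(\widehat S_j)$ connects $\proj(P_j)$, replacing each $\proj(\widehat D_j)$ by $\proj(\widehat S_j)$ keeps the object connected, and every terminal, being in $V(\Tapx)$, survives as a vertex. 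Thus any spanning tree of $\tree^{\star\star}$ is a Steiner tree of $G$ connecting $\terms$ with at most $k_{OPT}$ edges, hence an optimal one, and it lies in $F$. I expect the main obstacle to be exactly this last step: one must engineer the reduction to preserve the \emph{exact} optimum rather than merely a constant-factor approximation, which is precisely what forces the decomposition along the components of the \emph{lifted} forest $\widehat{\mathcal F}$ (finer than the components of $\mathcal F$ itself, because of vertex splitting) so that the above count telescopes.
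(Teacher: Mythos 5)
Your proof is correct and follows essentially the same approach as the paper: compute a $2$-approximate Steiner tree $\Tapx$, cut the plane open along it, apply Theorem~\ref{thm:main} to the resulting brick, and project back. The only difference is in the bookkeeping used to argue that $F$ preserves an optimum: you first carve off $F_0=E(\tree^\star)\cap E(\Tapx)$ and decompose only the lift of the complementary forest $\mathcal{F}$ into components $\widehat D_j$, whereas the paper lifts all of $\tree^\star$ (deduplicating the doubled $\Tapx$-edges) and decomposes that whole lift into components; in both cases each piece is replaced by an optimal connector from $H$ and the edge count telescopes. Your variant is marginally cleaner since the $\Tapx$-part of $\tree^\star$ is trivially preserved because $\prm B\subseteq H$, but the two arguments are interchangeable.
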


We emphasise two aspects of Theorem~\ref{thm:pst-intro}. First, the proposed algorithm \emph{does not} need to be given an optimal solution nor its size, even though the running time and output size of the algorithm are polynomial in the size of an optimal solution. Second, the running time of the algorithm can be bounded by $\Oh(|G|^2)$: if $|G|$ is smaller than the promised kernel bound, then the algorithm may simply return the input graph without any modification. Similar remarks hold also for the second and third applications of Theorem~\ref{thm:main} that we present later.

Intuitively, Theorem~\ref{thm:pst-intro} is almost a direct consequence of Theorem~\ref{thm:main}: we compute a $2$-approximation to the optimal Steiner tree, cut the plane open along it, and then make the resulting cycle the outer face (see Figure~\ref{fig-intro:cutopen}), as done in the EPTAS for this problem~\cite{klein:planar-st-eptas}. Since all terminals lie on the outer face of the cut-open graph, we apply Theorem~\ref{thm:main} to it, and project the resulting graph $H$ back to the original graph.

\begin{figure}[bt]
\centering
\includegraphics[width=.6\linewidth]{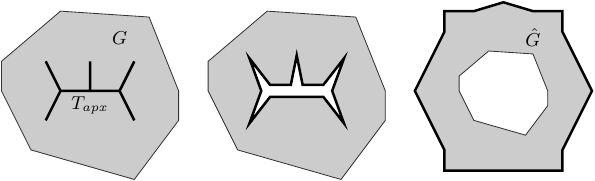}
\caption{The process of cutting open the graph $G$ along the approximate Steiner tree.}
\label{fig-intro:cutopen}
\end{figure}

For the second application of Theorem~\ref{thm:main}, we modify the approach of Theorem~\ref{thm:pst-intro} for the closely related \problemSF{} problem on planar graphs. Recall that a \emph{Steiner forest} that \emph{connects} a family $\mathcal{\terms} \subseteq V(G) \times V(G)$ of terminal pairs in a graph $G$ is a forest in $G$ such that both vertices of each pair in $\mc{\terms}$  are contained in the same connected component of the forest.

\begin{theorem}\label{thm:psf-intro}
Given a $\pSF{}$ instance $(G,\termpairs)$,
one can in
$\Oh(k_{OPT}^{710} |G|)$ time find a set $F \subseteq E(G)$
of $\Oh(k_{OPT}^{710})$ edges that contains an optimal Steiner forest
connecting $\termpairs$ in $G$, where $k_{OPT}$ is the size of an optimal
Steiner forest.
\end{theorem}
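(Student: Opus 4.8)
The plan is to mimic the reduction used for Theorem~\ref{thm:pst-intro}, but to handle the disconnected nature of a Steiner forest by grouping the terminal pairs into connected components of an optimal solution, cutting the plane open along each component separately, and charging the pieces to the number of components. Concretely, let $\tree^\star$ be an optimal Steiner forest connecting $\termpairs$, with $\Oh(k_{OPT})$ edges, and let $\tree^\star_1,\ldots,\tree^\star_t$ be its connected components; note $t \le k_{OPT}$. We do not know $\tree^\star$, so the first step is to compute a constant-factor approximate Steiner forest $\tree_{apx}$ in polynomial time (such approximations are standard), whose components $\tree_{apx,1},\ldots,\tree_{apx,t'}$ we will use as a proxy; here $t'$ is within a constant factor of $t$ and $|E(\tree_{apx})| = \Oh(k_{OPT})$. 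The point is that $\tree^\star$ may connect terminals in a different pattern than $\tree_{apx}$, so I would instead enumerate, over all ways of partitioning the endpoints of the approximate components into groups (there are only $2^{\Oh(k_{OPT})}$ such, but we want a polynomial bound, so this must be avoided) --- better: simply run the argument for each connected component of $\tree_{apx}$ independently and take the union of the resulting edge sets.

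The heart of the reduction is as follows. For each component $\tree_{apx,j}$, cut the plane open along $\tree_{apx,j}$ exactly as in Figure~\ref{fig-intro:cutopen}: walking around $\tree_{apx,j}$ yields a closed walk of length $\Oh(|E(\tree_{apx,j})|)$, which after the cut-open operation becomes a simple cycle bounding a new outer face, giving a brick $B_j$ with $|\prm B_j| = \Oh(|E(\tree_{apx,j})|)$. Apply Theorem~\ref{thm:main} to each $B_j$ to obtain $H_j \subseteq B_j$ with $|E(H_j)| = \Oh(|\prm B_j|^{142})$, project $H_j$ back to $G$, and let $F$ be the union of all these projected edge sets together with $E(\tree_{apx})$. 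Then $|F| = \Oh(\sum_j |E(\tree_{apx,j})|^{142}) = \Oh(k_{OPT}^{142})$ if we are lucky with how the sizes distribute --- but in the worst case one component could have size $\Theta(k_{OPT})$ and we only get $\Oh(k_{OPT}^{142})$, which is already fine for that component; the subtlety is the exponent $710$ in the statement, which suggests that a single application of Theorem~\ref{thm:main} is not enough and we must recurse or iterate the construction a bounded number of times (roughly $710/142 \approx 5$ levels), each level handling the interaction between distinct components of $\tree^\star$ that were merged or split relative to $\tree_{apx}$.

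The correctness argument is the crux. We must show $F$ contains \emph{some} optimal Steiner forest for $\termpairs$. The idea is to take an optimal $\tree^\star$ and argue componentwise: each component $\tree^\star_i$ of $\tree^\star$ connects some subset of terminals, and one would like to say that within the region bounded by the cut-open cycle of the approximate component that ``contains'' those terminals, $\tree^\star_i$ restricted there can be replaced by an optimal Steiner tree lying in $H_j$. The difficulty is that $\tree^\star_i$ need not be confined to a single brick $B_j$ --- it may wander across regions corresponding to several approximate components, or several $\tree^\star_i$ may share a brick. Handling this requires an exchange argument showing that an optimal Steiner forest can be chosen so that each of its components stays inside the cut-open region of a single approximate component (possibly after re-routing along $\tree_{apx}$, which is why $E(\tree_{apx}) \subseteq F$), and then applying property (iii) of Theorem~\ref{thm:main} to each brick with the terminal set being the perimeter-vertices that $\tree^\star$ uses. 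This exchange/confinement argument, together with the accounting that keeps the total size polynomial (and pins down the exponent $710$), is the main obstacle; everything else --- the cut-open operation, the projection back, and the running-time bound $\Oh(k_{OPT}^{710}|G|)$ obtained by the usual ``if $|G|$ is already small, return $G$'' trick --- is routine given Theorem~\ref{thm:main} and the machinery already developed for Theorem~\ref{thm:pst-intro}.
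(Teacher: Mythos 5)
There is a genuine gap, and you have in fact put your finger on it without resolving it. Cutting the plane open along each connected component $T_{apx,j}$ of the approximate forest separately cannot work: an optimal Steiner forest component $T^\star_i$ may connect terminals that $T_{apx}$ places in \emph{different} components, and after cutting open along $T_{apx,j}$ only the terminals lying on $T_{apx,j}$ end up on the perimeter of the brick $B_j$. Thus no single $B_j$ has all terminals of $T^\star_i$ on its boundary, and property~(iii) of Theorem~\ref{thm:main} gives no replacement subgraph for $T^\star_i$. The exchange argument you hope for --- rerouting an optimal forest so that each of its components is confined to the cut-open region of one approximate component --- does not exist in general: the optimal forest need not respect the component structure of $T_{apx}$ at all (for example, it may be cheaper for the optimum to merge two approximate components into one), and throwing $E(T_{apx})$ into $F$ does not help, since a minimum-weight forest may avoid $T_{apx}$ entirely. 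Your reading of the exponent $710$ as ``roughly five nested applications of Theorem~\ref{thm:main}'' is also not what is happening.

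The actual proof uses two ingredients your proposal lacks, both serving to put \emph{all} relevant terminals on the perimeter of a \emph{single} brick. First, observe that $|\termpairs| = \Oh(k_{OPT}^2)$ (the optimal forest has at most $2k_{OPT}$ vertices), and compute a cheap approximate forest $T_1$ of size $k_1 := |T_1| = \Oh(k_{OPT}^3)$ simply by unioning, for each $(s,t) \in \termpairs$, a shortest $s$--$t$ path. Then delete every vertex at distance more than $k_1$ from the terminal set; this is safe because no vertex of a solution of size at most $k_1$ can be farther. This distance filter is the missing ingredient: in each surviving connected component $G_0$ with terminal set $\terms_0$, every vertex is within distance $k_1$ of $\terms_0$, so a $2$-approximate Steiner \emph{tree} connecting \emph{all} of $\terms_0$ has size $\Oh(|\terms_0|\cdot k_1) = \Oh(k_{OPT}^5)$. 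Cutting $G_0$ open along this tree puts the entirety of $\terms_0$ on the perimeter of the resulting brick, of length $\Oh(k_{OPT}^5)$. Since each $T^\star_i$ has at most $k_{OPT} \le k_1$ edges, it survives the filter and lives entirely within one $G_0$, so the argument of Theorem~\ref{thm:pst-intro} applies verbatim component by component. A single application of Theorem~\ref{thm:main} to a brick of perimeter $\Oh(k_{OPT}^5)$ produces $\Oh\bigl((k_{OPT}^5)^{142}\bigr) = \Oh(k_{OPT}^{710})$ edges; this, not iteration, is the source of the exponent.
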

Using the analogue of Theorem~\ref{thm:main} for bounded-genus graphs, we extend Theorems~\ref{thm:pst-intro} and~\ref{thm:psf-intro} to obtain a polynomial kernel for \problemST{} and even \problemSF{} on such graphs (see Section~\ref{sec:genus}).
Here, we assume that we are given an embedding of the input graph into a surface of genus $g$ such that the interior of each face is homeomorphic to an open disc.

For the third application of Theorem~\ref{thm:main}, we consider \textsc{Edge Multiway Cut} on planar graphs.
Recall that an \emph{edge multiway cut}\footnote{In the approximation algorithms literature, the term \emph{multiway cut} usually refers to an edge cut, i.e., a subset of edges of the graph, and the node-deletion variants of the problem are often much harder. However, from the point of view of parameterized complexity, there is usually little or no difference between edge- and node-deletion variants of cut problems,
  and hence one often considers the (more general) node-deletion variant as the `default one'.
  To avoid confusion, in this work we always explicitly state that we consider the edge-deletion variant.}
 in a graph $G$ is a set $X \subseteq E(G)$ such that no two vertices of a given set $\terms \subseteq V(G)$ are in the same component of $G \setminus X$.
A recent breakthrough in the application of matroid theory to kernelization problems~\cite{stefan-magnus1,stefan-magnus2} led to the discovery of a polynomial kernel for \textsc{Multiway Cut} on general graphs with a constant number of terminals. It is a major open question whether this problem has a polynomial kernel for an arbitrary number of terminals~\cite{worker2013-opl,dagstuhl2012,stefan-magnus2}.
Here, we show that such a polynomial kernel does exist for \textsc{Edge Multiway Cut} on planar graphs.

\begin{theorem}\label{thm:emwc-intro}
Given a \pemwcname{} instance $(G,\terms)$,
one can in polynomial time find a set $F \subseteq E(G)$
of $\Oh(k_{OPT}^{568})$ edges that contains an optimal solution
to $(G,\terms)$, where $k_{OPT}$ is the size of this optimal solution.
\end{theorem}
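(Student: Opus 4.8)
The plan is to reduce \pemwcname{} on a planar instance $(G,\terms)$ to a bounded number of applications of the Main Theorem, via the classical planar duality between multiway cuts and Steiner forests. First I would recall that in a planar graph, a minimal edge multiway cut separating a terminal set $\terms$ corresponds, in the planar dual, to a collection of cycles/walks enclosing the terminals; more precisely, one plants the terminals into faces and the cut becomes a dual object connecting these ``terminal faces.'' The standard route (going back to the $O^\star(2^{2k})$ algorithm of Klein and Marx for planar multiway cut, and the Hartvigsen-type duality) is: for each terminal $t_i$, subdivide and pick an incident face; an optimal multiway cut then corresponds to an \emph{optimal Steiner forest} in (a bounded modification of) the dual graph, where the pairs to be connected are determined by how the cut separates the plane into regions containing different terminals. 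Crucially, the topology of how a minimum multiway cut partitions the plane among $k$ terminals can be encoded by a planar structure (an ``enclosure'' pattern) of which there are only $2^{O(k)}$ many, but since we only want a polynomial \emph{kernel} (a small edge set $F$ containing \emph{some} optimum), we can afford to iterate over all of them, or better, over the $k^{O(1)}$ relevant ones after first computing a constant-factor approximation to bound the solution size.

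The key steps, in order, would be: (1) compute a $2$-approximate edge multiway cut $X_0$ of $G$ separating $\terms$, which certifies $k_{OPT} = O(|X_0|)$ and, importantly, localizes the optimum: every optimal cut lies within bounded ``distance'' of $X_0$ in a suitable sense, so that we may work in the subgraph of $G$ that is ``close'' to $X_0 \cup \terms$ (discarding the rest, whose edges we never need in $F$). (2) Pass to the planar dual restricted to this relevant region; here the cut becomes a Steiner-forest-like object, and the region is naturally a planar graph with a small ``boundary'' --- one or more faces whose total boundary length is $O(k_{OPT})$ --- onto which all the topologically-relevant structure is pinned. (3) For each of the polynomially many candidate ways the optimum can route among the terminal faces (guided again by $X_0$, so that only $k_{OPT}^{O(1)}$ topological types survive), cut the dual region open along the approximate structure so that all ``terminals'' (terminal faces / their representative vertices) lie on the outer face of a brick, exactly as in the proof of Theorem~\ref{thm:psf-intro}; apply the Main Theorem (Theorem~\ref{thm:main}), or rather its Steiner-forest corollary used for Theorem~\ref{thm:psf-intro}, to that brick to get a sparsified subgraph of size polynomial in the perimeter, which is $O(k_{OPT}^{O(1)})$. (4) Project each returned sparse subgraph back to $G$ through the dual correspondence, obtaining a set of primal edges; take $F$ to be the union over all candidates of these projected edge sets together with $X_0$. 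Since there are $k_{OPT}^{O(1)}$ candidates and each contributes $O(k_{OPT}^{710})$ (Steiner-forest-kernel) edges, and the exponent blows up once more through the duality/region bookkeeping, one lands at $O(k_{OPT}^{568})$ (the precise exponent is whatever the bookkeeping of perimeters in steps (2)--(3) yields).

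The main obstacle I expect is step (2)--(3): making the duality between \emph{planar edge multiway cut} and \emph{Steiner forest on a brick} precise enough to invoke Theorem~\ref{thm:main} as a black box. The subtlety is that a multiway cut is not simply ``the dual of a Steiner forest connecting fixed pairs''; the partition of the $k$ terminals into the two (or more) sides of each dual cycle is part of the solution, so one must argue that (a) after fixing the correct topological type --- the cyclic pattern in which the terminal faces appear around the ``active'' part of the dual --- the cut does become an optimal Steiner \emph{forest} for a determined set of pairs, (b) this type can be read off from (or enumerated consistently with) the $2$-approximation $X_0$ so that only $\mathrm{poly}(k_{OPT})$ types matter, and (c) the region one works in genuinely has a \emph{simple-cycle} perimeter of length $O(k_{OPT})$ --- this may require adding Steiner-tree-style ``shortcut'' edges along $X_0$ and carefully re-triangulating, which is where an honest argument needs the most care. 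The remaining pieces --- the $2$-approximation, the localization lemma, the projection back, and the union bound on edge count --- are routine given Theorems~\ref{thm:main} and~\ref{thm:psf-intro}, so the heart of the proof is establishing this structured planar-duality reduction and bounding the number of topological types it produces.
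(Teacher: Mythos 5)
Your high-level blueprint --- dualize, cut the dual open along a connected structure incident to all terminal faces, apply Theorem~\ref{thm:main} to the resulting brick, and project back --- does match the paper's strategy, but there are two substantive gaps. First, you do not need to enumerate ``topological types'': since Theorem~\ref{thm:main} preserves an optimal Steiner tree for \emph{every} subset of the brick perimeter, the paper cuts open along a single canonical structure (the dual cycles $\delta(t)$ for $t\in\terms$, joined by a Steiner tree $T_H$ in the dual) and then proves correctness via a replacement argument (Lemma~\ref{lem:emwc:finish}): any connected component of an optimal dual solution not yet inside the sparsifier can be swapped out for a connector inside it. Your proposed iteration over enclosure patterns is both unnecessary and unjustified --- you never establish that only $k_{OPT}^{O(1)}$ patterns are relevant, and doing so would be nontrivial.

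Second, and this is where the argument as written would actually fail, your localization step (2) has no valid justification. It is not the case that every optimal cut stays ``close'' to a fixed $2$-approximation $X_0$ in the dual metric, nor that edges far from $X_0 \cup \terms$ may be discarded; that claim is simply not available and is the crux of the matter. The real work in the paper is Reduction Rule~\ref{red:emwc:Ki-dist} and Lemma~\ref{lem:emwc:Ki-dist}: fix one terminal $\ut$ and build a strictly increasing laminar chain $K_1 \subsetneq \cdots \subsetneq K_r$ of minimum $(\ut,\terms\setminus\{\ut\})$-cuts pushed progressively away from $\ut$ (important-separator style); then an edge at dual distance more than $k$ from $\bigcup_i K_i$ may be contracted, because the connected component of an optimal solution containing such an edge is trapped strictly between two consecutive cuts $K_\iota$ and $K_{\iota+1}$ and can be replaced by a piece of $K_{\iota+1}$ without increasing cost. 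Only after this reduction is the dual diameter $\Oh(k_{OPT}^3)$ and the brick perimeter $\Oh(k_{OPT}^4)$, from which the exponent $568 = 4\cdot 142$ follows. Your ``near $X_0$'' heuristic has no comparable replacement argument and leaves the brick perimeter unbounded; the piece your proposal flags as ``needing the most care'' is precisely the piece it is missing, and the mechanism is not the one you suggest.
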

The proof of this theorem is based on a well-known relation between a multiway cut in a planar graph $G$ and a Steiner tree in the dual of $G$. Hence, we apply Theorem~\ref{thm:main} to a cut-open dual of the input graph. However, to bound the diameter of the initial brick, we need to bound the diameter
of the dual of $G$. 
To this end, we show that edges are irrelevant to the problem if they are `far' from some carefully chosen, laminar family of minimal cuts. Such edges may then be contracted safely, leading to the needed bound.

We note that in contrast to the work on polynomial kernels for \textsc{Multiway Cut} mentioned before~\cite{stefan-magnus1,stefan-magnus2}, we do not rely on matroid theory.
%To this end, we employ the framework of important separators~\cite{Marx06} and show that an edge that is `far' from some carefully chosen laminar family of important separators is irrelevant for the problem, and may be safely contracted.

%We also note that the running times of the algorithms of Theorems~\ref{thm:main},~\ref{thm:pst-intro} and~\ref{thm:psf-intro} can be bounded by $\Oh(n^2)$: if $n$ is smaller than the promised kernel bound, then the algorithm may simply return the input graph without any modification. In Theorem~\ref{thm:emwc-intro}, a similar trick may be applied, but the running time is superquadratic due to multiple computations of minimum cuts in the graph, and we refrain from computing the running time explicitly for sake of clarity.

As an immediate consequence of Theorem~\ref{thm:pst-intro} and Theorem~\ref{thm:emwc-intro}, we observe that by plugging the kernels promised by these theorems into the algorithms of Tazari~\cite{tazari:mfcs10}
for \pST{} or its modification for \pemwcname{} (provided for completeness in Section~\ref{sec:emwc-subexp}), or the algorithm of Klein and Marx~\cite{marx:mwc-upper} for \pemwcname{}, respectively, we obtain faster parameterized algorithms for both problems.

\begin{corollary} \label{cor:subexp}
Given a planar graph $G$, a terminal set $\terms \subseteq V(G)$, and an integer $k$,
one can
\begin{enumerate}
\item
in $2^{\Oh(\sqrt{k \log k})} + \Oh(k^{142} |V(G)|)$ time decide
whether the \pST{} instance $(G,\terms)$ has a solution with at most $k$ edges;
\item
in $2^{\Oh(\sqrt{k} \log k)} + \mathrm{poly}(|V(G)|)$ time decide
whether the \pemwcname{} instance $(G,\terms)$ has a solution with
at most $k$ edges;
\item
in $2^{\Oh(|\terms|+\sqrt{|\terms|} \log k)} + \mathrm{poly}(|V(G)|)$ time decide
whether the \pemwcname{} instance $(G,\terms)$ has a solution with
at most $k$ edges.
\end{enumerate}
\end{corollary}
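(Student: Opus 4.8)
The plan is to assemble Corollary~\ref{cor:subexp} by plugging the polynomial kernels of Theorems~\ref{thm:pst-intro} and~\ref{thm:emwc-intro} into known parameterized algorithms, so almost all of the work has already been done; what remains is bookkeeping on the running times. First I would handle item~(1): run the algorithm of Theorem~\ref{thm:pst-intro} on $(G,\terms)$ in $\Oh(k_{OPT}^{142}|V(G)|)$ time to obtain a set $F$ of $\Oh(k_{OPT}^{142})$ edges containing an optimal Steiner tree. Strictly speaking, since the algorithm does not take $k$ as input, I would first note that if $|V(G)| > c\cdot k^{142}$ for the appropriate constant $c$ then either the kernel bound already forces $k_{OPT} > k$ (answer ``no'') or $k_{OPT}\le k$ and the output has size $\Oh(k^{142})$; in the latter case one runs the subexponential algorithm of Tazari~\cite{tazari:mfcs10} for \pST{} on the kernel, which on an $n$-vertex planar instance runs in $2^{\Oh(\sqrt{n\log n})}$ time, here with $n = \Oh(k^{142})$, yielding the $2^{\Oh(\sqrt{k\log k})}$ term (the polynomial $142$ is absorbed into the $\Oh(\cdot)$ inside the square root up to the extra $\sqrt{\log k}$ factor). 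Adding the $\Oh(k^{142}|V(G)|)$ preprocessing time gives the stated bound.

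For item~(2), the scheme is identical but uses Theorem~\ref{thm:emwc-intro} to produce, in polynomial time, a set $F$ of $\Oh(k_{OPT}^{568})$ edges containing an optimal multiway cut, and then runs on this kernel the modification of Tazari's algorithm for \pemwcname{} that the paper promises to provide in Section~\ref{sec:emwc-subexp}. That algorithm, on a planar \pemwcname{} instance with $n$ vertices, should run in $2^{\Oh(\sqrt{n}\log n)}$ time; substituting $n = \Oh(k^{568})$ gives a $\sqrt{n} = \Oh(k^{284})$ contribution and a $\log n = \Oh(\log k)$ contribution, so the exponent is $\Oh(\sqrt{k}\log k)$ after again folding the constant power into the $\Oh$-notation on $\sqrt{k}$ — wait, more carefully: $\sqrt{k^{568}} = k^{284}$ is \emph{not} $2^{\Oh(\sqrt k\log k)}$, so the intended reading must be that Tazari-type branchwidth algorithms for \pemwcname{} run in time $2^{\Oh(\sqrt{k})\cdot\poly\log}$ \emph{in the parameter $k$} once the graph size is polynomial in $k$; concretely one first applies the kernel to bound $|V(G)|$ by $\poly(k)$, then invokes a parameterized (not merely subexponential-in-$n$) algorithm whose running time is $2^{\Oh(\sqrt{k}\log k)}\cdot\poly(|V(G)|)$, and the $\poly(|V(G)|)$ factor is then itself $\poly(k)$ and absorbed. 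I would state it in that order: kernelize first, then run the parameterized algorithm of the appropriate paper on the small instance.

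For item~(3), the same pipeline is used, but the downstream algorithm is that of Klein and Marx~\cite{marx:mwc-upper} for \pemwcname{}, whose running time is of the form $2^{\Oh(|\terms|+\sqrt{|\terms|}\log k)}\cdot\poly(|V(G)|)$; after kernelizing via Theorem~\ref{thm:emwc-intro} the $\poly(|V(G)|)$ factor becomes $\poly(k)$, which is subsumed, giving exactly the claimed bound. In all three items I would emphasise that the kernelization step is run obliviously (it does not need $k$), that the comparison $|V(G)|$ versus the kernel size tells us when to reject outright, and that correctness is immediate because the kernel is guaranteed to contain an optimal solution, so deciding the instance on the kernel is equivalent to deciding it on $G$.

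The main obstacle I expect is not conceptual but a matter of matching the exact forms of the cited running times: Tazari's algorithm and the Klein–Marx algorithm are stated with particular dependencies on the number of vertices and on the solution size, and one must verify that substituting a $\poly(k)$-size instance really collapses those dependencies into the clean exponents $2^{\Oh(\sqrt{k\log k})}$, $2^{\Oh(\sqrt{k}\log k)}$, and $2^{\Oh(|\terms|+\sqrt{|\terms|}\log k)}$ as claimed — in particular being careful about whether the subexponential behaviour is in $n$ or in $k$, and about where the constant-degree polynomials $142$, $568$ from the kernel sizes go (they contribute only a constant factor inside an $\Oh(\cdot)$ that is either under a square root together with a $\log$ or multiplying a $\poly$ term, and so disappear). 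The modification of Tazari's algorithm for \pemwcname{} referenced for item~(2) also needs to actually be carried out in Section~\ref{sec:emwc-subexp}, but for the purposes of this corollary it may be invoked as a black box.
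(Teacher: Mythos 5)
Your pipeline is the same one the paper uses: kernelize via Theorem~\ref{thm:pst-intro} or Theorem~\ref{thm:emwc-intro}, reject immediately if the kernel bound certifies $k_{OPT}>k$, and then run the cited parameterized algorithm on the $\poly(k)$-sized residual instance. That is exactly the one-sentence argument the paper gives before the corollary, and your handling of items~(2) and~(3) --- including your self-correction on item~(2), where you noticed that $\sqrt{k^{568}}$ cannot collapse to $\Oh(\sqrt{k})$ and therefore the downstream algorithm must be parameterized in $k$ with the instance size only entering through $\log$ --- is right. Indeed, Theorem~\ref{thm:mwc-subexp} runs in $|G|^{\Oh(\sqrt{k})}$ time, which with $|G|=\Oh(k^{568})$ becomes $2^{\Oh(568\sqrt{k}\log k)}=2^{\Oh(\sqrt{k}\log k)}$, and the Klein--Marx bound $2^{\Oh(|\terms|)}\cdot|G|^{\Oh(\sqrt{|\terms|})}$ similarly becomes $2^{\Oh(|\terms|+\sqrt{|\terms|}\log k)}$.

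The gap is in item~(1). You model Tazari's algorithm as running in $2^{\Oh(\sqrt{n\log n})}$ time on an $n$-vertex planar instance and assert that substituting $n=\Oh(k^{142})$ ``absorbs the $142$ into the $\Oh$ inside the square root.'' This is not correct: $\sqrt{k^{142}\log(k^{142})}=\Theta(k^{71}\sqrt{\log k})$, and $k^{71}$ is a power of $k$, not a constant, so it cannot be absorbed and the resulting bound would not even be subexponential in $k$. You caught precisely this kind of error in item~(2), and the same re-reading is needed here: Tazari's theorem is a conditional statement whose conclusion, given a polynomial kernel for \pST{}, is a running time already subexponential in the parameter $k$ --- concretely one should read it as $n^{\Oh(\sqrt{k/\log k})}$ (or equivalently $2^{\Oh(\sqrt{k\log k})}\cdot n^{\Oh(1)}$), so that setting $n=\poly(k)$ gives $2^{\Oh(\sqrt{k/\log k}\cdot\log k)}=2^{\Oh(\sqrt{k\log k})}$. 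As written, your justification for item~(1) does not establish the claimed exponent, and the reader would be left with an apparent contradiction that you flagged but resolved only for item~(2).
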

This corollary improves on the subexponential-time algorithm for \pST{} previously proposed by the authors~\cite{ours-stacs}, and on the algorithm for \pemwcname{} by Klein and Marx~\cite{marx:mwc-upper} if $k = o(\log |V(G)|)$.
As Tazari's algorithm extends to graphs of bounded genus, combining it with our kernelization algorithm, we obtain the first subexponential-time algorithm for \problemST{} on graphs of bounded genus. The running time is a computable function of the genus times the running time of the planar case --- see Corollary~\ref{cor:pst-genus-subexp}.

We also remark that a similar corollary is unlikely to exist for the case of \pSF{}.
In Section~\ref{sec:sf-lb} we observe that the lower bound for \textsc{Steiner Forest}
on graphs of bounded treewidth of Bateni \etal\cite{marx-bateni}, with minor modifications,
shows also that \pSF{} does not admit a subexponential-time algorithm unless the
Exponential Time Hypothesis of Impagliazzo, Paturi, and Zane~\cite{eth} fails.

\begin{theorem} \label{thm:psf-eth}
Unless the Exponential Time Hypothesis fails,
no algorithm can decide in $2^{o(k)} \poly(|G|)$ time whether \pSF{} instances $(G,\mc{\terms})$ have a solution with at most $k$ edges.
%no algorithm can decide in time $2^{o(k)} \poly(|G|)$ whether a \pSF{} instance $(G,\mc{\terms})$ has a solution with at most $k$ edges.
\end{theorem}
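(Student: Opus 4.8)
The plan is to combine the Exponential Time Hypothesis with a \emph{linear}-parameter reduction from \pSAT{}, obtained by adapting the hardness construction for \problemSF{} of Bateni, Hajiaghayi, and Marx~\cite{marx-bateni}.

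\medskip
\noindent\textit{Setting up the reduction.} First I would fix the source of hardness. By the Sparsification Lemma~\cite{eth}, ETH implies that there is no $2^{o(N)}$-time algorithm deciding satisfiability of \pSAT{} formulas $\varphi$ with $N$ variables and $O(N)$ clauses. Hence it suffices to give a polynomial-time reduction that maps such a $\varphi$ to a \pSF{} instance $(G_\varphi,\mathcal{S}_\varphi)$ together with an integer $k_\varphi = O(N)$, so that $\varphi$ is satisfiable if and only if $(G_\varphi,\mathcal{S}_\varphi)$ admits a Steiner forest with at most $k_\varphi$ edges, with $|G_\varphi|$ polynomial in $N$. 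Given this, any $2^{o(k)}\poly(|G|)$-time algorithm for \pSF{} would decide satisfiability of $\varphi$ in time $2^{o(N)}\poly(N) = 2^{o(N)}$, contradicting ETH; this is the entire skeleton of the proof.

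\medskip
\noindent\textit{Importing and adapting the construction.} Second, I would take the Bateni et al.\ construction, whose key quantitative feature is that the budget it produces is linear in the number of variables --- exactly the feature that separates \problemSF{} from \problemST{} on planar graphs, for which the $2^{\Oh(\sqrt{k})}$-time algorithm forces any such reduction to use a budget quadratic in $N$. In this construction each variable gadget encodes a truth value as a choice between two equal-length connections forced by a demand pair, and each clause gadget is a demand pair that can be connected cheaply only when one of the corresponding literal connections is selected, so that the optimal forest length drops below the threshold precisely for satisfying assignments. The ``minor modifications'' are: (i) laying out the wiring between variable and clause gadgets so that $G_\varphi$ is plane, routing every crossing through a planar crossover gadget and arranging matters so that the total contribution of these gadgets to an optimal solution remains $\Oh(N)$ (if convenient, preprocessing $\varphi$ so that every variable occurs a bounded number of times, which keeps the parameters $\Oh(N)$ and preserves the $2^{\Omega(N)}$ lower bound); and (ii) replacing the polynomially bounded edge weights by unit-length paths to make the instance unweighted, which changes $|G_\varphi|$ only polynomially and $k_\varphi$ only by a constant factor, while preserving planarity.

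\medskip
\noindent\textit{The main obstacle.} The delicate point is (i): one must verify that planarizing the gadget wiring (and, if used, the extra equality gadgets of the occurrence-reduction step) does not inflate the budget beyond $\Oh(N)$, since a superlinear budget $k_\varphi = N^{1+\Omega(1)}$ would only rule out algorithms running in time subexponential in $\sqrt{k}$ and would \emph{not} give the claimed $2^{o(k)}$ lower bound. Once the budget is confirmed to be $\Theta(N)$, the reduction is complete and the conclusion follows immediately from the skeleton above.
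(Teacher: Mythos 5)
Your skeleton is exactly the paper's: ETH plus the Sparsification Lemma yields a $2^{\Omega(m)}$ lower bound for $3$-SAT with $m$ clauses, so it suffices to give a polynomial-time reduction to \pSF{} whose budget is linear in $m$; you also correctly identify Bateni, Hajiaghayi, and Marx as the right source, and you correctly flag planarity of the wiring as the crux (a quadratic budget would only rule out $2^{o(\sqrt k)}$ algorithms). But your plan for crossing that bridge is not the paper's, and I think it does not survive scrutiny: in a general $3$-CNF formula the variable--clause incidence graph has $\Theta(m^2)$ crossings in any drawing, and bounding the number of occurrences per variable does not make the incidence graph planar, so ``routing every crossing through a planar crossover gadget'' while keeping the total contribution $\Oh(m)$ is precisely the problem, not a remark to be checked at the end.

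The paper sidesteps crossings entirely by inserting an intermediate problem: $R$-\textsc{SAT}, where $R(f,g,h)$ means $(f=h)\vee(g=h)$ and no negations appear. Bateni~\etal's construction (quoted as Lemma~\ref{lem:psf-reduction}) already maps an arbitrary $R$-formula on $n$ variables and $m$ relations to a \emph{planar, treewidth-$3$} instance of \pSF{} with threshold $n+3m$ --- planarity is built in, there is no wiring to planarize and no crossover gadget at all. What remains is a linear-parameter reduction $3$-\textsc{SAT}$\to R$-\textsc{SAT}: each variable $x_i$ becomes $x_i^+,x_i^-$ with relations $R(x_i^+,x_i^-,1)$ and $R(x_i^+,x_i^-,0)$ forcing exactly one of them to be true, and each clause spawns two fresh variables $y_j^+,y_j^-$ and three relations encoding ``some positive literal of the clause is set.'' This produces $2n+2m$ variables and $2n+3m$ relations, i.e.\ linear growth, and composing the two reductions gives the $\Oh(m)$ budget you need. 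So the genuine gap in your proposal is not the skeleton but the resolution of your ``main obstacle'': you should reduce through $R$-\textsc{SAT} (where planarity comes for free from the cited lemma) rather than attempt to planarize a direct $3$-SAT gadget layout. Your modification~(ii), de-weighting, is a non-issue in the paper's formulation since the threshold is already an edge count.
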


\medskip
\noindent\textbf{Edge-Weighted Planar Graphs.}
Although the decomposition methods in the proof of Theorem~\ref{thm:main}
were developed with applications in unweighted graphs in mind,
they can be modified for graphs with positive edge weights
(henceforth called \emph{edge-weighted graphs}).
That is, we show the following weighted and approximate variant of Theorem~\ref{thm:main}:

\begin{theorem}\label{thm:weighted}
Let $\eps > 0$ be a fixed accuracy parameter,
and let $B$ be an edge-weighted brick with weight function $\weismb$.
Then one can find in $\poly(\eps^{-1})\, |B| \log |B|$ time a subgraph
$H$ of $B$ such that\footnote{In this paper, we denote by $\wei{H}$ the total weight of all the edges of a graph $H$.}
\begin{enumerate}
\renewcommand{\theenumi}{\roman{enumi}}
\renewcommand{\labelenumi}{(\theenumi)}
\item $\prm B \subseteq H$,
\item $\wei{H} \leq \poly(\eps^{-1}) \wei{\prm B}$, and
\item for every set $\mathcal{\terms} \subseteq V(\prm B) \times V(\prm B)$
there exists a Steiner forest $F_H$ that connects $\mathcal{\terms}$ in $H$
such that $\wei{F_H} \leq \wei{F_B} + \eps \wei{\prm B}$ for any Steiner forest $F_B$ that connects $\mathcal{\terms}$ in $B$.
\end{enumerate}
\end{theorem}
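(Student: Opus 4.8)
The plan is to re-use the recursive brick-decomposition architecture behind Theorem~\ref{thm:main}, carrying edge weights through the whole argument and replacing every use of ``optimal''/``shortest'' (counted in edges) by ``near-optimal''/``lightest'' (measured by $\weismb(\cdot)$). At a brick $B'$ produced in the recursion I would distinguish the same two cases as in the unweighted proof. In the \emph{balanced case}, some optimal Steiner forest $T^{*}$ for a well-chosen family of terminal pairs on $\prm B'$ cuts $B'$ into subbricks $B'_1,\dots,B'_r$ each of perimeter weight at most $(1-\delta)\wei{\prm B'}$ (for a fixed constant $\delta>0$); I add $T^{*}$ to $H$ and recurse on the $B'_i$. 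Here I use two weighted analogues of unweighted facts: an optimal Steiner forest joining pairs on $\prm B'$ may be taken with a non-crossing component structure and hence may be realised inside $\prm B'$ itself, so $\wei{T^{*}}\le\wei{\prm B'}$; consequently $\sum_i \wei{\prm B'_i}\le\wei{\prm B'}+2\wei{T^{*}}\le 3\wei{\prm B'}$. In the \emph{comet case}---when for \emph{every} terminal set \emph{every} near-optimal forest leaves one or two subbricks of perimeter weight close to $\wei{\prm B'}$---I re-run the topological machinery of Sections~\ref{sec:preliminaries}--\ref{sec:dp} to produce the core (avoided by every near-optimal forest) and a cycle $C$ of weight $\Oh(\wei{\prm B'})$ lying near $\prm B'$ that separates the core from all branching vertices of some near-optimal forest, for every terminal set; I then recurse on the bricks into which the ring between $C$ and $\prm B'$ decomposes, each of smaller perimeter weight. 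These constructions are metric/topological and should carry over to weighted graphs essentially verbatim.

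The step that genuinely breaks is the sparsification of the disc $D$ bounded by $C$. In the unweighted proof one stores, for every pair of vertices of $C$, a shortest path inside $D$; in the weighted setting the total weight of these $\Theta(|C|^{2})$ paths is not bounded by any function of $\eps$ times $\wei{\prm B'}$. My fix is the portalisation technique of the planar approximation-scheme framework: choose $N=\poly(\eps^{-1})$ \emph{portals} on $C$, consecutive portals at $C$-distance $\Oh(\wei{C}/N)$, and keep $C$ together with, for every ordered pair of portals, one lightest path between them inside $D$; this subgraph has weight $\Oh(N^{2}\wei{C})=\poly(\eps^{-1})\wei{\prm B'}$. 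To obtain property~(iii) I take a near-optimal forest $F_{B'}$ for the given terminals; by the core analysis $F_{B'}\cap D$ is a disjoint union of lightest paths with both ends on $C$, and each such path can be rerouted so as to first walk along $C$ to the nearest portal, then follow the stored portal-to-portal path, then walk back along $C$, at a cost of $\Oh(\wei{C}/N)$ per rerouted endpoint. Summed over all such paths this stays within the error budget \emph{provided the number of these paths is $\poly(\eps^{-1})$}---\ie provided $F_{B'}$ may be assumed to cross $C$ only $\poly(\eps^{-1})$ times.

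Establishing this last bound is the main obstacle, and is where a genuinely new argument is needed rather than a reweighting of Theorem~\ref{thm:main}. I would obtain it by a shifting argument in the spirit of Baker and Klein: average over $\Theta(\eps^{-1})$ rotational offsets of the portal pattern along $C$ (and, if necessary, over a family of roughly parallel near-perimeter cycles), and show that for a good offset \emph{some} near-optimal forest can be surgered---cut at $C$ and re-closed along short inter-portal arcs of $C$---so that it crosses $C$ at most $\poly(\eps^{-1})$ times while gaining at most $\Oh(\eps\wei{\prm B'})$ in weight. (At the cost of a worse exponent one could instead invoke a Borradaile--Klein--Mathieu-style portal-respecting structure theorem for the annulus between $C$ and the core.) The delicate point is that the surgery, and hence the choice of offset, must work simultaneously for \emph{all} terminal sets; what makes a uniform offset possible is precisely that the core is avoided by every near-optimal forest and that $D$ minus the core is an annulus, so the $D$-part of every near-optimal forest is ``trivial''---a disjoint union of non-winding paths with ends on $C$---with no branching to track.

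Finally I would close the weight accounting by the same inductive scheme as for Theorem~\ref{thm:main}, now bounding $\wei{H}$ instead of $|E(H)|$: split the accuracy as $\eps=\sum_{\ell\ge 0}\eps/2^{\ell+1}$, use $\poly(2^{\ell}/\eps)$ portals at recursion depth $\ell$, and note that since every recursive subbrick has perimeter weight a constant factor smaller, the recursion has depth $\Oh(\log\wei{\prm B})$ and the total generated perimeter weight per level grows only by a constant factor; the resulting geometric sums yield $\wei{H}=\poly(\eps^{-1})\wei{\prm B}$ and accumulated error $\Oh(\eps\wei{\prm B})$, which after rescaling $\eps$ gives exactly the statement. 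The running time $\poly(\eps^{-1})\,|B|\log|B|$ then follows because every primitive used---$2$-approximate Steiner forests, single-source lightest paths, and the core/cycle construction---runs in near-linear time, and the recursion contributes only a $\poly(\eps^{-1})$ overhead.
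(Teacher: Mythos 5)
Your plan is a genuinely different route from the paper's, and the part you flag as ``the main obstacle'' is indeed where it breaks.

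The paper does not attempt to bound $C$-crossings by shifting. It proceeds in three layers: (a) a single-pair spanner inside a brick, built from Klein-style strips, columns, and cords, giving $(1+\eps)$-approximate shortest paths between any two perimeter vertices; (b) a $\portalbound$-bounded variant (Theorem~\ref{thm:wei:portals}), in which the recursive brick decomposition (using Theorem~\ref{thm:ananas} in the comet case) stops at ``tiny'' bricks of perimeter $\leq \frac{\eps}{\portalbound}\wei{\prm B}$, and the single-pair spanner from (a) is placed in every face of the union of all perimeters; (c) the full statement obtained by pipelining (b) through the Borradaile--Klein--Mathieu mortar graph, whose structure theorem (Theorem~\ref{thm:mortar}) certifies that a near-optimal forest can be assumed to interact with each cell boundary at only $\alpha(\eps^{-1})$ vertices.

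The crucial accounting difference is what actually kills your plan. Placing $N$ portals on $C$ and rerouting to the nearest portal incurs an \emph{additive} error $\Oh(\wei{C}/N)$ per path of $F_{B'}\cap D$. You therefore need the number of such paths to be $\poly(\eps^{-1})$, and you try to get this by a Baker shift. But a shift works for a \emph{fixed} optimal solution; here the offset would have to work simultaneously for every terminal family $\mathcal{\terms}$, and ``the $D$-part is a disjoint union of non-winding paths'' does not bound their number---$F_{B'}$ can cross $C$ as many times as it has degree-$\ge 3$ vertices, which is $\Theta(|\mathcal{\terms}|)$, unbounded in $\eps$. No uniform offset fixes this. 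The paper sidesteps the whole issue by using a \emph{multiplicative} $(1+\eps)$ spanner inside each face (Theorem~\ref{thm:wei:paths}); a multiplicative error is charged against the path itself, so it never needs a crossing bound, and the only additive error comes from the at most $\portalbound$ tiny-brick replacements, each costing $\leq\frac{\eps}{\portalbound}\wei{\prm B}$. The $\portalbound$ bound in turn is not won by shifting inside the recursion either: it is imported wholesale from the mortar graph structure theorem.

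So the two concrete gaps are: (1) the $C$-crossing bound you need is not achievable by a single offset that works for all terminal sets, and your heuristic for why it might is insufficient; (2) even granting some crossing bound, additive-error portals are the wrong primitive here---the paper's strip/column spanner, which you do not mention, is essential because it converts per-path error from additive to multiplicative. Your fallback of ``invoking a BKM-style structure theorem'' is not a small technicality; it is the mortar graph layer, a load-bearing piece of the paper's proof, and it is applied globally across cells rather than per-annulus in the recursion.
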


Notice that, contrary to Theorem~\ref{thm:main},
we state Theorem~\ref{thm:weighted} in the language of Steiner \emph{forest}, not Steiner tree.
The reason is that the allowed error in Theorem~\ref{thm:weighted} is additive,
and therefore the forest statement seems significantly stronger than the tree one.
Observe that for Theorem~\ref{thm:main}, it would be of no consequence to state it in the language of Steiner forest instead of in the language of Steiner tree.

The proof of Theorem~\ref{thm:weighted} extends the techniques developed for Theorem~\ref{thm:main} to edge-weighted planar graphs, and then wraps this extension into the mortar graph framework developed by Borradaille, Klein, and Mathieu~\cite{klein:planar-st-eptas}. Therefore, the main leap to prove Theorem~\ref{thm:weighted} turns out to be a slight variant of Theorem~\ref{thm:weighted}, where $\mc{\terms}$ is allowed to contain at most $\portalbound$ terminal pairs and the obtained bound for $\wei{H}$ depends polynomially both on $\eps^{-1}$ and~$\portalbound$. We call this the $\portalbound$-variant of Theorem~\ref{thm:weighted}.

The proof of this $\portalbound$-variant considers first a base case where $\mathcal{\terms}$ consists of a single terminal pair and only a $(1+\eps)$ multiplicative error in the weight of the forest $F_{H}$ is allowed. 
%EJ: we are talking abound the variant here, not the theorem itself
%The statement of Theorem~\ref{thm:weighted} for the base case has been shown by 
This base case follows immediately from work by Klein~\cite{klein:stoc06} in the context of an approximation scheme for \textsc{Subset TSP}. The base case, together with all the structural results and decomposition methods developed in the proof of Theorem~\ref{thm:main} (extended to edge-weighted planar graphs), then powers the proof of the $\portalbound$-variant of Theorem~\ref{thm:weighted}. We provide a more detailed sketch of the proof of Theorem~\ref{thm:weighted} in Section~\ref{ss-over:weighted}, and a full proof in Section~\ref{sec:weights}.

%EJ: tree or forest
Theorem~\ref{thm:weighted} influences the known polynomial-time approximation schemes for network design as follows.
The mortar graph framework of Borradaile, Klein, and Mathieu~\cite{klein:planar-st-eptas} may be understood as a method to decompose a brick into cells, such that each cell is equipped with $\portalbound$ evenly-spaced portal vertices, and there is an approximate Steiner tree that for each cell uses a subset of the portal vertices to enter and leave the cell. Then it suffices to preserve an approximate or optimal Steiner tree for any subset of portal vertices. Previously, only a bound that is exponential in $\portalbound$ on the preserved subgraph of each cell was known~\cite{klein:planar-st-eptas}. The impact of our work, and particularly of the $\portalbound$-variant of Theorem~\ref{thm:weighted}, is that the dependency on $\portalbound$ can be reduced to a polynomial. This observation is not only used to prove Theorem~\ref{thm:weighted}, but also leads to deeper understanding of the mortar graph framework.

Observe that one can directly derive an EPTAS for \pST{} from Theorem~\ref{thm:weighted}:
cut the input graph $G$ open along a $2$-approximate Steiner tree (as in the kernel, see Figure~\ref{fig-intro:cutopen}),
apply Theorem~\ref{thm:weighted} to the resulting brick $B$,
and project the obtained graph $H$ back onto the original graph.
An optimal Steiner tree in $G$ becomes an optimal Steiner forest in $B$, and thus the projection
of $H$ preserves an approximate Steiner tree for the input instance.
Since the total weight of $H$ is within a multiplicative factor $\poly(\eps^{-1})$
of the weight of the optimal solution for the input instance, an application of Baker's shifting technique~\cite{baker} can find an approximate solution in $H$
in $2^{\poly(\eps^{-1})} |H| \log |H|$ time.
However, we note that the polynomial dependency on $\eps$ in the exponent is worse
than the one obtained by the currently known \mbox{EPTAS}~\cite{klein:planar-st-eptas}, despite our substantially improved reduction of the cells. This is because that EPTAS utilizes Baker's technique in a more clever way that is aware of the properties of the mortar graph, and is indifferent to the actual replacement within each cell.

%We provide a more detailed sketch of the proof of Theorem~\ref{thm:weighted} in Section~\ref{ss-over:weighted}, and a full proof in Section~\ref{sec:weights}.

%Proving a stronger version of Theorem~\ref{thm:weighted} with a multiplicative error of $(1+\eps)$ (i.e., $\wei{F_H} \leq (1+\eps)\wei{F_B}$) remains open. Observe that $\prm B$ is always a feasible Steiner forest for any choice of $\mathcal{\terms} \subseteq V(\prm B) \times V(\prm B)$, hence such a version would be indeed stronger than Theorem~\ref{thm:weighted}.

\subsection{Discussion} \label{ss-intro:discussion}
%!TEX root = pst-kernel.tex

%\section{Summary and Open Problems}\label{sec:conc}

%We presented a method that sparsifies a given planar graph, while preserving optimal Steiner trees that connect vertices on a designated face. Our result implies polynomial kernels for \pST{}, \pSF{}, and \pemwcname{}, resolving several open problems in parameterized complexity and providing methods that reach beyond the bidimensionality theory.

A drawback of our methods is that the exponents in the kernel bounds and the polynomial dependency on $\eps^{-1}$ in the weighted variant are currently large, making the results theoretical.
However, we see the strength of our results in that we prove that a polynomial kernel actually exists  --- thus proving that \pST, \pSF, and \pemwcname{} belong to the class of problems that have a polynomial kernel --- rather than in the actual size bound.
Encouraged by the recent progress in understanding distance sparsifiers in planar graphs~\cite{GoranciHP17,KrauthgamerR17},
we conjecture that the correct dependency in Theorem~\ref{thm:main} is quadratic, with a grid being the worst-case scenario.

Another limitation of our methods is that we need to parameterize by the \emph{number of edges} of the Steiner tree. 
A subsequent work of Suchy~\cite{Suchy15} extended the result to the parameter 
{\em{number of non-terminal vertices of the tree}}.
Very recently, Marx and the first two authors~\cite{dirtsp-pst} proved 
that \pST{} does not admit a subexponential algorithm in planar graphs for the parameter
\emph{number of terminals} (under the standard assumption of the Exponential Time Hypothesis)
  and showed that this refutes an existence of a polynomial
kernel (with this parameter) that does not increase the value of the parameter in the reduction.
This shows that probably the number of edges or non-terminal vertices are the most general parameterizations for which we can obtain a polynomial kernel in planar graphs.

Similarly, one may consider graph-separation problems with vertex-based parameters, such as \textsc{Odd Cycle Transversal} or the node-deletion variant of \textsc{Multiway Cut}. On planar graphs, both of these problems are some sort of Steiner problem on the dual graph. It would be interesting to show polynomial kernels for these problems (without using the matroid framework~\cite{stefan-magnus2}).

To generalize our methods, it would be interesting to lift our results to more general graph classes, such as graphs with a fixed excluded minor. For \textsc{Edge Multiway Cut}, even the bounded-genus case remains open.
Further work is also needed to improve the allowed error in Theorem~\ref{thm:weighted}. Currently, this error is an additive error of $\eps \wei{\prm B}$. In other words, a near-optimal Steiner forest is preserved only for ``large'' optimal forests, that is, for ones of size comparable to the perimeter of $B$.
Is it possible to improve Theorem~\ref{thm:weighted} to ensure a $(1+\eps)$ multiplicative error? That is, to obtain a variant of Theorem~\ref{thm:weighted} where the graph $H$ satisfies $\wei{F_H} \leq (1+\eps)\wei{F_B}$, and thus to preserve near-optimal Steiner forests at \emph{all scales}?
Finally, since our methods handle problems that are beyond the reach of the theory of bidimensionality, our contribution might open the door to a more general framework that is capable of addressing a broader range of problems.

\subsection{Related work}
The three problems considered in this paper (\problemST{}, \problemSF{}, and \textsc{Edge Multiway Cut}) are all NP-hard~\cite{st:np-hard,dahlhaus}
and unlikely to have a PTAS~\cite{st:max-snp-hard,dahlhaus} on general graphs. However, they do admit constant-factor approximation algorithms~\cite{steiner-approx,sf-approx,mwc-approx}.

\problemST{} has a $2^{|\terms|} \cdot \textrm{poly}(|G|)$-time, polynomial-space algorithm on general graphs~\cite{jesper:icalp09}; the exponential factor is believed to be optimal~\cite{ccc}, but an improvement has not yet been ruled out under the Strong Exponential Time Hypothesis. The algorithm for \problemST{} implies a $(2|\mathcal{\terms}|)^{|\mathcal{\terms}|} \cdot \textrm{poly}(|G|)$-time, polynomial-space algorithm for \problemSF{}. On the other hand, \textsc{Edge Multiway Cut} remains NP-hard on general graphs even when $|\terms|=3$~\cite{dahlhaus}, while for the parameterization by the size of the cut $k$, a $1.84^{k} \cdot \textrm{poly}(|G|)$-time algorithm is known~\cite{mwc-yixin}.

Neither \problemST{} nor \problemSF{} admits a polynomial kernel on general graphs~\cite{dom:ids}, unless the polynomial hierarchy collapses. Recently, a polynomial kernel was given for \textsc{Edge} and \textsc{Node Multiway cut} for a constant number of terminals or deletable terminals~\cite{stefan-magnus2}; nevertheless, the question for a polynomial kernel in the general case remains open.

\problemST{}, \problemSF{}, and \textsc{Edge Multiway Cut} all remain NP-hard on planar graphs~\cite{planar-st:np-hard,dahlhaus}, even in restricted cases.
All three problems do admit an EPTAS on planar
graphs~\cite{klein:planar-st-eptas,klein-efficient-psf,bateni-mwc}, and \problemST{}  admits an EPTAS on bounded-genus graphs~\cite{cora:genus}.
As mentioned before, for many graph problems on planar graphs, both polynomial kernels and subexponential-time algorithms follow from the theory of bidimensionality~\cite{bidim:jacm,bidim:kernels}. However, the theory neither applies to \problemST{}, \problemSF{}, nor \textsc{Edge Multiway Cut}.

We are not aware of any previous kernelization results for \problemST{}, \problemSF{}, or \textsc{Edge Multiway Cut} on planar graphs.
The question of the existence of a subexponential-time algorithm for \pST{} was first explicitly
pursued by Tazari~\cite{tazari:mfcs10}. He showed that such a result would be implied by a subexponential or polynomial kernel. The current authors adapted the main ideas of the EPTAS for
\pST{}~\cite{klein:planar-st-eptas} to show a subexponential-time algorithm~\cite{ours-stacs}, without actually giving a kernel beforehand. The algorithm of~\cite{ours-stacs} in fact finds subexponentially many subgraphs of subexponential size, one of which is a subexponential kernel if the instance is a YES-instance.
Finally, for \textsc{Edge Multiway Cut} on planar graphs, a $2^{\Oh(|\terms|)}\cdot |G|^{\Oh(\sqrt{|\terms|})}$-time
algorithm is known~\cite{marx:mwc-upper} and believed to be optimal~\cite{marx:mwc-lower}.

%!TEX root = pst-kernel.tex

\section{Overviews of the proofs} \label{sec:overview}

In this section, we give a more detailed overview of the proof of Theorem~\ref{thm:main},
its weighted counterpart Theorem~\ref{thm:weighted}, and
discuss its corollaries (Theorems~\ref{thm:pst-intro}, \ref{thm:psf-intro} and~\ref{thm:emwc-intro}).

Before we start, we set up some notation.
For a subgraph $H$ of $G$, we silently identify $H$ with the edge set of $H$; that is,
all our subgraphs are edge-induced. For a brick $B$, $\prm B[a,b]$ denotes the subpath of $\prm B$ obtained by traversing $\prm B$ in counter-clockwise direction from $a$ to $b$.
By $\plane$ we denote the standard Euclidean plane.
For a closed curve $\gamma$ on $\plane$, we say that $\gamma$ {\em{strictly encloses}}
$c \in \plane$ if $c \notin \gamma$ and $\gamma$ is not continuously
retractable in $\plane \setminus \{c\}$ to a single point,
and $\gamma$ {\em{encloses}} $c$ if it strictly encloses $c$ or $c \in \gamma$.
This notion naturally translates to cycles and walks in a plane graph $G$
(strictly) enclosing vertices, edges, and faces of $G$.

\subsection{Overview of the proof of Theorem~\ref{thm:main}}
The idea behind the proof of Theorem~\ref{thm:main} is to apply it recursively on subbricks (subgraphs enclosed by a simple cycle) of the given brick $B$. The main challenge is to devise an appropriate way to decompose $B$ into subbricks, so that their ``measure'' decreases. Here we use
the \emph{perimeter} of a brick as a potential that measures the progress of the
algorithm.

Intuitively, we would want to do the following. Let $T$ be a tree in $B$ that connects a subset of the vertices on the perimeter of $B$. Then $T$ splits $B$ into a number of smaller bricks $B_1,\ldots,B_r$, formed by the finite faces of $\prm B \cup T$ (see Figure~\ref{fig-over:partition}). We recurse on bricks $B_i$, obtaining graphs $H_i \subseteq B_i$, and return $H := \bigcup_{i=1}^r H_i$. We can prove that this decomposition yields a polynomial bound on $|H|$ if (i) all bricks $B_i$ have multiplicatively smaller perimeter than $B$, and (ii) the sum of the perimeters of the subbricks is linear in the perimeter of $B$.

\begin{figure}[tbh]
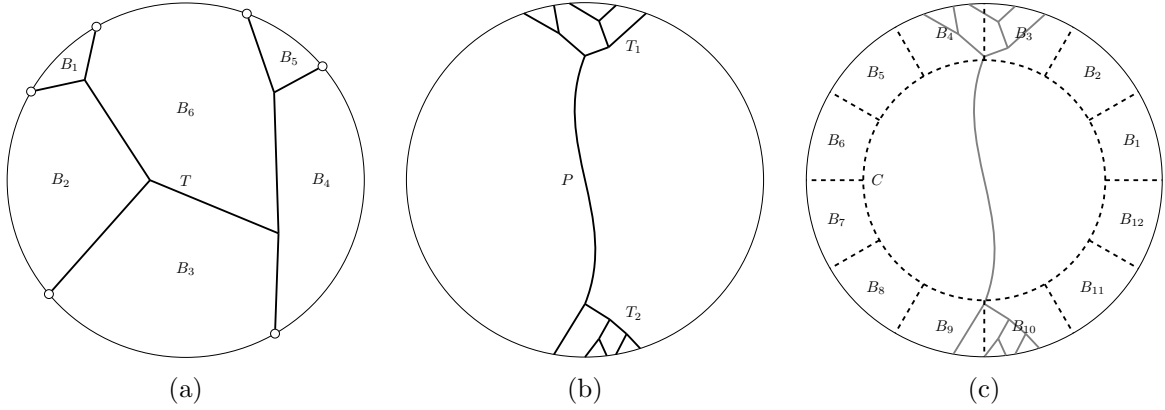

\centering
\begin{subfigure}{.33\textwidth}
\centering
\includegraphics[width=.9\linewidth]{figures/fig-over-partition}
\caption{}
\label{fig-over:partition}
\end{subfigure}%
\begin{subfigure}{.33\textwidth}
 \centering
 \includegraphics[width=.9\linewidth]{figures/fig-over-double-comet}
 \caption{}
 \label{fig-over:double-comet}
\end{subfigure}%
\begin{subfigure}{.33\textwidth}
 \centering
 \includegraphics[width=.9\linewidth]{figures/fig-over-pineapple}
 \caption{}
 \label{fig-over:pineapple}
\end{subfigure}
\caption{(Figure~\ref{fig-intro:intro-comet} repeated); (a) shows an optimal Steiner tree $T$ and how it partitions
the brick $B$ into smaller bricks $B_1,\ldots,B_r$.
  (b) shows an optimal Steiner tree that connects a set of vertices on the perimeter of $B$ and that consists of two small trees $T_{1},T_{2}$ that are connected by a long path~$P$.
  (c) shows a cycle $C$ that (in particular) hides the small trees $T_1,T_2$ in the ring between $C$ and $\prm B$, and a subsequent decomposition of $B$ into smaller bricks.}
\label{fig-over:intro-comet}
\end{figure}

In this approach, 
there are two clear issues that need to be solved. 
The first issue is that we need an algorithm to decide whether there is a tree $T$ for which the induced set of subbricks satisfies conditions (i) and (ii). We design a dynamic programming algorithm that either correctly decides that no such tree exists, or finds a set of subbricks of $B$ that satisfies condition (i) and (ii). In the latter case, we can recurse on each of those subbricks.%, as described before.

The second issue is that there might be no trees $T$ for which the induced set of subbricks satisfies conditions (i) and (ii). In this case, optimal Steiner trees, which are a natural candidate for such partitioning trees $T$, behave in a specific way. For example, consider the tree of Figure~\ref{fig-over:double-comet}, which consists of two small trees $T_1$, $T_2$ that lie on opposite sides of the brick $B$ and that are connected through a shortest path $P$ (of length slightly less than $|\prm B|/2$). Then both faces of $\prm B \cup T$ that neighbour $P$ may have perimeter almost equal to $|\prm B|$, thus blocking our default decomposition approach.

To address this second issue, we propose a completely different decomposition. Intuitively, we find a cycle $C$ of length linear in $|\prm B|$ that lies close to $\prm B$, such that all vertices of degree three or more of any optimal Steiner tree are hidden in the ring between $C$ and $\prm B$ (see Figure~\ref{fig-over:pineapple}). We then decompose the ring between $\prm B$ and $C$ into a number of smaller bricks. We recursively apply Theorem~\ref{thm:main} to these bricks, and return the result of these recursive calls together with a set of shortest paths inside $C$ between any pair of vertices on $C$.

In Section~\ref{ss-over:bricks} below, we formalise the above notions and give the algorithm that addresses the first issue. Then, Section~\ref{ss-over:decomp:default} describes the default decomposition, whereas Section~\ref{ss-over:decomp:other} describes the alternative decomposition that addresses the second issue. The full proof appears in Sections~\ref{sec:preliminaries} through~\ref{sec:dp}.

\subsubsection{Deciding on the Decomposition}\label{ss-over:bricks}
In this section, we present some of the basic notions of our paper and describe the algorithm that decides which of the two possible decompositions is used.

%A brick $B'$ is {\em{a subbrick}} of a brick $B$ if $B'$ is a subgraph of $B$ that consists of all edges enclosed by $\prm B'$.

\begin{definition}
For a brick $B$, a {\em brick covering} of $B$ is a family $\Bb=\{B_1,\ldots,B_p\}$ of bricks, such that {\em (i)} each $B_i$, $1 \leq i \leq p$,
is a subbrick of $B$, and {\em (ii)} each face of $B$ is contained in at least one brick $B_i$, $1 \leq i \leq p$.
A brick covering is called a {\em{brick partition}} if each face of $B$ is contained in {\em{exactly one}} brick $B_i$.
\end{definition}
\noindent We note that if $\Bb = \{B_1,\ldots,B_p\}$ is a brick partition of $B$,
then every edge of $\prm B$ belongs to the perimeter of exactly one brick $B_i$,
while every edge strictly enclosed by $\prm B$ either is in the interior of exactly one brick $B_i$,
or lies on the perimeters of exactly two bricks $B_i$, $B_j$ for $i\neq j$.

Any connected set $F\subseteq B$ will be called a {\em connector}.
Let $\terms$ be the set of vertices of $\prm B$ adjacent to at least one edge of $F$;
the elements of $\terms$ are the {\em anchors} of the connector $F$.
We then say that $F$ {\em connects} $\terms$.
For a connector $F$, we say that $F$ is {\em optimal} if there is no connector $F'$ with $|F'|<|F|$
that connects a superset of the anchors of $F$.
Clearly, each optimal connector $F$ induces a tree, whose every leaf is an anchor of $F$.
We say that a connector $F\subseteq B$ is {\em brickable} if the boundary of every inner face of $\prm B \cup F$ is a simple cycle, i.e., these boundaries form subbricks of $B$. Let $\Bb$ be the corresponding brick partition of $B$. Observe that $\sum_{B'\in \Bb} |\prm B'|\leq |\prm B|+2|F|$. %Note that every optimal connector is brickable.

Next, we define the crucial notions for partitions and coverings that are used for the default decomposition.

\begin{definition}
The {\em{total perimeter}} of a brick covering $\Bb = \{B_1,\ldots,B_p\}$ is defined as $\sum_{i=1}^p |\prm B_i|$.
For a constant $c > 0$, $\Bb$ is {\em{$c$-short}} if the total perimeter of $\Bb$ is at most $c \cdot |\prm B|$.
For a constant $\nice > 0$, $\Bb$ is {\em{$\nice$-nice}} if $|\prm B_i| \leq (1-\nice) \cdot |\prm B|$ for each $1 \leq i \leq p$.

Similarly, a brickable connector $F\subseteq B$, with $\Bb = \{B_1,\ldots,B_p\}$ being the corresponding brick partition,
is {\em{$c$-short}} if $\Bb$ is $c$-short,
is simply {\em{short}} if it is $3$-short,
and is \emph{$\nice$-nice} if $\Bb$ is $\nice$-nice.
%if {\emph{(i)}} $|F| < |\prm B|$, and {\emph{(ii)}} $|\prm B_{i}| \leq (1-\nice) \cdot |\prm B|$ for all $i=1,\ldots,p$.
\end{definition}

Observe that if $F\subseteq B$ is a brickable connector, then $F$ is $c$-short if $|F| \leq |\prm B| \cdot (c-1)/2$, and $F$ is short if $|F|\leq |\prm B|$. Moreover, if $F$ is an optimal connector, then $F$ is a short brickable connector, as $F$ must be a tree of length at most $|\prm B|$. %Such a tree is called a \emph{$3$-short tree} (or just {\emph{short}} instead of $3$-short, for simplicity). 
Now we are ready to give the algorithm that decides what decomposition to use.

\begin{theorem}\label{thm-over:nice-testing}
Let $\nice > 0$ be a fixed constant.
Given a brick $B$, in $\Oh(|\prm B|^8 |B|)$ time
one can either correctly conclude that no short $\nice$-nice tree exists in $B$
or find a $3$-short $\nice$-nice brick covering of $B$.
\end{theorem}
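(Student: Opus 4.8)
The plan is to reduce the search for a short $\nice$-nice tree to a bounded-state dynamic programming over subbricks, exploiting the fact that an optimal short tree is a subdivision of a tree with few branch vertices, all of whose leaves lie on $\prm B$. First I would observe that if \emph{any} short $\nice$-nice tree exists, then there is one that is also an optimal connector for its set of anchors, hence a tree of total length at most $|\prm B|$ whose topological structure (after suppressing degree-two vertices) has at most $|\prm B|$ leaves and thus $\Oh(|\prm B|)$ branch vertices; moreover it is automatically short. The $\nice$-nice condition, however, is a global constraint on \emph{every} face of the induced brick partition, so a naive "guess the tree" approach is too expensive. Instead, I would exploit planarity: a connector $F$ together with $\prm B$ cuts the disc bounded by $\prm B$ into faces, and one can process these faces one at a time.

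The key idea is to guess a small ``skeleton'' and then fill in. Concretely, I would guess the cyclic sequence, around $\prm B$, of the anchors of the sought tree, and the planar combinatorial structure of how the tree's branch vertices and edges sit inside the disc. Since there are $\Oh(|\prm B|)$ branch vertices, and the tree is a union of $\Oh(|\prm B|)$ paths each of which we will realize as a \emph{shortest path} between its two endpoints (we may assume the tree is an optimal connector, so each such path is shortest subject to its homotopy class, but in a disc all paths are homotopic, so a genuine shortest path suffices), the combinatorial structure has size $\Oh(|\prm B|)$. However, we cannot afford to enumerate all $2^{\Oh(|\prm B|)}$ such structures. This is where the dynamic programming comes in: rather than guessing the whole skeleton, I would process $\prm B$ in a sweep, maintaining as DP state a constant-sized ``interface'': a constant number of vertices of $\prm B$ already used as anchors of the portion of the tree processed so far, the current ``active'' boundary path, and the accumulated length budget. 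Actually, the cleaner route---and the one matching the $\Oh(|\prm B|^8|B|)$ bound---is to guess $\Oh(1)$ ``separating'' shortest paths that chop $B$ into $\Oh(1)$ pieces each of perimeter at most $(1-\nice)|\prm B|$, recursively verifying feasibility within each piece; the exponent $8$ then counts the number of endpoint-tuples on $\prm B$ one must guess (a constant number, each ranging over $\Oh(|\prm B|)$ choices), times the $\Oh(|B|)$ cost of a shortest-path computation (BFS) per guess.

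In more detail: I would show that a short $\nice$-nice tree $T$ exists if and only if there is a choice of $\Oh(1)$ vertices $a_1,\dots,a_t$ on $\prm B$ and shortest paths among a bounded number of them inside $B$ such that the resulting brickable connector $F_0$ has all faces of perimeter $\le(1-\nice/2)|\prm B|$ \emph{and} each such face, viewed as a subbrick, either is already small enough or itself admits a short $\nice'$-nice tree for a slightly adjusted $\nice'$; bottoming out the recursion when perimeters drop below a constant-dependent threshold. Each level guesses $\Oh(1)$ anchor vertices (contributing a polynomial-in-$|\prm B|$ factor---this is where the exponent $8$ lives), computes the needed shortest paths by BFS in $\Oh(|B|)$ time, and the recursion depth is $\Oh(1)$ because each step multiplicatively shrinks the perimeter by a factor $(1-\Omega(\nice))$ while $\nice$ only needs to be relaxed by a bounded amount in total. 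To certify the ``no'' answer, I would argue that an optimal connector realizing any short $\nice$-nice tree can be assumed to have this bounded-branch, shortest-path-between-guessed-anchors form, so failure of the search over all anchor tuples is conclusive.

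The main obstacle I expect is the interplay between the two conditions across recursion levels: decomposing to make the \emph{largest} face small (condition for $\nice$-niceness) must not blow up the \emph{total} perimeter beyond $3|\prm B|$ (the $3$-short condition on the final covering), and the recursion must terminate with perimeters bounded by an absolute constant without the number of pieces---and hence total perimeter---exploding. Controlling this requires a careful amortized/potential argument showing that each guessed separating path of length $\ell$ adds at most $2\ell \le 2|\prm B|$ to the total perimeter while strictly decreasing the maximum face perimeter, together with the observation that once a subbrick has perimeter below the threshold it need not be decomposed further, so only $\Oh(1)$ ``active'' subbricks are ever large. Getting the exact constant $3$ (rather than some larger constant) out of this bookkeeping, and verifying that the relaxation of $\nice$ at each level stays bounded away from $0$, is the delicate part; everything else is a routine planar-separator-style DP.
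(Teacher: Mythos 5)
Your second, ``cleaner route'' — guess $\Oh(1)$ separating shortest paths and recurse — is not the paper's approach, and it has gaps that I do not think can be repaired cheaply. First, the justification ``each such path is shortest subject to its homotopy class, but in a disc all paths are homotopic, so a genuine shortest path suffices'' is incorrect: the degree-two paths of an optimal connector are shortest only \emph{relative to the rest of the tree as obstacles}; once the other branches are drawn in the disc, paths between two branch vertices are no longer all homotopic, so you cannot replace them by $B$-shortest paths. Second, even granting (as the paper does for the \emph{weighted} setting with $\nice\le 1/4$, via a separate pruning argument) that the sought tree can be taken to have $\Oh(1)$ leaves, its branch vertices live in $V(B)$, not $V(\prm B)$; guessing them costs $|B|^{\Oh(1)}$, not $|\prm B|^{\Oh(1)}$, which overshoots the target $\Oh(|\prm B|^8|B|)$. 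Third, the recursion with a shrinking $\nice'$ and a $3$-short bookkeeping argument is left entirely open and is not how the statement is proved (the paper in fact explicitly tolerates returning a \emph{brick covering} rather than a partition precisely because a found structure need not correspond to an actual tree — a subtlety your proposal does not address). Your diagnosis of the exponent~$8$ is also off: it is not ``a constant number of endpoint-tuples'' — it comes from a DP table with state $(v,a,b,k_a,k_b)$ of size $\Oh(|B|\cdot|\prm B|^2\cdot\ell^2)$ and $\Oh(|\prm B|^2\ell^2)$ candidate transitions per cell, with $\ell\le|\prm B|$.

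Your first, sketchier idea — a sweep/interface dynamic program — is the one the paper actually pursues, as a modification of Erickson, Monma, and Veinott's planar Steiner-tree algorithm, but it is too vague to stand as a proof. The missing key is the Erickson-et-al observation that, because all anchors lie on the outer face, one need only track a \emph{contiguous} range of anchors together with one ``interface'' vertex. The paper's state is precisely the current root $v\in V(B)$, the leftmost and rightmost anchors $a,b\in V(\prm B)$, and upper bounds $k_a,k_b$ on the lengths of the root-to-first-leaf and root-to-last-leaf paths; the $\nice$-niceness constraint is then enforced locally at each ``join two subtrees at the root'' transition by requiring $k_c+k_d+|\prm B[c,d]|\le(1-\nice)|\prm B|$, and globally when finalizing. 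If you rewrite your sweep as this explicit table with these two transitions (extend the root by one edge; split at the root into two subtrees over disjoint anchor intervals), and then convert the returned ``embedded tree'' into a brick covering by taking, for each pair of consecutive leaves, the walk in $B$ formed by the two tree-paths and the arc of $\prm B$ between them, your proposal would match the paper's.
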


The proof of Theorem~\ref{thm:nice-testing}, omitted in this overview and provided in full detail in Section~\ref{sec:dp}, is a technical modification of the classical algorithm of Erickson et al.~\cite{erickson}. That algorithm computes an optimal Steiner tree in a planar graph assuming that all the terminals lie on the boundary of the infinite face. It uses the Dreyfus-Wagner dynamic-programming approach, where a state consists of a subset of already connected terminals, and the current ``interface'' vertex; the main observation is that only states with consecutive terminals on the boundary are relevant, yielding a polynomial bound on the number
of them.
In our case, we can proceed similarly: our state consists of the leftmost and rightmost chosen terminal, the ``interface'' vertex inside the brick, the total length of the tree, and the length of the leftmost and rightmost path in the constructed tree. Consequently, the terminals are chosen on-the-fly.

In case some short $\nice$-nice tree exists, for technical reasons we cannot ensure that the output of the algorithm of Theorem~\ref{thm-over:nice-testing} will actually be a brick partition corresponding to some short $\nice$-nice tree. Instead, the algorithm may output a brick covering, but one that is guaranteed to be $3$-short and $\nice$-nice. This is sufficient for our purposes.

We can now formally describe the main line of reasoning of our sparsification algorithm. Let $\nice > 0$ be some constant chosen later. If $|\prm B| \leq 2/\nice$, then for each $\terms \subseteq V(\prm B)$ we compute an optimal Steiner tree that connects $\terms$ using the algorithm of Erickson et al.~\cite{erickson}, and take the union of all such trees. If $|\prm B| > 2/\nice$, then we run the algorithm of Theorem~\ref{thm-over:nice-testing} for $B$ and $\nice$. If the algorithm returns a $3$-short $\nice$-nice brick covering, then we proceed to the default decomposition, formalized in Section~\ref{ss-over:decomp:default} below. Otherwise, if the algorithm of Theorem~\ref{thm-over:nice-testing} concluded that $B$ does not contain any short $\nice$-nice tree, then we proceed to the arguments in Section~\ref{ss-over:decomp:other}. We show that in all cases we obtain a subgraph of $B$ that satisfies conditions (i)-(iii) of Theorem~\ref{thm:main}.

\subsubsection{The Default Decomposition} \label{ss-over:decomp:default}
Suppose that the algorithm of Theorem~\ref{thm-over:nice-testing} returns a $3$-short $\nice$-nice brick covering $\Bb = \{B_{1},\ldots,B_{p}\}$ of $B$. We can then use this brick covering as a decomposition and recurse on each brick individually. This is formalized in the following lemma.

\begin{lemma}\label{lem-over:recursion}
Let $c, \nice > 0$ be constants.
Let $B$ be a brick and let $\Bb = \{B_1,\ldots,B_p\}$ be a $c$-short $\nice$-nice brick covering of $B$.
Assume that the algorithm of Theorem~\ref{thm:main} was applied recursively to bricks $B_1,\ldots, B_p$,
and let $H_1,\ldots,H_p$ be the subgraphs output by this algorithm for $B_1,\ldots,B_p$, respectively,
where $|H_i|\leq C \cdot |\prm B_i|^\alpha$ for some constants $C>0$ and $\alpha\geq 1$
such that $(1-\nice)^{\alpha-1}\leq \frac{1}{c}$.
Let $H=\bigcup_{i=1}^p H_i$.
Then $H$ satisfies conditions (i)-(iii) of Theorem~\ref{thm:main}, with $|H|\leq C\cdot |\prm B|^\alpha$.
\end{lemma}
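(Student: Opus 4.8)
\textbf{Proof plan for Lemma~\ref{lem-over:recursion}.}
The plan is to verify the three conditions of Theorem~\ref{thm:main} for $H=\bigcup_{i=1}^p H_i$ and then bound $|H|$ by an inductive argument. Condition (i), that $\prm B\subseteq H$, follows because $\Bb$ is a brick \emph{covering}: every edge of $\prm B$ lies on some face of $B$, hence in some $B_i$, and since that edge bounds the outer face of $B$ it must lie on $\prm B_i$; as $\prm B_i\subseteq H_i$ by the recursive guarantee (i), we get $\prm B\subseteq H$. Condition (iii) is the heart of the correctness argument: given $\terms\subseteq V(\prm B)$, I would take an optimal Steiner tree $T$ in $B$ connecting $\terms$, and show it can be ``redistributed'' among the bricks. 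Concretely, each edge of $T$ lies in some brick $B_i$ of the covering; I would argue that $T$ decomposes into connectors $T_i := T\cap B_i$ (after possibly rerouting portions of $T$ that run along shared perimeters), each $T_i$ connecting its anchors on $\prm B_i$, and that replacing each $T_i$ by an optimal connector in $B_i$ for the same anchor set (which, by recursive guarantee (iii) applied to $B_i$, lies inside $H_i$) does not increase the total length and keeps $\terms$ connected. Since the result is a subgraph of $H$ of length at most $|E(T)|$ that connects $\terms$, it contains an optimal Steiner tree for $\terms$ in $B$, giving (iii).

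For the size bound, I would simply compute:
\[
|H| \;\le\; \sum_{i=1}^p |H_i| \;\le\; \sum_{i=1}^p C\cdot |\prm B_i|^\alpha.
\]
Now invoke $\nice$-niceness, $|\prm B_i|\le (1-\nice)|\prm B|$, to pull out one factor:
\[
\sum_{i=1}^p C\cdot |\prm B_i|^\alpha
\;\le\; C\cdot (1-\nice)^{\alpha-1}|\prm B|^{\alpha-1}\sum_{i=1}^p |\prm B_i|
\;\le\; C\cdot (1-\nice)^{\alpha-1}|\prm B|^{\alpha-1}\cdot c|\prm B|,
\]
using $c$-shortness for the last inequality, and this is at most $C\cdot|\prm B|^\alpha$ precisely by the hypothesis $(1-\nice)^{\alpha-1}\le \tfrac1c$. (One must also handle the base case $|\prm B_i|=0$ or trivial bricks, and note $\alpha\ge 1$ so that pulling out $|\prm B_i|^{\alpha-1}$ and bounding it by $((1-\nice)|\prm B|)^{\alpha-1}$ is legitimate.)

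The main obstacle is condition (iii), specifically the redistribution of the optimal tree $T$ across a brick \emph{covering} rather than a partition. In a partition, each edge of $T$ lies in a unique brick and the decomposition $T=\bigcup_i (T\cap B_i)$ is clean; in a covering, faces (and hence edges) may be shared, so one has to be careful that the pieces $T\cap B_i$ are genuinely connectors in $B_i$ connecting the right anchors, and that summing optimal replacements does not double-count or disconnect $\terms$. I expect the right move is to first pass to a sub-partition refining the covering (each face assigned to one of the bricks containing it), observe that the perimeter of each refined piece is still bounded by the perimeter of the brick it came from so the size estimate is unaffected, argue (iii) cleanly on the partition, and then note $H$ (built from the covering) contains the graph built from the refinement. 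Handling the anchors correctly — ensuring that where $T$ crosses a shared perimeter the crossing vertices become anchors in both incident pieces so connectivity is preserved when we substitute optimal connectors — is the delicate point, and is where I would spend most of the effort.
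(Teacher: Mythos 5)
Your handling of condition (i) and the size computation for (ii) match the paper's proof exactly. For (iii), the high-level intuition — replace pieces of an optimal tree $T$ that live inside a brick $B_i$ by optimal connectors drawn from $H_i$ — is also the paper's, but your execution has genuine gaps, and you have misidentified where the real difficulty lies.

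The paper never decomposes $T$ globally. It fixes $T$ to be an optimal Steiner tree for $\terms$ that, among all optimal trees, \emph{minimizes the number of edges outside $H$}, supposes for contradiction that some $e\in T\setminus H$ exists, picks one brick $B_i\in\Bb$ whose perimeter encloses $e$ (such a brick exists because $\Bb$ is a covering, and necessarily $e\notin\prm B_i$ since $\prm B_i\subseteq H$), and considers the single connected component $X$ of $T\cap\intr B_i$ containing $e$. Because $X$ is strictly enclosed by $\prm B_i$ and is maximal connected there, it is a genuine connector in $B_i$ with anchors on $\prm B_i$, and $T\setminus X$ attaches to $X$ only at those anchors. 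Replacing $X$ by an optimal connector $D\subseteq H_i$ for the same anchors (the recursive guarantee (iii) for $B_i$ provides this, with $|D|\le|X|$) yields another optimal Steiner tree for $\terms$ with strictly fewer edges outside $H$, contradicting the choice of $T$. This single-exchange extremal argument sidesteps every difficulty you flag. Concretely: (a) $T\cap B_i$ need not be connected, so treating it as a connector and replacing all of it by one optimal Steiner tree for the combined anchor set can \emph{increase} length — the paper works only with one connected component of the strictly interior part; (b) summing replacements over all $i$ can double-count edges on shared perimeters, so the batch total can exceed $|T|$ — the paper touches only $X$, which is disjoint from all perimeters; (c) the covering-versus-partition worry is a red herring — the argument only ever needs \emph{one} $B_i$ enclosing $e$, not a coherent global decomposition. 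Your proposed fix, refining to a partition, not only is unnecessary but does not obviously close the gap: the recursive guarantee for $B_i$ concerns terminals on $\prm B_i$, so it does not directly supply the object you would want for a refined sub-brick whose perimeter contains vertices interior to $B_i$.
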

\begin{proof}
To see that $H$ satisfies condition (i), note that every edge of $\prm B$ is in the perimeter of some brick $B_i$, and that $\prm B_i\subseteq H_i$ for every $i=1,2,\ldots,p$. Therefore, $\prm B\subseteq H$.

To see that $H$ satisfies condition (ii), recall that $\Bb$ is $c$-short and that $|\prm B_i|\leq (1-\nice) \cdot |\prm B|$ for each $i=1,2,\ldots,p$. Therefore, $|\prm B_i|^\alpha \leq |\prm B_i|\cdot (1-\nice)^{\alpha-1}|\prm B|^{\alpha-1}$, and
$$
|H| \leq \sum_{i=1}^p |H_i| \leq C\cdot \sum_{i=1}^p|\prm B_i|^\alpha
\leq C\cdot (1-\nice)^{\alpha-1}|\prm B|^{\alpha-1}\cdot \sum_{i=1}^p |\prm B_i|
\leq c\cdot (1-\nice)^{\alpha-1} C \cdot |\prm B|^{\alpha} \leq C \cdot |\prm B|^\alpha.$$

Finally, to see that $H$ satisfies condition (iii), let $\terms\subseteq V(\prm B)$ be a set of terminals lying on the perimeter of $B$, and let $T$ be an optimal Steiner tree connecting $\terms$ in $B$ that contains a minimum number of edges that are not in $H$.
We claim that $T \subseteq H$. Assume the contrary, and let $e \in T \setminus H$.
Since each face of $B$ is contained in some brick of $\Bb$, there exists a brick $B_i$ such that $\prm B_i$ encloses $e$.
As $\prm B_i \subseteq H_i \subseteq H$, we infer $e \notin \prm B_i$.
Consider the subgraph of $T$ strictly enclosed by $\prm B_i$,
and let $X$  be the connected component of this subgraph that contains $e$.
Clearly, $X$ is a connector inside $B_i$.
Since $H_i$ is obtained by a recursive application of Theorem~\ref{thm:main},
there exists a connected subgraph $D \subseteq H_i$ that connects the anchors of $X$
and that satisfies $|D| \leq |X|$. Let $T' = (T \setminus X) \cup D$.
Observe that $|T'| \leq |T|$ and that $T'$ contains strictly less edges that are not in $H$ than $T$ does.
Since $D$ connects the anchors of $X$ in $H_i$, $T'$ still connects the anchors of $T$ in $B$, that is, $T'$ connects $\terms$.
However, $T$ is an optimal Steiner tree that connects $\terms$, and thus $T'$ is also an optimal Steiner tree that connects $\terms$. Since $T'$ contains strictly less edges that are not in $H$ than $T$,
this contradicts the choice of $T$. Hence, $T \subseteq H$.
\end{proof}

\subsubsection{The Alternative Decomposition --- Mountain Ranges and the Core} \label{ss-over:decomp:other}
Suppose that the algorithm of Theorem~\ref{thm-over:nice-testing} decides that no short $\nice$-nice tree exists in $B$. As mentioned before, we want to find a cycle $C$ of length linear in $|\prm B|$ that is close to $\prm B$, such that all vertices of degree three or more of any optimal Steiner tree are hidden in the ring between $C$ and $\prm B$ (see Figure~\ref{fig-over:pineapple}). In the following, we use a constant $\delta\in (0,\frac{1}{2})$, which depends on $\nice$ and is chosen later.

\begin{definition}
A \emph{$\delta$-carve} $L$ from a brick $B$ is a pair $(P,I)$, where $P$, called the \emph{carvemark}, is a path in $B$ between two distinct vertices $a,b \in V(\prm B)$ of length at most $(\frac{1}{2}-\delta) \cdot |\prm B|$, and $I$, called the \emph{carvebase}, is a shortest of the two paths $\prm B[a,b], \prm B[b,a]$.
The subgraph enclosed by the closed walk $P\cup I$ is called the {\emph{interior}} of a $\delta$-carve.
\end{definition}

Of particular interest will be the following special type of $\delta$-carves.

\begin{definition}
For %constant $\delta \in (0, 1/2)$ and
 fixed $l,r \in V(\prm B)$, a \emph{$\delta$-mountain} of $B$ for $l,r$
is a $\delta$-carve $M$ in $B$ such that
\begin{enumerate}
\item $l$ and $r$ are the endpoints of the carvemark and carvebase of $M$;
\item the edges of the carvemark can be partitioned into two paths $P_{L}, P_{R}$, where $P_{L}$ is a shortest $l$--$P_{R}$ path in the interior of $M$
and $P_{R}$ is a shortest $r$--$P_{L}$ path in the interior of $M$.
\end{enumerate}
\end{definition}

We write $M= \mou{P_L}{P_R}$ to exhibit the partition of the carvemark into paths $P_L$ and $P_R$.
We use $v_M$ to denote the unique vertex of $V(P_L) \cap V(P_R)$. %, and we call this the \emph{summit} of the $\delta$-mountain.
We also say that a $\delta$-mountain $M$ {\em{connects}} the vertices $l$ and $r$.

The following lemma motivates why we are interested in $\delta$-mountains.
For a tree $T$, $T[a,b]$ denotes the unique path in $T$ between vertices $a$ and $b$.

\begin{lemma}\label{lem-over:broom-in-mountain}
Let $B$ be a brick and let $T$ be an optimal Steiner tree
connecting $V(T) \cap V(\prm B)$ in $B$.
Let $uv \in T$ be an edge of $T$, where $v$ is of degree at least $3$ in $T$,
and let $T_v$ be the connected component of $T \setminus \{uv\}$ containing $v$, rooted at $v$.
Let $l$ and $r$ be the leftmost and rightmost elements of $V(T_v) \cap V(\prm B)$,
that is, $V(T_v) \cap V(\prm B) \subseteq V(\prm B[l,r])$ and $T[l,r] \cup \prm B[r,l]$
encloses $uv$.
Assume furthermore that $|\prm B[l,r]| < |\prm B|/2$.
Then $M := \mou{T[l,v]}{T[r,v]}$ is a $\delta$-mountain connecting $l$ and $r$, for any
$\delta < 1/ 2 - |T[l,r]|/|\prm B|$.
\end{lemma}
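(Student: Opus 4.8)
The plan is to verify that $M := \mou{T[l,v]}{T[r,v]}$ meets the definition of a $\delta$-mountain, which requires three things: that $M$ is a $\delta$-carve (in particular that its carvemark is short enough and its carvebase is a shortest of the two arcs of $\prm B$ between $l$ and $r$), that $l$ and $r$ are the endpoints of the carvemark and carvebase, and that the carvemark decomposes into $P_L = T[l,v]$ and $P_R = T[r,v]$ with the claimed shortest-path properties. I would take the hypothesis $|\prm B[l,r]| < |\prm B|/2$ to mean, after possibly swapping the roles of $l$ and $r$, that $\prm B[l,r]$ is the shorter of the two arcs, so that $I := \prm B[l,r]$ is a legitimate carvebase; the statement of the lemma already fixes $l,r$ as leftmost/rightmost so this is just bookkeeping. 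The carvemark is $T[l,r] = T[l,v] \cup T[v,r]$, a genuine path because $T$ is a tree; its length is at most $|T[l,r]|$, which is $< (\tfrac12 - \delta)|\prm B|$ precisely by the choice $\delta < \tfrac12 - |T[l,r]|/|\prm B|$. Together with the observation (given in the statement) that $T[l,r] \cup \prm B[r,l]$ encloses $uv$, this shows the interior of the carve is well-defined and contains the relevant part of $T$.

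The remaining content is condition (2) of the $\delta$-mountain definition: $P_L = T[l,v]$ must be a shortest $l$–$P_R$ path in the interior of $M$, and symmetrically for $P_R = T[r,v]$. Here I would argue by contradiction using optimality of $T$. The key structural fact is that $T_v$, the component of $T \setminus \{uv\}$ containing $v$, lies entirely inside the region bounded by $T[l,r] \cup \prm B[l,r]$: indeed $V(T_v) \cap V(\prm B) \subseteq V(\prm B[l,r])$ by choice of $l,r$, and since $T$ is connected to $\prm B$ only through perimeter vertices, the whole subtree $T_v$ is drawn in the interior of $M$. Now suppose $T[l,v]$ were not a shortest $l$–$P_R$ path inside $M$; let $Q$ be a strictly shorter path from $l$ to some vertex $w$ of $P_R = T[r,v]$, lying in the interior of $M$. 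Replace the $l$-to-$w$ portion of $T$ — which, since $v$ separates $l$ from $r$ in $T_v$ (as $v$ has degree $\geq 3$ and the relevant subtrees hang off it), consists of $T[l,v]$ together with the $v$-to-$w$ piece of $T[v,r]$ — by $Q$. More carefully: $T' := (T \setminus T[l,w]) \cup Q$ is connected, connects a superset of $V(T) \cap V(\prm B)$ (nothing was disconnected from $\prm B$ since $Q$ reattaches $l$, and the rest of $T_v$ still hangs off the surviving part of $P_R$), and has strictly fewer edges than $T$ — contradicting optimality of $T$ as a Steiner tree for $V(T) \cap V(\prm B)$. The symmetric argument handles $P_R$.

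The main obstacle I anticipate is the case analysis in the replacement argument of the previous paragraph: one has to be careful that cutting out $T[l,w]$ and pasting in $Q$ really yields a connected graph that still connects all of $V(T) \cap V(\prm B)$, and that the length strictly decreases. The subtle point is that $Q$ is only guaranteed to live in the interior of $M$, so it may touch $T$ at many vertices, and $w$ might coincide with $v$ or with $r$; one must check that in every such subcase the surviving graph $T \setminus T[l,w]$ plus $Q$ remains connected — this uses that $w$ lies on $P_R$, which is a path from $v$ to $r$, so removing the $l$-side never disconnects the $r$-side from $v$. A clean way to organize this is: $T[l,w] = T[l,v] \cup T[v,w]$ (where possibly $w = v$ and the second piece is trivial), removing it leaves $T \setminus \{uv\}$ minus this path still connected to $r$ and to $u$'s side through $P_R \setminus T[v,w]$ and through $uv$ respectively; adding $Q$ reattaches $l$. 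Once connectivity is confirmed, $|T'| = |T| - |T[l,w]| + |Q| < |T|$ because $|Q| < |T[l,v]| \le |T[l,w]|$. I would also double-check one degenerate possibility, namely $v = v_M$ forced to be the common vertex $V(P_L) \cap V(P_R)$: this is immediate since $T[l,v]$ and $T[r,v]$ are paths in the tree $T$ meeting exactly at $v$ (as $T$ has no cycles and $l \ne r$). With these checks in place the three defining conditions of a $\delta$-mountain are all met, completing the proof.
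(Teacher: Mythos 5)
Your high-level plan matches the paper's: verify the carve conditions, then argue by contradiction that a shorter $l$--$P_R$ path would let you shorten $T$. But the specific replacement you define, $T' := (T \setminus T[l,w]) \cup Q$, has a genuine connectivity gap that the paper's construction is designed to avoid.

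The problem is that $T[l,w] = T[l,v] \cup T[v,w]$, and you delete \emph{both} pieces while only adding back $Q$, which reattaches $l$ to $w$. This severs $v$ from $w$: after the deletion, $v$ is attached only to $uv$ (and to any subtrees hanging directly off $v$, but when $\deg_T(v)=3$ there are none besides $T[l,v]$, $T[v,r]$ and $uv$). So $v$, and with it $u$ and the entire subtree $T_u$ on the other side of $uv$ --- which in general contains terminals --- ends up in a component disjoint from $\{l,w,r\}$. Nothing forces $Q$ to pass through $v$, so the claim that $T'$ is connected is simply false whenever $w \neq v$. The paper sidesteps this exactly by re-adding $T[v,w]$: its $T' := (T \setminus D) \cup T[v,w] \cup P$ keeps the carvemark segment from $v$ to $w$, so $u$--$v$--$w$--$l$ stays a connected chain.

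A second, milder issue is your justification ``the rest of $T_v$ still hangs off the surviving part of $P_R$.'' That is not true: $T_v$ can have subtrees hanging off \emph{interior} vertices of $P_L = T[l,v]$, and those are severed by deleting $T[l,v]$. (There is a salvageable argument --- any such subtree reaching $\prm B[l,r]$ separates $l$ from $w$ inside $M$, so $Q$ must cross it and hence still reaches its perimeter vertices --- but you do not make it, and it still doesn't fix the $v$/$T_u$ problem above.) The paper's choice to delete the whole region $D$ enclosed by $T[l,v] \cup P \cup T[v,w]$, rather than just a path, is also what makes the ``no perimeter vertex is lost'' step clean: the only perimeter vertices that could be lost lie on $T[l,v]$, and a pinch argument shows those necessarily lie on $P$ as well. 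In short: you need to keep $T[v,w]$ and phrase the deletion as removing the enclosed region, not the carvemark subpath.
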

\begin{proof}
As $v$ is of degree at least $3$ in $T$, $v$ has degree at least $2$ in $T_v$, and $T[l,v] \cap T[r,v] = \{v\}$. Therefore,
$T[l,v] \cup T[v,r] = T[l,r]$, and $T[l,r]$ induces a $\delta$-carve $M$ with carvebase $\prm B[l,r]$.%, as $|\prm B[l,r]| < |\prm B|/2$.

Suppose that $M$ is not a $\delta$-mountain if we take $P_L = T[l,v]$ and $P_R = T[r,v]$. Without loss of generality, there exists a path $P$ enclosed by $M$ that connects $l$ with $w \in V(P_R)$,
$V(P_R) \cap V(P) = \{w\}$, and $|P| < |T[l,v]|$.
Let $D$ be the subgraph of $M$ enclosed by the closed walk
$T[l,v] \cup P \cup T[v,w]$.
Define $T' := (T \setminus D) \cup T[v,w] \cup P$. As $T[v,w] \cup T[l,v] \subseteq D$, $|T'| < |T|$. By the definition of $l$ and $r$,
$T[l,v] \setminus P$ does not contain any vertex of $\prm B$. Therefore, $T'$ is a connected subgraph of $B$ connecting $V(T) \cap V(\prm B)$,
a contradiction to the optimality of $T$.
\end{proof}
The above lemma shows that small subtrees of optimal Steiner trees in $B$
are hidden in $\delta$-mountains. Here, `small' means that the leftmost and rightmost
path in the subtree have total length at most $(1/2 - \delta)\cdot |\prm B|$. Note that
an optimal Steiner tree in $B$ has total size smaller than $|\prm B|$, as $\prm B$
without an arbitrary edge connects any subset of $V(\prm B)$. Therefore,
if we choose $\delta$ appropriately, then we can `hide' almost an entire optimal non-$\nice$-nice Steiner tree
in at most two $\delta$-mountains. To hide most of \emph{all} optimal Steiner trees, we consider unions of $\delta$-mountains. For fixed vertices $l,r \in V(\prm B)$, the \emph{$\delta$-mountain range} is the closed walk $W_{l,r}$ in $B$ such that a face $f$ of $B$ is enclosed by $W_{l,r}$ if and only if $f$ belongs to some $\delta$-mountain that connects $l$ and $r$. %We show that the $\delta$-mountain range can be found in polynomial time.

\begin{theorem}[Mountain Range Theorem]\label{thm-over:mountain-range}
Fix $\nice \in [0,1/4)$ and $\delta \in [2\nice,1/2)$, and assume that $B$ does not admit any short $\nice$-nice tree. Then for any fixed $l, r \in V(\prm B)$ with $|\prm B[l,r]| < |\prm B|/2$, $W_{l,r}$ has length at most $3\cdot |\prm B[l,r]|$. Moreover, the set of the faces enclosed by $W_{l,r}$ can be computed in $\Oh(|B|)$ time.
\end{theorem}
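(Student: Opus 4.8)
The plan is to establish the length bound $|W_{l,r}| \le 3\cdot|\prm B[l,r]|$ by a direct structural argument about how $\delta$-mountains connecting the fixed pair $l,r$ overlap, and then to argue separately that the enclosed face set is computable in linear time. Write $I = \prm B[l,r]$ (the common carvebase of all such mountains, the shorter arc), so $|I| < |\prm B|/2$; I want to bound the other part of $W_{l,r}$, namely the ``upper boundary'' facing the interior of the brick. The central claim I would aim for is that this upper boundary is itself a union of at most two shortest paths (or close to it), each of length at most $|I|$, which together with $|I|$ itself gives the $3|I|$ bound. The reason to expect this: each mountain $M = \mou{P_L}{P_R}$ has $P_L$ a shortest $l$--$P_R$ path and $P_R$ a shortest $r$--$P_L$ path inside $M$, and the ``outermost'' such mountain should have its left branch hug a globally shortest path from $l$ and its right branch a globally shortest path from $r$; the no-short-$\nice$-nice-tree hypothesis is what forbids these branches from being long (a long mountain carvemark would itself be, or would yield, a short $\nice$-nice tree, contradicting the assumption — here is where $\delta \ge 2\nice$ and $\nice < 1/4$ get used to convert the length bound $(\tfrac12-\delta)|\prm B|$ on a carvemark into the $\nice$-niceness of the induced brick partition).

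Concretely, the steps I would carry out, in order: (1) Show the union of the interiors of all $\delta$-mountains connecting $l,r$ is itself (the interior of) a $\delta$-carve — i.e., that $W_{l,r}$ is a single simple closed walk enclosing a connected region — by arguing that if two mountains both connect $l$ and $r$ and share the carvebase $I$, then their interiors' union is again bounded by a $(\,l,r\,)$-path of bounded length; this uses an uncrossing / exchange argument on the carvemarks, replacing a crossing pair by a shorter non-crossing pair while staying inside. (2) Identify the upper boundary $P^\ast$ of $W_{l,r}$ and split it at the ``peak'' — the analogue of $v_M$ for the outermost mountain — into a left part from $l$ and a right part from $r$. (3) Prove each of these two parts has length $\le |I|$: if, say, the left part had length $> |I| = |\prm B[l,r]| $, then since it is (essentially) a shortest path inside the region, replacing it by the detour along $I \cup (\text{right part})$ would be no longer, but more importantly I would extract from an overlong carvemark a tree of length $\le |\prm B|$ that is $\nice$-nice (its induced subbricks each have perimeter $\le (1-\nice)|\prm B|$, using the $(\tfrac12-\delta)|\prm B|$ cap and $\delta \ge 2\nice$), contradicting the hypothesis. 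Summing, $|W_{l,r}| \le |I| + |P^\ast| \le |I| + 2|I| = 3|I|$.

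For the algorithmic part, I would compute $W_{l,r}$ by a single graph search: first compute a BFS/shortest-path tree from $l$ and from $r$ restricted to the half of $B$ on the $I$-side, then peel off the outermost mountain by walking along the would-be carvemark (left branch following the shortest-path structure from $l$, right branch from $r$, meeting at the peak), and finally mark as enclosed exactly the faces between this carvemark and $I$; correctness follows from step (1) (the union of all mountains equals the outermost one) and each ingredient is linear time, using that the relevant distances are at most $|\prm B|$ so BFS suffices and planarity keeps face-marking linear.

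The main obstacle I expect is step (1): proving that the union of \emph{all} $\delta$-mountains connecting $l,r$ is again enclosed by a single well-behaved $\delta$-carve, rather than a topologically complicated region. Mountains connecting the same $l,r$ can wind around each other, and their carvemarks can cross many times; the uncrossing argument must simultaneously preserve (a) the defining shortest-path property of $P_L, P_R$ within the new interior, (b) the length cap $(\tfrac12-\delta)|\prm B|$ on the carvemark, and (c) the property that the swapped-in walk stays inside the union — all while the ``inside'' is itself changing. Handling this carefully, very likely via an extremal choice (take a mountain whose interior is inclusion-maximal, or whose carvemark is a shortest path among all valid ones from $l$ hugging a shortest path from $r$) and then showing every other mountain's interior is contained in it, is where the real work lies; everything after that is bookkeeping plus the $\nice$-niceness conversion already sketched.
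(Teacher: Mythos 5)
Your step (1) is where the proposal breaks down, and the failure is not just technical difficulty but a wrong structural picture. You propose to show that the union of all $\delta$-mountains connecting $l,r$ is itself (the interior of) a single $\delta$-carve, ultimately by taking an inclusion-maximal mountain and arguing every other mountain is contained in it. The paper proves precisely the opposite: Corollary~\ref{cor:range} shows that any two distinct maximal mountains \emph{form a range} (Definition~\ref{definition-forms-a-range}), which in particular means neither is contained in the other --- each maximal mountain has a nonempty region outside the other. So in general there is no outermost mountain, the union of all mountains is strictly larger than any single mountain, and the closed walk $W_{l,r}$ is a zigzag, not a single carvemark paired with $I$. Because of this, your step (3) bound of $\le |I|$ per branch (which is correct for a single mountain's sides $P_L$ and $P_R$, by the defining shortest-path property and the fact that $\prm B[l,r]$ is a candidate path inside the mountain) does not directly bound the upper boundary of the union --- the union's upper boundary is a concatenation of pieces from many different mountains.

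What the paper actually does, after establishing the range property via a fairly involved topological analysis (projecting perimeters to edge-disjoint Jordan curves, classifying the regions they cut out, ruling out bad crossing patterns via the exchange lemmas~\ref{lem:shortcut} and~\ref{lem:largermountain}), is a \emph{telescoping induction} over the maximal mountains $M^1,\ldots,M^s$ ordered by $\wei{P_L^i}$: it builds closed walks $W^1 \subseteq W^2 \subseteq \cdots$ where $W^{i+1}$ is obtained from $W^i$ by swapping one boundary arc, and shows $\wei{W^{i+1}} - \wei{W^i} \le \wei{P_L^{i+1}} - \wei{P_L^i}$, so the increments cancel and $\wei{W^s} \le \wei{\prm B[l,r]} + \wei{P_R^1} + \wei{P_L^s} \le 3\wei{\prm B[l,r]}$. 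Your uncrossing intuition is pointed in the right direction --- the exchange arguments you anticipate do appear as Lemmas~\ref{lem:shortcut} and~\ref{lem:largermountain} --- but they are used to constrain \emph{how} maximal mountains overlap (at most two crossings, forming a chain), not to collapse them into one. A similar mismatch affects your algorithm: since there is no single outermost carvemark to walk along, the paper instead computes the leftmost shortest-path tree from $l$ and the rightmost from $r$ (Lemma~\ref{lem:leftmost-shortest}), identifies all candidate summit edges (Lemma~\ref{lemma-unique-mountain}, Claim~\ref{cl:max-mountain-in-Z}), and aggregates the winding contributions of all these mountains at once via a weighted walk in the directed dual (Lemma~\ref{lemma:range-algorithm}); this is what achieves $O(|B|)$ despite there being potentially many maximal mountains.

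One smaller inaccuracy: you lean on the no-short-$\nice$-nice-tree hypothesis to cap the lengths of the mountain branches, but the length bound in Lemma~\ref{lemma:range-length} is a purely metric consequence of the mountain definition (each side is a shortest $l$--$P_R$ or $r$--$P_L$ path inside its mountain, and $\prm B[l,r]$ is always a competitor path, hence $\wei{P_R^1},\wei{P_L^s} \le \wei{\prm B[l,r]}$). The hypothesis is used elsewhere in the surrounding machinery (e.g.\ to make $\delta$-carves well-behaved), not to convert an overlong carvemark into a $\nice$-nice tree as you sketch.
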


\begin{proof}[Proof sketch]
By case analysis, omitted in this overview and provided in full detail in Section~\ref{sec:mountains}, we deduce
that the set of all inclusion-wise maximal $\delta$-mountains essentially looks as in Figure~\ref{fig-over:rangefull}, i.e., 
for any two maximal mountains there exists exactly one region of the plane that is in one of them but not in the other one. 

\begin{figure}[tbh]
\centering
\begin{subfigure}{.5\textwidth}
\centering
\svg{.9\linewidth}{over-range}
\caption{}
\label{fig-over:rangefull}
\end{subfigure}%
\begin{subfigure}{.5\textwidth}
 \centering
 \includegraphics[width=.6\linewidth]{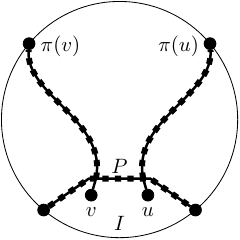}
 \caption{}
 \label{fig-over:corestep}
\end{subfigure}%
\caption{(a) shows a mountain range.
(b) shows a short $\nice$-nice tree occurring if $\pi(v)$ and $\pi(u)$ are far from $I$.}\label{fig-over:mountain-core}
\end{figure}

Let $\{ M^i=(P^i_L, P^i_R)\}_{i=1}^s$ be the set of all these maximal $\delta$-mountains,
ordered from left to right. By induction, we show that the perimeter
of the union of the first $i$ $\delta$-mountains, denoted $p^i$,
is at most $|\prm B[l,r]| + |P^1_R| + |P^i_L|$.
This statement clearly holds for $i=1$, and for $i=s$ it
proves the bound on the perimeter
of the $\delta$-mountain range promised by Theorem~\ref{thm-over:mountain-range}.

For the inductive step, define $b=|P^{i+1}_R|$ and $e=|P^i_L|$.
Let $v$ be the first point on $P^{i+1}_L$ that lies on $P^{i}_R$. We denote the distance (along $P^{i+1}_L$) from $l$ to $v$ as $d$
and the distance from $v$ to $v_{M^{i+1}}$ as $a$. Finally, we denote by $c$ the distance (along $P^{i}_R$) from $r$ to $v$.
These definitions are illustrated in Figure~\ref{fig-over:rangefull}.
Observe that $d\ge e$, because $M^i$ is a $\delta$-mountain. Similarly, observe that $c \ge b$, because $M^{i+1}$ is a $\delta$-mountain.
Hence, 
$p^{i+1} - p^i = a + b-c \le a \le a+d-e = |P^{i+1}_L| - |P^{i}_L|.$
This concludes the inductive step.
In this overview, we omit the description of the algorithm that finds the mountain range.
\end{proof}

We now designate $\Oh(\nice^{-1})$ vertices on $\prm B$, and construct the union $\mathcal{M}$ of all $\delta$-mountain ranges for each pair of designated vertices. Using the following deep theorem, we can show that $\mathcal{M}$ is not the entire brick.

\begin{theorem}[Core Theorem]\label{thm-over:core}
For any $\nice \in (0,\frac{1}{4})$ and any $\delta \in [2\nice, \frac{1}{2})$, if $B$ has no short $\nice$-nice tree, then there exists a face of $B$ that is not enclosed by any $\delta$-carve. Moreover, such a face can be found in $O(|B|)$ time.
\end{theorem}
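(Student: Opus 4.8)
The plan is to argue by contraposition. Assuming that \emph{every} face of $B$ is enclosed by some $\delta$-carve, I will construct a short $\nice$-nice tree in $B$, contradicting the hypothesis of the theorem; a constructive reading of this argument then yields the claimed $\Oh(|B|)$-time algorithm. (In the main algorithm this theorem is invoked only after the algorithm of Theorem~\ref{thm-over:nice-testing} has certified that $B$ contains no short $\nice$-nice tree, so this assumption is available for free.)

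First I would fix an \emph{extremal} $\delta$-carve $L=(P,I)$: among all $\delta$-carves of $B$, one whose interior $\intr(L)$ is inclusion-maximal, and among those one whose interior contains a face of maximum distance from $\prm B$. Let $a,b\in V(\prm B)$ be the endpoints of the carvemark $P$, so that $I$ is the shorter of the two arcs between them, $|P|\le(\tfrac12-\delta)|\prm B|$ and $|I|\le\tfrac12|\prm B|$; note already that $\intr(L)$ on its own is a subbrick of perimeter $|P|+|I|\le(1-\delta)|\prm B|\le(1-\nice)|\prm B|$, hence harmless. Inside $\intr(L)$ I pick a vertex $v\in V(P)$ at maximum distance from $\prm B$ together with a shortest $v$--$\prm B$ path $Q$, and I let $\proj(v)$ be its endpoint on $\prm B$. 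Because every face on the far side of $P$ (the subbrick bounded by $P\cup(\prm B\setminus I)$) is also covered, I additionally fix a $\delta$-carve $L'=(P',I')$ enclosing a face of that region of maximum distance from $\prm B$, and an analogous apex vertex $u$ with a shortest path to $\prm B$.

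The core of the proof is a dichotomy on where the feet $\proj(v)$ and $\proj(u)$ land on $\prm B$. If they split the perimeter in a balanced way --- each at arc-distance more than $\nice|\prm B|$ from the endpoints of the relevant carvemark ($a,b$ for $P$, the endpoints of $P'$ for $P'$) --- then the union $T$ of $P$, $Q$, $P'$, and the shortest path through $u$ is a tree whose removal splits $B$ into subbricks, each of which one verifies has perimeter at most $(1-\nice)|\prm B|$, using $|P|,|P'|\le(\tfrac12-\delta)|\prm B|$, $|I|\le\tfrac12|\prm B|$ and the relations $2\nice\le\delta<\tfrac12$, $\nice<\tfrac14$; the same bounds give $|T|\le|\prm B|$, so $T$ is a short $\nice$-nice tree --- the configuration of Figure~\ref{fig-over:corestep}. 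If instead one of the feet lands close to an endpoint, then rerouting the corresponding carvemark through $Q$ (respectively the shortest path through $u$) and the short perimeter sub-arc to that endpoint produces a new $\delta$-carve, whose carvemark is no longer than before, but whose interior strictly contains $\intr(L)$; this contradicts the maximality of $L$. Hence in every case we reach a contradiction, proving the theorem.

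The step I expect to be the main obstacle is exactly this case analysis: the constructed tree must be \emph{simultaneously} short ($|T|\le|\prm B|$) and $\nice$-nice (every subbrick of $\prm B\cup T$ has perimeter at most $(1-\nice)|\prm B|$), and essentially all of the slack between the carvemark bound $(\tfrac12-\delta)|\prm B|$, the carvebase bound $\tfrac12|\prm B|$, and the admissible ranges of $\delta$ and $\nice$ is spent here. In particular one must rule out ``thin but long'' configurations, in which a carvemark spirals deep into $B$ so that the shortest path $Q$ is itself long --- this is the reason for choosing $v$ and $u$ to be deepest vertices and for demanding $\intr(L)$ inclusion-maximal, and it is where the topological description of how maximal $\delta$-carves (and $\delta$-mountains) can overlap, developed for the Mountain Range Theorem~\ref{thm-over:mountain-range}, does the heavy lifting. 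For the algorithm, a single BFS from $\prm B$ computes the needed distances and hence the candidate faces and the vertices $v,u$; the construction above then either exhibits an uncovered face directly or would produce a short $\nice$-nice tree, and as the latter does not exist, the former always occurs --- all in $\Oh(|B|)$ time.
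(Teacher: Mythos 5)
Your approach diverges substantially from the paper's. The paper does not try to directly construct a short $\nice$-nice tree from one or two extremal carves. Instead it defines a projection $\proj$ sending each vertex to a nearest perimeter vertex, proves (Claim~\ref{cl:multi-projection}) that for any strict $\delta$-carve $L$ there is a single interval $I_L \subseteq \prm B$ of length at most $(\tfrac12 - \tfrac{\delta}{2})|\prm B|$ containing $\proj(v)$ for all vertices $v$ enclosed by $L$, and then extends $\proj$ continuously over every edge and face (using that each is $\delta$-carvable, so its image sits in a short interval) to a retraction of the closed disc onto $\prm B$ --- impossible by Borsuk's non-retraction theorem. The small tree pictured in Figure~\ref{fig-over:corestep} appears only as a local gadget inside the proof of Claim~\ref{cl:multi-projection}, where Lemma~\ref{lem:short-connector} is applied to conclude that two projected points land on a common short interval; it is not the final short $\nice$-nice tree. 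The algorithmic part is then a single shortest-path computation plus a linear scan over face boundaries, certified by the claim.

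Your proposal has a genuine gap that I do not think can be closed along the lines you sketch. You localize to a single inclusion-maximal $\delta$-carve $L=(P,I)$ plus one auxiliary carve $L'$ on the far side of $P$, and then assert that $T=P\cup Q\cup P'\cup Q'$ is a tree whose induced subbricks all have perimeter at most $(1-\nice)|\prm B|$. None of the following is established and each is a real obstacle: the four paths can intersect one another, so $T$ need not be a tree (and if it contains a cycle it is not even a brickable connector in the sense of the paper); the choice of $L'$ is underdetermined --- its carvemark $P'$ can wander through, around, or across $P$, so there is no control over where the leaves of $T$ sit on $\prm B$; and the ``reroute and contradict maximality'' step requires both that the new carvemark stays within length $(\tfrac12-\delta)|\prm B|$ and that its interior strictly contains $\intr(L)$, neither of which follows automatically from a foot landing near an endpoint. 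More fundamentally, a single pair of carves cannot in general certify anything about the brick globally: there may be many maximal carves whose interiors overlap in a complicated way, and this is exactly what the paper's continuous extension plus Borsuk argument is designed to handle uniformly. Finally, your algorithm lacks a checkable certificate for ``this face is uncovered''; the paper's certificate (a face whose boundary's $\proj$-images do not fit in any interval of length $(\tfrac12-\tfrac{\delta}{2})|\prm B|$) is precisely what makes the linear-time search work. Also a minor point: your appeal to the Mountain Range machinery is misplaced --- the paper proves the Core Theorem without any reference to mountains, using only Lemma~\ref{lem:short-connector} and the retraction argument; mountains appear only afterward, in Section~\ref{sec:mountains} and beyond.
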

\begin{proof}[Proof sketch]
Suppose, for sake of contradiction, that all faces of $B$ are enclosed by some $\delta$-carve.
We first observe that, for any brickable short tree $T$ with diameter not more than $(\frac{1}{2}-\delta)\cdot |\prm B|$, 
there exists an interval $I_T$ of $\prm B$ of length at most $(\frac{1}{2} -\frac{\delta}{2}) \cdot |\prm B|$ 
such that all anchors of $T$ are in $I_T$. If no such interval exists, then every brick induced
by $T$ has perimeter less than $(\frac{1}{2}-\delta)\cdot |\prm B| + (\frac{1}{2} +\frac{\delta}{2}) \cdot |\prm B| \le (1-\nice) \cdot |\prm B|$. 
Hence, $T$ would be $\nice$-nice, a contradiction.

Define a map $v\to \proj(v)$ for $v\in V(B)$ such that $\proj(v)$ is a vertex of $\prm B$ closest to $v$.
The main observation is that if $v$ and $u$ belong to the interior
of some $\delta$-carve $(P,I)$, then the distance between $\proj(v)$ and $\proj(u)$ 
along $\prm B$ is at most $(\frac{1}{2} -\frac{\delta}{2}) \cdot |\prm B|$. 
To see this, consider the shortest paths $P_v$ from $v$ to $\pi(v)$. 
These paths can be used to form a tree $T$, consisting of $P$, the subpath of $P_v$ to $\proj(v)$ from the last point of $P_v$ on $P$, 
and the subpath of $P_u$ to $\proj(u)$ from the last point of $P_u$ on $P$ (see Figure~\ref{fig-over:corestep}).
We observe that the diameter of $T$ is bounded by $|P| \le (\frac{1}{2}-\delta)\cdot |\prm B|$, 
because the paths that make up $T$ always have length at most the corresponding part of $P$. 
Moreover, as $T$ has only four leaves, $|T|$ is bounded by twice the diameter of $T$, so $T$ is short. 
Hence, $\pi(v)$, $\pi(u)$, and $V(P)\cap V(\prm B)$ lie on the interval $I_T$, as observed above. 
We extend $\proj$ to the edges of $B$ by mapping $uv$
onto the shorter subpath between $\proj(u)$ and $\proj(v)$ on $\prm B$. Now consider a face $f$ that is enclosed by $(P,I)$. We note that no point of any edge of $f$ is mapped to a point lying exactly opposite on $\prm B$ to any point in $V(P)\cap V(\prm B)$, as such points cannot belong to $I_T$. Hence, all edges of $f$ are mapped to an interval of $\prm B$. 
Since an interval is a simply connected metric space, we can extend $\proj$ from the boundary of face 
$f$ to its interior in a continuous manner such that the whole face $f$ is mapped into it.
Consequently, since every face of $B$ can be enclosed by a $\delta$-carve,
we have constructed a retraction of a closed disc onto its boundary.
This contradicts Borsuk's non-retraction theorem~\cite{badger}.%, bitches!
\end{proof}

As each $\delta$-mountain is also a $\delta$-carve, $\mathcal{M}$ does not contain
an arbitrarily chosen core face $\coreface$ promised by Theorem~\ref{thm-over:core}.
Hence, the union of the perimeters of the $\delta$-mountain ranges that make up $\mathcal{M}$
contains a cycle $C_0$ that separates $\coreface$ from the mountain ranges.
Moreover, as we construct only $\Oh(\nice^{-2})$ mountain ranges,
each of perimeter $\Oh(|\prm B|)$ by Theorem~\ref{thm-over:mountain-range}, we have that $|C_0| = \Oh(|\prm B|)$; see Figure~\ref{fig-over:Czero}.

\begin{figure}
\centering
\begin{subfigure}{.5\textwidth}
\centering
  \includegraphics[width=.6\linewidth]{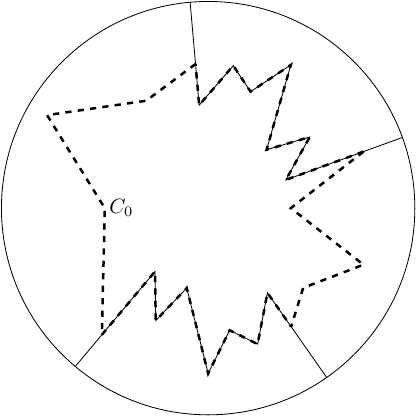}
\caption{}
\label{fig-over:Czero}
\end{subfigure}%
\begin{subfigure}{.5\textwidth}
\centering
  \includegraphics[width=.45\linewidth]{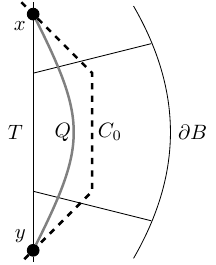}
\caption{}
\label{fig-over:slide}
\end{subfigure}
\caption{(a) shows the cycle $C_0$ formed by the union of the perimeters of the mountain ranges; example mountain ranges are drawn solid.
(b) shows how to shortcut the tree $T$ (solid) with a shortest $xy$-path $Q$ (gray).}\label{fig-over:slidefull}
\end{figure}

We observe that certain optimal Steiner trees in $B$ may behave nontrivially in the subgraph enclosed by $C_0$, and in particular, may still have a vertex of degree three or more that is enclosed by $C_{0}$. However, this behavior is easily dealt with as follows. Consider the situation in Figure~\ref{fig-over:slide}. If $Q$ is a shortest path between $x$ and $y$, then we may replace the part of the tree to the left of $Q$ by $Q$. Hence, we shortcut $C_0$ whenever possible while keeping $\coreface$ enclosed by $C_0$. By choosing $\delta = 4 \nice$, we then obtain the following result.

\begin{theorem}\label{thm-over:ananas}
% Pineapple is "ananas" in any language except English, I insist on using `ananas' at least in the latex references :)
Let $\nice \in (0,1/36]$.
Assume that $B$ does not admit any short $\nice$-nice tree and that $|\prm B| > 2/\nice$.
Then one can in $\Oh(|B|)$ time compute a simple
cycle $C$ in $B$ with the following properties:
\begin{enumerate}
\item the length of $C$ is at most $\frac{16}{\nice^2} |\prm B|$;
\item for each vertex $x \in V(C)$, there exists a path from $x$ to $V(\prm B)$
of length at most $(\frac{1}{4} - 2\nice) \cdot |\prm B|$ such that no edge of the path
is strictly enclosed by $C$;
\item $C$ encloses $\coreface$, where $\coreface$ is a face of $B$, promised by Theorem~\ref{thm-over:core}, that is not enclosed by any $2\nice$-carve;
\item for any $S \subseteq V(\prm B)$,
there exists an optimal Steiner tree $T_S$ that connects $S$
in $B$ such that no vertex of degree at least $3$ in $T_S$
is strictly enclosed by $C$.
\end{enumerate}
\end{theorem}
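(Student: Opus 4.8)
The plan is to build $C$ in two stages: first construct the ``raw'' cycle $C_0$ from the union of the $\delta$-mountain ranges (with $\delta = 4\nice$), and then iteratively shortcut it into the final cycle $C$. For the first stage, fix $\Theta(\nice^{-1})$ vertices on $\prm B$ that are roughly $\nice \cdot |\prm B|$ apart (this uses the hypothesis $|\prm B| > 2/\nice$, so that such a sampling exists and the gaps are genuinely smaller than $|\prm B|/2$). For every pair $l, r$ of designated vertices with $|\prm B[l,r]| < |\prm B|/2$, invoke the Mountain Range Theorem (Theorem~\ref{thm-over:mountain-range}) to obtain the closed walk $W_{l,r}$ of length at most $3|\prm B[l,r]| \le \frac{3}{2}|\prm B|$, computable in $\Oh(|B|)$ time. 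Let $\mathcal{M}$ be the union of all the enclosed face-sets; since there are $\Oh(\nice^{-2})$ pairs, the total length of $\bigcup_{l,r} W_{l,r}$ is $\Oh(\nice^{-2}|\prm B|)$. By the Core Theorem (Theorem~\ref{thm-over:core}) applied with the parameter $2\nice$ (note $\delta = 4\nice \ge 2\cdot 2\nice$, and a $\delta$-mountain is in particular a $2\nice$-carve since $(\tfrac12-\delta) \le (\tfrac12 - 2\nice)$ and the carvebase is still the shorter perimeter arc), there is a face $\coreface$ enclosed by no $2\nice$-carve, hence $\coreface \notin \mathcal{M}$. Then $\bigcup_{l,r} W_{l,r}$, viewed as a subgraph, separates $\coreface$ from the outer face, so it contains a cycle $C_0$ enclosing $\coreface$; take $C_0$ to be an \emph{innermost} such cycle (the boundary of the connected region containing $\coreface$ in the complement of $\bigcup W_{l,r}$), which is simple and has $|C_0| = \Oh(\nice^{-2}|\prm B|)$.

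For property~(2) on $C_0$: any vertex $x \in V(C_0)$ lies on some $W_{l,r}$, hence on some maximal $\delta$-mountain $M = \mou{P_L}{P_R}$ connecting $l,r$. By the structure of mountains, $x$ lies on $P_L$ or $P_R$, so there is a path from $x$ along (say) $P_L$ to $l \in V(\prm B)$ of length at most $|P_L| \le (\tfrac12 - \delta)|\prm B| = (\tfrac12 - 4\nice)|\prm B|$, and this path lies inside $M$, hence inside $\mathcal{M}$, hence is not strictly enclosed by $C_0$ (it reaches the perimeter without crossing $C_0$, by the innermost choice). This already gives a bound of the right shape, though weaker than the claimed $(\tfrac14 - 2\nice)$; the shortcutting stage will tighten it.

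The second stage is the shortcutting, and this is where the main work lies. Repeatedly look for a pair $x, y \in V(C)$ together with a path $Q$ from $x$ to $y$ such that (a) $Q$ is no longer than the arc of the current $C$ that it would replace, (b) $Q$ is a shortest $xy$-path in $B$, and (c) replacing that arc of $C$ by $Q$ still leaves $\coreface$ enclosed; perform the replacement (as in Figure~\ref{fig-over:slide}). Since every replacement strictly decreases $|C|$ (or keeps it equal while reducing some secondary measure, e.g.\ the number of edges), the process terminates; maintaining it in $\Oh(|B|)$ total time requires care but follows the standard shortest-path-tree machinery. When no more shortcuts are possible, every subpath of $C$ between two of its vertices is already a shortest path in $B$ — in particular, for each $x \in V(C)$, walking along $C$ to the nearest vertex of $C_0$'s original ``witness path'' yields, after combining with the stage-one path, a path to $\prm B$ of length at most $(\tfrac14 - 2\nice)|\prm B|$ that avoids the strict interior of $C$; this is the crucial quantitative claim and the part I expect to be the real obstacle, since it needs a careful accounting of how much the shortcutting can save (roughly: a path of length exceeding $(\tfrac14 - 2\nice)|\prm B|$ reaching the perimeter on the outside of $C$ could be used, via two such paths on opposite ``sides'' of $\coreface$, to build a short $\nice$-nice tree, contradicting the hypothesis — here is where the specific choice $\delta = 4\nice$ and $\nice \le 1/36$ get used to make the arithmetic close). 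Property~(1) is preserved since shortcutting only shrinks $C$; property~(3) is maintained as an invariant of the loop; and property~(4) follows because any optimal Steiner tree $T_S$ with a degree-$\ge 3$ vertex $v$ strictly inside $C$ would, by Lemma~\ref{lem-over:broom-in-mountain}, have the two ``outermost'' paths from $v$ captured in a $\delta$-mountain reaching $\prm B$, forcing $v$ to lie in $\mathcal{M}$ and hence (by the innermost choice of $C_0$ and the fact that shortcutting never moved $C$ inward across $\mathcal{M}$) not strictly inside $C$ — contradiction; a small amount of extra care rerouting $T_S$ through $C$ handles the case where $T_S$ only grazes the interior, exactly as in the Figure~\ref{fig-over:slide} argument.
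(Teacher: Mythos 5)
Your overall plan is the same as the paper's: build a raw cycle $C_0$ from the union of the $\delta$-mountain ranges (with $\delta = 4\nice$) for all pairs of designated pegs, invoke the Core Theorem to get $\coreface$ outside this union, extract the cycle bounding the component of the complement containing $\coreface$, and then shrink it. You are also right about property (4): it reduces, via Lemma~\ref{lem-over:broom-in-mountain}, to showing that degree-$\geq 3$ vertices of a well-chosen optimal $T_S$ are swallowed by some mountain, plus a rerouting step to handle paths of $T_S$ that cross into the interior of $C$ (this is Lemma~\ref{lem:slide-correctness}, and it is the most delicate part, but your sketch points the right way). The problem is property (2), which you yourself flag as ``the real obstacle.'' Your stage-one bound on the witness path is $(\tfrac12 - 4\nice)|\prm B|$, and your stage-two shortcutting does not plausibly improve it to $(\tfrac14 - 2\nice)|\prm B|$: iterated local shortcuts that only preserve ``$\coreface$ stays enclosed'' can drag $C$ deep into the brick, and nothing in your argument controls the distance of the new vertices of $C$ to $\prm B$. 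So this is a genuine gap, not just a missing calculation.

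The paper's fix is different in spirit and resolves both the quantitative claim and your running-time worry. First, observe that \emph{every} point on the boundary of any $4\nice$-mountain $\mou{P_L}{P_R}$ is within distance $(\tfrac14 - 2\nice)|\prm B|$ of $V(\prm B)$ in $B$: for a point $v$ on $P_L$ at distance $d_1$ from $l$ along $P_L$ and $d_2 = |P_L|-d_1$ from the summit, it can reach $\prm B$ via $l$ (cost $d_1$) or via $r$ through the summit (cost $d_2 + |P_R|$), and $\min(d_1,\, d_2 + |P_R|) \leq \tfrac12(|P_L|+|P_R|) \leq (\tfrac14-2\nice)|\prm B|$. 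So the raw cycle $C(\coreface) = C_0$ lies entirely in the subgraph $\closeB$ of edges whose every point is within $(\tfrac14-2\nice)|\prm B|$ of $\prm B$. Second, instead of iterative shortcutting, take $C$ to be a \emph{globally shortest} cycle separating $\coreface$ from the outer face in the subgraph of $\closeB$ consisting of edges enclosed by $C_0$ --- a single planar min-cut computation, linear time in the unweighted setting. Now property (1) holds because $C_0$ is a feasible candidate; property (2)'s distance bound holds automatically for every $x\in V(C)$ because $x \in V(\closeB)$, and the ``no edge strictly enclosed by $C$'' part follows from the shortest-cycle optimality (any subpath of a shortest $x$-to-$\prm B$ path that dives strictly inside $C$ would, since it lies in $\closeB$, yield a shorter separating cycle when spliced with one of the two arcs of $C$). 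Your iterative formulation, besides not giving the bound, also does not obviously run in $\Oh(|B|)$ time; the single min-cut call does.
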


Finally, we are ready to describe the decomposition. Apply the algorithm of Theorem~\ref{thm-over:ananas} to $B$, and let $C$ denote the resulting cycle. We can then decompose the brick as in Figure~\ref{fig-over:pineapple}, meaning that the area between $C$ and $\prm B$ is partitioned into a number of small subbricks of total perimeter $\Oh(|\prm B|)$. Here we use the second property of $C$ that is promised by Theorem~\ref{thm-over:ananas} to build the sides of the subbricks. We recursively apply Theorem~\ref{thm:main} to these subbricks, and let $H$ denote the union of the resulting subgraphs. Then we add to $H$ for each pair of vertices of $C$ a shortest path in the area enclosed by $C$ between the two vertices if that shortest path has length at most $|\prm B|$.
The linear bound on the total perimeter of the subbricks enables a similar analysis as in the proof of Lemma~\ref{lem-over:recursion}. We then choose $\nice = 1/36$. This concludes the proof of Theorem~\ref{thm:main}. 

\subsection{Extending to graphs of bounded genus} \label{ss-over:genus}
In this section, we informally argue how to extend Theorem~\ref{thm:main} to graphs of bounded genus. A detailed statement of the result and a full proof can be found in Section~\ref{sec:genus}.

We use the framework of Borradaile et al.~\cite{cora:genus}:
the idea is to reduce the bounded-genus case to the planar case by cutting the graph embedded on a surface of bounded genus into a planar graph using a cutset of small size.
That is, as in~\cite{cora:genus},
     given a brick embedded on a surface of genus $g$ (i.e., a graph with a designated face),
we may cut along a number of ``short'' cutpaths to make the brick planar,
at the cost of extending the perimeter and the diameter of the brick by an additive factor
of $\Oh(gd)$, where $d$ is the diameter of $G$.
However, in our case, $d$ can be bounded by the perimeter of the brick, as vertices further from the perimeter may be safely discarded.

\subsection{Applications of Theorem~\ref{thm:main}} \label{ss-over:applications}
In this section, we briefly sketch how to prove Theorems~\ref{thm:pst-intro},~\ref{thm:psf-intro}, and~\ref{thm:emwc-intro}. Full proofs appear in Section~\ref{sec:applications}.

\begin{proof}[Proof sketch of Theorem~\ref{thm:pst-intro}]
We manipulate the graph such that all terminals lie on the outer face. We first find a $2$-approximate Steiner tree $\Tapx$ for $\terms$ in $G$. We then cut the plane open along $\Tapx$, cf.~\cite{klein:planar-st-eptas}. That is, we create an Euler tour of $\Tapx$ that traverses each edge twice in different directions and respects the plane embedding of $\Tapx$. Then we duplicate every edge of $\Tapx$, replace each vertex $v$ of $\Tapx$ with $d-1$ copies of $v$, where $d$ is the degree of $v$ in $\Tapx$, and distribute the copies in the plane embedding so that we obtain a new face $F$ with boundary corresponding to the aforementioned Euler tour. Fix the embedding of the resulting graph $\nG$ such that $F$ is its outer face. Note that the terminals $\terms$ lie only on the outer face of $\nG$, and that $|\prm \nG| \le 4k_{OPT}$. Apply Theorem~\ref{thm:main} to $\nG$ to obtain $\hat{H}$, which is of size $\Oh(|\prm \nG|^{142}) = \Oh(k_{OPT}^{142})$. As an optimal Steiner tree $T$ in $G$ splits into a family of trees in $\nG$ that each connect subsets of $V(T) \cap V(\prm \nG)$, the projection of $\hat{H}$ onto $G$ yields the desired set $F \subseteq E(G)$.
\end{proof}
To prove Theorem~\ref{thm:psf-intro}, we compute a simple approximate solution and remove all edges that are farther from a terminal than the size of this approximate solution. We then apply the same idea as in Theorem~\ref{thm:pst-intro} to each of the resulting connected components.

The idea behind the proof of Theorem~\ref{thm:emwc-intro} is that the \textsc{Edge Multiway Cut} problem becomes a \textsc{Steiner Forest}-like problem in the dual graph.
Hence, we cut open the dual of $G$ similarly as we cut open $G$ in Theorem~\ref{thm:pst-intro}:
for each terminal $t$ of $G$, we take the cycle $C_t$ in the dual of $G$ that consists of
all edges incident to $t$, and cut the dual along a short connected subgraph containing
all cycles $C_t$ for all terminals of $G$.
We show that to preserve an optimal solution for \textsc{Edge Multiway Cut} in $G$
it suffices to preserve an optimal Steiner tree for any choice of the terminals on the perimeter
of the obtained brick. Hence, to apply Theorem~\ref{thm:main}, we need to bound
the length of the perimeter, that is, the length of the subgraph of the dual of $G$
that we cut along. By standard reductions, the total length of the cycles
$C_t$ (i.e., the total number of edges incident to terminals) is bounded
by $2k_{OPT}$, where $k_{OPT}$ is the optimal solution size.
Hence, it suffices to bound the diameter of the dual of $G$.

To this end, we fix a terminal $t$ and choose an inclusion-wise maximal laminar family of 
minimal separators that separate $t$ from the remaining terminals and that are maximally ``pushed away''
from $t$ (that is, they are important separators in the sense of~\cite{Marx06}).
By the ``pushed away'' property of the chosen family, each chosen separator is of different size, 
and as there are at most $2k_{OPT}$ edges incident
to the terminals, the largest chosen separator is of size at most $2k_{OPT}$. Hence,
there are $\Oh(k_{OPT}^2)$ edges in this chosen laminar family of minimal separators.

The essence of the proof is to show that an edge that is ``far'' from the chosen family of
separators
is irrelevant for the problem, and may be safely contracted.
Here, ``far'' means $ck_{OPT}$ for some universal constant $c$. Intuitively,
if such an edge $e$ is chosen in an optimal solution $X$, then the connected component of $X$
of the dual of $G$ that contains $e$ lives between two separators from the chosen family,
and we can show that it can be replaced by (a part of) one of these two separators.

Hence, after this reduction is performed exhaustively, the diameter of the dual of $G$
is bounded by $\Oh(k_{OPT}^3)$. Consequently,
  cutting the graph open and applying Theorem~\ref{thm:main}
leads to a polynomial kernel.

Using the extension of Theorem~\ref{thm:main} to graphs of bounded genus, we can extend Theorem~\ref{thm:pst-intro} and~\ref{thm:psf-intro}, and part 1 of Corollary~\ref{cor:subexp} to such graphs (see Section~\ref{sec:genus} for details).

\subsection{The weighted variant: overview of the proof of Theorem~\ref{thm:weighted}}\label{ss-over:weighted}

We now focus on the weighted variant, and sketch the proof of Theorem~\ref{thm:weighted}. A full proof appears in Section~\ref{sec:weights}.

We start with a base case, where $\mathcal{\terms}$ consists of a single terminal pair and $H$ must contain a Steiner forest $F_{H}$ that connects $\mc{\terms}$ such that $\wei{F_{H}} \leq (1+\eps) \wei{F_{B}}$ for any Steiner forest $F_{B}$ in $B$ that connects $\mc{\terms}$.
This base case has been already resolved by Klein~\cite{klein:stoc06}
in the context of an approximation scheme for \textsc{Subset TSP}.

With the base case of a single terminal pair in mind, we move to the $\portalbound$-variant of Theorem~\ref{thm:weighted}, where $\mathcal{\terms}$ is allowed to contain only
$\portalbound$ terminal pairs and the obtained bound for $\wei{H}$
depends polynomially both on $\eps^{-1}$ and $\portalbound$.
In this proof, we use the entire power of the structural results and decomposition
methods developed for the proof of Theorem~\ref{thm:main}, adjusted to the edge-weighted case.
In short, we show that if we decompose each brick recursively into smaller bricks,
stopping when the perimeter of the brick drops below some
threshold $\poly(\eps/\portalbound) \wei{\prm B}$, then we can take the single-pair
graph $H$ developed previously in each such small brick, and the union
of all such graphs has the desired properties. 
The crux of the analysis is that the bound $\portalbound$ ensures that we can ``buy'' the entire perimeter of each small brick in which some vertex of degree at least three of an optimal Steiner forest of $B$ is present.

Finally, we use the partitioning methods from the EPTAS~\cite{klein:planar-st-eptas}, the so-called mortar graph framework,
to derive Theorem~\ref{thm:weighted}
from the $\portalbound$-variant. The mortar graph constructed by~\cite{klein:planar-st-eptas} is essentially a brickable connector. We call the bricks induced by this connector \emph{cells}. The mortar graph has the property that there exists a near-optimal
Steiner forest in $B$ that crosses each cell at most
$\alpha(\eps) = o(\eps^{-5.5})$ times. Therefore, we construct the mortar graph of the input brick and then apply $\portalbound$-variant to each cell 
independently, for an appropriate choice of $\portalbound = \poly(\eps^{-1})$.
This then yields the desired graph $H$.

%!TEX root = pst-kernel.tex

\section{Preliminaries}\label{sec:preliminaries}

We use standard graph notation, see e.g.~\cite{diestel}.
All our graphs are undirected and, unless otherwise stated, simple.
For a graph $G$, by $V(G)$ and $E(G)$ we denote its vertex- and edge-set, respectively.
For $v \in V(G)$, the neighbourhood of $v$ is defined as $N_G(v) = \{u: uv \in E(G)\}$
and the closed neighbourhood of $v$ as $N_G[v] = N_G(v) \cup \{v\}$. We extend these notions
to sets $X \subseteq V(G)$ as $N_G[X] = \bigcup_{v \in X} N_G[v]$ and $N_G(X) = N_G[X] \setminus X$.
We omit the subscript if the graph is clear from the context.

For a subgraph $H$ of $G$, we silently identify $H$ with the edge set of $H$; that is,
all our subgraphs are edge-induced. In particular, this applies to all paths, walks, and cycles; we
treat them as sequences of edges.

In this paper, we work with both unweighted and edge-weighted graphs.
An \emph{edge-weighted graph} is a graph $G$ equipped with a weight function $\weismb: E(G) \to (0,+\infty)$.
We explicitly disallow zero-cost edges in the input graph.
For any edge $e \in E(G)$, the value $\wei{e}$ is the \emph{length} or \emph{weight} of the edge $e$.
For any subgraph $H$ of $G$ (in particular, for any cycle or path in $G$), the \emph{length} or \emph{weight} of $H$
is defined as $\wei{H} = \sum_{e \in H} \wei{e}$.
An \emph{unweighted graph} is an edge-weighted graph with weight function $\wei{e} = 1$ for each edge $e$, i.e., $\wei{H} = |H|$ for any subgraph $H$.

The distance between two vertices is the length of a shortest path
between them. The distance between two vertex sets is the minimum distance
between pairs of vertices in the sets.
The distance between two (sets of) edges is the minimum distance between
the endpoints of the edges.
By $\dist_G(X,Y)$ we denote the distance between objects (vertices, vertex sets, edge sets) $X$ and $Y$ in the graph $G$.

By $\plane$ we denote the standard euclidean plane.
Let $G$ be a plane graph, that is, a graph embedded on plane $\plane$.
Let $\gamma$ be a closed curve on the plane, that is, a continuous image of a circle.
We say that $\gamma$ {\em{strictly encloses}} a point $c$ on the plane if $c$ does not lie on $\gamma$ and $\gamma$ is not the neutral element of the fundamental group
of $\plane \setminus \{c\}$ (note that this fundamental group is isomorphic to $\Z$) or, equivalently, $c$ does not lie on $\gamma$ and $\gamma$
is not continuously retractable to a single point in $\plane \setminus \{c\}$.
We say that $\gamma$ {\em{encloses}} $c$ if $\gamma$ strictly encloses $c$ or $c$ lies on $\gamma$.
We often identify closed walks in the graph $G$ with the closed curves that they induce in the planar embedding; thus, we can say that a closed walk in the graph (strictly) encloses $c$.
We extend these notions to vertices, edges, and faces of a graph $G$: a vertex is (strictly) enclosed if its drawing on the plane is (strictly) enclosed,
and edge is (strictly) enclosed if all interior points of its drawing are (strictly) enclosed, and a face is (strictly) enclosed if all points of its interior
are (strictly) enclosed. We also say that a closed walk in $G$ (strictly) encloses some object if the drawing of this closed walk
(strictly) encloses the object.
Note that if $C$ is a simple cycle in $G$, then its drawing is a closed curve without self-intersections, and the notion of (strict) enclosure
coincides with the intuitive meaning of these terms.

%We say that a vertex, edge, or face of $G$ is \emph{enclosed by a simple cycle $C$} of $G$, if it is contained in the interior of the bounded Jordan region defined by the Jordan curve that the cycle $C$ induces in the plane embedding. We now introduce the key definitions that will be used in all our reasonings.

\begin{definition}
A connected plane graph $B$ is called a {\em brick} if the boundary of the infinite face of $B$ is a simple cycle. This cycle is then called the {\em perimeter} of the brick, and denoted by $\prm B$. The {\em interior} of the brick, denoted $\intr B$, is the graph induced by all the edges not lying on the perimeter, that is, $\intr B := B \setminus \prm B$.
\end{definition}

Note that for a brick $B$, all the edges of $\intr B$ as well as all the vertices of $\intr B$ not lying on $\prm B$, are strictly
enclosed by $\prm B$. For a brick $B$, every face of $B$ enclosed by $\prm B$ is called an {\em inner face}. 

For a path $P$,
we denote by $P[a,b]$ the subpath of $P$ starting in vertex $a$ and ending in vertex $b$. This definition is extended to the perimeter
$\prm B$ of a brick $B$ in the following way. We denote by $\prm B[a,b]$ the subpath of $\prm B$ obtained by traversing $\prm B$ in 
counter-clockwise direction from $a$ to $b$. On the other hand, for a tree $T$ we denote by $T[a,b]$ the unique path in $T$ between 
vertices $a$ and $b$. 

We also need the following notation. Let $T$ be a tree embedded in the plane, and let $uv$ be an edge of $T$.
The {\em{subtree of $T$, rooted at $v$, with parent edge $uv$}} is the connected component of $T \setminus \{uv\}$ that contains $v$,
rooted in $v$, equipped with the following order on the children of each node $w$: order the children of $w$ in counter-clockwise
order starting from the parent of $w$ if $w \neq v$ and with the edge $uv$ if $w = v$.
%Generally, by the {\em{leftmost}} (or {\em{rightmost}}) object
%of the subtree of $T$, rooted at $v$ with parent edge $uv$, is the first (last) such object in the pre-order traversal of the subtree (visiting children in the aforementioned order).
We say that $a$ and $b$ are the {\em{leftmost and rightmost elements of $V(T_v) \cap V(\prm B)$}}, respectively,
if $a, b \in V(T_v) \cap V(\prm B)$, $V(T_v) \cap V(\prm B) \subseteq V(\prm B[a,b])$, and the face of $\prm B \cup V(T_v)$
that contains the edge $uv$ is incident to the edges of $\prm B[b,a]$.

\subsection{Problem definitions}

For completeness, we formally state the problems considered in this paper.\\

\defproblemnoparam{\pST{}}{An edge-weighted planar graph $G$, a set of terminals $\terms \subseteq V(G)$.}{Find a connected subgraph $T$ of $G$ of minimum possible length such that
  $\terms \subseteq V(T)$ (i.e., $T$ connects $\terms$).}

\defproblemnoparam{\pSF{}}{An edge-weighted planar graph $G$, a family of pairs of terminals
  $\termpairs \subseteq V(G) \times V(G)$.}{Find a subgraph $H$ of $G$ of minimum possible length such that
 for each $(s,t) \in \termpairs$, the terminals $s$ and $t$ lie in the same connected
   component of $H$.}
\ \\

Observe that \pST{} reduces to \pSF{} by taking the family $\termpairs$ to be $\terms \times \terms$.

As we study \pemwcname{} only in unweighted graphs, we state this problem in the unweighted
setting only.\\

\defproblemnoparam{\pemwcname{} (\pemwc{})}{A planar graph $G$, a set of terminals $\terms \subseteq V(G)$.}{Find a minimum set of edges $X$ such that no two terminals lie in the same connected component of $G \setminus X$.}
\ \\

In the bounded-genus case, we assume that the input graph
is given together with an embedding into a surface of genus $g$
such that the interior of each face is homeomorphic to an open disc.

%!TEX root = pst-kernel.tex

\section{The case of a nicely decomposable brick}\label{sec:bricks}

Sections~\ref{sec:bricks}--\ref{sec:dp} are devoted
to the proof of Theorem~\ref{thm:main}.
However, in most places we take a more general view and argue about edge-weighted graphs, as we would like to re-use the
obtained structural results in the weighted variant, discussed in Section~\ref{sec:weights}.
Hence, unless otherwise stated, all graphs are equipped with a weight function $\weismb$.

We first give formal definitions of the brick decomposition and related notions, and proceed to define what it means for a brick to be nicely decomposable. Then we explain how Theorem~\ref{thm:main} can be applied recursively.

\begin{definition}
We say that a brick $B'$ is {\em{a subbrick}} of a brick $B$ if $B'$ is a subgraph of $B$
consisting of all edges enclosed by $\prm B'$.
\end{definition}

\begin{definition}
For a brick $B$, a {\em brick covering} of $B$ is a family $\Bb=\{B_1,B_2,\ldots,B_p\}$ of bricks, such that {\em (i)} each $B_i$, $1 \leq i \leq p$,
is a subbrick of $B$, and {\em (ii)} each face of $B$ is contained in at least one brick $B_i$, $1 \leq i \leq p$.
A brick covering is called a {\em{brick partition}} if each face of $B$ is contained in {\em{exactly one}} brick $B_i$.
\end{definition}

Let us now discuss the notion of brick partition.
If $\Bb = \{B_1,B_2,\ldots,B_p\}$ is a brick partition of $B$,
then it follows that every edge of $\prm B$ belongs to the perimeter of exactly one brick $B_i$,
while every edge of $\intr B$ either is in the interior of exactly one brick $B_i$,
or lies on perimeters of exactly two bricks $B_i$, $B_j$ for $i\neq j$.
%The set of edges of the interior of $B$ that lie on the perimeter of some $B_i$
%is called the {\em skeleton} of $\Bb$, i.e., the skeleton is the set $(\bigcup_{B'\in \Bb} \prm B')\setminus \prm B$.

Any connected set $F\subseteq B$ will be called a {\em connector}.
Let $\terms$ be the set of vertices of $\prm B$ adjacent to at least one edge of $F$;
the elements of the set $\terms$ will be called the {\em anchors} of the connector $F$.
We then say that $F$ {\em connects} $\terms$.
For a connector $F$, we say that $F$ is {\em optimal} if there is no connector $F'$ with $\wei{F'}<\wei{F}$
that connects a superset of the anchors of $F$.
Clearly, each optimal connector $F$ induces a tree, whose every leaf is an anchor of $F$.
For a connector $F$, every part of $\prm B$ between two consecutive anchors of $F$ will be called an {\emph{interval of $F$}}.

We say that a connector $F\subseteq B$ is {\em brickable} if the boundary of every inner face of $\prm B \cup F$ is a simple cycle, i.e., these boundaries form subbricks of $B$.
Let $\Bb$ be the corresponding brick partition of $B$; observe that then $\sum_{B'\in \Bb} \wei{\prm B'}\leq \wei{\prm B}+2\wei{F}$.
Note that a tree is brickable if and only if all its leaves lie on $\prm B$
and, consequently, every optimal connector is brickable.
We now move to the definition of one of the crucial notions that explains which partitions and coverings
can be used for the recursive step.

\begin{definition}
The {\em{total perimeter}} of a brick covering $\Bb = \{B_1,\ldots,B_p\}$ is defined as $\sum_{i=1}^p \wei{\prm B_i}$.
For a constant $c > 0$, $\Bb$ is {\em{$c$-short}} if the total perimeter of $\Bb$ is at most $c \cdot \wei{\prm B}$.
For a constant $\nice > 0$, $\Bb$ is {\em{$\nice$-nice}} if $\wei{\prm B_i} \leq (1-\nice) \cdot \wei{\prm B}$ for each $1 \leq i \leq p$.

Similarly, a brickable connector $F\subseteq B$, with $\Bb = \{B_1,\ldots,B_p\}$ the corresponding brick partition,
is {\em{$c$-short}} if $\Bb$ is $c$-short,
simply {\em{short}} if it is $3$-short,
and $F$ is \emph{$\nice$-nice} if $\Bb$ is $\nice$-nice.
%if {\emph{(i)}} $|F| < |\prm B|$, and {\emph{(ii)}} $|\prm B_{i}| \leq (1-\nice) \cdot |\prm B|$ for all $i=1,\ldots,p$.
\end{definition}

Observe that for a brickable connector $F\subseteq B$, if $\wei{F} \leq \wei{\prm B} \cdot (c-1)/2$, then $F$ is $c$-short, and in particular if $\wei{F}\leq \wei{\prm B}$, then $F$ is $3$-short.
Moreover, if $F$ is a tree with leaves on $\prm B$ and of length at most $\wei{\prm B}$, then $F$ is a short brickable connector.
Such a tree is called a \emph{$3$-short tree} (or just {\emph{short}} instead of $3$-short, for simplicity).
The following theorem is needed to make our proof algorithmic.

\begin{theorem}\label{thm:nice-testing}
Let $\nice > 0$ be a fixed constant.
Given an unweighted brick $B$, in $\Oh(|\prm B|^8 |B|)$ time
one can either correctly conclude that no short $\nice$-nice tree exists in $B$
or find a short $\nice$-nice brick covering of $B$.
\end{theorem}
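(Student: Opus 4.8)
The plan is to adapt the classical dynamic-programming algorithm of Erickson, Monma, and Veinott~\cite{erickson} for computing optimal Steiner trees in planar graphs with all terminals on the outer face. Recall that in that algorithm one builds a Steiner tree bottom-up using a Dreyfus--Wagner-style recursion, where a subproblem is indexed by a subset of terminals together with an ``interface'' vertex $v$ through which the partial tree is currently hanging; the crucial observation is that only subsets consisting of \emph{consecutive} terminals along the outer cycle ever need to be considered, which cuts the number of states from exponential to polynomial. In our setting the terminal set is not given in advance --- we are searching over \emph{all} possible anchor sets on $\prm B$ simultaneously --- so we add to the state the two extreme anchors $l, r \in V(\prm B)$ bounding the consecutive interval currently being assembled. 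To be able to certify niceness we must also track enough information to reconstruct the perimeters of the induced subbricks, so we augment the state with the total length of the partial tree, and the lengths of the two boundary paths of the partial tree (the leftmost root-to-$l$ path and the rightmost root-to-$r$ path); these lengths are integers in $\{0,1,\ldots,|\prm B|\}$ since any relevant tree is short. So a state is a tuple $(l, r, v, \ell_{\mathrm{tot}}, \ell_L, \ell_R)$, giving $\Oh(|\prm B|^4 \cdot |V(B)|)$ states, and the transitions (extending a path by one edge, merging two partial trees at $v$, attaching a new anchor) are the standard ones, run over a Dijkstra-like relaxation order; the overall running time is $\Oh(|\prm B|^8 |B|)$ after accounting for the cost of the merge step as in~\cite{erickson}.

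The key point is \emph{what} we read off from the completed DP table. For each fully assembled tree $T$ (detected when a state becomes ``closed'' and the two boundary paths together with $\prm B[r,l]$ enclose the whole tree), the quantities $\ell_{\mathrm{tot}}, \ell_L, \ell_R, l, r$ are exactly what is needed to decide whether $T$ is $\nice$-nice: the perimeters of the subbricks induced by a short brickable tree are governed by the anchor gaps along $\prm B$ together with the path lengths inside $T$, and one verifies that $T$ is $\nice$-nice iff a simple arithmetic condition on these recorded values holds for every induced face. Hence, if at the end \emph{no} assembled short tree satisfies the niceness condition, we may safely output ``no short $\nice$-nice tree exists.'' Conversely, if some short $\nice$-nice tree $T$ is found, the brick partition $\Bb$ induced by $T$ satisfies $\sum_{B'\in\Bb}|\prm B'| \le |\prm B| + 2|T| \le 3|\prm B|$ (since $T$ is short), so $\Bb$ is a $3$-short $\nice$-nice brick covering --- actually a partition --- and we return it.

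I anticipate two main obstacles. The first, and genuinely technical, difficulty is that the DP does not directly produce \emph{one canonical} short $\nice$-nice tree: because the recursion merges partial solutions greedily by length, the optimal-length object reconstructed for a given boundary state may fail to be brickable, or may glue pieces whose combined induced partition is not the one whose perimeters we tracked (anchors of one piece can land inside an interval of another). This is precisely the ``for technical reasons we cannot ensure that the output will actually be a brick partition corresponding to some short $\nice$-nice tree'' caveat flagged in the overview. The fix is to relax the target: rather than insisting on recovering a bona fide short $\nice$-nice \emph{tree}, we let the algorithm assemble, from the recorded pieces, a brick \emph{covering} --- allowing faces to be covered more than once --- and argue that the covering we extract is still $3$-short and $\nice$-nice even though it need not equal the partition induced by any single tree. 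The second, more routine, obstacle is bookkeeping: one must verify that bounding boundary-path lengths and the total length each by $|\prm B|$ loses nothing (any tree relevant to the niceness test is short, hence has all these lengths $\le |\prm B|$), that integrality is preserved because the graph is unweighted, and that the merge step can still be implemented within the claimed time bound despite the enlarged state space --- all of which follows the template of~\cite{erickson} with the extra coordinates carried along passively.
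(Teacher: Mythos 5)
Your plan matches the paper's own proof: the same Erickson--Monma--Veinott-style DP with state enriched by the two extreme anchors and the lengths of the leftmost/rightmost root-to-anchor paths (the paper stores total length as the table value rather than as a coordinate, but this is equivalent), the same on-the-fly enforcement of the niceness inequality during the merge step, and the same resolution of the key subtlety --- the paper formalizes it via (possibly non-injective) ``embedded trees'' and a separate lemma that extracts from such a tree a $\nice$-nice brick \emph{covering} of total perimeter at most $|\prm B|+2\wei{\Tree}$, exactly the relaxation from partition to covering you identify. The proposal is a correct high-level reconstruction of the argument, including the $3$-shortness bookkeeping.
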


A slightly more technical variant of Theorem~\ref{thm:nice-testing}, in the edge-weighted setting,
  is stated in Section~\ref{sec:dp}.
The proofs of Theorem~\ref{thm:nice-testing} and its edge-weighted counterpart, given in Section~\ref{sec:dp}, are a technical modification of the classical algorithm of Erickson et al.~\cite{erickson} that computes an optimal Steiner tree in a planar graph assuming that all the terminals lie on the boundary of the infinite face. For technical reasons, we cannot ensure that if some short $\nice$-nice tree exists, then the output of the algorithm of Theorem~\ref{thm:nice-testing} will actually be a brick partition corresponding to some short $\nice$-nice tree. Instead, the algorithm may output a brick covering, but one that is guaranteed to be short and nice for some choice of constants.
Fortunately, this property is sufficient for our needs.

Armed with Theorem~\ref{thm:nice-testing} and the notion of brick partition and covering,
we may now describe the recursive step in the algorithm of Theorem~\ref{thm:main}.
Thus, in the rest of this section we work with unweighted bricks only, and $\wei{H} = |H|$
for any subgraph $H$.
The following lemma is the main technical contribution of this section.

\begin{lemma}\label{lem:recursion}
Let $c, \nice > 0$ be constants.
Let $B$ be an unweighted brick and let $\Bb = \{B_1,\ldots,B_p\}$ be a $c$-short $\nice$-nice brick covering of $B$.
Assume that the algorithm of Theorem~\ref{thm:main} was applied recursively to bricks $B_1,\ldots, B_p$,
and let $H_1,\ldots,H_p$ be the subgraphs output by this algorithm for $B_1,\ldots,B_p$, respectively,
where $|H_i|\leq C \cdot |\prm B_i|^\alpha$ for some constants $C>0$ and $\alpha\geq 1$
such that $(1-\nice)^{\alpha-1}\leq \frac{1}{c}$.
Let $H=\bigcup_{i=1}^p H_i$.
Then $H$ satisfies conditions {\em (i)-(iii)} of Theorem~\ref{thm:main}, with $|H|\leq C\cdot |\prm B|^\alpha$.
\end{lemma}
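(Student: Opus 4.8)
The plan is to verify conditions (i), (ii), and (iii) of Theorem~\ref{thm:main} separately for $H=\bigcup_{i=1}^p H_i$. Conditions (i) and (ii) will follow from direct counting, using only that $\Bb$ is a covering (so every edge of $\prm B$ lies on the perimeter of some $B_i$), that it is $c$-short, and that it is $\nice$-nice, together with the recursive size bound $|H_i|\le C|\prm B_i|^{\alpha}$ and the hypothesis $(1-\nice)^{\alpha-1}\le 1/c$. Condition (iii) is the heart of the argument and will require an exchange argument that exploits the recursive guarantee on each $H_i$.

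For (i), I would note that every face of $B$ incident to an edge $e\in\prm B$ is contained in some $B_i\in\Bb$; since $\prm B_i$ is a simple cycle bounding that subbrick, $e$ lies on $\prm B_i\subseteq H_i\subseteq H$, so $\prm B\subseteq H$. For (ii), I would combine $\nice$-niceness, which gives $|\prm B_i|^{\alpha}\le|\prm B_i|\cdot(1-\nice)^{\alpha-1}|\prm B|^{\alpha-1}$, with $c$-shortness, $\sum_i|\prm B_i|\le c|\prm B|$, to get
\[
|H|\le\sum_{i=1}^p|H_i|\le C\sum_{i=1}^p|\prm B_i|^{\alpha}\le C(1-\nice)^{\alpha-1}|\prm B|^{\alpha-1}\sum_{i=1}^p|\prm B_i|\le c\,(1-\nice)^{\alpha-1}C|\prm B|^{\alpha}\le C|\prm B|^{\alpha},
\]
where the last inequality is exactly the assumed exponent condition.

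For (iii), fix $\terms\subseteq V(\prm B)$ and, among all optimal Steiner trees in $B$ connecting $\terms$, choose $T$ minimizing $|T\setminus H|$; the claim is $T\subseteq H$. If not, pick $e\in T\setminus H$. Since $\Bb$ is a covering, some $B_i$ contains a face incident to $e$, so $\prm B_i$ encloses $e$, and $e\notin\prm B_i$ since $\prm B_i\subseteq H$. Let $X$ be the connected component containing $e$ of the subgraph of $T$ strictly enclosed by $\prm B_i$; then $X$ is a connector inside $B_i$ whose anchors form a subset of $V(\prm B_i)$. Applying condition (iii) recursively to $B_i$ yields a subgraph $D\subseteq H_i$ connecting those anchors with $|D|\le|X|$ (as $X$ itself connects them). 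Set $T'=(T\setminus X)\cup D$: this is connected, still connects $\terms$, and satisfies $|T'|\le|T|$ with strictly fewer edges outside $H$. If $T'$ contains a cycle, deleting one of its edges gives a connected subgraph connecting $\terms$ of size smaller than $|T|$, contradicting optimality of $T$; otherwise $T'$ is an optimal Steiner tree connecting $\terms$ with fewer edges outside $H$, contradicting the choice of $T$. Hence $T\subseteq H$.

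The step I expect to be the main obstacle is the exchange in (iii): one must be sure that deleting the strictly-enclosed component $X$ does not disconnect $T$ and that gluing in $D$ repairs it — this works precisely because the anchors of $X$ lie on $\prm B_i$, hence are not strictly enclosed and survive in $T\setminus X$, and $D\subseteq H_i$ reconnects exactly them. One should also double-check that a covering (rather than a partition) is enough here: the only facts used are that each edge of $T$ is enclosed by some $\prm B_i$ and that the perimeters $\prm B_i$ lie in $H$, never that the bricks are interior-disjoint. With these points in place, the induction closes and the uniform size bound $|H|\le C|\prm B|^{\alpha}$ is maintained.
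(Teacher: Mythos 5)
Your proposal is correct and follows essentially the same route as the paper's proof: parts (i) and (ii) via the same direct size computation, and part (iii) via the same exchange argument that picks an optimal Steiner tree $T$ minimizing $|T\setminus H|$, isolates the strictly-enclosed component $X$ of $T$ inside some $B_i$, and replaces it with a connector $D\subseteq H_i$ of no greater size. Your explicit handling of a potential cycle in $T'$ is a small refinement of a step the paper leaves implicit.
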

\begin{proof}
To see that $H$ satisfies condition (i), note that every edge of $\prm B$ is in the perimeter of some brick $B_i$, and that $\prm B_i\subseteq H_i$ for every $i=1,2,\ldots,p$. Therefore, $\prm B\subseteq H$.

To see that $H$ satisfies condition (ii), recall that $\Bb$ is $c$-short and that $|\prm B_i|\leq (1-\nice) \cdot |\prm B|$ for each $i=1,2,\ldots,p$. Therefore, $|\prm B_i|^\alpha \leq |\prm B_i|\cdot (1-\nice)^{\alpha-1}|\prm B|^{\alpha-1}$, and
\begin{eqnarray*}
|H| &\leq & \sum_{i=1}^p |H_i| \\
& \leq & C\cdot \sum_{i=1}^p|\prm B_i|^\alpha\\
& \leq & C\cdot (1-\nice)^{\alpha-1}|\prm B|^{\alpha-1}\cdot \sum_{i=1}^p |\prm B_i|\\
& \leq & c\cdot (1-\nice)^{\alpha-1} \cdot C \cdot |\prm B|^{\alpha} \\
& \leq & C \cdot |\prm B|^\alpha.
\end{eqnarray*}

Finally, to see that $H$ satisfies condition (iii), let $\terms\subseteq V(\prm B)$ be a set of terminals lying on the perimeter of $B$, and let $T$ be an optimal Steiner tree connecting $\terms$ in $B$ that contains a minimum number of edges that are not in $H$.
We claim that $T \subseteq H$. Assume the contrary, and let $e \in T \setminus H$.
Since each face of $B$ is contained in some brick of $\Bb$, there exists a brick $B_i$ such that $\prm B_i$ encloses $e$.
As $\prm B_i \subseteq H_i \subseteq H$, we infer $e \notin \prm B_i$.
Consider the subgraph $T \cap \intr B_i$ (i.e., the part of $T$ strictly enclosed by $\prm B_i$)
and let $X$  be the connected component of this subgraph that contains $e$.
Clearly, $X$ is a connector inside $B_i$.
Since $H_i$ is obtained by a recursive application of Theorem~\ref{thm:main},
there exists a connected subgraph $D \subseteq H_i$ that connects the anchors of $X$
and that satisfies $|D| \leq |X|$. Let $T' = (T \setminus X) \cup D$.
Observe that $|T'| \leq |T|$ and that $T'$ contains strictly less edges that are not in $H$ than $T$ does.
Since $D$ connects the anchors of $X$ in $H_i$, $T'$ still connects the anchors of $T$ in $B$, that is, $T'$ connects $\terms$.
However, $T$ is an optimal Steiner tree that connects $\terms$, and thus $T'$ is also an optimal Steiner tree that connects $\terms$. Since $T'$ contains strictly less edges that are not in $H$ than $T$,
this contradicts the choice of $T$. Hence, $T \subseteq H$.
\end{proof}

We may now sketch the first step of our kernelization algorithm of Theorem~\ref{thm:main};
a formal argument is provided in Section \ref{sec:finish}.
We run the algorithm of Theorem~\ref{thm:nice-testing} for the brick $B$
and some fixed small constant $\nice > 0$ (to be chosen later).
If the algorithm returns a short $\nice$-nice brick covering
$\Bb = \{B_1,B_2,\ldots,B_p\}$, then we recurse on each brick $B_i$, obtaining
a graph $H_i$ of size bounded polynomially in $|\prm B_i|$.
By Lemma \ref{lem:recursion},
the assumptions of shortness and $\nice$-niceness
yield a polynomial bound on $|\bigcup_{i=1}^p H_i|$ in terms of
$|\prm B|$, where the exponent $\alpha$ is chosen large enough so that $(1-\nice)^{\alpha-1}<\frac{1}{3}$.
If the algorithm of Theorem~\ref{thm:nice-testing} concluded that brick $B$ does not contain any short $\nice$-nice tree, then we proceed to the arguments in the
next sections with this assumption.

%!TEX root = pst-kernel.tex

\section{Carves and the core}\label{sec:core}

Let $B$ be a possibly edge-weighted brick.
We are now working with the assumption that $B$ does not contain any short $\nice$-nice tree for some $\nice > 0$.
In this section, we define the notion of carving a small portion of the brick, which will be a crucial technical ingredient in our further reasonings.
In particular, we formalize the intuition that if no short $\nice$-nice tree can be found, then $B$ contains a well-defined middle region, and each attempt of carving out some part of $B$ using a limited budget cannot affect this middle region. In the following, we use a constant $\delta\in (0,\frac{1}{2})$ to be determined later.

We start by formalizing what we mean by `carving'.

\begin{definition}
A \emph{$\delta$-carve} $L$ from a brick $B$ is a pair $(P,I)$, where $P$ (called the \emph{carvemark}) is a path in $B$ between two distinct vertices $a,b \in V(\prm B)$ of length at most $(\frac{1}{2}-\delta) \cdot \wei{\prm B}$, and $I$ (called the \emph{carvebase}) is a shortest of the two paths $\prm B[a,b], \prm B[b,a]$. If $P$ has only two common vertices with $\prm B$, i.e., $V(P)\cap V(\prm B)=\{a,b\}$, then the $\delta$-carve is {\emph{strict}}. The subgraph enclosed by the closed walk $P\cup I$ is called the {\emph{interior}} of a $\delta$-carve.
\end{definition}

Observe that if a $\delta$-carve $(P,I)$ is strict, then $P\cup I$ is a simple cycle and thus the interior of $(P,I)$ is a brick. We often identify a strict $\delta$-carve with this brick.

In the following lemma, we observe that if a brick does not admit any short $\nice$-nice trees, then the carvebases cannot be much longer than the carvemarks.

\begin{lemma}\label{lem:pluseps}
For any $\nice,\delta \in (0,\frac{1}{2})$, $\delta>\nice$, if $B$ admits no short $\nice$-nice tree, then the base $I$ of any $\delta$-carve $(P,I)$ in $B$ has length at most $\wei{P}+\nice \wei{\prm B}$.
\end{lemma}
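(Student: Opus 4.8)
The plan is to argue by contradiction: suppose $(P,I)$ is a $\delta$-carve with $\wei{I} > \wei{P} + \nice \wei{\prm B}$, and build from it a short $\nice$-nice tree in $B$, contradicting the assumption. The natural candidate for such a tree is simply $T := P$ itself, viewed as a tree with its two endpoints $a,b \in V(\prm B)$ as leaves (more precisely, the connector obtained from $P$; note $P$ is brickable since its endpoints lie on $\prm B$, although if $P$ is not strict we may first need to pass to a subpath or handle the finitely many faces of $\prm B \cup P$ carefully). First I would check $3$-shortness: since $\wei{P} \le (\tfrac{1}{2}-\delta)\wei{\prm B} < \wei{\prm B}$, the connector $P$ has weight at most $\wei{\prm B}$, hence is short by the observation recorded just before Theorem~\ref{thm:nice-testing}.

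The substance is the $\nice$-niceness. The connector $P$ partitions $B$ into subbricks; one of them, call it $B_{\mathrm{out}}$, is the one ``outside'' the carve, i.e., the subbrick whose perimeter is $P \cup \prm B[b,a]$ (the complement of the carvebase along $\prm B$), and the other subbricks tile the interior of the carve, each having perimeter contained in $P \cup I$. For the interior subbricks, each perimeter has weight at most $\wei{P} + \wei{I}$. Here is where the hypothesis bites: if $\wei{I} > \wei{P} + \nice\wei{\prm B}$ fails — i.e.\ if $\wei{I} \le \wei{P} + \nice\wei{\prm B}$, which is what we want to prove — then $\wei{P}+\wei{I} \le 2\wei{P} + \nice\wei{\prm B} \le (1-2\delta)\wei{\prm B} + \nice\wei{\prm B}$, and since $\delta > \nice$ this is at most $(1-\nice)\wei{\prm B}$, so the interior bricks are fine. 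So under the contradiction hypothesis $\wei{I} > \wei{P}+\nice\wei{\prm B}$ we instead need to look at the \emph{outer} subbrick: its perimeter has weight $\wei{P} + \wei{\prm B[b,a]} = \wei{P} + (\wei{\prm B} - \wei{I})$. Using $\wei{I} > \wei{P} + \nice\wei{\prm B}$, this is strictly less than $\wei{P} + \wei{\prm B} - \wei{P} - \nice\wei{\prm B} = (1-\nice)\wei{\prm B}$. Thus \emph{every} subbrick of the partition induced by $P$ has perimeter below $(1-\nice)\wei{\prm B}$, so $P$ is $\nice$-nice — and short — contradicting the standing assumption that $B$ admits no short $\nice$-nice tree.

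The main obstacle I anticipate is the brickability/non-strictness issue: the definition of $\delta$-carve does not require $P$ to be internally disjoint from $\prm B$, so $\prm B \cup P$ need not have all inner faces bounded by simple cycles, and $P$ might not be a legal connector for the niceness argument directly. I would handle this by a standard cleanup: either replace $P$ by a suitable subpath between two consecutive $\prm B$-vertices along $P$ chosen so that the carvebase weight drops by at most the ``saved'' portion of $P$ (keeping the key inequality), or observe that the faces of $\prm B \cup P$ that fail to be bricks can be subdivided trivially without increasing any perimeter. A second, minor point is confirming that the outer region is genuinely one of the subbricks (it is, since $\prm B[b,a] \cup P$ is a closed walk enclosing all faces not enclosed by the carve), and that $\wei{\prm B[b,a]} = \wei{\prm B} - \wei{I}$, which holds because $I$ is one of the two arcs $\prm B[a,b], \prm B[b,a]$ and the two arcs partition the edges of $\prm B$. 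With these technicalities dispatched, the inequality chain above is routine.
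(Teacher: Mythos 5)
Your approach matches the paper's: assume for contradiction that $\wei{I} > \wei{P}+\nice\wei{\prm B}$, and show that $P$ itself would then be a short $\nice$-nice tree. The shortness claim and the outer-side bound (bricks whose perimeter lies in $P\cup \prm B[b,a]$) are handled correctly. However, the argument for the bricks whose perimeter lies in $P\cup I$ has a genuine gap: you bound $\wei{P}+\wei{I}$ by invoking $\wei{I}\le\wei{P}+\nice\wei{\prm B}$, which is the \emph{negation} of the contradiction hypothesis you are working under. Inside the contradiction case that inequality is unavailable, so the estimate $\wei{P}+\wei{I}\le 2\wei{P}+\nice\wei{\prm B}$ never gets established, and the ``thus every subbrick has small perimeter'' conclusion does not follow from what you wrote.

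The missing ingredient is the constraint, built into the definition of a $\delta$-carve, that the carvebase $I$ is the \emph{shorter} of the two $a$--$b$ arcs of $\prm B$, so $\wei{I}\le\frac{1}{2}\wei{\prm B}$. Combined with $\wei{P}\le(\frac{1}{2}-\delta)\wei{\prm B}$ and $\delta>\nice$, this gives $\wei{P}+\wei{I}\le(1-\delta)\wei{\prm B}<(1-\nice)\wei{\prm B}$ unconditionally, which is precisely the bound the paper uses and which closes the gap. (An alternative closing of the gap: under the contradiction hypothesis $\wei{P}<\wei{I}-\nice\wei{\prm B}$, so $\wei{P}+\wei{I}<2\wei{I}-\nice\wei{\prm B}\le(1-\nice)\wei{\prm B}$; again $\wei{I}\le\frac12\wei{\prm B}$ is essential.) Two smaller remarks: your brickability concern is unnecessary, since a path with both endpoints on $\prm B$ is a tree with all leaves on $\prm B$, hence brickable by the observation in Section~\ref{sec:bricks}; and the picture of exactly one outer subbrick with the rest tiling the carve's interior is inaccurate when $P$ meets $\prm B$ at intermediate vertices (the outer side may also split), though this does not affect the perimeter bound since each subbrick's perimeter remains contained in $P\cup I$ or $P\cup \prm B[b,a]$.
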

\begin{proof}
Consider a $\delta$-carve $L=(P,I)$ with the carvemark $P$ between vertices $a,b$, such that $I=\prm B[a,b]$. Let $I'=\prm B[b,a]$. Assume on the contrary that $\wei{I}>\wei{P}+\nice \wei{\prm B}$. Then $\wei{I}\in (\wei{P}+\nice \wei{\prm B},\frac{1}{2}\wei{\prm B}]$. Hence, $\wei{I'}\in [\frac{1}{2}\wei{\prm B},(1-\nice)\wei{\prm B}-\wei{P})$. Clearly, $P$ is a brickable connector in $B$. Let $\Bb$ be the corresponding brick partition of $B$. Note that each brick $B'\in \Bb$ has its perimeter contained entirely in either $P\cup I$ or $P\cup I'$. Since
$$\wei{P\cup I}, \wei{P\cup I'} \leq \wei{P} + (1-\nice)\wei{\prm B}-\wei{P}\leq (1-\nice)\wei{\prm B},$$
$B'$ has perimeter at most $(1-\nice)\wei{\prm B}$. As $\wei{P} \leq \wei{\prm B}$, $P$ is a short $\nice$-nice tree, a contradiction.
\end{proof}

By applying Lemma~\ref{lem:pluseps} to the maximum length of a carvemark, that is, $(\frac{1}{2}-\delta) \cdot \wei{\prm B}$, we obtain the following corollary.

\begin{corollary}\label{cor:small-base}
For any $\nice \in (0,\frac{1}{4})$ and any $\delta \in [2\nice, \frac{1}{2})$, if $B$ admits no short $\nice$-nice tree, then the base of any $\delta$-carve $L=(P,I)$ in $B$ has length at most $(\frac{1}{2} - \frac{\delta}{2}) \cdot \wei{\prm B}$. In particular, $\wei{P}+\wei{I}\leq (1-\frac{3}{2}\delta)\wei{\prm B}< (1-\nice)\wei{\prm B}$.
\end{corollary}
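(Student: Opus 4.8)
The corollary is essentially a direct substitution into Lemma~\ref{lem:pluseps}, so the plan is short. First I would check that the hypotheses of Lemma~\ref{lem:pluseps} are met: since $\nice \in (0,\tfrac14)$ and $\delta \in [2\nice,\tfrac12)$, we certainly have $\nice,\delta \in (0,\tfrac12)$ and $\delta \geq 2\nice > \nice$, so Lemma~\ref{lem:pluseps} applies to every $\delta$-carve $L=(P,I)$ in $B$. That lemma gives $\wei{I} \le \wei{P} + \nice\wei{\prm B}$.

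Next I would plug in the definitional bound on the carvemark. By the definition of a $\delta$-carve, $\wei{P} \le (\tfrac12 - \delta)\wei{\prm B}$. Combining this with the inequality from Lemma~\ref{lem:pluseps} yields
\[
\wei{I} \;\le\; \left(\tfrac12 - \delta\right)\wei{\prm B} + \nice\wei{\prm B} \;=\; \left(\tfrac12 - \delta + \nice\right)\wei{\prm B}.
\]
Since $\nice \le \tfrac{\delta}{2}$ (because $\delta \ge 2\nice$), we have $\tfrac12 - \delta + \nice \le \tfrac12 - \delta + \tfrac{\delta}{2} = \tfrac12 - \tfrac{\delta}{2}$, which gives the claimed bound $\wei{I} \le (\tfrac12 - \tfrac{\delta}{2})\wei{\prm B}$.

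For the last sentence, I would simply add the two bounds: $\wei{P} + \wei{I} \le (\tfrac12 - \delta)\wei{\prm B} + (\tfrac12 - \tfrac{\delta}{2})\wei{\prm B} = (1 - \tfrac32\delta)\wei{\prm B}$. Finally, $1 - \tfrac32\delta < 1 - \delta \le 1 - 2\nice < 1 - \nice$ (using $\delta > 0$ for the first strict inequality and $\delta \ge 2\nice$ for the second), so $\wei{P}+\wei{I} < (1-\nice)\wei{\prm B}$, as required. There is no real obstacle here — the only thing to be slightly careful about is tracking which inequalities are strict and confirming that the arithmetic $\nice \le \delta/2$ follows from the stated ranges; everything else is a one-line substitution.
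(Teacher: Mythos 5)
Your proof is correct and follows the paper's approach exactly: the corollary is obtained by applying Lemma~\ref{lem:pluseps} together with the definitional bound $\wei{P} \leq (\frac12 - \delta)\wei{\prm B}$ on the carvemark, followed by elementary arithmetic using $\delta \geq 2\nice > 0$. The only difference is that you spell out the arithmetic explicitly, whereas the paper treats it as an immediate consequence.
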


Note that Corollary~\ref{cor:small-base} implies that, under its assumptions, the base of a carve is unique. Moreover, we can make
the following observation. Recall that a tree $T$ in $B$ is brickable if and only if all its leaves lie on $\prm B$.

\begin{lemma}\label{lem:short-connector}
For any $\nice \in (0,\frac{1}{4})$ and any $\delta \in [2\nice, \frac{1}{2})$, if $B$ admits no short $\nice$-nice tree, then
for any brickable short tree $T$ with diameter not bigger than $(\frac{1}{2}-\delta)\cdot \wei{\prm B}$ there exists an interval $I_T$ of $\prm B$
of length at most $(\frac{1}{2} -\frac{\delta}{2})\wei{\prm B}$ such that all anchors of $T$ are in $I_T$.
\end{lemma}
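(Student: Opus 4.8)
The plan is to argue by contradiction. Suppose that no interval of $\prm B$ of length at most $(\tfrac12-\tfrac{\delta}{2})\wei{\prm B}$ contains all the anchors of $T$; I will show that then $T$ is itself a short $\nice$-nice tree, contradicting the hypothesis that $B$ admits no such tree. (This is, in fact, the ``first observation'' alluded to in the sketch of the Core Theorem.)

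The first step is to extract a consequence of this assumption. List the anchors $s_1,\dots,s_m$ of $T$ in cyclic order along $\prm B$ and set $A_i=\prm B[s_i,s_{i+1}]$ (indices modulo $m$), so that $A_1,\dots,A_m$ partition the edges of $\prm B$. If some $A_i$ had length at least $(\tfrac12+\tfrac{\delta}{2})\wei{\prm B}$, its complementary arc would have length at most $(\tfrac12-\tfrac{\delta}{2})\wei{\prm B}$, and since $s_i$ and $s_{i+1}$ are consecutive anchors no anchor lies in the interior of $A_i$; hence the complementary arc would contain every anchor, contradicting the assumption. Therefore every arc of $\prm B$ between two consecutive anchors has length strictly less than $(\tfrac12+\tfrac{\delta}{2})\wei{\prm B}$.

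Next I would examine the brick partition $\Bb$ induced by the brickable tree $T$ and bound the perimeter of an arbitrary brick $B'\in\Bb$. Walking along the simple cycle $\prm B'$, the maximal runs of edges lying on $\prm B\setminus T$ and the maximal runs of edges lying on $T$ alternate; the crucial structural claim is that, because $T$ is connected and acyclic, there is exactly one run of each kind, so that $\prm B'=R\cup Q$ where $R$ is a subpath of $\prm B$ and $Q$ is a path in $T$. The endpoints of $R$ are anchors, and no interior vertex $v$ of $R$ can be an anchor: such a $v$ would carry an edge of $T$, which can neither be one of its two perimeter edges (those lie in $\prm B\setminus T$) nor leave the disc into the interior, since $B'$ is an inner face incident to both perimeter edges at $v$. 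Hence $R$ is an arc between two consecutive anchors, so $\wei R<(\tfrac12+\tfrac{\delta}{2})\wei{\prm B}$ by the previous paragraph, while $Q$, being a path in the tree $T$, satisfies $\wei Q\le\mathrm{diam}(T)\le(\tfrac12-\delta)\wei{\prm B}$. Adding these up, $\wei{\prm B'}<(1-\tfrac{\delta}{2})\wei{\prm B}\le(1-\nice)\wei{\prm B}$, where the last inequality uses $\delta\ge 2\nice$.

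Since this bound holds for every brick of $\Bb$, the tree $T$ is $\nice$-nice; it is also short by hypothesis, so $T$ is a short $\nice$-nice tree in $B$ --- the desired contradiction. Consequently the interval $I_T$ exists, proving the lemma. The main obstacle is the structural claim that $\prm B'$ decomposes into a single arc of $\prm B$ and a single path of $T$: while intuitively clear from planarity and the acyclicity of $T$, a rigorous proof needs care, in particular when $T$ shares edges with $\prm B$, and one can invoke Corollary~\ref{cor:small-base} on the length of carvebases to rule out the degenerate configurations that arise in that situation.
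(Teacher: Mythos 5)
Your proof is correct and takes the same approach as the paper, just in contrapositive form: the paper starts from the fact that $T$ is short but (by hypothesis) not $\nice$-nice, takes the one induced brick $B'$ with $\wei{\prm B'}>(1-\nice)\wei{\prm B}$, observes that $\prm B'\cap T$ has length at most $\mathrm{diam}(T)\le(\tfrac12-\delta)\wei{\prm B}$, and concludes that the complementary arc $I_T=\prm B\setminus(\prm B'\setminus T)$ is short and contains every anchor. The decomposition $\prm B'=R\cup Q$ that you rightly flag as the crux is used equally silently by the paper; it is a purely combinatorial property of brickable trees independent of the no-short-$\nice$-nice-tree hypothesis (an Euler-formula count shows the number of bounded faces of $\prm B\cup T$ equals the number of arcs of $\prm B\setminus T$ between consecutive anchors, so each face can claim exactly one such arc and no face can have zero, since $T$ is acyclic), so invoking Corollary~\ref{cor:small-base} to establish it is neither necessary nor the right tool.
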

\begin{proof}
Observe that $T$ is short, but not $\nice$-nice. Hence, there exists a brick $B'$ induced by $T$ of perimeter bigger than $(1-\nice)\wei{\prm B}$.
The intersection of $\prm B'$ with $T$ cannot be longer than the diameter of $T$, so $\prm B' \setminus T$, which is
an interval $I$ on $\prm B$, has length at least
$$\left(1-\nice\right)\wei{\prm B} - \left(\frac{1}{2}-\delta\right)\wei{\prm B} = \left(\frac{1}{2}-\nice + \delta\right)
\wei{\prm B} \ge \left(\frac{1}{2}+\frac{\delta}{2}\right)\wei{\prm B}.$$
All other anchors of $T$ need to be contained in the interval $I_T = \prm B \setminus I$,
which is of length at most $(\frac{1}{2} -\frac{\delta}{2})\wei{\prm B}$.
\end{proof}

\begin{figure}[h]
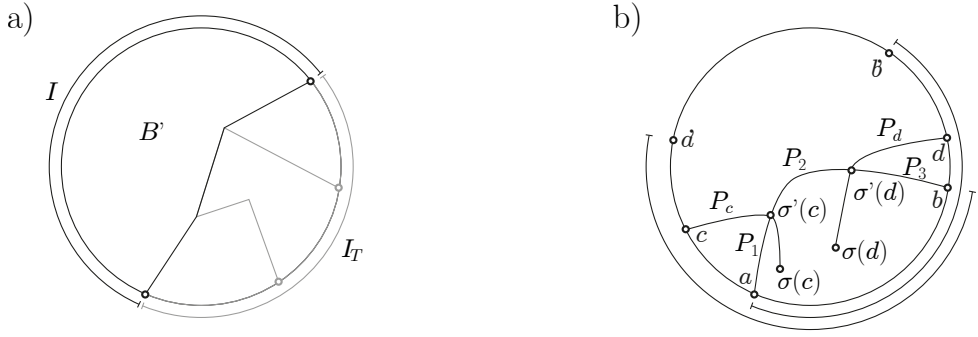

\centering
\svg{0.8\textwidth}{connectors}
\caption{Panel (a) illustrates the proof of Lemma~\ref{lem:short-connector}, whereas
panel (b) refers to Claim~\ref{cl:multi-projection}.}
\label{fig:connectors}
\end{figure}

We now proceed to defining the region that can be carved out by some $\delta$-carve.

\begin{definition}
A subgraph $F$ of $B$ can be \emph{$\delta$-carved} if there is a $\delta$-carve $L$ of $B$ such that $F$ is also a subgraph of the interior of $L$.
\end{definition}

In particular, a vertex, edge, or face of $B$ can be $\delta$-carved if there is a $\delta$-carve $L$ of $B$ that encloses this vertex, edge, or face. One can also define a similar notion for strict $\delta$-carves. The following lemma shows that in the case that is of our interest, the two notions coincide.

\begin{lemma}\label{lem:strict-equiv}
For any $\nice \in (0,\frac{1}{4})$ and any $\delta \in [2\nice, \frac{1}{2})$, if a brick $B$ admits no short $\nice$-nice tree, then a face $f$ of $B$ can be $\delta$-carved if and only if it can be strictly $\delta$-carved.
\end{lemma}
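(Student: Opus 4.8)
The plan is to prove the nontrivial direction: if a face $f$ of $B$ can be $\delta$-carved by some (not necessarily strict) $\delta$-carve $L = (P, I)$, then it can also be carved by a strict one. The converse is immediate, since every strict $\delta$-carve is in particular a $\delta$-carve. So fix a $\delta$-carve $L = (P, I)$ whose interior contains $f$, where $P$ is a path between $a, b \in V(\prm B)$ and $I$ is the shorter of $\prm B[a,b], \prm B[b,a]$; say $I = \prm B[a,b]$.

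\medskip

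First I would reduce the carvemark so that it touches $\prm B$ only at its endpoints. The obstruction to strictness is that $P$ may share internal vertices with $\prm B$. Let $x$ and $y$ be the first and last vertices of $P$ (in the natural order along $P$ from $a$ to $b$) that lie on $\prm B$, other than $a$ and $b$ themselves; more carefully, consider all maximal subpaths of $P$ whose interior is disjoint from $\prm B$. The face $f$ is enclosed by the closed walk $P \cup I$; since $f$ is a single inner face, it lies inside exactly one of the ``pockets'' cut off by these subpaths together with pieces of $\prm B$. Concretely, let $P'$ be the maximal subpath of $P$ with $V(P') \cap V(\prm B) = \{x', y'\}$ being its two endpoints, chosen so that $f$ is enclosed by $P' \cup \prm B[x',y']$ (or $P' \cup \prm B[y',x']$, whichever is relevant); such a subpath exists because walking along $P$ the curve alternately touches $\prm B$ and returns to the interior, and $f$ must be trapped in one such excursion. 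Then $P'$ has length at most $\wei{P} \leq (\frac12 - \delta)\wei{\prm B}$, and $V(P') \cap V(\prm B) = \{x', y'\}$, so $P'$ is a candidate carvemark for a \emph{strict} $\delta$-carve, provided its carvebase is well-behaved.

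\medskip

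The remaining point is to verify that the pair $(P', I')$, where $I'$ is the shorter of the two arcs $\prm B[x',y'], \prm B[y',x']$, is genuinely a strict $\delta$-carve whose interior contains $f$. The length condition on the carvemark is satisfied since $\wei{P'} \leq \wei{P} \leq (\frac12-\delta)\wei{\prm B}$. By Corollary~\ref{cor:small-base}, since $B$ admits no short $\nice$-nice tree, the base $I'$ of this candidate carve would have length at most $(\frac12 - \frac{\delta}{2})\wei{\prm B} < \frac12 \wei{\prm B}$; this, combined with Lemma~\ref{lem:pluseps}, pins down which of the two arcs $\prm B[x',y'], \prm B[y',x']$ is the base and guarantees uniqueness, so the definition is unambiguous. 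It remains to check that $f$ is enclosed by $P' \cup I'$. Here the key is that $P'$ was selected precisely as the excursion of $P$ trapping $f$: the region enclosed by $P' \cup \prm B[x',y']$ (with the appropriate orientation) is a sub-region of the interior of $L$ that still contains $f$, and if the side of $P'$ containing $f$ is bounded by the \emph{longer} arc of $\prm B$, we use Corollary~\ref{cor:small-base} applied to $P'$ to argue this cannot happen — the complementary (shorter) arc is the one forming the carvebase, and $f$ must lie on the carvebase side because otherwise the enclosed region of $P' \cup I'$ would be on the wrong side, contradicting that $f$ lies in the interior of the original carve $L$. I would spell this out via a short planarity/Jordan-curve argument: the closed curve $P' \cup I'$ separates the plane, $f$ lies in the interior of $P \cup I$, and the excursions of $P$ peeled off are disjoint from $f$, so $f$ survives in the innermost pocket, which is exactly the strict carve $(P', I')$.

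\medskip

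\textbf{Main obstacle.} The delicate part is the bookkeeping of \emph{which} side of the refined carvemark $P'$ contains the face $f$, and arguing that the shorter arc $I'$ is indeed the one bounding that side. This is a pure planarity argument, but it requires care because $P'$ being a sub-arc of $P$ does not automatically mean its ``natural'' carvebase lies on the same side as $f$; one genuinely needs Corollary~\ref{cor:small-base} (no long carvebases) to rule out the degenerate configuration where $f$ would sit on the long-arc side. Once that is settled, the length bounds are immediate and the equivalence follows.
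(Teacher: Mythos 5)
The student's proposal and the paper diverge at the exact point the student flags as the ``main obstacle,'' and I think the gap is genuine: the argument for \emph{which} side of $P'$ contains $f$ is not supplied by the tools cited.

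Concretely, the proposal maintains that for some maximal subpath $P'$ of $P$ (with interior disjoint from $\prm B$, endpoints $x',y'$), the face $f$ lies on the side of $P'$ bounded by the \emph{shorter} arc $\prm B$-arc between $x'$ and $y'$. This is the whole content of the lemma, and the justification offered is a wave at Corollary~\ref{cor:small-base}. But Corollary~\ref{cor:small-base} only asserts that the carvebase of a $\delta$-carve is short once you already know which arc is the carvebase; it says nothing about which side of $P'$ the face $f$ sits on. In fact, for a fixed maximal subpath it is entirely possible that $f$ is on the long-arc side: take $P$ with a single internal touch $c$ on the long arc $\prm B[b,a]$ with $\wei{\prm B[c,a]}$ small, so that $f$ (which lies in the ``middle'' region of the carve) sits on the $\prm B[a,c]$-side of $P[a,c]$, which is the long arc. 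The student would then have to try a different subpath, and proving that some subpath works requires combining Lemma~\ref{lem:pluseps} across the original carve and all the candidate sub-carves, not just invoking Corollary~\ref{cor:small-base} on $P'$ in isolation. That arithmetic is nontrivial (it involves the constraint $\delta \geq 2\nice$ in an essential way) and is not attempted. Moreover, with several internal touches, the regions cut out by the various $P_i$ interact, and the claim ``$f$ lies inside exactly one of the pockets'' is not clearly defined, let alone proved.

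The paper's proof avoids this difficulty entirely by taking the $\delta$-carve enclosing $f$ that minimises $|P\cap V(\prm B)|$ and reducing one internal touch at a time, with a case split: if the extra touch $c$ lies \emph{on} the carvebase, the carve splits into two smaller carves, one of which still encloses $f$ (restriction); if $c$ lies \emph{off} the carvebase, Lemma~\ref{lem:short-connector} shows one of $\prm B[a,c]$ or $\prm B[c,b]$ is a valid carvebase, and the corresponding sub-carve encloses a \emph{superset} of the original interior (enlargement), so $f$ automatically stays inside. The enlargement step in the second case is what makes the argument go through without ever tracking which side of a given subpath $f$ lies on --- precisely the question the proposal leaves open. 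You would need to either adopt the minimality-plus-enlargement scheme, or carry out the missing arithmetic with Lemma~\ref{lem:pluseps} to make the restriction approach rigorous.
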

\begin{proof}
By definition, a face that can be strictly $\delta$-carved can also be $\delta$-carved. Therefore, we proceed to prove the converse. Let $f$ be a face enclosed by a $\delta$-carve $L=(P,I)$, where $I=\prm B[a,b]$ for two vertices $a,b \in V(\prm B)$. We may assume that $|P \cap V(\prm B)|$ is minimum among all $\delta$-carves that $\delta$-carve $f$.

We prove that $|P\cap V(\prm B)|=2$, and thus $L$ is strict. Suppose for sake of contradiction that $|P \cap V(\prm B)| > 2$, and let $c$ be any internal vertex of $P$ that lies on $\prm B$. We consider two cases.

In the first case, suppose that $c\in V(I)$. Observe that $L_1=(P[a,c],\prm B[a,c])$ and $L_2=(P[c,b],\prm B[c,b])$ are both $\delta$-carves, because $\wei{\prm B[a,c]},\wei{\prm B[c,b]}<\wei{\prm B[a,b]}\leq \frac{1}{2}\wei{\prm B}$ and also $\wei{P[a,c]},\wei{P[c,b]}\leq \wei{P}\leq (\frac{1}{2}-\delta) \cdot \wei{\prm B}$. Moreover, at least one of these $\delta$-carves encloses $f$. Since the carvemarks of $L_1$ and $L_2$ contain less vertices of $\prm B$ than $L$ does, this contradicts the choice of $L$.

In the second case, suppose that $c\notin V(I)$. Observe that $P$ is a brickable tree of diameter and size at most $(\frac{1}{2}-\delta) \cdot \wei{\prm B}$
that connects three anchors $a$, $b$, and $c$. Consequently, let $I_P$ be the interval whose existence is asserted by Lemma~\ref{lem:short-connector} for $P$.
As $a,b \in V(I_P)$ and $\wei{\prm B[b,a]} > \wei{\prm B}/2$ by Corollary~\ref{cor:small-base}, we have that $\prm B[a,b] \subseteq I_P$.
Therefore, either $\prm B[a,c]$ or $\prm B[c,b]$ is contained in $I_P$, and thus has length at most $(\frac{1}{2} -\frac{\delta}{2})\wei{\prm B}$.
Without loss of generality, assume that it is $\prm B[a,c]$.
In this case, $L'=(P[a,c],\prm B[a,c])$ is a $\delta$-carve that encloses $f$, because it encloses a superset of the faces enclosed by $(P,I)$.
Since the carvemark of $L'$ contains less vertices of $\prm B$ than $L$ does, we contradict the choice of $L$.
%
%
%Let again $L_1=(P[a,c],\prm B[a,c])$ and $L_2=(P[c,b],\prm B[c,b])$. We claim that at least one of these pairs is a $\delta$-carve. Observe that this will be a contradiction, as this $\delta$-carve would enclose a supergraph of what $L$ enclosed, while containing strictly less vertices of $\prm B$ on the carvemark. Observe that it still holds that $\wei{P[a,c]},\wei{P[c,b]}\leq \wei{P}\leq (\frac{1}{2}-\delta) \cdot \wei{\prm B}$. Assume then that neither $L_1$ nor $L_2$ is a $\delta$-carve; hence, it must necessarily hold that $\wei{\prm B[a,c]}, \wei{\prm B[c,b]}>\frac{1}{2}\wei{\prm B}$. Consequently, $\wei{\prm B[c,a]},\wei{\prm B[b,c]}\leq \frac{1}{2}\wei{\prm B}$. Now recall that $P$ is a brickable connector in $B$, and let $\Bb$ be the corresponding brick partition of $B$. Again, since $a,b\in V(\prm B)$ and $\wei{P}\leq \wei{\prm B}$, $P$ would be a short $\nice$-nice tree in $B$ unless at least one of the bricks of $\Bb$ had perimeter larger than $(1-\nice)\wei{B}$. However, each of these bricks have perimeter entirely contained in $P\cup I$, $P\cup \prm B[b,c]$, or $P\cup \prm B[c,a]$, and
%$$\wei{P\cup I}, \wei{P\cup \prm B[b,c]}, \wei{P\cup \prm B[c,a]}\leq \wei{P}+\frac{1}{2}\wei{\prm B}\leq (1-\delta)\wei{\prm B}<(1-\nice)\wei{\prm B},$$
%which is a contradiction.
\end{proof}

%Since $\wei{P} \leq (\frac{1}{2}-\delta) \cdot \wei{\prm B} < \wei{\prm B}$, $P$ constitutes an $\nice$-nice tree in $B$, a contradiction.
%Suppose that $\wei{\prm B[a,b]}$ forms the base of $B'$ and that it has length greater than $(\frac{1}{2} - \frac{\delta}{2}) \cdot \wei{\prm B}$.
%Observe that $P$ is a brickable connector that decomposes $B$ into two subbricks: $B'$ and another brick with perimeter $P\cup \prm B[b,a]$, which we denote by $B''$.
%Since $\wei{\prm B[a,b]} \leq \wei{\prm B[b,a]}$ by the definition of a carve, $\wei{\prm B[b,a]} > (\frac{1}{2} - \frac{\delta}{2}) \cdot \wei{\prm B}$.
%This implies that $\wei{\prm B[a,b]}, \wei{\prm B[b,a]} < (\frac{1}{2} + \frac{\delta}{2}) \cdot \wei{\prm B}$.
%Then

%The intuition is that, in a brick that does not admit any short $\nice$-nice tree for some $\nice\in (0,1/4)$, parts of the brick that can be $\delta$-carved can be seen as areas that are located near the perimeter and not in the middle. Hence, the idea is to prove that there exists some part of $B$ that cannot be $\delta$-carved. This is precisely the statement of the next theorem, which is the main result of this section.
%EJ: we already gave the above intuition at the beginning of the section. Therefore, a shorter paragraph should suffice here.
We now present the main result of this section: there is a middle region of $B$ that cannot be carved out of $B$ using a limited budget, \ie by a $\delta$-carve for some appropriate choice of $\delta$.

\begin{theorem}[Core Theorem]\label{thm:core}
For any $\nice \in (0,\frac{1}{4})$ and any $\delta \in [2\nice, \frac{1}{2})$, if $B$ has no short $\nice$-nice tree, then there exists a face of $B$ that cannot be $\delta$-carved. Moreover, such a face can be found in $O(|B|)$ time.
\end{theorem}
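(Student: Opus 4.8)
The plan is to argue by contradiction using a topological retraction argument, following the outline already sketched for Theorem~\ref{thm-over:core} but now with the tools of this section (Lemmas~\ref{lem:pluseps}--\ref{lem:strict-equiv}) available. Suppose every face of $B$ can be $\delta$-carved. By Lemma~\ref{lem:strict-equiv}, every face can in fact be \emph{strictly} $\delta$-carved. The aim is to build a continuous retraction of the closed disc bounded by (the curve traced by) $\prm B$ onto its boundary circle, contradicting Borsuk's non-retraction theorem.

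First I would fix, for every vertex $v \in V(B)$, a vertex $\proj(v) \in V(\prm B)$ that is closest to $v$, and fix a shortest path $P_v$ from $v$ to $\proj(v)$; extend $\proj$ to each edge $uv$ of $B$ by mapping it onto the shorter of the two subpaths $\prm B[\proj(u),\proj(v)]$, $\prm B[\proj(v),\proj(u)]$ of $\prm B$. The key geometric estimate is: if $v$ and $u$ both lie in the interior of some $\delta$-carve $(P,I)$, then $\proj(u)$ and $\proj(v)$ lie together on some interval of $\prm B$ of length at most $(\frac{1}{2}-\frac{\delta}{2})\wei{\prm B}$, and moreover that interval can be taken to contain $V(P)\cap V(\prm B)$ and no point of $\prm B$ ``antipodal'' to a point of $V(P)\cap V(\prm B)$. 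To see this, let $v'$ (resp.\ $u'$) be the last vertex of $P_v$ (resp.\ $P_u$) lying on $P$, and form the tree $T := P \cup P_v[v',\proj(v)] \cup P_u[u',\proj(u)]$. Since shortest-path subpaths to the perimeter cannot be longer than the portion of $P$ they emanate from (otherwise one could reroute and shorten, using that $v', u' \in V(P)$), the diameter of $T$ is at most $\wei{P} \le (\frac{1}{2}-\delta)\wei{\prm B}$, and $T$ has at most four leaves, so $\wei{T} \le 2\,\mathrm{diam}(T)$, making $T$ a short brickable tree. By Lemma~\ref{lem:short-connector}, all anchors of $T$ — which include $\proj(v)$, $\proj(u)$, and every vertex of $V(P)\cap V(\prm B)$ — lie on an interval $I_T$ of $\prm B$ of length at most $(\frac{1}{2}-\frac{\delta}{2})\wei{\prm B}$, which is the claimed estimate.

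Next I would use this to extend $\proj$ over the faces. For a face $f$ strictly $\delta$-carved by $(P,I)$, all vertices and edges of $f$ lie in the interior of $(P,I)$, so by the estimate above the images under $\proj$ of all edges of $\partial f$ lie in a common interval of $\prm B$ of length strictly less than $\wei{\prm B}$ — in particular, the image of $\partial f$ misses at least one point of $\prm B$, hence is a contractible (simply connected) subspace of the circle $\prm B$. Therefore the map $\proj$ restricted to the topological boundary of $f$ — which is a loop in the interval — can be filled in continuously over the (disc-homeomorphic) interior of $f$, still landing inside that interval, hence inside $\prm B$. Doing this for every face (every face being strictly $\delta$-carvable) yields a continuous map from the whole closed disc bounded by $\prm B$ into $\prm B$ that is the identity on $\prm B$, i.e.\ a retraction of a disc onto its boundary — contradicting Borsuk's theorem. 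Hence some face cannot be $\delta$-carved; call it $\coreface$.

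The main obstacle, as in the overview, is making the ``fill in $\proj$ over each face continuously'' step rigorous and globally consistent: one must check that the piecewise definitions on edges shared by two faces agree, that each chosen interval genuinely contains the image of the whole boundary walk of $f$ (this is where the ``no antipodal point of $V(P)\cap V(\prm B)$'' refinement of Lemma~\ref{lem:short-connector} is essential, to rule out the image wrapping all the way around the circle), and that the resulting global map is continuous on the closed disc. I would handle agreement on shared edges by defining $\proj$ on the $1$-skeleton first (as above, independent of any carve), and only then extending face-by-face; continuity of the extension over each face follows from the interval being a simply connected (indeed contractible, locally nice) subset of $S^1$. For the algorithmic claim, $\proj$ and the $P_v$'s are computed by a single BFS from $\prm B$ in $O(|B|)$ time, and to locate a non-$\delta$-carvable face it suffices, by Lemma~\ref{lem:strict-equiv} and the boundedness of carves given by Corollary~\ref{cor:small-base}, to run the linear-time carve-enumeration / mountain-range-style procedure already developed (and referenced in Theorem~\ref{thm-over:core}) and return any face it fails to cover; the existence just proved guarantees such a face is found.
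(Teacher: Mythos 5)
Your existential argument is essentially the paper's (a Borsuk retraction via the perimeter projection $\proj$), and in fact tracks the version sketched in the paper's overview more closely than the paper's detailed proof: where the paper builds a single interval $I_L$ containing all of $\proj(V(B'))$ for a fixed strict carve, you argue per boundary edge of $f$ and observe that every such image avoids the point antipodal to the carvemark endpoint $a$. Both routes work; yours trades a Helly-type consolidation for an ``avoids one fixed point, hence contractible'' argument, which is slightly lighter. Two issues, one minor and one genuine.

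The minor one: your construction of the auxiliary tree $T := P \cup P_v[v',\proj(v)] \cup P_u[u',\proj(u)]$ requires $v'$, the last vertex of $P_v$ on $P$, to exist --- i.e.\ $P_v$ must cross $P$. This can fail when $\proj(v)$ lies on the carvebase $I$ itself, in which case $P_v$ may travel directly to $I$ without touching $P$. The paper sidesteps this by treating the set $D := \proj(V(B'))\setminus V(\prm B[a,b])$ and observing that for $c\in D$ the projection path must cross $P$; vertices mapped onto the carvebase are then absorbed into the final interval for free. Your proposal should separate out this case explicitly. You also implicitly need the $P_v$'s to form a laminar (shortest-path-tree) family so that $T$ really is a tree; the paper enforces this in the construction of $\Proj$, and you should too rather than picking the $P_v$'s arbitrarily.

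The genuine gap is the algorithmic claim. There is no ``linear-time carve-enumeration / mountain-range-style procedure'' to invoke: the mountain-range machinery (Theorem~\ref{thm:mountain-range}) is developed only in the next section, applies to $\delta$-mountains with a \emph{fixed} pair of endpoints $l,r$, and would anyway cost $\Oh(|\prm B|^2|B|)$ to sweep over all endpoint pairs --- far from $O(|B|)$, and it still wouldn't account for carves that are not mountains. Moreover Theorem~\ref{thm:ananas} depends on the Core Theorem, so you cannot use that machinery here without circularity. The correct algorithm is the one implicit in your own existential argument: after computing $\proj$ by a single linear-time shortest-path computation from a super-terminal attached to $\prm B$, scan the faces of $B$ and, for each face $f$, check whether the images $\proj(V(\partial f))$ fit inside an interval of $\prm B$ of length at most $(\frac{1}{2}-\frac{\delta}{2})\wei{\prm B}$. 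The contrapositive of your key estimate (equivalently, Claim~\ref{cl:multi-projection}) shows any face failing this check cannot be $\delta$-carved, and the retraction argument shows some face must fail, since otherwise $\extproj$ would be constructible. That check is a linear pass over all face boundaries, giving $O(|B|)$ overall.
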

\begin{proof}
We first prove the existential statement, and then show how the proof can be made algorithmic.

%Before we start the proof, let us observe that without loss of generality we may assume that all the inner faces of $B$ are triangles. Indeed, let $\bar{B}$ be any supergraph of $B$ where every inner face of $B$ has been triangulated; note that $\prm B=\prm \bar{B}$. Clearly, if a face $M$ of $B$ could be $\delta$-carved in $B$, then the same $\delta$-carve also $\delta$-carves every face of $\bar{B}$ contained in $M$. Hence, if we prove that there is a face $\bar{M}$ in $\bar{B}$ that cannot be $\delta$-carved, then the face $M$ of $B$ that contains $\bar{M}$ also cannot be $\delta$-carved, and we will be done.

%We proceed with the assumption that all the faces of $B$ are triangular, so in particular the areas enclosed by them are homeomorphic to closed discs. For the sake of contradiction, assume that every face of $B$ can be $\delta$-carved.

Define maps $v\to \proj(v)$ and $v\to \Proj(v)$ for $v\in V(B)$, such that $\proj(v)$ is a vertex of $\prm B$ closest to $v$, and $\Proj(v)$ is a shortest path between $v$ and $\proj(v)$. We can assume for any two vertices $v_1,v_2\in V(B)$ that $\Proj(v_1)$ and $\Proj(v_2)$, when traversed from $v_1$ and $v_2$ respectively, are either disjoint, or when they meet they continue together towards the same vertex of $\prm B$ (implying that $\proj(v_1)=\proj(v_2)$). Such a property can be ensured by constructing maps $\proj,\Proj$ in the following manner: attach a super-terminal $s_0$ adjacent to every vertex of $\prm B$ with unit-weight edges, and apply a linear-time shortest-path algorithm~\cite{planar-sp} from $s_0$. In the obtained shortest-path tree, vertices of $\prm B$ are children of the root $s_0$. For each subtree $T_{v'}$ rooted in a child $v'$ of $s_0$, we set $\proj(v)=v'$ for every vertex $v\in T_{v'}$, and we set $\Proj(v)$ as the path from $v$ to $v'$ in $T_{v'}$.
Note that by the definition of maps $\proj,\Proj$, for any $v\in V(\prm B)$, $\proj(v)$ is equal to $v$ and $\Proj(v)$ is a path of length zero that consists of the single vertex $v$.

Now fix some strict $\delta$-carve $L=(P,\prm B[a,b])$, where $a,b \in V(\prm B)$ are the endpoints of the carvemark $P$ of $L$. Let $B'$ be the subbrick enclosed by $L$.

\begin{claim}\label{cl:multi-projection}
There is an interval $I_{L}$ on $\prm B$ of length at most $(\frac{1}{2} - \frac{\delta}{2}) \cdot \wei{\prm B}$ that {\em{(i)}} contains the carvebase of $L$, and {\em{(ii)}} contains $\proj(v)$ for any $v \in V(B')$.
\end{claim}
\begin{proof}
Let $D := \pi(V(B')) \setminus V(\prm B[a,b])$. If $D = \emptyset$, then $I_L := \prm B[a,b]$ satisfies the desired conditions by Corollary~\ref{cor:small-base}, so assume otherwise.
Let $\sigma: D \rightarrow V(B')$ be any mapping such that $\pi(\sigma(c))=c$ for any $c \in D$.
Note that $\Proj(\sigma(c))$ intersects $P$; let $\sigma'(c)$ be the vertex of $V(\Proj(\sigma(c))) \cap V(P)$ that is closest to $c$ on $\Proj(\sigma(c))$
and let $P_c := \Proj(\sigma(c))[c,\sigma'(c)]$.
Observe that, by the construction of the paths $\Proj(\cdot)$,
for distinct $c,d \in D$, the paths $\Proj(\sigma(c))$ and $\Proj(\sigma(d))$ are vertex-disjoint.

We now show that, for any $c,d \in D$ (where possibly $c=d$), there exists an interval $I_{c,d} \subseteq \prm B$ such that $\wei{I_{c,d}} \leq (\frac{1}{2}-\frac{\delta}{2}) \wei{\prm B}$, $c,d \in V(I_{c,d})$ and $\prm B[a,b] \subseteq I_{c,d}$.
Consider the subgraph $T_{c,d} := P \cup P_c \cup P_d$ (see Figure~\ref{fig:connectors}b).
Observe that $T_{c,d}$ is a brickable tree in $B$ with anchors $a$, $b$, $c$, and $d$.
Without loss of generality, assume that $a$, $\sigma'(c)$, $\sigma'(d)$, and $b$ lie on $P$ in this order (possibly $\sigma'(c) = \sigma'(d)$ if $c=d$).
Denote $P_1 = P[a,\sigma'(c)]$, $P_2 = P[\sigma'(c),\sigma'(d)]$, and $P_3 = P[\sigma'(d), b]$.
As $\Proj(\sigma(c))$ is a shortest path between $\sigma(c)$ and $V(\prm B)$, $\wei{P_c} \leq \wei{P_1}$ and, symmetrically, $\wei{P_d} \leq \wei{P_3}$.
Consequently, the diameter of $T_{c,d}$ is bounded by $\wei{P}$, which is at most $(\frac{1}{2} - \delta)\wei{\prm B}$ by definition, and thus
  $$\wei{T_{c,d}} \leq \wei{P} + \wei{P_c} + \wei{P_d} \leq \wei{P} + \wei{P_1} + \wei{P_3} \leq 2\wei{P} < \wei{\prm B}.$$
Hence, Lemma~\ref{lem:short-connector} applies to $T_{c,d}$, and we obtain an interval of length at most $(\frac{1}{2}-\frac{\delta}{2})\wei{\prm B}$
that contains $a$, $b$, $c$, and $d$. For any $c,d \in D$, let us denote the interval obtained 
this way by $I_{c,d}$. As $\wei{\prm B[b,a]} > \wei{\prm B}/2$, we have $\prm B[a,b] \subseteq I_{c,d}$. Hence, $I_{c,d}$ has the claimed properties.

We now find the interval $I_{L}$.
Traverse $\prm B$ in counter-clockwise direction from $a$ and let $b'$ be the last vertex for which $\wei{\prm B[a,b']} \leq (\frac{1}{2}-\frac{\delta}{2})\wei{\prm B}$.
Symmetrically, traverse $\prm B$ in clockwise direction from $b$ and let $a'$ be the last vertex for which $\wei{\prm B[a',b]} \leq (\frac{1}{2}-\frac{\delta}{2})\wei{\prm B}$.
Observe that $a',a,b,b'$ lie on $\prm B$ in this counter-clockwise order and $a' \neq b'$.
Moreover, note that for any $c,d \in D$, it follows from the properties of $I_{c,d}$ that $I_{c,d} \subseteq \prm B[a',b']$, and thus $D \subseteq \prm B[a',b']$.
%Note that $I^a \cup I^b = \prm B[a',b']$ is an interval on $\prm B$ of length at most $(1-\delta)\wei{\prm B}$.
Let $c_0$ and $d_0$ be the vertices of $D$ that are closest to $a'$ and $b'$ on $\prm B[a',b']$, respectively
(possibly $c_0=d_0$ if $|D| = 1$).
We claim that $I_L := I_{c_0,d_0}$ satisfies the conditions of the claim. By the properties of $I_{c_0,d_0}$ proven above, the length of $I_{L}$ is at most $(\frac{1}{2} - \frac{\delta}{2})\wei{\prm B}$ and $\prm B[a,b] \subseteq I_{c_0,d_0}$. Hence, property (i) is satisfied. If $c_0 = d_0$, then $|D| = 1$, and property (ii) is satisfied by the construction of $I_{c_0,d_0}$. If $c_0 \neq d_0$, then $\prm B[d_0,c_0] \not\subseteq \prm B[a',b']$ and, consequently, $\prm B[c_0,d_0] \subseteq I_{c_0,d_0}$.
Since $\prm B[a,b] \subseteq I_{c_0,d_0}$ and $D \subseteq \prm B[c_0,d_0]$, we infer that $\pi(V(B')) \subseteq V(I_{c_0,d_0})$. Hence, property (ii) is satisfied. This finishes the proof of the claim.
\cqed\end{proof}

Armed with Claim~\ref{cl:multi-projection}, we can proceed to the proof of the existential statement of Theorem~\ref{thm:core}. The proof strategy is as follows: given the map $\proj: V(B)\to V(\prm B)$, we extend $\proj$ to a map $\extproj$ such that:
\begin{itemize}
\item[{\em (i)}] $\extproj$ is a continuous map from the closed disk enclosed by $\prm B$ to its boundary;
\item[{\em (ii)}] $\extproj$ is the identity when restricted to the boundary of this disk, i.e., to $\prm B$.
\end{itemize}
We will define the extension $\extproj$  using Claim~\ref{cl:multi-projection} and the assumption that every face of $B$ can be $\delta$-carved. Such a mapping $\extproj$, however, would be a retraction of a closed disc onto its boundary. This contradicts Borsuk's non-retraction theorem~\cite{badger}%, bitches!
, which states that such a retraction cannot exist.

We proceed with the construction of $\extproj$. We first extend the map $\proj$ to the edges of $B$. Consider any edge $vw$ of $B$. Since $vw$ lies on the perimeter of some face of $B$, there exists a $\delta$-carve $L$ that encloses $vw$. By Lemma~\ref{lem:strict-equiv}, we can assume that $L$ is strict. By Claim~\ref{cl:multi-projection}, $\proj(v)$ and $\proj(w)$ both lie on $I_{L}$, which is of length at most $(\frac{1}{2} - \frac{\delta}{2}) \cdot \wei{\prm B}$. Hence, among the two intervals $\prm B[\proj(v),\proj(w)]$ and $\prm B[\proj(w),\proj(v)]$, one is of length at most $(\frac{1}{2} - \frac{\delta}{2}) \cdot \wei{\prm B}$ and one is of length at least $(\frac{1}{2} + \frac{\delta}{2}) \cdot \wei{\prm B}$. Therefore, we map the edge $vw$ in a continuous manner onto the shorter of these two intervals in such a way that the distance between any two points on the embedding of $vw$ is proportional to the distance of their images on this shorter interval. Note that the image of $vw$ is a subinterval of $I_{L}$ for every $L$ that strictly $\delta$-carves $vw$. By Claim~\ref{cl:multi-projection}, $I_L$ and $I_{L'}$ for strict $\delta$-carves $L$ and $L'$ can share only a subinterval.
Moreover, observe that if $vw\in \prm B$, then $\proj(v)=v$, $\proj(w)=w$ and $\extproj$ is the identity on $vw$. Hence, property {\em (ii)} of $\extproj$ is already satisfied.

It remains to define $\extproj$ on faces of $B$. Let $f$ be any face of $B$. Since we assumed that every face of $B$ can be $\delta$-carved, there exists some $\delta$-carve $L$ that encloses $f$. Again, by Lemma~\ref{lem:strict-equiv}, we can assume that $L$ is strict. As we have observed, $\proj(u)\in I_L$ for every $u$ on the boundary of $f$ and $\extproj(e)\subseteq I_{L}$ for every edge $e$ on the boundary of $f$. Since $I_{L}$ is an interval, which is a simply connected metric space, %pol: jednospójna; definition of a simply connected space X from wikipedia: every continuous map from circle S^1 into X can be continuously extended to a continuous map from the whole disk into X
we can extend $\extproj$ from the boundary of face $f$ to its interior in a continuous manner such that the whole face $f$ is mapped into $I_L$.
%I am not 100 percent  with this explanation, but I fear that super-formal explanation of this (straightening to a disk via homeomorphism and contracting disc to a point) is not necessary.

By construction, $\extproj$ is continuous and maps the closed disc enclosed by $\prm B$ onto its boundary such that $\prm B$ is fixed in this mapping. Hence, $\extproj$ is a retraction of a disc onto its boundary, contradicting Borsuk's non-retraction theorem. Hence, there must be an inner face of $B$ that cannot be $\delta$-carved and the existential statement is proved.

Finally, we present how to find such a face in time $O(|B|)$.
As discussed earlier, we construct the mapping $\proj$ by first placing a super-terminal $s_{0}$ on the outer face of $B$, attaching it to each vertex of $V(\prm B)$ with a unit-weight edge,
and then constructing a shortest-path tree from $s_0$ in the obtained plane graph in linear time~\cite{planar-sp}.
Observe now that we have in fact proven not only that some face $f_0$ cannot be $\delta$-carved, but also that for some face $f_0$, the images of the vertices of $f_0$ are not contained in an interval of length at most $(\frac{1}{2} - \frac{\delta}{2}) \cdot \wei{\prm B}$ on $\prm B$ --- otherwise, the extended mapping $\extproj$ could be constructed, leading to a contradiction. Clearly, given the mapping $\proj$ we can identify such a face $f_0$ in $O(|B|)$ time by performing a linear-time check on the boundary of each face of $B$. By Claim~\ref{cl:multi-projection}, any face for which this check fails cannot be $\delta$-carved.
\end{proof}

%!TEX root = pst-kernel.tex

\section{Mountains}\label{sec:mountains}
In this section, we start to develop the tools that we need to find a cycle $C$ of length $\Oh(\wei{\prm B})$ that lies close to the perimeter of $B$ and that separates the core from all vertices of degree at least three of some optimal solution for any set of terminals on $\prm B$. To this end, we need a deep and rigorous understanding of the brick. Then, in Section~\ref{sec:sliding}, we exploit this understanding to actually find the cycle $C$.

Before we start, we need the following notion.
For a path $P$ in a brick $B$ connecting $a$ and $b$, and a real $0 \leq \mouth \leq \wei{P}$, we define the \emph{vertex at distance $\mouth$ from $a$ on $P$}, denoted $v(P,a,\mouth)$ as follows.
If there exists $v \in V(P)$ such that $\wei{P[a,v]} = \mouth$, then $v(P,a,\mouth) = v$.
Otherwise, we find the unique edge $xy \in P$ such that $\wei{P[a,x]} < \mouth < \wei{P[a,y]}$, subdivide it by inserting a new vertex $v$ such that $\wei{xy} = \wei{xv} + \wei{vy}$
and $\wei{P[a,x]} + \wei{xv} = \mouth$, and set $v(P,a,\mouth) = v$.
If we speak about a vertex at distance $\mouth$ from $a$ on $P$ in $B$, and an edge $xy$ needs to be subdivided to obtain
$v(P,a,\mouth)$, then we abuse notation and identify the original brick $B$ and path $P$ with the brick $B$ and the path $P$ with the edge $xy$ subdivided.
Observe that this subdivision does not change any metric properties of the brick $B$.

The main notion in this section are $\delta$-carves of a special form which are defined as follows.

\begin{definition}
For a constant $\delta \in (0, 1/2)$ and fixed $l,r \in V(\prm B)$, a \emph{$\delta$-mountain} of $B$ for $l,r$
is a $\delta$-carve $M$ in $B$ such that
\begin{enumerate}
\item $l$ and $r$ are the endpoints of the carvemark and carvebase of $M$;
\item there exists a real $\mouth_M$, $0 \leq \mouth_M \leq \wei{M}$, such that if
we define $v_M = v(M,l,\mouth_M)$, $P_L = M[l,v_M]$ and $P_R = M[v_M,r]$, then
$P_{L}$ is a shortest $l$--$P_{R}$ path in the subgraph enclosed by $M$
and $P_{R}$ is a shortest $r$--$P_{L}$ path in the subgraph enclosed by $M$.
\end{enumerate}
\end{definition}
We denote a mountain either by $M$ to refer to the subgraph of $B$ enclosed by the carve, or,
if we want to exhibit the choice of $\mouth_M$ and the partition of the carvemark into paths $P_L$ and $P_R$, we write $\mou{P_L}{P_R}$.
By abusing notation, we may write $M = \mou{P_L}{P_R}$.
We call the vertex $v_M$ the \emph{summit} of the mountain.
We also say that a mountain $M$ {\em{connects}} the vertices $l$ and $r$.

We want to stress that mountains are discrete objects. Observe that, formally, a mountain is a carve $M$ only, and the definition speaks about the existence
of a real $\mouth_M$ and a vertex $v_M$ (that may not exist in $B$, if we need to subdivide some edge to obtain it).
Hence, a mountain is a discrete object in $B$, and there are only a finite number of mountains in a fixed brick $B$.

Throughout this section, when we discuss a (finite) family of mountains in $B$ and prove some structural properties of them,
we will assume that the summits $v_M$ exist in $B$. In particular, if we use notation $M = \mou{P_L}{P_R}$, then we implicitly assume that the summit $v_M$ is (already) present in $B$. 
In the unweighted setting, one may observe that $\mouth_M$ can always be taken to be integral, and then $v_M$ always exists in the brick $B$.
In the edge-weighted setting, we can ensure that $v_{M}$ exists by subdividing some edges. % in the analysis. 
Observe that subdividing some edges of $B$ does not change the family of mountains with fixed endpoints $l$ and $r$.
However, when we move to the algorithmic part --- where we discuss how to find some specific mountains in a brick $B$ ---
we will need to be careful not to assume that $v_M$ is present in the brick $B$.

%Observe also that, if $B$ is unweighted, $\mouth_M$ can always be taken to be integral, and then $v_M$ always exists in the brick $B$. Hence, in the unweighted case, we do not need to subdivide any edges of $B$ in the analysis.

Before we move on to the properties of $\delta$-mountains, we give an intuition why we study this notion. Assume that among the terminals $\terms$ lying on the boundary of the brick, one can distinguish a small set $Y\subseteq \terms$ that are ``close enough'' to each other and considerably ``far away'' from $\terms\setminus Y$. Intuitively, an optimal Steiner tree connecting $\terms$ should gather all of $Y$ in one subtree $T_v$ such that the leftmost and rightmost elements of $Y$ on the interval of $\prm B$ containing $Y$, denote them by $l$ and $r$, correspond to the leftmost and the rightmost anchors of $T_v$. Consider the $\delta$-carve induced by the path in $T_v$ joining $l$ and $r$, with carvebase $\prm B[l,r]$. Observe that if this $\delta$-carve was not a $\delta$-mountain with summit $v$, then there would exist a shorter path inside this $\delta$-carve that could be used as a shortcut to decrease the cost of $T$.
This is formalized in the following lemma.

\begin{lemma}\label{lem:broom-in-mountain}
Let $B$ be a brick and $T$ be an optimal Steiner tree
that connects $S := V(T) \cap V(\prm B)$ in $B$. Let $uv \in T$ be an edge of $T$, where $v$ is of degree at least $3$ in $T$,
and let $T_v$ be the subtree of $T$ rooted at $v$ with parent edge $uv$.
Let $a$ and $b$ be the leftmost and rightmost elements of $V(T_v) \cap V(\prm B)$ and let $l, r \in \prm B[b, a]$ be two vertices such that $l \neq r$ and $a,b \in \prm B[l, r]$. 
Let $P_L = T[v,a] \cup \prm B[l, a]$ and $P_R = T[v,b] \cup \prm B[b, r]$. If $\wei{\prm B[l,r]} < \wei{\prm B}/2$, then $M := \mou{P_L}{P_R}$ is a $\delta$-mountain, connecting $l$ and $r$, for any
$\delta < 1/ 2 - (\wei{P_L} + \wei{P_R})/\wei{\prm B}$.
\end{lemma}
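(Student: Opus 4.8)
The plan is to check the two clauses in the definition of a $\delta$-mountain for $M = \mou{P_L}{P_R}$: the first is bookkeeping, the second carries the real content.

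First I would record the structural facts. Since $T$ is an optimal connector for $S = V(T) \cap V(\prm B)$, every leaf of $T$ lies on $\prm B$ --- otherwise deleting such a leaf gives a strictly cheaper connector for the same $S$ --- and since $\deg_T(v) \geq 3$, the vertex $v$ has degree at least $2$ in $T_v$ and hence is not a leaf of $T_v$; so all leaves of $T_v$ lie on $\prm B$, in particular $|V(T_v) \cap V(\prm B)| \geq 2$ and $a \neq b$. Using planarity together with the definition of the leftmost and rightmost elements of $V(T_v) \cap V(\prm B)$ (they occur along $\prm B$ in an order consistent with the cyclic order at $v$ of the subtrees of $T_v$ hanging at $v$, the ``gap'' being at the parent edge $uv$), I would conclude that $a$ and $b$ lie in distinct subtrees at $v$, hence $T[v,a] \cap T[v,b] = \{v\}$. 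It follows that $P_L = T[v,a] \cup \prm B[l,a]$ and $P_R = T[v,b] \cup \prm B[b,r]$ are simple paths meeting only in $v$, so $P_L \cup P_R$ is a simple $l$--$r$ path on which $v$ lies at distance exactly $\wei{P_L}$ from $l$.

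Next I would settle clause~(1) and reduce clause~(2). Because $\wei{\prm B[l,r]} < \wei{\prm B}/2$, the arc $\prm B[l,r]$ is strictly shorter than $\prm B[r,l]$ and so is an admissible carvebase, while the carvemark $P_L \cup P_R$ has total weight $\wei{P_L} + \wei{P_R} \leq (\tfrac{1}{2} - \delta)\wei{\prm B}$ by the hypothesis on $\delta$; thus $M = (P_L \cup P_R,\, \prm B[l,r])$ is a $\delta$-carve whose carvemark and carvebase share exactly the endpoints $l$ and $r$, which is clause~(1). For clause~(2), I would take $\mouth_M = \wei{P_L}$, so that $v(M, l, \mouth_M) = v$ without any subdivision and the decomposition of the carvemark prescribed by the definition is exactly $M[l,v] = P_L$ and $M[v,r] = P_R$. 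It then remains to show that, in the subgraph $B[M]$ of $B$ enclosed by $M$, the path $P_L$ is a shortest $l$--$V(P_R)$ path and, symmetrically, $P_R$ is a shortest $r$--$V(P_L)$ path.

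By the symmetry $l \leftrightarrow r$, $a \leftrightarrow b$, $P_L \leftrightarrow P_R$ it suffices to treat $P_L$, and this is where the work lies. I would first observe that inside $B[M]$ the vertices of the perimeter ``whisker'' $\prm B[l,a]$ are incident only to edges of $\prm B[l,a]$ --- the remainder of $B[M]$ being the ``tent'' $\mathcal{T}$ enclosed by $T[a,v] \cup T[v,b] \cup \prm B[a,b]$ together with the symmetric whisker $\prm B[b,r]$ --- so every path of $B[M]$ starting at $l$ must first traverse all of $\prm B[l,a]$; hence the claim for $P_L$ is equivalent to: $T[a,v]$ is a shortest $a$--$V(T[v,b])$ path in $\mathcal{T}$. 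Suppose it is not, and let $Q$ be a shortest such path, with $\wei{Q} < \wei{T[a,v]}$ and endpoint $w \in V(T[v,b])$ (being shortest, $Q$ meets $T[v,b]$ only at $w$). I would then reroute $T$ through $Q$: replacing $T[a,v]$ (and whatever portion of $T$ it cuts off) by $Q$ produces a connected subgraph $T'$ that still connects $S$ and satisfies $\wei{T'} \leq \wei{T} - \wei{T[a,v]} + \wei{Q} < \wei{T}$, contradicting the optimality of $T$. The crucial structural input is that all of $T_v$ lies in the closed region $\mathcal{T}$ with every leaf of $T_v$ on $\prm B[a,b]$, and that $\prm B[a,b]$ lies on the side of $Q$ opposite to $T[a,v]$; this is what guarantees that rerouting through $Q$ does not irreparably detach any terminal. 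The case of $P_R$ is symmetric, so clause~(2) holds and the lemma follows. I expect the main obstacle to be exactly the planar-topological care in this rerouting: delimiting precisely the part of $T$ that is removed, and checking that every terminal --- the leaves of $T_v$, any intermediate vertex where the ``left spine'' $T[a,v]$ meets $\prm B[a,b]$, and the subtrees of $T_v$ hanging off $T[a,v]$ --- is either kept or reconnected through $Q \cup T[v,w]$; here planarity, the extremality of $a$ and $b$, and the optimality of $T$ are all genuinely needed, and one may in addition have to choose $Q$ (and $T$ among optimal connectors) minimally to rule out degenerate interactions. By comparison, the verification that $M$ is a $\delta$-carve, the choice of $\mouth_M$, and the weight estimate are routine.
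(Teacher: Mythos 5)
Your proof is correct and follows essentially the same route as the paper's: verify the $\delta$-carve property, take $\mouth_M = \wei{P_L}$ so the summit is $v$, then argue by contradiction that a shorter $l$--$P_R$ path inside the carve (which must traverse the degenerate whisker $\prm B[l,a]$ and hence first re-enter the tent at $a$) could be used to reroute $T$ into a strictly cheaper connector for $V(T)\cap V(\prm B)$. The paper's version of the ``planar-topological care'' you flag is the observation that, by the leftmost/rightmost choice of $a$ and $b$, the discarded part $T[v,a]\setminus Q$ contains no vertex of $\prm B$, which is exactly what guarantees that no terminal is lost in the rerouting.
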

\begin{proof}
Recall that, by the definition of the leftmost and rightmost elements of $V(T_v) \cap V(\prm B)$, we have that $V(T_v) \cap V(\prm B) \subseteq V(\prm B[a,b])$.
As $v$ is of degree at least $3$ in $T$, it is of degree at least $2$ in $T_v$ and $P_L \cap P_R = \{v\}$. Therefore,
$P_L \cup P_R$ is a path and it induces a $\delta$-carve $M$ with carvebase $\prm B[l,r]$, as $\wei{\prm B[l,r]} < \wei{\prm B}/2$.

Suppose that $M$ is not a mountain. Without loss of generality, there exists a path $P$ enclosed by $M$ that connects $l$ with $w \in V(P_R)$ such that
$V(P_R) \cap V(P) = \{w\}$ and $\wei{P} < \wei{P_L}$. By construction, $P$ passes through $a$ and $P[l,a] = \prm B[l,a]$.
Let $D$ be the subgraph of $M$ enclosed by the closed walk
$P_L[a,v] \cup P[a,w] \cup P_R[v,w]$.
Define $T' := (T \setminus D) \cup P_R[v,w] \cup P[a,w]$. As $P_R[v,w] \cup P_L[a,v] \subseteq D$, $\wei{T'} < \wei{T}$. By the definition of $a$ and $b$,
$P_L \setminus P$ does not contain any vertex of $\prm B$. Therefore, $T'$ is a connected subgraph of $B$ connecting $V(T) \cap V(\prm B)$,
a contradiction to the minimality of $T$.
\end{proof}

The goal of this section is essentially to prove that if we take the union of all maximal $\delta$-mountains with fixed $l$ and $r$, then the perimeter of the resulting subgraph has length bounded linearly in the length of the carvebase. This intuition is captured by the following theorem.

\begin{theorem}[Mountain Range Theorem]\label{thm:mountain-range}
Fix $\nice \in [0,1/4)$ and $\delta \in [2\nice,1/2)$ and assume $B$ does not admit any $\nice$-nice $3$-short tree.
Then for any fixed $l, r \in V(\prm B)$ with $\wei{\prm B[l,r]} < \wei{\prm B}/2$,
there exists a closed walk $W_{l,r}$ in $B$ of length at most $3\wei{\prm B[l,r]}$ such that, for each face $f$ of $B$,
$f$ is enclosed by $W_{l,r}$ if and only if $f$ belongs to some $\delta$-mountain connecting $l$ and $r$.
Moreover, the set of the faces enclosed by $W_{l,r}$ can be computed 
in $O(|B|)$ time.
%in time $\Oh(|B| \log |B|)$ in the edge-weighted setting and in time $\Oh(|\prm B| \cdot |B|)$ in the unweighted setting.
\end{theorem}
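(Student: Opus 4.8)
The plan is to take $W_{l,r}$ to be the boundary walk of the \emph{$\delta$-mountain range} $\mathcal{U}$, defined as the union of all $\delta$-mountains connecting $l$ and $r$; the work lies in bounding $\wei{W_{l,r}}$ and in computing $\mathcal{U}$ in linear time. First I would record two elementary observations. Since $\wei{\prm B[l,r]} < \wei{\prm B}/2$, the arc $\prm B[l,r]$ is the unique carvebase of any $\delta$-carve with endpoints $l, r$; hence every $\delta$-mountain connecting $l,r$ has carvebase $\prm B[l,r]$ and contains this fixed arc of $\prm B$, so $\mathcal{U}$ is connected, and (as part of the structural analysis below) one sees that $\mathcal{U}$ is topologically a closed disc bounded by $\prm B[l,r]$ together with a single $l$--$r$ walk; thus a face of $B$ lies in $\mathcal{U}$ --- equivalently, in some $\delta$-mountain connecting $l,r$ --- if and only if it is enclosed by this boundary walk, which is exactly the enclosure property demanded of $W_{l,r}$. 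Second, for any $\delta$-mountain $M = \mou{P_L}{P_R}$ connecting $l,r$, traversing $\prm B[l,r]$ from $l$ gives an $l$--$P_R$ walk inside $M$ (it ends at $r \in V(P_R)$), so $\wei{P_L} \le \wei{\prm B[l,r]}$ because $P_L$ is a shortest such path; symmetrically $\wei{P_R} \le \wei{\prm B[l,r]}$.

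The structural heart of the argument is to analyse how two $\delta$-mountains connecting $l,r$ interact, and to deduce that the inclusion-wise maximal ones can be listed ``from left to right'' as $M^1 = \mou{P^1_L}{P^1_R}, \dots, M^s = \mou{P^s_L}{P^s_R}$ so that consecutive maximal mountains overlap exactly as in Figure~\ref{fig-over:rangefull}: the first vertex $v$ on $P^{i+1}_L$ (traversed from $l$) that meets $M^1 \cup \dots \cup M^i$ lies on $P^i_R$, the subpath $P^{i+1}_L[l,v]$ stays inside $M^i$, and the subpath $P^i_R[v, r]$ stays inside $M^{i+1}$. I would establish this by a case analysis on how the carvemarks of two maximal mountains cross: a crossing in a forbidden configuration would allow one to replace part of a carvemark by a strictly shorter path, contradicting either the shortest-path property defining $P_L$ or $P_R$ (as in Lemma~\ref{lem:broom-in-mountain}) or the maximality of the mountains; and the hypothesis that $B$ has no short $\nice$-nice tree is used, through Lemma~\ref{lem:short-connector} and Corollary~\ref{cor:small-base}, to confine the anchors of the auxiliary trees formed from pairs of carvemarks to a short arc of $\prm B$, ruling out the remaining degenerate cases. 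Since $B$ contains only finitely many $\delta$-mountains and each is contained in a maximal one, $\mathcal{U} = \bigcup_{i=1}^s M^i$. I expect this case analysis to be the main obstacle of the proof; the length bound and the algorithm are comparatively mechanical once it is available.

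Granting this linear order, put $U_i := M^1 \cup \dots \cup M^i$ and let $p^i$ denote the weight of the boundary walk of $U_i$; note $p^i = \wei{\prm B[l,r]} + q^i$, where $q^i$ is the weight of the ``top'' $l$--$r$ walk of $U_i$. I would prove by induction that $p^i \le \wei{\prm B[l,r]} + \wei{P^1_R} + \wei{P^i_L}$. For $i = 1$ this holds with equality, since the top walk of $U_1$ is $P^1_L \cup P^1_R$. For the step, with $v$ as above set $d := \wei{P^{i+1}_L[l,v]}$, $a := \wei{P^{i+1}_L[v, v_{M^{i+1}}]}$ (so $\wei{P^{i+1}_L} = a + d$), $c := \wei{P^i_R[v,r]}$, $e := \wei{P^i_L}$, and $b := \wei{P^{i+1}_R}$. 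Passing from $U_i$ to $U_{i+1} = U_i \cup M^{i+1}$ deletes the segment $P^i_R[v,r]$, of weight $c$, from the top walk and inserts $P^{i+1}_L[v, v_{M^{i+1}}]$ and $P^{i+1}_R$, of total weight $a + b$; hence $p^{i+1} - p^i = a + b - c$. Since $P^{i+1}_L[l,v]$ is an $l$--$P^i_R$ path inside $M^i$, the defining property of the $\delta$-mountain $M^i$ gives $d \ge e$; since $P^i_R[v,r]$ is an $r$--$P^{i+1}_L$ path inside $M^{i+1}$, the defining property of $M^{i+1}$ gives $c \ge b$. Therefore $p^{i+1} - p^i = a + b - c \le a \le a + d - e = \wei{P^{i+1}_L} - \wei{P^i_L}$, which with the inductive hypothesis yields $p^{i+1} \le \wei{\prm B[l,r]} + \wei{P^1_R} + \wei{P^{i+1}_L}$. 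Taking $i = s$ and using $\wei{P^1_R}, \wei{P^s_L} \le \wei{\prm B[l,r]}$ from the first paragraph gives $\wei{W_{l,r}} = p^s \le 3 \wei{\prm B[l,r]}$, as claimed.

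For the algorithmic statement, I would compute shortest-path trees rooted at $l$ and at $r$ in linear time using a planar single-source shortest-path algorithm (contracting or marking $\prm B[l,r]$ so that distances respect the carve structure), and then use the resulting distance labels to sweep out the top walk of $\mathcal{U}$ from the left: the defining shortest-path conditions of a $\delta$-mountain translate into local comparisons of distance labels at the frontier, so one can decide greedily, face by face, whether the current region can still be extended inside some $\delta$-mountain connecting $l,r$. Charging the work to visited edges and faces yields total running time $\Oh(|B|)$; the correctness of a single left-to-right sweep rests on the linear structure of the maximal mountains established above, and I would relegate the remaining bookkeeping to the full proof.
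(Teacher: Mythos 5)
Your plan matches the paper's proof essentially step for step: the inductive length bound with the variables $a,b,c,d,e$, the key inequality $p^{i+1}-p^i = a+b-c \le a \le a+d-e$, the ordering of maximal mountains from left to right, and the shortest-path-tree based linear-time algorithm are all the same as the paper's Lemmas~\ref{lemma:range-length} and~\ref{lemma:range-algorithm}. Be aware, though, that the structural claim you defer --- that any two maximal mountains interact ``as in Figure~\ref{fig-over:rangefull}'', the paper's notion of forming a range (Corollary~\ref{cor:range}) --- is where the real work lives: the paper spends all of Sections~6.2--6.3 on it, building a doubled graph, associating Jordan curves to the two carvemarks, classifying elementary regions and cushions, and running a five-case analysis (Lemmata~\ref{lemma-jordan2}--\ref{lem:mountain-range-last}), each branch using one of Lemma~\ref{lem:shortcut} (shortcut inside a mountain) or Lemma~\ref{lem:largermountain} (shortcut outside a mountain) to contradict maximality. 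Your one-sentence sketch correctly names the ingredients but substantially underestimates how much is needed to rule out carvemarks crossing more than twice; in particular, whether the union $\mathcal{U}$ is a topological disc bounded by a single $l$--$r$ walk is exactly what has to be proved, not assumed.
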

The set of the faces enclosed by $W_{l,r}$ is called the \emph{$\delta$-mountain range} of $l,r$.

Observe that in Lemma~\ref{lem:broom-in-mountain}, the discussed mountain has summit $v$ that belongs to $B$ (i.e., we do not need to subdivide any edge).
However, in the edge-weighted setting we need to allow the mountains to have summits in the middle of some edges
to obtain the statement of Theorem~\ref{thm:mountain-range}.

The rest of this section is devoted to the proof of Theorem~\ref{thm:mountain-range}.
Henceforth, we assume that $\nice \in [0,1/4)$, $\delta \in [2\nice,1/2)$, $l,r \in V(\prm B)$ are fixed.
Whenever we speak about a mountain, we mean a $\delta$-mountain connecting $l$ and $r$.

\subsection{Preliminary simplification steps}

We start the proof of Theorem~\ref{thm:mountain-range} with the following simplification step.
We attach to $B$ two paths $\xP$, $\xP'$ connecting $l$ and $r$, being copies of $\prm B[l,r]$ and $\prm B[r,l]$, respectively,
drawn in the outer face of $B$ in such a manner that $\xP \cup \xP'$ is the infinite
face and $\xP \cup \prm B[l,r]$ and $\xP' \cup \prm B[r,l]$ are two finite faces of the constructed graph $B'$.
Note that $B'$ is also a brick of perimeter $\wei{\prm B}$, and that all $\delta$-mountains connecting $l$ and $r$
in $B$ are also $\delta$-mountains in $B'$ (with carvebase $\prm B[l,r]$ replaced by $\xP$)
with the additional property that the $\delta$-carves of these mountains
are strict. Moreover, as $\wei{\xP'} > \wei{\prm B}/2$,
any mountain that is present in $B'$ but not in $B$ is induced by the $\delta$-carve
$(\xP, \xP)$ and any choice of the summit; note that this $\delta$-carve is enclosed by any other
$\delta$-mountain in $B'$, and does not influence the output graph of Theorem~\ref{thm:mountain-range}.
Hence, by somewhat abusing the notation and denoting the modified brick $B'$ by $B$ again, we may
assume that all $\delta$-mountains connecting $l$ and $r$ are induced by strict $\delta$-carves,
possibly with the exception of the trivial $\delta$-carve $(\prm B[l,r], \prm B[l,r])$.
We silently ignore the existence of the latter in the upcoming arguments, and assume that whenever we pick
a mountain, it is induced by a strict $\delta$-carve.

Hence, for any $\delta$-mountain $M = \mou{P_L}{P_R}$, the closed walk
$P_L \cup P_R \cup \prm B[l,r]$ is actually a simple cycle in $B$, denoted $\prm M$.

\subsection{Maximal mountains}
In this subsection, we describe two properties of mountains that will be crucial in the remainder of the proof of Theorem~\ref{thm:mountain-range}. The first property is the following easy consequence of the definition of a mountain.
\begin{lemma}\label{lem:shortcut}
Let $M = \mou{P_L}{P_R}$ be a mountain and let $a,b \in V(\prm M)$ be such
that $\prm M[a,b]$ is contained entirely in $P_L$ or entirely in $P_R$.
If there exists a path $Q$ with endpoints in $a$ and $b$ that is enclosed by $\prm M$, then $\wei{Q} \geq \wei{\prm M[a,b]}$.
\end{lemma}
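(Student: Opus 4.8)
The statement of Lemma~\ref{lem:shortcut} is essentially a restatement of the defining optimality property of a mountain, so the plan is to reduce it to part 2 of the definition of a $\delta$-mountain by a careful case analysis on where $a$ and $b$ sit along $P_L$ and $P_R$. Recall $\prm M = P_L \cup P_R \cup \prm B[l,r]$ is a simple cycle (by the preliminary simplification steps), $P_L$ is a shortest $l$--$P_R$ path inside $M$, and $P_R$ is a shortest $r$--$P_L$ path inside $M$. Assume, towards a contradiction, that $a,b \in V(\prm M)$ with $\prm M[a,b] \subseteq P_L$ (the case $\prm M[a,b]\subseteq P_R$ is symmetric), and that there is a path $Q$ enclosed by $\prm M$ with endpoints $a,b$ and $\wei{Q} < \wei{\prm M[a,b]}$. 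Here $\prm M[a,b]$ is the subpath of $P_L$ between $a$ and $b$; since $P_L = M[l,v_M]$ runs from $l$ to the summit $v_M$, we may assume $\wei{P_L[l,a]} \le \wei{P_L[l,b]}$, i.e.\ $a$ is no farther from $l$ along $P_L$ than $b$.

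The key observation is that one of the two paths $P_L[l,a] \cup Q$ or — depending on the geometry — a modified version of $P_R$ obtained by rerouting through $Q$ would beat the shortest-path optimality guaranteed by the mountain definition. Concretely: the path $P_L' := P_L[l,a]\cup Q \cup P_L[b,v_M]$ is a walk from $l$ to $v_M$ enclosed by $M$ with $\wei{P_L'} < \wei{P_L}$; since $v_M \in V(P_R)$, after taking a simple subpath this yields an $l$--$P_R$ path strictly shorter than $P_L$, contradicting that $P_L$ is a shortest $l$--$P_R$ path in the subgraph enclosed by $M$. The only subtlety is to confirm that $Q$, being enclosed by $\prm M$, together with the two pieces of $P_L$ indeed stays inside the subgraph enclosed by $M$ — this is immediate because $P_L \subseteq \prm M$ and $Q$ is enclosed by $\prm M$, so the whole walk $P_L'$ lies in the closed region bounded by $\prm M$, which is exactly the subgraph enclosed by $M$.

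The main obstacle I anticipate is purely bookkeeping: one must be careful that "$\prm M[a,b]$ contained entirely in $P_L$" is interpreted as the subpath of $P_L$ (not the complementary arc of the cycle $\prm M$ that would detour through $\prm B[l,r]$ and $P_R$), and that the replacement walk $P_L'$ is genuinely shorter — this uses $\wei{Q} < \wei{\prm M[a,b]} = \wei{P_L[a,b]}$ and that $P_L[l,a]$, $P_L[b,v_M]$, and $P_L[a,b]$ partition $P_L$. One should also note the degenerate cases $a = l$ or $b = v_M$ cause no problem, as the corresponding subpath is then empty. Having derived a contradiction with the defining shortest-path property of $P_L$ (or, in the symmetric case, of $P_R$ as a shortest $r$--$P_L$ path), we conclude $\wei{Q} \ge \wei{\prm M[a,b]}$, as desired.
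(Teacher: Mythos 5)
Your proof is correct and takes essentially the same approach as the paper: both replace the subpath $\prm M[a,b]$ of $P_L$ with $Q$ to obtain a walk from $l$ to $P_R$ (your $P_L'$ is the paper's $Q'$ traversed in reverse), and both then invoke the defining shortest-path property of $P_L$ to conclude. The only cosmetic difference is that you argue by contradiction while the paper states the inequality $\wei{Q'} \ge \wei{P_L}$ directly and reads off $\wei{Q} \ge \wei{\prm M[a,b]}$; your explicit remark about passing to a simple subpath of the walk is a minor extra care the paper leaves implicit.
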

\begin{proof}
By symmetry, without loss of generality assume $\prm M[a,b]$ is a subpath of $P_L$.
Note that $Q' := \prm M[v_M, a] \cup Q \cup \prm M[b,l]$ is a path connecting $l$ and $P_R$,
enclosed by $\prm M$. Hence, $\wei{Q'} \geq \wei{P_L}$ and the lemma follows.
\end{proof}

We now define what it means for a mountain to be maximal. Observe that since $\prm M$ is a simple cycle for each mountain $M$ in $B$, the subgraph enclosed by $\prm M$ is defined by the set of faces of $B$ enclosed by $\prm M$. A mountain $M$ is called {\em{maximal}} if this set of faces is inclusion-wise maximal, among the set of all $\delta$-mountains connecting $l$ and $r$.
Note that in the proof of Theorem~\ref{thm:mountain-range} we may actually look for the union of all faces enclosed by maximal mountains.

The second property is actually a condition under which a mountain cannot be maximal.
\begin{lemma}
\label{lem:largermountain}
Let $M = \mou{P_L}{P_R}$ be a mountain.
Let $u,w \in V(\prm M)$ and let $P$ be a path between $u$ and $w$ such that:
\begin{enumerate}
\item $P$ does not contain any edge strictly enclosed by $\prm M$ and, moreover, the closed walk $\prm M[u,w] \cup P$
encloses $M$;
\item $P \neq \prm M[w, u]$;
\item $\wei{P} \leq \wei{\prm M[w,u]}$.
\end{enumerate}
Then $M$ is not a maximal mountain.
\end{lemma}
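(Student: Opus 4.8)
The plan is to contradict the maximality of $M$ by constructing a $\delta$-mountain $M'$ connecting $l$ and $r$ whose set of enclosed faces strictly contains that of $M$.

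\emph{Step 1: a carve strictly larger than $M$.} First I would clean up $P$: replacing $P$ by an inclusion-minimal sub-walk still satisfying (1)--(3), I may assume $P$ is a simple path meeting $\prm M$ only in $u$ and $w$, so that $\prm\Omega := \prm M[u,w]\cup P$ is a simple cycle; let $\Omega$ denote the subbrick of $B$ it bounds. Since $P$ contains no edge strictly enclosed by $\prm M$ and, by (1), $\prm\Omega$ encloses $M$, the faces enclosed by $\prm\Omega$ are precisely those enclosed by $\prm M$ together with the faces lying between $P$ and $\prm M[w,u]$; by (2) the latter collection is non-empty, so $\Omega$ encloses strictly more faces than $M$. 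A short case analysis on the positions of $u$ and $w$ among the three arcs $\prm B[l,r]$, $P_L$, $P_R$ of $\prm M$ --- using that $\prm B[l,r]$ lies on $\prm B$ and hence cannot be detoured from outside $B$ --- lets me assume that $\prm B[l,r]$ is an arc of $\prm\Omega$. Then $\Omega$ is a $\delta$-carve with carvebase $\prm B[l,r]$, because its carvemark $N_\Omega := \prm\Omega\setminus\prm B[l,r]$ satisfies $\wei{N_\Omega} = \wei{P_L}+\wei{P_R}+\wei{P}-\wei{\prm M[w,u]} \le \wei{P_L}+\wei{P_R} \le (\tfrac12-\delta)\wei{\prm B}$, where the first inequality is condition (3) and the last holds because $M$ is itself a $\delta$-carve with carvemark $P_L\cup P_R$.

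\emph{Step 2: tightening the carve into a mountain.} Among all $\delta$-carves connecting $l$ and $r$, with carvebase $\prm B[l,r]$, that enclose every face of $\Omega$ (a non-empty family, since $\Omega$ itself qualifies), I pick $\Lambda$ of minimum carvemark weight and, subject to that, enclosing the fewest faces; write $R$ for its interior and $N$ for its carvemark. I claim $\Lambda$ is a $\delta$-mountain, and then $M' := \Lambda$ works, since it encloses every face of $\Omega$ and hence strictly more faces than $M$. To prove the claim, choose $\mouth^*$ as large as possible so that, setting $v^* := v(N,l,\mouth^*)$, the path $N[l,v^*]$ is a shortest $l$--$V(N[v^*,r])$ path in $R$ (such $\mouth^*$ exists, as $\mouth=0$ is admissible). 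If in addition $N[v^*,r]$ is a shortest $r$--$V(N[l,v^*])$ path in $R$, then $\Lambda=\mou{N[l,v^*]}{N[v^*,r]}$ is a $\delta$-mountain connecting $l$ and $r$, as required. Otherwise there is a genuine shortcut inside $R$; splicing it with the appropriate sub-arc of $N$ --- and using Lemma~\ref{lem:shortcut} together with the maximality of $\mouth^*$ to pin down where the shortcut rejoins $N$ --- yields a $\delta$-carve that still encloses $\Omega$ and has strictly smaller carvemark weight (or the same weight but strictly fewer enclosed faces), contradicting the choice of $\Lambda$.

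\emph{Main obstacle.} The crux is the last point of Step 2: showing that the shortcut carve still encloses all of $\Omega$. The shortcut is a geodesic inside $R$, so a priori it might ``cut into'' the region that must be preserved; ruling this out is the heart of the argument, and I expect it to rely on the fact that $\prm B[l,r]$ lies on $\prm B$ (so the carvebase side of any carve is never shortcut) combined with Lemma~\ref{lem:shortcut} applied to $\prm M$, to argue that any violating geodesic can be rerouted to stay weakly outside $\prm\Omega$ without getting longer, or else only detaches faces that $\Omega$ does not enclose anyway. The placement case analysis in Step 1 (including the degenerate cases $u,w\in\{l,r\}$, $u=v_M$, or $w=v_M$) is routine but must also be carried out.
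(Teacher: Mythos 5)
Your Step 1 matches the paper's setup (after cleaning up $P$, the carve $M^\ast$ obtained by replacing $\prm M[w,u]$ with $P$ is a strictly larger $\delta$-carve), but Step 2 has a genuine gap, which you yourself flag, and the gap is not easily repaired by the route you sketch.

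The problem is in the minimization. You take $\Lambda$ of minimum carvemark weight among $\delta$-carves that still enclose all of $\Omega$, and then attempt to show $\Lambda$ is a mountain by looking at a witness shortcut $Q$ inside the interior of $\Lambda$. But $Q$ is a geodesic in the interior $R$, and $\Omega\subseteq R$, so $Q$ may pass through $\Omega$ --- including through the ``pocket'' $\Omega\setminus M$ bounded by $P$ and $\prm M[w,u]$, a region where Lemma~\ref{lem:shortcut} gives you nothing (it protects only subarcs of $P_L$ or $P_R$ against shorter paths \emph{inside $M$}, and it says nothing about the pocket). In that case replacing an arc of the carvemark by $Q$ produces a strictly smaller $\delta$-carve that no longer encloses $\Omega$, so it is not in your minimizing family, and the contradiction evaporates. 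The rerouting idea (push $Q$ weakly outside $\prm\Omega$ without lengthening it) is exactly the missing step, and I don't see how to get it directly from Lemma~\ref{lem:shortcut}: inside the pocket you only know $\wei{P}\le\wei{\prm M[w,u]}$, which does not prevent a path that dips into the pocket and comes out from being genuinely shorter than the boundary detour around it.

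The paper avoids this by a different, more surgical minimization: it fixes the carve as ``$M$ with $\prm M[w,u]$ replaced by $P$,'' and minimizes over the admissible \emph{choices of $P$ and $(u,w)$}, not over all carves containing $\Omega$. Then, for a hypothetical shortcut $Q$ violating the mountain property of $M^\ast$, it decomposes $Q$ into the portion inside $M$ (handled by the mountain property of $M$ and Lemma~\ref{lem:shortcut}) and a maximal subarc $Q[a,b]$ outside $\prm M$, and argues that $Q[a,b]$ would be a strictly better candidate for $P$ unless $Q[a,b]=P$ --- at which point the weight of $Q$ is forced to equal $\wei{P_L^\ast}$ exactly, contradicting $\wei{Q}<\wei{P_L^\ast}$. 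This tracking of where $Q$ leaves and re-enters $\prm M$, and the identification $Q[a,b]=P$, is precisely the content you would need to supply in place of the unjustified rerouting claim; without it, Step 2 does not go through.

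One smaller point: the claim in Step 1 that $\Omega$ ``encloses strictly more faces than $M$'' because $P\ne\prm M[w,u]$ requires (as the paper also uses) that $P$ has been reduced to have no internal vertices on $\prm M$ --- otherwise $P$ could coincide with $\prm M[w,u]$ edge-for-edge in $B$ even while differing as a walk. You do say this, but it is load-bearing and should be made explicit in the argument rather than folded into ``routine case analysis.''
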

\begin{proof}
First note that if $P$ is a path satisfying the assumptions of the lemma, then there exists a subpath of $P$ also satisfying the assumptions
for which no internal vertex lies on $\prm M$ (recall that all edge weights are positive).
%that does not contain any internal vertex on $\prm M$. 
Hence, %we may in the rest of the proof assume that all internal vertices of $P$ do not lie in $\prm M$. In particular, 
$u,w$ lie on the carvemark of $M$.
Let $M^\ast$ denote the carve obtained by replacing $\prm M[w,u]$ with $P$ in the carve $M$.
We assume that $P$ and $u,w \in V(\prm M)$ have been chosen such that the number of faces contained in $M^\ast$ is minimum (satisfying
the previous assumption that $P$ does not contain any internal vertices on $\prm M$).
As $\prm B[l,r] \subseteq \prm M$ and $\prm M[u,w] \cup P$ encloses $M$, we have that $u$ is closer to $l$ on $P_L \cup P_R$ than $w$ is. Since $\wei{P} \leq \wei{\prm M[w,u]}$, $M^\ast$ is also a $\delta$-carve.

We now consider two cases. First, suppose that $u$ and $w$ both lie on $P_L$ or both lie on $P_R$; by symmetry, assume that they both lie on $P_L$.
Partition the carvemark of $M^\ast$ into $P_L^\ast$ and $P_R^\ast$ by taking $P_L^\ast$ equal to $P_L$ with $\prm M[w,u]$ substituted by $P$, and taking $P_R^\ast$ equal to $P_R$. Note that thus $\wei{P_L^\ast}\leq \wei{P_L}$.
We claim that $M^\ast$ treated as $\mou{P_L^\ast}{P_R^\ast}$ is also a $\delta$-mountain. Together with the observation that $M^\ast$ encloses a proper superset of the faces enclosed by $M$ (since no edge of $P$ is enclosed by $M$), this contradicts that $M$ is maximal.

For sake of contradiction, assume that $M^\ast$ is not a $\delta$-mountain. Suppose that there exists a shortest path $Q$ in $M^\ast$ between $r$ and some $x\in P_L^\ast$ that is shorter than $P_R^\ast=P_R$ --- see Figure~\ref{fig:largermountain}~(a). Observe that then $Q$ must meet $P_L$, and let $x'$ be the first point of intersection of $Q$ and $P_L$, counting from $r$. We infer that $Q[r,x']$ is entirely contained in $M$ and that $\wei{Q[r,x']}\leq \wei{Q}<\wei{P_R}$. Since $x'\in V(P_L)$, this contradicts that $M$ is a mountain.
Therefore, there exists a shortest path $Q$ in $M^\ast$ between $l$ and some $x\in P_R^\ast=P_R$ that is shorter than $P_L^\ast$ --- see Figure~\ref{fig:largermountain}~(b). Since $\wei{Q}<\wei{P_L^\ast}\leq \wei{P_L}$, $Q$ must contain an edge that is not enclosed by $M$, since otherwise existence of $Q$ would contradict the fact that $M$ is a mountain. Then $Q$ contains some subpath $Q[a,b]$ where $a,b\in V(\prm M)$ but no internal vertex
of $Q[a,b]$ lies on $\prm M$.
By choosing $a$ and $b$ so that the number of faces enclosed by $Q[a,b] \cup \prm M[b,a]$ is minimized, we can moreover assume that no edge of $Q$ is strictly enclosed by $Q[a,b]\cup \prm M[b,a]$.
As $Q$ is a shortest path, we have that $\wei{Q[a,b]} \leq \wei{\prm M[b,a]}$. By the choice of $P$ as the path that minimizes the number of faces enclosed by $M^\ast$, we infer that $Q$ would be a better candidate for $P$ unless $Q[a,b]=P$ (and thus, $(a,b)=(u,w)$). 
%By the choice of $Q$, all the other edges of $Q$ are enclosed by $M$. 
By the definition of $Q[a,b]$, all edges of $Q$ not on $Q[a,b]$ are enclosed by $M$.
We infer that $\wei{P_L[l,u]}=\wei{Q[l,u]}$, since $P_L[l,u]$ is a shortest path in $M$ between $l$ and $u$, and $Q[l,u]$ is enclosed by $M$. Similarly, $\wei{P_L[w,v_M]}=\wei{Q[w,x]}$, since $P_L[w,v_M]$ is a shortest path between $w$ and $P_R$ and $Q[w,x]$ is enclosed by $M$. Thus we have that $\wei{Q}=\wei{Q[l,u]}+\wei{P}+\wei{Q[w,x]}=\wei{P_L[l,u]}+\wei{P}+\wei{P_L[w,v_M]}=\wei{P_L^\ast}$, a contradiction with the choice of $Q$.

Now we consider the case when $u$ lies on $P_L$ and $w$ lies on $P_R$. As $\wei{\prm M^\ast} \leq \wei{\prm M}$, observe that it is possible to find a vertex $v_{M^\ast}$ on $P$ (possibly by  subdividing some edge of $P$) such that $\wei{\prm M^{\ast}[v_{M^\ast},l]}\leq \wei{P_L}$ and $\wei{\prm M^{\ast}[r,v_{M^\ast}]}\leq \wei{P_R}$.\footnote{We remark here that this is the sole point in the argumentation that forces us to allow mountains with summits in the middle of some edge.} Let $P_L^\ast=\prm M^{\ast}[v_{M^\ast},l]$ and $P_{R}^\ast=\prm M^{\ast}[r,v_{M^\ast}]$. We again claim that $M^\ast$ treated as $\mou{P_L^\ast}{P_R^\ast}$ is a $\delta$-mountain, which in the same manner brings a contradiction.

\begin{figure}[t]
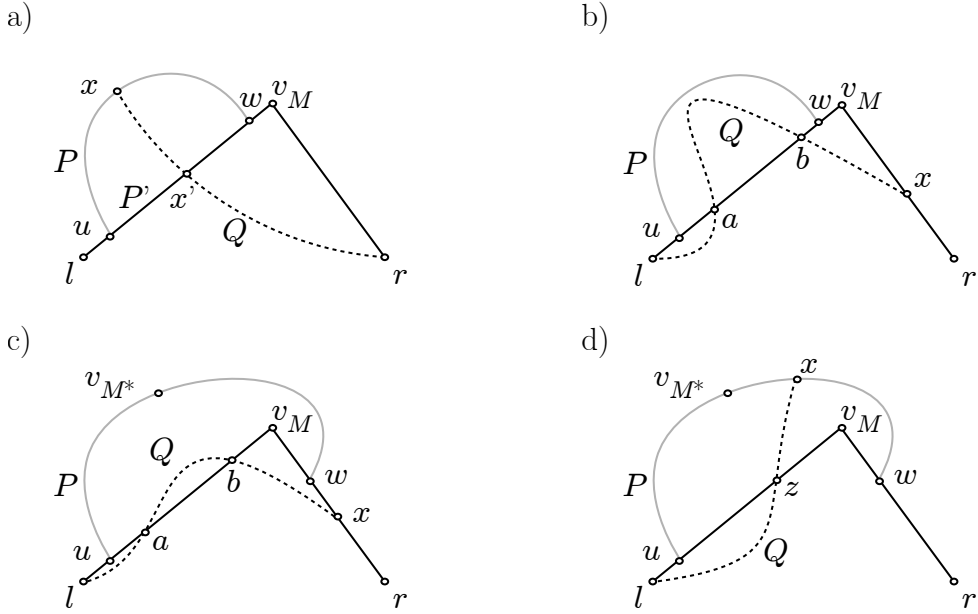

\centering
\svg{0.8\textwidth}{largermountain}
\caption{The cases considered in the proof of Lemma~\ref{lem:largermountain}.}
\label{fig:largermountain}
\end{figure}

Assume that this is not the case, and without loss of generality suppose that there is a shortest path $Q$ in $M^\ast$ between $l$ and $x \in P_{R}^\ast$ that is shorter than $P_L^\ast$. The case that there is a path between $r$ and $P_{L}^\ast$ shorter than $P_{R}^\ast$ is symmetric.
If $Q$ does not contain any edge not enclosed by $\prm M$, then $x$ must in fact lie on $P_R$ and $Q$ is also a shorter path than $P_L$ in $M$ between $l$ and $P_R$, a contradiction.
Assume now that $Q$ contains a subpath $Q[a,b]$ where $a,b \in V(\prm M)$ but every internal vertex of $Q[a,b]$ is not enclosed by $\prm M$ --- see Figure~\ref{fig:largermountain}~(c). We now employ a very similar reasoning as in the previous case. 
Again, by choosing $a$ and $b$ that minimize the number of faces enclosed by $Q[a,b] \cup \prm M[b,a]$, we may assume that no edge of $Q$ is strictly enclosed by $Q[a,b]\cup \prm M[b,a]$. Since $Q$ is a shortest path, we have that $\wei{Q[a,b]} \leq \wei{\prm M[b,a]}$. By the choice of $P$ as the path that minimizes the number of faces enclosed by $M^\ast$, we infer that $Q$ would be a better candidate for $P$ unless $Q[a,b]=P$ (and hence $(a,b)=(u,w)$). 
%By the choice of $Q$ we have also that all the other edges of $Q$ are enclosed by $M$. 
By the definition of $Q[a,b]$, all edges of $Q$ not on $Q[a,b]$ are enclosed by $M$.
Since $v_{M^\ast}$ lies on $P=Q[a,b]$, we infer that $x=v_{M^\ast}=w$. Moreover, again we have that $\wei{P_L[l,u]}=\wei{Q[l,u]}$, since $P_L[l,u]$ is a shortest path in $M$ between $l$ and $u$, and $Q[l,u]$ is enclosed by $M$. Therefore, $\wei{Q}=\wei{Q[l,u]}+\wei{P}=\wei{P_L[l,u]}+\wei{P}=\wei{P_L^\ast}$, a contradiction with the choice of $Q$.

We are left with the case when $x$ is not enclosed by $\prm M$ and $Q$ can be partitioned into $Q[l,z]$ and $Q[z,x]$, where $z \in V(\prm M)$, $z \neq x$, $Q[l,z]$ is enclosed by $\prm M$,
and no edge of $Q[z,x]$ is enclosed by $\prm M$ --- see Figure~\ref{fig:largermountain}~(d). Since $\wei{Q} < \wei{P_L^\ast} \leq \wei{P_L}$ and $M$ is a mountain, $z \in V(P_L) \setminus \{v_M\}$.
We note that $\wei{Q[l,z]} = \wei{P_L[l,z]}$, since $M$ is a mountain, and both $Q[l,z]$ and $P_L[l,z]$ are shortest paths in $M$.
As $\wei{Q} < \wei{P_L^\ast}$, we have $\wei{Q[z,x]}+\wei{P_L[u,z]} < |P_L^\ast[u,v_{M^\ast}]$. Define $\overline{P} := Q[z,x] \cup P_R^\ast[x,w]$, and observe that
\begin{align*}
\wei{\overline{P}}-\wei{\prm M[w, z]} & = \wei{Q[z,x]} + \wei{P_R^\ast[x,w]} - \wei{\prm M[w, z]} \\
& < \wei{P_R^\ast[x,w]}+\wei{P_L^\ast[u,v_{M^\ast}]} - \wei{\prm M[w, z]}-\wei{P_L[u,z]}\\
  &\leq \wei{P} - \wei{\prm M[w,u]}.
\end{align*}
Since $\wei{P} \leq \wei{\prm M[w,u]}$ by assumption, $\wei{\overline{P}} < \wei{\prm M[w,z]}$. 
We infer that $\overline{P}$, instead of $P$, would define a carve with a strictly smaller
number of faces than $M^\ast$, a contradiction; note here that $\overline{P}\neq P$, since then the left-hand side and the right-hand side of the inequality above would need to be equal. This contradicts the choice of $Q$.
%
%Due to symmetric argument, there is no shorter path in $M^\ast$ between $r$ and $P_{L}^\ast$ than $P_{R}^\ast$. Therefore, $M^\ast$ is a $\delta$-mountain and $\prm M^\ast$ enclosest a strict superset of faces enclosed by $\prm M$.
%
\end{proof}

\begin{corollary}\label{cor:maxmountain}
Let $M = \mou{P_L}{P_R}$ be a maximal mountain with summit $v_M$.
Then $\dist_B(v_{M},l) = \dist_B(P_R,l)$ and $\dist_B(v_{M},r) = \dist_B(P_L,r)$.
\end{corollary}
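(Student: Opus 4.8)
The plan is to prove the first identity, $\dist_B(v_M, l) = \dist_B(P_R, l)$; the second, $\dist_B(v_M, r) = \dist_B(P_L, r)$, then follows by the verbatim argument with the roles of $l$ and $r$ (and of $P_L$ and $P_R$) exchanged, since both the definition of a $\delta$-mountain and the notion of maximality are symmetric under this exchange. Since $v_M \in V(P_R)$, the inequality $\dist_B(P_R, l) \le \dist_B(v_M, l)$ is immediate, so the whole content is the reverse inequality $\dist_B(P_R, l) \ge \dist_B(v_M, l)$. Two facts are used throughout: $\wei{P_L} = \dist_M(l, P_R)$ by the definition of a mountain (here $M$ denotes the subgraph of $B$ enclosed by the carve), and $\dist_B(l, v_M) \le \wei{P_L}$ because $P_L = M[l, v_M]$ is an $l$--$v_M$ path in $B$.

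I would argue by contradiction: suppose $\dist_B(P_R, l) < \dist_B(v_M, l)$, and let $Q$ be a shortest $l$--$P_R$ path in $B$, ending at $x \in V(P_R)$. Then $\wei{Q} < \dist_B(v_M, l) \le \wei{P_L}$, and $Q$ is also a shortest $l$--$x$ path in $B$, so every prefix $Q[l,z]$ of $Q$ is a shortest $l$--$z$ path. First I would rule out that $Q$ is enclosed by $\prm M$: if it were, $Q$ would be an $l$--$P_R$ path inside $M$ of weight $\wei{Q} \ge \dist_M(l, P_R) = \wei{P_L} \ge \dist_B(l, v_M) > \wei{Q}$, absurd. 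Hence $Q$ traverses at least one edge not enclosed by $\prm M$. Let $a$ be the last vertex of the longest prefix of $Q$ whose edges are all enclosed by $\prm M$; as the following edge of $Q$ leaves $M$, we have $a \in V(\prm M)$. Let $b$ be the first vertex of $Q$ strictly after $a$ that lies on $\prm M$, which exists because $x \in V(P_R) \subseteq V(\prm M)$. By planarity, $Q[a,b]$ then meets $\prm M$ only at its endpoints $a \ne b$ and all its edges lie strictly outside $\prm M$.

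The key step is to feed $Q[a,b]$ into Lemma~\ref{lem:largermountain} to contradict the maximality of $M$. Since $Q[a,b]$ is a path lying strictly outside the simple cycle $\prm M$ with both endpoints on $\prm M$, exactly one of the two arcs of $\prm M$ between $a$ and $b$, call it $A_{\mathrm{far}}$, has the property that the closed walk $A_{\mathrm{far}} \cup Q[a,b]$ encloses $M$; write $A_{\mathrm{near}}$ for the complementary arc. Apply Lemma~\ref{lem:largermountain} with $P := Q[a,b]$ and $\{u,w\} = \{a,b\}$ chosen so that $\prm M[u,w] = A_{\mathrm{far}}$, hence $\prm M[w,u] = A_{\mathrm{near}}$. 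Condition~(1) holds since $Q[a,b]$ contains no edge strictly enclosed by $\prm M$ and $A_{\mathrm{far}} \cup Q[a,b]$ encloses $M$; condition~(2) holds since $Q[a,b]$ contains an edge outside $\prm M$, so $Q[a,b] \ne A_{\mathrm{near}}$. For condition~(3), observe that $Q[l,a] \cup A_{\mathrm{near}}$ is an $l$--$b$ walk in $B$, whence $\wei{Q[l,a]} + \wei{A_{\mathrm{near}}} \ge \dist_B(l,b) = \wei{Q[l,b]} = \wei{Q[l,a]} + \wei{Q[a,b]}$, i.e. $\wei{Q[a,b]} \le \wei{A_{\mathrm{near}}} = \wei{\prm M[w,u]}$. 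Lemma~\ref{lem:largermountain} then declares $M$ non-maximal, the desired contradiction; so $\dist_B(v_M,l) = \dist_B(P_R,l)$, and the symmetric identity follows in the same way.

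I expect the planar-topology bookkeeping in the middle paragraph to be the main obstacle: pinning down the first excursion $Q[a,b]$ of $Q$ outside $\prm M$, verifying that it is internally disjoint from $\prm M$ and lies strictly outside $M$, and identifying the arc $A_{\mathrm{far}}$ for which the closed walk encloses $M$ so that hypothesis~(1) of Lemma~\ref{lem:largermountain} is met. By contrast, the length inequality in hypothesis~(3) is essentially free once we use that $Q$ is a shortest path, and the remaining verifications are routine.
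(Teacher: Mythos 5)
Your proof is correct and takes essentially the same route as the paper's: both locate the first excursion of a shortest $l$--$P_R$ path outside $\prm M$ and feed the resulting subpath into Lemma~\ref{lem:largermountain}, using the maximality of $M$ to reach a contradiction. The only difference is cosmetic --- the paper applies the lemma contrapositively to obtain $\wei{Q} > \wei{\prm M[w,u]}$ and contradict the shortest-path property, whereas you verify hypotheses (1)--(3) directly and conclude non-maximality --- together with the fact that you spell out the topological bookkeeping (pinning down $Q[a,b]$ and identifying $A_{\mathrm{far}}$) that the paper compresses into one sentence.
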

\begin{proof}
We prove $\dist_B(v_{M},l) = \dist_B(P_R,l)$; the other case is symmetric. Clearly, $\dist_B(v_{M},l) \geq \dist_B(P_R,l)$, so it remains to prove an
inequality in the other direction.
Let $P$ be a shortest path between $P_R$ and $l$.
We claim that $P$ is actually enclosed by $M$; if this is the case then, by the definition of mountain, $\wei{P} \geq \wei{P_L} \geq \dist_B(v_{M},l)$ and the lemma is proven.

Assume the contrary, and let $Q$ be a subpath of $P$ with endpoints $u,w \in V(\prm M)$, such that all edges of $Q$ are not enclosed by $M$
and, moreover, the closed walk $\prm M[u,w] \cup Q$ encloses $M$.
By Lemma~\ref{lem:largermountain}, $\wei{Q} > \wei{\prm M[w,u]}$, a contradiction to the fact that $P$ is a shortest path in~$B$.
\end{proof}

\subsection{Untangling maximal mountains}
We now show a result that implies that the boundaries of two distinct maximal mountains $M^1 = \mou{P_L^1}{P_R^1}$ and $M^2 = \mou{P_L^2}{P_R^2}$ cannot cross each other (in a topological sense) more than twice, because then we can find a shortcut either inside one of the mountains (which contradicts Lemma~\ref{lem:shortcut}) or outside one of the mountains (which contradicts Lemma~\ref{lem:largermountain}).
%Our goal is now to analyse how two $\delta$-mountains can interfere. That is, we consider two $\delta$-mountains $M^1 = \mou{P_L^1}{P_R^1}$ and $M^2 = \mou{P_L^2}{P_R^2}$ and we would like to understand how they can intersect in the brick $B$.
We assume that both summits of $M^1$ and $M^2$ are present in $B$, that is, the corresponding edges have already been subdivided if needed.

\subsubsection{From mountains to curves}
%Recall that we consider two $\delta$-mountains $M^1 = \mou{P_L^1}{P_R^1}$ and $M^2 = \mou{P_L^2}{P_R^2}$ and consider the following construction.
To build a topological understanding of how the two mountains interact, we build a representation of them as Jordan curves. %These curves divide the plane into several regions, which we can use to build understanding of intersection between the two mountains.

First, we duplicate each edge of $B$ to obtain a brick $B_2$; the copies of the edges are drawn in parallel in the plane, without
any other part of $B_2$ in between.
Second, we project $\prm M^1$ and $\prm M^2$ onto $B_2$ in the following manner. For each $e \in \prm M^1$ ($e \in \prm M^2$)
we choose one copy of $e$ to belong to $\prm M^1$ ($\prm M^2$) in $B_2$. If $e \in \prm M^1 \cap \prm M^2$, then
one copy of $e$ belongs to $\prm M^1$ and the second one to $\prm M^2$ in $B_2$, so that $\prm M^1$
and $\prm M^2$ are edge-disjoint in $B_2$. By abuse of notation, we often consider $\prm M^1$ and $\prm M^2$ both as walks in $B$ and in $B_2$.

A vertex $v$ is a {\em{traversal vertex}} if both $\prm M^1$ and $\prm M^2$ pass though $v$ and they cross in $v$ in the graph $B_2$;
that is, among the four edges of $\prm M^1 \cup \prm M^2$ incident to $v$ considered in counter-clockwise order around $v$, the odd-numbered edges belong to one mountain,
and the even-numbered to the second mountain.
In the process of choosing copies of an edge $e \in \prm M^1 \cup \prm M^2$,
we minimize the number of traversal vertices of $B_2$ and, minimizing this number,
we secondly minimize the number of traversal vertices of $B_2$ that are not equal to $l$ or $r$.
Clearly, if $e \in \prm M^1 \triangle \prm M^2$, the choice of the copy of $e$ does not influence the set of transversal vertices,
but the aforementioned minimization criterium regularizes the choice whenever $e \in \prm M^1 \cap \prm M^2$.
In particular, we note the following.

\begin{lemma}\label{lem:border-traversal}
No internal vertex of $\prm B[l,r]$ is a traversal vertex.
\end{lemma}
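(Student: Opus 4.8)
\textbf{Proof plan for Lemma~\ref{lem:border-traversal}.}

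The plan is to argue by contradiction, exploiting the minimality of the number of traversal vertices in the construction of $B_2$. Suppose some internal vertex $v$ of $\prm B[l,r]$ is a traversal vertex. Since $\prm B[l,r]$ is a subpath of both $\prm M^1$ and $\prm M^2$ (recall that $\prm M^i = P_L^i \cup P_R^i \cup \prm B[l,r]$ for $i = 1,2$), every internal vertex of $\prm B[l,r]$ is incident, in $B$, to exactly two edges of $\prm B[l,r]$ and these edges lie in the common part $\prm M^1 \cap \prm M^2$. After duplication each such edge has two parallel copies drawn next to each other in $B_2$. So around $v$ in $B_2$ we see, in counter-clockwise order, the two copies of one incident perimeter edge, then the two copies of the other incident perimeter edge; the copies of a single edge are consecutive because duplicated edges are drawn in parallel with nothing between them. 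The key point is that we are free to assign, for each of the two edges of $\prm B[l,r]$ at $v$, which copy goes to $\prm M^1$ and which to $\prm M^2$; and with the ``parallel'' drawing convention, there is always a choice that makes $v$ \emph{non}-traversal.

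The main step is therefore a local rerouting argument at $v$. First I would observe that since $v$ is internal to $\prm B[l,r]$ and $\prm B[l,r]$ is the carvebase of both mountains, no edge of $P_L^i$ or $P_R^i$ is incident to $v$ for either $i$ (an edge of the carvemark incident to an internal carvebase vertex would contradict strictness of the carves, which we arranged in the ``preliminary simplification steps''); hence all four edges of $\prm M^1 \cup \prm M^2$ meeting $v$ in $B_2$ are copies of the two perimeter edges $e_1, e_2 \in \prm B[l,r]$ at $v$. The cyclic order around $v$ in $B_2$ is (copies of $e_1$), (copies of $e_2$). For $v$ to be traversal we would need the two $\prm M^1$-edges to interleave with the two $\prm M^2$-edges in this cyclic order, i.e.\ one copy of $e_1$ and one copy of $e_2$ in $\prm M^1$, and the other copies in $\prm M^2$ --- but then the $\prm M^1$-copy of $e_1$ and the $\prm M^1$-copy of $e_2$ are \emph{not} adjacent around $v$, which forces a crossing. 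Now reassign: let $\prm M^1$ take the copy of $e_1$ that is adjacent (in the cyclic order) to the copy of $e_2$ it already owns, i.e.\ swap the two copies of $e_1$ between $\prm M^1$ and $\prm M^2$. This is a legal choice (both edges are in $\prm M^1 \cap \prm M^2$), it does not change which edges of $B$ lie on $\prm M^1$ or on $\prm M^2$, it changes the copy assignment only at the two copies of $e_1$ --- hence only possibly affects the traversal status of the two endpoints of $e_1$ --- and at $v$ it now makes the two $\prm M^1$-copies consecutive and the two $\prm M^2$-copies consecutive, so $v$ becomes non-traversal. The one subtlety is to check that this swap does not create a new traversal vertex at the \emph{other} endpoint $v'$ of $e_1$, or net-increase the count: one handles this by choosing, among all internal traversal vertices of $\prm B[l,r]$, one that is ``outermost'' (closest to $l$ or to $r$ along $\prm B[l,r]$), so that the swap can be propagated toward the nearer of $l,r$, strictly decreasing the number of traversal vertices different from $l,r$ and not increasing the total --- contradicting the minimality clause in the construction of $B_2$.

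The part I expect to be the main obstacle is making the ``propagate the swap toward $l$ or $r$'' bookkeeping precise: a single swap at $e_1$ may merely move the traversal vertex from $v$ to $v'$ rather than eliminate it, so one must swap along a whole maximal run of consecutive edges of $\prm B[l,r]$ all lying in $\prm M^1 \cap \prm M^2$ until reaching a vertex that is either $l$, or $r$, or where the two mountains genuinely diverge (one of the mountains leaves $\prm B[l,r]$ via a carvemark edge). In the first two cases the swap strictly reduces the count of traversal vertices outside $\{l,r\}$; in the last case it reduces the count of traversal vertices outright, because the divergence vertex was not traversal to begin with (a carvemark edge and a perimeter edge can be assigned copies without crossing). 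Either way the minimality of the chosen copy-assignment is violated, and the contradiction is complete. Everything else is routine planar-local case analysis on the four-edge cyclic order around a vertex in $B_2$.
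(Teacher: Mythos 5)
Your proof is correct and uses essentially the same idea as the paper's: swap the copy assignment on a suffix $\prm B[x,r]$ of the carvebase, observe that $x$ ceases to be a traversal vertex, internal vertices of $\prm B[x,r]$ keep their status (both incident edges are swapped), and at worst $r$ becomes traversal --- contradicting the minimization criterium. The paper's version is just more direct: it takes an arbitrary internal traversal vertex $x$ and swaps all of $\prm B[x,r]$ at once, so your ``outermost vertex'' selection and the ``divergence vertex'' case are unnecessary --- the latter cannot occur, since after the preliminary simplification the carvemarks are induced by strict carves and meet $\prm B[l,r]$ only at $l$ and $r$.
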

\begin{proof}
Assume otherwise, let $x \in V(\prm B[l,r])$, $x \neq l,r$ be a traversal vertex.
Consider the following change: for each edge $e \in \prm B[x,r]$, swap the copies of $e$ that belong to $\prm M^1$ and $\prm M^2$.
In this manner, $x$ stops to be a traversal vertex, all internal vertices of $\prm B[x,r]$ are traversal vertices
if and only if they were traversal vertices before the change, and $r$ may become a traversal vertex.
Thus, we either decrease the number of traversal vertices, or do not change it while decreasing the number of traversal vertices
not equal to $l$ and $r$. This contradicts the minimization criterium for the choice of $\prm M^1$ and $\prm M^2$ in $B_2$.
\end{proof}

Now, for each $v \in V(B_2) = V(B)$ we pick a small closed disc $D_v$ in the plane, with the drawing of $v$ at its centre,
and with radius small enough so that $D_v$ contains $v$ and small starting segments of a drawing of each edge of $B_2$ incident to $v$.
For $\alpha=1,2$, we associate the following closed Jordan curve $\gamma^\alpha$ with the cycle $\prm M^\alpha$ in $B_2$: we take the drawing of $\prm M^\alpha$
and for each $v \in V(\prm M^\alpha)$ we replace $D_v \cap \prm M^\alpha$ with the straight line segment $S_v^\alpha$ connecting the two points of $\prm D_v \cap \prm M^\alpha$.
We note that $\prm D_v \cap \prm M^\alpha$ consists of exactly two points since $\prm M^\alpha$ is a simple cycle.
Moreover, $S_v^\alpha \subseteq D_v$.
Consequently, $\gamma^\alpha$ is a closed Jordan curve without self-intersections.
The important properties of this construction are summarized in the following lemmata.

\begin{lemma}\label{lem:gammas}
$\gamma^1 \cap \gamma^2$ consists of exactly one point in each disc $D_v$ where $v$ is a traversal vertex, and nothing more.
Moreover, for each $p \in \gamma^1 \cap \gamma^2$, the curves $\gamma^1$ and $\gamma^2$ traverse each other in the following sense:
there exists an open neighbourhood $O_p$ of $p$ in the plane such that $\gamma^\alpha \cap O_p$ splits $\gamma^{3-\alpha} \cap O_p$
into two connected sets for $\alpha=1,2$.
In particular, $|\gamma^1 \cap \gamma^2|$ is finite and even.
\end{lemma}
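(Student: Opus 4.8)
The plan is to confine every intersection of $\gamma^1$ and $\gamma^2$ to the vertex discs and then read it off from the combinatorics of traversal vertices. First I would note that $\gamma^\alpha$ coincides with the drawing of $\prm M^\alpha$ outside $\bigcup_v D_v$, and equals the straight chord $S_v^\alpha$ inside each $D_v$ (defined only for $v\in V(\prm M^\alpha)$). Since $\prm M^1$ and $\prm M^2$ are edge-disjoint subgraphs of $B_2$, and in $B_2$ the drawings of any two distinct edges are disjoint once small neighbourhoods of their endpoints are removed (the two parallel copies of an original edge bound a bigon and do not cross; distinct original edges of $B$ meet only at a common vertex), the parts of $\gamma^1$ and $\gamma^2$ lying outside $\bigcup_v D_v$ are disjoint. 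Choosing each $D_v$ small enough that distinct edges of $B_2$ at $v$ meet $\prm D_v$ in distinct points, edge-disjointness makes the endpoints of $S_v^1$ and of $S_v^2$ four distinct points of $\prm D_v$, so no point of any $\prm D_v$ lies on both curves either. Hence $\gamma^1\cap\gamma^2=\bigcup_v\bigl(S_v^1\cap S_v^2\bigr)$, and it remains to analyse a single vertex.

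Next I would fix $v\in V(\prm M^1)\cap V(\prm M^2)$ (otherwise one of $S_v^1,S_v^2$ is undefined and the contribution in $D_v$ is empty). As $\prm M^1,\prm M^2$ are simple cycles, each uses exactly two edge-ends at $v$, and by edge-disjointness these four edge-ends are distinct; write $\{x_1,y_1\}$ and $\{x_2,y_2\}$ for the corresponding endpoints of $S_v^1$ and $S_v^2$ on $\prm D_v$. By definition $v$ is a traversal vertex exactly when the cyclic order of the four edge-ends around $v$ alternates between $\prm M^1$ and $\prm M^2$, i.e. exactly when $\{x_1,y_1\}$ and $\{x_2,y_2\}$ interleave on $\prm D_v$. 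Two straight chords of a disc with distinct endpoints meet iff their endpoint pairs interleave on the boundary circle, and when they do they meet in a single interior point, transversally. Thus $S_v^1\cap S_v^2$ is empty when $v$ is not a traversal vertex and is a single transversal crossing point $p_v$ when it is. This proves the first assertion of the lemma, and also finiteness of $\gamma^1\cap\gamma^2$, since there are at most $|V(B)|$ traversal vertices.

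For the "traversal" assertion, at each $p=p_v$ I would take $O_p$ to be a small open disc around $p$ inside $\mathrm{int}(D_v)$ and avoiding the endpoints of $S_v^1,S_v^2$; then $\gamma^\alpha\cap O_p=S_v^\alpha\cap O_p$ is a chord of $O_p$ through $p$, and since $S_v^1,S_v^2$ are non-collinear straight segments crossing at $p$, deleting $\gamma^1\cap O_p$ splits $\gamma^2\cap O_p$ into its two half-segments, and symmetrically. Finally, for parity I would traverse the Jordan curve $\gamma^1$ while tracking which component of $\plane\setminus\gamma^2$ we are in: by the Jordan curve theorem there are exactly two such components, $\gamma^1$ stays in one of them on each arc between consecutive points of $\gamma^1\cap\gamma^2$, and by the local transversality just established it switches component at every such point; being closed, $\gamma^1$ returns to its starting component, so $|\gamma^1\cap\gamma^2|$ is even. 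The only genuinely delicate points are the bookkeeping in the first paragraph — fixing the $B_2$-drawing and the disc radii so that all intersections lie in disc interiors and faithfully record the combinatorial crossing pattern at vertices — and spelling out the standard fact that a closed curve crossing a Jordan curve transversally meets it an even number of times; neither hides any real difficulty, the latter being the usual mod-$2$ intersection argument.
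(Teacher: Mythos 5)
Your proof is correct and follows essentially the same route as the paper's: confine intersections to the vertex discs using edge-disjointness in $B_2$, observe that $S_v^1$ and $S_v^2$ are chords with four distinct endpoints that cross if and only if they interleave (i.e., iff $v$ is a traversal vertex), and conclude transversality and parity. The paper's version is terser, leaving the chord-interleaving observation and the mod-$2$ parity argument implicit, while you spell them out, but there is no substantive difference in strategy.
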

\begin{proof}
The first claim follows from the fact that $\prm M^1$ and $\prm M^2$ are edge-disjoint in $B_2$, so the points of $\prm D_v \cap \prm M^1$
and $\prm D_v \cap \prm M^2$ are pairwise distinct, and the segments $S_v^1$ and $S_v^2$ intersect if and only if $v$ is a traversal vertex.
For any traversal vertex $v$, if we take a small open disc $O_p$ centred in $S_v^1 \cap S_v^2$ and contained in $D_v$, then $O_p \cap \gamma^1$ and
$O_p \cap \gamma^2$ are two straight segments intersecting in the centre of $O_p$, which proves the second claim.
\end{proof}

\begin{lemma}\label{lem:curves-disjoint}
$\gamma^1 \cap \gamma^2 \not= \emptyset$.
\end{lemma}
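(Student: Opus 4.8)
The plan is to deduce $\gamma^1\cap\gamma^2\neq\emptyset$ from the fact that, since $M^1$ and $M^2$ are \emph{distinct} maximal mountains, the regions they enclose genuinely overlap: neither region contains the other, yet the two regions are not disjoint. For $\alpha\in\{1,2\}$, write $\regs^\alpha$ for the set of faces of $B$ enclosed by the cycle $\prm M^\alpha$, and $U^\alpha$ for the bounded component of $\plane\setminus\gamma^\alpha$. First I would record that $U^\alpha$ and the set of points enclosed by $\prm M^\alpha$ differ only within $\bigcup_v D_v$: indeed, $\gamma^\alpha$ arises from the drawing of $\prm M^\alpha$ in $B_2$ by replacing, inside each $D_v$, the arc $\prm M^\alpha\cap D_v$ by the chord $S_v^\alpha\subseteq D_v$. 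Taking the radii of the discs $D_v$ small enough (as we may), every face $h$ of $B$ contains a point $p_h$ in its interior with $p_h\notin\bigcup_v D_v$; for such a point $p_h\notin\gamma^\alpha$, and $p_h\in U^\alpha$ if and only if $h\in\regs^\alpha$.

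Next I would pin down three faces. As $M^1$ and $M^2$ are distinct maximal mountains, $\regs^1\neq\regs^2$, and maximality forces neither of $\regs^1,\regs^2$ to contain the other; fix $f_1\in\regs^1\setminus\regs^2$ and $f_2\in\regs^2\setminus\regs^1$. Moreover $\regs^1\cap\regs^2\neq\emptyset$: both $\prm M^1$ and $\prm M^2$ contain the edge $e_0$ of the common carvebase $\prm B[l,r]$ incident to $l$, and since their carvemarks meet $\prm B$ only in $l$ and $r$ (recall that we have arranged the mountains to be induced by strict carves), both cycles enclose the face $f_0$ of $B$ lying on the interior side of $e_0$; hence $f_0\in\regs^1\cap\regs^2$. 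Choosing $p_j\in f_j\setminus\bigcup_v D_v$ for $j\in\{0,1,2\}$, the previous paragraph gives $p_1\in U^1\setminus U^2$, $p_2\in U^2\setminus U^1$, and $p_0\in U^1\cap U^2$.

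Finally I would argue by contradiction. Suppose $\gamma^1\cap\gamma^2=\emptyset$. Then the connected set $\gamma^2$ lies either in $U^1$ or in $\plane\setminus\overline{U^1}$, and symmetrically for the other curve; invoking the standard dichotomy for two disjoint Jordan curves --- they are either nested or have disjoint bounded interiors --- exactly one of $U^1\subseteq U^2$, $U^2\subseteq U^1$, $U^1\cap U^2=\emptyset$ holds. The first is refuted by $p_1$, the second by $p_2$, and the third by $p_0$. This contradiction proves $\gamma^1\cap\gamma^2\neq\emptyset$.

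I expect the only delicate point to be the first step --- verifying that shrinking the discs $D_v$ and replacing arcs by chords inside them leaves the inside/outside status of points picked in face interiors away from the discs unchanged, so that enclosure by the cycle $\prm M^\alpha$ faithfully translates to membership in $U^\alpha$ --- together with citing the ``nested or disjoint'' dichotomy for disjoint Jordan curves in the right form. Once the three witness points $p_0,p_1,p_2$ are available, the case analysis is routine.
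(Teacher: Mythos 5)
Your proof is correct and follows essentially the same approach as the paper: both arguments exhibit a face enclosed by both mountains (you pick one incident to the carvebase at $l$, the paper points to faces incident to $\prm B[l,r]$), invoke the dichotomy for two disjoint Jordan curves, and observe that the nested case would contradict the fact that $M^1$ and $M^2$ are distinct maximal mountains. Your treatment is slightly more explicit in naming the three witness points $p_0,p_1,p_2$ and in checking that the chord-replacement inside each $D_v$ does not disturb enclosure of points chosen away from the discs, but the underlying idea is identical.
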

\begin{proof}
Note that all finite faces incident to $\prm B[l,r]$ are enclosed by $\prm M^1$ and $\prm M^2$.
Consequently, if $\gamma^1 \cap \gamma^2 = \emptyset$, then $\gamma^1$ encloses $\gamma^2$ or vice versa. Therefore, $M^{1} \subseteq M^{2}$ or vice versa, which contradicts that $M^{1}$ and $M^{2}$ are two distinct maximal mountains.
\end{proof}

\subsubsection{Regions, elementary regions, and their properties} \label{sss:basics}
Observe that since $\gamma^1 \cap \gamma^2 \neq \emptyset$ by Lemma~\ref{lem:curves-disjoint}, the curves $\gamma^1$ and $\gamma^2$ induce a set of Jordan regions in the plane; denote this set by $\regs$. The goal of this section is to analyse $\regs$.
%Since we want to understand maximal mountains, from this point to the end of Section~\ref{sss:basics} we assume that $\gamma^1$ and $\gamma^2$ intersect in at least one point.
%Note, that if the mountains $M^1$ and $M^2$ are maximal then $\gamma^1$ and $\gamma^2$ need to cross, as otherwise
%$\prm M^1$ encloses $\prm M^2$ or vice versa (note that they both enclose finite faces of $B$ incident to $\prm B[l,r]$).

Lemma~\ref{lem:gammas} immediately implies the following.
\begin{lemma}\label{lem:region-border}
For each region $R \in \regs$,
the border of $R$ can be partitioned into an even number of subcurves $\gamma_1,\gamma_2,\ldots,\gamma_{2s}$ of positive length,
    appearing on the border in counter-clockwise order,
where $\gamma_1,\gamma_3,\ldots,\gamma_{2s-1} \subseteq \gamma^1$ and $\gamma_2,\gamma_4,\ldots,\gamma_{2s} \subseteq \gamma^2$.
The number $s$ and the choice of the curves is unique up to a cyclic shift of the indices.
\end{lemma}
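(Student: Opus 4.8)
The plan is to derive Lemma~\ref{lem:region-border} directly from Lemma~\ref{lem:gammas}, using only standard facts about Jordan curves in the plane. Fix a region $R \in \regs$; by definition $R$ is one of the connected components of $\plane \setminus (\gamma^1 \cup \gamma^2)$, and its topological boundary $\prm R$ is contained in $\gamma^1 \cup \gamma^2$. Since both $\gamma^1$ and $\gamma^2$ are simple closed curves, $\prm R$ is a finite union of closed arcs, each lying on $\gamma^1$ or on $\gamma^2$; the endpoints of these arcs must be points of $\gamma^1 \cap \gamma^2$, which is a finite set by Lemma~\ref{lem:gammas}. The first step is to argue that $\prm R$ is itself a simple closed curve: this is the classical fact that the boundary of a face of an arrangement of two simple closed curves that cross transversally is a simple cycle in the induced plane graph (whose vertices are the intersection points and whose edges are the arcs between consecutive intersection points). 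Transversality is exactly what Lemma~\ref{lem:gammas} supplies — at every point of $\gamma^1 \cap \gamma^2$ the two curves cross rather than touch, so locally around each such point the four arc-ends alternate between the two faces on either side, and hence each intersection point is incident to exactly two boundary arcs of $R$.

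Granting that $\prm R$ is a simple closed curve, I would then traverse it in counter-clockwise order (with respect to the bounded side, if $R$ is bounded; the unbounded region needs a separate remark, handled by traversing so that $R$ lies, say, to the left). This traversal visits a cyclic sequence of maximal arcs $\gamma_1, \gamma_2, \ldots, \gamma_m$, where each $\gamma_j$ is a subarc of $\gamma^1$ or of $\gamma^2$, each has positive length, and consecutive arcs share exactly one endpoint, which is a point of $\gamma^1 \cap \gamma^2$. The key claim is that consecutive arcs must come from \emph{different} curves: if $\gamma_j$ and $\gamma_{j+1}$ were both subarcs of $\gamma^\alpha$, then their common endpoint $p$ is a point of $\gamma^1 \cap \gamma^2$, so by the transversality in Lemma~\ref{lem:gammas} the curve $\gamma^{3-\alpha}$ crosses $\gamma^\alpha$ at $p$, which forces $\gamma^{3-\alpha}$ to contribute a boundary arc of $R$ incident to $p$ on one of the two sides — contradicting that both boundary arcs at $p$ lie on $\gamma^\alpha$. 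Hence the arcs strictly alternate between $\gamma^1$ and $\gamma^2$, which immediately gives that $m = 2s$ is even, that odd-indexed arcs lie on $\gamma^1$ and even-indexed arcs on $\gamma^2$ (after fixing which curve the traversal starts on), and that the decomposition is unique up to cyclic shift — the arcs are the maximal monochromatic pieces of $\prm R$, hence canonically determined, and the only freedom is where the cyclic labelling begins.

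The step I expect to be the main obstacle is the first one: rigorously establishing that each boundary component of a region of the two-curve arrangement is a single simple closed curve, and in particular that exactly two boundary arcs meet at each intersection point. A region could a priori have a disconnected boundary, or "pinch points" where more than two arcs meet; ruling this out uses the transversality from Lemma~\ref{lem:gammas} (no tangencies, finite and even intersection, local crossing structure) together with the Jordan curve theorem applied to $\gamma^1$ and to $\gamma^2$ separately. Concretely, one observes that $\plane \setminus \gamma^1$ has two components and $\plane \setminus \gamma^2$ has two components, so every region $R$ is an intersection of one side of $\gamma^1$ with one side of $\gamma^2$ intersected with a component, and a short local analysis at each point $p \in \gamma^1 \cap \gamma^2$ using the neighbourhood $O_p$ of Lemma~\ref{lem:gammas} shows that the four local quadrants around $p$ alternate between the at most four regions touching $p$, so that $R$ occupies either one or two opposite quadrants and contributes exactly two boundary arc-ends at $p$. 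Once this local picture is in hand, the global statement follows by standard planar-arrangement bookkeeping, and the alternation and parity conclusions of the lemma are then immediate.
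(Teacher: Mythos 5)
Your proof is correct and amounts to carefully unpacking what the paper dispatches with the single remark ``Lemma~\ref{lem:gammas} immediately implies the following,'' so the approach is the same: use the transversal-crossing structure from Lemma~\ref{lem:gammas} to show the region boundary is a simple cycle of arcs that strictly alternate between $\gamma^1$ and $\gamma^2$. One small inaccuracy in your local analysis: the hedge ``$R$ occupies either one or two opposite quadrants'' should just be ``one quadrant'' --- since $R$ is connected and disjoint from each curve it lies on a definite side of $\gamma^1$ and a definite side of $\gamma^2$, whereas opposite quadrants at $p$ differ on both sides; this also makes your ``exactly two arc-ends'' conclusion consistent (two opposite quadrants would give four arc-ends).
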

Moreover, note that, since $\prm M^1$ and $\prm M^2$ are simple cycles, a face incident to $\prm M^1$ is enclosed by $\prm M^1$
($\prm M^2$) if and only if it lies to the left, if we walk along $\prm M^1$ ($\prm M^2$) in  counter-clockwise direction.
By this observation, and by the construction of the curves $\gamma^1$ and $\gamma^2$, the following is immediate.

\begin{lemma}\label{lem:plus-minus}
For each $\alpha=1,2$ and for each region $R \in \regs$, the set $R \setminus \bigcup_{v \in V(B_2)} D_v$ is either completely
enclosed by $\prm M^\alpha$ or no point of this set is strictly enclosed by $\prm M^\alpha$.
\end{lemma}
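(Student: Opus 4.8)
The plan is to compare, point by point, the Jordan curve $\gamma^\alpha$ with the Jordan curve given by the drawing of the simple cycle $\prm M^\alpha$, exploiting the fact that, by the very construction of $\gamma^\alpha$, these two curves coincide everywhere outside the small discs $\{D_v\}_{v\in V(B_2)}$. I would first record some harmless assumptions about the discs: that they are pairwise disjoint (shrink them if necessary) and that each $D_v$ is small enough so that $\prm M^\alpha\cap D_v$ is a single arc through $v$ meeting $\prm D_v$ in exactly two points $a_v^\alpha,b_v^\alpha$, while $S_v^\alpha$ is the straight chord joining $a_v^\alpha$ to $b_v^\alpha$. By construction, $\gamma^\alpha$ and $\prm M^\alpha$ agree on $\plane\setminus\bigcup_v D_v$, whereas inside $D_v$ the curve $\gamma^\alpha$ follows $S_v^\alpha$ and $\prm M^\alpha$ follows the arc $\prm M^\alpha\cap D_v$. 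Since $R$ is an open connected set disjoint from $\gamma^1\cup\gamma^2$, the set $U:=R\setminus\bigcup_v D_v$ is disjoint from $\gamma^\alpha$ and therefore also from (the drawing of) $\prm M^\alpha$; hence for points of $U$ "enclosed by $\prm M^\alpha$'' and "strictly enclosed by $\prm M^\alpha$'' are synonymous, and the statement reduces to proving that $U$ lies entirely in the bounded face of $\prm M^\alpha$ or entirely in its unbounded face.

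The heart of the argument is the claim that for every $p\in U$, $p$ is strictly enclosed by $\prm M^\alpha$ if and only if $p$ is strictly enclosed by $\gamma^\alpha$. To prove this I would fix a point $q$ in the unbounded face of both Jordan curves and choose a path $\eta$ from $p$ to $q$ in general position with respect to $\prm M^\alpha$, $\gamma^\alpha$, and every $\prm D_v$ (finitely many transversal crossings, avoiding all the points $a_v^\alpha,b_v^\alpha$ and all vertices of $B_2$). By the Jordan curve theorem, $p$ is strictly enclosed by a given Jordan curve precisely when $\eta$ crosses it an odd number of times, so it suffices to show $|\eta\cap\prm M^\alpha|\equiv|\eta\cap\gamma^\alpha|\pmod 2$. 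The two curves agree outside $\bigcup_v D_v$, so it is enough to check, for each disc $D_v$ and each maximal subarc $\eta'$ of $\eta\cap D_v$ (with endpoints $x,y\in\prm D_v\setminus\{a_v^\alpha,b_v^\alpha\}$ and relative interior inside the open disc), that $\eta'$ crosses the arc $\prm M^\alpha\cap D_v$ and the chord $S_v^\alpha$ the same number of times modulo $2$. Both $\prm M^\alpha\cap D_v$ and $S_v^\alpha$ are arcs inside $D_v$ with the same endpoints, and each of them splits $D_v$ into two components whose intersections with $\prm D_v$ are exactly the two arcs of $\prm D_v\setminus\{a_v^\alpha,b_v^\alpha\}$. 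Consequently $\eta'$ crosses either of these two arcs an even number of times when $x$ and $y$ lie on the same arc of $\prm D_v\setminus\{a_v^\alpha,b_v^\alpha\}$, and an odd number of times otherwise; in both cases the two parities agree. Summing over subarcs and discs yields the desired congruence, proving the claim. (For $v\notin V(\prm M^\alpha)$ there is nothing to check, since then $\gamma^\alpha\cap D_v=\prm M^\alpha\cap D_v=\emptyset$.)

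Given the claim, the lemma follows in a few lines. Since $R$ is connected and disjoint from $\gamma^\alpha$, it lies in a single face of $\gamma^\alpha$, hence so does $U\subseteq R$; by the claim, $U$ then lies in a single face of $\prm M^\alpha$. If this is the bounded face, every point of $U$ is strictly enclosed by $\prm M^\alpha$, i.e.\ $U$ is completely enclosed; if it is the unbounded face, no point of $U$ is strictly enclosed by $\prm M^\alpha$. This is exactly the claimed dichotomy, for both $\alpha=1$ and $\alpha=2$.

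I expect the only real obstacle to be the bookkeeping inside the discs: being careful that the discs can be taken pairwise disjoint and small enough that $\prm M^\alpha\cap D_v$ is one well-behaved arc with the same endpoints as $S_v^\alpha$, that $\gamma^\alpha$ and $\prm M^\alpha$ differ \emph{only} inside $\bigcup_v D_v$, and that a general-position path accumulates matching crossing parities with the arc and with the chord in each disc. Everything else is a routine invocation of the Jordan curve theorem, together with the observation already recorded before the lemma, namely that a face incident to $\prm M^\alpha$ is enclosed by $\prm M^\alpha$ exactly when it lies to the left as one traverses $\prm M^\alpha$ counter-clockwise.
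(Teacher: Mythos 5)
Your proof is correct, and the underlying idea — that $\gamma^\alpha$ and the drawing of $\prm M^\alpha$ agree everywhere outside the discs $D_v$, so that the "strictly enclosed" status of a point of $U=R\setminus\bigcup_v D_v$ cannot change when one passes from $\gamma^\alpha$ to $\prm M^\alpha$ — is precisely what the paper is gesturing at when it declares the lemma "immediate by this observation and by the construction of $\gamma^1,\gamma^2$". Where the paper dispatches this in one line, you make the comparison rigorous via the standard crossing-parity version of the Jordan curve theorem: outside the discs the two curves coincide, and inside each $D_v$ the arc $\prm M^\alpha\cap D_v$ and its replacement chord $S_v^\alpha$ have the same two endpoints on $\prm D_v$, hence partition $D_v$ into two components meeting $\prm D_v$ in the same pair of boundary arcs, so a generic subarc of $\eta$ within $D_v$ has the same crossing parity with either. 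This is a clean, self-contained justification of the claim the paper takes as obvious. A marginally slicker packaging of the same idea is to note that $\gamma^\alpha$ and $\prm M^\alpha$ are homotopic in $\plane\setminus\{p\}$ for any $p\notin\bigcup_v D_v$ (slide each arc to its chord inside the corresponding disc), hence have the same winding number about $p$; this avoids the parity bookkeeping but is not essentially different. One small remark: the closing sentence of your proposal appeals to the "enclosed iff to the left" observation that the paper states just before the lemma, but your own argument never actually uses it — the parity claim already does all the work — so that reference could be dropped.
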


Lemmata~\ref{lem:region-border} and \ref{lem:plus-minus} motivate the following definitions.

\begin{definition}[elementary region]
We say that a region $R \in \regs$ is {\em{elementary}} if its border can be partitioned
into two curves $\gamma_1,\gamma_2$ with $\gamma_1 \subseteq \gamma^1$ and $\gamma_2 \subseteq \gamma^2$.
That is, $s=1$ in the statement of Lemma~\ref{lem:region-border} for the region $R$.
\end{definition}

\begin{definition}
We partition $\regs = \regspm{++} \cup \regspm{+-} \cup \regspm{-+} \cup \regspm{--}$
as follows: $R \in \regs$ belongs to $\regspm{++} \cup \regspm{+-}$ if and only if $R \setminus \bigcup_{v \in V(B_2)} D_v$
is enclosed by $\prm M^1$, and to $\regspm{-+} \cup \regspm{--}$ otherwise.
Similarly, $R$ belongs to $\regspm{++} \cup \regspm{-+}$ if and only if $R \setminus \bigcup_{v \in V(B_2)} D_v$
is enclosed by $\prm M^2$, and to $\regspm{+-} \cup \regspm{--}$ otherwise.
\end{definition}

We also define a {\em{curve-arc}}, which is a subcurve of $\gamma^1$ or $\gamma^2$ that connects two points of $\gamma^1 \cap \gamma^2$, but does not contain any point of this intersection as an interior point. The following property of curve-arcs is immediate from Lemma~\ref{lem:gammas}.

\begin{lemma}\label{lem:nei-regions}
If $\gamma$ is a curve-arc, then exactly two regions are incident to $\gamma$: one of these regions
belongs to $\regspm{++} \cup \regspm{--}$, and the other to $\regspm{+-} \cup \regspm{-+}$.
\end{lemma}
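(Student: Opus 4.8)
The plan is to argue purely topologically, using the transversal-crossing structure of $\gamma^1$ and $\gamma^2$ established in Lemma~\ref{lem:gammas} together with the Jordan curve theorem, and then to read off the $\regspm{\pm\pm}$-classification via Lemma~\ref{lem:plus-minus}. By symmetry assume the curve-arc $\gamma$ is a subcurve of $\gamma^1$; let $\mathrm{int}(\gamma)$ denote $\gamma$ with its two endpoints removed. Since $\gamma\subseteq\gamma^1$ and, by definition of a curve-arc, the only points of $\gamma^1\cap\gamma^2$ lying on $\gamma$ are its endpoints, $\mathrm{int}(\gamma)$ is disjoint from $\gamma^2$. First I would fix a point $p\in\mathrm{int}(\gamma)$ that is not the drawing of any vertex of $B_2$; such $p$ exists because $\gamma$ has positive length while only finitely many vertices lie on it. Choose a small open disc $O_p$ centred at $p$ that (a) misses $\gamma^2$, (b) misses every disc $D_v$, and (c) meets $\gamma^1$ in a single simple sub-arc $\alpha_p\subseteq\gamma$ through $p$. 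Then $O_p\setminus\gamma^1$ has two connected components $O_p^+,O_p^-$; each is connected and avoids $\gamma^1\cup\gamma^2$, hence lies in a single region of $\regs$; call these regions $R_+\supseteq O_p^+$ and $R_-\supseteq O_p^-$.

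Next I would show that $R_+$ and $R_-$ are precisely the two regions incident to $\gamma$. For the ``exactly two'' part: cutting $\mathrm{int}(\gamma)$ at the finitely many graph vertices on $\gamma$ and covering each resulting open piece by a thin tube that avoids $\gamma^2$ and meets $\gamma^1$ only in that piece shows that the unordered pair of regions bordering $\gamma$ is locally constant along the connected set $\mathrm{int}(\gamma)$, hence equals $\{R_+,R_-\}$ throughout; therefore $\gamma\subseteq\partial R_+\cap\partial R_-$, and conversely any region $R$ with $\gamma\subseteq\partial R$ has $p$ in its closure, so it meets $O_p$, forcing $R\in\{R_+,R_-\}$. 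Then I would check $R_+\neq R_-$: if they coincided, joining a point of $O_p^+$ to a point of $O_p^-$ by a path inside that region and closing it up with a short arc through $p$ produces a closed curve that crosses the Jordan curve $\gamma^1$ transversally exactly once (once at $p$, nowhere else, since the path avoids $\gamma^1$), contradicting the parity statement of the Jordan curve theorem.

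Finally I would classify $R_+$ and $R_-$. Pick $p_+\in O_p^+$ and $p_-\in O_p^-$; both lie outside $\bigcup_v D_v$, where $\gamma^\alpha$ coincides with the drawing of $\prm M^\alpha$, so for such points ``strictly enclosed by $\prm M^\alpha$'' means exactly ``inside the Jordan curve $\gamma^\alpha$'', and by Lemma~\ref{lem:plus-minus} the $\regspm{\pm\pm}$-membership of a region is determined by the inside/outside position, with respect to $\gamma^1$ and $\gamma^2$, of any one of its points outside $\bigcup_v D_v$. Now $O_p$ is connected and disjoint from $\gamma^2$, so $p_+$ and $p_-$ lie on the same side of $\gamma^2$; hence $R_+$ and $R_-$ have equal status with respect to $\prm M^2$. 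On the other hand $\alpha_p=O_p\cap\gamma^1$ separates $p_+$ from $p_-$ within $O_p$ and is met exactly once, transversally, by a short $p_+$--$p_-$ path, so $p_+$ and $p_-$ lie on opposite sides of $\gamma^1$; hence exactly one of $R_+,R_-$ is enclosed by $\prm M^1$. Encoding the status of a region as a pair in $\{0,1\}^2$ (position w.r.t.\ $\prm M^1$, then w.r.t.\ $\prm M^2$), the two pairs are $(0,c)$ and $(1,c)$ for a common $c\in\{0,1\}$: one of them has equal coordinates, namely $(c,c)$, corresponding to $\regspm{++}$ if $c=1$ and to $\regspm{--}$ if $c=0$; the other has distinct coordinates, corresponding to $\regspm{+-}$ or $\regspm{-+}$. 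Thus exactly one of $R_+,R_-$ lies in $\regspm{++}\cup\regspm{--}$ and the other in $\regspm{+-}\cup\regspm{-+}$, as claimed.

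The main obstacle is not conceptual depth but topological bookkeeping: rigorously justifying that exactly two regions touch $\gamma$ (the tubular-neighbourhood/connectedness argument along $\mathrm{int}(\gamma)$, correctly handling the finitely many graph vertices possibly lying on $\gamma$), and cleanly reconciling the combinatorial notion ``strictly enclosed by $\prm M^\alpha$'' with the topological notion ``inside the Jordan curve $\gamma^\alpha$'' away from the rounding discs $D_v$; once these are pinned down, the inside/outside parity for Jordan curves finishes the argument immediately.
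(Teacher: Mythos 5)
Your argument is correct and is, in substance, what the paper has in mind: the paper states Lemma~\ref{lem:nei-regions} with no written proof, calling it ``immediate from Lemma~\ref{lem:gammas}'', and your proposal is exactly the topological bookkeeping that makes it immediate. One small slip in the setup: you pick $p$ only so that it is not the drawing of a vertex, yet then require $O_p$ to miss every disc $D_v$, which the former does not ensure; you should instead choose $p\in\mathrm{int}(\gamma)\setminus\bigcup_v D_v$, which exists since the two endpoints of $\gamma$ lie in distinct (hence disjoint) discs and the portions of $\gamma$ between consecutive discs have positive length. With that fix, the rest of the argument---exactly two incident regions by the tubular-neighbourhood reasoning, $R_+\neq R_-$ by Jordan-curve parity (equivalently, by Lemma~\ref{lem:plus-minus} for $\alpha=1$), and the classification via Lemma~\ref{lem:plus-minus} using same $\prm M^2$-side / opposite $\prm M^1$-sides---is sound.
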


We now show that there, in fact, exist elementary regions.

\begin{lemma}
\label{lemma-jordan}
There exist at least two elementary regions in $\regspm{-+} \cup \regspm{--}$.
\end{lemma}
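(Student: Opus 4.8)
The plan is to drop the graph entirely and argue purely about the arrangement of the two Jordan curves $\gamma^1,\gamma^2$, feeding on the topological facts already established in Lemmata~\ref{lem:gammas}, \ref{lem:curves-disjoint}, and~\ref{lem:region-border}. I would work on the sphere $S^2=\plane\cup\{\infty\}$, so that $\gamma^1$ is a Jordan curve bounding two open disks. By Lemma~\ref{lem:plus-minus} every region of $\regs$ lies entirely on one side of $\gamma^1$, and by the definition of the partition of $\regs$ the regions in $\regspm{-+}\cup\regspm{--}$ are exactly those lying on the side of $\gamma^1$ \emph{not} enclosed by $\prm M^1$; call the corresponding open disk $D$. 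Since $\partial D=\gamma^1$, the regions of $\regs$ contained in $D$ are precisely the connected components of $D\setminus\gamma^2$. So it suffices to exhibit two elementary regions among the components of $D\setminus\gamma^2$.

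First I would pin down $\gamma^2\cap D$. By Lemmata~\ref{lem:gammas} and~\ref{lem:curves-disjoint}, $\gamma^1\cap\gamma^2$ is a nonempty set of $2k$ points ($k\ge 1$), at each of which the two curves cross transversally. Traversing $\gamma^2$, these $2k$ points cut it into $2k$ arcs, and at each crossing $\gamma^2$ switches between $D$ and $S^2\setminus\overline D$; hence exactly $k$ of these arcs lie in $D$, they are pairwise disjoint, and each is a simple arc with both endpoints on $\partial D=\gamma^1$ and interior inside $D$ --- that is, $k$ pairwise disjoint chords of the disk $D$. Each such chord is moreover a curve-arc in the sense of Section~\ref{sss:basics}, since its interior is disjoint from $\gamma^1$ and hence contains no point of $\gamma^1\cap\gamma^2$.

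Now $k$ pairwise disjoint chords subdivide the disk $D$ into $k+1$ regions, and the region-adjacency graph $\mathcal{T}$ (one node per region, one edge per chord joining the two regions it separates) is a tree: an easy induction on the number of chords shows that adding one chord increases both the number of regions and the number of adjacencies by exactly one while keeping the graph connected. As $\mathcal{T}$ has $k+1\ge 2$ nodes it has at least two leaves, so it suffices to show that every leaf region $R$ is elementary. For this I would invoke Lemma~\ref{lem:region-border}: the maximal portions of $\partial R$ lying on $\gamma^2$ are exactly the chords on the boundary of $R$ (a maximal $\gamma^2$-piece of $\partial R$ has its interior in $D$ away from $\gamma^1$, hence coincides with a full chord, and conversely each chord on $\partial R$ has all of itself on $\partial R$ since $R$ lies on one fixed side of it). A leaf of $\mathcal{T}$ has exactly one chord on its boundary, so $s=1$ in the partition of Lemma~\ref{lem:region-border}, i.e., $R$ is elementary. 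Since all leaf regions lie in $D$, they belong to $\regspm{-+}\cup\regspm{--}$, which yields the two required elementary regions.

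The step I expect to be the main obstacle is not any single topological fact --- those are all handed to us --- but the careful bookkeeping that identifies the abstract ``chord diagram'' picture with the paper's objects: that the components of $D\setminus\gamma^2$ are exactly $\regspm{-+}\cup\regspm{--}$; that $\gamma^2\cap D$ genuinely decomposes into $k$ disjoint chords (this is precisely where transversality of the crossings from Lemma~\ref{lem:gammas} enters); and that ``leaf of $\mathcal{T}$'' translates to ``$s=1$ in Lemma~\ref{lem:region-border}'', i.e., elementary. Once these correspondences are set up, the combinatorial core ---``a tree on $\ge 2$ nodes has $\ge 2$ leaves'' --- is immediate.
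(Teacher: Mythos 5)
Your proof is correct and amounts to a cleaner, more symmetric packaging of the same combinatorial structure the paper exploits. The paper works in the plane: it cuts $\gamma^1$ open along a curve-arc of $\gamma^1$ incident to $R_\infty$, orders the intersection points along the resulting Jordan arc, encodes the arcs of $\gamma^2$ outside $\gamma^1$ as a valid parenthesis expression, and then does a two-case analysis depending on whether the outermost parentheses match (one of the two elementary regions may be $R_\infty$ itself). Your sphere compactification dissolves the distinguished role of $R_\infty$: on $S^2$ the complement of $\gamma^1$ on the non-$\prm M^1$ side is just a disk $D$, the arcs of $\gamma^2$ inside it are $k\ge1$ pairwise disjoint chords (disjointness because $\gamma^2$ is simple and alternates sides at each of the $2k$ transversal crossings of Lemma~\ref{lem:gammas}; $k\ge1$ by Lemma~\ref{lem:curves-disjoint}), the region-adjacency graph is a tree on $k+1\ge2$ nodes, and its at least two leaves are exactly the $s=1$ regions of Lemma~\ref{lem:region-border}. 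The parenthesis expression and its ``innermost matching pair'' in the paper's proof is precisely an encoding of this tree and its leaves, so the content is the same, but your version avoids both the case split and the need to single out one arc of $\gamma^1$ to cut along. One small inaccuracy: the fact that each region of $\regs$ lies entirely on one side of $\gamma^1$ is not what Lemma~\ref{lem:plus-minus} asserts (that lemma concerns only $R\setminus\bigcup_v D_v$); it is simply a consequence of regions being connected components of $\plane\setminus(\gamma^1\cup\gamma^2)$, hence connected subsets of $\plane\setminus\gamma^1$. You do still need Lemma~\ref{lem:plus-minus} (or rather the tacit fact that $R\setminus\bigcup_v D_v\ne\emptyset$ and that $\gamma^1$ agrees with the drawing of $\prm M^1$ outside the discs $D_v$) to identify ``the side of $\gamma^1$ not enclosed by $\gamma^1$'' with ``the side whose regions fall in $\regspm{-+}\cup\regspm{--}$'', since the latter is defined via enclosure by $\prm M^1$, not by $\gamma^1$; but the paper's own proof leans on the same identification implicitly.
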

\begin{proof}
Consider the infinite region $R_\infty$ in $\regs$. The border of this region
cannot be fully contained in one of the curves $\gamma^1$ and $\gamma^2$, because they intersect.
Take any curve-arc $\gamma_1 \subseteq \gamma^1$ incident to $R_\infty$ and cut open $\gamma^1$ by removing $\gamma_1$ to obtain a Jordan arc $\gamma^{1}_{\times}$.
Order the intersection points of $\gamma^1$ with $\gamma^2$ along Jordan arc $\gamma^1_\times$. Consider now the
set $\mathcal{C}^2$ of curve-arcs of $\gamma^2$ that are not enclosed by $\gamma^1$.
For each Jordan arc
$\gamma \in \mathcal{C}^2$ tie a pair parenthesis to its endpoints. We associate the opening parenthesis with the first endpoint of $\gamma$ along
$\gamma^1_{\times}$ , whereas we associate
the closing parenthesis with the second one. Observe that Jordan arcs in $\mathcal{C}^2$ cannot intersect, hence, when
we list the parenthesis along $\gamma^1_{\times}$ we obtain a valid parenthesis expression $E$.
We have to consider two cases.

First, suppose that the first and the last parenthesis
in $E$ belong to the same pair given by arc $\gamma$. We observe that
the infinite region in $\regs$ is elementary, as its boundary is formed by $\gamma_1$ and $\gamma$.
To obtain the second elementary region, observe that there has to be a pair of innermost
matching parenthesis in $E$ corresponding
to some arc $\gamma'$. The Jordan region enclosed by $\gamma'$ and the part of $\gamma^1$ between the endpoints of $\gamma'$ is
the second elementary region not enclosed by $\gamma^1$.

Second, suppose that the first and the last parenthesis in $E$ do not form a matching pair. Then $E$ can be
decomposed into the concatenation of two valid parenthesis expressions $E_1$ and $E_2$. Both of them
need to contain a pair of innermost matching parenthesis, which induce two elementary regions.
\end{proof}

Note that the arguments of Lemma~\ref{lemma-jordan} can be modified to exhibit two elementary regions in $\regspm{+-} \cup \regspm{--}$.

%\subsubsection{Cushions and properties of elementary regions}
We introduce some more notation with respect to regions.
For a region $R \in \regs$, we associate a closed walk $W_2(R)$ in $B_2$
that corresponds to the border of $R$ in the obvious manner. Note that
the walk $W_2(R)$ contains each edge of $B_2$ at most once (since $\prm M^1$ and $\prm M^2$ are edge-disjoint).
It may visit a vertex $v \in V(B_2)$ more than once, but it never {\em{traverses}} itself in such a vertex:
if we walk along $W_2(R)$ in counter-clockwise direction (defined by the border of $R$)
and we enter a vertex $v$ along an edge $e \in E(B_2)$, then we leave the vertex $v$ with the edge
of $W_2(R)$ incident to $v$ being the first such edge in counter-clockwise order after $e$.
We also define a walk $W(R)$ in $B$ as the projection of the walk $W_2(R)$ onto $B$.

We say that a vertex $v$ belongs to $\prm R$ for some region $R \in \regs$ (written $v \in \prm R$)
if and only if the border of $R$ intersects $D_v$; equivalently, if $W(R)$ visits $v$.
Similarly, we say that a region $R$ is incident to an edge $e \in E(B)$ or $e \in E(B_2)$
if and only if $W(R)$ or $W_2(R)$ contains $e$.

Consider now an elementary region $R \in \regs$. According to the definition,
its border splits into curves $\gamma_1$ and $\gamma_2$, where $\gamma_\alpha \subseteq \gamma^\alpha$ for $\alpha=1,2$.
Consequently, since $\prm M^1$ and $\prm M^2$ are simple cycles in $B$ and $B_2$,
the walk $W_2(R)$ splits into paths $P_2^1(R)$ and $P_2^2(R)$ in $B_2$
and the walk $W(R)$ splits into paths $P^1(R)$ and $P^2(R)$ in $B$,
where $P_2^\alpha(R)$ and $P^\alpha(R)$ are a subpath of $\prm M^\alpha$ in $B_2$ and $B$, respectively.

Using this notation, we can present the following implication of the minimization criterium assumed
in the projection of $\prm M^1$ and $\prm M^2$ onto $B_2$.
\begin{lemma}\label{lem:true-region}
For any elementary region $R \in \regs$, there exists
a face $f_2$ of $B_2$ enclosed by $W_2(R)$ that is not a face between two copies of an edge of $B$,
and thus, there exists a face $f$ of $B$ enclosed by $W(R)$.
\end{lemma}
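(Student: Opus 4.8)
The claim is that an elementary region $R$ cannot consist entirely of "bigon" faces — the thin lens-shaped faces of $B_2$ squeezed between a pair of parallel copies of a single edge of $B$. The plan is to argue by contradiction using the minimization criterium built into the projection of $\prm M^1$ and $\prm M^2$ onto $B_2$: if every face of $B_2$ enclosed by $W_2(R)$ were such a bigon face, then $P_2^1(R)$ and $P_2^2(R)$ would be two parallel copies of the \emph{same} walk $Q$ in $B$, and then re-routing one of the two mountain boundaries along the other copy would reduce the number of traversal vertices (or keep it the same while reducing the number of traversal vertices distinct from $l,r$), contradicting minimality. The last sentence of the lemma — that then $W(R)$ encloses a face of $B$ — is then immediate, since the projection to $B$ of a non-bigon face of $B_2$ enclosed by $W_2(R)$ is a face of $B$ enclosed by $W(R)$.

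The key steps, in order. First, I would set up the dichotomy: let $\mathcal{F}_2$ be the set of faces of $B_2$ strictly enclosed by $W_2(R)$ (there is at least one, since $W_2(R)$ is a simple closed walk bounding the Jordan region $R$, and $R$ has positive area by Lemma~\ref{lem:region-border} as $\gamma_1,\gamma_2$ have positive length). Suppose for contradiction that every face in $\mathcal{F}_2$ is a bigon between two parallel copies of some edge of $B$. Second, I would prove a structural consequence: under this assumption, the two bounding paths $P_2^1(R)$ and $P_2^2(R)$, projected to $B$, trace out the \emph{same} walk $Q$ in $B$; more precisely, walking along $P^1(R)$ and along $P^2(R)$ visits the same sequence of vertices and edges of $B$. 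This follows by an induction along the region starting from a common endpoint $p$ of $\gamma_1$ and $\gamma_2$ (a point of $\gamma^1 \cap \gamma^2$, sitting in some $D_v$): the first bigon forces the first edge of $P^1(R)$ after $v$ to equal the first edge of $P^2(R)$ after $v$, and peeling off bigons one at a time propagates the coincidence along the whole region. Third, I would exploit this: the endpoints of the curve-arcs $\gamma_1$ and $\gamma_2$ are traversal vertices, say $v$ and $v'$ (the two points of $\gamma^1 \cap \gamma^2$ on the boundary of $R$); but now $\prm M^1$ and $\prm M^2$ run along the identical $B$-walk $Q$ between $v$ and $v'$. Swapping, for each edge $e$ of $Q$, the copies of $e$ assigned to $\prm M^1$ and $\prm M^2$ in $B_2$ removes the crossings at $v$ and $v'$ (they are no longer traversal after the swap, since the local rotation at $v$, $v'$ changes parity for the swapped side) while creating no new traversal vertices strictly inside $Q$ (interior vertices of $Q$ were either non-traversal before, and remain so, or the swap keeps their status). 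This strictly decreases the number of traversal vertices — or, if $v$ or $v'$ equals $l$ or $r$, at worst keeps the count fixed while strictly decreasing the number of traversal vertices not equal to $l,r$ (using that interior vertices of $Q$ between $v$ and $v'$ are distinct from $l,r$ unless $Q$ itself runs along $\prm B[l,r]$, a case handled separately via Lemma~\ref{lem:border-traversal}). This contradicts the minimization criterium. Hence some face in $\mathcal{F}_2$ is not a bigon, and its projection to $B$ is a genuine face of $B$ enclosed by $W(R)$.

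The main obstacle I anticipate is the bookkeeping in the second step — turning "all enclosed faces are bigons" into the clean statement "$P^1(R)$ and $P^2(R)$ project to the same walk $Q$". One has to be careful that the region $R$, though elementary, can be pinched at vertices where $W_2(R)$ revisits a vertex without traversing itself, so the induction "peel off a bigon, recurse" must be phrased in terms of the faces of $B_2$ inside $R$ and their shared edges, not naively in terms of arc length; essentially one is showing that the dual of $(\mathcal{F}_2$, adjacency via shared edges$)$ is a path and each dual edge is a copy-pair of a single $B$-edge. A secondary subtlety is verifying the parity/rotation claim at the swap vertices $v$ and $v'$: one must check that a traversal vertex really does \emph{cease} to be traversal after swapping one side's edge copies along $Q$, which is exactly the local combinatorial computation already used in the proof of Lemma~\ref{lem:border-traversal}, so it can be invoked in the same spirit. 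Everything else is routine.
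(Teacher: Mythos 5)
Your proposal is correct and follows essentially the same route as the paper: assume all faces enclosed by $W_2(R)$ are bigons, deduce $P^1(R)=P^2(R)$, then swap the two copies of each edge along this common path, killing the traversal status of $v_0$ and $v_s$ while leaving the interior vertices unchanged, contradicting minimality. The only real difference is that the paper disposes of the intermediate step $P^1(R)=P^2(R)$ in one line (if they differed, some edge in the symmetric difference would be traversed exactly once by the closed walk $W(R)$, so $W(R)$ would strictly enclose a face of $B$), whereas you propose a heavier bigon-peeling induction; also note that, since $v_0$ and $v_s$ are traversal \emph{before} the swap, the total traversal count strictly drops, so your worry about the $l,r$ tiebreak never materializes here — that subtlety is only needed in Lemma~\ref{lem:border-traversal}, where the far endpoint of the swapped interval may have been non-traversal beforehand.
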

\begin{proof}
If such faces $f_2$ and $f$ do not exist, then $P^1(R) = P^2(R)$, as $P^1(R)$ and $P^2(R)$ are simple paths.
Let $v_0, e_1, v_1, e_2, \ldots, e_s, v_s$ be the vertices
and edges of $P^1(R)= P^2(R)$, and let $P^\alpha_2(R) = v_0, e_{1,\alpha}, v_1, e_{2,\alpha}, \ldots, e_{s,\alpha}, v_s$ for $\alpha=1,2$;
the edges $e_{i,1}$ and $e_{i,2}$ are the two copies of $e_i$ in $B_2$.
As $R$ is a region, $\gamma^1$ and $\gamma^2$ intersect in $D_{v_0}$ and $D_{v_s}$, hence $v_0$ and $v_s$ are traversal vertices.
Consider the following modification to $\prm M^1$ and $\prm M^2$ in $B_2$: for each $1 \leq i \leq s$, we swap $e_{i,1}$ with $e_{i,2}$,
so that $e_{i,1}$ now belongs to $\prm M^2$ and $e_{i,2}$ belongs to $\prm M^1$. After this operation, for any $1 \leq i < s$,
the vertex $v_i$ is a traversal vertex if and only if it was a traversal vertex before the operation, while $v_0$ and $v_s$
discontinue to be traversal vertices. This contradicts the minimization criterium for the choice of $\prm M^1$ and $\prm M^2$
in $B_2$.% and finishes the proof of the lemma.
\end{proof}

%\subsubsection{Finding elementary regions}
\subsubsection{Two maximal mountains form a range}
Intuitively, elementary regions that are finite and do not belong to $\regspm{++}$
often give grounds to applying Lemma~\ref{lem:largermountain} and to the conclusion that
$M^1$ or $M^2$ is not maximal.
In this argumentation, we need to watch out for the following
special case. Informally speaking, the cushion is the artificial face created between the two copies of $\prm B[l,r]$ when we duplicated the edges of $B$;
however, it can contain some other faces if $l$ or $r$ is not a traversal vertex of the mountains $M^1$ and $M^2$.
\begin{definition}[cushion]
An elementary region $R \in \regs$ is called a {\em{cushion}}
if $W(R)$ contains two copies of $\prm B[l,r]$ (one from $\prm M^1$ and one from $\prm M^2$),
and $W_2(R)$ contains all  edges of $\prm M^1 \cup \prm M^2$ incident to $l$ or $r$.
\end{definition}

We are now ready for a crucial definition that is necessary to prove Theorem~\ref{thm:mountain-range}.

\begin{definition}[range]
\label{definition-forms-a-range}
We say that $M^1$ and $M^2$ {\em form a range} when the following condition hold
\begin{itemize}
%\item infinite region $R_{\infty}$ in $\mathcal{R}(\prm M,\prm M^2)$ is elementary,
%\item $P_L \subseteq \prm R_{\infty}$ and $P_R' \subseteq \prm R_{\infty}$,
\item there is exactly one region $R^{+-}$ in $\regspm{+-}$ and one region $R^{-+}$ in $\regspm{-+}$;
\item $R^{+-}$ and $R^{-+}$ are elementary and neither of them is a cushion;
\item $v_{M^2} \in V(W(R^{-+})) \setminus V(P^1(R^{-+}))$ and $v_{M^1} \in V(W(R^{+-})) \setminus V(P^2(R^{+-}))$.
\end{itemize}
\end{definition}
%If $M$ precedes $M^2$ or vice versa we say that $M$ and $M^2$ {\em form a mountain range}.

The main step of the proof of Theorem~\ref{thm:mountain-range}, which we take in this section, is to show that every pair of maximal $\delta$-mountains forms a range. Observe that a necessary condition for $M^1$ and $M^2$ to form a range is that $\gamma^1$ and $\gamma^2$ cross only in two points. The following lemma is used to establish this condition.

\begin{lemma}
\label{lemma-pseudocircles}
If there exist two elementary regions $R_1,R_2 \in \regs$ that have a common incident curve-arc, then $|\gamma^1 \cap \gamma^2| = 2$.
\end{lemma}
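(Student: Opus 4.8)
The plan is to exploit the extreme rigidity of an elementary region. By Lemma~\ref{lem:region-border} with $s=1$, for an elementary region $R$ the boundary $\partial R$ is the union of one subcurve $\gamma_R^1\subseteq\gamma^1$ and one subcurve $\gamma_R^2\subseteq\gamma^2$, both of positive length, meeting in exactly two points. The first thing I would prove is that each of $\gamma_R^1,\gamma_R^2$ is in fact a single curve-arc. The two meeting points of $\gamma_R^1$ and $\gamma_R^2$ lie on both curves, hence belong to $\gamma^1\cap\gamma^2$, and they are distinct (otherwise $\gamma_R^1$ would be a closed subcurve of the simple closed curve $\gamma^1$, forcing $\gamma_R^1=\gamma^1$ and $|\gamma^1\cap\gamma^2|=1$, contradicting that this number is even and positive by Lemmata~\ref{lem:gammas} and~\ref{lem:curves-disjoint}). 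Suppose some point $x$ in the relative interior of $\gamma_R^1$ lay in $\gamma^1\cap\gamma^2$. Then $\partial R$ would follow $\gamma^1$ straight through $x$, so near $x$ the region $R$ would be the whole of one of the two local sides of $\gamma^1$; but by Lemma~\ref{lem:gammas} the curve $\gamma^2$ crosses $\gamma^1$ transversally at $x$, so a branch of $\gamma^2$ at $x$ would lie inside $R$, contradicting that $R$, as a region of the arrangement, contains no point of $\gamma^1\cup\gamma^2$. Hence $\gamma_R^1$, and symmetrically $\gamma_R^2$, contains no point of $\gamma^1\cap\gamma^2$ in its interior, i.e.\ is a single curve-arc.

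Next I would pin down the shared curve-arc. Let $\gamma_0$ be the curve-arc incident to both $R_1$ and $R_2$ (which are distinct, e.g.\ by Lemma~\ref{lem:nei-regions}); by symmetry assume $\gamma_0\subseteq\gamma^1$, and let $p,q$ be its two endpoints. Since $\gamma_0$ lies on $\partial R_i$ and, by the first paragraph, the $\gamma^1$-part of $\partial R_i$ is a single curve-arc $a_i$, we have $\gamma_0\subseteq a_i$; two curve-arcs of $\gamma^1$, one contained in the other, are equal, so $a_1=a_2=\gamma_0$. Consequently $\partial R_i=\gamma_0\cup b_i$, where $b_i$ is a curve-arc of $\gamma^2$ whose endpoints are exactly $p$ and $q$ (they must agree with the endpoints of $\gamma_0$ so that $\partial R_i$ is a closed curve).

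Then I would check that $b_1\neq b_2$. If $b_1=b_2$, then $\partial R_1=\partial R_2$ is a single Jordan curve $J:=\gamma_0\cup b_1$, and each $R_i$, being a region of the arrangement whose boundary is exactly $J$, must equal one of the two components of $\plane\setminus J$; as $R_1\neq R_2$ they are these two components, so $R_1\cup R_2\cup J=\plane$. But $\gamma_0$ is a proper sub-arc of the simple closed curve $\gamma^1$, so $\gamma^1\setminus\gamma_0$ is a nonempty arc disjoint from $J$ (it meets $\gamma_0$ nowhere, and it meets $b_1\subseteq\gamma^2$ only at crossing points of $\gamma^1\cap\gamma^2$, all of which on $b_1$ are its endpoints $p,q\in\gamma_0$). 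Hence $\gamma^1\setminus\gamma_0$ meets $R_1\cup R_2$, contradicting that $R_1,R_2$ are disjoint from $\gamma^1$. Therefore $b_1\neq b_2$. Now $b_1$ and $b_2$ are two distinct arcs of the Jordan curve $\gamma^2$ with the same endpoints $p,q$, so they are precisely the two arcs into which $\{p,q\}$ splits $\gamma^2$; in particular $b_1\cup b_2=\gamma^2$ and their relative interiors partition $\gamma^2\setminus\{p,q\}$. Since a curve-arc has no point of $\gamma^1\cap\gamma^2$ in its interior, $(\gamma^1\cap\gamma^2)\setminus\{p,q\}=\emptyset$, so $\gamma^1\cap\gamma^2=\{p,q\}$ and $|\gamma^1\cap\gamma^2|=2$, as required.

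I expect the delicate point to be the claim in the first paragraph that each $\gamma^j$-portion of an elementary region's boundary is a single curve-arc, rather than a concatenation of several. This is precisely where the transversality established in Lemma~\ref{lem:gammas} is indispensable: one must argue that a region incident to a crossing point necessarily has that crossing point as a ``corner'' — a switch between $\gamma^1$ and $\gamma^2$ along $\partial R$ — which requires reasoning about the local picture of the four alternating arcs and the four quadrants meeting at such a point. Once this is in place, identifying $\gamma_0$ as the common $\gamma^1$-curve-arc, the short Jordan-curve argument for $b_1\neq b_2$, and the final arc count are all routine.
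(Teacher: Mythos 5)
Your proof is correct and follows the same route as the paper's: identify the shared arc $\gamma_0$, show that the $\gamma^1$-part of each elementary boundary is exactly $\gamma_0$, and conclude that the two $\gamma^2$-parts are the two arcs of $\gamma^2$ between the endpoints, forcing $\gamma^1\cap\gamma^2$ to be just those two endpoints. The paper's version is a terse three sentences that implicitly assume the "corner at a crossing point" fact and the distinctness $b_1\neq b_2$; you fill in precisely those gaps, so this is the same argument, carried out in full.
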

\begin{proof}
Let $\gamma$ be the common incident curve-arc between $R_1$ and $R_2$, and let $a$ and $b$ be its endpoints.
Without loss of generality,
we assume that $\gamma \subseteq \gamma^1$. We have that $\prm R_1 \setminus \gamma \subseteq \gamma^2$ and $\prm R_2 \setminus \gamma \subseteq \gamma^2$,
so $\gamma^2 = (\prm R_1 \cup \prm R_2) \setminus \gamma$. Hence, $\gamma^1$ and $\gamma^2$ cross only in $a$ and $b$.
\end{proof}

%Using this lemma, we can split the possible configurations of $M^1$ and $M^2$ into the following cases.
We can now split the possible configurations of $M^1$ and $M^2$ into the following cases.

\begin{lemma}
\label{lemma-jordan2}
One of the following holds:
\begin{enumerate}[(i)]
\item there exists a finite elementary region $R \in \regspm{--}$;\label{case:mm}\label{case:jordan:first}
\item there exists an elementary region $R \in \regspm{+-} \cup \regspm{-+}$, such
that $v_{M^1} \notin V(W(R)) \setminus V(P^2(R))$ and $v_{M^2} \notin V(W(R)) \setminus V(P^1(R))$;\label{case:pm}
\item the infinite region $R_{\infty}\in \regs$ is
elementary and is not incident to $v_{M^1}$ nor $v_{M^2}$;\label{case:inf}
\item there exists a finite elementary region that is a cushion;\label{case:cushion}\label{case:jordan:last}
\item $M^1$ and $M^2$ form a range.\label{case:range}
\end{enumerate}
\end{lemma}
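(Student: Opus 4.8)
The statement is a five-way case distinction, so the natural way to organize the proof is: produce a witness for one of cases~\ref{case:mm}--\ref{case:cushion} whenever one exists, and show that when none exists we are forced into case~\ref{case:range}. Concretely, I would assume that cases~\ref{case:mm}, \ref{case:pm}, \ref{case:inf} and~\ref{case:cushion} all fail and deduce that $M^1$ and $M^2$ form a range. By Lemma~\ref{lemma-jordan} and its mirror image there are at least two elementary regions in $\regspm{-+}\cup\regspm{--}$ and at least two in $\regspm{+-}\cup\regspm{--}$. Since case~\ref{case:mm} fails, $\regspm{--}$ contains no \emph{finite} elementary region, so the only elementary region that can lie in $\regspm{--}$ is $R_\infty$; and since case~\ref{case:inf} fails, $R_\infty$ is either not elementary or is incident to $v_{M^1}$ or $v_{M^2}$. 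Combined with the Lemma~\ref{lemma-jordan} count, this forces at least one elementary region into $\regspm{-+}$ and, symmetrically, at least one into $\regspm{+-}$.

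The crucial step is to bring the number of crossings down to $|\gamma^1\cap\gamma^2|=2$; for this I would invoke Lemma~\ref{lemma-pseudocircles}, so it suffices to exhibit two elementary regions sharing an incident curve-arc. Starting from an elementary region $R_2\in\regspm{-+}$ from the previous paragraph (whose boundary, by Lemma~\ref{lem:region-border} with $s=1$, is exactly one curve-arc of $\gamma^1$ glued to one curve-arc of $\gamma^2$), Lemma~\ref{lem:nei-regions} produces the region $R_1\in\regspm{++}\cup\regspm{--}$ on the other side of its $\gamma^1$-arc. If $R_1$ is elementary we are done; otherwise I would re-run the parenthesis/nesting bookkeeping of the proof of Lemma~\ref{lemma-jordan}, this time tracking the $\pm$-signs of the regions and using that cases~\ref{case:mm}, \ref{case:inf} and~\ref{case:cushion} fail, in order to drill down to an innermost pair of adjacent elementary regions. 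Once $|\gamma^1\cap\gamma^2|=2$, the curves $\gamma^1\cup\gamma^2$ form a plane multigraph on $2$ vertices and $4$ edges, hence there are exactly four regions, one in each of $\regspm{++},\regspm{+-},\regspm{-+},\regspm{--}$, all elementary, with $R^{--}=R_\infty$ and $R^{++},R^{+-},R^{-+}$ finite. In particular the first bullet of Definition~\ref{definition-forms-a-range} holds and both $R^{+-},R^{-+}$ are elementary.

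It remains to check the last two bullets. Since case~\ref{case:cushion} fails and $R^{+-},R^{-+}$ are finite elementary regions, neither is a cushion. Since case~\ref{case:pm} fails, applying it to $R:=R^{+-}$ and to $R:=R^{-+}$ shows that each of them properly contains a summit, i.e.\ $v_{M^1}\in V(W(R))\setminus V(P^2(R))$ or $v_{M^2}\in V(W(R))\setminus V(P^1(R))$. I would then rule out the ``crossed'' assignment: for $R^{+-}$, which lies outside $\prm M^2$, the summit $v_{M^2}$ sits on the carvemark of $M^2$ away from the base $\prm B[l,r]$, while the $\gamma^2$-arc bounding $R^{+-}$ is, by the four-region picture, precisely the arc that also bounds the lens $R^{++}$ containing the base; a short argument using Lemma~\ref{lem:border-traversal} (no internal vertex of $\prm B[l,r]$ is a traversal vertex) then shows $v_{M^2}\notin V(W(R^{+-}))\setminus V(P^1(R^{+-}))$, hence $v_{M^1}\in V(W(R^{+-}))\setminus V(P^2(R^{+-}))$, and symmetrically $v_{M^2}\in V(W(R^{-+}))\setminus V(P^1(R^{-+}))$. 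All three conditions of Definition~\ref{definition-forms-a-range} now hold, so $M^1$ and $M^2$ form a range.

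The step I expect to be the main obstacle is pinning the crossing number down to $2$ once the degenerate configurations are excluded: one must show that after ruling out a finite elementary $\regspm{--}$-region (case~\ref{case:mm}), a summit-missing infinite region (case~\ref{case:inf}), a summit-missing elementary region of $\regspm{+-}\cup\regspm{-+}$ (case~\ref{case:pm}), and a cushion (case~\ref{case:cushion}), the regions in $\regspm{+-}$, $\regspm{-+}$ and $\regspm{--}$ can neither proliferate nor fail to be elementary. This is genuine plane topology and requires carefully tracking how the arcs of $\gamma^1$ and $\gamma^2$ interleave --- the parenthesis language of Lemma~\ref{lemma-jordan} --- together with the privileged role of $l$, $r$ and of the shared base $\prm B[l,r]$, whose two parallel copies in $\gamma^1$ and $\gamma^2$ are exactly what the notion of a cushion is designed to isolate.
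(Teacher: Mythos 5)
Your overall strategy is contrapositive in spirit --- assume cases~(\ref{case:mm})--(\ref{case:cushion}) fail and deduce case~(\ref{case:range}) --- which is logically fine and differs from the paper's route, which is an exhaustive case analysis on how the four elementary regions promised by Lemma~\ref{lemma-jordan} and its mirror coincide (all four distinct; exactly two coincide; both pairs coincide), deriving one of the five conclusions in each branch. Your reorganisation is natural and arguably cleaner for bookkeeping, but it does not dodge any of the actual work.

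The genuine gap is the one you flag: pinning $|\gamma^1\cap\gamma^2|=2$ down via Lemma~\ref{lemma-pseudocircles}. Your ``drill down to an innermost pair of adjacent elementary regions'' is not a proof --- nothing you have established guarantees that the region $R_1$ on the other side of the $\gamma^1$-arc of $R_2\in\regspm{-+}$ is elementary, and re-running the parenthesis argument of Lemma~\ref{lemma-jordan} on its own does not track the $\pm$-signs in a way that forces adjacency. The paper closes this gap with a specific observation about the summits, and it is not optional: once cases~(\ref{case:mm}) and~(\ref{case:inf}) fail one argues that $R_\infty$ must be elementary (otherwise, as you note, all four promised elementary regions sit in $\regspm{-+}\cup\regspm{+-}$, each missing both summits in the sense of~(\ref{case:pm}), and a count using the fact that $\prm M^1,\prm M^2$ are simple cycles shows at most one region in $\regspm{-+}\cup\regspm{+-}$ can have $v_{M^1}\in V(W(R))\setminus V(P^2(R))$ and at most one can have $v_{M^2}\in V(W(R))\setminus V(P^1(R))$, which is too few). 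With $R_\infty$ elementary, the failure of~(\ref{case:pm}) forces the single elementary regions $R^{-+}\in\regspm{-+}$ and $R^{+-}\in\regspm{+-}$ to carry one summit each, and the failure of~(\ref{case:inf}) forces one of $v_{M^1},v_{M^2}$ onto $\prm R_\infty$. Since that summit lies in the interior of a single curve-arc (it is not a traversal vertex), that curve-arc is incident both to $R_\infty$ and to the corresponding $R^{\pm\mp}$ --- this is the adjacency you need for Lemma~\ref{lemma-pseudocircles}. Your proposal never produces this pair; it just hopes one exists.

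A second, smaller concern: in your last paragraph you rule out the ``crossed'' assignment by a geometric argument about which $\gamma^2$-arc bounds $R^{+-}$ and appealing to Lemma~\ref{lem:border-traversal}. This is plausible in the four-region picture, but the paper instead obtains the correct assignment for free from the same summit-on-$R_\infty$ observation it used to get adjacency. Your route here might work but would need to be spelled out carefully (in particular one has to account for the possibility that both mountains pass through a summit without it being a traversal vertex, so ``on $\gamma^2$'' does not immediately exclude ``on $\gamma^1$''). As written, though, the decisive missing piece is the adjacency argument; until that is supplied the claimed implication ``not~(\ref{case:mm})--(\ref{case:cushion}) $\Rightarrow$~(\ref{case:range})'' is not established.
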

\begin{proof}
From Lemma~\ref{lemma-jordan} we know that there exist two elementary regions $R_1^1,R_2^1 \in \regspm{--} \cup \regspm{-+}$
and two elementary regions $R_1^2,R_2^2 \in \regspm{--} \cup \regspm{+-}$.
Regions $R_i^1$ and $R_j^2$ may be sometimes equal.
Up to symmetry, we have the following cases.

\case{$R_1^1=R_1^2$ and $R_2^1=R_2^2$}
In this case, both $R_1^1=R_1^2$ and $R_2^1=R_2^2$ belong to $\regspm{--}$.
Hence, one of these two elementary regions is not infinite and Case~(\ref{case:mm}) holds.

\case{$R_1^1=R_1^2$ and $R_2^1\neq R_2^2$} If $R_1^1=R_1^2$ is not infinite, then Case~(\ref{case:mm}) holds.
Hence, assume the contrary, which implies that the infinite region is elementary. Now, neither $R_2^1$ nor $R_2^2$ can be infinite.
On the other hand, if one of them belongs to $\regspm{--}$ then Case~(\ref{case:mm}) holds.
We are left with the case $R_2^1 \in \regspm{-+}$ and $R_2^2 \in \regspm{+-}$.
By Lemma~\ref{lem:nei-regions}, $R_2^1$ and $R_2^2$ are not incident to a common curve-arc.
We note that, as $\prm M^1$ and $\prm M^2$ are simple cycles, only one region $R\in \{R_2^1,R_2^2\}$ may satisfy $v_{M^1} \in V(W(R)) \setminus V(P^2(R))$
and only one region $R\in \{R_2^1,R_2^2\}$ may satisfy $v_{M^2} \in V(W(R)) \setminus V(P^1(R))$.
Hence, if Case~(\ref{case:pm}) does not hold for both $R_2^1$ and $R_2^2$,
we need to have that $v_{M^1} \in V(W(R_2^1)) \setminus V(P^2(R_2^1))$ and $v_{M^2} \in V(W(R_2^2)) \setminus V(P^1(R_2^2))$ or vice versa (i.e., with the roles of $R_2^1$ and $R_2^2$ swapped).
In particular, neither $v_{M^1}$ nor $v_{M^2}$ is a traversal vertex.
If Case~(\ref{case:inf}) does not hold, then since the infinite region $R_1^1=R_1^2$ is elementary, either $v_{M^1}$ or $v_{M^2}$ has to be on the border of the infinite region $R_1^1=R_1^2$.
This implies that either $R_2^1$ or $R_2^2$ shares a curve-arc with $R_1^1=R_1^2$. By applying Lemma~\ref{lemma-pseudocircles}
to these two incident elementary regions we know that $\gamma^1$ and $\gamma^2$ cross exactly twice.
Consequently, each set $\regspm{++}$, $\regspm{+-}$, $\regspm{-+}$ and $\regspm{--}$ has size exactly one and
all regions in $\regs$ are elementary.
Moreover, as $v_{M^1}$ or $v_{M^2}$ is on the border of the infinite region $R_1^1=R_1^2$,
we infer that in fact
$v_{M^1} \in V(W(R_2^2)) \setminus V(P^2(R_2^2))$
and
$v_{M^2} \in V(W(R_2^1)) \setminus V(P^1(R_2^1))$.
If $R_2^1$ or $R_2^2$ is a cushion, we have Case~(\ref{case:cushion}).
Otherwise, $R^{+-}=R_2^2$ and $R^{-+}=R_2^1$ fulfills Definition~\ref{definition-forms-a-range}.

\case{All four $R_1^1,R_1^2,R_2^1,R_2^2$ are different.}
If at least two of these regions belong to $\regspm{--}$, then one is finite and we have Case~(\ref{case:mm}). Therefore, at least three of the regions belong to $\regspm{+-} \cup \regspm{-+}$. Lemma~\ref{lem:nei-regions} implies that at most one of them has $v_{M^1} \in V(W(R)) \setminus V(P^2(R))$ and at most one has $v_{M^2} \in V(W(R)) \setminus V(P^1(R))$. Therefore, at least one of the regions satisfies Case~(\ref{case:pm}).
%Moreover, due to Lemma~\ref{lem:nei-regions}, only one region $R$ among those four regions may belong to $\regspm{+-} \cup \regspm{-+}$ and have $v_{M^1} \in V(W(R)) \setminus V(P^2(R))$ and only one region $R$ may have $v_{M^2} \in V(W(R)) \setminus V(P^1(R))$. Hence, if at least three of this regions belong to $\regspm{+-} \cup \regspm{-+}$ then at least one of them satisfies Case~(\ref{case:pm}).
\end{proof}

%\subsubsection{Refuting non-range cases}

In the next lemmata we show that when one of the
Cases~\ref{case:jordan:first}--\ref{case:jordan:last} of Lemma~\ref{lemma-jordan2} holds,
then either $M^1$ or $M^2$ is not maximal.
Our main tools in the upcoming arguments are Lemmata~\ref{lem:shortcut} and~\ref{lem:largermountain}.

\begin{lemma}\label{lemma:mountain-range-1}
If Case~(\ref{case:mm}) in Lemma~\ref{lemma-jordan2} holds, then $M^1$ or $M^2$ is not maximal.
\end{lemma}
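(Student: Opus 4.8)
The plan is to feed the region $R$ supplied by Case~(\ref{case:mm}) directly into Lemma~\ref{lem:largermountain}, applied to whichever of $M^1,M^2$ bounds the \emph{longer} arc of $R$. So let $R\in\regspm{--}$ be the finite elementary region. Being elementary, its bounding walk $W(R)$ splits (Lemma~\ref{lem:region-border}) into two positive-length paths $P^1(R)\subseteq\prm M^1$ and $P^2(R)\subseteq\prm M^2$ sharing exactly their endpoints; call these $u$ and $w$. Both $u,w$ are traversal vertices, so $u,w\in V(\prm M^1)\cap V(\prm M^2)$. Orient the paths so that $P^1(R)=\prm M^1[w,u]$ and $P^2(R)=\prm M^2[w,u]$, with $\prm M^1[u,w]$ and $\prm M^2[u,w]$ the complementary arcs. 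By Lemma~\ref{lem:true-region} some face of $B$ is enclosed by $W(R)$, so $P^1(R)\neq P^2(R)$ as subgraphs of $B$; moreover no such face can be incident to $\prm B[l,r]$ (those faces are enclosed by both $\prm M^1$ and $\prm M^2$, as in the proof of Lemma~\ref{lem:curves-disjoint}), hence $W(R)$ is disjoint from $\prm B[l,r]$ and $P^1(R),P^2(R)$ run along the carvemarks.

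Next I would record two topological facts about $R$. First, since $R\in\regspm{--}$, Lemma~\ref{lem:plus-minus} gives that no point of $R\setminus\bigcup_v D_v$ is strictly enclosed by $\prm M^1$ or by $\prm M^2$; consequently no edge of $P^2(R)$ is strictly enclosed by $\prm M^1$ (such an edge is either shared with $\prm M^1$, hence on it, or separates $R$ from a region lying on the same side of $\prm M^1$ as $R$), and symmetrically no edge of $P^1(R)$ is strictly enclosed by $\prm M^2$. Second, since $R$ is finite, $\overline{R}:=R\cup P^1(R)\cup P^2(R)$ is a closed disc glued to $M^1$ along the arc $P^1(R)=\prm M^1[w,u]$ and lying outside $M^1$; hence replacing this arc by $P^2(R)$ yields the closed walk $\prm M^1[u,w]\cup P^2(R)$, which bounds $M^1\cup\overline{R}$ and therefore encloses $M^1$. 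Symmetrically, $\prm M^2[u,w]\cup P^1(R)$ encloses $M^2$.

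The case analysis is then immediate. If $\wei{P^2(R)}\le\wei{P^1(R)}$, apply Lemma~\ref{lem:largermountain} to the mountain $M^1$, the vertices $u,w\in V(\prm M^1)$, and the path $P=P^2(R)$: its condition~(1) is the first fact above, condition~(2) holds since $P^2(R)\neq P^1(R)=\prm M^1[w,u]$, and condition~(3) is precisely $\wei{P^2(R)}\le\wei{\prm M^1[w,u]}$; hence $M^1$ is not maximal. Otherwise $\wei{P^1(R)}<\wei{P^2(R)}$, and the same three conditions hold for the mountain $M^2$, the vertices $u,w$, and $P=P^1(R)$ (condition~(3) being $\wei{P^1(R)}\le\wei{P^2(R)}=\wei{\prm M^2[w,u]}$); hence $M^2$ is not maximal. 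In either case $M^1$ or $M^2$ fails to be maximal, which is what we want.

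The step I expect to need the most care is the first topological fact together with the enclosure claim $\prm M^1[u,w]\cup P^2(R)\supseteq M^1$: since in $B$ the cycles $\prm M^1$ and $\prm M^2$ may share edges and vertices while the regions in $\regs$ live in the doubled graph $B_2$, one must argue cleanly about projecting $W(R)$ from $B_2$ back to $B$, and one must use that $W(R)$ avoids $\prm B[l,r]$ so that the complementary arcs $\prm M^1[u,w]$ and $\prm M^2[u,w]$ genuinely carry the carvebase (and the two summits, lying on the side of each mountain away from $R$), ensuring the swapped walk really does enclose the whole of $M^1$ (resp.\ $M^2$).
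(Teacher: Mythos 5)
Your proof is correct and takes essentially the same route as the paper: via Lemma~\ref{lem:true-region} obtain a finite face enclosed by $W(R)$ and $P^1(R)\neq P^2(R)$, then apply Lemma~\ref{lem:largermountain} to whichever of $M^1,M^2$ has the longer bounding arc of $R$, with the shorter arc as the replacing path. You are more explicit than the paper in verifying hypotheses (1)--(3) of Lemma~\ref{lem:largermountain}; the only nitpick is an orientation slip: if $P^1(R)=\prm M^1[w,u]$ then in fact $P^2(R)=\prm M^2[u,w]$ (both arcs are traversed by their respective mountain cycles in the direction opposite to the counter-clockwise traversal of $\prm R$, which goes from $u$ to $w$ on one arc and from $w$ to $u$ on the other), so in the $M^2$ case you should swap the roles of $u$ and $w$ when invoking Lemma~\ref{lem:largermountain} — this is cosmetic and does not affect the substance of the argument.
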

\begin{proof}
Let $R$ be the elementary region $R$ promised by Case~(\ref{case:mm}).
By Lemma~\ref{lem:true-region}, $W(R)$ encloses at least one finite face of $B$ and $P^1(R) \neq P^2(R)$.
If $\wei{P^1(R)} \leq \wei{P^2(R)}$, then we can apply Lemma~\ref{lem:largermountain} to $P^1(R)$ and $M^2$, implying that there
exists a mountain that strictly contains $M^2$. Otherwise, \ie if $\wei{P^1(R)} >\wei{P^2(R)}$, then we can apply
Lemma~\ref{lem:largermountain} to $P^2(R)$ and $M^1$, implying that there exists a mountain that strictly contains $M^1$.
\end{proof}

\begin{lemma}\label{lemma:mountain-range-2}
If Case~(\ref{case:pm}) in Lemma~\ref{lemma-jordan2} holds, then $M^1$ or $M^2$ is not maximal.
\end{lemma}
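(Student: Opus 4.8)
The plan is to play off the two boundary paths of $R$ against each other: depending on which is heavier, I either enlarge one of the mountains (via Lemma~\ref{lem:largermountain}) or show the configuration is outright impossible (via Lemma~\ref{lem:shortcut}). Let $a,b$ be the two traversal vertices bounding $R$, so that, $R$ being elementary, its boundary walk $W(R)$ splits into an $a$--$b$ path $P^1(R)\subseteq\prm M^1$ and an $a$--$b$ path $P^2(R)\subseteq\prm M^2$; by Lemma~\ref{lem:true-region} we have $P^1(R)\neq P^2(R)$ and $W(R)$ encloses a genuine face $f$ of $B$. By symmetry (swapping the roles of $M^1$ and $M^2$, and correspondingly the two conditions on $v_{M^1},v_{M^2}$ in Case~(\ref{case:pm})) I may assume $R\in\regspm{-+}$, that is, $R$ is enclosed by $\prm M^2$ but not by $\prm M^1$; the goal is then to conclude that $M^1$ is not maximal. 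First I would record preliminary facts: since $R$ is enclosed by $\prm M^2$ it is not the infinite region, and using the paths $\xP,\xP'$ attached in the preliminary simplification one checks that $R$ is not incident to $\xP$ --- otherwise $W(R)$ would bound only the infinite face or a degenerate sliver between the two copies of part of $\xP$, contradicting Lemma~\ref{lem:true-region}. Hence $P^2(R)\subseteq P_L^2\cup P_R^2$, so among $l,v_{M^2},r$ only the summit $v_{M^2}$ can possibly be an interior vertex of $P^2(R)$; also, as $R$ is enclosed by $\prm M^2$ and $P^1(R)$ lies on its boundary, the whole path $P^1(R)$ is enclosed by $\prm M^2$.

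If $\wei{P^2(R)}\le\wei{P^1(R)}$, I would apply Lemma~\ref{lem:largermountain} to the mountain $M^1=\mou{P_L^1}{P_R^1}$ with $P:=P^2(R)$ and $\{u,w\}=\{a,b\}$ ordered so that $\prm M^1[w,u]=P^1(R)$ and $\prm M^1[u,w]=\prm M^1\setminus P^1(R)$. Condition~(3) of that lemma is exactly $\wei{P^2(R)}\le\wei{P^1(R)}$ and condition~(2) is $P^2(R)\neq P^1(R)$, already established. For condition~(1): the closed walk $(\prm M^1\setminus P^1(R))\cup P^2(R)$ is obtained from $\prm M^1$ by detouring around the region $R$, which lies outside $\prm M^1$, so it encloses $M^1$ together with $\overline R$, in particular the face $f\notin M^1$; and no edge $e$ of $P^2(R)$ is strictly enclosed by $\prm M^1$, since $e$ lies on the boundary of $R\in\regspm{-+}$ and, by Lemma~\ref{lem:nei-regions}, the region on the other side of $e$ belongs to $\regspm{++}\cup\regspm{--}$ --- a $\regspm{++}$ neighbour would force $e\in\prm M^1$, while a $\regspm{--}$ neighbour means $e$ borders two regions not enclosed by $\prm M^1$, so in both cases $e$ is not strictly enclosed by $\prm M^1$. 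Lemma~\ref{lem:largermountain} then gives that $M^1$ is not maximal, which is what we want.

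It remains to rule out $\wei{P^1(R)}<\wei{P^2(R)}$, using that $P^1(R)$ is an $a$--$b$ path enclosed by $\prm M^2$, and splitting on whether the summit $v_{M^2}$ is an interior vertex of $P^2(R)$. If it is not, then by the first paragraph $P^2(R)$ is contained entirely in $P_L^2$ or entirely in $P_R^2$, so $P^2(R)=\prm M^2[a,b]$ meets the hypothesis of Lemma~\ref{lem:shortcut}; taking $Q:=P^1(R)$ there yields $\wei{P^1(R)}\ge\wei{P^2(R)}$, a contradiction. If $v_{M^2}$ is interior to $P^2(R)$, then $v_{M^2}\in V(W(R))$, so the Case~(\ref{case:pm}) hypothesis $v_{M^2}\notin V(W(R))\setminus V(P^1(R))$ forces $v_{M^2}\in V(P^1(R))$, i.e.\ $P^1(R)$ passes through $v_{M^2}$. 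Now $P^2(R)$ decomposes as a subpath of $P_L^2$ from $a$ to $v_{M^2}$ followed by a subpath of $P_R^2$ from $v_{M^2}$ to $b$; since $M^2$ is a mountain, $P_L^2$ is a shortest $l$--$P_R^2$ path in the subgraph enclosed by $M^2$, and a standard subpath argument shows these two pieces are shortest $a$--$v_{M^2}$ and $v_{M^2}$--$b$ paths in that subgraph, so $\wei{P^2(R)}=\dist(a,v_{M^2})+\dist(v_{M^2},b)$ measured in that subgraph. As $P^1(R)$ is enclosed by $\prm M^2$ and passes through $v_{M^2}$, we get $\wei{P^1(R)}\ge\dist(a,v_{M^2})+\dist(v_{M^2},b)=\wei{P^2(R)}$, again a contradiction. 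Hence only the first case can occur and $M^1$ is not maximal; the case $R\in\regspm{+-}$ is symmetric and produces the conclusion that $M^2$ is not maximal.

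I expect the main obstacle to be the first paragraph: pinning $P^2(R)$ down inside $P_L^2\cup P_R^2$ requires discarding the degenerate sliver regions introduced when the carvebase is duplicated and ruling out that $l$ or $r$ is interior to $P^2(R)$, and these are exactly the places where the artificial structure added in the preliminary simplification steps, together with Lemma~\ref{lem:true-region}, must be invoked carefully; the weight bookkeeping in the two main branches is then routine given Lemmata~\ref{lem:shortcut} and~\ref{lem:largermountain}.
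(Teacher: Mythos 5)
The gap is precisely where you flagged it: your appeal to Lemma~\ref{lem:true-region} does not establish that $R$ is disjoint from $\xP = \prm B[l,r]$. That lemma only guarantees that $W(R)$ encloses \emph{some} genuine face of $B$; it does not prevent $R$ from being incident to the copies of $\prm B[l,r]$ while also containing further genuine faces. Indeed, since both mountains share the carvebase, the degenerate sliver faces of $B_2$ between the two copies of $\prm B[l,r]$ all sit in a single region $R_0 \in \regspm{+-} \cup \regspm{-+}$ (the paper notes this explicitly in the proof of Lemma~\ref{lem:mountain-order}), and nothing in the hypothesis of Case~(\ref{case:pm}) forces $R \neq R_0$: the condition on $v_{M^1},v_{M^2}$ is easily satisfied by $R_0$, and $R_0$ need not be a cushion. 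When $R=R_0$, both $P^1(R)$ and $P^2(R)$ contain $\prm B[l,r]$ as a subpath, so $P^2(R) \not\subseteq P_L^2\cup P_R^2$ and both of your sub-cases collapse: the ``summit not interior'' branch can no longer invoke Lemma~\ref{lem:shortcut}, which requires $P^2(R)$ to lie entirely inside $P_L^2$ or $P_R^2$, and the ``summit interior'' branch can no longer decompose $P^2(R)$ into one $P_L^2$-piece and one $P_R^2$-piece.

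The paper avoids the issue by arguing locally rather than globally. After noting $P^1(R)$ is enclosed by $\prm M^2$ and $\wei{P^1(R)} < \wei{P^2(R)}$, it lists the common vertices $v_1,\dots,v_s$ of $P^1(R)$ and $P^2(R)$ (in order along $P^2(R)$, which is also their order along $P^1(R)$) and picks an index $i$ with $\wei{P^1(R)[v_{i-1},v_i]} < \wei{P^2(R)[v_{i-1},v_i]}$; such an index exists and necessarily lies off the shared carvebase, where the two paths coincide edge-by-edge. Case~(\ref{case:pm}) then guarantees $v_{M^2}$ is not an internal vertex of $P^2(R)[v_{i-1},v_i]$, and the $\regspm{-+}$ membership of $R$ implies the cycle $P^1(R)[v_{i-1},v_i]\cup P^2(R)[v_{i-1},v_i]$ encloses no face incident to $\prm B[l,r]$; together these put $P^2(R)[v_{i-1},v_i]$ entirely in $P_L^2$ or $P_R^2$, so Lemma~\ref{lem:shortcut} applies to \emph{this subpath} and yields the contradiction. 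Your first branch (applying Lemma~\ref{lem:largermountain} when $\wei{P^2(R)}\le\wei{P^1(R)}$) is fine and matches the paper; your second branch would be repairable by localising the shortcut argument to a subpath of $P^2(R)$ away from $\prm B[l,r]$ that carries a weight deficit, but the global reduction you attempted is not available.
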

\begin{proof}
Let $R$ be the elementary region $R$ promised by Case~(\ref{case:pm}).
By Lemma~\ref{lem:true-region}, $W(R)$ encloses at least one finite face of $B$ and $P^1(R) \neq P^2(R)$.
Without loss of generality, assume that $R \in \regspm{-+}$.
If $\wei{P^1(R)} \geq \wei{P^2(R)}$, then we can apply Lemma~\ref{lem:largermountain} to $P^2(R)$ and $M^1$, implying that there exists
a mountain that strictly contains $M^1$. Hence, we are left with the case $\wei{P^1(R)} < \wei{P^2(R)}$.

Note that $P^1(R)$ is enclosed by $\prm M^2$.
Let $v_1,v_2,\ldots,v_s$ be the vertices of $V(P^1(R)) \cap V(P^2(R))$, in the order
of their appearance on $P^2(R)$. Note that $s \geq 2$, as $v_1, v_s$ are the endpoints of $P^1(R)$ and $P^2(R)$.
Moreover, $v_1,v_2,\ldots,v_s$ is also the order of the appearance of vertices of $V(P^1(R)) \cap V(P^2(R))$ on $P^1(R)$,
as $P^1(R)$ is a simple path and is enclosed by $\prm M^2$.
As $\wei{P^1(R)} < \wei{P^2(R)}$, there exists an index $1 < i \leq s$ such that $\wei{P^1(R)[v_{i-1},v_i]} < \wei{P^2(R)[v_{i-1},v_i]}$.

By the properties of Case~(\ref{case:pm}), $v_{M^2}$ is not in $V(P^2(R)) \setminus V(P^1(R))$; in particular,
$v_{M^2}$ is not an internal vertex of $P^2(R)[v_{i-1},v_i]$.
As $R \in \regspm{-+}$, that is, $R \setminus \bigcup_{v \in V(B^2)} D_v$ is not enclosed by $\prm M^1$, and since
$W(R)$ encloses $C := P^1(R)[v_{i-1},v_i] \cup P^2(R)[v_{i-1},v_i]$, $C$ cannot enclose any face incident
to an edge of $\prm B[l,r]$. Consequently, $P^2(R)[v_{i-1},v_i]$ is a subpath of $P_L^2$ or $P_R^2$.
However, as $\wei{P^1(R)[v_{i-1},v_i]} < \wei{P^2(R)[v_{i-1},v_i]}$ and $P^1(R)[v_{i-1},v_i]$ is enclosed
by $\prm M^2$, this contradicts Lemma~\ref{lem:shortcut} and finishes the proof of the lemma.
\end{proof}

\begin{lemma}\label{lemma:mountain-range-3}
If Case~(\ref{case:inf}) in Lemma~\ref{lemma-jordan2} holds, then $M^1$ or $M^2$ is not maximal.
\end{lemma}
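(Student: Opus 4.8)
Assume Case~(\ref{case:inf}) holds, so $R_\infty$ is elementary: its boundary walk decomposes as $W(R_\infty)=P^1(R_\infty)\cup P^2(R_\infty)$, where $P^\alpha(R_\infty)$ is a subpath of $\prm M^\alpha$, the two subpaths meet exactly in their common endpoints $u,w$, and moreover $v_{M^1},v_{M^2}\notin V(W(R_\infty))$. Since $R_\infty$ is the unbounded region, both $M^1$ and $M^2$ lie in the bounded region $D:=\plane\setminus R_\infty$, hence are enclosed by $W(R_\infty)$; by Lemma~\ref{lem:true-region}, $W(R_\infty)$ also encloses a face of $B$ and $P^1(R_\infty)\neq P^2(R_\infty)$. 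Finally, $M^1$ and $M^2$ share the carvebase $\xP$, so the finite face $f_{\mathrm{thin}}$ lying between $\xP$ and $\prm B[l,r]$ belongs to $M^1\cap M^2$; in particular $M^1\cap M^2\neq\emptyset$, and since $M^1\neq M^2$ are both maximal, neither contains the other.

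The plan is to feed $P^2(R_\infty)$ into Lemma~\ref{lem:largermountain} as a shortcut for $M^1$ (symmetrically, $P^1(R_\infty)$ for $M^2$). Take $P:=P^2(R_\infty)$ and name $u,w$ so that $\prm M^1[u,w]=P^1(R_\infty)$, so that the lemma would replace the complementary arc $A^1:=\prm M^1\setminus P^1(R_\infty)$. Its first hypothesis holds: every edge of $P^2(R_\infty)$ is incident in $B_2$ to $R_\infty\in\regspm{--}$, hence is not strictly enclosed by $\prm M^1$, and the closed walk $\prm M^1[u,w]\cup P=W(R_\infty)$ encloses $M^1$. Its second hypothesis, $P^2(R_\infty)\neq A^1$, holds because otherwise $W(R_\infty)=\prm M^1$ would enclose every face of $B$, impossible for the carve $M^1$, whose carvemark has length at most $(\frac{1}{2}-\delta)\wei{\prm B}$. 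Thus everything reduces to the third (weight) hypothesis,
$$\wei{P^2(R_\infty)} \le \wei{A^1}.$$
Note that, writing $A^2:=\prm M^2\setminus P^2(R_\infty)$, the disjunction ``$\wei{P^2(R_\infty)}\le\wei{A^1}$ or $\wei{P^1(R_\infty)}\le\wei{A^2}$'' is equivalent to ``$\wei{W(R_\infty)}\le\wei{\prm M^1}$ or $\le\wei{\prm M^2}$'', and proving either disjunct suffices, via the corresponding application of Lemma~\ref{lem:largermountain} to $M^1$ or to $M^2$.

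To prove $\wei{P^2(R_\infty)}\le\wei{A^1}$ I would first locate the carvebase. By Lemma~\ref{lem:border-traversal}, the copies of $\xP$ lying on $\partial R_\infty$ all belong to a single mountain; say it is $\prm M^1$, so $\xP\subseteq P^1(R_\infty)$ and $P^2(R_\infty)\cap\xP=\emptyset$. Then $P^2(R_\infty)$ is a contiguous subarc of $\prm M^2$ that avoids both $v_{M^2}$ (here the case hypothesis is used) and $\xP$, so it lies within $P_L^2$ or within $P_R^2$; say $P^2(R_\infty)=P_L^2[u,w]$. Next I claim $A^1$ is enclosed by $\prm M^2$: its interior lies in $D$, and were it disjoint from $M^2$, the curve $\prm M^1=P^1(R_\infty)\cup A^1$ would lie in the closure of $D\setminus M^2$, so $M^1$ would meet $M^2$ only along the arc $\prm M^2\setminus P^2(R_\infty)$, contradicting $f_{\mathrm{thin}}\in M^1\cap M^2$. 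This is clean when $\gamma^1$ and $\gamma^2$ cross only at $u,w$; the general case is first reduced to it, since extra crossings of $\gamma^1$ and $\gamma^2$ create a finite elementary region that places us in Case~(\ref{case:mm}) or~(\ref{case:pm}), already handled by Lemmata~\ref{lemma:mountain-range-1} and~\ref{lemma:mountain-range-2}. Granting the claim, apply Lemma~\ref{lem:shortcut} in $M^2$ to the path $A^1$: it is enclosed by $\prm M^2$, its endpoints $u,w$ lie on $\prm M^2$, and $\prm M^2[u,w]=P_L^2[u,w]=P^2(R_\infty)$ lies in the single leg $P_L^2$, so $\wei{A^1}\ge\wei{P^2(R_\infty)}$ --- exactly the inequality needed. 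Hence Lemma~\ref{lem:largermountain} applies and $M^1$ is not maximal. I expect the main obstacle to be precisely this weight bound: getting the lengths of $P^1(R_\infty)$, $P^2(R_\infty)$, $A^1$, $A^2$ and the two carvemarks to line up, and carefully handling the configuration in which $\gamma^1$ and $\gamma^2$ cross more than twice.
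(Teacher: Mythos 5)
Your proof takes essentially the same route as the paper's: you use Lemma~\ref{lem:border-traversal} to pin down which of $P^1(R_\infty), P^2(R_\infty)$ contains the carvebase, use the Case~(\ref{case:inf}) hypothesis to exclude the summits from $\partial R_\infty$, conclude that one of the outer arcs is a leg-subpath, establish that the complementary inner arc of the other mountain is enclosed by the first mountain's perimeter, and combine Lemma~\ref{lem:shortcut} with Lemma~\ref{lem:largermountain}. The paper packages this as a trichotomy (``if either largermountain weight condition holds, apply it; else both inequalities are reversed and Lemma~\ref{lem:shortcut} yields a contradiction'') whereas you prove a specific disjunct directly; these are the same argument up to reorganization and a transposed WLOG.

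One inaccuracy worth fixing: your justification of $P^2(R_\infty)\neq A^1$ is not right. If $P^2(R_\infty)=A^1$, then $W(R_\infty)=\prm M^1$, which means $\prm M^1$ encloses exactly the bounded region $D$ of the arrangement --- not \emph{every} face of $B$ (faces of $B$ between $\xP'$ and the old $\prm B[r,l]$ lie in $R_\infty$), so the appeal to the length of the carvemark doesn't apply. The correct way to rule this out is that $D\supseteq M^2$ would then give $\prm M^1$ strictly enclosing $M^2$, contradicting maximality and distinctness; the paper gets the same conclusion by noting that $\gamma^1$ and $\gamma^2$ would then be disjoint under the minimization criterion, contradicting Lemma~\ref{lem:curves-disjoint}. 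Finally, you are right to flag the multi-crossing subtlety behind the enclosure claim ($A^1$ enclosed by $\prm M^2$, resp.\ the paper's $Q^2$ enclosed by $\prm M^1$); the paper asserts this without elaboration, so both proofs rely on the same (unproved in place) topological fact, and your ``reduce to two crossings via Cases~(\ref{case:mm}) and~(\ref{case:pm})'' sketch would need to be carried out to make this fully rigorous.
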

\begin{proof}
Let $R=R_\infty$ be the elementary infinite region promised by Case~(\ref{case:inf}).
Let $a$ and $b$ be the endpoints of $P^1(R)$ and $P^2(R)$, and let $Q^\alpha = \prm M^\alpha \setminus P^\alpha(R)$ for $\alpha=1,2$.
Note that $Q^1 \neq P^2(R)$ (and, symmetrically, $Q^2 \neq P^1(R)$), as otherwise $\prm M^1$ encloses $M^2$; however,
in this case $\gamma^1$ and $\gamma^2$ would be disjoint, due to the minimization criterium used in the construction of $\prm M^1$ and $\prm M^2$ in $B_2$.
Consequently, if $\wei{Q^1} \geq \wei{P^2(R)}$ or $\wei{Q^2} \geq \wei{P^1(R)}$, then we may apply Lemma~\ref{lem:largermountain}
either to the pair $(P^2(R), M^1)$ or to the pair $(P^1(R), M^2)$, finishing the proof of the lemma. Hence, we are left with the case $\wei{Q^1} < \wei{P^2(R)}$ and $\wei{Q^2} < \wei{P^1(R)}$.

By Lemma~\ref{lem:border-traversal}, for exactly one $\alpha \in \{1,2\}$
all edges of the path $\prm M^\alpha[l,r]$ are incident to the infinite face in $B_2$, and
all edges of $\prm M^{3-\alpha}[l,r]$ are not incident to the infinite face in $B_2$.
Moreover, neither $a$ nor $b$ is an internal vertex of $\prm B[l,r]$.
Consequently, at least one of the paths $P^1(R)$ and $P^2(R)$ does not contain any edge of $\prm B[l,r]$.
Without loss of generality, assume it is $P^1(R)$.
Moreover, by the properties of Case~(\ref{case:inf}), $P^1(R)$ does not contain $v_{M^1}$.
Hence, $P^1(R)$ is a subpath of $P_L^1$ or $P_R^2$. However, $\wei{Q^2} < \wei{P^1(R)}$ and $Q^2$ is enclosed by $\prm M^1$.
This contradicts Lemma~\ref{lem:shortcut}.
\end{proof}

\begin{lemma}\label{lemma:mountain-range-4}\label{lem:mountain-range-last}
If Case~(\ref{case:cushion}) in Lemma~\ref{lemma-jordan2} holds, then $M^1$ or $M^2$ is not maximal.
\end{lemma}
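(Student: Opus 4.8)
The plan is to locate the cushion precisely with respect to $\prm M^1$ and $\prm M^2$, and then enlarge one of the two mountains by a single application of Lemma~\ref{lem:largermountain}, in the spirit of Lemmata~\ref{lemma:mountain-range-1}--\ref{lemma:mountain-range-3}.

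Let $R$ be the cushion provided by Case~(\ref{case:cushion}). Since $R$ is elementary, $W(R)$ splits into $P^1(R)\subseteq \prm M^1$ and $P^2(R)\subseteq \prm M^2$, sharing exactly their two endpoints, which are traversal vertices. By the definition of a cushion, each $P^\alpha(R)$ contains the whole carvebase $\prm B[l,r]$ of $M^\alpha$ together with the edges of $P_L^\alpha$ and $P_R^\alpha$ incident to $l$ and to $r$; hence $P^\alpha(R)=P_L^\alpha[l,a_\alpha]\cup \prm B[l,r]\cup P_R^\alpha[r,b_\alpha]$ for some $a_\alpha\in V(P_L^\alpha)$ and $b_\alpha\in V(P_R^\alpha)$. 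Because $W_2(R)$ is a simple closed curve running along the two ``parallel'' copies of the carvebase carried by $\prm M^1$ and $\prm M^2$, it traverses these copies in opposite directions, which forces the two $P_L$-side endpoints to coincide and the two $P_R$-side endpoints to coincide; write $a:=a_1=a_2$ and $b:=b_1=b_2$. Thus $P^1(R)$ and $P^2(R)$ differ, if at all, only along the two \emph{lenses}: the region between $P_L^1[l,a]$ and $P_L^2[l,a]$ near $l$, and the region between $P_R^1[r,b]$ and $P_R^2[r,b]$ near $r$. By Lemma~\ref{lem:true-region}, $W(R)$ encloses a genuine face of $B$, so at least one lens is nondegenerate; by the left--right symmetry (swapping $l\leftrightarrow r$ and $L\leftrightarrow R$) we may assume it is the one near $l$, i.e.\ $P_L^1[l,a]\neq P_L^2[l,a]$.

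Next I would pin down the position of $R$: I claim $R$ is enclosed by exactly one of $\prm M^1,\prm M^2$. Indeed, $R$ contains the thin strip of faces running between the two copies of the carvebase carried by $\prm M^1$ and $\prm M^2$. One side of this strip leads to the unique face $f_0$ of $B$ incident to the carvebase from the interior side --- a face enclosed by both $M^1$ and $M^2$ --- while the other side leads to the outer face; and by Lemma~\ref{lem:border-traversal} the cycles $\prm M^1$ and $\prm M^2$ do not swap the two copies of a carvebase edge anywhere in the interior of the carvebase. It follows that the strip, and hence by Lemma~\ref{lem:plus-minus} the whole of $R$, lies inside exactly one of the two mountains. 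Say $R$ is enclosed by $\prm M^{\alpha_0}$ and not enclosed by $\prm M^{3-\alpha_0}$, where $\alpha_0\in\{1,2\}$; in particular the $l$-lens is enclosed by $\prm M^{\alpha_0}$, and no face of the $l$-lens is strictly enclosed by $\prm M^{3-\alpha_0}$.

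Finally I would apply Lemma~\ref{lem:largermountain} to $M^{3-\alpha_0}$, taking $P:=P_L^{\alpha_0}[l,a]$ and labelling $u,w\in V(\prm M^{3-\alpha_0})$ so that $\prm M^{3-\alpha_0}[w,u]=P_L^{3-\alpha_0}[l,a]$ and $\prm M^{3-\alpha_0}[u,w]\cup P$ is the boundary of the subgraph obtained from $M^{3-\alpha_0}$ by adjoining the $l$-lens. Condition~(1) holds since $P$ lies on $\prm M^{\alpha_0}$ and on the boundary of $R$ with $R$ not enclosed by $\prm M^{3-\alpha_0}$ (so no edge of $P$ is strictly enclosed by $\prm M^{3-\alpha_0}$), and $\prm M^{3-\alpha_0}[u,w]\cup P$ plainly encloses $M^{3-\alpha_0}$. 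Condition~(2) holds because $P=P_L^{\alpha_0}[l,a]\neq P_L^{3-\alpha_0}[l,a]=\prm M^{3-\alpha_0}[w,u]$, as we arranged the $l$-lens to be nondegenerate. For condition~(3), note that $P_L^{3-\alpha_0}[l,a]$ bounds the $l$-lens, which is enclosed by $\prm M^{\alpha_0}$, hence it is an $l$--$a$ path inside $M^{\alpha_0}$; since $P_L^{\alpha_0}$ is a shortest $l$--$P_R^{\alpha_0}$ path in $M^{\alpha_0}$, its prefix $P=P_L^{\alpha_0}[l,a]$ is a shortest $l$--$a$ path in $M^{\alpha_0}$, so
$$\wei{P}=\dist_{M^{\alpha_0}}(l,a)\leq \wei{P_L^{3-\alpha_0}[l,a]}=\wei{\prm M^{3-\alpha_0}[w,u]}.$$
Thus Lemma~\ref{lem:largermountain} applies and $M^{3-\alpha_0}$ is not maximal. (If instead the nondegenerate lens is the one near $r$, the argument is symmetric, using that $P_R^{\alpha_0}$ is a shortest $r$--$P_L^{\alpha_0}$ path in $M^{\alpha_0}$.)

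The routine parts are the planar bookkeeping: that $a_1=a_2$ and $b_1=b_2$, that the strip lies entirely in $R$, and that ``not enclosed by $\prm M^{3-\alpha_0}$'' propagates from the faces of the lens to the edges of $P$. The one genuinely delicate step --- and the main obstacle --- is showing that the cushion lies inside exactly one of the two mountains; once that is in place, a single invocation of Lemma~\ref{lem:largermountain} concludes the proof.
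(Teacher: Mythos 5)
Your proof follows the same overall route as the paper's: locate the cushion relative to $\prm M^1$ and $\prm M^2$ via the carvebase strip, then use one of the two ``lenses'' near $l$ or $r$ together with Lemma~\ref{lem:largermountain} and Lemma~\ref{lem:shortcut} to contradict maximality. The locating step (showing $R\in\regspm{+-}\cup\regspm{-+}$) is fine, and your closing ``shortest-path-prefix'' step is exactly Lemma~\ref{lem:shortcut} in disguise, which is also fine. But there is a genuine gap.

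You only establish, via Lemma~\ref{lem:true-region}, that \emph{at least one} of the two lenses is nondegenerate, and then WLOG work with the $l$-lens. The final step --- that $P_L^{\alpha_0}[l,a]$ is a shortest $l$--$a$ path inside $M^{\alpha_0}$ --- implicitly requires $a\in V(P_L^{\alpha_0})$, i.e.\ that the $\prm M^{\alpha_0}$-side of the lens you chose does not pass over the summit $v_{M^{\alpha_0}}$. But this is not forced. You assert that the cushion structure gives $P^\alpha(R)=P_L^\alpha[l,a_\alpha]\cup\prm B[l,r]\cup P_R^\alpha[r,b_\alpha]$ with $a_\alpha\in V(P_L^\alpha)$ and $b_\alpha\in V(P_R^\alpha)$, but this decomposition fails whenever the complement $\prm M^\alpha\setminus P^\alpha(R)$ lies entirely within one slope of the carvemark: in that case exactly one of $a\in V(P_L^{\alpha_0})$, $b\in V(P_R^{\alpha_0})$ holds, and the other side of $P^{\alpha_0}(R)$ passes over the summit. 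If the lens you selected as nondegenerate is the one whose $\prm M^{\alpha_0}$-side contains $v_{M^{\alpha_0}}$, neither the shortest-path-prefix claim nor Lemma~\ref{lem:shortcut} applies, and you have no fallback.

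The paper closes this gap by proving, before doing anything else, that \emph{both} lenses are nondegenerate: if, say, $P^1(R)[a,l]=P^2(R)[a,l]$, then swapping the $B_2$-copies of these edges between $\prm M^1$ and $\prm M^2$ removes the traversal vertex $a$ (and possibly adds $l$), contradicting the minimization criterium used in the construction of $\prm M^1,\prm M^2$ in $B_2$. With both lenses available, the paper is free to pick the one on which $v_{M^1}$ does not lie, which is the WLOG your argument is missing. Your proof can be repaired by inserting this minimization argument in place of the weaker invocation of Lemma~\ref{lem:true-region}, and by not assuming $a_\alpha\in V(P_L^\alpha)$ and $b_\alpha\in V(P_R^\alpha)$ simultaneously.
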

\begin{proof}
Let $R$ be the cushion promised by Case~(\ref{case:cushion}).
By the definition of a cushion, $\prm B[l,r]$ is a subpath of both $P^1(R)$ and $P^2(R)$.
As $W_2(R)$ encloses all faces of $B_2$ between the copies of the edges of $\prm B[l,r]$, $R \in \regspm{+-} \cup \regspm{-+}$.
Without loss of generality assume that $P^1_2(R)[l,r]$ is incident to the infinite face of $B_2$ and thus $R \in \regspm{+-}$.
Let $a$ be the endpoint of $P^1(R)$ that lies closer to $l$ than to $r$, and let $b$ be the other endpoint;
note that also on $P^2(R)$ the endpoint $a$ is closer to $l$ than to $r$.

Assume that $P^1(R)[a,l] = P^2(R)[a,l]$. Consider the following operation: for each edge $e$
of $P^1(R)[a,l]$, we swap which copy of $e$ in $B_2$ belongs to $\prm M^1$ and which to $\prm M^2$.
In this manner, an internal vertex of $P^1(R)[a,l]$ is a traversal vertex if and only if it was traversal vertex prior to the operation,
whereas $a$ discontinues to be a traversal vertex and $l$ becomes a traversal vertex.
Consequently, the operation does not change the total number of traversal vertices while strictly
decreasing the number of traversal vertices that are not equal to $l$ or $r$, a contradiction to the choice of $\prm M^1$ and $\prm M^2$ in $B_2$.

We infer that the closed walks $P^1(R)[a,l] \cup P^2(R)[a,l]$ and $P^1(R)[r,b] \cup P^2(R)[r,b]$
enclose each at least one face of $B$. The vertex $v_{M^1}$ cannot lie both on $P^1(R)[a,l]$ and $P^1(R)[r,b]$;
without loss of generality assume it does not lie on $P^1(R)[a,l]$, and $P^1(R)[a,l]$ is a subpath of $P_L^1$.
If $\wei{P^1(R)[a,l]} \leq \wei{P^2(R)[a,l]}$ then we may apply Lemma~\ref{lem:largermountain}
to the pair $(P^1(R)[a,l], M^2)$.
Otherwise, $\wei{P^2(R)[a,l]} < \wei{P^1(R)[a,l]}$. However, $P^2(R)[a,l]$ is enclosed by $\prm M^1$
and $P^1(R)[a,l]$ is a subpath of $P_L^1$. This contradicts Lemma~\ref{lem:shortcut}.
\end{proof}

As a consequence of the above lemmata, we infer the following.
\begin{corollary}\label{cor:range}
Any two distinct maximal mountains form a range.
\end{corollary}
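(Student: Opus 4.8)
The plan is to derive the corollary immediately by combining the case distinction of Lemma~\ref{lemma-jordan2} with the four ``non-maximality'' lemmata that precede it. Fix two distinct maximal $\delta$-mountains $M^1 = \mou{P_L^1}{P_R^1}$ and $M^2 = \mou{P_L^2}{P_R^2}$ connecting $l$ and $r$; after the preliminary simplification steps we may assume both are induced by strict $\delta$-carves and their summits are present in $B$. Applying Lemma~\ref{lemma-jordan2} to this pair, exactly one of the five listed configurations~(\ref{case:jordan:first})--(\ref{case:range}) occurs.

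First I would rule out the first four configurations one by one. If Case~(\ref{case:mm}) held, then Lemma~\ref{lemma:mountain-range-1} would produce a mountain strictly containing $M^1$ or $M^2$, contradicting maximality; likewise Case~(\ref{case:pm}) is excluded by Lemma~\ref{lemma:mountain-range-2}, Case~(\ref{case:inf}) by Lemma~\ref{lemma:mountain-range-3}, and Case~(\ref{case:cushion}) by Lemma~\ref{lemma:mountain-range-4}. In each instance the conclusion is the same: the existence of the indicated elementary region (finite, in $\regspm{--}$; or in $\regspm{+-}\cup\regspm{-+}$ avoiding the relevant summit; or the infinite region being elementary and summit-free; or a cushion) feeds into Lemma~\ref{lem:largermountain} (sometimes via Lemma~\ref{lem:shortcut} to handle the ``shortcut inside a mountain'' subcase), yielding a strictly larger mountain. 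Since $M^1$ and $M^2$ are assumed maximal, none of these four cases can hold.

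Consequently, the only remaining possibility in Lemma~\ref{lemma-jordan2} is Case~(\ref{case:range}), which is exactly the statement that $M^1$ and $M^2$ form a range in the sense of Definition~\ref{definition-forms-a-range}. This completes the argument.

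I do not expect a genuine obstacle here: all of the substantive topological and metric work has already been carried out in Lemmata~\ref{lem:shortcut}, \ref{lem:largermountain}, and \ref{lemma-jordan}--\ref{lemma:mountain-range-4}. The only point requiring minor care is bookkeeping — confirming that the five cases of Lemma~\ref{lemma-jordan2} are genuinely exhaustive and that Lemmata~\ref{lemma:mountain-range-1}--\ref{lemma:mountain-range-4} between them cover precisely Cases~(\ref{case:jordan:first}) through~(\ref{case:jordan:last}) — so that ruling those out leaves Case~(\ref{case:range}) as the forced alternative.
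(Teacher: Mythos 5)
Your proposal is correct and matches the paper's implicit argument: the paper states the corollary immediately after Lemmata~\ref{lemma:mountain-range-1}--\ref{lem:mountain-range-last} with the phrase ``As a consequence of the above lemmata, we infer the following,'' and the intended reasoning is exactly the case elimination you spell out (Lemma~\ref{lemma-jordan2} gives five alternatives; the four non-maximality lemmata exclude the first four for a pair of maximal mountains, leaving Case~(\ref{case:range})). The only nit is that Lemma~\ref{lemma-jordan2} asserts ``one of the following holds,'' not ``exactly one,'' but this is immaterial to the conclusion.
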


\subsection{The range of all maximal mountains}

We now analyse the structure of all maximal mountains, using the crucial property established in Corollary~\ref{cor:range} that any two distinct maximal mountains form a range.
Recall that, formally, a mountain is only a carve in $B$, and therefore, there is only a finite number of mountains.
Hence, we may assume that some edges of $B$ have been subdivided, so that each mountain with endpoints $l$ and $r$
can choose its summit among the vertices of $B$.

We start with the following observation that the mountain range relation implies an order on the set of maximal mountains.

\begin{lemma}\label{lem:mountain-order}
Let two mountains $M^1 = \mou{P_{L}^1}{P_{R}^1}$ and $M^2 = \mou{P_{L}^2}{P_{R}^2}$ form a range.
Then $\wei{P_L^1} < \wei{P_L^2}$ or $\wei{P_L^2} < \wei{P_L^1}$.
Moreover, if $\wei{P_L^1}<\wei{P_L^2}$, then $P^1(R^{-+})$ is a subpath of $\prm B[l,r] \cup P_R^1$, where $R^{-+}$ is the elementary region in $\regspm{-+}$.
\end{lemma}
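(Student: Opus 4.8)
The plan is to combine the topological picture of the two Jordan curves $\gamma^1,\gamma^2$ with the metric information supplied by Corollary~\ref{cor:maxmountain}; throughout I assume $M^1$ and $M^2$ are maximal, which is the only relevant case (cf.\ Corollary~\ref{cor:range}, which moreover tells us they automatically form a range). First I would record the structural consequences of Definition~\ref{definition-forms-a-range}. Since $M^1$ and $M^2$ form a range, $\gamma^1$ and $\gamma^2$ cross in exactly two traversal vertices, say $x$ and $y$ (this is forced by Lemma~\ref{lem:nei-regions} together with a count of curve-arcs), so each of $\prm M^1,\prm M^2$ is split by $\{x,y\}$ into exactly two arcs and $\regs$ consists of exactly the four elementary regions $R^{++},R^{+-},R^{-+},R^{--}$, one per sign pattern. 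Using that all faces of $B$ incident to $\prm B[l,r]$ from the side of the carvemarks lie inside both $\prm M^1$ and $\prm M^2$, and that no internal vertex of $\prm B[l,r]$ is a traversal vertex (Lemma~\ref{lem:border-traversal}), I would identify $R^{++}$ as the region whose boundary carries $\prm B[l,r]$, and then, tracing region adjacencies via Lemma~\ref{lem:nei-regions}, pin down that $P^1(R^{-+})$ is exactly the arc of $\prm M^1$ containing $\prm B[l,r]$ (hence not containing $v_{M^1}$, which by the range condition lies on $P^1(R^{+-})$), with endpoints $x\in P_L^1$ and $y\in P_R^1$; symmetrically $P^2(R^{+-})$ is the arc of $\prm M^2$ containing $\prm B[l,r]$; and consequently $x\in P_L^1\cap P_L^2$ and $y\in P_R^1\cap P_R^2$. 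I would also use the metric normalisation $\wei{P_L^\alpha}=\dist_B(v_{M^\alpha},l)$ and $\wei{P_R^\alpha}=\dist_B(v_{M^\alpha},r)$, which follows from Corollary~\ref{cor:maxmountain} together with the fact, established in its proof, that a shortest $P_R^\alpha$--$l$ path is enclosed by $M^\alpha$.

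For the dichotomy $\wei{P_L^1}\neq\wei{P_L^2}$, I would argue by contradiction. If $\wei{P_L^1}=\wei{P_L^2}$, then $\wei{P_L^1[l,x]}=\wei{P_L^2[l,x]}$: indeed $P_L^1[l,x]$ lies on $P^1(R^{-+})$, which is enclosed by $\prm M^2$ (as $R^{-+}\in\regspm{-+}$ and the region $R^{++}$ adjacent across this arc also lies inside $\prm M^2$), so Lemma~\ref{lem:shortcut} applied inside $\prm M^2$ gives $\wei{P_L^1[l,x]}\geq\wei{P_L^2[l,x]}$, and symmetrically (using that $P_L^2[l,x]$ lies on $P^2(R^{+-})$, which is enclosed by $\prm M^1$) one gets the reverse inequality; the same reasoning yields $\wei{P_R^1[r,y]}=\wei{P_R^2[r,y]}$. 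Substituting these equalities into the lengths of the two arcs bounding $R^{-+}$ reduces the comparison of $\wei{P^2(R^{-+})}$ with the length of the complementary arc of $\prm M^1$ to the comparison of the carvemark lengths $\wei{P_L^1}+\wei{P_R^1}$ and $\wei{P_L^2}+\wei{P_R^2}$; applying Lemma~\ref{lem:largermountain} to whichever of $(P^2(R^{-+}),M^1)$ or $(P^1(R^{+-}),M^2)$ meets its hypotheses then produces a mountain strictly containing $M^1$ or $M^2$, contradicting maximality. Here one must also verify the nondegeneracy hypothesis $P\neq\prm M[w,u]$ of Lemma~\ref{lem:largermountain}, which holds since $R^{-+}$ and $R^{+-}$ each enclose a face of $B$ by Lemma~\ref{lem:true-region}, and one must keep track of orientations so that the closed walk used in Lemma~\ref{lem:largermountain} genuinely encloses the relevant mountain.

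For the ``moreover'' part, note that under the identification above $P^1(R^{-+})=P_L^1[l,x]\cup\prm B[l,r]\cup P_R^1[r,y]$, so the assertion that $P^1(R^{-+})$ is a subpath of $\prm B[l,r]\cup P_R^1$ is equivalent to $x=l$, i.e.\ to the statement that no internal vertex of $P_L^1$ is a traversal vertex. I would derive this from $\wei{P_L^1}<\wei{P_L^2}$: if $x$ were an internal vertex of $P_L^1$ then, as above, $x\in P_L^2$ with $\wei{P_L^1[l,x]}=\wei{P_L^2[l,x]}$ and $\wei{P_R^1[r,y]}=\wei{P_R^2[r,y]}$; plugging these into the length of $P^1(R^{+-})$ versus the length of the arc of $\prm M^2$ through $v_{M^2}$ that it would replace, the gap simplifies to $\wei{P_L^2}-\wei{P_L^1}+\wei{P_R^2}-\wei{P_R^1}$, so the strict inequality $\wei{P_L^1}<\wei{P_L^2}$ supplies the slack (after separately handling the right-hand endpoint $y$) to invoke Lemma~\ref{lem:largermountain} and enlarge $M^2$, contradicting maximality; hence $x=l$. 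The step I expect to be the main obstacle is the first one: carefully justifying, in the doubled graph $B_2$ and in the presence of the parallel copies of the edges of $\prm B[l,r]$, precisely which arc of $\prm M^1$ and of $\prm M^2$ bounds each of the four regions, and dispatching the case distinctions on whether $x$ and $y$ fall in the interior of some $P_L^\alpha$, in the interior of some $P_R^\alpha$, or coincide with $l$ or $r$ (in particular ruling out, or absorbing, the degenerate configurations where $P^1(R^{-+})$ is already a subpath of $P_R^1$). Once this combinatorial skeleton is fixed, the metric arguments via Lemmata~\ref{lem:shortcut} and~\ref{lem:largermountain} are routine.
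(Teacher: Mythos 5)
Your topological picture of the two regions $\regspm{+-}$, $\regspm{-+}$ is wrong in a way the rest of the plan depends on, so this is a genuine gap. You posit traversal vertices $x\in V(P^1_L)\cap V(P^2_L)$ and $y\in V(P^1_R)\cap V(P^2_R)$, and you identify $P^1(\regspm{-+})$ and $P^2(\regspm{+-})$ as the two arcs carrying $\prm B[l,r]$. Neither can hold. The two copies of each edge of $\prm B[l,r]$ in $B_2$ bound a sliver face that is enclosed by exactly one of $\prm M^1,\prm M^2$; by Lemma~\ref{lem:border-traversal} all these slivers lie in a single region $R\in\{\regspm{+-},\regspm{-+}\}$, and both the $\gamma^1$-arc and the $\gamma^2$-arc carrying $\prm B[l,r]$ bound that \emph{same} region, so they are $P^1(R)$ and $P^2(R)$ — never one from each region, as in your picture. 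Moreover, if neither $l$ nor $r$ were a traversal vertex, $R$ would be a cushion (every $B_2$-edge incident to $l$ or $r$ would lie on its boundary); and if both were traversal vertices, the summit $v_{M^1}$ could not lie on $W(\regspm{+-})$ at all, since both arcs of $\gamma^1$ would then connect $l$ to $r$ and the one not carrying $\prm B[l,r]$ would be the entire carvemark. Both configurations are forbidden by Definition~\ref{definition-forms-a-range}. So exactly one of $l,r$ is traversal, and the remaining traversal vertex $a$ sits asymmetrically in $V(P^1_R)\cap V(P^2_L)$ (or with indices swapped), not in $V(P^1_L)\cap V(P^2_L)$.

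This broken skeleton is what your metric arguments are calibrated to. The ``common prefix'' equality $\wei{P^1_L[l,x]}=\wei{P^2_L[l,x]}$ you plan to use does not exist — $P^1_L$ and $P^2_L$ do not share a crossing point — and the ``moreover'' claim is not ``$x=l$'' in disguise. The paper's actual route is more direct and needs no contradiction: once the WLOG puts the slivers in $\regspm{+-}$ with $r$ traversal, $P^2(\regspm{+-})[l,a]$ is a \emph{proper} subpath of $P^2_L$ that is enclosed by $\prm M^1$ and ends on $P^1_R$, so the mountain property of $M^1$ gives $\wei{P^1_L}\le\wei{P^2(\regspm{+-})[l,a]}<\wei{P^2_L}$ outright, and $P^1(\regspm{-+})$ is simply $P^1_R[a,r]$, manifestly a subpath of $\prm B[l,r]\cup P^1_R$. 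In the other WLOG the roles of $M^1,M^2$ swap and the inequality reverses, making the ``moreover'' vacuous. You flagged the topological step as the main obstacle — rightly — but the configuration you committed to is one the range hypothesis rules out, so the ``routine'' closure you describe would not go through after fixing it.
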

\begin{proof}
Consider the unique regions $R^{-+} \in \regspm{-+}$ and $R^{+-} \in \regspm{+-}$.
Note that, by Lemma~\ref{lem:nei-regions}, they do not share any curve-arc that makes up their borders
and, consequently, for $\alpha=1,2$, $P^\alpha(R^{+-})$ and $P^\alpha(R^{-+})$ are edge-disjoint.
Moreover, by definition of forming a range,
$v_{M^2}$ does not lie on $P^2(R^{+-})$ and $v_{M^1}$ does not lie on $P^1(R^{-+})$.

We also infer from Lemma~\ref{lem:nei-regions} that, since $|\regspm{-+}| = |\regspm{+-}|=1$,
there are only four curve-arcs, all incident to $R^{-+}$ or $R^{+-}$ and, consequently,
$\regspm{--}$ consist only of the infinite region $R_\infty$ and $\regspm{++}$ consists
only of one region $R^{++}$.

Consider the faces of $B_2$ between the copies of the edges of $\prm B[l,r]$.
By Lemma~\ref{lem:border-traversal}, they are all enclosed by $W_2(R)$ for a single region $R$.
Moreover, as they are enclosed by only one of $\prm M^1$ and $\prm M^2$ in $B_2$, $R = R^{+-}$ or $R=R^{-+}$.
As neither $R^{+-}$ nor $R^{-+}$ is a cushion, exactly one of the vertices $l$ and $r$ is a traversal vertex,
   and an endpoint of all four paths $P^1(R^{+-})$, $P^2(R^{+-})$, $P^1(R^{-+})$ and $P^2(R^{-+})$.
We note that we may assume $r$ to be the traversal vertex, as the other case
can be reduced to this one by swapping the copies of the edges of $\prm B[l,r]$ in $B_2$ between $\prm M^1$ and $\prm M^2$.
Moreover, by symmetry between $M^1$ and $M^2$, without loss of generality we may assume that
the faces of $B_2$ between the copies of the edges of $\prm B[l,r]$ are enclosed by $W_2(R^{+-})$;
this implies that $\prm M^1[l,r]$ is incident to the infinite face of $B_2$.
Hence, $l$ lies on $P^2(R^{+-})$, that is, $P^1(R^{+-})[l,r] = P^2(R^{+-})[l,r] = \prm B[l,r]$.
Let $a$ be the intersection of $\gamma^1$ and $\gamma^2$ different than $r$,
and, at the same time, the endpoint of the paths
$P^1(R^{+-})$, $P^2(R^{+-})$, $P^1(R^{-+})$ and $P^2(R^{-+})$.
As $v_{M^1}$ lies on $P^1(R^{+-})$, the path $P^2(R^{+-})[l,a]$ is a path connecting
$l$ with $P_R^1$ that is enclosed by $\prm M^1$. Consequently,
$\wei{P_L^1} \leq \wei{P^2(R^{+-})[l,a]} < \wei{P_L^2}$, where the last inequality follows from the fact
that $v_{M^2}$ lies on $P^2(R^{-+})$ and $v_{M^2} \neq a$, thus $P^2(R^{+-})[l,a]$ is a proper subpath of $P_L^2$.
The second part of the lemma is immediate from the above discussion.

Note that if we would assume that the faces of $B_2$ between the copies of the edges
of $\prm B[l,r]$ are enclosed by $W_2(R^{-+})$, the roles of $M^1$ and $M^2$
would change in the above reasoning and we would obtain $\wei{P_L^2} < \wei{P_L^1}$.
This concludes the proof of the lemma.
\end{proof}

Now we proceed to analyse the union of all maximal mountains.

\begin{lemma}
\label{lemma:range-length}
There exists a closed walk $W$ of length at most $3\wei{\prm B[l,r]}$ such that
$W$ encloses a face $f$ if and only if $f$ is contained in some maximal mountain.
\end{lemma}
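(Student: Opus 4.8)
The plan is to prove Lemma~\ref{lemma:range-length} by building up the union of the maximal mountains one mountain at a time, in an order dictated by Corollary~\ref{cor:range}. Working in the modified brick from the preliminary simplification (so every mountain is strict and $\prm M$ is a simple cycle), it suffices to bound the length of the boundary closed walk $W$ of $U := \bigcup_j M^j$, the union of the subgraphs enclosed by all maximal mountains; if $U$ encloses no face we may take $W$ to be $\prm B[l,r]$ traversed forth and back, of length $2\wei{\prm B[l,r]}$, so assume there is at least one (nontrivial) maximal mountain. By Corollary~\ref{cor:range} any two distinct maximal mountains form a range, so Lemma~\ref{lem:mountain-order} applies to every pair and certifies that their left paths have pairwise distinct lengths. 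Hence $\wei{P_L^{\cdot}}$ totally orders the maximal mountains; write them $M^1=\mou{P_L^1}{P_R^1},\dots,M^s=\mou{P_L^s}{P_R^s}$ with $\wei{P_L^1}<\dots<\wei{P_L^s}$.

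The core of the proof is an induction on $i$ on the claim
\[
p^i\ \le\ \wei{\prm B[l,r]}+\wei{P_R^1}+\wei{P_L^i},
\]
where $U^i:=\bigcup_{j=1}^i M^j$ and $p^i$ is the length of the boundary closed walk of $U^i$. The base case $i=1$ holds with equality, since $p^1=\wei{\prm M^1}=\wei{\prm B[l,r]}+\wei{P_L^1}+\wei{P_R^1}$. For the inductive step I would add $M^{i+1}$ to $U^i$ and argue, using that $M^i$ and $M^{i+1}$ form a range with $\wei{P_L^i}<\wei{P_L^{i+1}}$ together with Lemma~\ref{lem:mountain-order}, that $\prm M^{i+1}$ leaves the closed region $M^i$ exactly through $P_R^i$: following $P_L^{i+1}$ from $l$, its first vertex $v$ on $\prm M^i$ lies on $P_R^i$, the sub-path $P_L^{i+1}[l,v]$ is enclosed by $\prm M^i$, and the sub-path $P_R^i[v,r]$ is enclosed by $\prm M^{i+1}$. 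Setting $d=\wei{P_L^{i+1}[l,v]}$, $a=\wei{P_L^{i+1}[v,v_{M^{i+1}}]}$ (so $d+a=\wei{P_L^{i+1}}$ as $v\in V(P_L^{i+1})$), $b=\wei{P_R^{i+1}}$, and $c=\wei{P_R^i[v,r]}$: since $P_L^i$ is a shortest $l$--$P_R^i$ path in $M^i$ we get $d\ge\wei{P_L^i}$, and since $P_R^{i+1}$ is a shortest $r$--$P_L^{i+1}$ path in $M^{i+1}$ we get $c\ge b$. Passing from $U^i$ to $U^{i+1}$ turns the arc $P_R^i[v,r]$ (length $c$, which lies on $\prm U^i$ near $r$) into an interior arc, while adding to the boundary only (part of) $P_L^{i+1}[v,v_{M^{i+1}}]\cup P_R^{i+1}$ (length at most $a+b$); the rest of $\prm M^{i+1}$, namely $\prm B[l,r]\cup P_L^{i+1}[l,v]$, is already contained in $U^i$. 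Hence
\[
p^{i+1}-p^i\ \le\ a+b-c\ \le\ a\ =\ \wei{P_L^{i+1}}-d\ \le\ \wei{P_L^{i+1}}-\wei{P_L^i},
\]
which telescopes with the induction hypothesis to give $p^{i+1}\le\wei{\prm B[l,r]}+\wei{P_R^1}+\wei{P_L^{i+1}}$.

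Evaluating the bound at $i=s$ gives $p^s\le\wei{\prm B[l,r]}+\wei{P_R^1}+\wei{P_L^s}$, and it remains to bound the two carvemark terms. Since $l\in V(P_L^1)$ and $P_R^1$ is a shortest $r$--$P_L^1$ path inside $M^1$, we have $\wei{P_R^1}\le\dist_{M^1}(r,l)\le\wei{\prm B[l,r]}$; symmetrically, since $r\in V(P_R^s)$ and $P_L^s$ is a shortest $l$--$P_R^s$ path inside $M^s$, $\wei{P_L^s}\le\wei{\prm B[l,r]}$. Therefore $p^s\le 3\wei{\prm B[l,r]}$. Finally, $W$ is the boundary closed walk of $U^s=\bigcup_j M^j$, and a face is enclosed by $W$ if and only if it lies in some $M^j$, i.e.\ in some maximal mountain; this proves Lemma~\ref{lemma:range-length}. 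The algorithmic claim in Theorem~\ref{thm:mountain-range} is then obtained by computing the $s\le\Oh(|B|)$ maximal mountains and unioning their interiors.

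The step I expect to be the main obstacle is the topological bookkeeping in the inductive step: precisely justifying that $\prm M^{i+1}$ attaches to $\prm U^i$ in the claimed way — that $P_L^{i+1}$ first meets $\prm M^i$ on $P_R^i$, that $P_L^{i+1}[l,v]\subseteq M^i$ and $P_R^i[v,r]\subseteq M^{i+1}$, and that exactly the arc $P_R^i[v,r]$ (and no other part of $\prm U^i$) is swallowed while no contribution beyond $a+b$ is added. Since $U^i$ is the union of $i$ mountains, not just $M^i$, making this honest will require first proving a structural description of $\prm U^i$ (for instance, that it is $\prm B[l,r]$ together with a single $l$--$r$ path whose terminal segment at $r$ lies on $P_R^1$ and whose terminal segment at $l$ lies on $P_L^i$), for which the key tools are Corollary~\ref{cor:range}, Lemma~\ref{lem:mountain-order}, and the two-regions structure of Lemmata~\ref{lem:nei-regions} and~\ref{lem:true-region}.
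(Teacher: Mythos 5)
Your skeleton is the paper's: you order the maximal mountains by $\wei{P^i_L}$ (via Corollary~\ref{cor:range} and Lemma~\ref{lem:mountain-order}), maintain an inductive bound $p^i\leq \wei{\prm B[l,r]}+\wei{P^1_R}+\wei{P^i_L}$, and prove the increment with exactly the same five quantities $a,b,c,d,e$ and the same two inequalities $d\geq e$ (from $M^i$ being a mountain) and $c\geq b$ (from $M^{i+1}$ being a mountain). The final bounds $\wei{P^1_R}\leq\wei{\prm B[l,r]}$ and $\wei{P^s_L}\leq\wei{\prm B[l,r]}$ are also what the paper implicitly uses. So the numerics are right.

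The gap you flag at the end is, however, not mere bookkeeping --- it is the point where the paper deliberately departs from your formulation. You work with $\prm U^i$, the topological boundary of the union of the first $i$ mountains, and need the structural claim that $P^i_R[v,r]$ lies on $\prm U^i$ near $r$, i.e.\ that $\prm U^i$ is a simple cycle that runs along $\prm B[l,r]$, then essentially up $P^i_L$, and down near $r$ along $P^1_R$ as in the picture. The paper explicitly warns that this picture is only an intuition: the cycles $\prm M^j$ may share edges and touch (without crossing) arbitrarily often, so $\prm U^i$ need not have any clean description, and $P^i_R[v,r]$ might fail to be entirely on it. The paper's workaround is to \emph{not} take the boundary of the union at all. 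Instead it constructs an explicit closed walk $W^i$ (which need not be simple, and need not coincide with $\prm U^i$) and maintains two invariants: (1) $P^i_R\cup\prm B[l,r]$ is a subpath of $W^i$, and (2) the associated closed curve $\widehat\gamma^i$ has winding number $1$ around any point inside $\bigcup_{j\leq i}M^j$ and $0$ around any other point of a face of $B$. Invariant (1) is exactly the structural fact you need, but it is an invariant of the \emph{construction} $W^i$ (preserved because $v_{M^{i+1}}$ lies on $P^{i+1}(R^{-+})$ and because Lemma~\ref{lem:mountain-order} places $P^i(R^{-+})$ inside $\prm B[l,r]\cup P^i_R$), not a property of $\prm U^i$. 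Invariant (2) replaces your informal claim that "a face is enclosed iff it lies in some $M^j$" with a statement about the fundamental group of $\plane\setminus\{c\}$, which is stable under the self-intersections and shared edges that can occur. If you try to carry out your plan literally with $\prm U^i$, you will have to establish a description of that boundary that the paper says is false in general; the fix is to switch to an explicit walk with invariant (1) built in and to verify enclosure via winding numbers, which is what the paper does.
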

\begin{proof}
Let $\{ M^i=(P^i_L, P^i_R)\}_{i=1}^s$ be the set of all maximal mountains such that $\wei{P^i_L} < \wei{P^j_L}$ for $1\le i <j\le s$.
By induction, we show closed walks $W^1,W^2,\ldots,W^s$ such that for each $i=1,2,\ldots,s$, the following holds:
\begin{enumerate}
\item $W^i$ contains $P^i_R \cup \prm B[l,r]$ as a subpath.
\item If we define $\widehat{\gamma}^i$ to be the closed curve in the plane $\plane$ obtained by traversing $W^i$ in the direction so that the $P^i_R \cup \prm B[l,r]$
is traversed from $l$ to $v_{M^i}$, then, for any face $f$ of $B$ and any point $c$ in the interior of $f$:
\begin{itemize}
\item if $f$ belongs to one of the mountains $M^1,M^2,\ldots,M^i$ then $\widehat{\gamma}^i$ is a positive element of the fundamental group $\Gamma_c \cong \mathbb{Z}$
of $\plane \setminus \{c\}$;
\item otherwise, $\widehat{\gamma}^i$ is the neutral element of this group.
\end{itemize}
In particular, $W^i$ encloses $f$ if and only if $f$ is contained in one of
the mountains $M^1,M^2,\ldots,M^i$.
\item $\wei{W^i} \leq \wei{P^1_R} + \wei{P^i_L}+ \wei{\prm B[l,r]}$.
\end{enumerate}
Here, property 2 formalizes the intuition that maximal mountains look as they do in Figure~\ref{fig:range}. In reality, the boundaries of the mountains may actually intersect often (but not cross more than twice), which is why we need this formal property.

For $i=1$, the induction hypothesis holds by taking $W^1 := \prm M^1$. Now assume that the induction hypothesis holds for $W^i$. Consider mountains $M^i$ and $M^{i+1}$ and apply Lemma~\ref{lem:mountain-order} to them;
by abuse of notation, we denote the appropriate paths as $P^i$ and $P^{i+1}$ instead of $P^1$ and $P^2$.

From Corollary~\ref{cor:range} and Definition~\ref{definition-forms-a-range}, we know that there exists a unique region $R \in \regspm{-+}$.
Recall that $P^i_R \cup \prm B[l,r]$ is a subpath of $W^i$, and that $\wei{P^{i}_L} < \wei{P^{i+1}_L}$ by the chosen order.
Hence, by Lemma~\ref{lem:mountain-order}, $P^i(R)$ is a subpath of $W^i$.
We define $W^{i+1}$ as $W^i$ with
$P^i(R)$ replaced with $P^{i+1}(R)$.
Moreover, as $v_{M^{i+1}}$ lies on $P^{i+1}(R)$, it follows that
$P^{i+1}_R \cup \prm B[l,r]$ is a subpath of $W^{i+1}$.

Let $\gamma_W^{i+1}$ be the closed curve obtained by traversing $W(R)$
in counter-clockwise direction, that is
in $\gamma_W^{i+1}$ the path $P^i(R)$
is traversed from the endpoint closer to $v_{M^i}$ to the endpoint
closer to or on the carvebase $\prm B[l,r]$.
Consider any face $f$ of $B$ and any point $c$ in its interior.
Note that, in the fundamental group $\Gamma_c \cong \mathbb{Z}$ of $\plane \setminus \{c\}$,
we have $\widehat{\gamma}^i + \gamma_W^{i+1} = \widehat{\gamma}^{i+1}$.
If $f$ is enclosed by $\gamma_W^{i+1}$, that is, by $W(R)$, then
$\gamma_W^{i+1}$ is a positive element of $\Gamma_c$, and otherwise
it is the neutral element.
We infer that the second condition is satisfied
for the curve $\widehat{\gamma}^{i+1}$,
due to the induction hypothesis, and since
$W(R)$ encloses a face $f$ if and only if $f$ is contained in $M^{i+1}$, but not in $M^i$.

Thus, to finish the proof of the induction step we need to show the bound on
the length of $W^{i+1}$.

Define $b=\wei{P^{i+1}_R}$ and $e=\wei{P^i_L}$.
Let $v$ be the first point on $P^{i+1}_L$ that lies on $P^{i}_R$. We denote the distance (along $P^{i+1}_L$) from $l$ to $v$ as $d$
and the distance from $v$ to $v_{M^{i+1}}$ as $a$. Finally, we denote by $c$ the distance (along $P^{i}_R$) from $r$ to $v$.
These definitions are illustrated in Figure~\ref{fig:range}.

\begin{figure}[h]
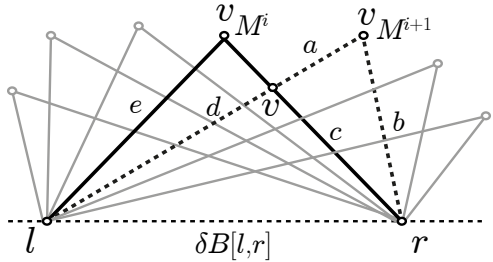

\centering
  \svg{0.4\textwidth}{range}
\caption{(Figure~\ref{fig-over:rangefull} repeated) Illustration of the inductive proof in Lemma~\ref{lemma:range-length}.}
\label{fig:range}
\end{figure}
Observe that $d\ge e$ because $M^i$ is a mountain. Similarly, observe that $c \ge b$ because $M^{i+1}$ is a mountain.
Hence, we have
\[
\wei{W^{i+1}} -\wei{W^i} = a + b-c \le a \le a+d-e = \wei{P^{i+1}_L} - \wei{P^{i}_L}.
\]
Using the induction hypothesis with the above inequality we obtain
\begin{align*}
\wei{W^{i+1}} &= \wei{W^{i+1}} - \wei{W^{i}}  + \wei{W^{i}} \\
    &\le \wei{P^1_R} + \wei{P^i_L}+ \wei{\prm B[l,r]} + \wei{P^{i+1}_L} - \wei{P^{i}_L} \\
    &=\wei{P^1_R} + \wei{P^{i+1}_L}+ \wei{\prm B[l,r]}.
\end{align*}
This proves the induction. Hence, $W := W^s$ satisfies the conditions of the lemma as
\begin{align*}
\wei{W^{s}} &\le \wei{P^1_R} + \wei{P^{s}_L}+ \wei{\prm B[l,r]} \\
    &\le \wei{\prm B[l,r]} + \wei{\prm B[l,r]} + \wei{\prm B[l,r]} \\
    &= 3\wei{\prm B[l,r]}.
\end{align*}
\end{proof}

We remark here that, although formally the reasoning of Lemma~\ref{lemma:range-length} has been done in the presence of all summits of mountains, the obtained walk $W$ projects back to the original brick $B$, where edges have not been subdivided. 

\subsection{Finding the mountain range}

Finally, we show the algorithm to compute the $\delta$-mountain range. We need the following technical observation.
Consider a plane drawing of $B$ in which the segment $\prm B[l,r]$ is drawn as a horizontal
segment and $l$ is the left end of it. The {\it leftmost shortest} path from $l$ to $v$ is the shortest path that lies as
much as possible to the left in the drawing of $B$. Symmetrically, we define the {\it rightmost shortest} path from $r$ to $v$. Note that these notions are well defined, as they correspond to taking furthest counter-clockwise and clockwise objects around $l$ and $r$, respectively, in the semi-plane above segment $\prm B[l,r]$ that contains brick $B$.

Observe the following connection between left- and rightmost shortest paths and maximal mountains.
\begin{lemma}
\label{lemma-unique-mountain}
For fixed $xy \in B$, there
exists at most one maximal $\delta$-mountain $M^{x,y}$ of $B$ that can choose the summit on the edge $xy$ (possibly in $x$ or $y$) and has $x$ closer to $l$ on the carvemark $M^{x,y}$ than $y$.
Moreover, the carvemark of $M^{x,y}$ consists of the leftmost shortest path from $l$ to $x$ in $B$, the edge $xy$
and the rightmost shortest path from $r$ to $y$ in $B$.
\end{lemma}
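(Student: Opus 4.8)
The plan is to establish the uniqueness of $M^{x,y}$ by showing that a maximal mountain choosing its summit on $xy$ (with $x$ nearer to $l$) is \emph{forced} to use exactly the three prescribed pieces: the leftmost shortest $l$--$x$ path, the edge $xy$, and the rightmost shortest $r$--$y$ path. Since these three pieces are uniquely determined by $x,y$ and the embedding, this immediately yields at most one such mountain. First I would write $M^{x,y} = \mou{P_L}{P_R}$ with summit $v_M$ on $xy$; by the choice of orientation, $x$ lies on $P_L$ and $y$ lies on $P_R$ (or $v_M$ coincides with $x$ or $y$, in which case one of $P_L[x,v_M]$, $P_R[v_M,y]$ is trivial --- this degenerate case is handled analogously). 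By Corollary~\ref{cor:maxmountain}, $\dist_B(v_M,l) = \dist_B(P_R,l)$ and $\dist_B(v_M,r) = \dist_B(P_L,r)$; in particular $P_L$ is a shortest $l$--$v_M$ path in $B$ (not merely inside $M$) and $P_R$ is a shortest $r$--$v_M$ path in $B$, because any strictly shorter path in $B$ would, by Lemma~\ref{lem:largermountain} applied to the part of it lying outside $M$ (if any), contradict maximality, and if it lies inside $M$ it directly contradicts the mountain condition.

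Next I would argue that $P_L$ restricted from $l$ to $x$ must be the \emph{leftmost} shortest $l$--$x$ path. Suppose not: let $P_L'$ be the leftmost shortest $l$--$x$ path, which then lies strictly to the left of $P_L[l,x]$ somewhere, enclosing at least one face $f$ between them that is outside $M$. Consider replacing $P_L[l,x]$ by $P_L'$ in $\prm M$; since $\wei{P_L'} = \wei{P_L[l,x]}$ the resulting carve $M^\ast$ has the same perimeter length and still satisfies the length bound on the carvemark, and it encloses $f$ in addition to everything $M$ encloses. I would then verify that $M^\ast$, with the natural partition into $P_L' \cup P_L[x,v_M]$ and $P_R$, is still a $\delta$-mountain --- this is exactly the kind of check carried out inside the proof of Lemma~\ref{lem:largermountain}, and invoking that lemma (with $P = P_L'$, $u = l$, $w = x$) gives that $M$ is not maximal, a contradiction. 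Symmetrically, $P_R[y,r]$ must be the rightmost shortest $r$--$y$ path. Finally, the middle portion: the carvemark passes from $x$ to $v_M$ along $P_L$ and from $v_M$ to $y$ along $P_R$; since $v_M$ lies on the edge $xy$, and both $P_L[x,v_M]$ and $P_R[v_M,y]$ are sub-paths of shortest paths emanating from $l$ and $r$ respectively, and $x,y$ are the endpoints of the edge $xy$, the only possibility consistent with $v_M \in xy$ is that $P_L[x,v_M] \cup P_R[v_M,y]$ is precisely the (subdivided) edge $xy$. Hence the carvemark is exactly leftmost-$l$-to-$x$ $\;\cup\; xy \;\cup\;$ rightmost-$r$-to-$y$, which is unique.

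The main obstacle I anticipate is the exchange argument showing leftmost/rightmost optimality rigorously: one must be careful that replacing $P_L[l,x]$ by the leftmost shortest path does not accidentally destroy the mountain property on the $P_R$ side, and that the new carve genuinely encloses a \emph{new} face (which requires the two shortest paths to actually differ, using the leftmost/rightmost extremality together with the fact that the original $P_L[l,x]$ was a shortest path but not the leftmost one). This is precisely the situation addressed by the lengthy case analysis in Lemma~\ref{lem:largermountain}, so the cleanest route is to phrase the replacement so that Lemma~\ref{lem:largermountain} applies verbatim rather than re-deriving the argument; the remaining work is then just checking its hypotheses (no enclosed edges on the new path, the union of boundary arcs encloses $M$, and the length inequality, all of which hold since $\wei{P_L'}=\wei{P_L[l,x]}$). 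A secondary subtlety is the degenerate case where the summit is an endpoint of $xy$, or where $l$ or $r$ itself lies on $xy$; these should be dispatched by observing that one of the three prescribed pieces becomes trivial and the same uniqueness argument goes through.
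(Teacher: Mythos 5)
Your proof is correct in substance, but it follows a more circuitous route than the paper's. The paper's proof establishes a single clean claim --- \emph{every} shortest $l$--$x$ path in $B$ is enclosed by $M$ --- by taking any subpath of such a path that leaves $M$, with both endpoints on $\prm M$, and observing that Lemma~\ref{lem:largermountain} (in contrapositive, since $M$ is maximal) forces it to be strictly longer than the arc of $\prm M$ it bypasses, contradicting shortestness; from this claim one reads off both that the boundary $M[l,x]$ is a shortest $l$--$x$ path (via the mountain condition applied to an enclosed shortest path extended by the $x$--$v_M$ stub) and that it must be the leftmost one (since all shortest paths, in particular the leftmost, lie weakly to its right). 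Your proof instead splits this into two separate steps. For the first step, note that the \emph{statement} of Corollary~\ref{cor:maxmountain} alone ($\dist_B(v_M,l)=\dist_B(P_R,l)$) does not directly give you that $P_L$ is a shortest $l$--$v_M$ path in $B$; it only bounds $\dist_B(P_R,l)$ from above by $\wei{P_L}$, and you still need the reverse inequality. Your ``because'' clause is therefore not a gloss on the corollary but the actual argument, and it is essentially the paper's enclosure argument rephrased. For the second step, the replacement argument with $P = P_L'$, $u=l$, $w=x$ is fine, but invoking Lemma~\ref{lem:largermountain} verbatim needs the hypothesis that $P_L'$ has no edge strictly enclosed by $\prm M$; this holds because the leftmost shortest path lies weakly to the left of the shortest path $M[l,x]$ bounding $M$, hence never strictly inside $M$ --- a fact you use implicitly but should state. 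Once these two imprecisions are cleaned up, your two-step route works; the paper's single-claim route is tighter, avoids the detour through Corollary~\ref{cor:maxmountain}, and sidesteps the leftmost-path non-crossing observation by never needing to inspect $P_L'$ at all.
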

\begin{proof}
Let $M$ be a maximal mountain that contains $xy$ on the carvemark $M$, such that there exists a witness
$\mouth_M$ with $\wei{M[l,x]} \leq \mouth_M \leq \wei{M[l,y]}$.
Let $P$ be any shortest path between $l$ and $x$ in $B$. We claim that $P$ is enclosed by $M$, and a symmetrical claim holds for any shortest path between $r$ and $y$ in $B$.
Note that this statement would conclude the proof of the lemma.

Assume the contrary, and let $Q = P[a,b]$ be any subpath of $P$ whose all edges and internal vertices are not enclosed by $M$, but both endpoints $a,b$ of $Q$ lie on the carvemark $M$.
By Lemma~\ref{lem:largermountain}, $\wei{Q} > \wei{M[a,b]}$ or $\wei{Q} > \wei{M[b,a]}$, a contradiction to the assumption that $P$ is a shortest path.
The arguments for paths connecting $y$ and $r$ are symmetric.
\end{proof}

Observe that for a fixed vertex $u \in V(\prm B)$, the union of all leftmost shortest paths from $u$ to every $v \in V(B)$ is a shortest-path tree rooted at $u$;
we call it the \emph{leftmost shortest-path tree} rooted at $u$.
An analogous claim holds for rightmost shortest paths by symmetry.
We show that this shortest-path tree can be found efficiently for fixed $u$.

\begin{lemma}\label{lem:leftmost-shortest}
For any fixed $u \in V(\prm B)$, the leftmost shortest-path tree rooted at $u$ (and, symmetrically, the rightmost one) can be found
in $O(|B|)$ time.
%in $\Oh(|B| \log |B|)$ time in the edge-weighted setting, and $\Oh(|\prm B| \cdot |B|)$ time in the unweighted setting.
\end{lemma}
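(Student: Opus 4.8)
First I would compute, in $O(|B|)$ time, the distances $d(v) := \dist_B(u,v)$ for all $v \in V(B)$ using the linear-time shortest-path algorithm for planar graphs~\cite{planar-sp}, and form the \emph{tight subgraph} $D$: the directed graph on $V(B)$ containing an arc $(x,y)$ whenever $xy \in E(B)$ and $d(x) + \wei{xy} = d(y)$. Every vertex $v \neq u$ has an in-arc in $D$, and every leftmost shortest path from $u$ is a directed path in $D$. The whole task is therefore to select, for each $v \neq u$, the correct in-arc $(x_v,v)$, i.e.\ the last edge of the leftmost shortest $u$–$v$ path, and to do so spending only $O(1)$ amortized time per edge.

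The key structural fact I would establish is an optimal-substructure property: if $P$ is the leftmost shortest $u$–$v$ path and $x$ is the predecessor of $v$ on $P$, then $P[u,x]$ is the leftmost shortest $u$–$x$ path. The proof is a short exchange argument using planarity: leftmost shortest paths can be chosen pairwise non-crossing, so if some shortest $u$–$x$ path $Q$ lay strictly to the left of $P[u,x]$, then (since $P$ and $Q$ agree on the final edge $xv$, which sticks out of the region bounded by $Q$ and $P[u,x]$) the path $Q \cup \{xv\}$ would be a shortest $u$–$v$ path lying strictly to the left of $P$, contradicting the choice of $P$. Consequently the leftmost shortest paths to all vertices form a tree $T$, the \emph{leftmost shortest-path tree}, and $T$ can be built bottom-up in nondecreasing order of $d$: when $v$ is processed, every candidate tail $x$ (tail of a tight in-arc of $v$) has already had its parent edge in $T$ fixed, so the position of $P_T[u,x]$ relative to $v$ is determined.

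Given this, the parent $(x_v,v)$ is chosen by a purely local rule on the planar rotation system: among the tight in-arcs of $v$, one shows that $(x_v,v)$ is the one that comes first in the clockwise cyclic order around $v$ starting from a canonical reference half-edge at $v$ (for $v \in V(\partial B)$ this reference is the perimeter edge of $\partial B[r,l]$-side; for interior $v$ it is derived from the already-fixed predecessor data, so that choosing $x_v$ amounts to "turning as far left as possible" while staying on a shortest path). Since each edge of $B$ is inspected only when one of its endpoints is processed, the total running time is $O(|B|)$; the rightmost shortest-path tree rooted at $r$ is obtained by running the same procedure on the mirror-reflected embedding (clockwise replaced by counter-clockwise).

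The main obstacle is the \emph{correctness of the local rule}: proving that the tree produced by the rotational "left-turn" choice at each vertex coincides with the globally defined leftmost shortest-path tree. This needs a careful planar-topology argument — if at some vertex the rule picked an in-arc that did not extend a leftmost prefix, one closes a region between the produced path and a genuinely leftmost shortest path and derives a contradiction with either minimality of $d$ or with the non-crossing choice of shortest paths — together with a careful treatment of two special situations: ties among shortest subpaths of equal length (where "strictly more to the left" must still be well defined and is resolved consistently by the fixed embedding), and the boundary case $v \in V(\partial B)$, where the reference direction and the semi-plane above $\partial B[l,r]$ must be used to anchor the rotational comparison.
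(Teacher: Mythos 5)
Your identification of the tight-arc DAG (your $D$, the paper's $H$: an arc $(x,y)$ whenever $xy \in E(B)$ and $d(y) = d(x) + \wei{xy}$), built from a linear-time shortest-path computation via~\cite{planar-sp}, is exactly the structure the paper uses. Where you diverge is in how the leftmost tree is extracted from it. The paper observes that $D$ is acyclic (all edge weights are positive) and that the leftmost shortest-path tree is precisely the \emph{leftmost search tree} of $D$ rooted at $u$, produced by a single depth-first search that visits the out-arcs of each vertex in left-to-right rotational order — a standard primitive, cited to~\cite{leftmost}. That one DFS is the entire extraction step, in $O(|B|)$ time, with no per-vertex parent-selection rule and nothing to prove beyond the observation that a leftmost walk in $D$ is a leftmost shortest path in $B$.

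Your bottom-up alternative has real gaps that the DFS sidesteps. First, the exchange argument for the prefix property silently assumes the last edge $xv$ ``sticks out'' of the region bounded by $Q$ and $P[u,x]$; but $v$ may lie inside that region, and then $Q \cup \{xv\}$ need not be strictly to the left of $P$ as stated, so the substructure claim needs a genuine topological argument rather than an informal exchange. Second, the ``canonical reference half-edge'' at an interior vertex — the datum that makes ``clockwise order from'' well-defined — is exactly what a DFS supplies for free (the tree edge by which $v$ is discovered); you do not say how to derive it ``from the already-fixed predecessor data'' in a bottom-up sweep, and, as you acknowledge, this is the crux. Third, processing vertices ``in nondecreasing order of $d$'' requires sorting real-valued distances, which is $\Omega(|B|\log |B|)$ in the comparison model and so breaks the claimed $O(|B|)$ bound in the edge-weighted setting; you would need to substitute a topological order of $D$, which is again precisely what the DFS gives implicitly.
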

\begin{proof}
The approach is the same as one proposed by Klein~\cite{klein:mssp}. First, we find a shortest path tree from $u$ in linear time~\cite{planar-sp}. Let $d(v)$ denote the distance from $u$ to $v$ for any $v \in V(B)$. Let $H$ be the following directed graph. The vertex set of $H$ is $V(B)$. Then $H$ contains the arc $(v,w)$ if and only if $vw$ is an edge of $B$ and $d(w)=d(v)+\wei{wv}$. Observe that $H$ is acyclic, as all edge weights are positive. Now it suffices to find a leftmost search tree (see \eg\cite{leftmost}) in $H$. This can be done in linear time using a simple depth-first search, which visits the neighbours of a vertex in left-to-right order. By the construction of $H$, this immediately translates into a leftmost shortest path tree in $B$. A rightmost shortest-path tree can be found symmetrically.
\end{proof}

In the next lemma, we make use of the left- and rightmost shortest-path trees
and conclude the proof of Theorem~\ref{thm:mountain-range}.

\begin{lemma}
\label{lemma:range-algorithm}
The union of all finite faces of $\delta$-mountains for fixed $l,r$ can be computed 
in $O(|B|)$ time.
%in time $\Oh(|B| \log |B|)$ in the weighted setting and in time $\Oh(|\prm B| \cdot |B|)$ in the unweighted setting.
\end{lemma}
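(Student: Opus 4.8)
\textbf{Proof plan for Lemma~\ref{lemma:range-algorithm}.}
The plan is to leverage the structural description from Lemma~\ref{lemma:range-length}: the union of all finite faces of maximal mountains is enclosed by a closed walk $W = W^s$, built incrementally by going through maximal mountains $M^1,\dots,M^s$ in increasing order of $\wei{P^i_L}$, and at each step replacing $P^i(R^{-+})$ with $P^{i+1}(R^{-+})$. Since a face belongs to some $\delta$-mountain for $l,r$ if and only if it belongs to some maximal $\delta$-mountain, it suffices to compute the set of faces enclosed by $W$. The key observation, from Lemma~\ref{lemma-unique-mountain}, is that every maximal mountain's carvemark is determined by a leftmost shortest path from $l$, an edge $xy$, and a rightmost shortest path from $r$. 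Hence I would first compute, in $O(|B|)$ time, the leftmost shortest-path tree $T_l$ rooted at $l$ and the rightmost shortest-path tree $T_r$ rooted at $r$ using Lemma~\ref{lem:leftmost-shortest}, after the preliminary simplification step that attaches $\xP,\xP'$ so that all mountains become strict carves.

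Next, for each oriented edge $xy$ of $B$, I would form the candidate closed walk $C_{x,y} := T_l[l,x] \cup \{xy\} \cup T_r[r,y] \cup \prm B[l,r]$ (orienting so that $x$ is on the $l$-side), and test whether it actually describes a maximal $\delta$-mountain: one checks that $T_l[l,x]$ and $T_r[r,y]$ share no vertex other than possibly their common endpoint, that the carvemark $T_l[l,x]\cup\{xy\}\cup T_r[r,y]$ has length at most $(\tfrac12-\delta)\wei{\prm B}$, that $\prm B[l,r]$ is the shorter arc, and that the mountain condition holds (the two half-paths are shortest within the enclosed region) --- all of which can be verified locally in $O(1)$ amortized time per edge using the distances stored in $T_l,T_r$ together with a check that the relevant subpaths stay enclosed. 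This yields the set of maximal mountains $\{M^{x,y}\}$, sorted by $\wei{P^i_L}$ (their left-path lengths are distinct by Lemma~\ref{lem:mountain-order}).

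Then I would run the incremental construction of Lemma~\ref{lemma:range-length} directly: maintain the current walk $W^i$ and, at step $i{+}1$, locate the region $R^{-+}$-subpath $P^i(R)$ inside $W^i$ and splice in $P^{i+1}(R)$. Rather than literally recomputing regions, I would exploit that $P^{i+1}_R \cup \prm B[l,r]$ is always a subpath of $W^{i+1}$ (proven in Lemma~\ref{lemma:range-length}) and that consecutive $P^i_L$'s "sweep" monotonically, so the splice is along $P^{i+1}_L$ from $l$ up to its first intersection $v$ with $P^i_R$ --- exactly the vertex $v$ in Figure~\ref{fig:range}. Finding this first intersection can be done via a single walk along $P^{i+1}_L$ while marking vertices of $P^i_R$; to keep the total cost linear, I would process the mountains in order and amortize: each edge of $B$ enters and leaves the "current boundary" $O(1)$ times, since the $P^i_R$ paths, being rightmost shortest paths, are nested in a tree-like fashion in $T_r$. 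Once $W = W^s$ is computed, a single BFS/DFS over the faces of $B$ starting from a core face $\coreface$ (guaranteed by Theorem~\ref{thm:core} to be outside every carve, hence outside $W$) identifies, in $O(|B|)$ time, precisely the faces enclosed by $W$, i.e., the $\delta$-mountain range.

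\textbf{Main obstacle.} The delicate point is the amortized linear-time bookkeeping for the incremental splicing: one must argue that the total work of repeatedly scanning $P^{i+1}_L$ to find its first meeting point with $P^i_R$, and of maintaining the evolving walk $W^i$, is $O(|B|)$ overall rather than $O(s \cdot |B|)$. This rests on the monotone "mountain range" structure (Corollary~\ref{cor:range}, Lemmata~\ref{lem:mountain-order} and~\ref{lemma:range-length}): consecutive boundaries nest, so each edge changes its membership in the current boundary a bounded number of times. Making this rigorous --- in particular handling the subdivided summit vertices, which exist only in the auxiliary brick and must be projected back (as remarked after Lemma~\ref{lemma:range-length}) --- and ensuring the enclosure test for candidate carvemarks is genuinely $O(1)$ amortized, is where the real care is needed; everything else is a direct assembly of the already-established structural lemmas.
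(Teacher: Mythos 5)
Your plan diverges from the paper's algorithm in a way that leaves a real gap. You correctly start from the leftmost/rightmost shortest-path trees $T_l, T_r$ (Lemma~\ref{lem:leftmost-shortest}) and correctly invoke Lemma~\ref{lemma-unique-mountain} to characterize maximal mountains by an edge $xy$ together with the corresponding tree paths. But from there you propose to enumerate the maximal mountains, sort them by $\wei{P^i_L}$, and literally replay the incremental splicing of Lemma~\ref{lemma:range-length}, maintaining the evolving walk $W^i$. You yourself flag the obstacle: the amortization argument that each edge enters and leaves the current boundary $O(1)$ times is not established, and it is not obviously true --- the paths $P^i_R$ live in $T_r$ but go to different targets $y_i$, so they are not nested in the naive sense, and a single edge of $T_r$ could in principle be spliced in and out many times. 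You also glide over two further sub-$O(|B|)$ problems: (a) sorting by $\wei{P^i_L}$ is $\Theta(s \log s)$ in general and cannot be avoided in the edge-weighted case where lengths are arbitrary reals, yet the lemma must hold there too; (b) your per-edge test ``the two half-paths are shortest within the enclosed region'' is exactly the mountain condition, and you do not show it reduces to an $O(1)$ check.

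The paper sidesteps all of this. It never sorts, never materializes $W$, and never splices. Instead it defines, via the bottom-up quantities $d_l(v) = \min\{\dist_B(v',r): v' \text{ ancestor of } v \text{ in } T_l\}$ and symmetrically $d_r(v)$, the set $Z$ of pairs $(x,y)$ with $xy \in B$, $\dist_B(x,l) \le d_r(y)$, $\dist_B(y,r) \le d_l(x)$, and $d_r(y) + d_l(x) \ge \dist_B(x,l) + \wei{xy} + \dist_B(y,r)$; Claims~\ref{cl:xy-is-mountain} and~\ref{cl:max-mountain-in-Z} then show every $(x,y) \in Z$ yields a mountain and every maximal mountain arises this way. (This is the crisp $O(1)$-per-edge test you were hoping for but did not exhibit.) The union of faces is then computed by a winding-number argument in the directed dual $B^\ast_\rightarrow$: each mountain's perimeter contributes $+1/-1$ to the two arcs of each crossed dual edge, the resulting arc weights can be computed in $O(|B|)$ total time by two bottom-up counting passes $\zeta_l, \zeta_r$ over $T_l, T_r$, and the weight sum along any dual path from the outer face to a face $f$ equals the number of mountains enclosing $f$. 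This counts enclosures for \emph{all} mountains in $Z$ simultaneously in one linear pass, which is what makes the linear time bound go through. Without something equivalent, the ``find first meeting point and splice'' loop you propose stands unproved and is the crux of the lemma, so the proposal as written does not constitute a proof.
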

\begin{proof}
Using Lemma~\ref{lem:leftmost-shortest} we compute the leftmost shortest-path tree rooted at $l$
and the rightmost shortest-path tree rooted at $r$. Denote these trees $T_l$ and $T_r$, respectively.

By traversing the tree $T_l$ from the root to its leaves, we compute for each $v \in V(B)$ the value
$$d_l(v) = \min \{\dist_B(v',r): v'\textrm{ is an ancestor of }v\textrm{ in the tree }T_l\}.$$
Symmetrically, we compute values $d_r(v)$ in the tree $T_r$ (taking into account distances to $l$). This takes $O(|B|)$ time.

Let $Z$ be the set of pairs $(x,y) \in V(B) \times V(B)$ such that $xy \in B$,
$\dist_B(x,l) \leq d_r(y)$, $\dist_B(y,r) \leq d_l(x)$ and
$d_r(y) + d_l(x) \geq \dist_B(x,l) + \wei{xy} + \dist_B(y,r)$.
For every  $(x,y) \in Z$, consider a walk $M^{x,y}$ in $B$ that consists of
the leftmost shortest path from $l$ to $x$ (i.e., the path from $x$ to the root $l$
 in $T_l$), the edge $xy$ and the rightmost shortest path from $r$ to $y$ (i.e., the path from $y$ to the root $r$ in $T_r$).
We observe the following equivalence, captured in the next two claims.
\begin{claim}\label{cl:xy-is-mountain}
For every $(x,y) \in Z$, $M^{x,y}$ is a mountain.
\end{claim}
\begin{proof}
First observe that $M^{x,y}[l,x]$ and $M^{x,y}[y,r]$ cannot share a vertex, as otherwise
$d_r(y) + d_l(x) \leq \wei{M^{x,y}[l,x]} + \wei{M^{x,y}[y,r]} = \dist_B(x,l) + \dist_B(y,r)$, a contradiction
to the properties of the pairs in $Z$ and the fact that $\wei{xy} > 0$. Hence, $M^{x,y}$ is a path.

We claim that $M^{x,y}$ is a mountain for $\mouth = d_r(y)$. By the properties of pairs in $Z$ we have $\wei{M^{x,y}[l,x]} \leq \mouth \leq \wei{M^{x,y}[l,y]}$ and, consequently,
the candidate summit $v := v(M^{x,y},l,\mouth)$ is located on the edge $xy$ (possibly at one of the endpoints).
If needed, subdivide the edge $xy$ with the vertex $v$.
As $\dist_B(x,l) \leq d_r(y)$, by the definition of $d_r(y)$, we have $\dist_B(l,M^{x,y}[v,r]) \geq d_r(y)$.
Regarding the distances from $r$, first observe that $\wei{vy} + \dist_B(y,r) = \wei{M^{x,y}[v,r]}$ and, hence,
any path in $B$ connecting $r$ and $v$ that passes through $y$ is of length at least $\wei{M^{x,y}[v,r]}$.
Second, note that
\begin{align*}
\dist_B(V(M^{x,y}[l,x]),r) &= d_l(x) \geq \dist_B(x,l) + \wei{xy} + \dist_B(y,r) - d_r(y) \\&= \wei{M^{x,y}} - \mouth = \wei{M^{x,y}[v,r]}.
\end{align*}
\cqed\end{proof}
\begin{claim}\label{cl:max-mountain-in-Z}
Let $M$ be a maximal mountain. Then $M = M^{x,y}$ for some $(x,y) \in Z$.
\end{claim}
\begin{proof}
Let $\mouth_M$ be a real that witnesses that $M$ is a mountain and let $xy \in M$
be such that $\wei{M[l,x]} \leq \mouth_M \leq \wei{M[y,r]}$ (i.e., the summit of $M$ is on the edge $xy$, possibly in one of the endpoints).
Let $v = v(M,l,\mouth)$.
If needed, subdivide the edge $xy$ with the vertex $v$.
By Lemma~\ref{lemma-unique-mountain} we have that $M[l,x]$ is the leftmost shortest-path between $l$ and $x$ and $M[y,r]$ is the rightmost shortest-path between $y$ and $r$.
By Corollary~\ref{cor:maxmountain}, $d_l(x) = \dist_B(V(M[l,x]),r) \geq \wei{M} - \mouth_{M} = \wei{M[y,r]} + \wei{vy} = \dist_B(y,r) + \wei{vy}$
and symmetrically $d_r(y) \geq \dist_B(x,l) + \wei{xv}$. By adding up these two inequalities we obtain
$d_r(y) + d_l(x) \geq \dist_B(x,l) + \wei{xy} + \dist_B(y,r)$. Consequently, $(x,y) \in Z$ and $M^{x,y} = M$ by the construction of $M^{x,y}$.
\cqed\end{proof}
By Claims~\ref{cl:xy-is-mountain} and~\ref{cl:max-mountain-in-Z},
our goal is to compute the set of all finite faces that are enclosed by some mountain $M^{x,y}$ for $(x,y) \in Z$.

To achieve this goal, we first construct the directed dual $B^\ast_\rightarrow$ of $B$, that is, we take the undirected dual $B^\ast$
and replace each edge with two arcs in both directions.
Then, we would like to assign integer weights to the arcs of $B^\ast_\rightarrow$ in the following manner. First, set all weights to zero.
Second, for each $(x,y) \in Z$, add $+1$ to the weight of each arc that corresponds to an edge of $\prm M^{x,y}$ and ends in
the face enclosed by $M^{x,y}$, and add $-1$ to the weight of the arc in the opposite direction.
It is easy to observe that the weighted graph $B^\ast_\rightarrow$ defined in this manner has no non-null cycles, and for any face $f$,
the sum of weights on any path from the outer face to $f$ in $B^\ast_\rightarrow$ equals the number of mountains $M^{x,y}$, $(x,y) \in Z$
that enclose $f$. Consequently, given $B^\ast_\rightarrow$ it is straightforward to compute the union of all finite faces of $\delta$-mountains
for fixed $l$ and $r$.

However, inspecting the perimeters of all mountains $M^{x,y}$ for $(x,y) \in Z$ may take quadratic time. Luckily, one can compute the weights of $B^\ast_\rightarrow$ in $O(|B|)$ time as follows.
Start with all weights of $B^\ast_\rightarrow$ set to zero.
Then, traverse $T_l$ from the leaves to its root and for each edge $e \in E(T_l)$, compute
$\zeta_l(e)$: the number of pairs $(x,y) \in Z$ such that $x$ lies in the tree of $T_l \setminus \{e\}$ that does not contain $l$.
Similarly, compute the values $\zeta_r(e)$ for each $e \in E(T_r)$ that count the number of pairs $(x,y) \in Z$ such that $y$ lies
in the tree of $T_r \setminus \{e\}$ that does not contain $r$.
Observe that for each $e \in E(T_l)$, there are exactly $\zeta_l(e)$ mountains $M^{x,y}$ for which $e$ lies on the left slope of $M^{x,y}$.
Moreover, in all of these mountains, if we orient $e$ towards the root $l$ of $T_l$, the face that lies on the left-hand side of $e$ is not enclosed by $M^{x,y}$, and the one that lies on the right-hand side is enclosed by $M^{x,y}$.
Hence, we may proceed as follows: for each $e \in E(T_l)$, add weight $\zeta_l(e)$ to the arc of $B^\ast_\rightarrow$ that traverses
the edge $e$, keeping the closer-to-root endpoint of $e$ to the right hand side, and add weight $-\zeta_l(e)$
to the other arc of $B^\ast_\rightarrow$ corresponding to the edge $e$.
Similarly, for each $e \in E(T_r)$, add weight $\zeta_r(e)$ to the arc of $B^\ast_\rightarrow$ that traverses
the edge $e$ keeping the closer-to-root endpoint of $e$ to the left hand side, and add weight $-\zeta_r(e)$
to the other arc of $B^\ast_\rightarrow$ corresponding to the edge $e$.
Finally, observe that each mountain $M^{x,y}$ contains the baseline $\prm B[l,r]$ and there are exactly $|Z|$ such mountains.
To support this, for each $e \in \prm B[l,r]$, add weight $|Z|$ to all arcs that traverse an edge of $\prm B[l,r]$ and start
in the outer face, and add weight $-|Z|$ to such arcs that end in the outer face.
In this manner we have constructed the graph $B^\ast_\rightarrow$ in $O(|B|)$ time, and concluded the proof of Lemma~\ref{lemma:range-algorithm}.
\end{proof}

%!TEX root = pst-kernel.tex

\section{Taming sliding trees}\label{sec:sliding}
In the previous section, we took a major step towards finding a cycle $C$ of length $\Oh(\wei{\prm B})$ that lies close to the perimeter of $B$ and that separates the core from all vertices of degree at least three of some optimal solution for any set of terminals on $\prm B$. In fact, Lemma \ref{lem:broom-in-mountain} shows that short subtrees of optimal Steiner trees in $B$
are hidden in $\delta$-mountains. Here, `short' means that the leftmost and rightmost
path in the subtree have total length at most $(1/2 - \delta)\wei{\prm B}$. Note that
an optimal Steiner tree in $B$ has total size smaller than $\wei{\prm B}$, as $\prm B$
without an arbitrary edge connects any subset of $V(\prm B)$. Therefore,
for small $\delta$, we can `hide' almost an entire optimal Steiner tree $T$
in at most two $\delta$-mountains. 
In this section we study what is left outside these mountains.

Before we describe the main result of this section, we need an additional notion.
Let $B$ be an edge-weighted brick. For an edge $uv \in E(B)$ we say that \emph{each point of $uv$ is at distance at most $d$ from $V(\prm B)$}
if $uv \in \prm B$ or
$\dist_B(u,V(\prm B)) \leq d$, $\dist_B(v,V(\prm B)) \leq d$ and, additionally, $\dist_B(u,V(\prm B)) + \dist_B(v,V(\prm B)) + \wei{uv} \leq 2d$.
Equivalently, we may require that $uv \in \prm B$ or whenever we subdivide the edge $uv$, replacing it with a new vertex $x$ and edges $ux$, $vx$ with positive lengths satisfying
$\wei{ux} + \wei{vx} = \wei{uv}$, we have $\dist_B(x,V(\prm B)) \leq d$.
For a subgraph $H$ of $B$, we say that \emph{each point of $H$ is at distance at most $d$ from $V(\prm B)$}
if each vertex and each point of each edge of $H$ is at distance at most $d$ from $V(\prm B)$.

With this definition, we are ready to state the main theorem of this section.
\begin{theorem}\label{thm:ananas}
% Pineapple is "ananas" in any language except English, I insist on using `ananas' at least in the latex references :)
Let $\nice \in (0,1/36]$ be a fixed constant.
Assume that $B$ does not admit a short $\nice$-nice tree.
Then one can compute a simple
cycle $C$ in $B$ with the following properties:
\begin{enumerate}[(i)]
\item the length of $C$ is at most $\frac{16}{\nice^2} \wei{\prm B}$; \label{p:ananas:length}
\item each point of $C$ is within distance at most $(\frac{1}{4} - 2\nice)\wei{\prm B}$ from $V(\prm B)$;\label{p:ananas:close}
\item for each vertex $x \in V(C)$ there exists a shortest path from $x$ to $V(\prm B)$
such that no edge of the path
is strictly enclosed by $C$;\label{p:ananas:cuts}
\item $C$ encloses $\coreface$, where $\coreface$ is any
arbitrarily chosen face of $B$ promised by Theorem \ref{thm:core}
that is not carved by any $2\nice$-carve;\label{p:ananas:coreface}
\item for any $S \subseteq V(\prm B)$
there exists an optimal Steiner tree $T_S$ connecting $S$
in $B$ such that no vertex of degree at least $3$ in $T_S$
is strictly enclosed by $C$.\label{p:ananas:correct}
\end{enumerate}
The computation takes $\Oh(|B| \log\log |B|)$ time in the edge-weighted setting
and $\Oh(|B|)$ time in the unweighted setting.
\end{theorem}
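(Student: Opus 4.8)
The plan is to carry out the informal argument of Section~\ref{ss-over:decomp:other}, using two scales of carving. Set $\delta:=4\nice$; since $\nice\in(0,1/36]$ we have $2\nice,\delta\in(0,1/4)$ and $\delta\in[2\nice,1/2)$, so Theorem~\ref{thm:core} applies with parameter $2\nice$ and Theorem~\ref{thm:mountain-range} with parameter $\delta$. First invoke Theorem~\ref{thm:core} with parameter $2\nice$ to fix, in $\Oh(|B|)$ time, the face $\coreface$ that cannot be $2\nice$-carved. Every $\delta$-carve has carvemark of weight at most $(\tfrac12-\delta)\wei{\prm B}\le(\tfrac12-2\nice)\wei{\prm B}$, hence is also a $2\nice$-carve, so no $\delta$-carve --- in particular no $\delta$-mountain --- encloses $\coreface$.

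\noindent\textbf{The crude cycle.} Designate a set $\porset\subseteq V(\prm B)$ of at most $4/\nice$ vertices so that consecutive designated vertices are within perimeter-distance $\nice\wei{\prm B}/4$. For every pair $l,r\in\porset$ with $\wei{\prm B[l,r]}<\wei{\prm B}/2$ apply Theorem~\ref{thm:mountain-range} with parameter $\delta$ to compute, in $\Oh(|B|)$ time, a closed walk $W_{l,r}$ with $\wei{W_{l,r}}\le 3\wei{\prm B[l,r]}$ containing $\prm B[l,r]$, whose enclosed faces are exactly those lying in some $\delta$-mountain connecting $l,r$; let $R_{l,r}$ be this set of faces. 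For consecutive designated vertices $l,r$ the carve whose carvemark is a shortest $l$--$r$ path in $B$ (split at its midpoint) is a $\delta$-mountain connecting $l,r$, and it encloses every face incident to $\prm B[l,r]$; hence $\mathcal M:=\bigcup_{l,r}R_{l,r}$ contains every face of $B$ incident to $\prm B$ but, as observed above, not $\coreface$. Since each $W_{l,r}$ is connected and meets $\prm B$, $\mathcal M$ is connected, and the boundary of the component of the face-complement of $\mathcal M$ containing $\coreface$ is a simple cycle $C_0$ separating $\coreface$ from $\prm B$. No edge of $\prm B$ lies on $C_0$ (all faces touching $\prm B$ are in $\mathcal M$), so $C_0\subseteq\bigcup_{l,r}(W_{l,r}\setminus\prm B[l,r])$; each such ``upper part'' has weight at most $2\wei{\prm B[l,r]}\le\wei{\prm B}$, and there are fewer than $|\porset|^2\le 16/\nice^2$ pairs, giving $\wei{C_0}\le\tfrac{16}{\nice^2}\wei{\prm B}$. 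Finally, every vertex $z$ of $C_0$ lies on the carvemark of some maximal $\delta$-mountain $\mou{P_L}{P_R}$; by Corollary~\ref{cor:maxmountain} the prefixes of $P_L$ and of $P_R$ are shortest paths in $B$ from $l$ and $r$, so, writing $t=\dist_B(z,l)$, we get $\dist_B(z,V(\prm B))\le\min(t,\ (\wei{P_L}+\wei{P_R})-t)\le\tfrac12(\tfrac12-\delta)\wei{\prm B}=(\tfrac14-2\nice)\wei{\prm B}$. Thus $C_0$ already satisfies (i), (ii) and (iv).

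\noindent\textbf{Shortcutting; properties (i)--(iv).} Repeatedly replace an arc of the current cycle between two of its vertices $x,y$ by a shortest $x$--$y$ path in $B$, accepting only moves that strictly decrease the length, keep $\coreface$ enclosed, and keep every vertex within distance $(\tfrac14-2\nice)\wei{\prm B}$ of $V(\prm B)$; let $C$ be a cycle admitting no such move. Termination is clear, $C$ is simple, $\wei{C}\le\wei{C_0}$ and $C$ encloses $\coreface$, so (i) and (iv) hold; (ii) holds by the third acceptance condition. For (iii): if some $x\in V(C)$ had every shortest $x$--$V(\prm B)$ path using an edge strictly enclosed by $C$, such a path $P_x$ would leave and re-enter the disc bounded by $C$; a prefix of $P_x$ between two crossing points is a shortest path between two vertices of $C$, stays within $\dist_B(x,V(\prm B))\le(\tfrac14-2\nice)\wei{\prm B}$ of $\prm B$, and --- routed along the $\prm B$-side --- keeps $\coreface$ enclosed, so it gives an admissible length-decreasing move, a contradiction. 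The verification that such moves can always be chosen on the $\prm B$-side (so that $\coreface$ stays enclosed and faces of $\mathcal M$ stay weakly outside) is the only delicate point here, handled by choosing at each step the arc whose replacement grows rather than shrinks the enclosed disc.

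\noindent\textbf{Property (v) --- the main obstacle.} Fix $S\subseteq V(\prm B)$ and let $T_S$ be an optimal Steiner tree connecting $S$ that minimises the number of its edges strictly enclosed by $C$. Suppose some vertex $v$ of degree at least $3$ in $T_S$ is strictly enclosed by $C$; root $T_S$ at a leaf, let $uv$ be the parent edge of $v$, $T_v$ the subtree rooted at $v$, and $a,b$ the leftmost and rightmost elements of $V(T_v)\cap V(\prm B)$. If $T_v$ is ``small'' --- precisely, if one can pick $l,r\in\porset$ with $a,b\in\prm B[l,r]$, $\wei{\prm B[l,r]}<\wei{\prm B}/2$ and $\wei{T_v[a,b]}+\wei{\prm B[l,a]}+\wei{\prm B[b,r]}<(\tfrac12-\delta)\wei{\prm B}$ --- then Lemma~\ref{lem:broom-in-mountain} (with $P_L=T_v[v,a]\cup\prm B[l,a]$ and $P_R=T_v[v,b]\cup\prm B[b,r]$) shows $\mou{P_L}{P_R}$ is a $\delta$-mountain connecting $l,r$, so $v\in R_{l,r}\subseteq\mathcal M$ and $v$ is not strictly enclosed by $C$, a contradiction. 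The hard case is when no subtree hanging from a branch vertex is ``small'': then $T_S$ consists of one or two deep ``spine'' fragments with only small subtrees attached, and one must \emph{slide}. Here the plan is to show $T_S$ admits an optimal modification with strictly fewer enclosed edges: where the spine dips below $C$ it crosses $C$ at two vertices $x,y$, and replacing the dipping portion by a path built from the shortest-path ``exits'' of $x$ and $y$ (property (iii)) together with a piece of $\prm B$ yields a no-longer tree, the length accounting being closed using the fact that $\coreface$ is not $2\nice$-carvable --- which forces the dipping portion to be long enough to pay for the detour. Formalising the notion of a ``sliding tree'' and proving that an optimal one admitting no such move has no interior branch vertex is the principal difficulty of the theorem; the remaining bookkeeping, together with the $\Oh(\nice^{-2})$ mountain-range computations, a single call to Theorem~\ref{thm:core}, and the shortest-path routines driving the shortcutting, yields the stated running time of $\Oh(|B|\log\log|B|)$ in the edge-weighted setting and $\Oh(|B|)$ in the unweighted setting.
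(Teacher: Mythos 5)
Your first two stages (pegs, mountain ranges, the crude separating cycle $C_0$, and the bound on its weight) match the paper's construction of $C(\coreface)$ closely, and your bound on $\dist_B(z,V(\prm B))$ for $z\in V(C_0)$ is correct (though it relies on Lemma~\ref{lemma-unique-mountain}, not Corollary~\ref{cor:maxmountain}, for the fact that the prefixes of the carvemark are shortest paths in $B$). Where you diverge is the refinement step. The paper defines $H\subseteq B$ as the set of $\closeB$-edges enclosed by $C(\coreface)$ and takes $C$ to be a \emph{globally shortest} cycle in $H$ separating $\coreface^H$ from the outer face, computed as a min cut in the dual of $H$ (this is what gives the $\Oh(|B|\log\log|B|)$ and $\Oh(|B|)$ running times). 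Your iterative ``shortcut until stuck'' process yields only a local minimum, and this matters: your argument for property~(iii) asserts that a strictly enclosed subpath $P_x'$ of a shortest $x$--$\prm B$ path gives a length-decreasing shortcut, but you never show $\wei{P_x'}<\wei{C[y,z]}$ or $\wei{P_x'}<\wei{C[z,y]}$. That inequality is obtained in the paper only after choosing $P_x$ to minimise the number of strictly enclosed edges and observing that otherwise a shorter path of the same length with fewer enclosed edges would exist; without it, there is no contradiction with local optimality. Moreover the paper's proof of property~(v) (Lemma~\ref{lem:slide-correctness}) uses the \emph{global} minimality of $C$ in $\closeB$ to derive the crucial inequality $\wei{T[x,y]}\ge\wei{C[y,x]}$ when a spine subpath $T[x,y]$ dips strictly inside $C$; your local-minimum cycle is not shown to satisfy this.

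The decisive gap is property~(v) itself, which you explicitly leave as a sketch. The paper's Lemma~\ref{lem:slide-correctness} is not a small ``bookkeeping'' step but the technical core of the theorem: it first uses the non-niceness of $T$ to locate a dominant brick $B_i$ with $\wei{\prm B_i}>(1-\nice)\wei{\prm B}$ and to show $\wei{T\setminus \prm B_i}<\nice\wei{\prm B}$; it then delimits the spine $T[a_i,b_i]$, finds pivot vertices $w_a,w_b$, applies Lemma~\ref{lem:broom-in-mountain} together with the pegs to hide the two end-subtrees $T_a,T_b$ in $\delta$-mountains, and finally — in the case where branch vertices survive between $w_a$ and $w_b$ — proves that every edge of the spine lies in $\closeB$ and that $\coreface$ is inside $B_i$, so that the global minimality of $C$ licenses replacing every dipping spine subpath with an arc of $C$ at no extra cost. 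Your ``slide'' paragraph gestures at this direction, but you do not establish that the spine lies in $\closeB$, you do not argue that $\coreface$ is on the correct side of each dip, and, since you do not have the minimal-separating-cycle characterisation of $C$, you have no lemma giving $\wei{T[x,y]}\ge\wei{C[y,x]}$. As it stands the argument would not go through; either switch to the paper's global min-cut construction of $C$, or prove a surrogate lemma stating that your locally optimal cycle has no strictly enclosed $\closeB$-chord shorter than the arc it would replace, and then supply the missing case analysis for~(v).
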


We begin the proof of Theorem \ref{thm:ananas}
with a construction. %that, at the first glance, may look obscure.
Then we show how it interacts with optimal Steiner trees in $B$.

Let $\porset \subseteq V(\prm B)$ be a set of {\em{pegs}} on $\prm B$, such that
for any $v \in V(\prm B)$, there exist pegs $p_\leftarrow(v)$ and $p_\rightarrow(v)$ 
with $v \in V(\prm B[p_\leftarrow(v),p_\rightarrow(v)])$ and  $\wei{\prm B[p_\leftarrow(v),v]},\wei{\prm B[v,p_\rightarrow(v)]} \leq \nice \wei{\prm B}/2$. Here, possibly $p_\leftarrow(v)=v$ or $p_\rightarrow(v) = v$.
We choose the set of pegs $\porset$ in the following greedy manner.
We take an arbitrary vertex $v_0 \in V(\prm B)$ as a first peg and then we traverse $\prm B$ starting from $v_0$ twice, once clockwise
and once counter-clockwise. In each pass, we take as a next peg the first vertex that is of distance larger than $\nice \wei{\prm B}/2$ from the previously
placed peg.
As each pass chooses at most $2/\nice$ pegs, $|\porset| \leq 4/\nice$.

Let $\delta = 4\nice$.
For any $l, r \in \porset$, $l \neq r$,
apply Theorem \ref{thm:mountain-range} to find
the mountain range $\MR_{l,r}$ for $\delta$-mountains with endpoints
$l$ and $r$. Recall that
$\MR_{l,r}$ is a set of faces of $B$. 
Let $\MR = \bigcup_{l,r \in \porset,l \neq r} \MR_{l,r}$.
As $|\porset|$ is a constant, by Theorem~\ref{thm:mountain-range}
$\MR$ is computable within the desired time bound.

Since each $\delta$-mountain is a $\delta$-carve, $\coreface \notin \MR$.
Let $\widehat{\coreface}$ be the connected component of
$B^\ast \setminus \MR$ containing $\coreface$, where $B^\ast$
is the dual of $B$ without the outer face.
Let $C(\coreface)$ be the simple cycle in $B$ around $\widehat{\coreface}$.
Clearly, each edge of $C(\coreface)$ belongs either to some $\MR_{l,r} \setminus \prm B$ or to $\prm B$. Therefore, by Theorem \ref{thm:mountain-range},
$$\wei{C(\coreface)} \leq |\porset|(|\porset|-1) (1-2\delta)\wei{\prm B} + \wei{\prm B} \leq \frac{16}{\nice^2} \wei{\prm B}.$$

Now let $\closeB$ be the set of edges of $B$ of which each point is at distance
at most $(\frac{1}{4}-\frac{\delta}{2})\wei{\prm B} = (\frac{1}{4}-2\nice)\wei{\prm B}$
from $V(\prm B)$; note that $\closeB$ can be computed
in $\Oh(|B|)$ time by creating a super-terminal vertex $t$ in the outer face of $B$,
connecting it by unit-length edges to all vertices of $V(\prm B)$, and running a shortest-path algorithm from $t$ in the obtained plane graph in linear time~\cite{planar-sp}.
Observe that each edge of $C(\coreface)$ belongs to $\closeB$, since in the definition of $\MR_{l,r}$ we consider $4\nice$-mountains and $\nice \leq \frac{1}{36}$.

Consider now the subgraph $H$ of $B$ that contains all edges of $\closeB$ that are enclosed by $C(\coreface)$.
Let $\coreface^H$ be the face of $H$ that contains $\coreface$. As $C(\coreface)$ is a subgraph of $H$, $\coreface^H$ is a finite face of $H$.
Define $C$ to be some shortest cycle in $H$ separating the outer face of $H$ from $\coreface^H$; such a cycle exists as $\coreface^H$ is finite.
Observe that $C$ corresponds to a minimum cut between $\coreface^H$ and the outer face of $H$ in the dual of $H$. Hence, $C$ can be found in $\Oh(|B| \log \log |B|)$ time in the edge-weighted setting~\cite{cut-loglog}
and in $\Oh(|B|)$ time in the unweighted setting~\cite{cut-unit-linear}.
%EJ: Reif is too slow now
%We remark here that, with the given time bounds of Theorem~\ref{thm:ananas}, we may actually use Reif's classical algorithm~\cite{reif}: in the edge-weighted setting it works in $\Oh(|B| \log |B|)$ time if we use the linear shortest-path algorithm in planar graphs, and in the undirected setting observe that the distance between $C(\coreface)$ and $\coreface^H$ is at most $|\prm B|$ in the graph $H$, and Reif's algorithm finds the cycle $C$ in $\Oh(|B| \log |\prm B|)$ time.

We claim that the cycle $C$ satisfies all the requirements of Theorem~\ref{thm:ananas}.
Since $C(\coreface)$ is a candidate for $C$, $\wei{C} \leq \wei{C(\coreface)} \leq \frac{16}{\nice^2} \wei{\prm B}$ and property~\eqref{p:ananas:length} is satisfied.
Properties~\eqref{p:ananas:close} and~\eqref{p:ananas:coreface} follows directly from the construction of $C$.

Regarding property~\eqref{p:ananas:cuts}, consider any $x \in V(C)$ and let $P_x$ be a shortest
path between $x$ and $V(\prm B)$ that uses the minimum number of edges
strictly enclosed by $C$. Since $x\in V(\closeB)$, in particular $\dist_B(x,V(\prm B)) \leq (\frac{1}{4}-2\nice)\wei{\prm B}$, it is clear that also all edges of $P_x$ are in $\closeB$.
Assume now that $P_x$ contains some edge strictly enclosed by $C$. Then $P_x$ contains a subpath $P_x'$ between two vertices $y,z \in V(C)$ that is strictly enclosed by $C$.
By the choice of $P_x$ we infer that $\wei{C[y,z]},\wei{C[z,y]} > \wei{P_x'}$.
Since every edge of $P_x'$ is in $\closeB$, we infer that either $C[y,z] \cup P_x'$ or $C[z,y] \cup P_x'$ is a cycle that separates $\coreface^H$ from the outer face
in $H$ of length strictly shorter than $\wei{C}$, a contradiction to the choice of $C$.
Hence, no edge of $P_x$ is strictly enclosed by $C$, and property~\eqref{p:ananas:cuts} follows.

The following lemma proves that $C$ satisfies the remaining condition, property~\eqref{p:ananas:correct}, and thus finishes the proof of Theorem~\ref{thm:ananas}.
\begin{lemma}\label{lem:slide-correctness}
For any set $S \subseteq V(\prm B)$ there exists
an optimal Steiner tree $T_S$ connecting $S$ in $B$ such that
no vertex of degree at least $3$ in $T_S$ is strictly
enclosed by $C$.
\end{lemma}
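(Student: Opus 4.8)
The plan is to pick a maximally well-behaved optimal tree and then rule out interior branch vertices by playing them off against the structure that went into the construction of $C$. Among all optimal Steiner trees connecting $S$ in $B$, let $T$ be one minimising, lexicographically: the number of vertices of degree at least $3$ of $T$ that are strictly enclosed by $C$; then the total weight of the edges of $T$ strictly enclosed by $C$; then the number of such edges. Assume for contradiction that some $v$ with degree at least $3$ in $T$ is strictly enclosed by $C$. First I would extract a tripod at $v$: every leaf of $T$ lies in $S\subseteq V(\prm B)$ and hence is not strictly enclosed by $C$, so each of the (at least three) components of $T\setminus v$ meets $C$; picking in three of them the vertex $x_i\in V(C)$ closest to $v$ along $T$ and setting $R_i:=T[v,x_i]$, the tree property forces $x_1,x_2,x_3$ to be pairwise distinct and the $R_i$ to be internally disjoint with all internal vertices strictly enclosed by $C$ and meeting $C$ only at $x_i$. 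Thus $R_1\cup R_2\cup R_3$ is a tripod drawn in the closed disc bounded by $C$ with feet $x_1,x_2,x_3$ on $C$, splitting the disc into three sectors; the core face $\coreface$, being strictly enclosed by $C$, lies in one of them, say (after relabelling) the one bounded by $R_2$, $R_3$ and the arc of $C$ between $x_2$ and $x_3$.

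The contradiction comes from one of two mechanisms, according to how ``clustered'' the subtrees below $v$ are. If some choice of parent edge $uv$ makes the subtree $T_v$ have leftmost/rightmost perimeter anchors $a,b$ with $\wei{\prm B[a,b]}\le(\tfrac12-\nice)\wei{\prm B}$ and $\wei{T[v,a]}+\wei{T[v,b]}<(\tfrac12-5\nice)\wei{\prm B}$, then taking pegs $l:=p_\leftarrow(a)$ and $r:=p_\rightarrow(b)$ in $\porset$ (so $\wei{\prm B[l,a]},\wei{\prm B[b,r]}\le \tfrac12\nice\wei{\prm B}$, hence $\wei{\prm B[l,r]}<\tfrac12\wei{\prm B}$ and the carvemark bound needed for $\delta=4\nice$ holds), Lemma~\ref{lem:broom-in-mountain} certifies that $\mou{T[v,a]\cup \prm B[l,a]}{T[v,b]\cup \prm B[b,r]}$ is a $4\nice$-mountain connecting $l$ and $r$. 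Then $v$ lies on the perimeter of this mountain, hence is enclosed by the mountain range $W_{l,r}$ of Theorem~\ref{thm:mountain-range}, so some face incident to $v$ lies in $\MR_{l,r}\subseteq\MR$. But a face in $\MR$ is not in the component $\widehat{\coreface}$ of $B^\ast\setminus\MR$, hence not enclosed by $C(\coreface)$, hence not strictly enclosed by $C$; since $v$ is strictly enclosed by $C$, every face incident to $v$ is strictly enclosed by $C$ — contradiction.

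In the complementary regime (no parent edge yields such $a,b$) I would argue directly with $\coreface$. Using property~\eqref{p:ananas:cuts}, take for $i=2,3$ a shortest $x_i$--$V(\prm B)$ path $\sigma_i$ with no edge strictly enclosed by $C$ and endpoint $p_i\in V(\prm B)$; by property~\eqref{p:ananas:close}, $\wei{\sigma_i}\le(\tfrac14-2\nice)\wei{\prm B}$. Walking $p_2\to x_2$ along $\sigma_2$ reversed, then $x_2\to v\to x_3$ along $R_2\cup R_3$, then $x_3\to p_3$ along $\sigma_3$, and cleaning the result into a simple path $P$, one checks — this is topological bookkeeping — that $P$ together with the appropriate perimeter arc between $p_2$ and $p_3$ encloses the sector containing $\coreface$. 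Since $P$ uses no perimeter edge and $\wei{P}=\wei{\sigma_2}+\wei{\sigma_3}+\wei{R_2}+\wei{R_3}$, the pair $(P,\text{arc})$ is a $2\nice$-carve of $\coreface$ as soon as $\wei{R_2}+\wei{R_3}\le 2\nice\wei{\prm B}$, contradicting property~\eqref{p:ananas:coreface} (Theorem~\ref{thm:core}). This leg bound is exactly where the lexicographic minimality of $T$ and the failure of the first regime must be converted into a concrete \emph{sliding} move: were a leg longer, rerouting that branch along a cheaper route anchored on $\prm B$, or pulling a hanging subtree out of the disc along the free shortest paths of property~\eqref{p:ananas:cuts}, would strictly improve the second or third minimised quantity without creating a new branch vertex strictly enclosed by $C$ — contradicting the choice of $T$.

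The main obstacle is precisely this last step: establishing $\wei{R_2}+\wei{R_3}\le 2\nice\wei{\prm B}$ (equivalently, that $\coreface$ sits at bounded depth inside the tripod) by cashing in minimality and the failure of the mountain regime. This is where the ``sliding trees'' of the section title do their work, and it also carries the delicate planarity: distinctness of the $x_i$, that the $R_i$ are clean chords of the disc bounded by $C$, that the constructed curve is a legitimate carvemark of length within budget, and that the carve genuinely encloses $\coreface$. With that bound in hand both regimes terminate in a contradiction, so no vertex of degree at least $3$ of $T$ is strictly enclosed by $C$, and this $T$ witnesses the lemma.
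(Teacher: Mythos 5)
The proposal sets out a genuinely different strategy from the paper's — a lexicographically minimal $T$, a tripod extracted at a hypothetical strictly-enclosed branch vertex $v$, and a dichotomy between a "mountain regime" and a direct $2\nice$-carve of $\coreface$. You are right to flag the gap yourself, and it is a real one: the step establishing $\wei{R_2}+\wei{R_3}\le 2\nice\wei{\prm B}$ is not a corollary of the minimality of $T$ or of the failure of the first regime; it requires the quantitative input that the paper obtains from a global analysis of $T$, and without it the carve argument does not close. Concretely, nothing in your setup controls how deep inside $C$ the branch vertex $v$ can sit, nor how long the legs $R_2,R_3$ can be. The paper gets this control not from $v$ or from the tripod, but from the brick partition induced by $T$: since $T$ is an optimal short connector and $B$ has no short $\nice$-nice tree, some induced brick $B_i$ has perimeter $>(1-\nice)\wei{\prm B}$, which forces $\wei{T\setminus\prm B_i}<\nice\wei{\prm B}$. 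From this single inequality the paper derives (a) the subtrees hanging off the ends of the trunk $T[a_i,b_i]$ are $4\nice$-mountains in the sense of Lemma~\ref{lem:broom-in-mountain}, hence avoid $C$ entirely, and (b) every edge of the trunk belongs to $\closeB$ and $\coreface$ lies in $B_i$, so any segment of the trunk strictly enclosed by $C$ can be rerouted along $C$ at no extra cost. Your mechanism~1 is recognizably a local version of (a), but mechanism~2 lacks the analogue of (b): you invoke properties~\eqref{p:ananas:close} and~\eqref{p:ananas:cuts} to bound the outer part $\sigma_2,\sigma_3$, but the inner legs $R_2,R_3$ remain unbounded.

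There is also a secondary gap in the case split. You phrase mechanism~1 as conditional on the existence of a parent edge $uv$ for which the subtree $T_v$ has anchors $a,b$ satisfying two weight inequalities, and then define the "complementary regime" as the negation. But you never argue that in the complementary regime the needed leg bound holds, nor why the negation of the mechanism-1 hypothesis is useful at all. In the paper's proof the dichotomy is explicit and exhaustive: either $T[w_a,w_b]$ contains no internal vertex of $Z$ (all branch vertices are swallowed by the two end-mountains), or it does, in which case the trunk is provably long, $\wei{\prm B[a_i,b_i]}<12\nice\wei{\prm B}$, every segment between consecutive branch vertices has length $<12\nice\wei{\prm B}$, and every edge of the trunk is within distance $7\nice\wei{\prm B}$ of $\prm B$, so in $\closeB$. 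Your tripod view discards the trunk structure that makes these bounds available. As written, the proposal does not constitute a proof; closing it would essentially require reintroducing the global brick-partition argument, at which point you would be reproducing the paper's proof rather than providing an alternative to it.
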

\begin{proof}
Let $T$ be an optimal Steiner tree in $B$ for some set of terminals;
clearly, it is also optimal for the set of terminals $S := V(T) \cap V(\prm B)$.
Note that $T$ is a brickable connector and let $\Bb=\{B_1,B_2,\ldots,B_s\}$ the corresponding brick partition, i.e., $B_1,B_2,\ldots,B_s$ are the bricks induced by the 
faces of $T \cup \prm B$. Recall that $\sum_{i=1}^s \wei{\prm B_i} \leq \wei{\prm B} + 2\wei{T}$.

For each brick $B_i$, let $a_i,b_i \in V(\prm B)$ be such that
$\prm B[a_i,b_i] = \prm B_i \setminus T$.
Since $T$ is an optimal Steiner tree for some choice of terminals on $\prm B$, we have that $T$ is short.
By assumption we have that $T$ is not $\nice$-nice, so there exists a brick
$B_i$ with $\wei{\prm B_i} > (1-\nice)\wei{\prm B}$. 
Let $B_i$ be such a large brick.
Note that $\prm B[b_i,a_i]$ connects $S$, so
$\wei{T} \leq \wei{\prm B[b_i,a_i]}$.
We infer that
\begin{equation}\label{eq:BiTdiff}
\nice \wei{\prm B} > \wei{\prm B} - \wei{\prm B_i} = \wei{\prm B[b_i,a_i]} - \wei{\prm B_i \cap T} \geq
\wei{T} - \wei{\prm B_i \cap T} = \wei{T \setminus \prm B_i}.
\end{equation}

Note that $\prm B_i = \prm B[a_i,b_i] \cup T[a_i,b_i]$, where by $T[x,y]$ we define the unique path in $T$ between $x$ and $y$.
Let $v_a$ and $v_b$ be vertices on $T[a_i,b_i]$ such that
$$\wei{T[a_i,v_a]},\wei{T[b_i,v_b]} \leq \min \left(\wei{T[a_i,b_i]}/2, \left(\frac{1}{2}-6\nice\right)\wei{\prm B}\right)$$
and, moreover, both $T[a_i,v_a]$ and $T[b_i,v_b]$ are as long as possible.
Note that possibly $v_a=v_b$, but vertices $a_i,v_a,v_b,b_i$ appear on $T[a_i,b_i]$ in this order.
In particular, $v_a \neq b_i$ and $v_b \neq a_i$.

Let $Z$ be the union of $\{a_i,b_i\}$ with the set of vertices of $T[a_i,b_i]$ of degree at least $3$
in $T$.
Let $w_a$ be the vertex of $Z$ and 
$T[a_i,v_a]$ that is closest to $v_a$
and let $e_a$ be the edge that precedes $T[w_a,a_i]$ on $T[b_i,a_i]$.
Let $T_a$ be the subtree of $T$
rooted at $w_a$ with the parent edge $e_a$.
Note that the rightmost element of $V(T_a) \cap V(\prm B)$
is $a_i$; let $c$ be the leftmost element of $V(T_a) \cap V(\prm B)$.
By \eqref{eq:BiTdiff}, $\wei{T[w_a,c]} \leq \nice \wei{\prm B}$.
Therefore $\wei{T[w_a,c]} + \wei{T[w_a,a_i]} \leq (\frac{1}{2}-5\nice)\wei{\prm B}$.

Assume that $c \neq a_i$ and $\wei{\prm B[a_i,c]} \leq \wei{\prm B} / 2$.
As $\wei{T[a_i,c]} \leq (\frac{1}{2}-5\nice)\wei{\prm B}$, we infer that $(T[a_i,c],\prm B[a_i,c])$ is a $\delta$-carve and, by Lemma~\ref{lem:pluseps},
$\wei{\prm B[a_i,c]} \leq (\frac{1}{2}-4\nice)\wei{\prm B}$.
Let $C := \prm B[a_i, c] \cup T[a_i ,c ]$, which is a closed walk.
Consider the subgraph $T'$ created from $T$ by first deleting any edge enclosed by $C$, and then adding the closed walk $C$ instead.
Note that $\prm B_i$ is enclosed by $T'$ and $\wei{\prm B_i} > (1-\nice)\wei{\prm B}$, thus 
$$\wei{T'} \leq \wei{T}-(1-\nice)\wei{\prm B}+(1-9\nice)\wei{\prm B}\leq \wei{T}- 8\nice\wei{\prm B}.$$
However, as $T'$ includes $C$, $T'$ also connects $S$, a contradiction to the choice of $T$.

Therefore $c = a_i = w_a$ or $\wei{\prm B[c,a_i]} < \wei{\prm B}/2$.
Consider the second case. Again, we observe that $(T[a_i,c],\prm B[c,a_i])$ is a $\delta$-carve and, by Lemma~\ref{lem:pluseps},
$\wei{\prm B[c,a_i]} < (\frac{1}{2} - 4\nice)\wei{\prm B}$.
We now use the pegs $p_\rightarrow(a_i),p_\leftarrow(c) \in \porset$.
By the choice of $\porset$, $\wei{\prm B[a_i,p_\rightarrow(a_i)]} + \wei{\prm B[p_\leftarrow(c),c]} \leq \nice \wei{\prm B}$.
By Lemma \ref{lem:broom-in-mountain},
$\mou{(\prm B[p_\leftarrow(c),c] \cup T[w_a,c])}{(T[w_a,a_i] \cup \prm B[a_i,p_\rightarrow(a_i)])}$
is a $\delta$-mountain
and, by Theorem \ref{thm:mountain-range} and the construction of $C(\coreface)$,
no edge of the subtree of $T$
rooted at $w_a$ with parent edge $e_a$ is strictly enclosed by $C(\coreface)$, and, hence, by $C$ as well.
Clearly, this last claim is also true in the case $c = a_i = w_a$.

Symmetrically, the same argumentation can be made for $w_b$ being the first
vertex of $Z$ on $T[v_b,b_i]$, with its preceding edge $e_b$.

Now, if $T[w_a,w_b]$ does not contain any internal vertex from $Z$, then every vertex of degree at least $3$ in $T$ is contained either in $T_a$ or in $T_b$, and hence
the lemma is proven for $T_S = T$. Therefore, assume otherwise. In particular,
by the choice of $w_a$ and $w_b$, $v_a \neq v_b$, $v_av_b \notin T$ and
$\wei{T[a_i,b_i]} > (1-12\nice)\wei{\prm B}$. As $\prm B[b_i,a_i]$ connects
$S$, $\wei{\prm B[b_i,a_i]} \geq \wei{T[a_i,b_i]} > (1-12\nice)\wei{\prm B}$
and $\wei{\prm B[a_i,b_i]} < 12 \nice \wei{\prm B}$. 

Consider two consecutive vertices $w_1,w_2$ from $Z$ on $T[a_i,b_i]$.
Note that $(T \setminus T[w_1,w_2]) \cup \prm B[a_i,b_i]$ connects $S$.
Therefore, by the minimality of $T$, $\wei{T[w_1,w_2]} < 12 \nice \wei{\prm B}$.
Recall that $\wei{T \setminus \prm B_i} \leq \nice\wei{\prm B}$,
and, in particular, any vertex of $Z$ is connected
with $\prm B$ with a path in $T$ of length at most $\nice \wei{\prm B}$.
We infer that any edge of $T[a_i,b_i]$ lies on some path of length at most $14 \nice \wei{\prm B}$
with endpoints in $V(\prm B)$ and thus, belongs to $\closeB$ since
$\nice \leq 1/36$. % THIS IS THE PLACE WHERE WE ARE TIGHT WITH EPS=1/36

Let us now take any brick $B_j\neq B_i$. Observe that $\wei{\prm B_j\cap T}\leq 13\nice\wei{\prm B}$, since $\prm B_j\cap \prm B_i$ is either empty or an interval of length at most $12\nice \wei{\prm B}$, and $\wei{T\setminus \prm B_i}\leq \nice\wei{\prm B}$.
Recall that $\prm B[a_j,b_j]=\prm B_j \setminus T$. Assume first that $\wei{\prm B[a_j,b_j]}>\frac{1}{2}\wei{\prm B}$. Observe that then $\wei{\prm B[b_j,a_j]}\leq \frac{1}{2}\wei{\prm B}$ and, since $\prm B[b_j,a_j]$ connects $S$, we would obtain that $\wei{T}\leq \frac{1}{2}\wei{\prm B}$ by the optimality of $T$. On the other hand, $\wei{T}\geq \wei{B_i\setminus \prm B[a_i,b_i]}\geq (1-13\nice)\wei{\prm B}$. Since $\nice\leq \frac{1}{36}$, we obtain a contradiction.

Therefore, $\wei{\prm B[a_j,b_j]}\leq \frac{1}{2}\wei{\prm B}$. Since $\wei{T[a_j,b_j]}=\wei{\prm B_j\cap T}\leq 13\nice\wei{\prm B}$, $\delta=4\nice$ and $\nice\leq \frac{1}{36}$,
we obtain that $(T[a_j,b_j],\prm B[a_j,b_j])$ is a $\delta$-carve. As a result, we infer that $\coreface$ is not inside $B_j$. Since $B_j$ was chosen arbitrarily, $\coreface$ belongs to $B_i$.

Assume that some edge of $T$ is strictly enclosed by $C$.
As $\coreface$ belongs to both $B_i$ and $C$, this implies that
a subpath $T[x,y]$ of $T[a_i,b_i]$ ($x,y \in V(C)$) is strictly
enclosed by $C$. Without loss of generality assume that $T[x,y] \cup C[x,y]$
encloses $\coreface$, that is, $B_i$ lies on the same side of $T[x,y]$ as $C[x,y]$.
Consequently, any edge of $T$ incident to an internal
vertex of $T[x,y]$ is enclosed by $T[x,y] \cup C[y,x]$.
As each edge of $T[a_i,b_i]$ belongs to $\closeB$,
by the construction of $C$ we obtain $\wei{T[x,y]} \geq \wei{C[y,x]}$.
Construct $T'$ from $T$ by removing any edge enclosed by $C[y,x] \cup T[x,y]$
and adding $C[y,x]$ instead. Clearly, $\wei{T'} \leq \wei{T}$, $T'$ connects $S$
and $T'$ contains strictly less edges strictly enclosed by $C$.
By repeating this argument for all subpaths $T[x,y]$, we obtain a 
subgraph $T_S$ connecting $S$ and without any edge strictly enclosed by $C$.
This finishes the proof of the lemma.
\end{proof}
This concludes the proof of Theorem~\ref{thm:ananas}.

%!TEX root = pst-kernel.tex

\section{A polynomial kernel: concluding the proof of Theorem \ref{thm:main}}\label{sec:finish}

In this section, we conclude the proof of Theorem \ref{thm:main}.
That is, we assume that the brick~$B$ is unweighted.

Fix $\nice = 1/36$ and choose $\alpha$
such that 
$$(1-\nice)^{\alpha - 1} < \textstyle \frac{1}{3}$$
and
$$(1-3\nice)^{\alpha-1} < 1/202177.$$
(In particular, $\alpha > 141$.)
We show an algorithm that 
runs in $\Oh(|\prm B|^\alpha |B|)$ time
and returns a subgraph $H$
of size bounded by $\beta |\prm B|^\alpha$ for sufficiently large $\beta$
such that
$$202177 (1-3\nice)^{\alpha-1} + 108838883520 / \beta \leq 1.$$
For example, $\alpha = 142$ and $\beta = 2\,159\,872\,407\,596$ suffices.

First, consider the base case $|\prm B| \leq 2/\nice = 72$.
For each subset $S \subseteq V(\prm B)$,
we compute in $\Oh(|B|)$ time an optimal Steiner tree
using the algorithm of Erickson~\etal\cite{erickson}
for the set $S$ and add it to graph $H$. Note that the size
of the computed tree is at most $71$, as $\prm B$ without an arbitrary
edge connects $V(\prm B)$. Therefore, in $\Oh(|B|)$ time
we obtain a graph $H$ of size at most $71 \cdot 2^{72}$,
which is at most $\beta |\prm B|^\alpha$ for any $\beta\geq 1$, as $\alpha > 141$
and $|\prm B| \geq 3$.

Now, consider the recursive case.
Using the algorithm of Theorem~\ref{thm:nice-testing}, we test in $c_1 |\prm B|^8 \cdot|B|$ time whether
$B$ admits a short $\nice$-nice tree, for some constant $c_1$.
If the algorithm returns a short $\nice$-nice brick covering $\Bb = \{B_1,B_2,\ldots,B_p\}$, then we recurse on each brick $B_i$ separately, obtaining a subgraph~$H_i$.
By Lemma \ref{lem:recursion} and the choice of $\alpha$,
we may return the subgraph $H := \bigcup_{i=1}^p H_i$.
As for the time complexity, 
assume that the $i$-th recursive call took at most $c|\prm B_i|^\alpha |B_i|$ time.
Then, as the brick covering $\Bb$ is short and $\nice$-nice, we obtain that the total 
time spent is bounded by 
$$\left(c_1 |\prm B|^8  + c \sum_{i=1}^p |\prm B_i|^\alpha\right) |B| \leq |\prm B|^\alpha |B| \left(c_1 + 3c(1-\nice)^{\alpha-1}\right),$$
which is at most $c |\prm B|^\alpha |B|$ for sufficiently large $c$, by the choice of
$\alpha$.

Assume then that the algorithm of Theorem~\ref{thm:nice-testing} decided that no short $\nice$-nice tree exists in $B$.
First, we find some core face $\coreface$, using Theorem \ref{thm:core},
that cannot be $2\nice$-carved.
Then we employ Theorem \ref{thm:ananas}
to find a cycle $C$ of length at most
$\frac{16}{\nice^2} |\prm B| = 20736 |\prm B|$ that encloses $\coreface$.
Mark a set $X \subseteq V(C)$ such that the distance between any two consecutive
vertices of $X$ on $C$ is at most $2\nice |\prm B| = |\prm B|/18$. 
As $|\prm B| > 72$, we may greedily mark such set $X$ of size at
most $\frac{5}{4} \frac{|C|}{2\nice |\prm B|} \leq 466560$.
For each $x \in X$, we compute a shortest path $P_x$ from $x$ to $V(\prm B)$
that does not contain any edge strictly enclosed by $C$.
Note that this computation can be done by a simple breadth-first search from
$V(\prm B)$ in the graph obtained from $B$ by removing all edges strictly enclosed by $C$.
Moreover, in this manner, for any $x,y \in X$, the intersection of $P_x$
and $P_y$ is a common (possibly empty) suffix. By condition (ii) of Theorem~\ref{thm:ananas}, each path $P_x$ is of length at most $(\frac{1}{4}-2\nice)|\prm B|=\frac{7}{36}|\prm B|$.
For $x\in X$, let $\pi(x)$ be the second endpoint of $P_x$.

Let $x,y \in X$ be two vertices that are consecutive (in counter-clockwise direction) on $C$ and consider the walk $P := P_x \cup C[x,y] \cup P_y$.
Note that $|P| \leq \frac{4}{9}|\prm B|$, as $|P_{x}|,|P_{y}| \leq \frac{7}{36}|\prm B|$ and $C[x,y] \leq \frac{1}{18} |\prm B|$. We claim that:
\begin{equation} \label{eq:finish:P}
|\prm B[\pi(x),\pi(y)] \cup P| \leq (1-3\nice)|\prm B|.
\end{equation}
If $\pi(x) = \pi(y)$, then $|\prm B[\pi(x),\pi(y)] \cup P| \leq |P| \leq \frac{4}{9}|\prm B|$, and~\eqref{eq:finish:P} follows from the choice of $\nice$. Therefore, suppose that $\pi(x)\neq \pi(y)$. Then $P_x$ and $P_y$ do not intersect.
Let $x'$ be the vertex of $V(P_x) \cap V(C[x,y])$ that lies closest to $\pi(x)$ on $P_y$, and define $y'$ similarly with respect to $P_y$.
Observe that $x'$ lies closer to $x$ on $C[x,y]$ than $y'$, as otherwise $P_x[x,x']$ and $P_y[y,y']$ would
intersect (recall that neither $P_x$ nor $P_y$ contains an edge strictly enclosed by $C$).
Hence, $C[x',y']$ is a subpath of $C[x,y]$.
Define $P' = P_x[\pi(x),x'] \cup C[x',y'] \cup P_y[y',\pi(y)]$.
Observe that $P'$ is simple path of length at most $|P| \leq \frac{4}{9}|\prm B|$.
Then, either $(P',\prm B[\pi(x),\pi(y)])$ or $(P',\prm B[\pi(y),\pi(x)])$ is a $(2\nice)$-carve.
%Moreover, since $P'$ contains the subpath $C[x',y']$, and this subpath neighbours the interior of $C$ to the left, if we traverse $C$ counter-clockwise, 
Note that 
$P' \cup \prm B[\pi(y),\pi(x)]$ encloses $C$, and thus in particular $\coreface$.
Hence, it must be $(P,\prm B[\pi(x),\pi(y)])$ that is a $(2\nice)$-carve.
By Lemma~\ref{lem:pluseps} we infer that $|\prm B[\pi(x),\pi(y)]|\leq \frac{17}{36}|\prm B|$, and 
thus $|\prm B[\pi(x),\pi(y)] \cup P| \leq \frac{33}{36}|\prm B|$. Then~\eqref{eq:finish:P} follows from the choice of $\nice$.

Consider now the closed walk $W_x = \prm B[\pi(x),\pi(y)] \cup P$.
Let $H_x$ be the graph consisting of all edges of $W_x$ that neighbour the outer face
of $W_x$ treated as a planar graph; note that $W_x$ and $H_x$ are computable in linear time
for fixed $x$. By definition, each doubly-connected component of $H_x$ is a cycle or a bridge.
For each doubly-connected component that is a cycle, we create a brick consisting of all edges
of $B$ that are enclosed by this cycle. Let $\Bb_x$ be the family of obtained bricks.
Observe that $\Bb_x$ is computable in linear time and a face of $B$ is enclosed by some brick
of $\Bb_x$ if and only if it is enclosed by $W_x$.
Moreover, by~\eqref{eq:finish:P},
$$\sum_{B' \in \Bb_x} |\prm B'| \leq |W_x| \leq (1-3\nice)|\prm B|.$$
Therefore,
\begin{equation}\label{eq:Bxbound}
\sum_{x \in X}\sum_{B' \in \Bb_x}|\prm B'| \leq |C| + |\prm B| + 2|X| \frac{7}{36}|\prm B|
\leq 202177|\prm B|.
\end{equation}
We recurse on each brick $B' \in \Bb_x$, obtaining a graph $H(B')$.
Furthermore, for each $x,y \in V(C)$, we mark one shortest path $Q_{x,y}$ between $x$ and $y$
in $B$, if its length is at most $|\prm B|$. We define
$$H := \left(\bigcup_{x \in X} \bigcup_{B' \in \Bb_x}H(B')\right) \cup \left(\bigcup_{x,y \in X} Q_{x,y}\right).$$
By Theorem \ref{thm:ananas}, for any choice of terminals on $V(\prm B)$, there exists
an optimal Steiner tree contained in $H$. Note here that by Theorem \ref{thm:ananas} we may assume that every connection strictly enclosed by $C$ is realized by some marked shortest path $Q_{x,y}$.

We now bound the size of $H$.
For each $x \in X$ and $B' \in \Bb_x$ we have $|H(B')| \leq \beta |\prm B'|^\alpha$.
Moreover, each $Q_{x,y}$ is of length at most $|\prm B|$. Hence,
\begin{align*}
|H| &\leq \beta\sum_{x \in X}\sum_{B' \in \Bb_x} |\prm B'|^\alpha + \binom{|X|}{2} |\prm B| \\
    &\leq \beta 202177|\prm B|^\alpha (1-3\nice)^{\alpha-1} + 108838883520 |\prm B| \\
    & \leq \beta |\prm B|^\alpha.
\end{align*}
(The last inequality follows from the choice of $\alpha$ and $\beta$.)

Regarding time bound, note that all computations, except for the recursive
calls, can be done in $c_2 |\prm B|^3 |B|$ time, for some constant $c_2$.
Therefore the total time spent is
$$\left(c_2 |\prm B|^3  + c \sum_{x \in X} |\prm B_x|^\alpha\right) |B| \leq |\prm B|^\alpha |B| \left(c_2 + 202177 c (1-3\nice)^{\alpha-1}\right)$$
which is at most $c |\prm B|^\alpha |B|$ for sufficiently large $c$, by the choice of
$\alpha$.

%!TEX root = pst-kernel.tex

\section{Dynamic programming to find nice subgraphs}\label{sec:dp}

\newcommand{\rzeska}[1]{\widehat{#1}}
\newcommand{\exB}{\rzeska{B}}
\newcommand{\Tree}{\mathbb{T}}
\newcommand{\Emb}{\pi}
\newcommand{\prmup}{\prm^\uparrow}
\newcommand{\krok}{\lambda}
\newcommand{\rnd}{\mathtt{rnd}}
\newcommand{\impV}[1]{\mathbf{I}(#1)}

Our goal in this section is to prove the two algorithmic statements mentioned Section~\ref{sec:bricks}.
\begin{theorem}[Theorem \ref{thm:nice-testing} recalled]\label{thm:nice-testing:copy}
Let $\nice > 0$ be a fixed constant.
Given an unweighted brick $B$, in $\Oh(|\prm B|^8 |B|)$ time
one can either correctly conclude that
no short $\nice$-nice tree exists in $B$
or find a short $\nice$-nice brick covering of $B$.
\end{theorem}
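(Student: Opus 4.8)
The plan is to recast the existence question as a Dreyfus--Wagner-style search over partial Steiner trees, applying the classical algorithm of Erickson et al.~\cite{erickson} enriched with enough numerical data to certify both shortness and $\nice$-niceness. Recall that we are hunting for a \emph{short $\nice$-nice tree}: a tree $T$ in $B$ with all leaves on $\prm B$ and $|T|\le|\prm B|$ (which already makes $T$ $3$-short, since $\sum_{B'\in\Bb}|\prm B'|\le|\prm B|+2|T|$ for the induced brick partition $\Bb$), such that $|\prm B'|\le(1-\nice)|\prm B|$ for every $B'\in\Bb$. First I would introduce \emph{partial trees}: a subtree that attaches to the rest of $T$ through a single \emph{interface vertex} $v$, and whose anchors --- its vertices lying on $\prm B$ --- occupy, by the key observation of~\cite{erickson}, a contiguous arc $\prm B[l,r]$ of the perimeter with $l,r$ its extreme anchors. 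A state of the dynamic program then records the tuple $(l,r,v,\ell,\ell_L,\ell_R)$, where $l,r\in V(\prm B)$, $v\in V(B)$, $\ell\le|\prm B|$ is the size of the partial tree, and $\ell_L,\ell_R\le|\prm B|$ are the lengths of its leftmost path (from $l$ to $v$) and rightmost path (from $r$ to $v$); we additionally insist that a partial tree has so far ``sealed off'' only bricks of perimeter at most $(1-\nice)|\prm B|$. This yields $\Oh(|\prm B|^{5}|B|)$ states.

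Next I would specify the transitions, mirroring the two Dreyfus--Wagner moves together with a ``closing'' move. \emph{Grow}: prepend an edge $v'v$ at the interface, updating $v$ and incrementing $\ell$ and one of $\ell_L,\ell_R$. \emph{Merge}: glue a partial tree with anchor arc $\prm B[l,r_1]$ to one with anchor arc $\prm B[l_2,r]$, both with interface $v$ and with $\prm B[r_1,l_2]$ containing no further anchors; this is allowed only if the two partial trees embed without crossing (a local check around $v$ and along the gap $\prm B[r_1,l_2]$), it produces the state with arc $\prm B[l,r]$, summed sizes, $\ell_L$ inherited from the left summand and $\ell_R$ from the right, and it seals the new brick whose perimeter equals the length of the left summand's rightmost path, plus $|\prm B[r_1,l_2]|$, plus the length of the right summand's leftmost path; the merge is rejected if this exceeds $(1-\nice)|\prm B|$. \emph{Close}: turn the interface vertex into an ordinary vertex, sealing the last outer brick, whose perimeter is again read off from the stored lengths. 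A suitable staging of the merge enumeration (fixing one summand from the table and deriving the constraints on the other) keeps the work per produced state at $\Oh(|\prm B|^{3})$, so the whole dynamic program runs in $\Oh(|\prm B|^{8}|B|)$ time.

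For correctness and for extracting the output: if some reachable ``completed'' state --- one with no interface vertex left --- has $\ell\le|\prm B|$, then tracing the DP back produces a connected structure $F\subseteq B$ with $|F|\le|\prm B|$ together with the family $\Bb$ of bricks it sealed along the trace; this family satisfies $\sum_{B'\in\Bb}|\prm B'|\le|\prm B|+2|F|\le 3|\prm B|$ and $|\prm B'|\le(1-\nice)|\prm B|$ for all $B'$, i.e.\ it is exactly a short $\nice$-nice brick covering. Conversely, if no completed state is reachable, I would argue that $B$ admits no short $\nice$-nice tree at all: given such a tree $T$, root it and decompose it by the inverse Dreyfus--Wagner moves; by the Erickson observation every partial tree along the decomposition has a contiguous anchor arc, its leftmost/rightmost-path lengths match the state it induces, and every brick it seals is a sub-brick of the global partition $\Bb$ of $T$, hence of perimeter at most $(1-\nice)|\prm B|$ --- so the entire run is feasible and the corresponding completed state is reachable, a contradiction.

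The step I expect to be the main obstacle is the precise treatment of the sealed bricks, and with it the reason the output can only be guaranteed to be a brick \emph{covering} rather than the partition induced by an honest tree. A dynamic program whose states are face-and-length signatures rather than actual embedded subgraphs may, on trace-back, return a walk-like object $F$ with repeated or nested edges, so one cannot promise to recover a $\nice$-nice tree, only the family of bricks that $F$ carves out --- which nonetheless still covers every face of $B$ and remains $3$-short and $\nice$-nice. Making this rigorous will require (i) defining the sealed bricks of a partial tree so that each one's perimeter is a function of the stored lengths alone, independent of how already-closed regions are routed; (ii) checking that the merge step's local planarity condition never hides a brick whose perimeter went uncharged; and (iii) verifying that the $\ell_L,\ell_R$ bookkeeping composes correctly along a chain of merges. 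Pinning the running-time exponent at $8$ is a secondary, purely combinatorial matter of how the merge enumeration is organised.
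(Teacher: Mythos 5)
Your plan is exactly the paper's: a Dreyfus--Wagner dynamic program in the style of Erickson et al.\ with states recording the interface vertex, the leftmost and rightmost anchors on $\prm B$, and the lengths of the extremal paths, grow and merge transitions, and a $\nice$-niceness check applied to each brick a merge seals. You also correctly see the central subtlety, namely that the traceback need not yield an honest tree, so the output can only be guaranteed to be a brick covering.

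The one genuine gap is exactly your item (ii): why do the sealed bricks, collectively, cover every face of $B$? The paper handles this by replacing trees with \emph{embedded trees} --- homomorphisms from an abstract ordered tree into the (slightly extended) brick, not subgraph embeddings --- and then proving that for any fully $\nice$-nice embedded tree, the closed walks carved out between consecutive leaves cover all of $B$. The proof is a winding-number computation: for a point $c$ in the interior of any face, the sum, in the fundamental group of $\plane \setminus \{c\}$, of the oriented boundary walks associated to the consecutive-leaf pairs equals the class of $\prm B$, which is nonzero; hence at least one of these walks encloses $c$. Without an argument of this type the covering property does not follow from the perimeter bounds alone, and your proposal, while flagging the issue, never supplies it. Relatedly, the ``local planarity check'' you insert into the merge step is the wrong instinct and should be dropped: the entire point of passing to embedded trees as homomorphisms is that planarity of the image is \emph{not} enforced --- that is what keeps the state space enumerable from the signature $(l,r,v,\ell_L,\ell_R)$ alone --- and the winding-number lemma is what compensates for the resulting degeneracies. (A small bookkeeping point: with $\ell$ kept as an explicit sixth dimension the merge enumeration exceeds your claimed $\Oh(|\prm B|^3)$ per produced state; the fix is the paper's, namely to store only the minimum length attaining each $(v,l,r,\ell_L,\ell_R)$, which recovers the $\Oh(|\prm B|^8 |B|)$ bound.)
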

\begin{theorem}\label{thm:nice-testing-wei}
Let $0 < \nice \leq \frac{1}{4}$ be a fixed constant.
Given an edge-weighted brick $B$, in $\Oh(\nice^{-14} |B| \log |B|)$ time
one can either correctly conclude that no $3$-short $\nice$-nice tree exists in $B$
or find a $(3+2\nice)$-short $(\nice/2)$-nice brick covering $\Bb$ of $B$ with the following
additional properties:
\begin{enumerate}
\item each finite face of $B$ is enclosed by at most $7$ bricks $B' \in \Bb$;\label{p:nice-testing-wei:seven}
\item $\bigcup_{B' \in \Bb} \prm B'$ is connected.
\end{enumerate}
\end{theorem}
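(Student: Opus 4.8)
The plan is to follow the blueprint of the classical Erickson--Monma--Veinott algorithm~\cite{erickson} for Steiner trees with terminals on the outer face, but to run it with states that are rich enough to (a) track the quantities we need to test niceness and shortness, and (b) do so in the edge-weighted setting without blowing up the running time. First I would set up the Dreyfus--Wagner-style dynamic programming over \emph{consecutive intervals} of $\prm B$: a state consists of a pair $(l,r)$ of vertices on $\prm B$ (the leftmost and rightmost anchor chosen so far), an ``interface'' vertex $v \in V(B)$ through which the partial tree currently connects to the rest, the total length of the partial tree rounded to a suitable granularity, and the lengths of the leftmost path $\prm B[l,\cdot]\cup$ (tree path to $v$) and the rightmost path (tree path to $v$) $\cup\,\prm B[\cdot,r]$, again rounded. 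The key structural fact, exactly as in~\cite{erickson}, is that in an optimal (or near-optimal) solution the set of anchors seen by any subtree hanging below an edge forms a contiguous arc of $\prm B$, so only $\Oh(|\prm B|^2)$ choices of $(l,r)$ are relevant; combined with $|V(B)|$ choices of $v$ and $\poly(\nice^{-1})$ buckets for each of the three length coordinates (after rounding to multiples of $\nice\wei{\prm B}/\Theta(1)$), the number of states is $\Oh(\nice^{-O(1)}\,|\prm B|^2\,|B|)$, and the standard merge/extend transitions give the claimed running time.

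Next I would extract the brick covering from the DP table. Whenever the DP discovers a partial solution that can be closed off into a tree $T$ all of whose induced bricks have perimeter at most $(1-\nice/2)\wei{\prm B}$ and whose total perimeter $\sum_{B'}\wei{\prm B'}\le \wei{\prm B}+2\wei{T}$ stays below $(3+2\nice)\wei{\prm B}$, we stop and output the corresponding bricks. The rounding is where the $\nice/2$ and the $3+2\nice$ slacks come from: a $3$-short $\nice$-nice tree, rounded, becomes a witness the DP can certify with the relaxed bounds, and conversely any tree the DP certifies has, before rounding, perimeters within $(1-\nice/2)\wei{\prm B}$ and total perimeter within $(3+2\nice)\wei{\prm B}$. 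For the two extra structural properties: property~2 (connectedness of $\bigcup \prm B'$) is ensured by always keeping the union of perimeters of the output bricks glued along the perimeter $\prm B$ itself, which is present in the covering by construction; property~1 (each finite face enclosed by at most $7$ bricks) follows because the covering is assembled from a \emph{brick partition} of a tree $T$ plus a bounded number of ``repair'' bricks introduced when a doubly-connected component of the certified subgraph fails to be a simple cycle --- the same $\Oh(1)$-overlap phenomenon already exploited in Section~\ref{sec:finish} --- and a careful accounting of how many repair bricks can cover a single face gives the constant $7$.

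The main obstacle, and where most of the technical work lies, is the edge-weighted subtlety already flagged in the surrounding text: in the unweighted case all relevant lengths are integers bounded by $|\prm B|$, so the DP is exact, whereas with real weights we must round the three length coordinates to survive a polynomial number of buckets, and rounding errors accumulate along the $\Oh(|\prm B|)$ edges of a tree. The fix is to round each coordinate to a grid of width $\nice\wei{\prm B}/c$ for a suitable constant $c$ and to argue that the \emph{total} accumulated error is still $\Oh(\nice\wei{\prm B})$ --- this works because the error per coordinate is controlled by the granularity, not by the number of edges, once we round the running total rather than individual edges. One then has to check that this bounded additive error only costs us the $\nice/2$ in niceness and the $2\nice$ in shortness, not more, and that a genuine $3$-short $\nice$-nice tree is never rounded past the threshold (so the ``no tree exists'' branch is only taken when indeed none exists). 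Verifying that the transitions preserve the invariant ``the rounded coordinates are within additive $\Oh(\nice\wei{\prm B})$ of the true ones'' under both the extend-by-an-edge and the merge-two-subtrees operations, and that the final closing step respects all of property~1, property~2, $(3+2\nice)$-shortness and $(\nice/2)$-niceness simultaneously, is the delicate part; everything else is a routine adaptation of~\cite{erickson} together with the leftmost/rightmost-shortest-path machinery of Section~\ref{sec:mountains} and the planar shortest-path subroutines of~\cite{planar-sp}.
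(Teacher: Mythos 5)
Your high-level plan---a Dreyfus--Wagner/Erickson-style DP over arcs of $\prm B$, with the three length coordinates rounded to a grid of width $\Theta(\nice\wei{\prm B})$, then extracting a brick covering from a certified tree---is the right shape, and matches the paper's proof in spirit. But two load-bearing pieces are missing or wrong, and they matter because they are exactly what makes the running time $\Oh(\nice^{-14}|B|\log|B|)$ (as opposed to something involving $|\prm B|^2$) and what makes the constant $7$ in property~1 true.

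First, the constant $7$. You attribute it to ``repair bricks'' and an $\Oh(1)$-overlap phenomenon when doubly-connected components fail to be simple cycles, pointing to Section~\ref{sec:finish}. That is not where the bound comes from. The paper first proves a structural lemma (Lemma~\ref{lem:dp:seven-leaves}): under the hypothesis $\nice\le\tfrac14$, any brick that admits a short $\nice$-nice tree also admits a fully $\nice$-nice \emph{embedded tree} with at most $7$ leaves. The argument is an averaging one: if a fully $\nice$-nice embedded tree has $\ge 8$ leaves, pick four pairwise non-consecutive leaves $\rzeska{l_1},\ldots,\rzeska{l_4}$; prune away the leaves strictly between each consecutive pair $\rzeska{l_i},\rzeska{l_{i+1}}$ to form four candidate trimmed trees $\Tree_1,\ldots,\Tree_4$; if none were fully $\nice$-nice, summing the failed inequality~\eqref{eq:dp:eps} over $i=1,\ldots,4$ would force $\wei{\prm B}+\sum_i\wei{\Tree[l_i,l_{i+1}]}>4(1-\nice)\wei{\prm B}\ge 3\wei{\prm B}$, contradicting $\sum_i\wei{\Tree[l_i,l_{i+1}]}\le 2\wei{\Tree}\le 2\wei{\prm B}$. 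Once you know you may restrict to trees with $\le 7$ leaves, the extraction step (Lemma~\ref{lem:dp:tree-to-bricks}) produces, for each pair of consecutive leaves, a collection of bricks that are pairwise face-disjoint; as there are at most $r$ such pairs when there are $r$ leaves, each face is covered at most $r\le 7$ times. Without this lemma you have no bound on the number of leaves, and the ``careful accounting of repair bricks'' you invoke has nothing to latch onto.

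Second, and relatedly, the running time. Your state space uses $\Oh(|\prm B|^2)$ choices of the leftmost/rightmost anchor pair $(l,r)$, giving $\Oh(\nice^{-O(1)}\,|\prm B|^2\,|B|)$ states and thus a running time that grows with $|\prm B|^2$, not the near-linear $\Oh(\nice^{-14}|B|\log|B|)$ claimed. The paper avoids this by exploiting the $7$-leaf bound again: it places $\Oh(\nice^{-1})$ pegs on $\prm B$, argues (Claim~\ref{cl:dp:wei:1}) that leaves may be snapped to pegs at an extra niceness cost of $\nice/4$, and then branches over the $\Oh(\nice^{-7})$ possible sequences of at most $7$ peg-images for the leaves. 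Inside each branch the DP table is indexed by $(v,i_a,i_b,\ell,k_a,k_b)$ with $i_a,i_b$ ranging over at most $7$ positions and $\ell,k_a,k_b$ over $\Oh(\nice^{-1})$ rounded values; together with the merge coordinates this gives the $\nice^{-14}$ factor. So the missing ingredient is not the rounding (you have that right) but the reduction of the leaf-endpoint choices from $\prm B$ to a constant-size peg set, and that reduction is only legitimate \emph{because} there are at most $7$ leaves.

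A smaller point: the rounding bookkeeping in the paper is done per maximal degree-$2$ path of the embedded tree rather than as a running total along the DP; this is what makes the total error $\Oh(\nice\wei{\prm B})$ independent of the number of edges, since with $\le 7$ leaves there are at most $13$ such paths. Your sketch gestures at ``round the running total rather than individual edges,'' which is the right intuition but again implicitly relies on the bounded-leaf-count fact to bound the number of rounding events.
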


The idea of the proofs of Theorems~\ref{thm:nice-testing} and~\ref{thm:nice-testing-wei} is to perform a dynamic-programming
algorithm similar to the algorithm of Erickson et al.~\cite{erickson}
for finding an optimal Steiner tree for a given set of terminals on the outer face.
However, as we impose some restrictions on the faces that the tree cuts out of the brick $B$,
the outcome of the algorithm may no longer be a tree. We start by formalizing
what we can actually find.

Construct the {\em{extended brick}} $\exB$ as follows: take $B$ and
for every $a \in V(\prm B)$ add a degree-$1$ vertex $\rzeska{a}$ attached
to $a$ with an edge of zero weight, drawn outside the cycle $\prm B$. (We remark here that the weight of the edge $a\rzeska{a}$ does not have any real significance in the sequel.)
We denote $\rzeska{\prm} B = \{\rzeska{a}: a \in V(\prm B)\}$.

We define an {\em{ordered tree}} $\Tree$ as a rooted tree where every vertex has imposed some linear order on its children. This naturally induces a linear order on the set of leaves of $
\Tree$. The following definition captures the objects found by our dynamic-programming algorithms.
\begin{definition}[embedded tree]
An {\em{embedded tree}} is a pair $(\Tree, \Emb)$
where $\Tree$ is an edge-weighted ordered tree with at least one edge, rooted at vertex $r(\Tree)$,
and $\Emb$ is a homomorphism from $\Tree$ into $\exB$
such that $\Emb(v) \in V(B)$ for any non-leaf vertex of $\Tree$
and $\Emb$ assigns the leaves of $\Tree$ to vertices
of $\rzeska{\prm}B$.
We require that the order of the leaves of $\Tree$ coincides with the counter-clockwise
order of their images on $\rzeska{\prm}B$ under the homomorphism $\Emb$.

We say that an embedded tree is {\em{leaf-injective}} if $\Emb$ is injective on the set of leaves of $\Tree$.
\end{definition}
Here, by a homomorphism $\Emb$ from a graph $G$ to a graph $H$ we mean
a function $\Emb: E(G) \cup V(G) \to E(H) \cup V(H)$ that matches edges
to edges and vertices to vertices and, if $\Emb(uv) = u'v'$, then
$\{\Emb(u), \Emb(v)\} = \{u',v'\}$ and $\wei{u'v'} = \wei{uv}$.
As all edges $a\rzeska{a}$ are of weight zero in $\exB$, and all edges of $B$ have positive weight, we may restrict ourselves 
to embedded trees where an edge has weight zero if and only if it is adjacent to a leaf.

We measure the length of an embedded tree as in all weighted graphs.
Note that the edges incident to leaves of an embedded tree do not contribute to the length of the tree. In the unweighted case, we will mostly be working with leaf-injective embedded trees, while in the weighted case it will be more convenient to drop this assumption.

Recall that for two vertices $a,b \in \prm B$, by $\prm B[a,b]$ we denote the subpath of $\prm B$
between $a$ and $b$, obtained by traversing $\prm B$ in counter-clockwise direction. If $a=b$, then $\prm B[a,b] = \emptyset$.
We define $\prmup B[a,b]$ to be equal $\prm B[a,b]$ unless $a=b$; in this case $\prmup B[a,b] = \prm B$.

An embedded tree $(\Tree, \Emb)$ is {\em{$\nice$-nice}} if for any two
consecutive leaves $\rzeska{l_a},\rzeska{l_b}$ in $\Tree$ the following holds.
Let $\Emb(\rzeska{l_a}) = \rzeska{a}$ and $\Emb(\rzeska{l_b}) = \rzeska{b}$
and let $l_a,l_b$ be the parents of $\rzeska{l_a}, \rzeska{l_b}$ in $\Tree$,
respectively; note that $\Emb(l_a) = a$ and $\Emb(l_b)=b$, and possibly $a = b$.
Let $u$ be the
lowest common ancestor of $\rzeska{l_a}$ and $\rzeska{l_b}$ in $\Tree$. Then for $(\Tree, \Emb)$ to be {\em{$\nice$-nice}} we require that
\begin{equation}
\wei{\prm B[a,b]} + \wei{\Tree[u,l_a]} + \wei{\Tree[u,l_b]} \leq (1-\nice)\wei{\prm B}.\label{eq:dp:eps}
\end{equation}
An embedded tree is {\em{fully $\nice$-nice}} if additionally~\eqref{eq:dp:eps} holds
for $\rzeska{l_a}$ being the last leaf of $\Tree$,
$\rzeska{l_b}$ being the first leaf of $\Tree$, $u = r(\Tree)$ and
$\prm B[a,b]$ replaced by $\prmup B[a,b]$.

The intuition behind this notion is that the image of $\Tree[w,l_a] \cup \Tree[w,l_b]$
under $\Emb$, together with $\prm B[a,b]$ (or $\prmup B[a,b]$ in the case of the last and the first leaf of $\Tree$), is likely to
yield a perimeter of an output brick $B_i$ in our algorithm.

We now formalize how to find a set of bricks promised 
by Theorem \ref{thm:nice-testing} and Theorem~\ref{thm:nice-testing-wei}, given a fully $\nice$-nice embedded tree.
\begin{lemma}\label{lem:dp:tree-to-bricks}
Given a fully $\nice$-nice embedded tree $(\Tree, \Emb)$ with $r$ leaves,
one can in $\Oh(r(|\Tree| + |B|))$ time compute
a $\nice$-nice brick covering $\Bb$ of $B$ of total perimeter at most
$\wei{\prm B} + 2\wei{\Tree}$ with the following additional properties:
\begin{enumerate}
\item each finite face of $B$ is enclosed by at most $r$ bricks of $\Bb$;
\item $\bigcup_{B' \in \Bb} \prm B'$ is connected.
\end{enumerate}
\end{lemma}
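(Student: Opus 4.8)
The plan is to cut $B$ along the image of the tree and collect one group of bricks per pair of consecutive leaves, in the spirit of the walks $W_x$ and families $\Bb_x$ used in Section~\ref{sec:finish}.

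List the leaves of $\Tree$ in their linear order as $\rzeska{l_1},\dots,\rzeska{l_r}$, with parents $l_1,\dots,l_r$ and anchors $a_i:=\Emb(l_i)\in V(\prm B)$; by the definition of an embedded tree the vertices $a_1,\dots,a_r$ occur on $\prm B$ in counter-clockwise order, possibly with repetitions. For $i=1,\dots,r$ (indices modulo $r$) let $u_i$ be the lowest common ancestor of $\rzeska{l_i}$ and $\rzeska{l_{i+1}}$ in $\Tree$, except that we set $u_r:=r(\Tree)$. Let $W_i$ be the closed walk in $B$ formed by concatenating $\prm B[a_i,a_{i+1}]$ — or $\prmup B[a_r,a_1]$ when $i=r$ — with the images $\Emb(\Tree[l_{i+1},u_i])$ and $\Emb(\Tree[u_i,l_i])$ (the weight-zero pendant edges are not part of these paths), oriented so that $W_i$ encloses the faces of $B$ squeezed between $\prm B[a_i,a_{i+1}]$ and the image of the tree. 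From each $W_i$ we extract a family $\Bb_i$ of subbricks of $B$ exactly as in Section~\ref{sec:finish}: let $H_i\subseteq E(B)$ be the set of edges of $W_i$ incident to the outer face of the plane graph $W_i$, and for every $2$-edge-connected component of $H_i$ that is a simple cycle, add to $\Bb_i$ the subbrick of $B$ that it encloses. Put $\Bb:=\bigcup_{i=1}^r\Bb_i$. Computing all LCAs takes $\Oh(|\Tree|)$ time, and each of the $r$ rounds — building $W_i$, computing $H_i$, its blocks, and the enclosed subbricks — costs $\Oh(|\Tree|+|B|)$, so the running time is $\Oh(r(|\Tree|+|B|))$.

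It remains to verify the claimed properties. For any $B'\in\Bb_i$ we have $\wei{\prm B'}\le\wei{H_i}\le\wei{W_i}=\wei{\prm B[a_i,a_{i+1}]}+\wei{\Tree[u_i,l_i]}+\wei{\Tree[u_i,l_{i+1}]}$ (with $\prm B[a_r,a_1]$ replaced by $\prmup B[a_r,a_1]$ for $i=r$), which is at most $(1-\nice)\wei{\prm B}$ because $(\Tree,\Emb)$ is fully $\nice$-nice — this is precisely inequality~\eqref{eq:dp:eps} for the consecutive leaves $\rzeska{l_i},\rzeska{l_{i+1}}$, together with its wrap-around version for $i=r$. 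Hence $\Bb$ is $\nice$-nice. For the total perimeter, the contour structure of an ordered tree, sliced by the LCAs, shows that each edge of $\prm B$ lies on exactly one arc $\prm B[a_i,a_{i+1}]$, and each tree edge $e\in E(\Tree)$ appears in exactly two of the walks $W_i$, once in each of its two orientations (the unique $W_i$ traversing $e$ downward is determined by the first leaf of the subtree below $e$, and the unique one traversing it upward by the last leaf of that subtree; the wrap-around index is handled by $u_r=r(\Tree)$ and $\prmup B$). Summing, $\sum_{B'\in\Bb}\wei{\prm B'}\le\sum_i\wei{W_i}=\wei{\prm B}+2\wei{\Tree}$. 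The same bookkeeping gives the identity $\sum_{i=1}^r[W_i]=[\prm B]$ in the first homology of the punctured plane, since the two oriented copies of each tree edge cancel; thus for any finite face $f$ of $B$ and interior point $c\in f$ we get $\sum_i\mathrm{wind}(W_i,c)=\mathrm{wind}(\prm B,c)=1$, so at least one $W_i$ and at most $r$ of them enclose $f$. Whenever $W_i$ encloses $f$, the edge set $H_i$ separates $c$ from the outer face, and the cycle block of $H_i$ that immediately surrounds $c$ is unique (two cycle blocks of the outer boundary $H_i$ cannot both enclose the same point), so $f$ lies in exactly one brick of $\Bb_i$; hence $f$ lies in at least one and at most $r$ bricks of $\Bb$, and $\Bb$ is a brick covering. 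Finally, $\bigcup_{B'\in\Bb}\prm B'$ is connected: it contains all of $\prm B$ — every perimeter edge is incident to the outer face of its $W_i$, hence lies in $H_i$, and a short check shows it lies on a cycle block of $H_i$ rather than a bridge — and every brick perimeter is attached to $\prm B$ through the stretch of $\Emb(\Tree)$ bounding it, whose endpoints $a_j$ lie on $\prm B$.

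The delicate point is the bookkeeping underlying the total-perimeter bound and the covering property. Because $\Emb$ need not be injective — several leaves may even share a pendant — the walks $W_i$ are genuine closed walks, not simple cycles, and the regions they bound may overlap, so ``each tree edge is counted twice'' must be deduced from the abstract ordered-tree contour together with its LCA decomposition into the $W_i$, not from any drawing of $\Tree$. Once that combinatorial identity is established, everything else reduces to the routine planar surgery of extracting bricks from a closed walk, just as in Section~\ref{sec:finish}.
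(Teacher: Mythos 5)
Your construction of $\Bb = \bigcup_i \Bb_i$ and the arguments for the $\nice$-niceness, total-perimeter, covering, and ``at most $r$ bricks per face'' bounds all match the paper's proof, including the homology/cancellation step. The gap is the final claim that $\bigcup_{B'\in\Bb}\prm B'$ is connected: this is false in general, and the justification you give does not hold. You extract as bricks only the cycle blocks of each outer boundary $H_i$; the bridges of $H_i$ that attach those cycle blocks to the arc $\prm B[a_i,a_{i+1}]$ inside $H_i$ lie on no brick perimeter, and a tree edge that is a bridge in one of the two walks traversing it may be a bridge in, or interior to, the other walk as well. Concretely, let $\Tree$ be a path with two leaves mapped to distinct $a,b\in V(\prm B)$, and let the image of the $l_1$--$l_2$ tree path be the walk $P = a\,v_1\,w_1\,w_2\,u_1\,u_2\,w_2\,w_1\,v_1\,v_2\,b$, i.e.\ a simple path $a v_1 v_2 b$ carrying a length-two pendant at $v_1$ that ends in a triangle on $\{w_2,u_1,u_2\}$, drawn so that the pendant and the triangle lie on the $\prm B[b,a]$ side of $a v_1 v_2 b$. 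Then $H_1$ (from $W_1 = P\cup\prm B[a,b]$) has two cycle blocks --- the big cycle $a v_1 v_2 b\cup\prm B[a,b]$ and the triangle --- joined only through the bridge edges $v_1w_1$ and $w_1w_2$; and $H_2$ (from $W_2 = P^{-1}\cup\prm B[b,a]$) is just the big cycle $a v_1 v_2 b\cup\prm B[b,a]$, the pendant and triangle being strictly interior to it. Hence $\Bb$ contains a brick whose perimeter is the triangle $w_2 u_1 u_2 w_2$, sharing no vertex with any other brick perimeter or with $\prm B$, so $\bigcup_{B'\in\Bb}\prm B'$ is disconnected.

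The paper closes exactly this gap with a pruning step that you omit: after constructing the preliminary family (your $\Bb$, the paper's $\Bb_0$), partition $\bigcup_{B'\in\Bb}\prm B'$ into its connected components, let $D_0$ be the one containing $\prm B$, and output only the bricks whose perimeter lies in $D_0$. The substantive content is that the pruned family is still a brick covering. The argument is: for a component $D\neq D_0$, take a face $f$ of $B$ incident to an edge of $D$ but lying in the outer face of $D$ (such $f$ is finite since $D$ avoids $\prm B$); any brick $B''\in\Bb$ enclosing $f$ cannot have its perimeter in $D$, so $\prm B''$ is disjoint from $D$, and because $f$ touches $D$, $\prm B''$ must strictly enclose all of $D$ --- and hence all faces enclosed by bricks with perimeter in $D$. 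Thus discarding those bricks preserves the covering. You should add this pruning step and its justification before outputting your family; the remaining properties are inherited because the output is a subfamily of $\Bb$.
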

\begin{proof}
Let $\mathcal{F}$ be a family of pairs of two consecutive leaves of $\Tree$
and the pair $\mathbf{p}^\circ$ consisting of the last and the first leaf of $\Tree$.
For any $\mathbf{p} = (\rzeska{l_a},\rzeska{l_b}) \in \mathcal{F}$,
define $l_a,l_b,a,b,w$ as in the definition of a fully $\nice$-nice tree.
Define $C(\mathbf{p}) := \Emb(\Tree[l_a,w] \cup \Tree[w,l_b]) \cup \prm B[a,b]$
if $\mathbf{p} \neq \mathbf{p}^\circ$
and
$C(\mathbf{p}) := \Emb(\Tree[l_a,w] \cup \Tree[w,l_b]) \cup \prmup B[a,b]$ if $\mathbf{p} = \mathbf{p}^\circ$.
Observe that $C(\mathbf{p})$ is a closed walk in $B$.
Note that the fact that $\Tree$ is fully $\nice$-nice tree implies that the length of $C(\mathbf{p})$ is bounded
by $(1-\nice)\wei{\prm B}$.
Moreover, as edges of $\Tree$ not incident to a leaf contribute to exactly two
cycles $C(\mathbf{p})$, and each edge of $\prm B$ contributes to exactly one
such cycle, we have
\begin{equation}\label{eq:dp:sumC}
\sum_{\mathbf{p} \in \mathcal{F}} \wei{C(\mathbf{p})} = \wei{\prm B} + 2\wei{\Tree}.
\end{equation}
Let $H_0(\mathbf{p})$ be the subgraph of $B$ consisting of all edges that lie on $C(\mathbf{p})$. Clearly,
$H_0(\mathbf{p})$ is connected. Let $H(\mathbf{p})$ be the subgraph of $H_0(\mathbf{p})$ consisting of all edges of $H_0(\mathbf{p})$
that are adjacent to the outer face of $H_0(\mathbf{p})$. Note that $H(\mathbf{p})$ is connected,
$\prm B[a,b] \subseteq H(\mathbf{p})$ ($\prmup B[a,b] \subseteq H(\mathbf{p})$ if $\mathbf{p} \neq \mathbf{p}^\circ$) and the outer faces of $H(\mathbf{p})$ and $H_0(\mathbf{p})$ are equal.
Moreover, by the definition of $H(\mathbf{p})$, any doubly-connected component of $H(\mathbf{p})$
is either a simple cycle or a bridge. 

We construct a preliminary brick covering $\Bb_0$ as follows: for each $\mathbf{p} \in \mc{F}$ and
for each doubly-connected component $D$
of $H(\mathbf{p})$ that is a cycle, we insert into $\Bb_0$ a brick $B_i$ consisting of all edges
of $B$ that are enclosed by $D$; clearly $\prm B_i = D$ and $B_i$ is a subbrick of $B$.
Note that $\Bb_{0}$ can be computed within the desired running time. Indeed, $H(\mathbf{p})$ can be computed in $\Oh(|\Tree| + |B|)$ time, and the corresponding bricks can be computed in $\Oh(|B|)$ time. It remains to observe that $|\mc{F}| = r$, where $r$ is the number of leaves of $\Tree$.

We can now make several observations about $\Bb_{0}$.
First, as $\wei{C(\mathbf{p})} \leq (1-\nice)\wei{\prm B}$,
each brick in $\Bb_{0}$ has perimeter at most $(1-\nice)\wei{\prm B}$.
Second, for a fixed $\mathbf{p}$, the total perimeter of the bricks inserted into $\Bb_0$ is at most $\wei{H(\mathbf{p})} \leq \wei{C(\mathbf{p})}$.
Therefore, by \eqref{eq:dp:sumC}, the sum of the perimeters
of all bricks in $\Bb_0$ is bounded by $\wei{\prm B} + 2\wei{\Tree}$,
as desired.
Third, for a fixed cycle $C(\mathbf{p})$, the constructed bricks do not share an
enclosed finite face of $B$. Hence, each finite face of $B$ is enclosed by at most $r$
bricks of $\Bb_0$.

We now show that $\Bb_0$ is a brick covering of $B$, that is, 
   we prove that each face of $B$
is contained in some brick of $\Bb_0$. Let $f$ be any face of $B$ and let $c$ be an arbitrary
point of the plane in the interior of $f$. Let $\Gamma\cong \mathbb{Z}$ be the fundamental group of $\plane \setminus \{c\}$, and let $\iota$ be the mapping that assigns to each closed curve in $\plane \setminus \{c\}$ the corresponding element of $\Gamma$. 
For each $\mathbf{p} \in \mathcal{F}$, orient the walk $C(\mathbf{p})$
in the direction such that the part $\prm B[a,b]$ or $\prmup B[a,b]$ is traversed from $a$ to $b$
%(in the case $\mathbf{p} = \mathbf{p}^\circ$ and $a=b$, we orient $C(\mathbf{p})$ such that $\prmup B[a,b] = \prm B$ is oriented counter-clockwise).
(note that if $\mathbf{p} = \mathbf{p}^\circ$, then $a \not=b$ and $\prmup B[a,b] = \prm B[a,b]$, as $(\Tree,\Emb)$ is fully $\nice$-nice).
If $c$ belongs to the outer face of the graph $H(\mathbf{p})$,
then $C(\mathbf{p})$ is continuously retractable to a single point in $\plane \setminus \{c\}$, and thus $\iota(C(\mathbf{p}))$ is the neutral element of $\Gamma$.
On the other hand, $\iota(\prm B)$ is {\em{not}} the neutral element of this fundamental group, since it winds around $c$ exactly one time.
Observe that in this fundamental group we have equation
$$\sum_{\mathbf{p} \in \mathcal{F}} \iota(C(\mathbf{p})) = \iota(\prm B),$$
since for each $e\in E(\Tree)$ we have that $\Emb(e)$ is traversed by two different walks $C(\mathbf{p}_1)$, $C(\mathbf{p}_2)$, in different directions.
Therefore, for at least one $\mathbf{p}_0\in \mathcal{F}$ it must hold that $\iota(C(\mathbf{p}_0))$ is not the neutral element of $\Gamma$. Consequently, $c$ belongs to some bounded face of one of the constructed graphs $H(\mathbf{p})$,
and one of the bricks of $\Bb_0$ contains $f$.

Observe that $\Bb_0$ has all the required properties, except possibly the property
that $\bigcup_{B' \in  \Bb_0} \prm B'$ is connected. To ensure this property as well, we 
select a subfamily of $\Bb_0$ as follows. For each connected component $D$
of $\bigcup_{B' \in \Bb_0} \prm B'$, let $\Bb_D$ be the family of all bricks $B' \in \Bb_0$
with $\prm B' \subseteq D$. Let $D_0$ be the component of $\bigcup_{B' \in \Bb_0} \prm B'$
that contains $\prm B$.

We claim that if $D \neq D_0$ is a component of $\bigcup_{B' \in \Bb_0} \prm B'$, then $\Bb_0 \setminus \Bb_D$ is a brick covering of $B$ as well. Let $f$ be a face of $B$
that is incident to one of the edges of $D$, but is contained in the outer face of $D$.
As $D$ does not contain any edge of $\prm B$, $f$ is finite. Let $B' \in \Bb_0$ be a brick
such that $\prm B'$ encloses $f$. Clearly, $B' \notin \Bb_D$ and hence $\prm B'$ does not share
any vertex with $D$. As $f$ is incident with an edge of $D$, we infer that $\prm B'$ strictly encloses
all edges of $D$; in particular, $\prm B'$ encloses all faces that are enclosed by the bricks
of $\Bb_D$. Consequently, $\Bb_0 \setminus \Bb_D$ is a brick covering of $B$.

We now remove all bricks $\Bb_D$ from $\Bb_{0}$ for any component $D \neq D_0$ of $\bigcup_{B' \in \Bb_0} \prm B'$. By the above claim, we infer that the remainder, $\Bb_{D_0}$, is a brick covering of $B$.
As $\Bb_{D_0} \subseteq \Bb_0$, $\Bb_{D_0}$ inherited all other required properties:
in particular, it is $\nice$-nice and of total perimeter
at most $\wei{\prm B} + 2\wei{\Tree}$. Hence, the algorithm may output $\Bb_{D_0}$.
Observe that it can be computed from $\Bb_0$ in time linear in $|B|$ and the total size of $\Bb_0$.
\end{proof}

In the other direction, it is easy to see that a short $\nice$-nice tree in $B$
yields a fully $\nice$-nice embedded tree of small length.
\begin{lemma}\label{lem:dp:nice-to-tree}
If $B$ admits a $\nice$-nice tree $T$, then $B$ admits a
fully $\nice$-nice, leaf-injective embedded tree $(\Tree, \Emb)$ of length $\wei{T}$.
\end{lemma}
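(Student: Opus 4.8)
The plan is to use $T$ itself, with a pendant leaf attached at each vertex of $T$ lying on $\prm B$, as the tree $\Tree$. First assume that $T$ shares no edge with $\prm B$; if it does, the shared part forms degenerate ``bricks'' that can be removed by a routine preprocessing, or one subdivides $\prm B$ and argues on each maximal sub-path of $T$ running along $\prm B$ separately. List $V(T)\cap V(\prm B)=\{v_1,\dots,v_m\}$ in counter-clockwise order along $\prm B$; since $T$ is a tree with at least one edge all of whose leaves lie on $\prm B$, we have $m\ge 2$. Build $\Tree$ from a copy of $T$ by adding, for each $j$, a new vertex $\rzeska{l_{v_j}}$ joined to $v_j$ by a zero-weight edge and drawn just outside $\prm B$ so that $\rzeska{l_{v_j}}\in\rzeska{\prm} B$; let $\Emb$ be the identity on the copy of $T$ and send $\rzeska{l_{v_j}}\mapsto\rzeska{v_j}$. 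Then $\Emb$ is a homomorphism $\Tree\to\exB$ (it preserves edge weights, and the new edges have weight $0$ as do the edges of $\rzeska{\prm}B$), the leaves of $\Tree$ are exactly the vertices $\rzeska{l_{v_j}}$, so $\Emb$ is leaf-injective, and $\wei{\Tree}=\wei{T}$. Root $\Tree$ at $v_1$ and order the children of each vertex by the counter-clockwise rotation of the corresponding edges in the plane embedding of $\exB$, placing $\rzeska{l_{v_1}}$ last among the children of the root; by the standard correspondence between plane embeddings and ordered trees, the induced leaf order is $\rzeska{l_{v_2}},\dots,\rzeska{l_{v_m}},\rzeska{l_{v_1}}$, matching the counter-clockwise order of their images on $\rzeska{\prm} B$. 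Hence $(\Tree,\Emb)$ is a leaf-injective embedded tree of length $\wei{T}$.

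Next I would record the structure of the brick partition $\Bb$ induced by $T$. Applying Euler's formula to the connected plane graph $\prm B\cup T$ yields exactly $m$ bounded faces, so $|\Bb|=m$. Each \emph{gap} $\prm B[v_j,v_{j+1}]$ (indices modulo $m$) between consecutive vertices of $V(T)\cap V(\prm B)$ lies on the perimeter of at least one brick, because every edge of $\prm B$ lies on exactly one brick perimeter and the adjacent brick does not change along a gap; and no brick can be gap-free, since its perimeter would then be a cycle contained in the tree $T$. A pigeonhole argument (over $m$ gaps and $m$ bricks) therefore shows that the gaps and the bricks are in bijection, so after renumbering $\prm B_j=\prm B[v_j,v_{j+1}]\cup Q_j$, where $Q_j$ is obtained by deleting the gap from the simple cycle $\prm B_j$ and hence is the unique path of $T$ between $v_j$ and $v_{j+1}$. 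In particular $\wei{\prm B_j}=\wei{\prm B[v_j,v_{j+1}]}+\wei{Q_j}$, and since $T$ is $\nice$-nice, $\wei{\prm B_j}\le(1-\nice)\wei{\prm B}$ for every $j$.

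Finally I would verify that $(\Tree,\Emb)$ is fully $\nice$-nice. Consider two leaves $\rzeska{l_{v_j}},\rzeska{l_{v_{j+1}}}$ that are consecutive in the leaf order of $\Tree$ (this covers the wrap-around pair $(\rzeska{l_{v_m}},\rzeska{l_{v_1}})$ as well; the ``fully'' clause concerns the pair with $\rzeska{l_{v_1}}$ last and $\rzeska{l_{v_2}}$ first, for which the definition prescribes $u=r(\Tree)=v_1$). Their parents in $\Tree$ are $v_j$ and $v_{j+1}$, and $\Emb(v_j)=v_j$, $\Emb(v_{j+1})=v_{j+1}$. Let $u$ be their lowest common ancestor; note $u=r(\Tree)=v_1$ precisely when $v_1\in\{v_j,v_{j+1}\}$, which includes the pairs governed by the ``fully'' clause. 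The walk $\Tree[u,v_j]\cup\Tree[u,v_{j+1}]$ equals the path $\Tree[v_j,v_{j+1}]$, uses only edges of the copy of $T$ (pendant edges lead only to leaves), and is therefore mapped by $\Emb$ onto the unique $T$-path between $v_j$ and $v_{j+1}$, i.e. onto $Q_j$. Hence $\wei{\prm B[v_j,v_{j+1}]}+\wei{\Tree[u,v_j]}+\wei{\Tree[u,v_{j+1}]}=\wei{\prm B[v_j,v_{j+1}]}+\wei{Q_j}=\wei{\prm B_j}\le(1-\nice)\wei{\prm B}$, which is inequality~\eqref{eq:dp:eps}; for the ``fully'' pair one additionally uses $\prmup B[v_1,v_2]=\prm B[v_1,v_2]$, valid as $v_1\ne v_2$. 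This proves the lemma.

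The step I expect to require the most care is the topological bookkeeping behind the two facts invoked without detailed proof above: that the ordered tree $\Tree$ read off from the plane embedding does have its leaves in the stated cyclic order (hence is a genuine embedded tree), and that ancestor paths in $\Tree$ project exactly onto the $T$-parts of the corresponding brick perimeters --- that is, correctly translating the combinatorics of $\Tree$ into the face structure of $\prm B\cup T$. The degenerate case of $T$ running along $\prm B$, set aside in the first paragraph, is a secondary technicality that must also be handled.
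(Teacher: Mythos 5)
Your construction is the same as the paper's: root $T$ (the paper roots at an arbitrary vertex, you root at an anchor $v_1$, which is a harmless choice), attach a zero-weight pendant $\rzeska{a}$ to every $a\in V(T)\cap V(\prm B)$, take $\Emb$ to be the identity, and order children by the rotation system. Your Euler-plus-pigeonhole step is a more explicit derivation of what the paper asserts in one sentence, namely that for consecutive leaves the walk $\Tree[u,l_a]\cup\Tree[u,l_b]\cup\prm B[a,b]$ is the perimeter of the brick of the partition of $T$ that is adjacent to the interval $\prm B[a,b]$; so the approach is essentially the same.

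The genuine gap is the opening hand-wave about $T$ sharing edges with $\prm B$, and the hand-wave does not hold up. If $T$ and $\prm B$ share $s>0$ edges, then $\prm B\cup T$ has $|\prm B|+|T|-s$ edges and hence only $m-s$ bounded faces, so your pigeonhole over $m$ gaps and $m$ bricks simply fails and the gap--brick bijection is false. The offending indices $j$ are exactly those where $T[v_j,v_{j+1}]=\prm B[v_j,v_{j+1}]$ is a single shared edge; there the walk $T[v_j,v_{j+1}]\cup\prm B[v_j,v_{j+1}]$ is degenerate and is not a brick perimeter, yet $(\rzeska{l_{v_j}},\rzeska{l_{v_{j+1}}})$ is still a consecutive pair of leaves of $\Tree$, so inequality~\eqref{eq:dp:eps} must be checked for it --- it is not ``removed'' by preprocessing away a degenerate brick. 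For such a pair the left-hand side of~\eqref{eq:dp:eps} equals $2\wei{v_jv_{j+1}}$, and this needs a separate argument: in the unweighted setting (which is the regime of Theorem~\ref{thm:nice-testing}) one gets $2\le 3\le(1-\nice)|\prm B|$ because the brick partition induced by $T$ is nonempty and every brick is bounded by a simple cycle of length at least $3$. For every non-degenerate consecutive pair $T[a,b]\cup\prm B[a,b]$ is a genuine simple cycle, and a direct topological argument (no subtree of $T\setminus T[a,b]$ can dangle strictly inside it, since its leaves lie on $\prm B$ and would be anchors strictly between $a$ and $b$) shows it encloses exactly one face, which yields~\eqref{eq:dp:eps} without any global face count. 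I would replace the preprocessing claim with these two observations; note that the paper's own one-line assertion also glosses over the degenerate pair, and in the weighted setting $2\wei{v_jv_{j+1}}$ can genuinely exceed $(1-\nice)\wei{\prm B}$, which is consistent with the weighted analogue Lemma~\ref{lem:dp:seven-leaves} promising only length \emph{at most} $\wei{T}$. Finally, a minor slip: ``$u=r(\Tree)=v_1$ precisely when $v_1\in\{v_j,v_{j+1}\}$'' is not an equivalence (the LCA of two consecutive leaves in different subtrees of the root is also the root), but you only use the forward direction so nothing breaks.
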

\begin{proof}
We construct $\Tree$ as follows: root $T$ at an arbitrary vertex $r \in V(T)$, for each $a \in V(T) \cap V(\prm B)$, add the edge $a\rzeska{a}$,
and for each internal vertex $p$ of $\Tree$, order its children in the counter-clockwise
order in which they appear on the plane (starting from the parent of $p$, or at arbitrary point for $r=p$).
As each leaf of $T$ lies on $V(\prm B)$, in this manner each leaf of $\Tree$ lies
in $\rzeska{\prm}B$. Therefore, if we take $\Emb$ to be the identity mapping,
$(\Tree, \Emb)$ is an embedded tree. By construction, $\wei{\Tree} = \wei{T}$ and $(\Tree,\Emb)$ is leaf-injective.
Moreover, for any two consecutive leaves $\rzeska{a}$ and $\rzeska{b}$ of $\Tree$,
if $w$ is the lowest common ancestor of $\rzeska{a}$ and $\rzeska{b}$,
then the value $\Tree[w,a] \cup \Tree[w,b] \cup \prm B[a,b]$ is the perimeter
of the face of $B[T \cup \prm B]$ that neighbours $\prm B[a,b]$. As $T$ is $\nice$-nice,
   we infer that $(\Tree,\Emb)$ is $\nice$-nice as well.
Finally, if $\rzeska{a}$ is the last leaf of $\Tree$ and $\rzeska{b}$ is the first leaf
of $\Tree$, then since $r$ has degree at least two in $\Tree$, $r$ is the lowest
common ancestor of $\rzeska{a}$ and $\rzeska{b}$ in $\Tree$,
and $\Tree[r,a] \cup \Tree[r,b] \cup \prmup B[a,b]$ is again the perimeter of the face 
of $B[T \cup \prm B]$ that neighbours $\prmup B[a,b]$. We infer that $(\Tree, \Emb)$
is fully $\nice$-nice and the lemma is proven.
\end{proof}

By Lemmata~\ref{lem:dp:tree-to-bricks} and~\ref{lem:dp:nice-to-tree}, it remains to find a fully $\nice$-nice embedded tree of small length.
Here the argumentation for the unweighted and the edge-weighted cases diverge. In both cases, we use a dynamic-programming algorithm.
However, in the unweighted case we are able to obtain the exact statement of Theorem~\ref{thm:nice-testing}; in the edge-weighted case, we need to perform some rounding
to fit into the $\Oh(|B| \log |B|)$ time frame, and therefore we may lose some `niceness' of the constructed tree.

\subsection{Finding a nice embedded tree in the unweighted setting}

For brevity we denote $n = |B|$ and $k = |\prm B|$.
\begin{lemma}\label{lem:dp:find-tree}
Assume $B$ is unweighted.
Given an integer $\ell$, in $\Oh(nk^4\ell^4)$ time one can 
find a fully $\nice$-nice leaf-injective embedded tree of length at most $\ell$
or correctly conclude that no such tree exists.
\end{lemma}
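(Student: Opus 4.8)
The plan is to design a Dreyfus–Wagner-style dynamic program over ``embedded trees'' that exploits the planarity constraint. I would first reduce, as in Erickson et al.~\cite{erickson}, to tracking only \emph{contiguous} pieces of the perimeter: a partial solution is an embedded subtree $(\Tree,\Emb)$ whose leaves map to a contiguous arc $\rzeska{\prm}B[l,r]$ of the extended perimeter. The key structural observation is that for a leaf-injective embedded tree, at any internal vertex the subtrees hang off in counter-clockwise order matching the order of their leaves on $\prm B$; hence a subproblem is fully described by a bounded amount of ``interface'' data. Concretely, I would let a DP state be a tuple consisting of: the ``current'' interface vertex $v \in V(B)$ (the root of the partial tree, which will be glued to the rest), the leftmost chosen perimeter vertex $l$ and the rightmost chosen perimeter vertex $r$ (these determine the arc $\prm B[l,r]$ the partial solution is responsible for), and three length counters: the total length $t$ of the partial tree, the length $\lambda_L$ of the leftmost root-to-leaf path $\Tree[v,l]$, and the length $\lambda_R$ of the rightmost root-to-leaf path $\Tree[v,r]$. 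Since $B$ is unweighted, all three counters are bounded by $\ell$, giving $\Oh(n k^2 \ell^3)$ states.

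The transitions are the standard three Dreyfus–Wagner moves, adapted to maintain the interface data. \textbf{(Grow)}: extend the path from $v$ along an edge $vv'$ of $B$, incrementing $t$ by one and incrementing either $\lambda_L$ (if we are on the leftmost branch — but growing just moves the root, so both $\lambda_L$ and $\lambda_R$ go up by one, since the new root is $v'$) by one; more precisely, adding edge $vv'$ as the new parent edge makes $v'$ the root, so $t,\lambda_L,\lambda_R$ all increase by one. \textbf{(Attach leaf)}: if $v \in V(\prm B)$, we may attach the leaf $\rzeska{v}$; this either initializes $l=r=v$, $\lambda_L=\lambda_R=0$ for a fresh single-edge tree, or — when $v$ coincides with the current $l$ or $r$ and we want to record it — is already handled; the subtle point is that a leaf is attached exactly when the root sits on $\prm B$, and we must check the $\nice$-niceness inequality~\eqref{eq:dp:eps} against the \emph{previous} consecutive leaf, which is why we track $\lambda_L,\lambda_R$. \textbf{(Merge)}: combine two partial solutions sharing the root vertex $v$, where the left one is responsible for arc $\prm B[l,m]$ and the right one for $\prm B[m',r]$ with $m,m'$ consecutive; here we verify $\wei{\prm B[m,m']} + (\text{left branch of first}) + (\text{right branch of second}) \le (1-\nice)k$ — this is precisely the niceness check at the ``seam'' between the two merged trees — and the merged state has counters $t_1+t_2$, $\lambda_L$ from the first, $\lambda_R$ from the second. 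At the very end, a full solution is read off from a state with $l,r$ chosen so that $\prm B[l,r]$ together with the two boundary branches closes up; we additionally impose the ``fully $\nice$-nice'' wrap-around inequality with $\prmup B[r,l]$ in place of an interval. Each transition is $\Oh(1)$ or $\Oh(\deg)$ amortized, and merges cost a factor $k\ell^3$ more (ranging over the split point $m$ and the three counters of one side), giving the claimed $\Oh(nk^4\ell^4)$; the bound on $l,r$ contributes the $k^2$ and they are what push the exponent up.

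The main obstacle, and the point that needs the most care, is \textbf{correctness of the niceness bookkeeping}: we must ensure that (a) every consecutive pair of leaves in the final tree has had its inequality~\eqref{eq:dp:eps} checked exactly once, with the correct ancestor paths, and (b) the counters $\lambda_L,\lambda_R$ genuinely equal $\wei{\Tree[v,l]}$ and $\wei{\Tree[v,r]}$ at all times — in particular after a merge, the new leftmost branch is the old leftmost branch of the left part (not some path through the merge vertex), and this is where leaf-injectivity and the ccw-order constraint are used to guarantee the two parts occupy disjoint perimeter arcs so no ``crossing'' branch can be shorter. I would prove by induction on the DP that a state $(v,l,r,t,\lambda_L,\lambda_R)$ is reachable iff there exists a leaf-injective embedded tree with root $v$, leaf-arc exactly $\rzeska{\prm}B[l,r]$, length $t$, boundary branch lengths $\lambda_L,\lambda_R$, satisfying the niceness inequality for every \emph{internal} consecutive leaf pair; then the wrap-around check at the root converts this to ``fully $\nice$-nice.'' A secondary subtlety is that the grow/merge moves may produce the \emph{same} embedded tree via different DP traces, which is harmless for a reachability (feasibility) DP since we only need existence, and we minimize $t$ implicitly by iterating $\ell$ or by storing the minimum achievable $t$ per $(v,l,r,\lambda_L,\lambda_R)$. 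Finally I would note the running time is dominated by merges and matches the statement, and that the output embedded tree is recovered by standard backpointer tracing.
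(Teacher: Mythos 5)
Your proposal matches the paper's proof in all essentials: both are planarity-aware Dreyfus--Wagner dynamic programs in the spirit of Erickson et al., with states indexed by an interface (root) vertex, the two perimeter vertices bounding the leaf arc, and the lengths of the two boundary branches, and both enforce the $\nice$-niceness inequality~\eqref{eq:dp:eps} exactly once per consecutive-leaf pair at the merge seam and once more via the wrap-around check at the end. The only substantive deviation from the paper is bookkeeping: the paper stores \emph{upper bounds} $k_a,k_b$ on the boundary-branch lengths and records the minimum total length per state, whereas you track exact $\lambda_L,\lambda_R$ and initially also $t$ as a state coordinate, which inflates the state count to $nk^2\ell^3$ and makes your claimed time bound not follow from your own accounting (your $nk^2\ell^3$ states with $k\ell^3$ per merge gives $nk^3\ell^6$, not $nk^4\ell^4$); moreover, the two seam vertices $m,m'$ need not be adjacent on $\prm B$, so the merge must range over $k^2$ split pairs, not $k$. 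Both issues disappear once you adopt the alternative you already mention --- minimize $t$ per $(v,l,r,\lambda_L,\lambda_R)$, giving $nk^2\ell^2$ states --- and iterate $m,m'$ independently, recovering $\Oh(nk^4\ell^4)$ exactly as in the paper.
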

\begin{proof}
For each $v \in V(B)$, $a, b \in V(\prm B)$, $0 \leq k_a,k_b \leq \ell$,
we define $\mathcal{F}[v,a,b,k_a,k_b]$ to be the set of all leaf-injective embedded trees
$(\Tree,\Emb)$ that:
\begin{enumerate}
\item have length at most $\ell$;
\item are $\nice$-nice;
\item satisfy $\Emb(r(\Tree)) = v$;
\item map the first leaf of $\Tree$, $\rzeska{l_a}$, to $\rzeska{a}$ under $\Emb$,
 and the last leaf of $\Tree$, $\rzeska{l_b}$, to $\rzeska{b}$ under $\Emb$;
\item satisfy $|\Tree[r(\Tree), l_a]| \leq k_a$ and $|\Tree[r(\Tree), l_b]| \leq k_b$, 
  where $l_a$, $l_b$ are parents of $\rzeska{l_a}$, $\rzeska{l_b}$ in $\Tree$, respectively.
\end{enumerate}
Let $M[v,a,b,k_a,k_b] = \min \{ \wei{\Tree} : (\Tree, \Emb) \in \mathcal{F}[v,a,b,k_a,k_b]\}$.

Assume that $B$ admits a fully $\nice$-nice leaf-injective embedded tree $(\Tree,\Emb)$ of length at most $\ell$.
Let $\rzeska{l_a}$, $\rzeska{l_b}$ be the first and the last leaf of $\Tree$,
let $l_a,l_b$ be the parents of $\rzeska{l_a}, \rzeska{l_b}$ in $\Tree$, respectively,
and let $a = \Emb(l_a)$, $b = \Emb(l_b)$.
Note that $(\Tree, \Emb) \in \mathcal{F}[r(\Tree), a, b, |\Tree[r(\Tree), l_a]|, |\Tree[r(\Tree),l_b]|]$
and $|\Tree[r(\Tree), l_a]| + |\Tree[r(\Tree), l_b]| + |\prmup B[b,a]| \leq (1-\nice)k$, as
$(\Tree, \Emb)$ is fully $\nice$-nice.
In the other direction, if $(\Tree, \Emb) \in \mathcal{F}[v,a,b,k_a,k_b]$
and $k_a+k_b+|\prmup B[b,a]| \leq (1-\nice)k$, then $(\Tree, \Emb)$ is fully $\nice$-nice.
Therefore, it suffices to compute, for each choice of the parameters $v,a,b,k_a,k_b$,
the value $M[v,a,b,k_a,k_b]$ and one representative element $T[v,a,b,k_a,k_b] \in \mathcal{F}[v,a,b,k_a,k_b]$
of length $M[v,a,b,k_a,k_b]$, if $\mathcal{F}[v,a,b,k_a,k_b] \neq \emptyset$.

Clearly, for $v=a=b$, $M[v,a,b,k_a,k_b] = 0$ and $T[v,a,b,k_a,k_b]$ can be defined as
a two-vertex tree with root $r$, mapped to $v=a=b$, and a single leaf mapped to $\rzeska{a} = \rzeska{b}$.
These are the only embedded trees of zero length.

Consider a $\nice$-nice leaf-injective embedded tree $(\Tree, \Emb)$ with $0 < \wei{\Tree} \leq \ell$.
Let $\rzeska{l_a}, \rzeska{l_b}$ be the first and the last leaf of $\Tree$,
let $l_a, l_b$ be the parents of $\rzeska{l_a}, \rzeska{l_b}$ in $\Tree$
  and let $a = \Emb(l_a)$, $b = \Emb(l_b)$, $v = \Emb(r(\Tree))$. 
Let $k_a,k_b$ be such that $|\Tree[r(\Tree), l_a]| \leq k_a$ and $|\Tree[r(\Tree), l_b]| \leq k_b$.
Consider two cases: either $r(\Tree)$ is of degree one in $\Tree$ or larger.

In the first case, let $p$ be the only child of $r(\Tree)$; note that $p$ is not a leaf
as $\wei{\Tree} > 0$. Let $\Tree_1 = \Tree \setminus r(\Tree)$, rooted at $p$,
and let $\Emb_1$ be the mapping $\Emb$ restricted to $\Tree_1$. Clearly, $(\Tree_1,\Emb_1)$
is a $\nice$-nice embedded tree that belongs to $\mathcal{F}[\Emb(p), a, b, k_a-1, k_b-1]$.

In the other direction, consider the cell $\mathcal{F}[v,a,b,k_a,k_b]$.
We note that for any $w \in N_B(v)$ and $(\Tree_1,\Emb_1) \in \mathcal{F}[w, a, b, k_a-1, k_b-1]$,
if we extend $\Tree_1$ with a new root vertex $r$ mapped to $v$, with one child $r(\Tree_1)$,
then the extended tree belongs to $\mathcal{F}[v,a,b,k_a,k_b]$.

In the second case, split $\Tree$ into two trees $\Tree_1$ and $\Tree_2$, rooted at $r(\Tree)$:
$\Tree_1$ contains the subtree of $\Tree$ rooted in the first child of $r(\Tree)$, together
with the edge connecting it to $r(\Tree)$, and $\Tree_2$ contains the remaining edges of $\Tree$
(i.e., all but the first children of $r(\Tree)$, together with the edges connecting them to $r(\Tree)$).
Define $\Emb_1$ and $\Emb_2$ as restrictions of $\Emb$ to $\Tree_1$ and $\Tree_2$, respectively.
Let $\rzeska{l_c}$ be the last leaf of $\Tree_1$ and $\rzeska{l_d}$ be the first
leaf of $\Tree_2$. Define $l_c,l_d,c,d$ analogously to $l_a,l_b,a,b$. Observe that $l_c\neq l_d$ since $(\Tree,\Emb)$ is leaf-injective, but it may be that $l_a=l_c$ or $l_d=l_b$ in case $a=c$ or $b=d$.
Note that $(\Tree_1,\Emb_1) \in \mathcal{F}[v,a,c,k_a,|\Tree[r(\Tree), l_c]|]$
and $(\Tree_2,\Emb_2) \in \mathcal{F}[v,d,b,|\Tree[r(\Tree), l_d]|, k_b]$.
Moreover, $|\Tree[r(\Tree), l_c]| + |\Tree[r(\Tree), l_d]| + |\prm B[c,d]| \leq (1-\nice)k$,
as $\Tree$ is $\nice$-nice and $r(\Tree)$ is the lowest common ancestor of $l_c$ and $l_d$ in $\Tree$.

In the other direction, assume that for some $c,d \in \prm B[a,b]$ such that $c$ lies
strictly closer to $a$ than $d$ on $\prm B[a,b]$ (i.e., $\prm B[a,c]\subsetneq \prm B[a,d]\subseteq \prm B[a,b]$), and for some $k_c, k_d \leq \ell$ such that
$k_c + k_d + |\prm B[c,d]| \leq (1-\nice)k$, we have embedded trees
$(\Tree_1, \Emb_1) \in \mathcal{F}[v,a,c,k_a,k_c]$ and $(\Tree_2,\Emb_2) \in \mathcal{F}[v,d,b,k_d,k_b]$
such that $\wei{\Tree_1} + \wei{\Tree_2} \leq \ell$.
Define $\Tree$ as $\Tree_1 \cup \Tree_2$ with identified roots, rooted at $r(\Tree) = r(\Tree_1) = r(\Tree_2)$,
and order of the children of $r(\Tree)$ by first placing the children in $\Tree_1$ and then the children in $\Tree_2$, in the corresponding orders.
Moreover, define $\Emb = \Emb_1 \cup \Emb_2$.
Then in the embedded tree $(\Tree,\Emb)$ the first leaf is $\rzeska{l_a}$ with $\Emb(\rzeska{l_a}) = \rzeska{a}$
and the last leaf is $\rzeska{l_b}$ with $\Emb(\rzeska{l_b}) = \rzeska{b}$. The assumption that $c$ is strictly closer to $a$ than $d$ implies that $(\Tree,\Emb)$ is leaf-injective.
Furthermore, $\wei{\Tree} = \wei{\Tree_1} + \wei{\Tree_2} \leq \ell$.
Finally, the requirement $k_c + k_d + |\prm B[c,d]| \leq (1-\nice)k$ implies that
$(\Tree, \Emb)$ is $\nice$-nice. Hence, $(\Tree,\Emb) \in \mathcal{F}[v,a,b,k_a,k_b]$.

From the previous discussion, we infer that $M[v,a,b,k_a,k_b] = 0$ if $v=a=b$ and otherwise
$M[v,a,b,k_a,k_b]$ equals the minimum over the following candidates:
\begin{itemize}
\item if $k_a,k_b > 0$, for each $w \in N_B(v)$, we take $1 + M[w,a,b,k_a-1,k_b-1]$ as a candidate value;
\item for each $c,d \in \prm B[a,b]$ such that $c$ lies strictly closer to $a$ than $d$ on $\prm B[a,b]$,
  and for each integers $0 \leq k_c,k_d \leq \ell$ such that $k_c+k_d+|\prm B[c,d]| \leq (1-\nice)k$,
  we take $M[v,a,c,k_a,k_c] + M[v,d,b,k_d,k_b]$ as a candidate value, provided that this value does not exceed $\ell$.
\end{itemize}
We note that, in the aforementioned recursive formula,
to compute $M[v,a,b,k_a,k_b]$ we take into account at most $|N_B(v)| + k^2(1+\ell)^2$ other candidates,
in each computation taking into account values $M[v',a',b',k_a',k_b']$ with
$|\prm B[a',b']|$ strictly smaller than $|\prm B[a,b]|$.
We infer that the values $M[v,a,b,k_a,k_b]$ for all valid choice of the parameters $v,a,b,k_a,k_b$
can be computed in $\Oh(nk^4\ell^4)$ time. If we additionally store for each cell $M[v,a,b,k_a,k_b]$
which candidate attained the minimum value, we can read an optimal embedded tree $T[v,a,b,k_a,k_b]$
in linear time with respect to its size. This concludes the proof of the lemma.
\end{proof}

We may now conclude the proof of Theorem \ref{thm:nice-testing}.
Using Lemma \ref{lem:dp:find-tree} we look for a fully $\nice$-nice
embedded tree of length at most $k$. If one is found, we apply Lemma \ref{lem:dp:tree-to-bricks}
to obtain the desired family of bricks.
If the algorithm of Lemma \ref{lem:dp:find-tree} does not find any
embedded tree, Lemma \ref{lem:dp:nice-to-tree} allows us to conclude
that no short $\nice$-nice tree exists in $B$.

\subsection{Finding a nice embedded tree in the edge-weighted setting}

We start with the following observation that extends Lemma~\ref{lem:dp:nice-to-tree}.
\begin{lemma}\label{lem:dp:seven-leaves}
Let $B$ be an edge-weighted brick and let $0 < \nice \leq \frac{1}{4}$ be a constant.
If there exists a short $\nice$-nice tree $T$ in $B$, then there exists
an embedded fully $\nice$-nice tree $(\Tree,\Emb)$ in $B$ of length at most $\wei{T}$ and with at most $7$ leaves.
\end{lemma}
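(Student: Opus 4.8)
The plan is to start from Lemma~\ref{lem:dp:nice-to-tree}, which already turns the short $\nice$-nice tree $T$ into a fully $\nice$-nice (leaf-injective) embedded tree of length $\wei{T}$, but with one leaf per anchor of $T$ — possibly many. The entire content of Lemma~\ref{lem:dp:seven-leaves} is thus to cut the number of anchors down to at most $7$. I would do this by selecting a subset $S$ of the anchors of $T$, taking $T_S$ to be the minimal subtree of $T$ connecting $S$, and building an embedded tree out of $T_S$ exactly as in the proof of Lemma~\ref{lem:dp:nice-to-tree}: root $T_S$ at a suitable vertex of degree at least two lying on the path between the first and last element of $S$, add the pendant zero-weight edges to $\rzeska{\prm}B$, and order children counter-clockwise. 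Since $T_S\subseteq T$, the length stays at most $\wei{T}$ and the number of leaves is $|S|$; the only nontrivial point is to choose $S$ with $|S|\le 7$ so that the resulting embedded tree is still fully $\nice$-nice, and (the routine part) that the counter-clockwise leaf order is respected, which is inherited verbatim from the planarity argument in Lemma~\ref{lem:dp:nice-to-tree}.

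The choice of $S$ is the combinatorial heart. Write the anchors of $T$ in counter-clockwise cyclic order $a_1,\dots,a_m$, put $\ell_i=\wei{\prm B[a_i,a_{i+1}]}$, $p_i=\wei{T[a_i,a_{i+1}]}$ and $g_i=\ell_i+p_i$. Because $T$ is brickable, $\prm B[a_i,a_{i+1}]\cup T[a_i,a_{i+1}]$ is precisely the perimeter of the brick of the induced brick partition that borders the arc $\prm B[a_i,a_{i+1}]$; hence $\sum_i g_i\le \wei{\prm B}+2\wei{T}\le 3\wei{\prm B}$, while trivially $\sum_i\ell_i=\wei{\prm B}$, and $\nice$-niceness of $T$ gives $g_i\le(1-\nice)\wei{\prm B}$ for every $i$. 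Now greedily scan the cyclic sequence $g_1,\dots,g_m$ and cut it into maximal consecutive blocks of sum at most $(1-\nice)\wei{\prm B}$ (a single $g_i$ always fits, by the last bound), and if afterwards the last and the first block still merge under this threshold, merge them. After this procedure, the sum of any two cyclically consecutive blocks strictly exceeds $(1-\nice)\wei{\prm B}$; summing this over the $t$ consecutive pairs around the cycle counts each block twice, so $t(1-\nice)\wei{\prm B}<2\sum_i g_i\le 6\wei{\prm B}$, giving $t<6/(1-\nice)\le 8$ since $\nice\le 1/4$, i.e. $t\le 7$. Take $S$ to be the set of the $t$ anchors at the block boundaries (note $t\ge 2$, since $\sum_i\ell_i=\wei{\prm B}>(1-\nice)\wei{\prm B}$ rules out a single block).

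It then remains to check that the embedded tree built from $T_S$ and $S$ satisfies the ``fully $\nice$-nice'' condition. For two cyclically consecutive elements $s_j=a_{i_0}$, $s_{j+1}=a_{i_1}$ of $S$, inequality~\eqref{eq:dp:eps} reduces, after our choice of root on the $s_1$–$s_t$ path, to $\wei{\prm B[s_j,s_{j+1}]}+\wei{T_S[s_j,s_{j+1}]}\le(1-\nice)\wei{\prm B}$; here $\wei{\prm B[s_j,s_{j+1}]}=\sum_{i_0\le i<i_1}\ell_i$, and by the triangle inequality in the tree metric of $T$ we get $\wei{T_S[s_j,s_{j+1}]}=\wei{T[a_{i_0},a_{i_1}]}\le\sum_{i_0\le i<i_1}p_i$, so the left-hand side is at most $\sum_{i_0\le i<i_1}g_i$, which is exactly the total weight of one block, hence at most $(1-\nice)\wei{\prm B}$; the wrap-around requirement for the last and first leaf (with $u=r(\Tree)$) is the same statement for the last block and is handled identically. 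The step I expect to be the main obstacle is not this computation but getting the constant ``$7$'' cleanly: namely arguing carefully that the greedy blocks are maximal, that the consecutive-pair double counting is valid for a cyclic decomposition (including the end-of-cycle merge), and tracking the exact dependence on $\nice\le 1/4$ so that $t<8$ rather than $t\le 8$; once that is pinned down, everything else is bookkeeping and an appeal to Lemma~\ref{lem:dp:nice-to-tree}.
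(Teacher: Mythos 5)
Your proof is correct, and it takes a genuinely different route from the paper's. The paper iteratively prunes: as long as at least $8$ leaves remain, it picks four pairwise non-consecutive leaves $\rzeska{l_1},\dots,\rzeska{l_4}$ splitting the remaining leaves into four nonempty groups, deletes one whole group, and argues by contradiction (summing the failure conditions gives $\wei{\prm B}+2\wei{\Tree}>4(1-\nice)\wei{\prm B}\geq 3\wei{\prm B}$, contradicting shortness) that at least one of the four deletions preserves full $\nice$-niceness; repeating this exhaustively yields $\leq 7$ leaves. You instead make a single-shot choice of which anchors to keep, via a maximal greedy block decomposition of the cyclic sequence of subbrick perimeters $g_i=\ell_i+p_i$, and derive $t\leq 7$ from the double-counting bound $t(1-\nice)\wei{\prm B}<2\sum_i g_i\leq 6\wei{\prm B}$; the full $\nice$-niceness of the resulting tree then follows from $g_i=\ell_i+p_i$ bounding the perimeter of each face of $\prm B\cup T_S$ via the tree metric triangle inequality (so that super-additivity of $\wei{T[\cdot,\cdot]}$ would have been the wrong direction, but sub-additivity, which is what you use, is correct). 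Both arguments rest on exactly the same potential $\wei{\prm B}+2\wei{T}\leq 3\wei{\prm B}$ and the same threshold $\nice\leq 1/4$ (giving $6/(1-\nice)\leq 8$, hence $t<8$), and both produce the same bound of $7$. The paper's iterative argument is a bit easier to verify locally (each step is a simple $4$-arc pigeonhole), while your one-shot greedy is more structural and avoids iteration but requires the bookkeeping you flag: checking that the end-of-cycle merge cannot cascade (it cannot, since if $B_s+B_1\leq(1-\nice)\wei{\prm B}$ then the merged block still forms a super-threshold pair with both its neighbours because $B_{s-1}+B_s$ and $B_1+B_2$ were already super-threshold), that $t\geq 2$ (from $\sum_i g_i\geq\wei{\prm B}>(1-\nice)\wei{\prm B}$), and that a root for $\Tree_S$ of degree $\geq 2$ lying on the $s_1$–$s_t$ path can always be picked — here one should speak of degree in $\Tree_S$ rather than $T_S$, since if $T_S$ is a single edge the only degree-$\geq 2$ vertices are the two endpoints, which acquire degree $2$ only after the zero-weight pendant edges are attached (exactly as in Lemma~\ref{lem:dp:nice-to-tree}).
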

\begin{proof}
Let $T$ be as in the lemma statement, and construct $(\Tree,\Emb)$ as in the proof
of Lemma~\ref{lem:dp:nice-to-tree}.
That is, we construct $\Tree$ as follows: we root $T$ at an arbitrary vertex $r \in V(T)$,
for each $a \in V(T) \cap V(\prm B)$, add the edge $a\rzeska{a}$,
and for each internal vertex $p$ of $\Tree$, order its children in the counter-clockwise
order in which they appear on the plane (starting from the parent of $p$, or at arbitrary point for $r=p$).
The mapping $\Emb$ is the identity mapping.
Clearly, $\wei{\Tree} = \wei{T}$. Our goal is to trim $\Tree$ so that it is still fully $\nice$-nice,
but has at most $7$ leaves.

Assume $\Tree$ has at least $8$ leaves, as otherwise we are done.
Pick any four pairwise distinct leaves $\rzeska{l_1},\rzeska{l_2},\rzeska{l_3},\rzeska{l_4}$ of $\Tree$ with the following properties:
they lie in $\Tree$ in this order, no two of them are two consecutive leaves of $\Tree$, and $\rzeska{l_4}$ is not the last leaf of $\Tree$.
As $\Tree$ has at least $8$ leaves, this is always possible (e.g., we may take the first, third, fifth and seventh leaf of $\Tree$).
Let $l_i$ be the unique neighbour of $\rzeska{l_i}$ in $\Tree$.
Moreover, let $\rzeska{a_i} = \Emb(\rzeska{l_i})$ and $a_i = \Emb(l_i)$; note that $a_i$ is the unique neighbour of $\rzeska{a_i}$ 
in the extended brick $\exB$. We use a cyclic ordering for the index $i$, that is $l_5=l_1$, $a_5=a_1$ etc.
Observe that all $a_i$ are pairwise distinct, as we have started from a short $\nice$-nice tree $T$ (in other words, $(\Tree, \Emb)$ is leaf-injective).

For $i=1,2,3,4$, by $L_i$ we denote the set of leaves of $\Tree$ that lie between $\rzeska{l_i}$ and $\rzeska{l_{i+1}}$ (exclusive), in the circular order
of the leaves of $\Tree$. By the assumption on the leaves $\rzeska{l_i}$, all sets $L_i$ are nonempty.
For $i=1,2,3,4$, let $\Tree_i$ be a subtree of $\Tree$ defined as follows: for each $\rzeska{l} \in L_i$, we remove from $\Tree$ the path from
$\rzeska{l}$ to the closest vertex of $\Tree[\rzeska{l_i},\rzeska{l_{i+1}}]$ (recall that $\rzeska{l_5}= \rzeska{l_1}$).
Define $\Emb_i = \Emb|_{\Tree_i}$.
As we preserve the path $\Tree[\rzeska{l_i},\rzeska{l_{i+1}}]$ in $\Tree_i$, no new leaf has been introduced into $\Tree_i$
and $(\Tree_i,\Emb_i)$ is an embedded tree in $B$.

We claim that for at least one index $i$, the embedded tree $(\Tree_i,\Emb_i)$ is fully $\nice$-nice.
Assume the contrary. Since $(\Tree,\Emb)$ is fully $\nice$-nice, we infer that for each $i=1,2,3,4$:
$$\wei{\Tree[l_i,l_{i+1}]} + \wei{\prm B[a_i,a_{i+1}]} > (1-\nice) \wei{\prm B}.$$
Summing up, we infer that:
$$\wei{\prm B} + \sum_{i=1}^4 \wei{\Tree[l_i,l_{i+1}]} > 4(1-\nice) \wei{\prm B} \geq 3\wei{\prm B},$$
where the last inequality follows from the assumption $\nice \leq \frac{1}{4}$. However, note that
$$\sum_{i=1}^4 \wei{\Tree[l_i,l_{i+1}]} \leq 2\wei{\Tree} \leq 2\wei{T} \leq 2\wei{\prm B},$$
since $T$ is short. We have reached a contradiction.

Consequently, we may replace $(\Tree,\Emb)$ with $(\Tree_i,\Emb_i)$ for some $i \in \{1,2,3,4\}$, keeping
the fully $\nice$-niceness and decreasing the number of leaves. If we proceed with this procedure exhaustively,
we finally arrive at an embedded tree that is fully $\nice$-nice and has at most $7$ leaves.
\end{proof}

A \emph{branching vertex} is a vertex of an embedded tree $(\Tree,\Emb)$ with at least two children.
By Lemma~\ref{lem:dp:seven-leaves}, in the case $\nice \leq \frac{1}{4}$ we may look for embedded trees with at most $7$ leaves and, consequently,
   at most $6$ branching vertices.
If we are satisfied with any polynomial running time of the algorithm that finds a fully $\nice$-nice embedded tree,
observe that it suffices to guess the images of all leaves and branching vertices of the tree in question, and compute a shortest path between any pair of them.
However, if we aim for a $\Oh(\nice^{-14} |B| \log |B|)$ running time, then we need to proceed more carefully.
We will essentially follow the dynamic-programming algorithm of the unweighted case (i.e., Lemma~\ref{lem:dp:find-tree}) but due
to the existence of arbitrary real weights, we cannot directly use $k_a$ and $k_b$, the lengths of the leftmost and rightmost paths in the constructed tree, as dimensions
in the dynamic programming table.
Instead, we need to round them. The idea is to round independently the length of each maximal path consisting of vertices of degree two of the embedded tree in question;
as there are at most $13$ such paths, we control the error introduced by the rounding.

\begin{lemma}\label{lem:dp:find-tree-wei}
In $\Oh(\nice^{-14} |B| \log |B|)$ time
one can either correctly conclude that no fully $\nice$-nice embedded tree with at most $7$ leaves and of length at most $\wei{\prm B}$ exists in $B$,
or find a fully $(\nice/2)$-nice embedded tree in $B$ of length at most $(1+\nice)\wei{\prm B}$.
\end{lemma}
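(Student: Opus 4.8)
The plan is to run a dynamic program in the spirit of the unweighted algorithm of Lemma~\ref{lem:dp:find-tree}, but to cope with arbitrary real edge weights by working with \emph{rounded} lengths, so that the number of states stays polynomial in $\nice^{-1}$. By Lemma~\ref{lem:dp:seven-leaves} it suffices to search for a fully $\nice$-nice embedded tree $(\Tree,\Emb)$ with at most $7$ leaves; such a tree has at most $6$ vertices with more than one child, so its edge set decomposes into at most $13$ maximal paths whose internal vertices all have degree two in $\Tree$ --- call these the \emph{segments}. Since in the definition of an embedded tree the homomorphism $\Emb$ is required to be injective only on the \emph{leaves}, I may assume without loss of generality that $\Emb$ maps each segment onto a shortest path in $B$ between the images of its two endpoints: this substitution does not increase $\wei{\Tree}$ and can only decrease the quantities $\wei{\Tree[u,l_a]}$ and $\wei{\Tree[u,l_b]}$, hence it preserves every instance of the niceness inequality~\eqref{eq:dp:eps}. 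Thus the object sought is determined by a skeleton topology (only $\Oh(1)$ of them), the images of the leaves in $\rzeska{\prm}B$, and the images of the branching vertices in $V(B)$, with the segments filled in by shortest paths computed once and for all.

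Next I would fix a granularity $\mu=\Theta(\nice\,\wei{\prm B})$, small enough that the cumulative error over the at most $13$ segments (and the boundedly many boundary positions entering any single constraint) stays below $\tfrac{\nice}{2}\wei{\prm B}$, and replace every exact length $\ell$ in the computation by $\lceil\ell/\mu\rceil\mu$. Because a relevant tree has total length at most $(1+\nice)\wei{\prm B}$, every segment length and every boundary interval lies in an interval of length $\Oh(\wei{\prm B})$, so each has only $\Oh(\nice^{-1})$ possible rounded values. The dynamic program then mirrors the recursion of Lemma~\ref{lem:dp:find-tree}: a state records the image of the current root in $V(B)$, the (rounded positions of the) first and last leaf of the subtree built so far, and the \emph{rounded} lengths of its leftmost and rightmost root-to-leaf paths; the niceness test~\eqref{eq:dp:eps} is evaluated on these rounded values, and a global cap of $(1+\nice)\wei{\prm B}$ on the accumulated rounded length plays the role of the bound $\ell$ in the unweighted program. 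The one structural change is that the ``extend the root by one edge'' step is replaced by ``jump along a whole segment'', answered by one of the precomputed shortest paths. To compute all the needed shortest paths within the time budget I would use the multiple-source shortest-path machinery for planar graphs (Klein~\cite{klein:mssp}) together with linear-time planar single-source shortest paths~\cite{planar-sp}; this is the source of the $\log|B|$ factor, while the $\Oh(1)$ skeleton topologies and the boundedly many $\Oh(\nice^{-1})$-sized rounded coordinates in the state yield, after a careful count, the stated $\Oh(\nice^{-14}|B|\log|B|)$ bound.

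For correctness, consider the two directions. If the program returns a tree, unrolling it with the actual shortest paths for the chosen segments gives a genuine embedded tree whose true leftmost and rightmost path lengths are at most the rounded values used in the niceness checks, so it is fully $(\nice/2)$-nice; and since every segment length was rounded \emph{up}, its true total length is at most the accumulated rounded length, which the global cap keeps at most $(1+\nice)\wei{\prm B}$. Conversely, if some fully $\nice$-nice embedded tree with at most $7$ leaves and $\wei{\Tree}\le\wei{\prm B}$ exists, I would take it, replace its segments by shortest paths, and round all relevant lengths up to multiples of $\mu$; each rounding overshoots by less than $\mu$, so every left-hand side of~\eqref{eq:dp:eps} grows by less than $\tfrac{\nice}{2}\wei{\prm B}$ (making the rounded object fully $(\nice/2)$-nice) and the total length grows by less than $\tfrac{\nice}{2}\wei{\prm B}\le\nice\wei{\prm B}$, so the rounded witness lies in the search space of the program and the program returns a tree. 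Finally, when no fully $\nice$-nice tree with at most $7$ leaves and length $\le\wei{\prm B}$ exists, the program can only return trees that are themselves at least as nice (rounding up lengths makes the constraints only harder), so it correctly reports failure in that case.

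The main obstacle I expect is the tension between the two demands that the running time be near-linear in $|B|$ and that the rounding error be small. Near-linearity rules out carrying exact lengths or exact boundary positions as table dimensions --- forcing the rounding of \emph{both} segment lengths and leaf positions on $\prm B$, and forcing the segment jumps to be answered by a precomputed structure rather than by a per-state search --- while the accumulated error from all these roundings must still fit inside the $\tfrac{\nice}{2}\wei{\prm B}$ that we are allowed to lose on niceness and the $\nice\wei{\prm B}$ on length. Pinning down the constant in $\mu$ and the exact list of $\Oh(\nice^{-1})$-sized coordinates the state must carry (so that the exponent comes out exactly $14$) is the delicate bookkeeping; the remainder is a faithful, if lengthy, adaptation of the unweighted dynamic program, itself modelled on the algorithm of Erickson et al.~\cite{erickson}.
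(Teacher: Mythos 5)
Your plan follows the paper's proof quite closely: invoke Lemma~\ref{lem:dp:seven-leaves} to reduce to at most $7$ leaves, cope with real weights by rounding to a step $\Theta(\nice\wei{\prm B})$ so that the number of length-classes is $\Oh(\nice^{-1})$, and run a DP indexed by the image of the current root in $V(B)$ together with a constant number of $\Oh(\nice^{-1})$-sized rounded coordinates; the one-child transition is a ``segment jump'' realized by a shortest path, exactly as in the paper. The WLOG that segments may be taken to be shortest paths is indeed the paper's (implicit) mechanism for the one-child step.

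There is, however, one genuine gap in the way you handle leaf positions. You propose to record the \emph{rounded} $\prm B$-coordinate of the first and last leaf of the partial tree as DP dimensions. But the definition of an embedded tree forces the leaf images to appear on $\rzeska{\prm}B$ in the same counter-clockwise order as the leaves of $\Tree$, and the niceness inequality~\eqref{eq:dp:eps} uses exact interval lengths $\wei{\prm B[a,b]}$ between \emph{consecutive} leaves. If at a join the last leaf $c$ of the left subtree and the first leaf $d$ of the right subtree fall into the same rounded bucket, the DP cannot distinguish whether $c$ precedes $d$ or the other way around; it could glue together two subtrees into an ``embedded tree'' whose leaf order is broken, and this would poison the downstream brick-covering construction of Lemma~\ref{lem:dp:tree-to-bricks}, which needs a genuine embedded tree. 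The paper avoids this entirely: it fixes a set of $\Oh(\nice^{-1})$ pegs on $\prm B$, \emph{moves} every leaf clockwise to the nearest peg by literally appending a short boundary path $\prm B[p(a),a]$ to the tree (Claim~\ref{cl:dp:wei:1}), and branches over the $\Oh(\nice^{-7})$ peg assignments \emph{before} the DP; the DP then carries only the indices $i_a,i_b$ into the fixed peg sequence, so the leaf order and the boundary intervals are exact and need no rounding. A smaller mismatch: Klein's multiple-source shortest-path structure only supports sources on the outer face, so it does not directly answer the ``extend from an arbitrary interior root along a whole segment'' query; the paper instead iterates over the rounded segment length $k$ and runs a multi-source Dijkstra from the set of alive roots at each layer (this Dijkstra is the source of the $\log|B|$ factor). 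Both issues are repairable along the paper's lines, but as written your scheme does not quite produce a valid embedded tree on the output side.
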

\begin{proof}
Greedily, we find a set $\porset \subseteq V(B)$ of at most $16/\nice$ \emph{pegs}, such that for any $v \in V(\prm B)$, if we traverse $\prm B$ from $v$ in clockwise direction, then
we encounter a peg at distance at most $\nice \wei{\prm B}/8$ (possibly, the peg is on $v$). Observe the following:
\begin{claim}\label{cl:dp:wei:1}
If there exists in $B$ a fully $\nice$-nice embedded tree with at most $7$ leaves and of length at most $\wei{\prm B}$,
then there exists a fully $(3\nice/4)$-nice embedded tree with at most $7$ leaves and of length at most $(1+7\nice/8)\wei{\prm B}$,
whose leaves are mapped to vertices of $\rzeska{\prm} B$ adjacent to pegs.
\end{claim}
\begin{proof}
Let $(\Tree,\Emb)$ be an embedded tree as in the statement. For each leaf $\rzeska{l_a}$ of $\Tree$, proceed as follows.
Let $l_a$ be the unique neighbour of $\rzeska{l_a}$ in $\Tree$ and let $\Emb(\rzeska{l_a}) = \rzeska{a}$ and $\Emb(l_a) = a$.
Traverse $\prm B$ from $a$ in clockwise direction and let $p(a)$ be the first peg encountered (possibly, $p(a)=a$).
Replace the edge $\rzeska{l_a}l_a$ in $\Tree$ with a copy of the path $\prm B[p(a),a]$ and the edge $\rzeska{p(a)}p(a)$, embedded
by $\Emb$ into $\prm B[p(a),a] \cup \{\rzeska{p(a)}p(a)\}$.
Note that the constructed tree is an embedded tree.
As $\wei{\prm B[p(a),a]} \leq \nice \wei{\prm B}/8$, the constructed tree is fully $(\nice-\nice/4)$-nice and we have
enlarged the length of $\Tree$ by at most $7\nice\wei{\prm B}/8$.
\cqed\end{proof}

Hence, we restrict ourselves to embedded trees whose leaves are mapped to the neighbours of pegs.
We branch into $\Oh(|\porset|^7) = \Oh(\nice^{-7})$ cases, guessing the number of leaves and their images in the tree in question.
That is, we are now given an integer $r \leq 7$ and a sequence $a_1,a_2,\ldots,a_r$ of pegs that appear on $\prm B$ in this counter-clockwise order (possibly $a_i = a_{i+1}$ for some $i$),
and we look for a fully $(3\nice/4)$-nice embedded tree of length at most $(1+7\nice/8)\wei{\prm B}$
with $r$ leaves that maps consecutive leaves to vertices $\rzeska{a_1},\rzeska{a_2},\ldots,\rzeska{a_r}$.

Denote $\krok = \frac{\nice}{104} \wei{\prm B}$.
As discussed earlier, to achieve the promised running time, we need to round the distances in the dynamic programming algorithm.
We will use $\krok$ as one unit of distance for rounding. For a real $x$, by $\rnd(x)$ we denote the smallest integer $k$ for which $k\krok \geq x$, that is,
$\rnd(x) = \lceil x/\krok \rceil$.
For an embedded tree $(\Tree,\Emb)$ with $\rho$ leaves, by $\impV{\Tree}$ we denote the set consisting of the root, all branching vertices, and all neighbours of leaves in the tree $\Tree$.
Observe that $|\impV{\Tree}| \leq 2\rho$.
Let $\Tree' \subseteq  \Tree$ be any subtree of $\Tree$. The set $\impV{\Tree}$ partitions the edge set of $\Tree'$ into a family of paths, with at most $|\impV{\Tree}|-1 \leq 2\rho-1$ paths of positive length; let $\mathcal{P}(\Tree')$ be the family of all these paths.
The \emph{rounded length of $\Tree'$}, denoted $\rnd(\Tree')$, equals $\sum_{P \in \mathcal{P}(\Tree')} \rnd(\wei{P})$.
Observe that
\begin{equation}\label{eq:dp:wei:rnd1}
\wei{\Tree'} \leq \krok \rnd(\Tree') \leq \wei{\Tree'} + (2\rho-1)\krok.
\end{equation}
We remark that this bound on $\rnd(\Tree')$ applies in particular to a $\Tree'$ that is a path between some branching vertex of $\Tree$ and a leaf of $\Tree$.

We now adjust the definition of niceness to the rounded distances.
An embedded tree $(\Tree, \Emb)$ is {\em{$\rnd$-$\nice'$-nice}} if, for any two
consecutive leaves $\rzeska{l_a},\rzeska{l_b}$ in $\Tree$ the following holds.
Let $\Emb(\rzeska{l_a}) = \rzeska{a}$ and $\Emb(\rzeska{l_b}) = \rzeska{b}$
and let $l_a,l_b$ be the parents of $\rzeska{l_a}, \rzeska{l_b}$ in $\Tree$,
respectively; note that $\Emb(l_a) = a$ and $\Emb(l_b)=b$, possibly $a = b$.
Let $w$ be the
lowest common ancestor of $\rzeska{l_a}$ and $\rzeska{l_b}$ in $\Tree$. Then the requirement for $\rnd$-$\nice'$-niceness is that
\begin{equation}
\wei{\prm B[a,b]} + \krok \rnd(\Tree[w,l_a]) + \krok \rnd(\Tree[w,l_b]) \leq (1-\nice')\wei{\prm B}.\label{eq:dp:eps-rnd}
\end{equation}
An embedded tree is {\em{fully $\rnd$-$\nice'$-nice}} if additionally~\eqref{eq:dp:eps-rnd} holds
for $\rzeska{l_a}$ being the last leaf of $\Tree$,
$\rzeska{l_b}$ being the first leaf of $\Tree$, $w = r(\Tree)$ and
$\prm B[a,b]$ replaced by $\prmup B[a,b]$.
Observe the following.

\begin{claim}\label{cl:dp:wei:rnd}
If an embedded tree is (fully) $\rnd$-$\nice'$-nice, then it is also (fully) $\nice'$-nice.
If an embedded tree with at most $7$ leaves is (fully) $\nice'$-nice and $\nice' > \nice/4$, then it is also (fully) $\rnd$-$(\nice' - \nice/4)$-nice.
\end{claim}
\begin{proof}
The claim follows by applying inequality~\eqref{eq:dp:wei:rnd1} to $\rnd(\Tree[w,l_a])$ and $\rnd(\Tree[w,l_b])$ in condition~\eqref{eq:dp:eps-rnd}.
\cqed\end{proof}
By Claims~\ref{cl:dp:wei:1} and~\ref{cl:dp:wei:rnd}, we may restrict ourselves to searching for a fully $\rnd$-$(\nice/2)$-nice embedded tree: in each of these claims we lose only $\nice/4$ on the niceness of the tree.

We are now ready to describe the main table for the dynamic-programming algorithm.
Define $L$ to be the largest integer such that $\krok L\leq (1+\nice)\wei{\prm B}$; observe that $L=\Oh(\nice^{-1})$. 
For each $v \in V(B)$, indices $1 \leq i_a \leq i_b \leq r$,
and integers $0 \leq k_a,k_b,\ell \leq L$,
we define the value $F[v,i_a,i_b,\ell,k_a,k_b]$ to be any embedded tree
$(\Tree,\Emb)$ that satisfies the following:
\begin{enumerate}
\item $(\Tree,\Emb)$ is $\rnd$-$(\nice/2)$-nice;
\item $\Emb(r(\Tree)) = v$;
\item $\Tree$ has $i_b-i_a+1$ leaves, mapped by $\Emb$ onto $\rzeska{a_{i_a}},\rzeska{a_{i_a+1}},\ldots,\rzeska{a_{i_b}}$ in this order;
\item $\Tree$ has rounded length at most $\ell$;
\item if $\rzeska{l_a}$ is the first leaf of $\Tree$ and $\rzeska{l_b}$ is the last leaf, then $\rnd(\Tree[\rzeska{l_a},r(\Tree)]) \leq k_a$
and $\rnd(\Tree[\rzeska{l_b},r(\Tree)]) \leq k_b$.
\end{enumerate}
We require that $F[v,i_a,i_b,\ell,k_a,k_b] = \bot$ if no such embedded tree exists.

The next two claims verify that computing all values $F[v,i_a,i_b,\ell,k_a,k_b]$ is sufficient for our needs.
\begin{claim}\label{cl:dp:wei:extract}
Assume that $F[v,1,r,L,k_a,k_b] = (\Tree,\Emb) \neq \bot$ for some $v, k_a,k_b$.
Moreover, assume that
\begin{equation}\label{eq:dp:wei:extract}
\wei{\prmup B[a_r,a_1]} + \krok k_a + \krok k_b \leq (1-\nice/2)\wei{\prm B}.
\end{equation}
Then $(\Tree,\Emb)$ is fully $(\nice/2)$-nice and has length at most $(1+\nice)\wei{\prm B}$.
\end{claim}
\begin{proof}
First, observe that $(\Tree,\Emb)$ is $\rnd$-$(\nice/2)$-nice by the properties of the cell $F[v,1,r,L,k_a,k_b]$.
Moreover, we have that the first leaf of $\Tree$ is mapped onto $\rzeska{a_1}$ and the last leaf is mapped onto $\rzeska{a_r}$.
Hence, inequality~\eqref{eq:dp:wei:extract} implies that $(\Tree,\Emb)$ is fully $\rnd$-$(\nice/2)$-nice.
By Claim~\ref{cl:dp:wei:rnd}, $(\Tree,\Emb)$ is fully $(\nice/2)$-nice.
Finally, note that since $\Tree$ has rounded length at most $L$, by~\eqref{eq:dp:wei:rnd1} the length of $\Tree$ is bounded by $L\krok\leq (1+\nice)\wei{\prm B}$.
\cqed\end{proof}
\begin{claim}\label{cl:dp:wei:extract2}
Assume that there exists in $B$ a fully $(3\nice/4)$-nice embedded tree $(\Tree,\Emb)$ of length
at most $(1+7\nice/8)\wei{\prm B}$, such that the leaves of $\Tree$ are mapped onto $a_1,a_2,\ldots,a_r$
in this order. Then $F[v,1,r,L,k_a,k_b] \neq \bot$ for some $v$ and $k_a,k_b$ satisfying~\eqref{eq:dp:wei:extract}.
\end{claim}
\begin{proof}
First, observe that by~\eqref{eq:dp:wei:rnd1} we have
$$\krok\rnd(\Tree) \leq \wei{\Tree}+13\krok\leq (1+7\nice/8)\wei{\prm B}+13\krok=(1+7\nice/8)\wei{\prm B}+\frac{13\nice}{104}\wei{\prm B}=(1+\nice)\wei{\prm B}.$$
By the definition of $L$, this means that $\rnd(\Tree)\leq L$.
Let $\rzeska{l_a}$ and $\rzeska{l_b}$ be the first and the last leaf of $\Tree$, respectively.
Let $k_a = \rnd(\Tree[\rzeska{l_a},r(\Tree)])$ and $k_b = \rnd(\Tree[\rzeska{l_b},r(\Tree)])$.
Note that $k_a,k_b \leq \rnd(\Tree) \leq L$. Hence, 
$(\Tree,\Emb)$ is a valid candidate for $F[v,1,r,L,k_a,k_b]$ where $v = \Emb(r(\Tree))$.
Moreover, by Claim~\ref{cl:dp:wei:rnd}, $(\Tree,\Emb)$ is fully $\rnd$-$(\nice/2)$-nice and hence
inequality~\eqref{eq:dp:wei:extract} is satisfied for $k_a$ and $k_b$.
\cqed\end{proof}

We now describe how to compute the values $F[v,i_a,i_b,\ell,k_a,k_b]$.
Initially, we set $F[a_i,i,i,0,k_a,k_b]$ to be a tree consisting of the edge $\rzeska{a_i}a_i$ with the identity mapping,
for each $1 \leq i \leq r$ and $0 \leq k_a,k_b \leq L$. Moreover, we set $F[v,i,i,0,k_a,k_b] = \bot$ for any $v \neq a_i$ and $0 \leq k_a,k_b \leq L$.
It is straightforward to verify that these are correct values of the entries $F[v,i_a,i_b,\ell,k_a,k_b]$ for $i_a=i_b$ and $\ell=0$.

Then, we compute the values $F[v,i_a,i_b,\ell,k_a,k_b]$ in order of increasing values $(i_b-i_a)$ and $\ell$.
That is, for fixed $i_a$, $i_b$, $\ell$, $k_a$, $k_b$, we want to compute the entries $F[v,i_a,i_b,\ell,k_a,k_b]$ for all $v \in V(B)$
in $\Oh(\nice^{-4} |B| \log |B|)$ time, assuming that all entries $F[v',i_a',i_b',\ell',k_a',k_b']$ were already computed
whenever $i_b'-i_a' \leq i_b -i_a$, $\ell' \leq \ell$, and at least one of this inequality is strict.

Consider now a cell $F[v,i_a,i_b,\ell,k_a,k_b]$ for $(i_b-i_a) + \ell > 0$.
If $F[v,i_a,i_b,\ell',k_a',k_b'] \neq \bot$ for some $\ell' \leq \ell$, $k_a' \leq k_a$, $k_b' \leq k_b$
and $(\ell,k_a,k_b) \neq (\ell',k_a',k_b')$, then we may copy the value of $F[v,i_a,i_b,\ell',k_a',k_b']$ and conclude.
Hence, assume otherwise.

Consider an embedded tree $(\Tree,\Emb)$ that satisfies all requirements for the cell $F[v,i_a,i_b,\ell,k_a,k_b]$.
There are two cases, depending on the degree of $r(\Tree)$.

If $r(\Tree)$ has at least two children in $\Tree$, let $\Tree_1$ be the subtree of $\Tree$ rooted at the first child of $\Tree$ (together
with the edge towards the root $r(\Tree)$) and let $\Tree_2 = \Tree \setminus \Tree_1$. Denote $\Emb_j = \Emb|_{\Tree_j}$ for $j=1,2$.
Let $i$ be such that $\Tree_1$ has $i-i_a$ leaves, that is, the last leaf of $\Tree_1$, denoted $\rzeska{l_c}$, is mapped onto $\rzeska{a_{i-1}}$ and the first
leaf of $\Tree_2$, denoted $\rzeska{l_d}$, is mapped onto $\rzeska{a_i}$. Observe that $i_a < i \leq i_c$.
Denote $\ell_1 = \rnd(\Tree_1)$, $\ell_2 = \rnd(\Tree_2)$, $k_c = \rnd(\Tree_1[r(\Tree),\rzeska{l_c}])$, and $k_d = \rnd(\Tree_2[r(\Tree),\rzeska{l_d})$.
Observe that $(\Tree_1,\Emb_1)$ is a feasible entry for $F[v,i_a,i-1,\ell_1,k_a,k_c]$ and $(\Tree_2,\Emb_2)$ is a feasible entry for $F[v,i,i_b,\ell_2,k_d,k_b]$.
Moreover, $\ell_1,\ell_2 \leq \ell$, $\ell_1 + \ell_2 = \ell$ and, since $(\Tree,\Emb)$ is $\rnd$-$(\nice/2)$-nice we have that
\begin{equation}\label{eq:dp:wei:join}
\wei{\prm B[a_{i-1},a_i]} + \krok k_c + \krok k_d \leq (1-\nice/2) \wei{\prm B}.
\end{equation}

In the other direction, assume that for some choice of $0 \leq \ell_1,\ell_2 \leq \ell$ with $\ell_1 + \ell_2 = \ell$, $i_a < i \leq i_b$
and $0 \leq k_c,k_d \leq L$ satisfying~\eqref{eq:dp:wei:join} we have $F[v,i_a,i-1,\ell_1,k_a,k_c] = (\Tree_1,\Emb_1) \neq \bot$
and $F[v,i,i_b,\ell_2,k_d,k_b] = (\Tree_2,\Emb_2)\neq \bot$. Define
$\Tree$ to be $\Tree_1 \cup \Tree_2$ with identified roots of $\Tree_1$ and $\Tree_2$, and $\Emb = \Emb_1 \cup \Emb_2$.
It is straightforward to verify that $(\Tree,\Emb)$ is a feasible entry for $F[v,i_a,i_b,\ell,k_a,k_b]$. Here observe that~\eqref{eq:dp:wei:join} ensures
that condition~\eqref{eq:dp:eps-rnd} is satisfied for leaves mapped to $\rzeska{a_{i-1}}$ and $\rzeska{a_i}$.
Moreover, in the dynamic programming the values $F[v,i_a,i-1,\ell_1,k_a,k_c]$ and $F[v,i,i_b,\ell_2,k_d,k_b]$ are already computed
when we consider the cell $F[v,i_a,i_b,\ell,k_a,k_b]$, since $i-1-i_a < i_b - i_a$, $i_b-i < i_b-i_a$ and $\ell_1,\ell_2 \leq \ell$.
Hence, we look for a feasible candidate for $F[v,i_a,i_b,\ell,k_a,k_b]$ among all values $\ell_1,\ell_2,i,k_c,k_d$ as above
and merge $(\Tree_1,\Emb_1)$ with $(\Tree_2,\Emb_2)$ whenever possible.
By the argumentation so far, whenever there exists a feasible candidate for $F[v,i_a,i_b,\ell,k_a,k_b]$ with root of degree at least two,
we find at least one such candidate.

In the remaining case, $r(\Tree)$ has exactly one child. Observe that $\Tree$
has more than one edge, as otherwise $i_a=i_b$, $v = a_{i_a}$ and $(\Tree,\Emb)$ is a feasible candidate for $F[v,i_a,i_b,0,0,0]$,
and we would have found $(\Tree,\Emb)$ in the first step.
Hence, $\impV{\Tree}$ contains at least two vertices. Let $x$ be the vertex of $\impV{\Tree} \setminus r(\Tree)$ that is closest to $r(\Tree)$.
Denote $u = \Emb(x)$ and $k = \rnd(\wei{\Tree[r(\Tree),x]})$; note that $k > 0$.
Define $\Tree'$ to be the tree $\Tree \setminus \Tree[r(\Tree),x]$, rooted at $x$, and $\Emb' = \Emb|_{\Tree'}$.
Observe that $(\Tree',\Emb')$ is an embedded tree and it is a feasible candidate for $F[u,i_a,i_b,\ell-k,k_a-k,k_b-k]$.

In the other direction, assume that $F[u,i_a,i_b,\ell-k,k_a-k,k_b-k] = (\Tree',\Emb')$ for some $u \in V(B)$, where $k \geq \rnd(\dist_B(u,v))$.
To obtain an embedded tree $(\Tree,\Emb)$, extend $(\Tree',\Emb')$ with a copy of a shortest path between $u$ and $v$ in $B$, mapped by $\Emb$ to its original,
connecting $r(\Tree')$ with a new root $r(\Tree)$ (mapped by $\Emb$ to $v$). It is straightforward to verify that $(\Tree,\Emb)$
is a feasible candidate for $F[v,i_a,i_b,\ell,k_a,k_b]$. We remark here that the rounded length of $\Tree$ may be strictly smaller
than $\rnd(\Tree') + \rnd(\dist_B(u,v))$ in the case when $r(\Tree')$ has degree one.

Hence, to verify whether there exists a feasible candidate for $F[v,i_a,i_b,\ell,k_a,k_b]$, we need to inspect all entries 
$F[u,i_a,i_b,\ell-k,k_a-k,k_b-k]$ where $u \in V(B) \setminus \{v\}$ and $k \geq \rnd(\dist_B(u,v))$. However, a naive implementation would take time quadratic in $|B|$.
We now show how to check all pairs $(v,u)$ using at most $L$ runs of Dijkstra's shortest-path algorithm in $B$,
which yields a $\Oh(L |B| \log |B|)$-time algorithm.
Iterate through all integers $k$ such that $1 \leq k \leq \min(\ell,k_a,k_b) \leq L$. Define $U$ to be the set of these vertices $u$ for which $F[u,i_a,i_b,\ell-k,k_a-k,k_b-k] \neq \bot$.
By a single run of Dijkstra's algorithm in $B$ starting from $U$, we may compute $\dist_B(v,U)$ for every $v \in V(B)$. Moreover, for each $v \in V(B)$ we can compute the closest vertex $u(v) \in U$
and a shortest path between $v$ and $u(v)$.
Then we inspect all $v \in V(B)$ and whenever $\rnd(\dist_B(v,U)) \leq k$, we may use the entry $F[u(v),i_a,i_b,\ell-k,k_a-k,k_b-k]$ to find a feasible candidate for $F[v,i_a,i_b,\ell,k_a,k_b]$.

We remark here that we do not need to explicitly keep the embedded trees as values of $F[v,i_a,i_b,\ell,k_a,k_b]$. It suffices to keep only a boolean that signals whether a feasible
candidate has been found and, if this is the case, how it was obtained. Then, the actual tree for a fixed cell $F[v,i_a,i_b,\ell,k_a,k_b]$ can be computed in $\Oh( |B| )$ time:
we need to reproduce at most $13$ shortest paths in the tree, each of which can be computed in linear time~\cite{planar-sp}.

We now analyze the running time. There is an $\Oh(\nice^{-7})$ overhead from guessing $r$ and the sequence $a_1,a_2,\ldots,a_r$.
In the dynamic-programming algorithm, in each step we need to keep track of at most $7$ integer variables ranging from $0$ to $L$ (namely, $\ell,k_a,k_b,\ell_1,\ell_2,k_c,k_d$).
Recall that $r \leq 7$. Hence, we obtain a running time of $\Oh(\nice^{-14} |B| \log |B|)$.
\end{proof}

We may now conclude the proof of Theorem \ref{thm:nice-testing-wei}.
By Lemma~\ref{lem:dp:seven-leaves}, if a short $\nice$-nice tree exists in $B$, then there exists a fully $\nice$-nice embedded tree with at most $7$ leaves and not larger length.
Using Lemma \ref{lem:dp:find-tree-wei} we look for such a tree;
if it indeed exists in $B$, we obtain a fully $(\nice/2)$-nice embedded tree of length at most $(1+\nice)\wei{\prm B}$.
In this case, we apply Lemma \ref{lem:dp:tree-to-bricks} to obtain the desired family of bricks.
If the algorithm of Lemma \ref{lem:dp:find-tree-wei} does not find any
embedded tree, Lemma \ref{lem:dp:seven-leaves} allows us to conclude
that no short $\nice$-nice tree exists in $B$.

%!TEX root = pst-kernel.tex

\newcommand{\north}{\mathbf{N}}
\newcommand{\south}{\mathbf{S}}
\newcommand{\west}{\mathbf{W}}
\newcommand{\east}{\mathbf{E}}
\newcommand{\Bleft}{\mathfrak{l}}
\newcommand{\Bright}{\mathfrak{r}}

\section{Weighted variant}\label{sec:weights}

We now focus on the weighted variant, and prove Theorem~\ref{thm:weighted}.
%Recall that for a graph $G$ and a family of terminal pairs $\mathcal{\terms} \subseteq V(G) \times V(G)$, a \emph{Steiner forest} connecting $\mathcal{\terms}$ in $G$ is a subgraph $F \subseteq G$ such that for each pair $(s,t) \in \mathcal{\terms}$ both $s$ and $t$ lie in the same connected component of $F$.
As described in the outline of the proof of Theorem~\ref{thm:weighted} (see Section~\ref{ss-over:weighted}),
we start with a base case, when $\mc{\terms}$ consists of a single pair and only a multiplicative error in the weight of the forest $F_{H}$ is allowed.
More formally, the following follows directly from
the spanner construction of Klein~\cite[Theorem 7.1]{klein:stoc06}.

\begin{theorem}\label{thm:wei:paths}
Let $\eps > 0$ be a fixed accuracy parameter,
and let $B$ be an edge-weighted brick.
Then one can find in $\Oh(\eps^{-1} |B| \log |B|)$ time a graph
$H \subseteq B$ such that
\begin{enumerate}
\item $\prm B \subseteq H$,
  \item $\wei{H} = \Oh(\eps^{-4} \wei{\prm B})$, and
  \item for any pair of vertices $s,t \in V(\prm B)$
there exists a path connecting $s$ and $t$ in $H$
of weight at most $(1+\eps)\dist_B(s,t)$.
\end{enumerate}
\end{theorem}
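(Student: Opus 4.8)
The goal is to preserve, for every pair of perimeter vertices $s,t$, a $(1+\eps)$-approximate shortest path inside $B$, while keeping the total weight of $H$ only $\Oh(\eps^{-5}\wei{\prm B})$. The plan is to follow the strip decomposition of Klein~\cite{klein:stoc06}, together with the column construction from the mortar graph framework~\cite{klein:planar-st-eptas}. First I would partition $B$ into \emph{strips}: a strip is a subbrick whose perimeter splits into a shortest path (the \emph{south boundary} $\south$) and an ``almost-shortest'' path (the \emph{north boundary} $\north$), in the sense that $\north$ admits no significant shortcut through the interior of the strip. Such a decomposition can be obtained greedily: repeatedly peel off a shortest path from $\prm B$ toward the interior; the standard analysis shows the total weight of all strip boundaries is $\Oh(\eps^{-1}\wei{\prm B})$, since each south boundary can be charged to a corresponding piece of $\prm B$ up to a $(1+\eps)$ factor, and the number of strips is controlled.

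\textbf{Key steps.} The core of the argument is the single-strip construction. Fix a strip with south boundary $\south$ (a shortest path) and north boundary $\north$. The crucial observation (as in~\cite{klein:planar-st-eptas}) is that any shortest path between two perimeter vertices of $B$, when restricted to this strip, either stays near $\south$, near $\north$, or crosses the strip; a crossing subpath enters from one boundary and leaves from the other. To sparsify, I would place $\Oh(\eps^{-1})$ evenly spaced \emph{portal} vertices along $\south$, at spacing $\Theta(\eps \cdot \wei{\text{strip boundary}})$, and from each portal shoot a shortest path (a \emph{column}) across the strip to $\north$. A path crossing the strip can be rerouted to use the nearest column, changing its length by at most $\Oh(\eps)$ times the local boundary length — and there are only $\Oh(\eps^{-1})$ crossings that matter because, after using a column, the path can ``ride'' along $\south$ or $\north$, which we keep entirely. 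The subgraph $H$ is then the union of all strip boundaries and all columns. Counting: there are $\Oh(\eps^{-1})$ strips in aggregate weight terms, each contributing $\Oh(\eps^{-1})$ columns of length at most the strip width, and each strip width is at most $\Oh(\eps^{-1}\wei{\prm B})$-ish; a careful bookkeeping (essentially the one in~\cite{klein:planar-st-eptas}) gives $\wei{H} = \Oh(\eps^{-5}\wei{\prm B})$. For the running time, strip decomposition and all the shortest-path / multiple-source shortest-path computations can be done in $\Oh(|B|\log|B|)$ time using planar shortest-path machinery~\cite{planar-sp,klein:mssp}, with the $\Oh(\eps^{-2})$ factor coming from iterating the construction across scales.

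\textbf{Proof of correctness for a single pair.} Given $s,t\in V(\prm B)$, take a shortest $s$--$t$ path $Q$ in $B$. Decompose $Q$ according to which strip each maximal subpath lies in. Within a strip, replace each crossing subpath of $Q$ by: a walk along the south (or north) boundary to the nearest portal, the corresponding column to the opposite boundary, and a walk back along that boundary. Each such local replacement incurs additive error proportional to $\eps$ times the length of the boundary segment traversed, and since the boundary segments used are disjoint along $\prm B$ up to the $(1+\eps)$ north/south distortion, the total additive error telescopes to at most $\eps\cdot\dist_B(s,t)$ (after rescaling $\eps$ by a constant). The resulting walk lies entirely in $H$ and connects $s$ to $t$; extracting a simple path from it only decreases the weight. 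This yields the desired $(1+\eps)$ bound.

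\textbf{Main obstacle.} I expect the hard part to be the bookkeeping that simultaneously (a) bounds the total weight of $H$ by $\Oh(\eps^{-5}\wei{\prm B})$ and (b) guarantees that the accumulated additive errors across \emph{all} strips a path may traverse sum to $\eps\cdot\dist_B(s,t)$ rather than to something like $\eps$ times the \emph{perimeter}. Achieving (b) requires the structural property that a shortest path visits the strips in a monotone fashion and that the boundary segments it rides are essentially disjoint — this is exactly where the ``almost-shortest'' property of north boundaries is used, and getting the constants to line up (so that re-scaling $\eps$ handles the blow-up) is the delicate accounting inherited from~\cite{klein:stoc06,klein:planar-st-eptas}. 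A secondary technical point is handling subdivided edges so that portals land at exact fractional positions along $\south$; this is routine but must be stated carefully to keep the metric unchanged, as done elsewhere in the paper.
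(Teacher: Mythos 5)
Your high-level decomposition into strips is the same as the paper's, and the general plan — sparsify each strip, glue the results, bound the total weight via the $\Oh(\eps^{-1})$-short strip partition — is correct. But your single-strip construction has a genuine gap that prevents it from achieving the \emph{multiplicative} $(1+\eps)$ bound required by condition (iii).

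You propose placing $\Oh(\eps^{-1})$ evenly spaced portals on $\south$ at spacing $\Theta(\eps\cdot\wei{\text{strip boundary}})$ and shooting a column from each. With uniform spacing, rerouting a crossing path to the nearest column incurs an additive error of order the spacing, i.e.\ $\eps$ times the strip boundary length. When $\dist_B(s,t)$ is tiny compared to the strip boundary (which happens whenever $s$ and $t$ are close together near the middle of a long strip), this error dwarfs $\eps\dist_B(s,t)$, so you only get an additive guarantee of order $\eps\wei{\prm B}$, not the required multiplicative one. You flag this yourself in your ``Main obstacle'' paragraph, but the telescoping argument you gesture at cannot fix it: no matter how cleverly one accounts across strips, a single strip with uniform $\eps\wei{\text{boundary}}$-spaced portals simply does not preserve $(1+\eps)$-approximate distances between a pair of nearby vertices.

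The paper resolves this with a two-level, scale-adaptive construction that differs from yours in two essential ways. First, the column endpoints $x_0,x_1,\ldots$ are chosen on the \emph{north} boundary with \emph{adaptive} spacing $\dist_\north(x_{i-1},x_i) > \eps\cdot\dist_B(x_{i-1},\south)$, so the error of snapping $s$ to $x_i$ is $\Oh(\eps)\cdot\wei{C_i}$, which is automatically of order $\eps\dist_B(s,t)$ because $\dist_B(s,t)\geq(1-\eps)\wei{C_i}$ when $s$ is near $x_i$ and $t\in V(\south)$. This adaptive rule also yields $\sum_i\wei{C_i}\leq\eps^{-1}\wei{\north}$. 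Second — and this is the piece entirely absent from your plan — around each column endpoint $y_i\in V(\south)$ the paper places $\Oh(\eps^{-2})$ \emph{pegs} $z_{i,j}$ on $\south$, spaced $\eps\wei{C_i}$ apart out to distance $\eps^{-1}\wei{C_i}$, and adds a \emph{cord} $P_{i,j}$, a shortest path from $x_i$ to each $z_{i,j}$. The cords are what handle the case where $t$ lands on $\south$ moderately far from $y_i$: the path goes $s\to x_i\to z_{i,j}\to t$ along a cord, and the triangle-inequality bookkeeping (Lemma~\ref{lem:wei:paths:correct} in the paper) shows the error is a constant multiple of $\eps\wei{C_i}\leq\Oh(\eps)\dist_B(s,t)$. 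When $t$ is very far from $y_i$ (more than $\eps^{-1}\wei{C_i}$ along $\south$), the plain column $C_i$ plus the south boundary already suffices because $\dist_B(s,t)=\Omega(\eps^{-1}\wei{C_i})$. Each column carries $\Oh(\eps^{-2})$ cords of length $\Oh(\eps^{-1}\wei{C_i})$ each, which together with $\sum_i\wei{C_i}=\Oh(\eps^{-1}\wei{\prm B})$ gives $\wei{H}=\Oh(\eps^{-4}\wei{\prm B})$ per strip and $\Oh(\eps^{-5}\wei{\prm B})$ overall. Without this second level and without adaptive column spacing, your construction cannot be repaired to meet condition (iii); it is not merely a constant-tuning issue.
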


In Section~\ref{ss:wei:portals}, we present the $\portalbound$-variant of Theorem~\ref{thm:weighted}, where $\mc{\terms}$ contains at most $\portalbound$ terminal pairs and $\wei{H}$ depends polynomially on $\epsilon^{-1}$ and $\portalbound$. Finally, in Section~\ref{ss:wei:final}, we derive Theorem~\ref{thm:weighted}.

\subsection{Bounded number of terminal pairs}\label{ss:wei:portals}
We now prove a $\portalbound$-variant of Theorem~\ref{thm:weighted}. To be precise, we show:

\newcommand{\Bfin}{\mathcal{A}}
\newcommand{\Bfindown}{\Bfin^\downarrow}

\begin{theorem}\label{thm:wei:portals}
Let $\eps > 0$ be a fixed accuracy parameter, let $\portalbound$ be a positive integer,
and let $B$ be an edge-weighted brick.
Then one can find in $\poly(\eps^{-1}, \portalbound) |B| \log |B|$ time a graph
$H \subseteq B$ such that
\begin{enumerate}
\renewcommand{\theenumi}{\roman{enumi}}
\renewcommand{\labelenumi}{(\theenumi)}
\item $\prm B \subseteq H$,
\item $\wei{H} \leq \poly(\eps^{-1}, \portalbound) \wei{\prm B}$, and
\item for every set $\mathcal{\terms} \subseteq V(\prm B) \times V(\prm B)$ of size at most $\portalbound$,
there exists a Steiner forest $F_H$ that connects $\mathcal{\terms}$ in $H$
such that $\wei{F_H} \leq \wei{F_B} + \eps \wei{\prm B}$ for any Steiner forest $F_B$ that connects $\mathcal{\terms}$ in $B$.
\end{enumerate}
\end{theorem}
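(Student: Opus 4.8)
The plan is to adapt the recursive decomposition used in the proof of Theorem~\ref{thm:main} (Sections~\ref{sec:bricks}--\ref{sec:finish}) to the edge-weighted setting, but to stop the recursion early and to use as the base case the single-pair graph produced by Theorem~\ref{thm:wei:paths} instead of an exhaustive enumeration of Steiner trees. Fix the constant $\nice = 1/36$ as before and let $W := \wei{\prm B}$ be the perimeter of the \emph{top-level} brick, kept fixed throughout. Choose a threshold $\nice_0 := \poly(\eps/\portalbound)\,W$ (the precise polynomial is fixed by the error analysis below). The recursive procedure, called on a subbrick $B'$, does the following: if $\wei{\prm B'} \le \nice_0$, return the single-pair graph $H_{B'}$ of Theorem~\ref{thm:wei:paths} applied to $B'$; otherwise run the edge-weighted decomposition test of Theorem~\ref{thm:nice-testing-wei}, and either recurse on the returned $(3+2\nice)$-short $(\nice/2)$-nice brick covering, or --- if no short $\nice$-nice tree exists --- perform the alternative decomposition of Section~\ref{sec:finish} built around the cycle $C$ of Theorem~\ref{thm:ananas} (whose weighted version is already available), recursing on the outer bricks and marking the $\Oh(1)$ shortest paths $Q_{x,y}$ needed to realize connections strictly enclosed by $C$. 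The output $H$ is the union of all graphs returned and all marked shortest paths; property (i) is then immediate, since every face of $B$ ends up in a leaf brick and $\prm B' \subseteq H_{B'}$.

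For the size bound (property (ii)), each decomposition step multiplies the total perimeter of the current bricks by at most the constant $c = 3+2\nice$ and shrinks each individual perimeter by a factor at least $1-\nice/2$ (in the alternative decomposition the outer bricks shrink by a constant factor as in~\eqref{eq:finish:P}). Hence the recursion has depth $d = \Oh(\log(W/\nice_0)) = \Oh(\log(\portalbound/\eps))$, and the total perimeter of all bricks occurring at all levels is $\Oh(c^d)\,W = \poly(\eps^{-1},\portalbound)\,W$. In particular the leaf bricks have total perimeter $\poly(\eps^{-1},\portalbound)\,W$, so by Theorem~\ref{thm:wei:paths} the union of their single-pair graphs has weight $\Oh(\eps^{-5})\cdot\poly(\eps^{-1},\portalbound)\,W$, and the marked shortest paths contribute $\Oh(1)$ per brick, each of weight at most that brick's perimeter, hence $\poly(\eps^{-1},\portalbound)\,W$ in total. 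The running time is $\poly(\eps^{-1},\portalbound)\,|B|\log|B|$: each brick is processed in $\Oh(\eps^{-2}|B'|\log|B'|)$ time (Theorem~\ref{thm:wei:paths}) plus $\poly(\nice^{-1})\,|B'|\log|B'|$ time (Theorems~\ref{thm:nice-testing-wei} and~\ref{thm:ananas}), the bricks at each level cover $B$ with $\Oh(1)$ multiplicity, and there are $\Oh(\log(\portalbound/\eps))$ levels.

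The heart of the proof is property (iii). Given $\mc{\terms}$ with $|\mc{\terms}| \le \portalbound$, fix an optimal Steiner forest $F_B$ connecting $\mc{\terms}$ in $B$; since $\prm B$ minus one edge already connects $\mc{\terms}$, we have $\wei{F_B} \le W$. Decompose $F_B$ into tree components $T^{(1)},\ldots,T^{(m)}$; each $T^{(j)}$ is an optimal Steiner tree for $\terms^{(j)} := V(T^{(j)}) \cap V(\prm B)$, and $\sum_j |\terms^{(j)}| \le 2\portalbound$, so $F_B$ has only $\Oh(\portalbound)$ vertices of degree at least three. Applying property (v) of Theorem~\ref{thm:ananas} (together with the marked $Q_{x,y}$) to each $\terms^{(j)}$ against the successive cycles $C$, we may normalize each $T^{(j)}$ so that the portion of it strictly enclosed by any such $C$ lies in $H$ at no extra cost. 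It then remains, exactly as in the proof of Lemma~\ref{lem:recursion}, to reroute inside each leaf brick $B'$ every connected component $X$ of $F_B \cap \intr B'$: if $X$ has at most two anchors it is a shortest path between them, which by Theorem~\ref{thm:wei:paths} can be replaced by a path in $H_{B'}$ of weight at most $(1+\eps)\wei{X}$; if $X$ has three or more anchors then $X$ contains a degree-$\ge 3$ vertex of $F_B$ strictly inside $B'$, and we reroute the anchors of $X$ through $\prm B' \subseteq H$ at extra cost at most $\wei{\prm B'} \le \nice_0$. Summing, the resulting $F_H \subseteq H$ has $\wei{F_H} \le \wei{F_B} + \eps\,\wei{F_B} + N\,\nice_0$, where $N$ bounds the number of leaf bricks strictly containing a degree-$\ge 3$ vertex of $F_B$; since each such vertex lies strictly inside at most $7^d = \poly(\eps^{-1},\portalbound)$ leaf bricks, $N = \poly(\eps^{-1},\portalbound)$, and choosing $\nice_0 := \eps W / N$ (consistently of the promised form $\poly(\eps/\portalbound)\,W$, since $N$ depends on $d$ and hence on $\nice_0$ only through a logarithm) gives $\wei{F_H} \le \wei{F_B} + 2\eps W$. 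Rescaling $\eps$ finishes the argument.

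The step I expect to be the main obstacle is making the interplay between the two decomposition modes and the leaf-brick rerouting fully rigorous in the weighted \emph{forest} setting: in particular, checking that the normalization of all tree components $T^{(j)}$ via Theorem~\ref{thm:ananas} can be carried out simultaneously against the \emph{same} cycles $C$ without losing optimality of $F_B$ as a whole, and carefully accounting for how many leaf bricks can be ``expensive'' given that the per-level decompositions are only coverings, not partitions. The remainder is a routine re-run of the unweighted argument of Section~\ref{sec:finish} with weights, using that the structural results of Sections~\ref{sec:core}--\ref{sec:sliding} are already stated for edge-weighted bricks.
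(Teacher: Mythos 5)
Your high-level plan — recursive decomposition with a tiny-brick threshold $\nice_0 = \poly(\eps/\portalbound)\wei{\prm B}$, base case via Theorem~\ref{thm:wei:paths}, and an error split into a $(1+\eps)$ multiplicative factor for paths plus an additive $\nice_0$ charge for rerouting around high-degree vertices — is the same skeleton as the paper's proof. But the step you flag as ``the main obstacle'' is exactly where the argument breaks, and the paper handles it with a non-trivial structural lemma that your proposal is missing.

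The gap is in the additive-error accounting. You charge $\nice_0$ per \emph{(degree-$\ge 3$ vertex, leaf brick) pair} and argue $N \le \Oh(\portalbound)\cdot 7^d$ with $d=\Oh(\log(W/\nice_0))$. But then $7^d = (W/\nice_0)^{\log 7 / \log(1/(1-\nice/2))}$, and with $\nice = 1/36$ that exponent is far larger than $1$. Writing $N = \Theta\bigl(\portalbound (W/\nice_0)^{c}\bigr)$ with $c>1$ and plugging into $\nice_0 = \eps W/N$ gives $\nice_0^{1-c} = \eps W^{1-c}/\portalbound$, i.e.\ $\nice_0 = W(\portalbound/\eps)^{1/(c-1)} > W$, which is vacuous. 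The dependence of $N$ on $\nice_0$ is not ``only through a logarithm'' — it is a large polynomial, and the circular definition has no valid solution. In parallel, because the per-level decompositions of Theorem~\ref{thm:nice-testing-wei} are only \emph{coverings}, the same edge of $F_B$ lies strictly inside many leaf bricks, so ``reroute inside each leaf brick $B'$ every connected component of $F_B\cap\intr B'$'' literally double-counts both weight and additive error.

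The paper avoids both problems with Lemma~\ref{lem:wei:portals:one-correct} and the face-based Phase~2. Lemma~\ref{lem:wei:portals:one-correct} is proved by induction on the recursion tree (not by a one-shot normalization as your sketch suggests): it exhibits an \emph{optimal} forest $F_B$ in which every degree-$\ge 3$ vertex already lies on some brick boundary $\prm B_1$ with $B_1 \in \Bfin$, or strictly inside a \emph{tiny} brick of $\Bfin$. One then pays $\nice_0$ once per degree-$\ge 3$ vertex — of which there are $\Oh(\portalbound)$ — by lifting each such vertex onto the boundary of a single enclosing tiny brick; the total additive charge is $\portalbound\cdot(\eps/\portalbound)\wei{\prm B} = \eps\wei{\prm B}$, with no circularity. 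After this lift, the remaining free parts of the forest are paths between vertices of $H_0 := \bigcup_{B_1\in\Bfin}\prm B_1$, each lying in a single face $f$ of $H_0$; the paper ``straightens'' each such face into a brick $B^f$ and applies Theorem~\ref{thm:wei:paths} there, not to the leaf bricks directly. Since faces of $H_0$ partition the plane (unlike the leaf bricks, which only cover it), each path is rerouted exactly once, eliminating double-counting. If you want to repair your argument, the two essential borrowed pieces are exactly Lemma~\ref{lem:wei:portals:one-correct} (to charge vertices, not vertex--brick pairs) and the switch from leaf bricks to faces of $H_0$ (to make rerouting injective).
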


From a high-level perspective, we proceed similarly as in Section~\ref{sec:finish}.
The algorithm has two phases: in the first phase, we recursively use the decomposition tools developed in 
the previous sections to compute a brick covering $\Bfin$ of $B$, where each $B' \in \Bfin$ has the following property: either $\wei{\prm B'}$ is small, or
for every set $\mathcal{\terms} \subseteq V(\prm B) \times V(\prm B)$
of size at most $\portalbound$, there exist an optimal Steiner forest connecting $\mathcal{\terms}$ that does not contain any vertex of degree larger than $2$ that is strictly enclosed by $\prm B'$.

\subsubsection{Phase one: decomposing $B$}
We first initialize a family $\Bfin = \emptyset$. During the course of the algorithm all elements of this family will be subbricks of $B$.
Then we call a procedure $\mathtt{partition}$ on the input brick $B$.
The description of the procedure $\mathtt{partition}$, when called on a subbrick $B'$ of $B$, is as follows.

Call $B'$ \empty{tiny} if $\wei{\prm B'} \leq \frac{\eps}{\portalbound} \wei{\prm B}$, and \emph{large} otherwise. If $B'$ is tiny, then put $B'$ into $\Bfin$. If $B'$ is large, then invoke the algorithm of Theorem~\ref{thm:nice-testing-wei} for the brick $B'$
and parameter $\nice = \frac{1}{36}$. If the algorithm finds a $(3+2\nice)$-short $(\nice/2)$-nice brick covering $\Bb(B')$ of $B'$, then recursively invoke $\mathtt{partition}$ on all bricks of $\Bb(B')$.

If the algorithm of Theorem~\ref{thm:nice-testing-wei} finds that no short $\nice$-nice tree exists in $B$, then invoke the algorithm of Theorem~\ref{thm:core} for $\nice = \frac{1}{36}$ and $\delta=2\nice$ to find the core face $\coreface$, and then invoke the algorithm of Theorem~\ref{thm:ananas} for $\nice = \frac{1}{36}$ and the 
brick $B'$. Let $C$ be the cycle found by Theorem~\ref{thm:ananas}.
We find a sequence $p_1,p_2,\ldots,p_s$ of pegs on $C$ such that
for any $1 \leq i \leq s$ either $p_i,p_{i+1}$ are two consecutive vertices of $C$
or $\wei{C[p_i,p_{i+1}]} \leq 2\nice \wei{\prm B'}$ (here we assume $p_{s+1} = p_1$).
In a greedy manner (as in Section~\ref{sec:sliding}), we can find in linear time a sequence of such pegs with
\begin{equation}\label{eq:wei:portals:s}
s \leq \frac{1}{\nice} \cdot \frac{\wei{C}}{\wei{\prm B'}} \leq \frac{16}{\nice^3} = \Oh(1).
\end{equation}
Then we find, for each peg $p_i$, a shortest path $P_i$ between $p_i$ and $V(\prm B')$ that does not contain any edge strictly enclosed by $C$. Let $x_i$ be the second endpoint of $P_i$.
Observe that we may assume that the paths $P_i$ obtained in this manner are non-crossing in the following sense: whenever $P_i$ and $P_j$ meet at some vertex, they continue together towards a common endpoint $x_i=x_j$ on $V(\prm B)$.
Indeed, we can find the vertices $x_{i}$ by removing all edges and vertices that are strictly enclosed by $C$, adding a super-terminal $s_{0}$ in the outer face, and connecting $s_{0}$ to the vertices of $\prm B$ using edges of weight zero. The graph we just constructed is planar, and by constructing a shortest-path tree $T$ for $s_{0}$ in this graph (which takes linear time~\cite{planar-sp}), we can find the vertices $x_{i}$ in linear time. Then the paths $P_{i}$ are simply the $p_ix_i$-paths in $T$. By construction, these paths have the required property.

Now consider any $i$ such that $1 \leq i \leq s$ and $C[p_i,p_{i+1}] \neq \prm B'[x_i,x_{i+1}]$.
Let $W_{i}$ denote the closed walk $P_i \cup C[p_i,p_{i+1}] \cup P_{i+1} \cup \prm B'[x_i,x_{i+1}]$ in $B'$. Let $H_i$ be the graph consisting of all edges of $W_i$ that neighbour the outer face
of $W_i$ treated as a planar graph. 
%Note that $W_i$ and $H_i$ can be computed in linear time for fixed $i$.
By definition, each doubly-connected component of $H_i$ is a cycle or a bridge.
For each doubly-connected component that is a cycle, we create a brick consisting of all the edges
of $B$ that are enclosed by this cycle. Let $\Bb_i$ be the family of obtained bricks.
Observe that $\Bb_i$ can be computed in linear time for fixed $i$ and a face of $B'$ is enclosed by some brick
of $\Bb_i$ if and only if it is enclosed by $W_i$. For each $1 \leq i \leq s$, we recursively call
$\mathtt{partition}$ on all bricks of $\Bb_i$.

Finally, we put a brick $B^C$ consisting of all edges of $B$ enclosed by $C$ into $\Bfin$.

This concludes the description of the procedure $\mathtt{partition}$, and hence the description of the first phase of the algorithm.
We now analyse the family $\Bfin$ and the running time of the algorithm.

First, we establish some more notation that will be useful in the analysis.
For a fixed call $\mathtt{partition}(B')$, by $\Bfin(B')$ we denote all bricks that are inserted into $\Bfin$ during this call,
and by  $\Bfindown(B')$ we denote all bricks that are inserted into $\Bfin$ in any call in the subtree of the recursion tree
rooted at the call $\mathtt{partition}(B')$, including $\Bfin(B')$.

In the case when Theorem~\ref{thm:ananas} has been invoked, we denote $\Bb^r(B') = \bigcup_{i=1}^s \Bb_i$ and $\Bb(B') = \{B^C\} \cup \Bb^r(B')$.
In the case when Theorem~\ref{thm:nice-testing-wei} returned a brick covering $\Bb(B')$, we denote also $\Bb^r(B') = \Bb(B')$.
Observe that, regardless of whether Theorem~\ref{thm:ananas} has been invoked or not,
\begin{itemize}
\item $\Bb^r(B')$ is the family of subbricks of $B'$ for which a recursive call has been made;
\item $\Bb(B') = \Bfin(B') \cup \Bb^r(B')$;
\item $\Bb(B')$ is a brick covering of $B'$ with the additional property that $\bigcup_{B^\ast \in \Bb(B')} \prm B^\ast$ is connected.
\end{itemize}

\noindent
Using these properties, we analyse the family $\Bfin$.
\begin{lemma}\label{lem:wei:portals:cover}
$\Bfin$ is a brick covering of $B$ and, moreover, $\bigcup_{B_1 \in \Bfin} \prm B_1$ is connected.
\end{lemma}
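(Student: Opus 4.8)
The plan is to prove the lemma by bottom-up induction on the (finite) recursion tree of the procedure $\mathtt{partition}$, strengthening the statement slightly: for every subbrick $B'$ on which $\mathtt{partition}$ is invoked, the family $\Bfindown(B')$ is a brick covering of $B'$ and $\bigcup_{B_1 \in \Bfindown(B')}\prm B_1$ is a connected subgraph of $B'$ that contains $\prm B'$. The lemma then follows by taking $B' = B$, since $\Bfin = \Bfindown(B)$. (Finiteness of the recursion tree is not strictly needed to make sense of $\Bfin$, but it is anyway guaranteed: every recursive call is made on a brick of strictly smaller perimeter, exactly as argued in Section~\ref{sec:finish}, using $(\nice/2)$-niceness of $\Bb(B')$ in the first case and the analogue of the bound from Section~\ref{sec:finish} on the bricks of $\Bb_i$ in the second, while $B^C$ is never recursed on.)

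The base case is $B'$ tiny, where $\Bfindown(B') = \{B'\}$ and all three claims are immediate. For the inductive step, $B'$ is large, and by construction $\mathtt{partition}$ produces the brick covering $\Bb(B') = \Bfin(B') \cup \Bb^r(B')$ of $B'$ with $\bigcup_{B^\ast \in \Bb(B')}\prm B^\ast$ connected, recursing on every brick of $\Bb^r(B')$; by the induction hypothesis, for each $B^\ast \in \Bb^r(B')$ the family $\Bfindown(B^\ast)$ is a brick covering of $B^\ast$ and $\bigcup_{B_1 \in \Bfindown(B^\ast)}\prm B_1$ is connected and contains $\prm B^\ast$. I would first record the elementary fact (the remark following the definition of a brick partition) that every edge $e$ of $\prm B'$ lies on the perimeter of some brick of any brick covering of $B'$: the inner face of $B'$ incident to $e$ is enclosed by some $\prm B^\ast$, hence $e$ is enclosed by $\prm B^\ast$, but $e$ is also incident to the outer face of $B'$, which is enclosed by no inner cycle, so $e$ cannot be strictly enclosed by $\prm B^\ast$, i.e. $e \in \prm B^\ast$.

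Given this, since $\Bfindown(B') = \Bfin(B') \cup \bigcup_{B^\ast \in \Bb^r(B')}\Bfindown(B^\ast)$, the induction hypothesis yields
\[
\bigcup_{B_1 \in \Bfindown(B')}\prm B_1 \;\supseteq\; \bigcup_{B^\ast \in \Bfin(B')}\prm B^\ast \;\cup\; \bigcup_{B^\ast \in \Bb^r(B')}\prm B^\ast \;=\; \bigcup_{B^\ast \in \Bb(B')}\prm B^\ast \;=:\; G^\ast \;\supseteq\; \prm B',
\]
which proves the containment part and also shows $\bigcup_{B_1 \in \Bfindown(B')}\prm B_1 = G^\ast \cup \bigcup_{B^\ast \in \Bb^r(B')}\bigl(\bigcup_{B_1 \in \Bfindown(B^\ast)}\prm B_1\bigr)$; since $G^\ast$ is connected and each attached subgraph $\bigcup_{B_1 \in \Bfindown(B^\ast)}\prm B_1$ is connected and meets $G^\ast$ (in $\prm B^\ast$), their union is connected. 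For the covering property, every face $f$ of $B'$ is enclosed by some $B^\ast \in \Bb(B')$; if $B^\ast \in \Bfin(B')$ then $B^\ast \in \Bfindown(B')$ and we are done, and if $B^\ast \in \Bb^r(B')$ then $f$ is also a face of the subbrick $B^\ast$ — the bounded faces of a subbrick $B^\ast$ are exactly the faces of $B'$ enclosed by $\prm B^\ast$, since $B^\ast$ contains \emph{all} edges of $B'$ enclosed by $\prm B^\ast$ — so by the induction hypothesis $f$ is contained in some brick of $\Bfindown(B^\ast) \subseteq \Bfindown(B')$; as every element of $\Bfindown(B')$ is a subbrick of $B'$, this completes the inductive step. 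The only points that require genuine (though short) care are these two topological observations: that an edge of $\prm B'$ necessarily lands on the perimeter of one of the covering bricks, and that a face of $B'$ enclosed by $\prm B^\ast$ is an honest face of the subbrick $B^\ast$ (which is what licenses applying the induction hypothesis to it). Everything else is a mechanical unfolding of the recursive definitions of $\Bb(B')$, $\Bfin(B')$, $\Bb^r(B')$, and $\Bfindown(B')$, together with the three bullet-point properties of $\Bb(B')$ already recorded in the description of $\mathtt{partition}$.
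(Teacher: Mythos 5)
Your proof is correct and follows the same bottom-up induction on the recursion tree of $\mathtt{partition}$ that the paper uses, with the same base case (tiny bricks) and the same decomposition $\Bfindown(B') = \Bfin(B') \cup \bigcup_{B^\ast \in \Bb^r(B')}\Bfindown(B^\ast)$. You merely make explicit a few details the paper's terser argument leaves implicit — notably that $\prm B' \subseteq \bigcup_{B_1 \in \Bfindown(B')}\prm B_1$, which is what lets the recursive pieces attach to the connected core $\bigcup_{B^\ast \in \Bb(B')}\prm B^\ast$ — and these details are correct.
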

\begin{proof}
By induction on the recursion tree of procedure $\mathtt{partition}$,
we prove that for any call $\mathtt{partition}(B')$,
the family $\Bfindown(B')$ is a brick covering of $B'$ and, moreover, $\bigcup_{B_1 \in \Bfindown(B')} \prm B_1$ is connected.
This is clearly true in the leaves of the recursion tree
when $\Bfindown(B') = \{B'\}$.
In an induction step, observe that the fact that $\Bfindown(B')$ is a brick covering of $B'$ follows from the fact that $\Bb(B')$ is a brick covering of $B'$ and the induction hypothesis for all elements of $\Bb^r(B')$. The fact that $\bigcup_{B_1 \in \Bfindown(B')} \prm B_1$ is connected follows from the fact that
$\bigcup_{B^\ast \in \Bb(B')} \prm B^\ast$ is connected, $\Bb(B') = \Bfin(B') \cup \Bb^r(B')$
and the induction hypothesis for all elements of $\Bb^r(B')$.
\end{proof}

\begin{lemma}\label{lem:wei:portals:one-correct}
For every set $\mathcal{\terms} \subseteq V(\prm B) \times V(\prm B)$
there exists a Steiner forest $F$ connecting $\mathcal{\terms}$ in $B$
of minimum possible length with the following additional property:
for every vertex $v$ of degree at least three in $F$,
there exists some $B_1 \in \Bfin$ such that either
\begin{enumerate}
\item $v \in V(\prm B_1)$, or
\item $v$ is strictly enclosed by $\prm B_1$ and $B_1$ is tiny.
\end{enumerate}
\end{lemma}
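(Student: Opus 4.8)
The plan is to prove Lemma~\ref{lem:wei:portals:one-correct} by induction on the recursion tree of the procedure $\mathtt{partition}$, mirroring the structure of the proof of Lemma~\ref{lem:slide-correctness} but carried out in the weighted, bounded-number-of-terminals regime. Concretely, I would prove the following strengthened statement: for every call $\mathtt{partition}(B')$ and every set $\termpairs' \subseteq V(\prm B') \times V(\prm B')$, there is a minimum-length Steiner forest $F'$ connecting $\termpairs'$ in $B'$ such that every vertex of degree at least three in $F'$ either lies on $\prm B_1$ for some $B_1 \in \Bfindown(B')$, or is strictly enclosed by some tiny $B_1 \in \Bfindown(B')$. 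The lemma follows by applying this to the root call $\mathtt{partition}(B)$. Note that we do not need the size bound $|\termpairs| \le \portalbound$ here at all --- it is only used later for the weight bound --- so we may quantify over all $\termpairs'$.

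First I would handle the base cases of the recursion: when $B'$ is tiny, $\Bfindown(B') = \{B'\}$, and any vertex of $F'$ strictly enclosed by $\prm B'$ falls into case (2), while vertices on $\prm B'$ fall into case (1); when $B'$ is large and Theorem~\ref{thm:nice-testing-wei} returns a brick covering $\Bb(B')$, I would use the standard ``push an optimal solution into the subbricks'' argument, exactly as in the proof of Lemma~\ref{lem:recursion} (condition (iii)): pick $F'$ to be a minimum-length Steiner forest minimizing the number of edges not lying in any $\prm B_1$; if some connected subgraph of $F'$ is strictly enclosed by a brick $B^\ast \in \Bb(B')$, replace it and recurse inside $B^\ast$. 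The only subtlety versus Lemma~\ref{lem:recursion} is that a brick covering (not partition) is used and each face is covered at most $7$ times, but this does not affect the argument since we only need \emph{some} covering brick enclosing each misbehaving vertex. Applying the induction hypothesis inside each $B^\ast$ gives the claim, since $\Bfindown(B^\ast) \subseteq \Bfindown(B')$.

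The interesting case, and the main obstacle, is when Theorem~\ref{thm:ananas} is invoked and $B'$ admits no short $\nice$-nice tree. Here I would take $F'$ a minimum-length Steiner forest and first apply Theorem~\ref{thm:ananas}(v) componentwise: each connected component of $F'$ is an optimal Steiner tree for its own anchors on $\prm B'$, so we may assume no component has a degree-$\ge 3$ vertex strictly enclosed by $C$. Thus every misbehaving vertex of $F'$ lies outside $C$ (on $C$ or in the ring between $C$ and $\prm B'$) or strictly inside $C$ --- but the latter is now excluded, so all of them are enclosed by some brick of $\Bb^r(B') = \bigcup_i \Bb_i$, \emph{or} lie on $C$ itself, in which case they lie on $\prm B^C$ and $B^C \in \Bfin(B')$ gives case (1). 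For a misbehaving vertex $v$ enclosed by some $\Bb_i$: I need to argue that after possibly re-routing, $v$ ends up enclosed by a brick of $\Bb_i$, and then invoke the induction hypothesis for that brick. The re-routing is needed because a priori $F'$ restricted to the ring could have a degree-$\ge 3$ vertex strictly inside one of the bricks of $\Bb_i$ but not on its perimeter; here I would again use the minimality of $F'$ combined with the same ``replace a strictly-enclosed connected piece'' trick, using that $\bigcup_{B^\ast \in \Bb(B')}\prm B^\ast$ is connected and that shortest-path replacements inside a brick do not increase length. The care needed is that the walks $W_i$ and the induced bricks $\Bb_i$ form only a \emph{covering} of the ring, with bricks from different $\Bb_i$ and $\Bb_j$ possibly overlapping, so I must be slightly careful to choose, for each misbehaving $v$, a single brick $B_1$ enclosing it and then recurse cleanly inside it; but since the induction hypothesis is about $\Bfindown$ of that single brick and $\Bfindown(B_1) \subseteq \Bfindown(B')$, there is no real circularity. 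I expect the bookkeeping around which bricks in the covering enclose a given vertex, and making the re-routing argument precise in the presence of overlapping bricks, to be the only genuinely delicate part; everything else is a direct adaptation of arguments already present in Sections~\ref{sec:bricks}, \ref{sec:sliding}, and~\ref{sec:finish}.
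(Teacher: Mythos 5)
Your proposal is correct and takes essentially the same approach as the paper: the same strengthened induction on the recursion tree, the same base case at tiny bricks, the same ``excise the part of $F$ strictly enclosed by a recursed-on subbrick $B^\ast$ and swap in the piece supplied by the induction hypothesis for $B^\ast$'' extremal argument, and the same use of Theorem~\ref{thm:ananas}(v) to push degree-$\geq 3$ vertices out of the disc bounded by $C$. The paper's exposition is a bit cleaner in that it coins a single notion of a \emph{lame} vertex (degree $\geq 3$, not on $\prm B_1$ for any $B_1 \in \Bfindown(B')$, and not strictly enclosed by any tiny $B_1 \in \Bfindown(B')$) so one replacement argument covers both decomposition modes uniformly; also note that when Theorem~\ref{thm:ananas} is invoked the family $\Bb(B')$ is in fact a brick \emph{partition} (the paper records this in the proof of Lemma~\ref{lem:wei:portals:size}), so your worry about overlapping bricks from different $\Bb_i$, $\Bb_j$ does not actually arise.
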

\begin{proof}
For any call $\mathtt{partition}(B')$ in the recursion tree,
and for any forest $F$ in $B'$, we say that a vertex $v$ is \emph{lame}
if (a) the degree of $v$ in $F$ is at least three, and (b) for any $B_1 \in \Bfindown(B')$
we have $v \notin V(\prm B_1)$, and (c) if $v$ is strictly enclosed by $\prm B_1$, then $B_1$ is large.
By induction on the recursion tree of the procedure $\mathtt{partition}$,
we prove that for any call $\mathtt{partition}(B')$
and any $\mathcal{\terms} \subseteq V(\prm B') \times V(\prm B')$
there exists a Steiner forest $F$ connecting $\mathcal{\terms}$ of minimum possible length
that does not contain lame vertices.
In the leaves of the recursion tree, the statement is clearly true as $\Bfindown(B')=\{B'\}$ and $B'$ is tiny.

Consider now a call $\mathtt{partition}(B')$, and let 
$\mathcal{\terms} \subseteq V(\prm B') \times V(\prm B')$.
By Theorem~\ref{thm:ananas}, there exists a Steiner forest $F$ connecting $\mathcal{\terms}$ in $B'$
of minimum possible length that additionally satisfies the following: if Theorem~\ref{thm:ananas} has been invoked
to obtain $\Bb(B')$, then no vertex of degree at least three in $F$ is strictly enclosed by $\prm B^C$.
Pick such $F$ that minimizes the number of lame vertices.
We claim that there are in fact no lame vertices; note that such a claim proves the induction step and finishes the proof of the lemma.
Assume the contrary, and let $v$ be any lame vertex for $F$.

As $v$ is not strictly enclosed by $\prm B^C$ in the case when Theorem~\ref{thm:ananas} has been invoked, we infer that
there exists $B^\ast \in \Bb^r(B')$ such that $\prm B^\ast$ encloses $v$. 
As $v \notin V(\prm B_1)$ for any $B_1 \in \Bfindown(B')$, $\prm B^\ast$ strictly encloses $v$.
Consider $F_1 := F \cap B^\ast$ and let $\mathcal{\terms}_1$ be the set of  pairs $(x,y)$
such that $x,y \in V(F_1) \cap V(\prm B^\ast)$, $x \neq y$, and $x,y$ belong to the same connected component of $F_1$.
By the induction hypothesis, there exists a forest $F_2$ connecting $\mathcal{\terms}_1$ in $B^\ast$ of length at most $\wei{F_1}$
that does not contain any lame vertices in $B^\ast$.
Hence, $F' := (F \setminus F_1) \cup F_2$ is a Steiner forest connecting $\mathcal{\terms}$ in $B'$ of length at most $\wei{F}$
that contains a strict subset of the set of lame vertices of $F$, a contradiction to the choice of $F$.
This finishes the induction step, and concludes the proof of the lemma.
\end{proof}

We now move to the analysis of the efficiency of the algorithm.
Our goal is to prove upper bounds on the size of $\Bfin$, on the total length of the perimeters of the bricks in $\Bfin$,
and on the running time of phase one.

\begin{lemma} \label{lem:wei:portals:bb}
Let $i\in\{1,\ldots,s\}$ be such that $C[p_i,p_{i+1}] \neq \prm B'[x_i,x_{i+1}]$. Then $\sum_{B_{1} \in \Bb_i} \wei{\prm B_{1}} \leq (1-2\nice)\wei{\prm B'}$.
\end{lemma}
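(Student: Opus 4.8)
The plan is to bound $\sum_{B_1 \in \Bb_i} \wei{\prm B_1}$ by the length of the closed walk $W_i = P_i \cup C[p_i,p_{i+1}] \cup P_{i+1} \cup \prm B'[x_i,x_{i+1}]$, since each $B_1 \in \Bb_i$ arises as a brick enclosed by a doubly-connected component of $H_i$, which is a subgraph of $W_i$. More precisely, each edge of $W_i$ lies on the perimeter of at most one brick of $\Bb_i$ (distinct doubly-connected cycle components of $H_i$ are edge-disjoint, and $H_i \subseteq W_i$), so $\sum_{B_1 \in \Bb_i} \wei{\prm B_1} \leq \wei{H_i} \leq \wei{W_i}$. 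Hence it suffices to prove $\wei{W_i} \leq (1-2\nice)\wei{\prm B'}$.

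To bound $\wei{W_i}$, I would use the properties of $C$, the pegs, and the paths $P_i$. By property (ii) of Theorem~\ref{thm:ananas}, each of $P_i$ and $P_{i+1}$ has length at most $(\tfrac14 - 2\nice)\wei{\prm B'}$. By the choice of pegs, $\wei{C[p_i,p_{i+1}]} \leq 2\nice \wei{\prm B'}$ when $p_i, p_{i+1}$ are not consecutive on $C$, and is negligible (the length of one edge, hence bounded by, say, $2\nice \wei{\prm B'}$ as well, since a single edge cannot have weight exceeding $\wei{\prm B'}$ — but more carefully, if they are consecutive the bound should still be absorbed) otherwise. The remaining piece is $\prm B'[x_i,x_{i+1}]$, which I need to show is short. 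For this, I would argue as in Section~\ref{sec:finish}: since $P' := P_i \cup C[p_i,p_{i+1}] \cup P_{i+1}$ (after removing its self-intersections, exactly as done in the proof leading to~\eqref{eq:finish:P}) is a simple path of length at most $2(\tfrac14-2\nice)\wei{\prm B'} + 2\nice\wei{\prm B'} = (\tfrac12 - 2\nice)\wei{\prm B'} < (\tfrac12 - 2\nice)\wei{\prm B'}$, it induces, together with one of the two arcs $\prm B'[x_i,x_{i+1}]$ or $\prm B'[x_{i+1},x_i]$, a $(2\nice)$-carve. Since $P' \cup \prm B'[x_{i+1},x_i]$ encloses $C$ and hence the core face $\coreface$ (which is not carved by any $2\nice$-carve by Theorem~\ref{thm:core}), it must be $(P', \prm B'[x_i,x_{i+1}])$ that is the $(2\nice)$-carve. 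Then by Lemma~\ref{lem:pluseps} applied with $\nice$ replaced by $2\nice$ (which is legitimate since $2\nice = \tfrac{1}{18} > \tfrac{1}{36} = \nice$ and $B'$ admits no short $\nice$-nice tree), the carvebase $\prm B'[x_i,x_{i+1}]$ has length at most $\wei{P'} + 2\nice\wei{\prm B'} \leq (\tfrac12 - 2\nice)\wei{\prm B'} + 2\nice\wei{\prm B'} = \tfrac12\wei{\prm B'}$.

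Putting these together: $\wei{W_i} \leq \wei{P_i} + \wei{C[p_i,p_{i+1}]} + \wei{P_{i+1}} + \wei{\prm B'[x_i,x_{i+1}]} \leq (\tfrac12 - 2\nice)\wei{\prm B'} + \tfrac12\wei{\prm B'} = (1 - 2\nice)\wei{\prm B'}$, as desired. I should double-check that the quantities add up correctly with $\nice = \tfrac1{36}$: $\wei{P'} \leq \tfrac12 - 2\nice$ means $P_i$ and $P_{i+1}$ contribute at most $2(\tfrac14 - 2\nice) = \tfrac12 - 4\nice$, and $C[p_i,p_{i+1}]$ contributes at most $2\nice$, summing to $\tfrac12 - 2\nice$; then $\prm B'[x_i,x_{i+1}]$ adds at most $\tfrac12$, giving $1 - 2\nice$. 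The bound is tight enough, so this works.

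The main obstacle will be carefully justifying the reduction from $W_i$ (a closed walk that may have self-intersections and repeated vertices) to a genuine $(2\nice)$-carve with a simple carvemark. This requires the same delicate path-surgery argument that appears in Section~\ref{sec:finish} around equation~\eqref{eq:finish:P}: finding the vertices $x_i'$, $y_i'$ where $P_i$ and $P_{i+1}$ last touch $C[p_i,p_{i+1}]$, using the non-crossing property of the paths $P_i$ (which I noted is guaranteed by their construction via a shortest-path tree after deleting edges strictly enclosed by $C$), and verifying that the resulting simple path $P'$ together with the correct arc of $\prm B'$ forms a carve of the right length. I would also need to be slightly careful in the edge case where $p_i$ and $p_{i+1}$ are consecutive on $C$ (so $\wei{C[p_i,p_{i+1}]}$ is just a single edge weight), but this only makes $\wei{W_i}$ smaller, so the bound still holds; and in the case $x_i = x_{i+1}$, where $\prm B'[x_i,x_{i+1}] = \emptyset$ and the bound is immediate from the length of $P'$ alone.
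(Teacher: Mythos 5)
Your proposal follows the same overall route as the paper's proof: reduce to bounding $\wei{W_i}$, show $\wei{Q_i} := \wei{P_i \cup C[p_i,p_{i+1}] \cup P_{i+1}} \leq (\tfrac12 - 2\nice)\wei{\prm B'}$, then shortcut $Q_i$ to a simple path $Q_i'$, observe that $Q_i' \cup \prm B'[x_{i+1},x_i]$ encloses $\coreface$, and conclude $\wei{\prm B'[x_i,x_{i+1}]} \leq \tfrac12\wei{\prm B'}$. That structure is correct. (Your detour through Lemma~\ref{lem:pluseps} is redundant but harmless: once $(Q_i', \prm B'[x_i,x_{i+1}])$ is a $(2\nice)$-carve, the definition of carve already gives $\wei{\prm B'[x_i,x_{i+1}]} \leq \tfrac12\wei{\prm B'}$; also, you cannot ``replace $\nice$ by $2\nice$'' in Lemma~\ref{lem:pluseps} — the hypothesis only gives no short $\nice$-nice tree with $\nice = \tfrac{1}{36}$ — but since the lemma's actual conclusion is stronger than what you wrote, your bound is still valid.)

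The genuine gap is in the case where $p_i$ and $p_{i+1}$ are consecutive vertices of $C$. You write that a single edge ``cannot have weight exceeding $\wei{\prm B'}$'' so $\wei{C[p_i,p_{i+1}]}$ is ``bounded by, say, $2\nice\wei{\prm B'}$ as well,'' and later that this case ``only makes $\wei{W_i}$ smaller.'' Neither claim is justified: in the weighted setting a single edge $p_ip_{i+1}$ of $C$ can have weight close to $(\tfrac12-4\nice)\wei{\prm B'}$, which for $\nice=\tfrac{1}{36}$ is $\tfrac{7}{18}\wei{\prm B'}$ — far larger than $2\nice\wei{\prm B'} = \tfrac{1}{18}\wei{\prm B'}$ — so your term-by-term bound $\wei{P_i} + \wei{C[p_i,p_{i+1}]} + \wei{P_{i+1}} \leq (\tfrac14-2\nice)\wei{\prm B'} + 2\nice\wei{\prm B'} + (\tfrac14-2\nice)\wei{\prm B'}$ fails. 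The correct fix (which is what the paper does) is to invoke the full strength of property~(ii) of Theorem~\ref{thm:ananas}: ``each \emph{point} of $C$ is within distance $(\tfrac14 - 2\nice)\wei{\prm B'}$ from $V(\prm B')$,'' which by the definition given at the start of Section~\ref{sec:sliding} means precisely
\[
\dist_{B'}(p_i, V(\prm B')) + \dist_{B'}(p_{i+1}, V(\prm B')) + \wei{p_ip_{i+1}} \leq 2\Bigl(\tfrac14 - 2\nice\Bigr)\wei{\prm B'}.
\]
Since $\wei{P_i} = \dist_{B'}(p_i, V(\prm B'))$ and $\wei{P_{i+1}} = \dist_{B'}(p_{i+1}, V(\prm B'))$ (Theorem~\ref{thm:ananas}(iii) guarantees the restricted shortest paths realize the true distance), this yields $\wei{Q_i} \leq (\tfrac12 - 4\nice)\wei{\prm B'} \leq (\tfrac12 - 2\nice)\wei{\prm B'}$ directly. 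The point is that when the edge $p_ip_{i+1}$ is heavy, the paths $P_i, P_{i+1}$ are correspondingly short, so you must bound the sum jointly rather than term by term.
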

\begin{proof}
Consider the walk $Q_i := P_i \cup C[p_i,p_{i+1}] \cup P_{i+1}$ that connects
$x_i$ and $x_{i+1}$. We claim that:
\begin{equation}\label{eq:wei:portals:Pi}
\wei{Q_i} \leq \textstyle \left(\frac{1}{2} - 2\nice\right) \wei{\prm B'}.
\end{equation}
Indeed, if $\wei{C[p_i,p_{i+1}]} \leq 2\nice \wei{\prm B'}$, then as each vertex of $C$ is at distance at most $(\frac{1}{4}-2\nice)\cdot\wei{\prm B'}$ from $V(\prm B')$ by the construction of $C$ and Theorem~\ref{thm:ananas}, the paths $P_{i}$ and $P_{i+1}$ have length at most $(\frac{1}{4}-2\nice)\wei{\prm B'}$, and the claim follows. Otherwise, by the construction of the pegs, $p_ip_{i+1}$ is an edge of $C$. Now, the claim follows from the fact that each point of $C$ (and in particular every point of the edge $p_ip_{i+1}$) is within distance at most $(\frac{1}{4}-2\nice)\wei{\prm B'}$ from $V(\prm B')$ and the assumption that $C[p_i,p_{i+1}] \neq \prm B'[x_i,x_{i+1}]$.

Observe that $W_i = Q_i \cup \prm B'[x_i,x_{i+1}]$. We claim that:
\begin{equation}\label{eq:wei:portals:Wi}
\wei{W_i} \leq \textstyle \left(1 - 2\nice\right) \wei{\prm B'}.
\end{equation}
If the paths $P_{i}$ and $P_{i+1}$ intersect, then $x_i=x_{i+1}$ by the construction of $P_{i}$ and $P_{i+1}$,
and~\eqref{eq:wei:portals:Wi} is immediate from~\eqref{eq:wei:portals:Pi} and the choice of $\nice$.
So assume that the paths $P_{i}$ and $P_{i+1}$ do not intersect. In particular, $x_i \not=x_{i+1}$.
Let $z_i$ be the vertex of $V(P_i) \cap V(C[p_i,p_{i+1}])$ that lies closest to $x_i$ on $P_i$; define $z_{i+1}$ similarly with respect to $P_{i+1}$.
Observe that $z_i$ lies closer to $p_i$ on $C[p_i,p_{i+1}]$ than $z_{i+1}$, as otherwise $P_i[p_i,z_i]$ and $P_{i+1}[p_{i+1},z_{i+1}]$
would intersect (recall that none of these paths contain an edge strictly enclosed by $C$).
Hence, $C[z_i,z_{i+1}]$ is a subpath of $C[p_i,p_{i+1}]$.
Let $Q_i' = P_i[x_i,z_i] \cup C[z_i,z_{i+1}] \cup P_{i+1}[z_{i+1},x_{i+1}]$.
Observe that $Q_i'$ is a simple path of length at most $\wei{Q_i} \leq (\frac{1}{2}-2\nice)\wei{\prm B'}$ by~\eqref{eq:wei:portals:Pi}.
Moreover, the closed walk $W_i' := Q_i' \cup \prm B'[x_i,x_{i+1}]$ does not enclose any point strictly enclosed by $C$.
%its subpath $C[z_i,z_{i+1}]$ neighbours the interior of $C$ to the left, traversing $C$ counter-clockwise, and $W_i'$ does not contain any edge strictly enclosed by $C$.
Hence, $Q_i' \cup \prm B'[x_{i+1},x_{i}]$ encloses the whole of $C$, and thus in particular the core face $\coreface$.
Thus, $(Q_i', \prm B'[x_{i+1},x_i])$ is not a $(2\nice)$-carve,
despite that $\wei{Q_i'} \leq \textstyle \left(\frac{1}{2} - 2\nice\right) \wei{\prm B'}$. Therefore, it must be that $\wei{\prm B'[x_{i+1},x_{i}]} > \frac{1}{2}\wei{\prm B'}$, and thus $\wei{\prm B'[x_i,x_{i+1}]} \leq \frac{1}{2}\wei{\prm B'}$.
Then~\eqref{eq:wei:portals:Wi} follows from~\eqref{eq:wei:portals:Pi}. 

It remains to observe that
$\sum_{B_{1} \in \Bb_i} \wei{\prm B_{1}} \leq \wei{W_i} \leq (1-2\nice)\wei{\prm B'}$.
\end{proof}

\begin{lemma}\label{lem:wei:portals:shrink}
If $\mathtt{partition}(B')$ recursively calls
$\mathtt{partition}(B^\ast)$, then $\wei{\prm B^\ast} \leq (1-\nice/2)\cdot\wei{\prm B'}$.
\end{lemma}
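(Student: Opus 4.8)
The statement asserts that every recursive call inside $\mathtt{partition}$ passes to a brick whose perimeter has shrunk multiplicatively by a factor $(1-\nice/2)$. The plan is to split into the two ways a recursive call can be generated and verify the bound in each case. The brick $B'$ on which $\mathtt{partition}$ is called must be large (a tiny brick is put into $\Bfin$ without recursing), so $\wei{\prm B'} > \frac{\eps}{\portalbound}\wei{\prm B}$; this will not be needed directly, but it guarantees that the decomposition tools apply. There are exactly two mechanisms by which the subbrick $B^\ast$ being recursed upon is created: either (i) the algorithm of Theorem~\ref{thm:nice-testing-wei} returned a $(3+2\nice)$-short $(\nice/2)$-nice brick covering $\Bb(B')$ and $B^\ast \in \Bb(B')$, or (ii) the algorithm detected that $B'$ admits no short $\nice$-nice tree, the cycle $C$ of Theorem~\ref{thm:ananas} was constructed, and $B^\ast \in \Bb_i$ for some $i$ with $C[p_i,p_{i+1}] \neq \prm B'[x_i,x_{i+1}]$.

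\textbf{Case (i): the nice decomposition.} Here $B^\ast$ lies in the brick covering $\Bb(B')$ output by Theorem~\ref{thm:nice-testing-wei}, which is guaranteed to be $(\nice/2)$-nice. By the definition of $(\nice/2)$-niceness, $\wei{\prm B^\ast} \le (1-\nice/2)\wei{\prm B'}$, which is exactly the claimed inequality. This case is immediate.

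\textbf{Case (ii): the sliding/pineapple decomposition.} Here $B^\ast$ is one of the bricks in $\Bb_i$, so $\wei{\prm B^\ast} \le \sum_{B_1 \in \Bb_i}\wei{\prm B_1}$. By Lemma~\ref{lem:wei:portals:bb} (applicable precisely because $C[p_i,p_{i+1}] \neq \prm B'[x_i,x_{i+1}]$), this sum is at most $(1-2\nice)\wei{\prm B'}$. Since $\nice \le 2\nice$ — in fact $1-2\nice \le 1-\nice/2$ for $\nice>0$ — we get $\wei{\prm B^\ast} \le (1-2\nice)\wei{\prm B'} \le (1-\nice/2)\wei{\prm B'}$, as required. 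Combining the two cases completes the proof.

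\textbf{Main obstacle.} There is essentially no obstacle once Lemma~\ref{lem:wei:portals:bb} is in hand; the only subtlety is bookkeeping — making sure that the two enumerated mechanisms are genuinely the only sources of recursive calls (in particular, that the brick $B^C$ is inserted into $\Bfin$ rather than recursed upon, so it never appears as a $B^\ast$ here), and that the inequality $1-2\nice \le 1-\nice/2$ is noted so that the weaker bound of Case (ii) still yields the uniform $(1-\nice/2)$ statement. The slightly awkward point worth a sentence of care is that in Case (ii) we need the hypothesis of Lemma~\ref{lem:wei:portals:bb} ($C[p_i,p_{i+1}] \neq \prm B'[x_i,x_{i+1}]$), but this is exactly the condition under which the bricks $\Bb_i$ are formed in the description of $\mathtt{partition}$, so it holds for every $i$ that contributes a recursive call.
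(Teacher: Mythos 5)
Your proof is correct and follows exactly the same two-case structure as the paper's own argument (niceness of $\Bb(B')$ in the first case, Lemma~\ref{lem:wei:portals:bb} in the second); you just spell out the bookkeeping — that $B^C$ goes into $\Bfin$ rather than being recursed on, and that $1-2\nice \le 1-\nice/2$ — which the paper leaves implicit.
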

\begin{proof}
If a $(3+2\nice)$-short $\nice/2$-nice brick covering $\Bb$ has been found in $B'$,
   then the claim follows from the niceness of $\Bb$.
In the second case, when Theorem~\ref{thm:ananas} is invoked,
the claim follows by Lemma~\ref{lem:wei:portals:bb}.
\end{proof}

\begin{lemma}\label{lem:wei:portals:perimeter}
There exists a universal constant $C$ such that the following holds:
for any call  $\mathtt{partition}(B')$, we have $\sum_{B^\ast \in \Bb^r(B')} \wei{\prm B^\ast} \leq C\wei{\prm B'}$.
\end{lemma}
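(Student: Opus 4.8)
\textbf{Proof proposal for Lemma~\ref{lem:wei:portals:perimeter}.}

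The plan is to bound $\sum_{B^\ast \in \Bb^r(B')} \wei{\prm B^\ast}$ separately in the two cases of the procedure $\mathtt{partition}$, and to take $C$ to be the maximum of the two resulting constants. In the first case, when the algorithm of Theorem~\ref{thm:nice-testing-wei} returns a $(3+2\nice)$-short $(\nice/2)$-nice brick covering $\Bb(B')$, we have $\Bb^r(B') = \Bb(B')$ and, by the shortness guarantee, $\sum_{B^\ast \in \Bb^r(B')}\wei{\prm B^\ast}\leq (3+2\nice)\wei{\prm B'}$. Since $\nice=\frac{1}{36}$ is a fixed constant, this is bounded by, say, $4\wei{\prm B'}$, which is of the required form.

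In the second case, when Theorem~\ref{thm:ananas} is invoked, recall that $\Bb^r(B') = \bigcup_{i=1}^{s}\Bb_i$, where $s=\Oh(1)$ by~\eqref{eq:wei:portals:s}, and each $\Bb_i$ is the brick family extracted from the closed walk $W_i = P_i \cup C[p_i,p_{i+1}] \cup P_{i+1} \cup \prm B'[x_i,x_{i+1}]$ (for those indices $i$ with $C[p_i,p_{i+1}] \neq \prm B'[x_i,x_{i+1}]$; for the remaining indices $\Bb_i$ contributes nothing). For each such index $i$, Lemma~\ref{lem:wei:portals:bb} gives $\sum_{B_1 \in \Bb_i}\wei{\prm B_1}\leq \wei{W_i}\leq (1-2\nice)\wei{\prm B'}$. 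Summing over all $i\in\{1,\dots,s\}$, we obtain
\[
\sum_{B^\ast \in \Bb^r(B')}\wei{\prm B^\ast}\;=\;\sum_{i=1}^{s}\sum_{B_1 \in \Bb_i}\wei{\prm B_1}\;\leq\; s\,(1-2\nice)\wei{\prm B'}\;\leq\; \frac{16}{\nice^3}\wei{\prm B'},
\]
using the bound $s\leq \frac{16}{\nice^3}$ from~\eqref{eq:wei:portals:s}. With $\nice=\frac{1}{36}$ this is again a universal constant times $\wei{\prm B'}$.

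Taking $C := \max\{4,\, 16/\nice^{3}\} = 16/\nice^{3}$ (a universal constant, since $\nice$ is fixed) handles both cases uniformly, proving the lemma. The only mild subtlety is making sure that the indices $i$ for which $C[p_i,p_{i+1}] = \prm B'[x_i,x_{i+1}]$ are correctly handled: in that situation no new bricks are created for that index (the walk $W_i$ degenerates), so it contributes zero to the sum, and restricting the application of Lemma~\ref{lem:wei:portals:bb} to the non-degenerate indices is exactly what is needed. There is no real obstacle here; the work has already been done in Lemmata~\ref{lem:wei:portals:bb} and~\eqref{eq:wei:portals:s}, and this lemma is just the bookkeeping that aggregates those bounds.
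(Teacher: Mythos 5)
Your proposal is correct and follows essentially the same route as the paper: in the covering case you invoke the $(3+2\nice)$-shortness of $\Bb(B')$, and in the core-face case you aggregate Lemma~\ref{lem:wei:portals:bb} over the at most $s = \Oh(1)$ indices using~\eqref{eq:wei:portals:s}. The paper states this more tersely but does exactly the same bookkeeping.
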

\begin{proof}
The claim is immediate for any $C \geq 3+2\nice$ in the case when
a $(3+2\nice)$-short $\nice/2$-nice brick partition $\Bb$ has been found in $B'$.
In the second case, when Theorem~\ref{thm:ananas} is invoked,
note that the claim follows for sufficiently large $C$ by Lemma~\ref{lem:wei:portals:bb}
and the bound of~\eqref{eq:wei:portals:s} that $s = \Oh(1)$.
\end{proof}

\begin{lemma}\label{lem:wei:portals:calls}
There exists a universal constant $c$ such that the following holds:
for any call $\mathtt{partition}(B')$, in the subtree of the recursion tree
rooted at this call there are at most
$$c \left(\frac{\portalbound}{\eps} \cdot \frac{\wei{\prm B'}}{\wei{\prm B}}\right)^c$$
calls to $\mathtt{partition}(B^\ast)$
where $B^\ast$ is large
(i.e., the call $\mathtt{partition}(B^\ast)$ does not finish after the first step).
\end{lemma}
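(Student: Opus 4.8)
The plan is to prove the bound by structural induction on the subtree of the recursion tree of $\mathtt{partition}$, using only the two facts already established: every brick on which we recurse has perimeter at most $(1-\nice/2)$ times the perimeter of its parent (Lemma~\ref{lem:wei:portals:shrink}), and the perimeters of all recursed-upon children of a brick $B'$ sum to at most $C\wei{\prm B'}$ for a universal constant $C$ (Lemma~\ref{lem:wei:portals:perimeter}). Throughout, $\nice=\tfrac{1}{36}$ is fixed as in the algorithm, and it is convenient to abbreviate $R(B'):=\tfrac{\portalbound}{\eps}\cdot\tfrac{\wei{\prm B'}}{\wei{\prm B}}$, so that $R(B')>1$ exactly when $B'$ is large, and the statement to be proven for a call $\mathtt{partition}(B')$ reads: the subtree rooted at it contains at most $c\,R(B')^c$ calls on large bricks. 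First I would fix the universal constant $c$ to be any integer with $c\ge 2$ and $cC\,(1-\nice/2)^{c-1}\le c-1$; such $c$ exists since $(1-\nice/2)^{c-1}$ decays exponentially in $c$ while $c-1$ grows, so the left-hand side tends to $0$. (The induction is well founded because $B$ has only finitely many subbricks and the perimeter strictly decreases along each recursion edge, stopping once a brick becomes tiny.)

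The base case is a tiny brick $B'$: then $\mathtt{partition}(B')$ stops after its first step, makes no recursive calls, and is itself not a call on a large brick, so the count is $0\le c\,R(B')^c$ (legitimate as $R(B')>0$ always). For the inductive step, let $B'$ be large; writing $N(\cdot)$ for the count in question and $\Bb^r(B')$ for the set of subbricks on which $\mathtt{partition}$ recurses, we have $N(B')=1+\sum_{B^\ast\in\Bb^r(B')}N(B^\ast)$ (this also covers the degenerate case $\Bb^r(B')=\emptyset$, where $N(B')=1$). Applying the induction hypothesis to each child, and using $R(B^\ast)=R(B')\cdot\tfrac{\wei{\prm B^\ast}}{\wei{\prm B'}}$ together with Lemma~\ref{lem:wei:portals:shrink} (which gives $\max_{B^\ast}R(B^\ast)\le(1-\nice/2)R(B')$) and Lemma~\ref{lem:wei:portals:perimeter} (which gives $\sum_{B^\ast}R(B^\ast)\le C\,R(B')$), I would estimate
\[
\sum_{B^\ast\in\Bb^r(B')} R(B^\ast)^c \;\le\; \Big(\max_{B^\ast}R(B^\ast)\Big)^{c-1}\!\sum_{B^\ast}R(B^\ast)\;\le\; C\,(1-\nice/2)^{c-1}\,R(B')^c ,
\]
so that $N(B')\le 1+cC(1-\nice/2)^{c-1}R(B')^c$. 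Since $R(B')>1$ and $c\ge 2$ we have $1\le R(B')^c$, whence $N(B')\le\big(1+cC(1-\nice/2)^{c-1}\big)R(B')^c\le c\,R(B')^c$ by the choice of $c$. Taking $B'$ to be the brick at the root of the subtree in the lemma statement finishes the proof.

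The only part of the argument that is not pure bookkeeping is combining the two lemmas correctly: Lemma~\ref{lem:wei:portals:perimeter} on its own only says that the total perimeter can inflate by a constant factor each time we descend one recursion level, which would give a bound exponential in the recursion depth; the per-brick shrinking of Lemma~\ref{lem:wei:portals:shrink} is what makes the bound polynomial, and the clean way to exploit it is the H\"older-type inequality $\sum_i x_i^c\le(\max_i x_i)^{c-1}\sum_i x_i$ used above, with the maximum controlled by shrinking and the sum by the perimeter bound. Beyond that I only need to be slightly careful with the corner cases — an empty $\Bb^r(B')$, and the fact that $R(B')$ is always strictly positive (and strictly above $1$ for large bricks) — but none of these poses a real difficulty.
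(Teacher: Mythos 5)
Your proposal is correct and follows essentially the same approach as the paper: induction over the recursion tree, combining Lemma~\ref{lem:wei:portals:shrink} and Lemma~\ref{lem:wei:portals:perimeter} via the factorization $x^c=x^{c-1}\cdot x$ (the "H\"older-type" step you mention is exactly the manipulation the paper performs), and absorbing the additive $1$ and the constant $C$ into $c$ using that $R(B')>1$ for large bricks. The only differences are cosmetic: you make the choice of $c$ explicit where the paper says ``for sufficiently large $c$'', and you correctly carry the factor $(1-\nice/2)^{c-1}$ where the paper's displayed chain has a harmless slip writing $(1-\nice)^{c-1}$.
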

\begin{proof}
We prove the claim by induction, proceeding from the leaves to the root of the recursion tree.
The claim is clearly true for any positive $c$ 
if $B'$ is tiny, as no recursive call
is made.

Consider now a call $\mathtt{partition}(B')$ where 
$B'$ is large.
We use Lemmata~\ref{lem:wei:portals:shrink} and~\ref{lem:wei:portals:perimeter};
let $C$ be the constant given by the latter.
By the induction hypothesis, for sufficiently large $c$
that depends on $\nice=\frac{1}{36}$ and $C$, the number of calls in question
is bounded by 
\begin{align*}
&1+\sum_{B^\ast \in \Bb^r(B')} c \left(\frac{\portalbound}{\eps} \cdot \frac{\wei{\prm B^\ast}}{\wei{\prm B}}\right)^c \\
&\qquad \leq 1 + c \frac{\portalbound^c}{\eps^c} \sum_{B^\ast \in \Bb^r(B')} \frac{\wei{\prm B^\ast}}{\wei{\prm B}} (1-\nice)^{c-1} \left(\frac{\wei{\prm B'}}{\wei{\prm B}} \right)^{c-1} \\
&\qquad \leq 1 + c \left( \frac{\portalbound}{\eps} \cdot \frac{\wei{\prm B'}}{\wei{\prm B}} \right)^c
(1-\nice)^{c-1} \cdot C \\
&\qquad \leq c \left(\frac{\portalbound}{\eps} \cdot \frac{\wei{\prm B'}}{\wei{\prm B}}\right)^c.
\end{align*}
The last inequality follows for sufficiently large $c$ as
$$\left( \frac{\portalbound}{\eps} \cdot \frac{\wei{\prm B'}}{\wei{\prm B}} \right) > 1.$$
\end{proof}

By applying Lemma~\ref{lem:wei:portals:calls} to the root call $\mathtt{partition}(B)$ we obtain the
following:
\begin{corollary}\label{cor:wei:portals:calls}
In the entire run of the algorithm there are at most
$\poly(\eps^{-1},\portalbound)$ calls to $\mathtt{partition}(B')$
where $B'$ is large
\end{corollary}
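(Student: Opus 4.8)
\textbf{Proof plan for Corollary~\ref{cor:wei:portals:calls}.}
The plan is to obtain the statement as an immediate consequence of Lemma~\ref{lem:wei:portals:calls}. First I would observe that the entire execution of phase one consists of exactly one top-level invocation $\mathtt{partition}(B)$ together with all of its nested recursive calls; in other words, the set of all calls performed by the algorithm is precisely the set of calls appearing in the subtree of the recursion tree rooted at $\mathtt{partition}(B)$. Hence the number of calls $\mathtt{partition}(B')$ with $B'$ large that occur during the whole run equals the number of such calls in this rooted subtree.

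Next I would apply Lemma~\ref{lem:wei:portals:calls} with $B'$ taken to be the input brick $B$ itself. Since then $\wei{\prm B'} = \wei{\prm B}$, the ratio $\wei{\prm B'}/\wei{\prm B}$ equals $1$, so the bound furnished by the lemma simplifies to
\[
c\left(\frac{\portalbound}{\eps}\right)^{c},
\]
where $c$ is the universal constant from Lemma~\ref{lem:wei:portals:calls}. This quantity is clearly of the form $\poly(\eps^{-1},\portalbound)$, which is exactly the claimed bound on the number of calls $\mathtt{partition}(B')$ with $B'$ large (equivalently, calls that do not terminate after the first step because the brick is tiny).

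There is essentially no obstacle in this last step itself: all the difficulty has already been absorbed into Lemma~\ref{lem:wei:portals:calls}, whose inductive proof (from the leaves of the recursion tree upwards) must reconcile two competing effects — that each recursive call is made on a brick whose perimeter shrinks by a multiplicative factor (Lemma~\ref{lem:wei:portals:shrink}), but that a single call may spawn several recursive calls whose perimeters sum to at most a constant multiple of the current perimeter (Lemma~\ref{lem:wei:portals:perimeter}). The recursion bottoms out once the perimeter drops below $\frac{\eps}{\portalbound}\wei{\prm B}$, which is what produces the $(\portalbound/\eps)^{c}$ factor after telescoping. If anything, the only care needed here is to make sure that the recursion-tree-rooted-at-$\mathtt{partition}(B)$ viewpoint genuinely captures every call made by the algorithm, including the calls on the bricks of $\Bb_i$ and on $B^C$ generated in the Theorem~\ref{thm:ananas} branch; this is immediate from the description of $\mathtt{partition}$.
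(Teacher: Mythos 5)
Your proposal is correct and takes precisely the route the paper intends: the corollary is derived by applying Lemma~\ref{lem:wei:portals:calls} to the root call $\mathtt{partition}(B)$, where $\wei{\prm B'}/\wei{\prm B}=1$, yielding the bound $c(\portalbound/\eps)^c = \poly(\eps^{-1},\portalbound)$.
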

As a single call to $\mathtt{partition}(B')$ takes $\Oh(|V(B')| \log |V(B')|)$ time, we have also that:
\begin{corollary}\label{cor:wei:portals:time}
Phase one takes $\poly(\eps^{-1},\portalbound) |B| \log |B|$ time.
\end{corollary}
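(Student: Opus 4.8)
The plan is to charge the running time of Phase one to the individual invocations of the recursive procedure $\mathtt{partition}$, using Corollary~\ref{cor:wei:portals:calls} to bound the number of invocations on large bricks and charging each invocation on a tiny brick to its (necessarily large) parent.

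The first step is to verify that a single invocation $\mathtt{partition}(B')$, not counting the time spent inside the recursive sub-invocations it triggers, runs in $\Oh(|B'|\log|B'|)$ time. This is a matter of adding up the costs of the subroutines it calls: the algorithm of Theorem~\ref{thm:nice-testing-wei} runs in $\Oh(\nice^{-14}|B'|\log|B'|)$ time, which is $\Oh(|B'|\log|B'|)$ because $\nice=\tfrac{1}{36}$ is a fixed constant; the algorithms of Theorems~\ref{thm:core} and~\ref{thm:ananas} run in $\Oh(|B'|\log\log|B'|)$ time; and the greedy selection of the $s=\Oh(1)$ pegs $p_1,\dots,p_s$ on the cycle $C$ (the bound $s=\Oh(1)$ is~\eqref{eq:wei:portals:s}), the single shortest-path-tree computation producing the paths $P_i$, and the construction of the constantly many closed walks $W_i$, planar graphs $H_i$, brick families $\Bb_i$, and the brick $B^C$, each take linear time. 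In particular, all objects produced by the invocation, and most importantly the subbricks in $\Bb^r(B')$ that are passed on to the recursion, have total size $\Oh(|B'|\log|B'|)$, simply because they are produced within this running-time budget. (Using property~(\ref{p:nice-testing-wei:seven}) of Theorem~\ref{thm:nice-testing-wei} in one case, and $s=\Oh(1)$ together with the linear-time constructions of the $\Bb_i$ in the other, one in fact sees that each edge of $B'$ lies in only $\Oh(1)$ of the bricks of $\Bb^r(B')$, so this total size is really $\Oh(|B'|)$, but the cruder bound suffices below.)

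By Corollary~\ref{cor:wei:portals:calls}, there are at most $N:=\poly(\eps^{-1},\portalbound)$ invocations $\mathtt{partition}(B')$ in which $B'$ is large. Each such $B'$ is a subbrick of $B$, hence $|B'|\le|B|$, so by the first step the total time these invocations spend outside their recursive sub-invocations is $\Oh(N\,|B|\log|B|)=\poly(\eps^{-1},\portalbound)\,|B|\log|B|$. Every remaining invocation is made on a tiny brick, is a leaf of the recursion tree, and has a large brick as its parent (a tiny brick is inserted into $\Bfin$ and spawns no recursive calls); moreover an invocation on a tiny brick $B'$ costs only $\Oh(|B'|)$ time, as it merely computes $\wei{\prm B'}$, compares it with the threshold $\tfrac{\eps}{\portalbound}\wei{\prm B}$, and inserts $B'$ into $\Bfin$. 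Fixing a large invocation $\mathtt{partition}(B'')$, its tiny children have total size at most $\sum_{B'\in\Bb^r(B'')}|B'|=\Oh(|B''|\log|B''|)=\Oh(|B|\log|B|)$ by the first step, hence cost $\Oh(|B|\log|B|)$ time altogether; summing over the at most $N$ large invocations, all invocations on tiny bricks together cost $\Oh(N\,|B|\log|B|)=\poly(\eps^{-1},\portalbound)\,|B|\log|B|$. (If $B$ itself is tiny, i.e.\ $\eps\ge\portalbound$, then Phase one consists of the single tiny call $\mathtt{partition}(B)$, costing $\Oh(|B|)$ time.) Summing the two contributions yields the claimed running time.

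The step most in need of care is the first one: one must make sure that the stated running time of a single invocation genuinely covers the construction of all the --- possibly numerous --- subbricks that it feeds into the recursion, so that the input sizes of the children are controlled by the parent's own running time. Everything else is straightforward bookkeeping on top of Corollary~\ref{cor:wei:portals:calls}.
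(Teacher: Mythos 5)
Your proof is correct and takes the same approach the paper intends: the paper's entire argument is the one-line observation that a single invocation of $\mathtt{partition}(B')$ costs $\Oh(|V(B')|\log|V(B')|)$ time, combined with Corollary~\ref{cor:wei:portals:calls} bounding the number of invocations on large bricks by $\poly(\eps^{-1},\portalbound)$ and the trivial bound $|B'|\le|B|$. You spell out exactly this, and in addition make explicit the bookkeeping for the tiny leaf calls (charging them to their large parent via the parent's own running-time budget), which the paper leaves implicit; that extra care is welcome but does not change the underlying argument.
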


We now bound the size and the length of the bricks in $\Bfin$.

\begin{lemma}\label{lem:wei:portals:length}
The sum of the lengths of the perimeters of all bricks in $\Bfin$
is bounded by $\poly(\eps^{-1},\portalbound)\wei{\prm B}$.
\end{lemma}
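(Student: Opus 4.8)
The plan is to charge the perimeter of each brick in $\Bfin$ to the call of $\mathtt{partition}$ that inserted it, and then bound the resulting sum over all calls in the recursion tree. First I would observe that during a single call $\mathtt{partition}(B')$, the family $\Bfin(B')$ of bricks inserted into $\Bfin$ is either $\{B'\}$ (when $B'$ is tiny), or $\{B^C\}$ with $\wei{\prm B^C}=\wei{C}\leq\frac{16}{\nice^2}\wei{\prm B'}$ by property~\eqref{p:ananas:length} of Theorem~\ref{thm:ananas} (when $B'$ is large and Theorem~\ref{thm:ananas} was invoked), or empty (when $B'$ is large and Theorem~\ref{thm:nice-testing-wei} returned a brick covering). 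Since $\nice=\frac{1}{36}$, in every case $\sum_{B_1\in\Bfin(B')}\wei{\prm B_1}\leq\frac{16}{\nice^2}\wei{\prm B'}$, so it suffices to show that $\sum\wei{\prm B'}\leq\poly(\eps^{-1},\portalbound)\wei{\prm B}$, where the sum ranges over all calls $\mathtt{partition}(B')$ in the recursion tree.

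The next step is to split this sum according to whether $B'$ is large or tiny. For the large calls, I would first note that every brick $B'$ on which $\mathtt{partition}$ is called satisfies $\wei{\prm B'}\leq\wei{\prm B}$; this follows by an easy induction on the recursion depth, using Lemma~\ref{lem:wei:portals:shrink} for the inductive step (and triviality at the root). Combining this bound with Corollary~\ref{cor:wei:portals:calls}, which bounds the number of large calls by $\poly(\eps^{-1},\portalbound)$, the total contribution $\sum_{B'\ \mathrm{large}}\wei{\prm B'}$ of all large calls is $\poly(\eps^{-1},\portalbound)\wei{\prm B}$.

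For the tiny calls, the key observation is that every tiny call $\mathtt{partition}(B')$, other than the root call (which is handled trivially: if $B$ itself is tiny then $\Bfin=\{B\}$ and the lemma holds), is made recursively by a large call $\mathtt{partition}(B'')$ with $B'\in\Bb^r(B'')$. Hence I would group the tiny calls by their parent large call and apply Lemma~\ref{lem:wei:portals:perimeter} to each large $B''$, obtaining $\sum_{B'\ \mathrm{tiny}}\wei{\prm B'}\leq\sum_{B''\ \mathrm{large}}\sum_{B^\ast\in\Bb^r(B'')}\wei{\prm B^\ast}\leq C\sum_{B''\ \mathrm{large}}\wei{\prm B''}$, which is $\poly(\eps^{-1},\portalbound)\wei{\prm B}$ by the bound on the large calls just established. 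Adding the contributions of the large and tiny calls and multiplying by $\frac{16}{\nice^2}$ then yields $\sum_{B_1\in\Bfin}\wei{\prm B_1}\leq\poly(\eps^{-1},\portalbound)\wei{\prm B}$, as required.

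I do not expect a serious obstacle here, since all the heavy lifting has been done in Lemmata~\ref{lem:wei:portals:shrink} and~\ref{lem:wei:portals:perimeter} and in Corollary~\ref{cor:wei:portals:calls}. The only point that requires a little care is that a single large call can spawn a very large number of children in $\Bb^r$, and in the edge-weighted setting these children may have arbitrarily small perimeter; thus one cannot bound $\sum\wei{\prm B'}$ by (number of calls) times (maximum perimeter), and the contribution of the tiny leaves must instead be routed through the total-perimeter bound of Lemma~\ref{lem:wei:portals:perimeter} for their parent calls, as above.
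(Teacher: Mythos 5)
Your proof is correct and follows essentially the same route as the paper: both rely on Lemma~\ref{lem:wei:portals:shrink} (to get $\wei{\prm B'}\leq\wei{\prm B}$ for every call), Lemma~\ref{lem:wei:portals:perimeter} (to charge the total perimeter of tiny children to their parent large call), Corollary~\ref{cor:wei:portals:calls} (to bound the number of large calls), and property~(i) of Theorem~\ref{thm:ananas} (to bound $\wei{\prm B^C}$). The only difference is a cosmetic rearrangement of the bookkeeping — you uniformly charge $\Bfin(B')$ against $\frac{16}{\nice^2}\wei{\prm B'}$ and then sum $\wei{\prm B'}$ over the recursion tree, whereas the paper charges $\wei{\prm B^C}$ and the tiny children's perimeters directly to each large call — and your explicit handling of the tiny-root edge case, which the paper leaves implicit.
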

\begin{proof}
By Lemma~\ref{lem:wei:portals:shrink}, in each call $\mathtt{partition}(B')$
we have $\wei{\prm B'} \leq \wei{\prm B}$.
Consider a call $\mathtt{partition}(B')$
where $B'$ is large.
By Lemma~\ref{lem:wei:portals:perimeter}, the sum of
lengths of all perimeters of bricks $B^\ast \in \Bb^r(B')$
that are tiny (and hence will be inserted into $\Bfin$)
is bounded by $C\wei{\prm B'}$.
Moreover, if Theorem~\ref{thm:ananas} has been invoked,
  we have $\wei{\prm B^C} \leq \frac{16}{\nice^2} \wei{\prm B'}$.
Finally, by Corollary~\ref{cor:wei:portals:calls}, there are at most $\poly(\eps^{-1},\portalbound)$ calls $\mathtt{partition}(B')$ where $B'$ is large. The lemma follows.
\end{proof}

\begin{lemma}\label{lem:wei:portals:size}
The total number of edges and vertices in all bricks of $\Bfin$ is bounded
by $\poly(\eps^{-1},\portalbound)|B|$.
\end{lemma}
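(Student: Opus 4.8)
The plan is to bound $\sum_{B'\in\Bfin}(|V(B')|+|E(B')|)$ by counting, for each edge $g\in E(B)$, the number $c(g):=|\{B'\in\Bfin:g\in E(B')\}|$ of bricks of $\Bfin$ containing $g$. Since each element of $\Bfin$ is a \emph{subbrick} of $B$, it contains exactly the edges of $B$ enclosed by its perimeter, so $c(g)=|\{B'\in\Bfin:\prm B'\text{ encloses }g\}|$; and since every brick $B'$ is connected with $|E(\prm B')|\ge 3$, we have $|V(B')|\le|E(B')|+1\le 2|E(B')|$ and $|\Bfin|\le\sum_{B'\in\Bfin}|E(B')|$. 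Hence it suffices to prove $\sum_{g\in E(B)}c(g)=\poly(\eps^{-1},\portalbound)\,|B|$, which I will get from the pointwise bound $c(g)=\poly(\eps^{-1},\portalbound)$ for every $g$. (Note one \emph{cannot} instead bound $|B'|$ by the number of faces it encloses: a brick may have linearly many perimeter vertices but only one finite face, so the edge-by-edge argument is forced; this is the main conceptual obstacle, after which the rest is bookkeeping.)

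The first step is to route $c(g)$ through enclosed \emph{faces}. Let $f_1,f_2$ be the two faces of $B$ incident to $g$ (one of them being the outer face of $B$ in case $g\in E(\prm B)$). I claim every $B'\in\Bfin$ that encloses $g$ also encloses $f_1$ or $f_2$: if $g$ is strictly enclosed by $\prm B'$, then the interior of each face incident to $g$ is a connected set disjoint from the curve $\prm B'$ and containing points arbitrarily close to $g$, hence entirely strictly inside $\prm B'$; and if $g\in E(\prm B')$, then the unique finite face of $B'$ incident to $g$ is a face of $B$ (because $B'$, a subbrick, contains all edges enclosed by $\prm B'$) and therefore equals $f_1$ or $f_2$. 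Since a subbrick's perimeter never encloses the outer face of $B$, we obtain $c(g)\le\sum_{i=1,2}\psi(f_i)$, where $\psi(f):=|\{B'\in\Bfin:\prm B'\text{ encloses }f\}|$ and $\psi$ of the outer face is $0$. It remains to bound $\psi(f)$ by $\poly(\eps^{-1},\portalbound)$ for every finite face $f$ of $B$.

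To bound $\psi(f)$, I would classify the bricks of $\Bfin$ by how they arise in $\mathtt{partition}$: each is either a \emph{tiny} brick on which $\mathtt{partition}$ was invoked — and such a brick is a child in the recursion tree of some call $\mathtt{partition}(B'')$ with $B''$ large — or a brick $B^C$ produced in some large call that invoked Theorem~\ref{thm:ananas}. The number of $B^C$'s enclosing $f$ is at most the number of large calls. For a tiny brick $B^\ast$: it lies in the covering $\Bb(B'')$ of its parent $B''$, and if $B^\ast$ encloses $f$ then so does $B''$; moreover, for any fixed large call $\mathtt{partition}(B'')$ the number of bricks of $\Bb(B'')$ enclosing $f$ is $\Oh(1)$ — at most $7$ by property~(\ref{p:nice-testing-wei:seven}) of Theorem~\ref{thm:nice-testing-wei} in the decomposable case, and in the Ananas case at most $1$ (for $B^C$) plus at most one per index $i\le s$, since distinct bricks of $\Bb_i$ come from distinct biconnected cycle-components of the outer boundary $H_i$ and hence enclose interior-disjoint sets of faces, while $s=\Oh(1)$ by~\eqref{eq:wei:portals:s}. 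Therefore $\psi(f)\le(\#\text{large calls})+\Oh(1)\cdot|\{\text{large }\mathtt{partition}(B''):B''\text{ encloses }f\}|\le\Oh(1)\cdot(\#\text{large calls})$, which is $\poly(\eps^{-1},\portalbound)$ by Corollary~\ref{cor:wei:portals:calls}.

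Combining, $\sum_{B'\in\Bfin}|E(B')|=\sum_{g\in E(B)}c(g)\le 2|E(B)|\cdot\poly(\eps^{-1},\portalbound)=\poly(\eps^{-1},\portalbound)\,|B|$, and then $\sum_{B'\in\Bfin}(|V(B')|+|E(B')|)\le 3\sum_{B'\in\Bfin}|E(B')|=\poly(\eps^{-1},\portalbound)\,|B|$, using $|V(B')|\le|E(B')|+1$ and $|\Bfin|\le\sum_{B'}|E(B')|$. The degenerate situations where $B$ itself is tiny, or is a single cycle, give $\Bfin=\{B\}$ and hence the bound trivially. The only delicate points are the face-enclosure implication of the second paragraph and the "$\Oh(1)$ bricks of $\Bb(B'')$ per face" claim of the third; both reduce to facts already available in the excerpt (property~(\ref{p:nice-testing-wei:seven}), \eqref{eq:wei:portals:s}, and the biconnected-component structure of the boundary graphs used identically in Section~\ref{sec:finish}).
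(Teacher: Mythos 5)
Your proof is correct and rests on exactly the same two pillars as the paper's --- the bound on the number of large calls (Corollary~\ref{cor:wei:portals:calls}) and the bounded-overlap property of each $\Bb(B')$ (at most $7$ bricks per face in the decomposable case, a brick partition in the Ananas case) --- so it is essentially the same argument. The paper charges $\Oh(|B'|)=\Oh(|B|)$ directly to each large call and multiplies by the number of such calls; you instead count how many bricks of $\Bfin$ each edge of $B$ lands in, which is the same double summation taken in the opposite order, with the edge-to-face translation (which the paper folds, unstated, into the claim that $\Bb^r(B')$ has total size $\Oh(|B'|)$) spelled out explicitly.
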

\begin{proof}
Consider a call to $\mathtt{partition}(B')$
where $B'$ is large.
First, observe that in this call at most one brick is put into $\Bfin$.
Moreover, observe that the total number of edges and vertices in all recursive calls
$\mathtt{partition}(B^\ast)$ for $B^\ast \in \Bb^r(B')$ is $\Oh(|B'|)$.
Here we rely on the fact that in the algorithm of Theorem~\ref{thm:nice-testing-wei},
each face of $B'$ is contained in at most $7$ bricks of $\Bb(B')$,
and, if the algorithm of Theorem~\ref{thm:ananas} has been invoked,
then $\Bb(B')$ is a brick partition of $B'$.
Finally, recall that if 
$B'$ is tiny, then
we simply put $B'$ into $\Bfin$. The bound of the lemma follows from Corollary~\ref{cor:wei:portals:calls}.
\end{proof}

\subsubsection{Phase two: constructing $H$ from the decomposition}

In the second phase we derive the output graph $H$ from the brick covering $\Bfin$.

Consider first a graph $H_0 := \bigcup_{B_1 \in \Bfin} \prm B_1$.
By Lemma~\ref{lem:wei:portals:cover}, $H_0$ is connected and contains $\prm B$.
Pick any finite face $f$ of $H_0$. As $H_0$ is connected, the interior of $f$ is homeomorphic to an open disc.
Moreover, since $H_0$ is a union of simple cycles, there is no bridge in $H_0$ and, hence, each edge of $H_0$ appears on the boundary of $f$ at most once
(but $H_0$ may have articulation points, and one vertex may appear multiple times on the boundary of $f$).

Let $C^f$ be the walk in $B$ around the boundary of $f$
and let $G^f$ be the subgraph of $B$ consisting of all edges of $B$ that lie in $f$ or on the boundary of $f$ (i.e., all edges of $B$ that are enclosed by $C^f$).
Moreover, construct a brick $B^f$ from $G^f$ by `straightening' the boundary $C^f$, that is, for each appearance of a vertex $v$ on $C^f$, make a separate copy of $v$ adjacent to all edges
that were adjacent to this appearance. Observe that there is a natural homomorphism $\pi^f$ from $B^f$ to $G^f$ that is bijective on the edge set of $B^f$ and surjective on the vertex set.

For each brick $B^f$, apply Theorem~\ref{thm:wei:paths} to obtain a graph $H^f$. Output $H := \bigcup_f \pi^f(H^f)$, where the union
ranges over all finite faces of $H_0$. It remains to show that $H$ has the properties desired by Theorem~\ref{thm:wei:portals} and can be computed in the desired time.

As $\prm B^f \subseteq H^f$ for each face $f$, we have that $C^f \subseteq H$ for each $f$ and, consequently, $\prm B \subseteq H$.
By Theorem~\ref{thm:wei:paths} and Lemma~\ref{lem:wei:portals:length}, there is a universal constant $\gamma$ such that:
\begin{eqnarray*}
\wei{H} &\leq & \sum_f \wei{H^f} \\
& \leq & \sum_f \gamma\eps^{-4} \wei{\prm B^f} \\
    &=& \sum_f \gamma\eps^{-4} \wei{C^f} \\
    & \leq& \gamma\eps^{-4} 2\wei{H_0} \\
    &\leq& 2\gamma\eps^{-4} \sum_{B_1 \in \Bfin} \wei{\prm B_1}\\
    & \leq& 2\gamma\eps^{-4} \cdot \poly(\eps^{-1},\portalbound) \wei{\prm B}\\
    &\leq& \poly(\eps^{-1},\portalbound) \wei{\prm B}.
\end{eqnarray*}
Therefore, $\wei{H}$ satisfies the desired bound.

The following lemma shows that $H$ preserves approximate Steiner forests for any choice of terminal pairs on the perimeter of $B$.

\begin{lemma}\label{lem:wei:portals:H-ok}
For every set $\mathcal{\terms} \subseteq V(\prm B) \times V(\prm B)$ of size at most $\portalbound$, there exists a Steiner forest $F_H$ that connects $\mathcal{\terms}$ in $H$ such that $\wei{F_H} \leq \wei{F_B} + 2 \eps \wei{\prm B}$ for any Steiner forest $F_B$ that connects $\mathcal{\terms}$ in $B$.
\end{lemma}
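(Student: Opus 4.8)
The plan is to take an arbitrary terminal set $\mc{\terms} \subseteq V(\prm B) \times V(\prm B)$ of size at most $\portalbound$ and an arbitrary Steiner forest $F_B$ connecting $\mc{\terms}$ in $B$, and to produce a forest $F_H \subseteq H$ connecting $\mc{\terms}$ with $\wei{F_H} \leq \wei{F_B} + 2\eps\wei{\prm B}$. First I would invoke Lemma~\ref{lem:wei:portals:one-correct} to replace $F_B$ by an optimal Steiner forest $F$ connecting $\mc{\terms}$ that, in addition, has every vertex of degree at least three either lying on $\prm B_1$ for some $B_1 \in \Bfin$, or strictly enclosed by some tiny $B_1 \in \Bfin$. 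Note $\wei{F} \leq \wei{F_B}$, so it suffices to approximate $F$ within additive error $2\eps\wei{\prm B}$ inside $H$. Since $F$ is a Steiner forest, it is a disjoint union of trees, and it suffices to handle each tree separately (the error budget will be charged globally).

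The key structural observation is how $F$ interacts with the face decomposition $\{C^f\}$ used to build $H$. Recall $H_0 = \bigcup_{B_1 \in \Bfin}\prm B_1$, and $H$ is obtained by sparsifying, for each finite face $f$ of $H_0$, the brick $B^f$ obtained by straightening the boundary $C^f$. The idea is that $F \cap H_0 \subseteq H$ already (since every edge of every $\prm B_1$ is in $H_0$, and $C^f \subseteq H$ for each $f$, so $H_0 \subseteq H$). For the part of $F$ that is strictly enclosed by some $C^f$ (i.e., lives in the open face $f$), I claim every connected component of $F \cap \intr G^f$ is a path: indeed, any branching vertex of $F$ strictly enclosed by $C^f$ would have to be strictly enclosed by some $B_1 \in \Bfin$, and by Lemma~\ref{lem:wei:portals:one-correct} that $B_1$ is tiny; but then $\prm B_1 \subseteq H_0$ separates this vertex from $C^f$, so the component of $F$ containing it would be trapped inside $B_1$ — a contradiction unless... here I need to argue more carefully, but the upshot is that after possibly rerouting components trapped inside tiny bricks (buying the whole perimeter $\prm B_1 \subseteq H$ of such a brick, at cost $\leq \frac{\eps}{\portalbound}\wei{\prm B}$ each, and there are at most $\Oh(\portalbound)$ such reroutings since $F$ has at most $\Oh(\portalbound)$ leaves hence $\Oh(\portalbound)$ branching vertices), each remaining component of $F$ inside an open face $f$ is a simple path $Q'$ with endpoints on $C^f$. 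Pulling $Q'$ back through $\pi^f$, it is a path in $B^f$ between two vertices of $\prm B^f$, so by property (iii) of Theorem~\ref{thm:wei:paths} there is a path $P' \subseteq H^f$ between the same endpoints with $\wei{P'} \leq (1+\eps)\wei{Q'}$; projecting by $\pi^f$ gives a walk in $H$ of the same weight connecting the $\pi^f$-images of those endpoints, which are exactly the original endpoints of $Q'$ on $C^f$.

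Then I would assemble $F_H$ by taking $F \cap H_0$ together with all the replacement paths $\pi^f(P')$, one for each maximal subpath $Q'$ of (the rerouted) $F$ strictly enclosed by some $C^f$. Since distinct such subpaths $Q'$ lie in distinct open faces or are edge-disjoint, and each replacement preserves connectivity of endpoints, $F_H$ still connects $\mc{\terms}$. For the weight bound: $\wei{F \cap H_0} + \sum_{Q'}\wei{P'} \leq \wei{F \cap H_0} + (1+\eps)\sum_{Q'}\wei{Q'} \leq \wei{F} + \eps\sum_{Q'}\wei{Q'} \leq \wei{F} + \eps\wei{F} \leq \wei{F} + \eps\wei{\prm B}$, using that $F$ is optimal hence $\wei{F} \leq \wei{\prm B}$ (as $\prm B$ minus an edge connects any subset of $V(\prm B)$, and more generally any one tree of $F$ is shorter than $\prm B$). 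Adding the $\Oh(\portalbound) \cdot \frac{\eps}{\portalbound}\wei{\prm B} = \Oh(\eps)\wei{\prm B}$ paid for rerouting trapped components inside tiny bricks, and rescaling $\eps$ by a constant, gives the claimed $\wei{F_H} \leq \wei{F_B} + 2\eps\wei{\prm B}$.

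The main obstacle I anticipate is the rerouting argument for components of $F$ that are trapped strictly inside a tiny brick $B_1 \in \Bfin$ (or wind around articulation points of $H_0$): I must show that such a component can be replaced, at the cost of a subinterval of $\prm B_1$, by something that behaves well with respect to the face decomposition, and that this only needs to be done $\Oh(\portalbound)$ times overall so the total extra cost is $\Oh(\eps)\wei{\prm B}$. The bound on the number of branching vertices of $F$ (which is $\Oh(\portalbound)$ because $F$ has at most $2\portalbound$ leaves, all on $\prm B$) is what makes this affordable, exactly mirroring the accounting in Section~\ref{sec:finish}; I expect the bookkeeping of which face each piece of $F$ lands in, and ensuring the replacement paths remain pairwise compatible, to be the delicate part.
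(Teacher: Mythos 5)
Your plan follows the paper's proof in essentially every respect: replace $F_B$ via Lemma~\ref{lem:wei:portals:one-correct}, buy the perimeters of tiny bricks to eliminate branching vertices not lying on $V(H_0)$ (paying $\Oh(\portalbound)\cdot\frac{\eps}{\portalbound}\wei{\prm B}$), decompose the remainder of $F$ away from $H_0$ into maximal $H_0$-avoiding paths, and replace each such path inside the face brick $B^f$ using the $(1+\eps)$-approximate path guarantee of Theorem~\ref{thm:wei:paths}; the weight accounting then closes with $\wei{F_B}\leq\wei{\prm B}$.

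One small point worth tightening: the "a contradiction unless\ldots" framing is misleading. A branching vertex $v$ of $F$ strictly inside a face $f$ of $H_0$ and strictly enclosed by a tiny $\prm B_1$ is not a contradiction at all --- it is a genuine configuration (e.g., $C^f$ may coincide with $\prm B_1$ for a tiny $B_1$, so the whole interior of $B_1$ is the open face $f$). What the paper does, more cleanly, is perform the rerouting step \emph{first}, as an exhaustive normalization: as long as some degree-$\geq 3$ vertex of $F$ is not on $V(H_0)$, buy $\prm B_2$ for a tiny enclosing brick $B_2$ and take a spanning forest. Only after this does one define the equivalence classes on $F\setminus H_0$, and then they are automatically simple paths with both endpoints on $V(H_0)$, precisely because all branching vertices now lie on $V(H_0)$. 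Your proposal ultimately does the same thing, but presenting the rerouting as a reaction to a failed contradiction rather than as an upfront normalization muddies the reason the path decomposition works.
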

\begin{proof}
Let $F_B$ be a Steiner forest connecting $\mathcal{\terms}$ in $B$ of minimum possible length
that additionally satisfies the properties promised by Lemma~\ref{lem:wei:portals:one-correct}.
We construct a subgraph $F_H \subseteq H$ connecting $\mathcal{\terms}$ of length
at most $(1+\eps)\wei{F_B} + \eps \wei{\prm B}$. Since $\wei{F_B} \leq \wei{\prm B}$ (as $\prm B$ connects $\mathcal{\terms}$),
this would conclude the proof of the lemma.

First, construct a subgraph $F$ as follows. Start with $F = F_B$.
As long as there exists a vertex $v$ that is of degree at least three in $F$ and does not belong to $V(\prm B_1)$ for any $B_1 \in \Bfin$,
find any tiny $B_2 \in \Bfin$ such that $\prm B_2$ strictly encloses $v$, delete from $F$ all edges strictly enclosed by $\prm B_2$, add $\prm B_2$ instead, and take any spanning forest of the obtained graph. In this procedure we never introduce a vertex of degree at least three into $F$ that does not belong
to $V(H_0) = \bigcup_{B_1 \in \Bfin} V(\prm B_1)$, and hence such a tiny $B_2$ always exists by the properties of $F_B$ promised by Lemma~\ref{lem:wei:portals:one-correct}.
Moreover, as $|\mathcal{\terms}| \leq \portalbound$, $F_B$ contains at most $\portalbound$ vertices of degree at least three,
and in the construction of $F$ we made at most $\portalbound$ replacements.
Consequently,
$$\wei{F} \leq \wei{F_B} + \portalbound \cdot \frac{\eps}{\portalbound} \wei{\prm B} = \wei{F_B}+ \eps \wei{\prm B}.$$

Consider the graph $F \setminus H_0$. Recall that $F_B$ is a forest,
$F \setminus F_B \subseteq H_0$ (in the process of constructing $F$ we have only added edges of $H_0$ to $F$),
and each vertex of degree at least three in $F$ belongs to $V(H_0)$.
Consider the following relation on the edge set of $F \setminus H_0$: two edges $e_1,e_2$ are in relation
if and only if there exists a path in $F \setminus H_0$ that contains $e_1$ and $e_2$ and 
no internal vertex of this path belongs to $V(H_0)$.
Observe that this is an equivalence relation.
Moreover, as each vertex of degree at least three in $F$ belongs to $V(H_0)$, each equivalence class in this relation
is a path $P$ that connects two vertices of $V(H_0)$, but all internal vertices of $P$ do not belong to $V(H_0)$.

Let $\mathcal{P}$ be the family of equivalence classes of the aforementioned relation in $F \setminus H_0$.
For each path $P \in \mathcal{P}$, proceed as follows. As no edge and no internal vertex of $P$ belongs to $H_0$,
there exists a finite face $f$ of $H_0$ that contains $P$. Moreover, $(\pi^f)^{-1}(P)$ is a path in $B^f$, connecting
two vertices of $\prm B^f$. By the properties of $H^f$ (and in particular by Theorem~\ref{thm:wei:paths}), there exists a path $Q$ in $H^f$ connecting the same endpoints
and of length at most $(1+\eps)\wei{(\pi^f)^{-1}(P)} = (1+\eps)\wei{P}$. Hence, $\pi^f(Q)$ is a walk in $G^f$ connecting
the endpoints of $P$ of length at most $(1+\eps)\wei{P}$. To obtain a graph $F_H$, replace each $P$ with $\pi^f(Q)$ in the graph $F$.

By construction, $F_H \subseteq H$ and $F_H$ connects $\mathcal{\terms}$. Moreover, as each path $P \in \mathcal{P}$
has been replaced by a path of length at most $(1+\eps)\wei{P}$, we have that
$\wei{F_H} \leq (1+\eps)\wei{F_B} + \eps\wei{\prm B}$. This concludes the proof of the lemma.
\end{proof}

Observe that the lemma obtains an additive error $2\eps\wei{\prm B}$ instead of $\eps\wei{\prm B}$. The error of Theorem~\ref{thm:wei:portals} can be obtained by appropriately rescaling $\eps$
at the beginning of the algorithm.

Finally, observe that Lemma~\ref{lem:wei:portals:size} ensures that $H_0$ can be computed
in $\poly(\eps^{-1},\portalbound)|B|$ time, and, consequently, the graph $H$ can be computed in
in $\poly(\eps^{-1},\portalbound)|B| \log |B|$ time. This completes the proof of Theorem~\ref{thm:wei:portals}.

\subsection{Wrap up}\label{ss:wei:final}

We now pipeline the mortar graph construction of Borradaile et al~\cite{klein:planar-st-eptas}
with Theorem~\ref{thm:wei:portals} to conclude the proof of
Theorem~\ref{thm:weighted}.
In the language of brick coverings, the mortar graph construction of~\cite{klein:planar-st-eptas}
can be summarized as follows.
\begin{theorem}[\cite{klein:planar-st-eptas}, in particular Theorem 10.7]\label{thm:mortar}
Given a brick $B$ and an accuracy parameter $\eps > 0$, one can
in $\poly(\eps^{-1}) |B| \log |B|$ time compute a brick partition $\Bb$ of $B$
of total perimeter $(1+18\eps^{-1}) \wei{\prm B}$ such that the perimeter $\prm B'$ of each brick $B' \in \Bb$ can be partitioned into four paths $\north_{B'} \cup \west_{B'} \cup \south_{B'} \cup \east_{B'}$ (the so-called north, west, south, and east boundaries, appearing in this counter-clockwise order), such that:\begin{enumerate}
\item the total length of all parts $\west_{B'}$ and $\east_{B'}$ in all bricks of $\Bb$ is bounded
by $\eps \wei{\prm B}$; and
\item for any subgraph $F \subseteq B'$ of a brick $B' \in \Bb$, there exists 
a subgraph $F' \subseteq B'$ with the following properties:\label{p:mortar:replace}
\begin{enumerate}
\item $\wei{F'} \leq (1+c_1\eps)\wei{F}$ for some universal constant $c_1$;
\item there are at most $\alpha(\eps^{-1}) = o(\eps^{-5.5})$ vertices of 
$V(\north_{B'}) \cup V(\south_{B'})$ that are incident to an edge of $F'$ that does not belong to
$\north_{B'} \cup \south_{B'}$;
\item if two vertices of $V(\north_{B'}) \cup V(\south_{B'})$ are connected by $F$, then they
are also connected by $F'$.
\end{enumerate}
\end{enumerate}
\end{theorem}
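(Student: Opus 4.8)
This statement is the mortar-graph construction of Borradaile, Klein, and Mathieu~\cite{klein:planar-st-eptas}, so the plan is to reconstruct their argument; it proceeds in two nested carving steps followed by a structural analysis of the resulting cells. First I would carve $B$ into strips using the strip decomposition of Lemma~\ref{lem:strips}, obtaining a $(2\eps^{-1}+3)$-short brick partition into strips $B_1,\dots,B_p$, each with a south boundary $\south_{B_j}$ that is a shortest path and a north boundary $\north_{B_j}$ all of whose proper subpaths are $\eps$-short. These paths will furnish the north and south boundaries of the final cells, once they are cut into subpaths.

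Second, inside each strip $B_j$ I would insert \emph{columns} exactly as in the proof of Theorem~\ref{thm:wei:strips}: traverse $\north_{B_j}$ and greedily mark $x_0,x_1,\dots$ so that $\dist_{\north_{B_j}}(x_{i-1},x_i)$ barely exceeds $\eps\cdot\dist_{B_j}(x_{i-1},\south_{B_j})$, and take for each $x_i$ a shortest path $C_i$ down to $\south_{B_j}$. The telescoping estimate of Lemma~\ref{lem:wei:colbound} bounds $\sum_i \wei{C_i}$ by $\eps^{-1}\wei{\north_{B_j}}$; to squeeze this down to an $\eps$-fraction of $\wei{\prm B}$ rather than an $\eps^{-1}$-fraction, one has to be more careful, essentially rescaling the column spacing by $\eps$ and ``doubling'' each column into two parallel copies so that it becomes the east side $\east$ of one cell and the west side $\west$ of the next. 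The bricks of $\Bb$ are then the cells cut out by the columns; each is bounded counter-clockwise by a subpath of some $\north_{B_j}$, a column copy $\east$, a subpath of some $\south_{B_j}$, and another column copy $\west$. Property~1 and the total-perimeter bound $(1+18\eps^{-1})\wei{\prm B}$ follow by adding the strip perimeters from Lemma~\ref{lem:strips} to the column lengths, with the precise constant emerging from this bookkeeping. Everything here runs in $\poly(\eps^{-1})\,|B|\log|B|$ time using the linear-time planar shortest-path routine and Klein's multiple-source shortest paths, exactly as in Lemmata~\ref{lem:wei:colfind} and~\ref{lem:wei:paths:time}.

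The third and essential step is the structure theorem, Property~2, proved about the cells produced above. Given a cell $B'$ and an arbitrary $F\subseteq B'$, I would first ``snap'' $F$ onto the boundary: whenever $F$ contains a subpath with both endpoints on $\north_{B'}$ (resp.\ both on $\south_{B'}$), replace it by the corresponding boundary subpath. Since proper subpaths of the north and south boundaries are $\eps$-short, this costs at most a $(1+\eps)$ factor and cannot disconnect two boundary vertices, giving~2(a) and~2(c). What remains is~2(b): cutting the number of interface vertices — those incident to an $F'$-edge that leaves $\north_{B'}\cup\south_{B'}$ — down to $\alpha(\eps^{-1})=o(\eps^{-5.5})$. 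Here I would exploit the bounded ``aspect ratio'' of a cell guaranteed by the column construction, namely that the $B'$-distance between $\north_{B'}$ and $\south_{B'}$ is $O(\eps)$ times $\min(\wei{\north_{B'}},\wei{\south_{B'}})$, so that any part of $F'$ hanging off a long stretch of one boundary and later reconnecting to it can be rerouted ``vertically'' (down a fresh shortest path and back, or along the boundary) for a charge that is a tiny fraction of its own length; iterating this batches long runs of interface vertices into a per-$\eps$ constant number of active portals. Careful tracking of the charges over the rerouting rounds, together with connectivity-preserving surgery throughout, yields the claimed $o(\eps^{-5.5})$ bound and overall multiplicative error $(1+c_1\eps)$.

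The main obstacle is precisely this last rerouting-and-charging argument for~2(b): obtaining a $\poly(\eps^{-1})$ — indeed $o(\eps^{-5.5})$ — portal bound while losing only a $(1+\eps)$ factor and preserving all boundary connectivities is the technical heart of the Borradaile--Klein--Mathieu framework, and I would follow their proof essentially verbatim. The strip-and-column carving of the first two steps, by contrast, is a modest elaboration of the column construction already carried out in Section~\ref{ss:wei:paths}, and the running-time bound follows from the same shortest-path primitives used there.
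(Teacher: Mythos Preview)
The paper does not prove this statement; it is quoted as a black box from Borradaile, Klein, and Mathieu~\cite{klein:planar-st-eptas}, so there is no in-paper proof to compare against. Your first two steps (strip decomposition via Lemma~\ref{lem:strips}, then columns inside each strip) are the right skeleton for the mortar-graph construction, and the bookkeeping for the total perimeter and for Property~1 goes through essentially as you describe, modulo choosing the column spacing with the right power of $\eps$.

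The gap is in your argument for Property~2(b). Your plan hinges on the claim that each cell has bounded ``aspect ratio'', namely that the $B'$-distance between $\north_{B'}$ and $\south_{B'}$ is $O(\eps)\cdot\min(\wei{\north_{B'}},\wei{\south_{B'}})$, and that this lets you reroute ``vertically''. That claim is simply false for mortar cells: the supercolumns control only the distance from $\north_{B'}$ to $\south_{B'}$ at their own endpoints, and nothing in the construction prevents the strip from bulging arbitrarily between two columns. Consequently, a vertical-rerouting charge of the kind you describe is not available.

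The actual BKM argument for 2(b) uses a completely different mechanism: it relies solely on the fact that $\north_{B'}$ and $\south_{B'}$ are (proper subpaths of) $\eps$-short paths. After the ``snapping'' you describe, what remains are north--south crossings; the point is that if two such crossings land close together on, say, $\south_{B'}$, then one of them can be rerouted \emph{horizontally along $\south_{B'}$} to meet the other at a $(1+\eps)$ multiplicative cost, because $\south_{B'}$ is a shortest path. A pigeonhole/charging argument on the landing points along the $\eps$-short boundaries then forces the surviving interface vertices to be few in number, yielding the $\alpha(\eps^{-1})=o(\eps^{-5.5})$ bound. So the hard step is purely about $\eps$-shortness, not about cell geometry; your plan would need to be rewritten around that idea.
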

The algorithm of Theorem~\ref{thm:weighted} for a given brick $B'$ and accuracy parameter $\eps > 0$ can now be described as follows.
First, we compute the brick partition $\Bb$ of Theorem~\ref{thm:mortar}
for the parameter $\eps$ and brick $B$.
Second, for each $B' \in \Bb$, we invoke Theorem~\ref{thm:wei:portals}
for the brick $B'$, accuracy parameter $\eps' := \eps/(1+18\eps^{-1})$
and bound $\portalbound = (\alpha(\eps^{-1})+4)^2$.
Let $H(B')$ be the obtained subgraph for the brick $B'$.
We output $H = \bigcup_{B' \in \Bb} H(B')$. 

It remains to prove that $H$ has the properties desired by Theorem~\ref{thm:weighted} and can be computed in the desired time.
Clearly, $\prm B \subseteq H$.
By the bounds of Theorem~\ref{thm:wei:portals} and the fact that
$\alpha(\eps^{-1}) = o(\eps^{-5.5})$ we have that $\wei{H} \leq \poly(\eps^{-1}) \wei{B}$.
Moreover, as $\Bb$ is a brick partition, all calls to the algorithm
of Theorem~\ref{thm:wei:portals} run in total in $\poly(\eps^{-1}) |B| \log |B|$ time, and the
time bound of Theorem~\ref{thm:weighted} follows.
It remains to argue that $H$ preserves approximate Steiner forests
for terminals on the perimeter of $B$.

To this end, consider any $\mathcal{\terms} \subseteq V(\prm B) \times V(\prm B)$ and let
$F$ be a Steiner forest connecting $\mathcal{\terms}$ in $B$ of minimum possible length.
First, define $F_1 := F \cup \bigcup_{B' \in \Bb} \west_{B'} \cup \east_{B'}$ and
observe that $\wei{F_1} \leq \wei{F} + \eps \wei{\prm B}$ by point 1 of Theorem~\ref{thm:mortar}.
Then, for each $B' \in \Bb$ proceed as follows. Let $F_1(B')$ be the subgraph of $F_1$
consisting of all edges strictly enclosed by $\prm B'$.
Let $F_2(B')$ be the subgraph promised by 
point~\ref{p:mortar:replace} of Theorem~\ref{thm:mortar} for the subgraph
$F_1(B') \cup \west_{B'} \cup \east_{B'}$ of $B'$.
Define 
$$F_2 = \left(F_1 \setminus \bigcup_{B' \in \Bb} F_1(B')\right) \cup \bigcup_{B' \in \Bb} F_2(B').$$
By Theorem~\ref{thm:mortar}, we have
$$\wei{F_2(B')} \leq (1+c_1\eps)(\wei{F_1(B')} + \wei{\west_{B'}} + \wei{\east_{B'}}).$$
Hence, for some universal constant $c_2$,
$$\wei{F_2} \leq (1+c_1\eps)\wei{F_1} + (1+c_1\eps)\eps\wei{\prm B} \leq \wei{F_1} + c_2\eps \wei{\prm B}.$$

Observe that $\west_{B'},\east_{B'} \subseteq F_2$ for any $B' \in \Bb$.
For each $B' \in \Bb$, we now proceed as follows.
Define $F_2'(B')$ to be the subgraph of $F_2$ consisting of all edges strictly
enclosed by $\prm B'$; observe that $F_2'(B') \subseteq F_2(B')$.
Define $\mathcal{\terms}(B')$ to be the set of pairs $(x,y)$
for which $x,y \in V(\north_B) \cup V(\south_B)$, $x \neq y$,
and $x,y$ are in the same connected component of $F_2'(B') \cup \west_{B'} \cup \east_{B'}$.
Observe that if $(x,y) \in \mathcal{\terms}(B')$, then $x$ (and similarly $y$)
is an endpoint of $\north_{B'}$, an endpoint of $\south_{B'}$ or an endpoint
of an edge of $F_2'(B') \subseteq F_2(B')$ that is strictly enclosed by $\prm B'$.
By Theorem~\ref{thm:mortar} and our choice of $\portalbound$,
$|\mathcal{\terms}(B')| \leq \portalbound$.
Hence, by Theorem~\ref{thm:wei:portals},
  there exists a subgraph $F_3(B')$ that connects $\mathcal{\terms}(B')$ in $B'$,
  is contained in $H(B')$, and is of length
$$\wei{F_3(B')} \leq \wei{F_2'(B')} + \wei{\west_{B'}} + \wei{\east_{B'}} + \frac{\eps}{1+18\eps^{-1}} \wei{\prm B'}.$$
Define
$$F_3 = \left(F_2 \setminus \bigcup_{B' \in \Bb} F_2'(B')\right) \cup \bigcup_{B' \in \Bb} F_3(B').$$
As $\sum_{B' \in \Bb} \wei{\prm B'} \leq (1+18\eps^{-1}) \wei{\prm B}$
and $\sum_{B' \in \Bb} \wei{\west_{B'}} + \wei{\east_{B'}} \leq \eps\wei{\prm B}$,
we have that $\wei{F_3} \leq \wei{F} + c_3\eps \wei{\prm B}$ for some universal constant $c_3$.
Moreover, by construction $F_3 \subseteq H$.

We now argue that $F_3$ connects $\mathcal{\terms}$. 
As $F$ connects $\mathcal{\terms}$, so does $F_1$.
To analyse $F_2$ and $F_3$, we introduce the following notion:
for any $B' \in \Bb$ and $x \in V(\prm B')$, we set $\widehat{x}$
to be the common endpoint of $\north_{B'}$ and $\west_{B'}$ if $x \in V(\west_{B'})$,
the common endpoint of $\north_{B'}$ and $\east_{B'}$ if $x \in V(\east_{B'})$,
and $\widehat{x} = x$ otherwise.
Observe that 
if $x,y \in V(\prm B')$ are connected by $F_1(B')$,
then $\widehat{x}$ and $\widehat{y}$ are connected by
$F_1(B') \cup \west_{B'} \cup \east_{B'}$ and, consequently,
$\widehat{x}$ and $\widehat{y}$ are also connected by $F_2(B')$.
Moreover, an identical claim is true for $F_1(B')$ replaced by $F_2'(B')$
and $F_2(B')$ replaced by $F_3(B')$. As all west and east boundaries
of all bricks of $\Bb$ belong to $F_1$, $F_2$ and $F_3$, we infer that
$F_3$ indeed connects $\mathcal{\terms}$.
By taking $\eps/c_3$ instead of $\eps$ at the beginning of the algorithm,
   Theorem~\ref{thm:weighted} follows.

%!TEX root = pst-kernel.tex

\section{Applications: \pST{}, \pSF{} and \pemwcname{}}\label{sec:applications}

In this section we apply Theorem \ref{thm:main} to obtain polynomial kernels
for \pST{}, \pSF{} (parameterized by the number of edges in the tree or forest)
and \pemwcname{} (parameterized by the size of the cutset).
The applications to \pST{} and \pSF{} are rather straightforward, and rely
on the trick from the EPTAS~\cite{klein:planar-st-eptas} to cut the graph open
along an approximate solution.
For \pemwcname{} we need some more involved arguments to bound the diameter
of the dual of the input graph, before we apply Theorem \ref{thm:main}.

In all aforementioned problems, we consider the --- maybe more practical or natural ---
optimization variants of the problem, instead of the decision ones. That is,
we assume that the algorithm does not get the bound on the required tree, forest
or cut, but instead is required to kernelize the instance with respect to the
(unknown) optimum value. However, note that in all three considered problems
an easy approximation algorithm is known, and the output of such an algorithm
will be sufficient for our needs.

We also note that we do not care much about optimality of the exponents in the sizes
of the kernels, as any application Theorem \ref{thm:main} immediately
raises the exponents to the magnitude of hundreds. The main result
of our work is the existence of polynomial kernels, not the actual sizes.

\subsection{\pST{} and \pSF{}}\label{sec:pst-psf}
For both problems, we can apply the known trick of cutting open the graph along an approximate solution~\cite{klein:planar-st-eptas}, which when combined with Theorem~\ref{thm:main} gives the kernel.

\begin{theorem}[Theorem~\ref{thm:pst-intro} repeated] \label{thm:pst}
Given a \pST{} instance $(G,\terms)$,
one can in
$\Oh(k_{OPT}^{142} |G|)$ time find a set $F \subseteq E(G)$
of $\Oh(k_{OPT}^{142})$ edges that contains an optimal Steiner tree
connecting $\terms$ in $G$, where $k_{OPT}$ is the size of an optimal
Steiner tree.
\end{theorem}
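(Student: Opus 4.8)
The plan is to reduce the problem to a single application of Theorem~\ref{thm:main} by making all terminals lie on one face, following the cut-open trick of~\cite{klein:planar-st-eptas}. First I would compute a $2$-approximate Steiner tree $\Tapx$ connecting $\terms$ in $G$ in polynomial time (e.g.\ via the classical minimum-spanning-tree heuristic on the metric closure of $\terms$); then $|E(\Tapx)| \le 2k_{OPT}$. Note that we do not need to know $k_{OPT}$ in advance: the running time and output size will be bounded in terms of $|E(\Tapx)|$, which is at most $2k_{OPT}$, and if the resulting bound exceeds $|G|$ we simply return all of $E(G)$. I would first dispose of trivial degenerate cases (e.g.\ $G$ disconnected with terminals in several components is a no-instance, $|\terms| \le 1$ is trivial), so that $\Tapx$ is a genuine nonempty tree.

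Next I would cut the plane open along $\Tapx$. Concretely, fix a plane embedding of $G$, take the Euler tour of $\Tapx$ that traverses each edge once in each direction and respects the rotation system, duplicate every edge of $\Tapx$, and split each vertex $v$ of $\Tapx$ into $\deg_{\Tapx}(v)-1$ copies, redistributing the incident edges of $G$ and the two copies of each $\Tapx$-edge among the copies of $v$ so that the new outer face $F$ has boundary exactly the closed walk corresponding to the Euler tour. Call the resulting plane graph $\nG$, embedded with $F$ as the outer face. Since the Euler tour of $\Tapx$ has length $2|E(\Tapx)| \le 4k_{OPT}$, the perimeter $\prm\nG$ is a simple cycle of length at most $4k_{OPT}$, and every terminal of $\terms$ appears as (at least) one vertex on $\prm\nG$. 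Thus $\nG$ is a brick in the sense of the paper, and I would apply Theorem~\ref{thm:main} to it, obtaining a subgraph $\hat H \subseteq \nG$ with $\prm\nG \subseteq \hat H$, $|E(\hat H)| = \Oh(|\prm\nG|^{142}) = \Oh(k_{OPT}^{142})$, in time $\Oh(|\prm\nG|^{142}|V(\nG)|) = \Oh(k_{OPT}^{142}|G|)$, such that $\hat H$ contains an optimal Steiner tree in $\nG$ for \emph{any} terminal set on $\prm\nG$. Finally, let $\proj\colon \nG \to G$ be the natural projection (contracting the cut apart) and set $F := \proj(E(\hat H)) \subseteq E(G)$; clearly $|F| \le |E(\hat H)| = \Oh(k_{OPT}^{142})$.

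It remains to argue that $F$ contains an optimal Steiner tree of $(G,\terms)$. Take any optimal Steiner tree $T$ in $G$ connecting $\terms$. The preimage $\proj^{-1}(T) \cap \nG$ need not be connected, but it is a forest in $\nG$ whose leaves all lie on $\prm\nG$ (because cutting open only splits along $\Tapx$, and every such split point is copied onto the perimeter), so it decomposes into connected components $X_1,\dots,X_m$, each $X_j$ being a connector in $\nG$ with anchors $\terms_j := V(X_j)\cap V(\prm\nG)$, and $\sum_j |E(X_j)| = |E(T)| = k_{OPT}$. For each $j$, Theorem~\ref{thm:main}(iii) applied to the terminal set $\terms_j$ gives a connected subgraph $D_j \subseteq \hat H$ connecting $\terms_j$ with $|E(D_j)| \le |E(X_j)|$ --- here I would note (as is done in Lemma~\ref{lem:recursion}) that an optimal Steiner tree connecting $\terms_j$ has at most $|E(X_j)|$ edges since $X_j$ itself connects $\terms_j$. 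Now $\proj(D_j)$ is a connected subgraph of $F$ connecting $\proj(\terms_j)$ in $G$, and because the $\proj(\terms_j)$ glue back together exactly along the identifications that made $T$ connected, $\bigcup_j \proj(D_j)$ is a connected subgraph of $F$ spanning $\terms$ of size at most $\sum_j |E(X_j)| = k_{OPT}$; taking a spanning tree yields a Steiner tree in $F$ connecting $\terms$ of size at most $k_{OPT}$, hence an optimal one. The main obstacle in writing this up cleanly is the bookkeeping of the cut-open operation: verifying that $\prm\nG$ is a genuine simple cycle of the claimed length, that the projection behaves as a homomorphism that is a bijection on edges off $\Tapx$ and sends the copies of $\Tapx$-edges onto $E(\Tapx)$, and that connectivity of $T$ in $G$ translates precisely into the statement that the pieces $\proj(D_j)$ reassemble into a connected graph; all of this is routine but must be stated carefully.
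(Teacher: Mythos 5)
Your proposal is correct and follows essentially the same approach as the paper: cut $G$ open along a $2$-approximate Steiner tree, apply Theorem~\ref{thm:main} to the resulting brick, and project back. One small imprecision worth fixing in the write-up: the equality $\sum_j |E(X_j)| = |E(T)|$ fails as stated if $T$ shares edges with $\Tapx$, since each such edge has two copies in $\nG$; the paper handles this by first discarding, for every edge $e \in T \cap \Tapx$, one of the two preimage copies in $\proj^{-1}(T)$ before decomposing into components, and you should do the same (you do flag the bookkeeping around the projection at the end, so you are clearly aware of the issue, but the equality as written is off by $|T \cap \Tapx|$).
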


\begin{proof}
We first manipulate the graph such that all terminals lie on the outer face. To do this, we find a $2$-approximate Steiner tree $\Tapx$ for $\terms$ in $G$ in the following way.
We run a breadth-first search in $G$ from each terminal in $\terms$ to determine
a shortest path between each pair of the terminals. This takes $\Oh(|\terms||G|) = \Oh(k_{OPT}|G|)$ time. Define an auxiliary complete graph $G'$ over $\terms$, where the length of an edge between two terminals is the length of the shortest path between these two terminals that we computed earlier. We then compute a minimum spanning tree in $G'$. This tree induces a Steiner tree in $G$, which is $2$-approximate. Note that $k_{OPT} \leq |\Tapx| \leq 2k_{OPT}$.

We now cut the plane open along tree $\Tapx$, cf.~\cite{klein:planar-st-eptas} (see Figure~\ref{fig:cutopen}). That is, we create an Euler tour of $\Tapx$ that traverses each edge twice in different directions, and respects the plane embedding of $\Tapx$. Then we duplicate every edge of $\Tapx$, replace each vertex $v$ of $\Tapx$ with $d-1$ copies of $v$, where $d$ is the degree of $v$ in $\Tapx$, and distribute the copies in the plane embedding so that we obtain a new face $F$ whose boundary corresponding to the aforementioned Euler tour.
Then fix an embedding of the resulting graph $\nG$ that has $F$ as its outer face.
Observe that there exists a natural mapping $\pi$ from $E(\nG)$ to $E(G)$, i.e., edges in $\nG$ are
mapped to edges from which they where obtained. Moreover, note that the terminals
$\terms$ lie only on the outer face of $\nG$, and that $|\prm \nG| \le 4k_{OPT}$.

\begin{figure}
\centering
\includegraphics[width=.6\linewidth]{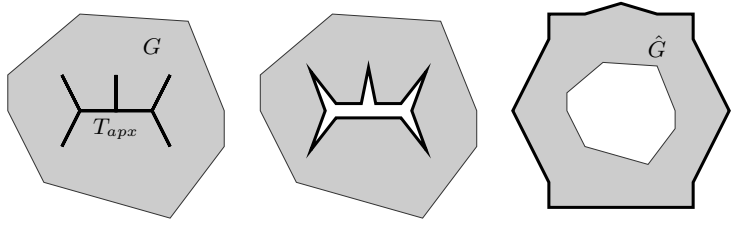}
\caption{(Figure~\ref{fig-intro:cutopen} repeated) The process of cutting open the graph $G$ along the tree $\Tapx$.}
\label{fig:cutopen}
\end{figure}

Finally, we obtain the kernel. Apply Theorem~\ref{thm:main} to $\nG$ to obtain a subgraph $\hat{H}$, which has size
$\Oh(|\prm \nG|^{142}) = \Oh(k_{OPT}^{142})$. Let $F = \pi(\hat{H})$. We show that $F$ is a kernel for $(G,\terms)$.
Clearly, $|\pi(\hat{H})|\leq |\hat{H}|\leq \Oh(k_{OPT}^{142})$. Let $T$ be an optimal Steiner tree in $G$ for $\terms$ and
consider $\pi^{-1}(T)$. If $\pi^{-1}(T)$ contains edges $e'$ and $e''$ for which there
exists an edge $e\in G$ such that $\pi(e')=\pi(e'')=e$, then arbitrarily remove either $e'$ or $e''$.
Let $\hat{T}$ denote the resulting graph. By construction, $|T| = |\hat{T}|$.
Observe that any connected component $C$ of $\hat{T}$ is a connector for $V(C) \cap V(\prm \nG)$.
Hence, there exists an optimal Steiner tree $T_C$ in $\hat{H}$ that connects $V(C) \cap V(\prm \nG)$.
Let $\hat{T}_H$ be the graph that is obtained from $\hat{T}$ by replacing $C$ with $T_C$ for each connected component of
$\hat{T}$. Observe that during each such replacement, $\pi(\hat{T}_H)$ remains connected, because $T$ was connected.
Again, by construction, $|\hat{T}_H| \le |\hat{T}|$. Now
observe that $\pi(\hat{T}_H)$ is a subgraph of $\pi(\hat{H})$ connecting $\terms$ in $G$, of not higher cost than $T$.
\end{proof}

For \pSF{}, we need to slightly preprocess the input instance, removing some obviously
unnecessary parts, to bound the diameter of each connected component.

\begin{theorem}[Theorem~\ref{thm:psf-intro} repeated]\label{thm:psf}
Given a \pSF{} instance $(G,\termpairs)$,
one can in
$\Oh(k_{OPT}^{710} |G|)$ time find a set $F \subseteq E(G)$
of $\Oh(k_{OPT}^{710})$ edges that contains an optimal Steiner forest
connecting $\termpairs$ in $G$, where $k_{OPT}$ is the size of an optimal
Steiner forest.
\end{theorem}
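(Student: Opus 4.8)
The plan is to reduce \pSF{} to (multiple applications of) \pST{}-style cutting-open, after first discarding irrelevant parts of the graph so that each connected component we work with has bounded diameter relative to $k_{OPT}$. First I would compute a simple constant-factor approximate Steiner forest $F_{apx}$: for each pair $(s,t)\in\termpairs$ run a breadth-first search and take a shortest $s$--$t$ path; the union of these paths connects $\termpairs$ and has size at most $|\termpairs|\cdot k_{OPT}\leq k_{OPT}^2$ (and in particular at least $k_{OPT}$). This costs $\Oh(|\termpairs|\,|G|)=\Oh(k_{OPT}^2|G|)$ time and is well within the claimed budget. Let $a := |F_{apx}| = \Oh(k_{OPT}^2)$.

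\medskip
\noindent Next I would prune: delete from $G$ every edge whose every endpoint is at distance more than $a$ from every terminal; equivalently, keep only edges within distance $a$ of $V(\terms):=\bigcup_{(s,t)\in\termpairs}\{s,t\}$, which can be found with a single BFS from a super-source adjacent to all terminals. An optimal Steiner forest $F_{OPT}$ has size $k_{OPT}\leq a$, so every vertex of $F_{OPT}$ is within distance $k_{OPT}\leq a$ of a terminal of its own component, hence survives the pruning; thus no optimal solution is destroyed, and $F_{apx}$ survives as well. After pruning, consider the connected components $G_1,\dots,G_q$ of the remaining graph. Each $G_j$ is planar, contains at least one terminal, and has the property that every vertex lies within distance $a$ of some terminal contained in $G_j$; since the terminals in $G_j$ that actually get connected are linked by $F_{apx}\cap G_j$ whose size is at most $a$, a standard argument bounds the diameter of the ``relevant'' part of $G_j$ by $\Oh(a)=\Oh(k_{OPT}^2)$. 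I would restrict attention, in each $G_j$, to the ball of radius $\Oh(a)$ around one fixed terminal, which contains all terminals of $\termpairs$ that are linked inside $G_j$ (any two terminals connected by $F_{OPT}$ lie within distance $k_{OPT}$ of each other) together with an optimal partial forest for them.

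\medskip
\noindent Now I would apply the cut-open trick of Theorem~\ref{thm:pst} componentwise. Fix $G_j$ and let $\termpairs_j$ be the pairs both of whose terminals lie in $G_j$. Take $\Tapx^{(j)} := F_{apx}\cap G_j$, which is connected (it is a union of shortest paths joining terminals of the same component, but if it is a forest with several trees, process each tree separately as a spanning connected subgraph), of size $\Oh(a)=\Oh(k_{OPT}^2)$, and contains all terminals of $\termpairs_j$. Cut the plane open along $\Tapx^{(j)}$ exactly as in the proof of Theorem~\ref{thm:pst}: create the Euler tour, duplicate edges, split vertices, and fix the outer face to be the new face $F$. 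This yields a brick $\widehat{G_j}$ with $|\prm\widehat{G_j}|\leq 4|\Tapx^{(j)}| = \Oh(k_{OPT}^2)$, and all terminals of $\termpairs_j$ lie on $\prm\widehat{G_j}$. Apply Theorem~\ref{thm:main} to $\widehat{G_j}$, obtaining $\widehat H_j$ of size $\Oh(|\prm\widehat{G_j}|^{142}) = \Oh(k_{OPT}^{284})$. Project back: $F := \bigcup_j \pi_j(\widehat H_j)\cup F_{apx}$. Summing over components, $|F| = \Oh(q\cdot k_{OPT}^{284})$; since $q\leq |\termpairs|\leq k_{OPT}$ components contain terminals (components with no terminal are irrelevant and discarded), $|F| = \Oh(k_{OPT}^{285})$ --- comfortably within $\Oh(k_{OPT}^{710})$, with the slack absorbing the precise constants in the diameter and ball-radius estimates. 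Correctness of the projection is exactly as in Theorem~\ref{thm:pst}: an optimal Steiner forest $F_{OPT}$ restricted to $G_j$ lifts to $\widehat{G_j}$, where it decomposes into connectors for subsets of $V(\prm\widehat{G_j})$; each connector can be replaced by an optimal connector inside $\widehat H_j$ (property (iii) of Theorem~\ref{thm:main}) without increasing the cost and while preserving all required connections among terminals; projecting back reconnects everything because $\pi_j$ identifies the copies that were split apart. Since Steiner forests are connected within each component independently, combining the per-component replacements gives an optimal Steiner forest for $\termpairs$ inside $F$.

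\medskip
\noindent The main obstacle I anticipate is the bookkeeping around the diameter bound and the ball-restriction: one must argue carefully that (a) pruning never removes any edge of some fixed optimal solution, (b) after pruning each relevant component genuinely has diameter $\Oh(k_{OPT}^2)$ so that $\Tapx^{(j)}$ and the cut-open perimeter are polynomially bounded, and (c) terminals of a pair $(s,t)$ that are \emph{not} connected in $F_{OPT}$ (because they lie in different components, which is allowed if $F_{OPT}$ need only connect each pair --- here $\termpairs$ pairs within the same component must be connected, so this case means $s,t$ are never in the same surviving component) are handled consistently, i.e.\ that the instance is infeasible to connect them or that they were never required to be. Also one must verify that after cutting open along $\Tapx^{(j)}$ the resulting graph is indeed a brick (boundary is a simple cycle) --- this follows verbatim from the construction in Theorem~\ref{thm:pst}. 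Apart from these topological and metric estimates, the argument is a direct componentwise replay of the \pST{} kernel.
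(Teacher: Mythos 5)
Your proposal follows the paper's high-level strategy (approximate solution, prune by distance, decompose into components, cut open, apply Theorem~\ref{thm:main}), but there are two substantive errors that break it as written.

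First, an arithmetic error that propagates: a forest with $k_{OPT}$ edges has at most $2k_{OPT}$ vertices, so it has at most $2k_{OPT}$ distinct terminals, but the number of \emph{pairs} in $\termpairs$ can be as large as $\binom{2k_{OPT}}{2} = \Theta(k_{OPT}^2)$, not $\Oh(k_{OPT})$. Hence $a = |F_{apx}| = \Oh(k_{OPT}^3)$, not $\Oh(k_{OPT}^2)$, and your later claim $q \leq |\termpairs| \leq k_{OPT}$ is also off. These inflate your final exponent (the paper's $710 = 142 \cdot 5$ comes exactly from a perimeter bound of $\Oh(k_{OPT}^5)$: $k_1 = \Oh(k_{OPT}^3)$ times $\sum_j |\terms_j| = \Oh(k_{OPT}^2)$, aggregated by convexity of $x \mapsto x^{142}$).

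Second, and more importantly: $F_{apx} \cap G_j$ is generally a \emph{forest}, not a tree, and your fallback of ``process each tree separately'' does not work. If two trees $T_1, T_2$ of $F_{apx}$ lie in the same pruned component $G_j$, an optimal Steiner forest may well merge the terminals served by $T_1$ with those served by $T_2$ into a single tree $T^\ast$ (this can be strictly cheaper than two separate trees). Cutting $G_j$ open along $T_1$ alone leaves the terminals of $T_2$ in the interior of the resulting brick, so property~(iii) of Theorem~\ref{thm:main} gives no guarantee for terminal sets that involve them, and $T^\ast$ (or any replacement for it) need not survive. For the same reason, restricting to a ball of radius $\Oh(a)$ around ``one fixed terminal'' can discard terminals of $G_j$ that belong to a different must-be-connected group, even though they might be optimally merged into a single tree. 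The paper avoids both problems by computing, inside each pruned component $G_0$, a \emph{fresh} $2$-approximate Steiner \emph{tree} $T_0$ connecting \emph{all} terminals in $G_0$ (not just re-using $F_{apx} \cap G_0$), and cutting open along that single tree so that every terminal of $G_0$ lands on the brick boundary. The needed size bound $|T_0| = \Oh(|\terms_0| k_1)$ comes from the pruning step: every vertex of $G_0$ is within distance $k_1$ of $\terms_0$, so the terminal metric admits a connecting structure of that total length. With this fix (and the corrected arithmetic) the argument closes exactly as in the paper.
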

\begin{proof}
Let $(G,\termpairs)$ be a \pSF{} instance.
A forest with $k_{OPT}$ edges has at most $2k_{OPT}$ vertices, and thus 
%$k_{OPT} = \Omega(|\termpairs|^{1/2})$, if we assume $\termpairs$ does not contain
%duplicates. Hence,
$|\termpairs| = \Oh(k_{OPT}^2)$.
We construct an approximate solution $T_1$, by taking a union of shortest $s_1s_2$-paths
for all $(s_1,s_2) \in \termpairs$. Clearly,
$k_{OPT} \leq |T_1| \leq |\termpairs|k_{OPT} = \Oh(k_{OPT}^3)$.
Let $k_1 = |T_1|$.

We remove from $G$ all vertices (and incident edges) that are at distance
more than $k_1$ from all terminals of $\termpairs$.
Clearly, no such vertices or edges are used
in a minimal solution for $(G,\termpairs)$ with
at most $k_1$ edges.

Consider each connected component of $G$ separately.
Let $G_0$ be a component of $G$ and let $\terms_0$ be the family of terminals of $\termpairs$
in $G_0$. In $\Oh(|\terms_0| \cdot |G_0|)$ time, we construct a $2$-approximate
Steiner {\em{tree}} $T_0$ connecting $\terms_0$ in $G_0$.
Note that, as each vertex of $G_0$ is within a distance at most $k_1$ from $\terms_0$,
we have $|T_0| = \Oh(|\terms_0| k_1)$.
As in the proof of Theorem \ref{thm:pst}, cut the graph $G_0$ open along
$T_0$, obtaining a brick $\nG_0$ of perimeter $|\prm \nG_0| = \Oh(|\terms_0| k_1)$.
Then apply the algorithm of Theorem \ref{thm:main} to $\nG_0$, obtaining
a subgraph $\hat{H}$. Finally, put the edges of $G_0$ that correspond to $\hat{H}$ into
the constructed subgraph $F$. By similar arguments as in the proof
of Theorem \ref{thm:pst}, $F$ contains a minimum Steiner forest for $(G,\termpairs)$.
The time bound and the bound on $|F|$ follows from the bound
$k_1 = \Oh(k_{OPT}^3)$ and
the fact that the union of all sets $\terms_0$ has size $2|\termpairs| = \Oh(k_{OPT}^2)$.
\end{proof}
We observe that the size of the kernel can be improved to $\Oh(k_{OPT}^{426})$ by running a constant-factor approximation algorithm for \pSF{} to construct the forest $T_{1}$. However, when using the EPTAS for \pSF{}~\cite{klein-efficient-psf}, this makes the algorithm run in $\Oh(k_{OPT}^{426} |G| + |G| \log^{3} |G|)$ time, which is no longer linear in $|G|$.

Another observation is that the size of the kernel can be improved if we consider a `classic' kernel. That is, a kernel for the decision variant of the problem: does the planar graph $G$ have a Steiner forest of size at most $k$? Then we can use $k$ instead of $k_{1}$ in the above proof and return a kernel of size $\Oh(k^{426})$ in $\Oh(k^{426} |G|)$ time.

%!TEX root = pst-kernel.tex

\newcommand{\reach}[2]{\mathtt{reach}(#1,#2)}
\newcommand{\ut}{\hat{t}}
\newcommand{\rel}{\mathcal{R}}
\newcommand{\dupaterms}{\widehat{\terms}}

\subsection{\pemwcname{}}\label{sec:emwc}

We are left with the case of \pemwcname{}.

\begin{theorem}[Theorem~\ref{thm:emwc-intro} repeated]\label{thm:emwc}
Given a \pemwcname{} instance $(G,\terms)$,
one can in polynomial time find a set $F \subseteq E(G)$
of $\Oh(k_{OPT}^{568})$ edges that contains an optimal solution
to $(G,\terms)$, where $k_{OPT}$ is the size of this optimal solution.
\end{theorem}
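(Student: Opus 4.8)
The plan is to reduce \pemwcname{} on a planar graph $G$ to an instance of \pST{} (in fact \pSF{}) on a cut-open version of the planar dual $G^\star$, and then apply Theorem~\ref{thm:main}. The starting point is the classical planar duality between multiway cuts and Steiner-type structures: a minimal edge set $X$ disconnecting the terminals $\terms$ in $G$ corresponds, in the dual $G^\star$, to a subgraph that must ``wrap around'' the terminal faces; more precisely, for each terminal $t \in \terms$ let $C_t$ be the cycle in $G^\star$ formed by the edges incident to $t$ in $G$, and an edge multiway cut of $G$ projects to a subgraph of $G^\star$ that, together with the $C_t$'s, connects all the terminal-cycles into one connected ``web''. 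By standard reductions (contracting vertices of terminals appearing more than once, bounding the number of edges incident to terminals) one gets $\sum_{t} |C_t| \le 2k_{OPT}$. So if we cut $G^\star$ open along a short connected subgraph $Y^\star$ that contains every $C_t$, we obtain a brick $\hat B$ whose perimeter has length $\Oh(|Y^\star|)$, all ``terminal information'' lies on $\prm{\hat B}$, and preserving an optimal Steiner tree of $\hat B$ for any choice of boundary terminals preserves an optimal multiway cut of $G$. This is exactly the maneuver used for \pST{} and \pSF{} in Theorems~\ref{thm:pst} and~\ref{thm:psf}, transplanted to the dual.

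The real difficulty — and the bulk of the work — is bounding the length of $Y^\star$, equivalently bounding the diameter of the dual $G^\star$ (after contracting irrelevant parts), so that $|\prm{\hat B}|$ is polynomial in $k_{OPT}$. The plan here is, as sketched in the overview, to fix one terminal $t$ and choose an inclusion-wise maximal laminar family $\mathcal{L}$ of minimal $(t,\terms\setminus\{t\})$-separators that are ``pushed away'' from $t$ — i.e., important separators in the sense of Marx~\cite{Marx06}. Because they are important and laminar, two distinct members of $\mathcal{L}$ have distinct sizes, and since the largest has size at most $2k_{OPT}$ (there are only that many edges incident to terminals), $|\mathcal{L}| \le 2k_{OPT}$ and the total number of edges in $\bigcup \mathcal{L}$ is $\Oh(k_{OPT}^2)$. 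The key lemma to prove is an irrelevance statement: if an edge $e$ of $G$ is at distance more than $c\,k_{OPT}$ (in $G$, for a universal constant $c$) from every separator in $\mathcal{L}$, then $e$ is never needed in an optimal multiway cut, hence $e$ may be safely contracted. The argument for this is topological/combinatorial: if some optimal $X$ uses $e$, then the dual component of $X$ through $e$ ``lives'' in the region between two consecutive separators of the laminar family (or between a separator and the terminal cells), and one shows it can be rerouted along (a portion of) one of these two bounding separators without increasing $|X|$ and without reconnecting any two terminals. After contracting all such far edges exhaustively, every remaining edge is within $\Oh(k_{OPT})$ of $\bigcup\mathcal{L}$, which has $\Oh(k_{OPT}^2)$ edges, so the diameter of $G^\star$ (and hence the length of a short connected $Y^\star$ containing all $C_t$'s, which we build by taking $\bigcup_t C_t$ and connecting components with shortest dual paths) is $\Oh(k_{OPT}^3)$.

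With this in hand, the wrap-up is routine: cut $G^\star$ open along $Y^\star$ to obtain a brick $\hat B$ with $|\prm{\hat B}| = \Oh(k_{OPT}^3)$; apply Theorem~\ref{thm:main} to $\hat B$, obtaining a subgraph $\hat H$ with $|E(\hat H)| = \Oh(|\prm{\hat B}|^{142}) = \Oh(k_{OPT}^{426})$; project $\hat H$ back to $G^\star$ and then to $G$, and add back all the contracted edges (there are at most $\Oh(k_{OPT})$ of them along relevant paths, but to be safe one keeps, for each pair of vertices of the cut-open boundary, one shortest dual path, exactly as in the proof of Theorem~\ref{thm:main}'s alternative decomposition). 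The final count $\Oh(k_{OPT}^{568})$ comes from the exponent $142$ acting on a perimeter of order $k_{OPT}^4$ — accounting for one more power lost in the duality/reduction bookkeeping than the naive $k_{OPT}^3$ — rather than $k_{OPT}^3$; I would track this carefully but expect $568 = 142 \cdot 4$ to be the source. As with Theorems~\ref{thm:pst} and~\ref{thm:psf}, one uses an easy constant-factor approximation for \pemwcname{}~\cite{mwc-approx} to obtain $k_{OPT}$ up to a constant factor, so the algorithm need not be told the optimum.

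The main obstacle, clearly, is the irrelevance lemma for far-away edges: making precise what ``the dual component of $X$ lives between two separators of $\mathcal{L}$'' means, and showing the rerouting along a bounding separator does not merge terminal classes. This requires a careful planar-topological argument about how a minimal multiway cut interacts with a laminar family of important separators, and it is where the constant $c$ and the polynomial exponent in the diameter bound are pinned down. Everything downstream — the duality dictionary between cuts and Steiner trees, the cut-open construction, and the final projection — parallels the arguments already developed for \pST{} and the decomposition machinery of Sections~\ref{sec:bricks}–\ref{sec:finish}, and should go through with only notational adaptation.
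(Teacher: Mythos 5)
Your proposal follows essentially the same route as the paper: preprocess with the standard reductions, build an increasing chain (a totally ordered laminar family) of important $(t,\terms\setminus\{t\})$-cuts whose sizes strictly increase and hence whose total size is $\Oh(k_{OPT}^2)$, prove that edges far from this family in the \emph{dual} of $G$ (not in $G$ itself, a small slip in your write-up) may be contracted, conclude a diameter bound of $\Oh(k_{OPT}^3)$ on the dual, and then cut the dual open along the terminal cycles plus a connecting tree — whose total length is $\Oh(k_{OPT}^4)$ because the connecting tree joins $\Oh(k_{OPT})$ components each at dual distance $\Oh(k_{OPT}^3)$ — and apply Theorem~\ref{thm:main}, giving the exponent $568 = 142\cdot 4$ exactly as you conjectured. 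The technical heart you flag (the irrelevance/rerouting lemma) is indeed where the work lies, and the paper proves it via an argument matching your sketch.
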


The idea of the kernel is that the \pemwc{} problem
is some sort of \textsc{Steiner Forest}-like problem
in $G^\ast$, the dual of $G$.
However, to apply Theorem \ref{thm:main},
we need to cut $G^\ast$ open so that Theorem \ref{thm:main}
can be applied to the brick created by this cutting.
To bound the perimeter of this brick, it suffices
to bound the diameter of $G^\ast$.
This is done in Section \ref{sec:emwc:diam},
via a separate reduction rule.
Earlier, in Section \ref{sec:emwc:prelims},
we perform a few (well-known) regularization reductions on the input graph.
Finally, in Section \ref{sec:emwc:finish}, we show formally
how to cut open $G^\ast$
and apply Theorem \ref{thm:main} to obtain the promised
kernel.

Note that, contrary to the case of \pST{} and \pSF{},
the preprocessing for \pemwc{} takes superlinear time, in terms of $|G|$.

In the rest of this section we assume that
$(G,\terms)$ is an input to \pemwc{} that we aim to kernelize.
Note that, contrary to the previous sections, $G$ may contain
multiple edges.
We fix some planar embedding of $G$, where multiple edges
are drawn in parallel in the plane, without any other
element of $G$ between them.

In the course of the kernelization algorithm, we may perform
two types of operations on $G$.
First, if we deduce for some $e \in E(G)$ that
there exists a minimum solution $X$ not containing $e$,
then we may contract $e$ in $G$.
During this contraction, any self-loops are removed, but multiple
edges are kept.
This operation is safe, because if $F$ is a subgraph of $G/e$ that has the properties promised by Theorem~\ref{thm:emwc}, then the projection of $F$ into $G$ satisfies those same properties.
Second, if we deduce for some edge $e$ that some minimum solution
$X$ to \pemwc{} on $(G,\terms)$ contains $e$, we may delete $e$ from $G$,
analyze $G \setminus \{e\}$ obtaining a set $F$, and return $F \cup \{e\}$.
As the size of the minimum solution to \pemwc{} decreases
in $G \setminus e$, the size of $F$ satisfies the bound promised in Theorem \ref{thm:main}. Note that both edge contractions and edge deletions preserve planarity of $G$.

In the course of the arguments, we provide a number of reduction rules.
At each step, the lowest-numbered applicable rule is used.

\subsubsection{Preliminary reductions}\label{sec:emwc:prelims}

In this section, we provide a few reduction rules to clean up the instance.

\begin{reduction}\label{red:emwc:2terms}\label{red:emwc:first}
If there is an edge $e$ that connects two terminals, then delete $e$
and include it into the constructed set $F$.
\end{reduction}
\begin{reduction}\label{red:emwc:empty-cc}
If $|\terms| \leq 1$, then return $F = \emptyset$.
\end{reduction}

Now, we take care of the situation when the input instance $(G,\terms)$
is in fact a union of a few \pemwc{} instances.

\begin{reduction}\label{red:emwc:cc}
If $G \setminus \terms$ is not a connected graph, then consider each of its connected component separately.
That is, if $C_1,C_2,\ldots,C_s$ are connected components of $G \setminus \terms$, separately
run the algorithm on instances $I_i = (G[C_i \cup N_G(C_i)], N_G(C_i))$ for $i=1,2,\ldots,s$, obtaining
sets $F_1,F_2,\ldots,F_s$. Return $F = \bigcup_{i=1}^s F_i$.
\end{reduction}
To see that Rule \ref{red:emwc:cc} is safe, first note that
since $G[\terms]$ is edgeless (as Rule~\ref{red:emwc:first} has been performed exhaustively), the instances $(I_i)_{i=1}^s$ partition
the edge set of $G$.
Consequently, any path connecting two terminals in $G$, without
any internal vertex being a terminal, is completely contained in one instance $I_i$.
Hence, a minimum solution to $(G,\terms)$ is the union of minimum solutions
to the instances $(I_i)_{i=1}^s$, and thus
if $k_{OPT}$ is the size of an minimum solution to $(G,\terms)$ and $k_{i,OPT}$
is the size of an minimum solution to $I_i$, then $k_{OPT} = \sum_{i=1}^s k_{i,OPT}$.
Moreover, if $|F_i| \leq c k_{i,OPT}^{568}$ for some constant $c > 0$,
then $|F|  \leq ck_{OPT}^{568}$, as the function $x \mapsto cx^{568}$ is convex.

Therefore, in the rest of this section we may assume that $G \setminus \terms$ is connected.

%We assume that $G$ is connected: if this is not the case,
%we indepentently kernelize each connected component of $G$.
%As a minimum solution to \pemwc{} on $G$ consists
%of minimum solutions in each connected component of $G$,
%and the bound promised in Theorem \ref{thm:emwc} is convex with respect
%to $k_{OPT}$, the union of the sets $F$ obtained for each connected
%component satisfy the properties of Theorem \ref{thm:emwc} for the entire graph $G$.

We now introduce some notation with regards to cuts in a graph.
For two disjoint subsets $A,B \subseteq V(G)$
we say that $X \subseteq E(G)$ is a {\em{$(A,B)$-cut}}
if no connected component of $G \setminus X$ contains
both a vertex of $A$ and a vertex of $B$.
For $A = \{a\}$ or $B = \{b\}$ we shorten this notion to
$(a,b)$-cut, $(a,B)$-cut and $(A,b)$-cut.
An $(A,B)$-cut $X$ is {\em{minimal}} if no proper subset of $X$
is an $(A,B)$-cut, and {\em{minimum}} if $|X|$ is minimum possible.
For $X \subseteq E(G)$ and $A \subseteq V(G)$
we define $\reach{A}{X}$ as the set of those vertices $v \in V(G)$
that are contained in a connected component of $G \setminus X$
with at least one vertex of $A$. Note that $X$ is a $(A,B)$-cut
if and only if $\reach{A}{X} \cap \reach{B}{X} = \emptyset$,
and $X$ is a minimal $(A,B)$-cut if additionally each edge
of $X$ has one endpoint in $\reach{A}{X}$, and second endpoint
in $\reach{B}{X}$. For a vertex $t$, we write $\reach{t}{X}$ instead of $\reach{\{t\}}{X}$.
For any $Q \subseteq V(G)$, we define $\delta(Q)$
as the set of edges of $G$ with exactly one endpoint in $Q$.
Note that if $A \subseteq Q$ and $B \cap Q = \emptyset$,
then $\delta(Q)$ is a $(A,B)$-cut. Moreover,
if $X$ is a $(A,B)$-cut then $\delta(\reach{A}{X}) \subseteq X$
and if $X$ is a minimal $(A,B)$-cut
then $\delta(\reach{A}{X}) = X$.

This section relies on the submodularity of
the cut function $\delta()$:
\begin{lemma}[submodularity of cuts~\cite{gomory-hu}]
For any $P,Q \subseteq V(G)$ it holds that:
$$|\delta(P)| + |\delta(Q)| \geq |\delta(P \cup Q)| + |\delta(P \cap Q)|.$$
\end{lemma}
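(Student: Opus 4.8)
The plan is to prove the inequality by a purely local, edge-by-edge counting argument, which is the standard way submodularity of the cut function is established. First I would partition the vertex set $V(G)$ into four (possibly empty) classes according to membership in $P$ and $Q$: namely $A_{11} = P \cap Q$, $A_{10} = P \setminus Q$, $A_{01} = Q \setminus P$, and $A_{00} = V(G) \setminus (P \cup Q)$. Every edge $e = uv$ of $G$ then has its two endpoints lying in some unordered pair of these four classes, which gives ten cases in total (the four cases where both endpoints lie in the same class, and the six cases where they lie in distinct classes).

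Next, for a fixed edge $e$, I would observe that $e$ contributes $1$ to $|\delta(S)|$ precisely when exactly one endpoint of $e$ lies in $S$, and $0$ otherwise. Writing $\mathrm{lhs}(e)$ for the total contribution of $e$ to $|\delta(P)| + |\delta(Q)|$ and $\mathrm{rhs}(e)$ for its total contribution to $|\delta(P \cup Q)| + |\delta(P \cap Q)|$, it then suffices to check $\mathrm{lhs}(e) \geq \mathrm{rhs}(e)$ in each of the ten cases, and sum over all $e \in E(G)$. This check is entirely mechanical: if both endpoints lie in the same class then $\mathrm{lhs}(e) = \mathrm{rhs}(e) = 0$; if one endpoint lies in $A_{10}$ and the other in $A_{01}$ then $\mathrm{lhs}(e) = 2$ while $\mathrm{rhs}(e) = 0$ (this is the case that can make the inequality strict); and in each of the remaining cases one verifies directly that $\mathrm{lhs}(e) = \mathrm{rhs}(e)$, with common value $1$ (for the pairs $A_{11}$--$A_{10}$, $A_{11}$--$A_{01}$, $A_{10}$--$A_{00}$, $A_{01}$--$A_{00}$) or $2$ (for the pairs $A_{11}$--$A_{00}$).

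There is essentially no obstacle here beyond bookkeeping; the only mild point of care is that $G$ is allowed to have parallel edges in this section, so $|\delta(S)|$ counts edge-copies with multiplicity, but the per-edge case analysis above is insensitive to this. As an alternative, one may simply invoke the classical fact, attributed above to Gomory and Hu, that $S \mapsto |\delta(S)|$ is a submodular set function; since the surrounding argument only uses the inequality as a black box, citing this reference is equally acceptable, but the short counting proof sketched here can also be spelled out in full in one paragraph.
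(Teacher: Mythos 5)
Your edge-counting argument is correct: partitioning $V(G)$ into the four classes $P\cap Q$, $P\setminus Q$, $Q\setminus P$, $V(G)\setminus(P\cup Q)$ and comparing the per-edge contributions to the two sides is the standard and complete proof, and all ten cases check out as you describe (with the single strict case being an $A_{10}$--$A_{01}$ edge, contributing $2$ versus $0$). You are also right that the argument is insensitive to parallel edges, which matters because $G$ is a multigraph in this part of the paper.

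The only difference from the paper is that the paper does not prove this lemma at all; it states it and cites Gomory--Hu as a reference, treating submodularity of the cut function as classical. So your proposal is strictly more than what the paper does. Either choice is fine in context — the surrounding text uses the inequality purely as a black box — but if you do include a proof, the one you sketch is the right one and can be written out in a short paragraph exactly as you outline.
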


From the submodularity of cuts we infer that if
$X$ and $Y$ are minimum $(A,B)$-cuts, then $\delta(\reach{A}{X} \cup \reach{A}{Y})$ and $\delta(\reach{A}{X} \cap \reach{A}{Y})$ are minimum
$(A,B)$-cuts as well. Therefore, there exists a unique minimum
$(A,B)$-cut $K$ with inclusion-wise maximal $\reach{A}{K}$.
We call this cut {\em{the minimum $(A,B)$-cut furthest from $A$}}.
Moreover, this cut can be computed in polynomial time (see for example~\cite{Marx06}).

The submodularity of cuts also yields the following
known reduction rule (cf.~\cite{mwc-alg-4k}).
\begin{reduction}\label{red:emwc:behind-cut}
For all $t \in \terms$, let $K_t$ be the minimum $(t,\terms\setminus \{t\})$-cut
furthest from $t$. If
$K_t \neq \delta(t)$ for some $t \in \terms$, then contract all edges with both
endpoints in $\reach{t}{K_t}$ (i.e., contract $\reach{t}{K_t}$ onto $t$).
\end{reduction}
Clearly, Reduction \ref{red:emwc:behind-cut}
can be applied in polynomial time. Note that 
if this rule is not applicable, then $\delta(\ut)$ is the unique minimum $(\ut,\terms \setminus \{\ut\})$-cut.
For completeness, we provide the proof of its safeness.

\begin{lemma}\label{lem:emwc:behind-cut:correct}
Let $K_t$ be the minimum $(t,\terms\setminus \{t\})$-cut furthest from $t$.
Then there exists a minimum solution to $(G,\terms)$ that does not contain
any edge with both endpoints in $\reach{t}{K_t}$.
\end{lemma}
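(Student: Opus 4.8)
\textbf{Proof plan for Lemma~\ref{lem:emwc:behind-cut:correct}.}
The plan is to take an arbitrary minimum solution $X$ to $(G,\terms)$ and surgically modify it so that it avoids all edges with both endpoints in $\reach{t}{K_t}$, without increasing its size and without destroying the property that it separates all terminals. Write $R = \reach{t}{K_t}$, so that $K_t = \delta(R)$ and $t \in R$ while $\terms \setminus \{t\} \subseteq V(G) \setminus R$. First I would observe that $K_t = \delta(R)$ is a minimum $(t, \terms\setminus\{t\})$-cut, so in particular $|K_t| \le |\delta(\reach{t}{X})|$, because $X$ is also a $(t,\terms\setminus\{t\})$-cut and hence $\delta(\reach{t}{X}) \subseteq X$ is a $(t,\terms\setminus\{t\})$-cut. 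The candidate replacement is $X' := (X \setminus E(G[R])) \cup K_t$, i.e. we throw away every edge of $X$ that lies strictly inside $R$ and instead add the whole cut $K_t$ around $R$.

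The two things to verify are that $X'$ is still a multiway cut for $\terms$ and that $|X'| \le |X|$. For the separation property: any path $P$ in $G$ between two distinct terminals $t_1, t_2$; if neither endpoint is $t$, then $P$ is separated already by $X \setminus E(G[R])$ unless $P$ uses an edge strictly inside $R$ --- but to do that $P$ must enter and leave $R$, crossing $\delta(R) = K_t \subseteq X'$ twice, so $X'$ cuts $P$. If one endpoint is $t$, then $P$ starts inside $R$ and must reach $t_2 \notin R$, so it crosses $\delta(R) = K_t \subseteq X'$. Hence $X'$ separates all terminal pairs. For the size bound I would use submodularity of $\delta(\cdot)$, which is the key step: apply it to $P = R$ and $Q = \reach{t}{X}$. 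Since $K_t$ is the minimum $(t,\terms\setminus\{t\})$-cut \emph{furthest} from $t$, and both $R$ and $R \cup \reach{t}{X}$ are ``$t$-sides'' of $(t,\terms\setminus\{t\})$-cuts (the latter because $\delta(R \cup \reach{t}{X})$ is still such a cut), maximality of $R$ forces $\reach{t}{X} \subseteq R$, i.e. $R \cup \reach{t}{X} = R$. Then submodularity gives $|\delta(R)| + |\delta(\reach{t}{X})| \ge |\delta(R)| + |\delta(R \cap \reach{t}{X})|$, hence $|\delta(\reach{t}{X})| \ge |\delta(R \cap \reach{t}{X})| = |\delta(\reach{t}{X})|$ --- this particular inequality is trivial, so the real use of submodularity is slightly different: I would instead directly argue that $\reach{t}{X} \subseteq R$ via the ``furthest'' property (if $\reach{t}{X} \not\subseteq R$ then $\delta(R \cup \reach{t}{X})$ is a $(t,\terms\setminus\{t\})$-cut of size $\le |K_t|$ by submodularity, contradicting maximality of $R$, unless it equals $K_t$, in which case $R \cup \reach{t}{X} = R$ again by uniqueness of the furthest cut). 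Once $\reach{t}{X} \subseteq R$, every edge of $K_t = \delta(R)$ has an endpoint outside $R$, hence outside $\reach{t}{X}$; comparing $X$ and $X'$, the edges removed are $X \cap E(G[R])$ and the edges added are $K_t \setminus X$, and one shows $|K_t \setminus X| \le |X \cap E(G[R])|$ by noting that $X \supseteq \delta(\reach{t}{X})$, that $\delta(\reach{t}{X}) \subseteq E(G[R]) \cup K_t$, and running the minimality/minimum-cut comparison $|K_t| = |\delta(R)| \le |\delta(\reach{t}{X})|$ carefully on the two sides of $R$.

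The main obstacle I anticipate is exactly this bookkeeping in the size comparison: one must partition $X$ relative to $R$ (edges inside $R$, edges of $\delta(R)$, edges outside $R$) and argue that replacing the ``inside + boundary-fragment'' part of $X$ by the full boundary $K_t$ does not increase the count. The clean way is to show $\reach{t}{X'} = R$ and $X' \cap \delta(R) = K_t$ and then invoke that $K_t$ is a \emph{minimum} $(t,\terms\setminus\{t\})$-cut while $X$ already contains some $(t,\terms\setminus\{t\})$-cut ``at least as far in'' --- formally, $\delta(\reach{t}{X})$ is a $(t,\terms\setminus\{t\})$-cut, so $|K_t| \le |\delta(\reach{t}{X})|$, and $\delta(\reach{t}{X}) = (X \cap E(G[R])) \,\dot\cup\, (K_t \cap X)$ once we know $\reach{t}{X} \subseteq R$; therefore $|X \cap E(G[R])| = |\delta(\reach{t}{X})| - |K_t \cap X| \ge |K_t| - |K_t \cap X| = |K_t \setminus X|$, which is precisely the inequality $|X'| \le |X|$. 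Finally I would remark that $X'$ contains no edge with both endpoints in $R$: edges of $K_t$ have an endpoint outside $R$ by definition of $\delta(R)$, and we deleted all of $X \cap E(G[R])$; this gives the conclusion. This also justifies the contraction in Reduction~\ref{red:emwc:behind-cut}, since contracting $R$ onto $t$ exactly realizes the restriction to solutions disjoint from $E(G[R])$.
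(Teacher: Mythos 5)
Your proof has a genuine gap, and it is worth localizing it precisely because you actually flag the spot yourself (``this particular inequality is trivial, so the real use of submodularity is slightly different'') and then patch it incorrectly.

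The first problem is the claim $\reach{t}{X} \subseteq R$. Your justification is that if $\reach{t}{X} \not\subseteq R$ then submodularity would make $\delta(R \cup \reach{t}{X})$ a $(t,\terms\setminus\{t\})$-cut of size $\le |K_t|$, contradicting maximality of $R$. But submodularity, combined with $|\delta(R\cap\reach{t}{X})|\ge|K_t|$, only yields $|\delta(R\cup\reach{t}{X})|\le|\delta(\reach{t}{X})|$, not $\le |K_t|$; since $X$ is a minimum multiway cut rather than a minimum $(t,\terms\setminus\{t\})$-cut, $|\delta(\reach{t}{X})|$ can strictly exceed $|K_t|$, and the ``furthest'' property is never triggered. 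And indeed the inclusion is just false: take $t$ joined to $a$ both directly and via a path through $c$, with $a$ joined to $t_1$, $a$ joined to $b$, and $b$ joined to $t_2,t_3$; then $K_t=\{at_1,ab\}$ so $R=\{t,c,a\}$, while the minimum multiway cut $X=\{at_1,bt_2,bt_3\}$ has $\reach{t}{X}=\{t,c,a,b\}\not\subseteq R$.

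This breaks the rest of the bookkeeping, and your construction $X'=(X\setminus E(G[R]))\cup K_t$ is not size-preserving: in the example above $|X'|=4>3=|X|$. (Here $X$ happens to already witness the lemma, so the lemma is not threatened, but your proof is, since it claims $|X'|\le|X|$ for an \emph{arbitrary} minimum $X$.) Even granting $\reach{t}{X}\subseteq R$, the equality $\delta(\reach{t}{X})=(X\cap E(G[R]))\,\dot\cup\,(K_t\cap X)$ is not an equality but only a containment $\subseteq$: edges of $X$ with both endpoints in $R\setminus\reach{t}{X}$ appear on the right but not the left, and edges of $K_t\cap X$ whose $R$-endpoint lies in $R\setminus\reach{t}{X}$ likewise fail to lie in $\delta(\reach{t}{X})$. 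So the chain of inequalities you run does not hold.

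The fix the paper uses is to not add all of $K_t=\delta(R)$, but rather $\delta(\reach{t}{X}\cup R)$, and to remove $\delta(\reach{t}{X})\cup(X\cap E(G[R]))$ rather than just $X\cap E(G[R])$. That is, $Y:=(X\setminus(E(G[R])\cup\delta(\reach{t}{X})))\cup\delta(\reach{t}{X}\cup R)$. Since $\delta(\reach{t}{X})\subseteq X$, one has $|Y|\le|X|-|\delta(\reach{t}{X})|+|\delta(\reach{t}{X}\cup R)|$, and the submodularity inequality you correctly set up (but then misused) gives exactly $|\delta(\reach{t}{X}\cup R)|\le|\delta(\reach{t}{X})|$, yielding $|Y|\le|X|$. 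The separation argument then works because $R\subseteq\reach{t}{Y}$, so every edge inside $R$ is on $t$'s side and can be dropped. If you redo your proof with this modified target set you should be able to close the argument.
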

\begin{proof}
Let $X$ be a minimum solution of $(G,\terms)$.
Let $P = \reach{t}{X}$ and $Q = \reach{t}{K_t}$. Note that $P \cap \terms = \{t\}$
and, consequently, $\delta(P)$ is a $(t,\terms \setminus \{t\})$-cut.
By submodularity of the cuts, $|\delta(P \cup Q)| + |\delta(P \cap Q)| \leq |\delta(P)| + |\delta(Q)|$.
As $K_t$ is a minimum $(t,\terms \setminus \{t\})$-cut, $|\delta(P \cap Q)| \geq |\delta(Q)|$ and, consequently,
   $|\delta(P \cup Q)| \leq |\delta(P)|$.
We infer that, if we define
$$Y := (X \setminus (E(G[Q]) \cup \delta(P))) \cup \delta(P \cup Q),$$
we have $|Y| \leq |X|$, as $\delta(P) \subseteq X$.

We claim that $Y$ is a solution to $(G,\terms)$; as $|Y| \leq |X|$ and $Q \subseteq \reach{t}{Y}$, this would finish the proof of the lemma.
Assume otherwise, and let $R$ be a path between two terminals in $G \setminus Y$. As $X$ is a solution to $(G,\terms)$, $R$
contains an edge of $\delta(P)$ or a vertex of $Q$,  and, consequently, contains a vertex of $P \cup Q$. Note that at least one endpoint of $R$ is different than $t$;
hence, $R$ contains an edge of $\delta(P \cup Q)$, a contradiction, as $\delta(P \cup Q) \subseteq Y$.
\end{proof}

We now recall that the set of all minimum
$t-(\terms \setminus \{t\})$ cuts is a $2$-approximation for \pemwc{} (cf.~\cite{dahlhaus}).

\begin{lemma}\label{lem:emwc-2apx}
If Rule \ref{red:emwc:behind-cut} is not applicable to $(G,\terms)$, then
$\bigcup_{t \in \terms} \delta(t)$ is a solution to $(G,\terms)$
of size at most $2k_{OPT}$.
\end{lemma}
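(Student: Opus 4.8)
\textbf{Proof plan for Lemma~\ref{lem:emwc-2apx}.}

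The plan is to show two things: first, that $\bigcup_{t\in\terms}\delta(t)$ is a feasible solution for \pemwc{} on $(G,\terms)$, and second, that its size is at most $2k_{OPT}$. Feasibility is immediate: any path $R$ in $G$ between two distinct terminals $t_1,t_2$ must use some edge incident to $t_1$ (the first edge of $R$), which lies in $\delta(t_1)\subseteq\bigcup_{t\in\terms}\delta(t)$; hence removing this edge set disconnects every pair of terminals, so no component of $G\setminus\bigcup_{t\in\terms}\delta(t)$ contains two terminals.

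For the size bound, the key observation is that, by the hypothesis that Reduction~\ref{red:emwc:behind-cut} is not applicable, for every $t\in\terms$ the minimum $(t,\terms\setminus\{t\})$-cut furthest from $t$ equals $\delta(t)$; in particular $\delta(t)$ is a \emph{minimum} $(t,\terms\setminus\{t\})$-cut, so $|\delta(t)|\leq |X\cap\delta(\reach{t}{X})|$ for any solution $X$, and more usefully $|\delta(t)|$ is at most the size of any $(t,\terms\setminus\{t\})$-cut. Now fix an optimal solution $X$ with $|X|=k_{OPT}$. For each terminal $t$, the set $\reach{t}{X}$ contains $t$ and no other terminal (since $X$ is a multiway cut), so $\delta(\reach{t}{X})$ is a $(t,\terms\setminus\{t\})$-cut and therefore $|\delta(t)|\leq |\delta(\reach{t}{X})|\leq |X\cap E(G[\reach{t}{X}]\cup\delta(\reach{t}{X}))|$; more directly, $\delta(\reach{t}{X})\subseteq X$ because $X$ is a minimal multiway cut after applying Reduction~\ref{red:emwc:first} (alternatively one argues $\delta(\reach{t}{X})\subseteq X$ by minimality, or passes to a minimal subset of $X$ first). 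Summing over $t\in\terms$ gives $\sum_{t\in\terms}|\delta(t)|\leq\sum_{t\in\terms}|\delta(\reach{t}{X})|$. Since the sets $\reach{t}{X}$ are pairwise disjoint, each edge of $X$ lies in at most two of the cuts $\delta(\reach{t}{X})$ (it is counted once for the component of each of its two endpoints, and only those endpoints lying in some $\reach{t}{X}$ with $t\in\terms$ contribute), so $\sum_{t\in\terms}|\delta(\reach{t}{X})|\leq 2|X|=2k_{OPT}$. Finally $|\bigcup_{t\in\terms}\delta(t)|\leq\sum_{t\in\terms}|\delta(t)|\leq 2k_{OPT}$.

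The main subtlety I anticipate is making precise the inclusion $\delta(\reach{t}{X})\subseteq X$ and the double-counting bound: one must be careful that after Reductions~\ref{red:emwc:first} and~\ref{red:emwc:cc} the optimal $X$ may be assumed minimal (so that each edge of $X$ has its two endpoints in different components of $G\setminus X$), and that an edge of $X$ with an endpoint that lies in \emph{no} set $\reach{t}{X}$ for $t\in\terms$ simply contributes less, so the bound of $2$ is safe. This is exactly the standard argument from~\cite{dahlhaus} that the union of minimum isolating cuts is a $2$-approximation; here the non-applicability of Reduction~\ref{red:emwc:behind-cut} is what guarantees each individual $\delta(t)$ is already a minimum isolating cut, which is what we need to conclude.
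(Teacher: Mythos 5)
Your proof is correct and follows essentially the same route as the paper: each $\delta(t)$ is a minimum isolating cut by non-applicability of Rule~\ref{red:emwc:behind-cut}, so $|\delta(t)| \leq |\delta(\reach{t}{X})|$ for an optimal $X$, and summing with the observation that each edge of $X$ lies in at most two of the sets $\delta(\reach{t}{X})$ gives $2k_{OPT}$. One small point: your hedging about whether $X$ must be minimal is unnecessary — the inclusion $\delta(\reach{t}{X})\subseteq X$ holds for \emph{any} $(t,\terms\setminus\{t\})$-cut $X$ directly from the definition of $\reach{\cdot}{\cdot}$ (the paper records this fact explicitly in the preliminaries of Section~\ref{sec:emwc:prelims}), so no appeal to Rule~\ref{red:emwc:first} or to passing to a minimal subset is needed.
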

\begin{proof}
Observe that $\bigcup_{t \in \terms} \delta(t)$ is indeed a solution. It remains to prove the bound.
Let $X$ be a solution to $(G,\terms)$.
Note that
for each $t \in \terms$,
the set $\delta(\reach{t}{X})$ is a $(t,\terms \setminus \{t\})$-cut in $G$.
Consequently, $|\delta(\reach{t}{X})| \geq |\delta(t)|$.
On the other hand, each edge $e \in X$ belongs to $\delta(\reach{t}{X})$
for at most two terminals $t \in \terms$. Hence,
$$2k_{OPT} = 2|X| \geq \sum_{t \in \terms} |\delta(\reach{t}{X})| \geq \sum_{t \in \terms} |\delta(t)| \geq \left|\bigcup_{t \in \terms} \delta(t)\right|,$$
and the lemma follows.
\end{proof}
We infer that, once Rule \ref{red:emwc:behind-cut} is exhaustively applied,
$k := |\bigcup_{t \in \terms} \delta(t)|$ satisfies $k_{OPT} \leq k \leq 2k_{OPT}$.

We now state the last clean-up rule.

\begin{reduction}\label{red:emwc:multiple-edge}\label{red:emwc:last-prelim}
If there is an edge $e$ of multiplicity larger than $k$, then contract $e$.
\end{reduction}

\subsubsection{Bounding the diameter of the dual}\label{sec:emwc:diam}

We are now ready to present a reduction rule that bounds the diameter
of the dual of $G$. Recall that we assume that $G$ is connected.

Arbitrarily, pick one terminal $\ut \in \terms$. We construct a sequence
of $(\ut,\terms \setminus \{\ut\})$-cuts $K_1,K_2,\ldots,K_r$ as follows.
We start with $K_1 = \delta(\ut)$; recall that, once Rule \ref{red:emwc:behind-cut}
is not applicable, $\delta(\ut)$ is the unique minimum $(\ut,\terms \setminus \{\ut\})$-cut.
Having constructed $K_i$, we proceed as follows. If there exists an edge in $K_i$
that is not incident to a terminal in $\terms \setminus \{\ut\}$, we pick
one such edge $uv$ arbitrarily and take
$K_{i+1}$ to be the minimum $(\reach{\ut}{K_i} \cup \{u,v\},\terms \setminus \{\ut\})$-cut
furthest from $\reach{\ut}{K_i} \cup \{u,v\}$. Otherwise, we terminate the process.
Note that the sequence $K_1,K_2,\ldots,K_r$ can be computed in polynomial time.

We note the following properties of the sequence $K_1,K_2,\ldots,K_r$.
\begin{lemma}\label{lem:emwc:Ki-props}
If Rules \ref{red:emwc:first}--\ref{red:emwc:last-prelim} are not applicable,
   then the following holds:
\begin{enumerate}
\item $K_r = \bigcup_{t \in \terms \setminus \{\ut\}} \delta(t)$;
\item $1 \leq |\delta(\ut)| = |K_1| < |K_2| < \ldots < |K_r| < 2k_{OPT}$;
\item $r < 2k_{OPT}$;
\item for each $1 \leq i < r$, $\reach{\ut}{K_i} \subsetneq \reach{\ut}{K_{i+1}}$.
\end{enumerate}
\end{lemma}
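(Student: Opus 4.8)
\textbf{Plan for the proof of Lemma~\ref{lem:emwc:Ki-props}.}
The plan is to prove the four items essentially in order, since each relies on basic facts about the construction of the sequence $K_1,\ldots,K_r$ and the submodularity machinery already developed.

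\emph{Item 4 first, then item 1.} I would begin by observing that for each $1 \le i < r$, the cut $K_{i+1}$ is by definition a $(\reach{\ut}{K_i} \cup \{u,v\},\terms\setminus\{\ut\})$-cut, where $uv \in K_i$ is an edge not incident to any terminal of $\terms\setminus\{\ut\}$. Since $uv \in K_i = \delta(\reach{\ut}{K_i})$ (using that $K_i$ is a minimal such cut, so $\delta(\reach{\ut}{K_i}) = K_i$), exactly one of $u,v$ lies in $\reach{\ut}{K_i}$; call it $u$, and then $v \notin \reach{\ut}{K_i}$. Now $v \in \reach{\ut}{K_{i+1}}$ because $v$ is on the ``$\ut$-side'' set $\reach{\ut}{K_i}\cup\{u,v\}$ of the cut $K_{i+1}$, and the ``furthest from'' choice only enlarges the reachable set further; hence $\reach{\ut}{K_i} \cup \{v\} \subseteq \reach{\ut}{K_{i+1}}$, which gives the strict inclusion $\reach{\ut}{K_i}\subsetneq\reach{\ut}{K_{i+1}}$. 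This is item~4. For item~1, the process terminates precisely when every edge of $K_r$ is incident to a terminal of $\terms\setminus\{\ut\}$; combined with the fact that $K_r$ is a \emph{minimal} $(\reach{\ut}{K_{r-1}}\cup\{u,v\},\terms\setminus\{\ut\})$-cut and with Reduction~\ref{red:emwc:2terms} (no edge joins two terminals), this forces $K_r$ to consist exactly of edges incident to a single terminal each, so $K_r \subseteq \bigcup_{t \in \terms\setminus\{\ut\}}\delta(t)$. The reverse inclusion follows because each $\delta(t)$ for $t\neq\ut$ must be separated off: since $\terms\setminus\{\ut\} \cap \reach{\ut}{K_r} = \emptyset$, every edge of $\delta(t)$ either is in $K_r$ or has both endpoints outside $\reach{\ut}{K_r}$ --- here I would use minimality of $K_r$ and the assumption that Reduction~\ref{red:emwc:behind-cut} is not applicable (so no ``bubbles'' can be pushed past a terminal) to conclude the sets coincide. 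This is the step I expect to require the most care: pinning down exactly why $K_r = \bigcup_{t\neq\ut}\delta(t)$ rather than a proper subset, i.e., ruling out that the greedy augmentation stalls ``early'' around some terminal.

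\emph{Items 2 and 3.} For the strict monotonicity $|K_i| < |K_{i+1}|$: by submodularity of cuts (applied exactly as in Lemma~\ref{lem:emwc:behind-cut:correct}), since $K_{i+1}$ is a \emph{minimum} $(\reach{\ut}{K_i}\cup\{u,v\},\terms\setminus\{\ut\})$-cut and $\delta(\reach{\ut}{K_i})=K_i$ is \emph{some} such cut (note $\{u,v\}$ are on the appropriate side once $\reach{\ut}{K_i}$ is augmented by them, and $u,v$ are not terminals of $\terms\setminus\{\ut\}$ by choice), we get $|K_{i+1}| \le |K_i|$. But equality is impossible: if $|K_{i+1}| = |K_i|$, then $K_i$ itself would be a minimum $(\reach{\ut}{K_i}\cup\{u,v\},\terms\setminus\{\ut\})$-cut, yet it is not $(u,v)$-separating in the trivial sense needed --- more precisely, $v \notin \reach{\ut}{K_i}$ while $v$ must be on the $\ut$-side of any such cut, so $K_i$ simply is not a cut of the required type unless $v\in\reach{\ut}{K_i}$, a contradiction with item~4's setup. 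Hence $|K_i| < |K_{i+1}|$. The chain $1 \le |\delta(\ut)|$ holds because $\ut$ has at least one incident edge (the graph is connected and $|\terms|\ge 2$ after Reduction~\ref{red:emwc:empty-cc}), and $|K_1| = |\delta(\ut)|$ since $\delta(\ut)$ is the unique minimum $(\ut,\terms\setminus\{\ut\})$-cut once Reduction~\ref{red:emwc:behind-cut} is exhausted. The upper bound $|K_r| < 2k_{OPT}$: by item~1, $|K_r| = |\bigcup_{t\neq\ut}\delta(t)| \le |\bigcup_{t\in\terms}\delta(t)| = k \le 2k_{OPT}$ by Lemma~\ref{lem:emwc-2apx}; to get the strict inequality I would note $\delta(\ut)$ is nonempty and disjoint-enough from $K_r$ that $|K_r| \le k - |\delta(\ut)| < 2k_{OPT}$ (or simply observe $K_r \subsetneq \bigcup_{t\in\terms}\delta(t)$ as $\delta(\ut)\not\subseteq K_r$ after Reduction~\ref{red:emwc:2terms}). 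Finally, item~3 is immediate: the integer sequence $|K_1| < |K_2| < \cdots < |K_r|$ is strictly increasing and bounded above by $2k_{OPT}-1$ starting from a value $\ge 1$, so $r \le 2k_{OPT}-1 < 2k_{OPT}$.

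\textbf{Main obstacle.} I expect the routine parts (items 2, 3, 4) to go through cleanly via submodularity and the ``furthest from'' property. The delicate point is item~1 --- verifying that the greedy augmentation, which only ever adds the endpoints of \emph{one} chosen non-terminal-incident edge at a time, actually terminates with $K_r$ equal to the \emph{full} union $\bigcup_{t\neq\ut}\delta(t)$ and not a strict subset. The key leverage is minimality of each $K_i$ together with non-applicability of Reduction~\ref{red:emwc:behind-cut}, which prevents any minimum cut from being pushed ``through'' a terminal; I would spell out that once the termination condition holds, minimality forces every edge of $K_r$ to be incident to exactly one terminal $t\in\terms\setminus\{\ut\}$ and, conversely, that no edge of $\delta(t)$ for such $t$ can survive on the $\ut$-side, giving the claimed equality.
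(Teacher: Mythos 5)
Your handling of items~3 and~4 is correct and matches the paper, but items~1 and~2 both contain genuine gaps.

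For item~2, the logic is broken. You first write that $K_i = \delta(\reach{\ut}{K_i})$ is ``some'' $(\reach{\ut}{K_i}\cup\{u,v\},\terms\setminus\{\ut\})$-cut and deduce $|K_{i+1}|\le|K_i|$; but $K_i$ is \emph{not} such a cut --- exactly as you yourself observe one sentence later, $v\notin\reach{\ut}{K_i}$, and the component of $v$ in $G\setminus K_i$ may well contain a terminal of $\terms\setminus\{\ut\}$, so $K_i$ does not separate $v$ from $\terms\setminus\{\ut\}$. Having derived an inequality from a false premise, you then argue equality is impossible and conclude $|K_i|<|K_{i+1}|$, which is not what ``$|K_{i+1}|\le|K_i|$ and equality impossible'' gives; that chain would yield the \emph{reverse} strict inequality. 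So the argument is self-contradictory. No submodularity is needed here. The paper's argument is cleaner: $K_{i+1}$ is a $(\reach{\ut}{K_i},\terms\setminus\{\ut\})$-cut (being a cut for a superset of the source), $K_i$ is the \emph{unique minimum} $(\reach{\ut}{K_i},\terms\setminus\{\ut\})$-cut, and $K_{i+1}\ne K_i$ by your own item~4; hence $|K_{i+1}|>|K_i|$ immediately. (Submodularity is used in the paper for Lemma~\ref{lem:emwc:behind-cut:correct}, not here.)

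For item~1, you correctly flag it as the delicate step and correctly argue the forward inclusion $K_r\subseteq\bigcup_{t\ne\ut}\delta(t)$. But for the reverse inclusion you invoke ``minimality of $K_r$ and non-applicability of Reduction~\ref{red:emwc:behind-cut}'', and that is the wrong lever: Rule~\ref{red:emwc:behind-cut} controls how far minimum cuts can be pushed, but it does not by itself rule out the process stalling before reaching all of $\bigcup_{t\ne\ut}\delta(t)$ (e.g., if $G\setminus\terms$ could be disconnected, a component of it adjacent only to terminals in $\terms\setminus\{\ut\}$ would sit permanently outside $\reach{\ut}{K_r}$). The ingredient the paper actually uses is non-applicability of Rule~\ref{red:emwc:cc}, i.e., connectivity of $G\setminus\terms$: if some edge $vt$ with $t\ne\ut$ were missing from $K_r$, then $v$ is a non-terminal outside $\reach{\ut}{K_r}$, and connectivity of $G\setminus\terms$ gives a $\ut v$-path whose only edge incident to a terminal is the first one; that path must cross $K_r$ at an edge not incident to any terminal of $\terms\setminus\{\ut\}$ (the first edge is incident to $\ut$ and, by Rule~\ref{red:emwc:2terms}, to no other terminal; all later edges lie in $G\setminus\terms$), contradicting the termination condition. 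You should replace the appeal to Rule~\ref{red:emwc:behind-cut} with this explicit connectivity argument.
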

\begin{proof}
We first show that when $K_i \neq \bigcup_{t \in \terms \setminus \{\ut\}} \delta(t)$, for some $i$, then
$K_{i+1}\neq K_i$. 
As $\bigcup_{t \in \terms \setminus \{\ut\}}  \delta(t)$ is a $(\ut,\terms \setminus \{\ut\})$-cut,
and $K_i$ is a minimal $(\ut,\terms \setminus \{\ut\})$-cut, we infer that there exists an edge $vt \notin K_i$
incident to a terminal $t \neq \ut$. As Rule \ref{red:emwc:cc} is not applicable, $G \setminus \terms$ is connected
and thus there exists a $\ut v$-path $Q$, such that only the first edge of $Q$ is incident to a terminal.
We infer that $Q$ intersects $K_i$, and $K_i$ contains an edge not incident to $\terms \setminus \{\ut\}$.
Consequently, $K_{i+1}$ can be constructed. This concludes the proof of the first claim.

For the second claim, note that $K_i$ is the unique minimum $(\reach{\ut}{K_i},\terms \setminus \{\ut\})$-cut, thus $|K_{i+1}| > |K_i|$ for all $1 \leq i < r$.
By Lemma \ref{lem:emwc-2apx}, $|\bigcup_{t \in \terms} \delta(t)| \leq 2k_{OPT}$.
As Rule \ref{red:emwc:2terms} is not applicable, the sets $\delta(t)$
are pairwise disjoint. As Rules \ref{red:emwc:empty-cc} and \ref{red:emwc:cc} are not applicable,
$\delta(\ut) \neq \emptyset$.
We infer that
$$|K_r| =
\left|\bigcup_{t \in \terms \setminus \{\ut\}} \delta(t)\right| <
\left|\bigcup_{t \in \terms} \delta(t)\right| \leq 2k_{OPT}.$$

The third claim follows directly from the second one, and the last claim is straightforward
from the construction.
\end{proof}

The main claim of this section is the following.
\begin{lemma}\label{lem:emwc:Ki-dist}
Assume Rules \ref{red:emwc:first}--\ref{red:emwc:last-prelim} are not applicable
to the \pemwc{} instance $(G,\terms)$.
Moreover, assume there exists an edge $e \in G$
such that the distance, in the dual of $G$,
between $e$ and $\bigcup_{i=1}^r K_i$ is greater than $k$.
Then there exists a minimum solution to $(G,\terms)$ that does not contain $e$.
\end{lemma}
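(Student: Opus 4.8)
The plan is to show that if an edge $e$ is ``deep'' in the dual---meaning far from the laminar-like family $\bigcup_i K_i$ of cuts---then any minimum solution using $e$ can be rerouted around $e$ without increasing its size. The key observation is that the cuts $K_1 \subsetneq K_2 \subsetneq \dots \subsetneq K_r$ (in the sense of $\reach{\ut}{K_i}$ being strictly increasing, by Lemma~\ref{lem:emwc:Ki-props}) carve the graph into ``rings'' $\reach{\ut}{K_{i+1}} \setminus \reach{\ut}{K_i}$, and $e$ lies strictly inside one such ring, bounded away from both $K_i$ and $K_{i+1}$ by more than $k$ in the dual metric. First I would set up notation: let $X$ be a minimum solution to $(G,\terms)$ with $e \in X$, let $C$ be the connected component of the dual subgraph corresponding to $X$ (equivalently, think of $X$ as a collection of closed curves / cycles in the dual plus the relevant faces) that is ``responsible'' for $e$, and argue that because $|X| \le k$, this component $C$ has small diameter in the dual---at most $k$---and therefore cannot reach either $K_i$ or $K_{i+1}$ (as the distance from $e$ to $\bigcup_j K_j$ exceeds $k$). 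Hence $C$, viewed as a set of edges of $G$, lives entirely in the ``open ring'' between the two cuts $K_i$ and $K_{i+1}$, touching neither.

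Next I would exploit this containment. Since $\reach{\ut}{K_i} \subseteq \reach{\ut}{K_{i+1}}$ and $C$ is a minimal subcut separating some piece of the ring, $C$ is disjoint from $K_i \cup K_{i+1}$ and, crucially, $C$ together with $K_i$ (or with $K_{i+1}$) bounds a region of $G$. The rerouting argument is then essentially the same submodularity trick used in Lemma~\ref{lem:emwc:behind-cut:correct}: the part of $X$ consisting of $C$ can be replaced by a portion of $K_i$ or $K_{i+1}$, which is a minimum cut and hence ``cheap enough.'' More precisely, I expect to write $X' = (X \setminus C) \cup K'$ where $K'$ is an appropriate sub-collection of $K_{i+1}$ (the edges of $K_{i+1}$ that lie on the $\ut$-side boundary of whatever region $C$ was cutting off), show $|X'| \le |X|$ using that $K_{i+1}$ is the minimum $(\reach{\ut}{K_i} \cup \{u,v\}, \terms \setminus \{\ut\})$-cut furthest from its source set---so it is at least as efficient as any alternative cut through that ring---and verify that $X'$ still separates all terminal pairs. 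The verification that $X'$ is still a multiway cut follows because any terminal--terminal path avoiding $X'$ would have to pass through the region $C$ was guarding, but then it would cross $K_{i+1}$, which is now in $X'$.

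The main obstacle I anticipate is making the topological/combinatorial picture precise: I need to argue carefully that the connected component $C$ of the dual realization of $X \cap$ (ring) is confined to a single ring and that it does indeed ``bound a region'' in a way that lets me substitute a sub-cut of $K_i$ or $K_{i+1}$. The subtlety is that $X$ is just an edge set in $G$, not obviously a union of dual cycles, so I would work with $\reach{\cdot}{\cdot}$ sets throughout: let $Q$ be the connected component of $G \setminus X$ containing $\ut$; if $e \notin \delta(Q)$ then $e$ can be contracted by an argument like Reduction~\ref{red:emwc:behind-cut}, so assume $e \in \delta(Q)$. Then $\delta(Q)$ is a $(\ut, \terms \setminus \{\ut\})$-cut, and I would track which $K_i$ ``brackets'' $Q$ near $e$; the dual-distance hypothesis forces $\delta(Q)$ to agree with $K_i$ outside a ball of dual-radius $k$ around $e$, and submodularity applied to $Q$ and $\reach{\ut}{K_{i+1}}$ (or a localized variant) yields the replacement. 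The bookkeeping of ``which index $i$'' and ensuring the replacement cut does not accidentally merge two terminals on the far side is where the care is needed.

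One more technical point to handle: I should confirm that the dual graph $G^\ast$ used here is the one without the outer face, or clarify the convention, and that ``distance in the dual between $e$ and $\bigcup_i K_i$'' is measured so that a solution $X$ with $|X| \le k$ genuinely has dual-connected pieces of dual-diameter $\le k$---this uses that each edge of $X$ corresponds to one dual edge and that within a connected component of the dual realization we can travel between any two of its edges along at most $|X| \le k$ dual edges. With that in place, the conclusion is that $e$ is irrelevant, so the reduction rule (to be stated next, contracting such $e$) is safe, and after exhaustive application the dual diameter is bounded by $\Oh(k) \cdot k = \Oh(k_{OPT}^3)$ since there are $r < 2k_{OPT}$ cuts $K_i$, each contributing a ``band'' of dual-width at most $k$ plus the total size $< 2k_{OPT}$ of the cuts themselves.
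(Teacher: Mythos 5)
Your overall plan matches the paper's proof: take a minimum solution $X$ containing $e$, let $Y$ be the connected component of $X$ in the dual of $G$ that contains $e$, use $|Y|\le |X|\le k$ to conclude that $Y$ and its incident faces live strictly between $K_\iota$ and $K_{\iota+1}$ for a unique index $\iota$, then swap $Y$ for the edges $Z\subseteq K_{\iota+1}$ whose endpoints are cut off from $\ut$ by $Y$, invoking the minimality of $K_{\iota+1}$ (applied to the cut $(K_{\iota+1}\setminus Z)\cup Y$) to get $|Z|\le |Y|$.

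Where your sketch has a genuine gap is in verifying that $X'=(X\setminus Y)\cup Z$ is still a multiway cut. Your one-line justification---``any terminal--terminal path avoiding $X'$ would have to pass through the region $C$ was guarding, but then it would cross $K_{i+1}$''---only addresses a violating path with at least one endpoint in $\terms_Y=\terms\setminus\reach{\ut}{Y}$, i.e.\ a terminal that $Y$ was shielding from $\ut$. Even that case needs more care than your sentence suggests: the paper must route from the wall $K_{\iota+1}$ back to $\ut$ via $\reach{\ut}{K_{\iota+1}}$ to derive a contradiction with the definition of $K_{\iota+1}$. But the truly hard case, which you do not mention, is a violating path $P$ whose two terminal endpoints $t_1,t_2$ \emph{both} lie in $\reach{\ut}{Y}$. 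Such a path does not need to touch $K_{\iota+1}$ at all, so ``it would cross $K_{i+1}$'' fails. The paper resolves this by a topological argument specific to the plane: since $Y$ is a connected subgraph of the dual, there is a single face $f_Y$ of $(G\setminus Y)[\reach{\ut}{Y}]$ containing $Y$, and the closed walk around $f_Y$ lies in $\overline{Y}\setminus Y$; combined with $X\cap\overline{Y}=Y$ (which uses that $Y$ is a whole dual component of $X$), this yields an $X$-avoiding path from $t_1$ to $t_2$, contradicting that $X$ was a solution. The paper explicitly flags this step as the one that breaks on surfaces of higher genus, so it is not a routine detail one can wave at. Without either carrying out this case or explaining why it cannot arise, the proof is incomplete. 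You also mention worrying about ``terminals on the far side,'' but the delicate terminals here are on the \emph{near} ($\ut$) side of $Y$, which is the opposite of what your phrasing anticipates.

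A smaller imprecision: you write that one should ``think of $X$ as a collection of closed curves / cycles in the dual,'' but a minimum multiway cut need not be a union of dual cycles; the dual component $Y$ is in general a connected subgraph (the Steiner-forest connection), and the argument relies on connectivity of $Y$, not on it being a cycle.
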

\begin{proof}
Let $X$ be a minimum solution to $(G,\terms)$. If $e \notin X$, there is nothing
to prove, so assume otherwise.
As $e$ is distant from $\bigcup_{i=1}^r K_i$, in particular
$e \notin \bigcup_{i=1}^r K_i$.
Recall that, since we assume $G$ is connected,
$$\{\ut\} = \reach{\ut}{K_1} \subsetneq \reach{\ut}{K_2} \subsetneq \ldots \subsetneq \reach{\ut}{K_r} = V(G) \setminus (\terms \setminus \{\ut\}).$$
Hence, there exists a unique index $\iota$, $1 \leq \iota < r$, such that both endpoints of $e$
belong to $\reach{\ut}{K_{\iota+1}} \setminus \reach{\ut}{K_\iota}$.

Consider now $X$ as an edge subset of the dual of $G$, and let $Y$ be the connected component of $X$ that contains $e$. Let
$\terms_Y = \terms \setminus \reach{\ut}{Y}$, i.e., $\terms_Y$ is the set of terminals
separated in $G$ from $\ut$ by $Y$.
Finally, we define $\overline{Y}$ to be the set of edges of $G$ that are incident to
a face of $G$ that is incident to at least one edge of $Y$, i.e., the set of edges
that are incident to the endpoints of $Y$ in the dual of $G$.

We first claim the following.
\begin{claim}\label{cl:emwc:Yiota}
$\overline{Y}$ is a connected subgraph of $G$,
disjoint from $\bigcup_{i=1}^r K_i$ and
the endpoints of $\overline{Y}$ in $G$ lie in $\reach{\ut}{K_{\iota+1}} \setminus \reach{\ut}{K_{\iota}}$.
\end{claim}
\begin{proof}
Since $G$ is connected, the edges incident to a face of $G$ form a closed walk,
and, consequently, $\overline{Y}$ is a connected subgraph of $G$.
As $Y \subseteq X$, $|Y| \leq |X| = k_{OPT} \leq k$. Hence, any face incident
to an edge of $Y$ is, in the dual of $G$,
within distance less than $k$ from a face incident to $e$. Consequently,
by the definition of $\overline{Y}$ and the choice of $e$,
$\overline{Y}$ cannot contain any edge of $\bigcup_{i=1}^r K_i$.
By the connectivity of $\overline{Y}$, for any $1 \leq i \leq r$, $\overline{Y}$
is either fully contained in $G[\reach{\ut}{K_i}]$ or fully contained in
$G \setminus \reach{\ut}{K_i}$. Hence,
the last claim follows from the definition of $\iota$.
\cqed\end{proof}

Intuitively, Claim \ref{cl:emwc:Yiota} asserts that $Y$ is a connected part of the solution
that lives entirely between $K_\iota$ and $K_{\iota + 1}$. The role of $Y$
in the solution $X$ is to separate $\terms_Y$ from $\ut$ (and/or
other terminals of $\terms \setminus \terms_Y$), and, possibly,
separate some subsets of $\terms_Y$ from each other.
Define $Z$ to be the set of those edges of $K_{\iota+1}$ whose endpoints
are separated from $\ut$ by $Y$, i.e., both do not belong to $\reach{\ut}{Y}$.
Note that, as $\overline{Y} \cap K_{\iota+1} = \emptyset$, for any $e' \in K_{\iota+1}$,
either both endpoints of $e'$ belong or both endpoints do not belong to $\reach{\ut}{Y}$.

\begin{claim}\label{cl:emwc:Y-in-iota}
$K := (K_{\iota+1} \setminus Z) \cup Y$ is a $(\ut,\terms \setminus \{\ut\})$-cut.
Moreover, $\reach{\ut}{K_\iota} \cup V(K_\iota \setminus K_{\iota+1})
\subseteq \reach{\ut}{K}$.
\end{claim}
\begin{proof}
The second claim of the lemma is straightforward, as, by Claim \ref{cl:emwc:Yiota},
no edge of $Y$
belongs to $K_\iota \setminus K_{\iota+1}$ nor does it have both endpoints in $\reach{\ut}{K_{\iota}}$.
For the first claim, assume the contrary, and let $P$
be a $\ut t$-path in $G \setminus K$ for some $t \in \terms\setminus\{\ut\}$.
As $K_{\iota+1}$ is a $(\ut,t)$-cut, $P$ contains an edge of $Z$.
However, by the definition of $Z$, $P$ contains an edge of $Y$, and
$P$ intersects $K$, a contradiction.
\cqed\end{proof}
Recall now that $K_{\iota+1}$ is a minimum $(\reach{\ut}{K_\iota} \cup \{u,v\},\terms \setminus \ut)$-cut for some $uv \in K_{\iota}$.
By Claim \ref{cl:emwc:Y-in-iota},
$K$ is also a $(\reach{\ut}{K_\iota} \cup \{u,v\},\terms \setminus \ut)$-cut.
Hence, $|K| \geq |K_{\iota+1}|$ and, consequently, $|Y| \geq |Z|$.

We are now ready to make the crucial observation.
\begin{claim}\label{cl:emwc:Zfix}
The set $X' := (X \setminus Y) \cup Z$ is a solution to \pemwc{} on $(G,\terms)$.
\end{claim}
\begin{proof}
Assume the contrary, and let $P$ be a path connecting two terminals in $G \setminus X'$.
We consider two cases, depending on whether there exists an endpoint of $P$ that
belongs to $\terms_Y$. If there exists such an endpoint, $Z$ should
substitute $Y$ as a separator and should intersect $P$. Otherwise,
$Y$ does not play any substantial role in intersecting $P$ as a part of the solution $X$,
and $X \setminus Y$ should already intersect $P$. We now proceed with formal
argumentation.

In the first case, assume that $t \in \terms_Y$ is an endpoint of $P$.
As $X$ is a solution to $(G,\terms)$, $P$ contains an edge of $Y$.
Let $uv$ be such an edge on $P$ that is closest to $t$, where $u$ lies before $v$ on $P$.
%on $P$ earlier than $v$.
Note that $P[t,u]$ does not contain any edge of $K_{\iota+1}$:
as $t \notin \reach{\ut}{Y}$ and $P[t,u] \cap Y = \emptyset$,
$P[t,u]$ is contained in $G \setminus \reach{\ut}{Y}$
but $Z = K_{\iota+1} \cap (G \setminus \reach{\ut}{Y})$ and $P$ avoids $Z$.
Recall that all endpoints of the edges of $Y$
lie in $\reach{\ut}{K_{\iota+1}}$; hence, there exists a path
$Q$ connecting $\ut$ with $u$ that avoids $K_{\iota+1}$. Hence,
$Q \cup P[u,t]$ is a $\ut t$-path avoiding $K_{\iota+1}$, a contradiction
to the definition of $K_{\iota+1}$.

In the second case, both endpoints of $P$ belong to $\reach{\ut}{Y}$.
Denote them $t_1$ and $t_2$. As $X$ is a solution to $(G,\terms)$,
$P$ contains at least one edge of $Y$. Let $e_1$ be the first such
edge, and let $e_2$ be the last one. Moreover, let $v_1$ be the endpoint
of $e_1$ closer to $t_1$ on $P$ and $v_2$ be the endpoint of $e_2$ closer
to $t_2$ on $P$. Note that $v_1,v_2 \in \reach{\ut}{Y}$, as both
$P[t_1,v_1]$ and $P[t_2,v_2]$ do not contain any edge of $Y$.
We also note that it may happen that $e_1=e_2=v_1v_2$,
but $v_1 \neq v_2$ and $v_1$ is closer to $t_1$ on $P$ than $t_2$. Observe that since $P[t_1,v_1]$ and $P[t_2,v_2]$ avoid both $X'$ and $Y$, they also avoid $X$.

As $Y$ is connected in the dual of $G$, there exists a unique face $f_Y$
of $(G \setminus Y)[\reach{\ut}{Y}]$,
that contains $Y$. As $\reach{\ut}{Y}$ is connected
by definition and the interior of each face of a connected graph is isomorphic to an open disc (since we are working on the euclidean plane),
the closed walk around $f_Y$ in $\reach{\ut}{Y}$ connects
all vertices incident to $Y$ that belong to $\reach{\ut}{Y}$ and,
by the definition of $\overline{Y}$, all edges of this closed walk belong
  to $\overline{Y} \setminus Y$. We infer that $v_1$ and $v_2$ lie in the
  same connected component of $\overline{Y} \setminus Y$.\footnote{%
Note that the argument of this paragraph fails if we assume only that $G$ is embedded on, say, a torus, instead of a plane. We do not know how to fix it for graphs of higher genera.}
% We definitely need to keep this sentence to use the plural form of 'genus'!!!

By the definition of $Y$ and $\overline{Y}$, we have $X \cap \overline{Y} = Y$.
Hence, $v_1$ and $v_2$ lie in the same connected component of $G \setminus X$
and the same holds for $t_1$ and $t_2$ (via paths $P[t_1,v_1]$ and $P[t_2,v_2]$),
a contradiction to the fact that $X$ is a solution to $(G,\terms)$.
This finishes the proof of Claim \ref{cl:emwc:Zfix}.
\cqed\end{proof}

Clearly, as $|Y| \geq |Z|$ and $Y \subseteq X$, we have $|X'| \leq |X|$.
As $Z \subseteq K_{\iota+1}$, we have $e \notin X'$.
Thus, by Claim \ref{cl:emwc:Zfix}, $X'$ is a minimum solution to \pemwc{} on
$(G,\terms)$ that does not contain $e$. This concludes the proof of the lemma.
\end{proof}

Lemma \ref{lem:emwc:Ki-dist}
allows us to state the following reduction rule.
\begin{reduction}\label{red:emwc:Ki-dist}
Compute a choice of cuts $K_1,K_2,\ldots,K_r$ for some arbitrarily chosen $\ut \in \terms$.
If there exists an edge $e$ in $G$
whose distance from $\bigcup_{i=1}^r K_i$ in the dual of $G$
is greater than $k$, contract $e$.
\end{reduction}
Note that Rule \ref{red:emwc:Ki-dist} may be applied in polynomial time.
Moreover, it bounds the diameter of the dual of $G$. To prove this claim, we need the following easy fact.
\begin{lemma}\label{lem:small-diam}
Let $H$ be a connected graph, and let $D\subseteq V(H)$ be a subset of vertices such that every vertex of $H$ is in distance at most $r$ from some element of $X$. Then the diameter of $H$ is bounded by $(2r+1)|D|-1$.
\end{lemma}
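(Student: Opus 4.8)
The statement is the elementary graph-theoretic fact that if $D \subseteq V(H)$ is an $r$-dominating set (every vertex within distance $r$ of $D$), then $\mathrm{diam}(H) \le (2r+1)|D| - 1$. The plan is to reduce the diameter bound to a statement about a contracted auxiliary graph: contract each ball of radius $r$ around a vertex of $D$ (or rather, pass to a suitable ``cluster graph'') and observe that the diameter of the cluster graph is at most $|D| - 1$, while each cluster has radius at most $r$, so it contributes at most $2r$ to any shortest path's length inside a cluster.

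\textbf{Key steps.} First I would set $m = |D|$ and prove the bound by induction on $m$. If $m = 1$, say $D = \{v\}$, then every vertex is within distance $r$ of $v$, so for any $x, y \in V(H)$ we have $\dist_H(x,y) \le \dist_H(x,v) + \dist_H(v,y) \le 2r \le (2r+1)\cdot 1 - 1$, using $r \ge 0$; the case $r = 0$ forces $H$ to be a single vertex and the bound reads $0 \le 0$. For the inductive step with $m \ge 2$: since $H$ is connected, pick a vertex $d_1 \in D$ and let $D' = D \setminus \{d_1\}$. Take any $x, y \in V(H)$ and a shortest $x$--$y$ path $P$ in $H$. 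Let $z$ be the first vertex on $P$ (walking from $x$) that is dominated by $d_1$ but I instead do the following cleaner argument: consider the set $A$ of vertices dominated by $d_1$, i.e. $A = \{v : \dist_H(v, d_1) \le r\}$, and $B = V(H) \setminus A$. Every vertex of $B$ is dominated by some element of $D'$. The crux is to bound how far apart $A$ and $B$ can be and to handle $x, y \in A$ and $x, y \in B$ and the mixed case; this case analysis is where one must be careful, since $H[B]$ need not be connected.

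The cleanest route, which I would adopt, is the cluster argument. Assign to each vertex $v \in V(H)$ a ``center'' $c(v) \in D$ with $\dist_H(v, c(v)) \le r$; this partitions $V(H)$ into classes $V_d = c^{-1}(d)$ for $d \in D$ (some possibly empty). Form the auxiliary multigraph $H^{\star}$ on vertex set $D$ (ignoring empty classes) by putting an edge between $d$ and $d'$ whenever some edge of $H$ joins $V_d$ to $V_{d'}$. Then $H^{\star}$ is connected (a path in $H$ projects to a walk in $H^{\star}$), so $\mathrm{diam}(H^{\star}) \le |D| - 1$. Now given $x, y \in V(H)$, let $d_0 = c(x), d_1, \ldots, d_\ell = c(y)$ be a shortest path in $H^{\star}$, so $\ell \le |D| - 1$. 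For each consecutive pair $d_{i}, d_{i+1}$ there is an edge of $H$ joining some $a_i \in V_{d_i}$ to some $b_i \in V_{d_{i+1}}$. Concatenating: $x \leadsto d_0$ (length $\le r$), $d_0 \leadsto a_0$ (length $\le r$), edge $a_0 b_0$ (length $1$), $b_0 \leadsto d_1$ (length $\le r$), \ldots. A careful bookkeeping: within each cluster $V_{d_i}$ for $1 \le i \le \ell - 1$ we pass through the center at most once spending $\le 2r$, at the two ends ($x$ to center, center to $y$) we spend $\le r$ each, and we use $\ell$ connecting edges. This gives $\dist_H(x,y) \le 2r + 2r(\ell - 1) + \ell + \text{(adjustment)}$; I would tighten the constants so the total is $\le (2r+1)|D| - 1$, i.e. $\le (2r+1)(\ell+1) - 1$. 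Concretely: bound $\dist_H(x,y) \le \sum_{i=0}^{\ell} (\text{intra-cluster cost in } V_{d_i}) + \ell$, where the intra-cluster cost in $V_{d_i}$ is $\le 2r$ for $0 < i < \ell$ and $\le r$ for $i \in \{0, \ell\}$ (one only needs $x$ to $a_0$, not through the center, but routing through the center costs $\le 2r$ and for the endpoint clusters only one leg is needed so $\le r$ suffices if $\ell \ge 1$, and $\le 2r$ if $\ell = 0$ which is the base case). Summing: $\dist_H(x,y) \le r + 2r(\ell-1) + r + \ell = 2r\ell + \ell = (2r+1)\ell \le (2r+1)(|D|-1) \le (2r+1)|D| - 1$, as desired.

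\textbf{Main obstacle.} The only delicate point is the constant-chasing in the cluster argument: one must route a path from $x$ to $y$ through the centers without over-counting the intra-cluster detours, and in particular handle the endpoint clusters (where a single leg of length $\le r$ suffices) versus the interior clusters (where entering and leaving forces $\le 2r$). Getting exactly $(2r+1)|D| - 1$ rather than something slightly worse like $(2r+1)|D|$ requires noting that the two endpoint clusters together contribute $\le 2r$, not $\le 4r$, and that $\ell \le |D| - 1$ (not $|D|$) since $H^\star$ has $|D|$ vertices. Everything else is routine; the connectivity of $H^\star$ follows immediately from connectivity of $H$ by projecting paths.
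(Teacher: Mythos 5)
Your proof takes a genuinely different route from the paper. The paper argues by contradiction and pigeonhole: if some shortest path has length at least $(2r+1)|D|$, then it contains at least $(2r+1)|D|+1$ vertices, so some $d\in D$ is the chosen nearest center of at least $2r+2$ of them; the first and last such vertices are at distance $\geq 2r+1$ along the path but can be shortcut through $d$ at cost $\leq 2r$, contradicting minimality. Your cluster-graph construction is a clean direct alternative; both are short, but the pigeonhole version avoids any constant-chasing at the endpoints, which is exactly where your write-up slips.

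The slip: you claim the two endpoint clusters contribute $\leq r$ each, ``since only one leg is needed.'' That is not correct. Within $V_{d_0}$ you must travel from $x$ to the exit vertex $a_0$, and both $x$ and $a_0$ are merely guaranteed to be within distance $r$ of $d_0$, so this leg costs up to $2r$, the same as an interior cluster (the tight example is $x$ and $a_0$ on opposite branches of a star around $d_0$ with $r=1$). The correct tally is therefore $\dist_H(x,y)\leq 2r(\ell+1)+\ell=(2r+1)\ell+2r$, not $(2r+1)\ell$. Fortunately this is absorbed by the slack you already have: since $\ell\leq |D|-1$, one gets $(2r+1)\ell+2r\leq (2r+1)(|D|-1)+2r=(2r+1)|D|-1$, matching the claimed bound exactly (and this is tight, e.g.\ for $|D|=2$, $r=1$). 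So the approach and conclusion are sound; the intermediate inequality as written is wrong and should be corrected before the argument is presentable.
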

\begin{proof}
For a vertex $w\in V(H)$, let $\pi(w)$ be a vertex of $D$ closest to $v$, breaking ties arbitrarily. For sake of contradiction assume that there exist two vertices $u,v\in V(H)$ such that the shortest path $P$ in $H$ between $u$ and $v$ is of length at least $(2r+1)|D|$. Then $|V(P)|\geq (2r+1)|D|+1$, and by the pigeon-hole principle there must exist a vertex $x\in D$ such that $x=\pi(w)$ for at least $2r+2$ vertices of $V(P)$. Let $w_1$ be the first of these vertices and $w_2$ be the last; note that the distance between $w_1$ and $w_2$ on $P$ is at least $2r+1$, since there are at least $2r$ vertices on $P$ between them. Now obtain a walk $P'$ by removing $P[w_1,w_2]$ from $P$, and inserting first a shortest path from $w_1$ to $x$ and then a shortest path from $x$ to $w_2$. By assumption, both these paths are of length at most $r$, so $P'$ is shorter than $P$. This contradicts the minimality of $P$.
\end{proof}

We are ready to give a bound on the diameter of the dual of $G$.

\begin{lemma}\label{lem:emwc:dual-bound}
If Rules \ref{red:emwc:first}--\ref{red:emwc:Ki-dist} are not applicable,
then the diameter of the dual of $G$
is $\Oh(k_{OPT}^3)$.
\end{lemma}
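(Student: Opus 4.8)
The plan is to combine the bound $|K_r| < 2k_{OPT}$ from Lemma~\ref{lem:emwc:Ki-props} with Reduction Rule~\ref{red:emwc:Ki-dist} and the elementary covering estimate of Lemma~\ref{lem:small-diam}. The key point is that, once Rule~\ref{red:emwc:Ki-dist} is no longer applicable, \emph{every} edge of $G$ lies within distance at most $k$, in the dual $G^\ast$, of the edge set $\bigcup_{i=1}^r K_i$. Translating this into the dual graph: every \emph{face} of $G^\ast$ (equivalently, every vertex of $G$, but what we actually care about is a covering statement in $G^\ast$ itself) is close to the dual vertices touched by $\bigcup_{i=1}^r K_i$. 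So first I would fix the dual graph $G^\ast$ and let $D$ be the set of dual vertices (i.e.\ faces of $G$, or rather the endpoints in $G^\ast$ of the edges of) that are incident to some edge of $\bigcup_{i=1}^r K_i$.

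\textbf{Bounding $|D|$.} By Lemma~\ref{lem:emwc:Ki-props}, the cut sizes satisfy $1 \le |K_1| < |K_2| < \cdots < |K_r| < 2k_{OPT}$, so in particular $|K_i| \le |K_r| < 2k_{OPT}$ for every $i$, and $r < 2k_{OPT}$. Hence
\[
\Bigl|\bigcup_{i=1}^r K_i\Bigr| \le \sum_{i=1}^r |K_i| < r \cdot 2k_{OPT} < 4 k_{OPT}^2 .
\]
Each edge of $G$ corresponds to a single edge of the dual $G^\ast$, which has two endpoints, so the number of dual vertices incident to $\bigcup_{i=1}^r K_i$ is at most $2 \cdot 4k_{OPT}^2 = 8k_{OPT}^2$. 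Thus $|D| = \Oh(k_{OPT}^2)$.

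\textbf{Bounding the diameter.} Since Rule~\ref{red:emwc:Ki-dist} is not applicable, every edge of $G$ is at distance at most $k$ in $G^\ast$ from $\bigcup_{i=1}^r K_i$; reading this on the dual side, every vertex of $G^\ast$ is at distance at most $k+1$ (one extra step to reach an endpoint of the nearby edge) from some vertex of $D$. Also recall $k \le 2k_{OPT}$, since $k = |\bigcup_{t\in\terms}\delta(t)|$ and Lemma~\ref{lem:emwc-2apx} applies once Rule~\ref{red:emwc:behind-cut} is exhausted. Moreover $G^\ast$ is connected because $G$ is connected. Now apply Lemma~\ref{lem:small-diam} with $H = G^\ast$, the covering set $D$, and radius $r' = k+1 = \Oh(k_{OPT})$: the diameter of $G^\ast$ is at most $(2r'+1)|D| - 1 = \Oh(k_{OPT}) \cdot \Oh(k_{OPT}^2) = \Oh(k_{OPT}^3)$, as claimed.

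\textbf{Main obstacle.} The only delicate point is the bookkeeping between ``distance in $G^\ast$ from the edge set $\bigcup_i K_i$'' (which is how Rule~\ref{red:emwc:Ki-dist} is phrased, via distances between edges in the dual) and ``distance in $G^\ast$ from the vertex set $D$'' (which is the form needed to invoke Lemma~\ref{lem:small-diam}); this costs only an additive constant in the radius and a factor $2$ in $|D|$, and must be stated carefully, but is routine. Everything else is just substituting the bounds $r < 2k_{OPT}$, $|K_i| < 2k_{OPT}$, and $k \le 2k_{OPT}$ into the covering estimate. Hence the diameter of the dual of $G$ is $\Oh(k_{OPT}^3)$.
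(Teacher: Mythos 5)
Your proof is correct and takes essentially the same approach as the paper: let $D$ be the dual vertices incident to $\bigcup_{i=1}^r K_i$, bound $|D|=\Oh(k_{OPT}^2)$ via Lemma~\ref{lem:emwc:Ki-props}, use the non-applicability of Rule~\ref{red:emwc:Ki-dist} to cover every dual vertex within radius $k+1=\Oh(k_{OPT})$ of $D$, and conclude with Lemma~\ref{lem:small-diam}. You are somewhat more explicit about the edge-distance versus vertex-distance bookkeeping in the dual (the paper's wording says ``vertices of $G$'' where it means vertices of the dual), but the argument is the same.
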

\begin{proof}
By Lemma~\ref{lem:small-diam}, since the dual of $G$ is connected, it suffices to identify a set $D$ of $\Oh(k_{OPT}^2)$ vertices of $G$ such that every vertex of $G$ is in distance at most $k+1$ from $D$. We claim that $D=V(\bigcup_{i=1}^r K_i)$ is such a set. By Lemma~\ref{lem:emwc:Ki-props} we have that $|D|\leq \Oh(k_{OPT}^2)$. Take now any vertex $v\in V(G)$ and, since Rules~\ref{red:emwc:empty-cc} and~\ref{red:emwc:cc} are not applicable, let $e$ be an arbitrary edge incident to $v$. Since Rule \ref{red:emwc:Ki-dist} is not applicable, $e$ is in distance at most $k$ from $D$, so also $v$ is in distance at most $k+1$ from $D$.
\end{proof}

%For any $e \in G$, let $f(e)$ be an element of $\bigcup_{i=1}^r K_i$
%that minimizes the distance between $f(e)$ and $e$ in the dual of $G$.
%As Rule \ref{red:emwc:Ki-dist} is not applicable, this distance is at most $k$.
%Hence, if $f(e) = f(e')$, then the distance in the dual between $e$ and $e'$
%is at most $2k+1$ (the extra one comes from the fact that we measure distance
%between edges).
%Consequently, in any shortest path in the dual of $G$, two edges $e$ and $e'$
%with $f(e) = f(e')$ lie within distance at most $2k+1$. We infer that
%the diameter of the dual of $G$ is bounded
%by $\Oh(k\sum_{i=1}^r |K_i|) = \Oh(k^2r) = \Oh(k_{OPT}^3)$.

\subsubsection{Cutting the dual open and applying Theorem \ref{thm:main}}\label{sec:emwc:finish}

We now proceed to the application of Theorem \ref{thm:main}.
We start with the following observation.

\begin{lemma}\label{lem:emwc:cacti}
% Yeah! Another cool plural!
If Rules \ref{red:emwc:cc} and \ref{red:emwc:2terms} are not applicable,
then each $2$-connected component of $\bigcup_{t \in \terms} \delta(t)$ in the dual of $G$
is a cycle. That is, $\bigcup_{t \in \terms} \delta(t)$ is a set of cacti in the dual of $G$.
\end{lemma}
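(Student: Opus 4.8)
The plan is to analyze the structure of $\bigcup_{t \in \terms} \delta(t)$ as a subgraph of the dual graph $G^\ast$, and to show it has no vertex of degree at least $3$ in any $2$-connected component, which together with connectivity within such a component forces it to be a cycle. First I would recall the standard planar duality dictionary: for a terminal $t \in \terms$, the set $\delta(t)$ of primal edges incident to $t$ corresponds in $G^\ast$ to precisely the set of dual edges forming the closed walk (indeed, since Rule~\ref{red:emwc:2terms} is not applicable, no two terminals are adjacent, so the faces around $t$ in $G$ are all distinct and this closed walk is in fact a simple cycle) that bounds the face of $G^\ast$ corresponding to the vertex $t$. Call this cycle $C_t$. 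Thus $\bigcup_{t \in \terms} \delta(t) = \bigcup_{t \in \terms} C_t$ in $G^\ast$, a union of simple cycles.

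Next I would argue that these cycles are almost disjoint. Since Rule~\ref{red:emwc:2terms} has been applied exhaustively, every edge $e \in E(G)$ has at most one endpoint in $\terms$ (if it had two, it would connect two terminals). Therefore for any two distinct terminals $t_1, t_2 \in \terms$ we have $\delta(t_1) \cap \delta(t_2) = \emptyset$, i.e.\ the cycles $C_{t_1}$ and $C_{t_2}$ are edge-disjoint in $G^\ast$. Moreover, because Rule~\ref{red:emwc:cc} is not applicable, $G \setminus \terms$ is connected, which I would use to control how the cycles $C_t$ can share vertices: a vertex $v^\ast$ of $G^\ast$ (a face $f$ of $G$) lying on $C_{t_1} \cap C_{t_2}$ corresponds to a face $f$ incident to both $t_1$ and $t_2$. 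The key point is that the dual vertices on $C_t$ correspond exactly to the faces of $G$ incident with $t$, and the cyclic order of these faces around $t$ is the cyclic order in which they appear along $C_t$.

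Then I would carry out a local degree count. Suppose, for contradiction, that some vertex $v^\ast$ of the union $\bigcup_t C_t$ lies in a $2$-connected component but has degree $\geq 3$ in the union; since each individual $C_t$ is a cycle and the $C_t$ are edge-disjoint, a degree-$\geq 3$ vertex must be a cut vertex of the union — it is where two (or more) otherwise edge-disjoint cycles touch at a single point. More precisely, I would show that if $v^\ast \in C_{t_1} \cap C_{t_2}$ with $t_1 \neq t_2$, then $v^\ast$ is an articulation point of $\bigcup_t C_t$: locally around the face $f = v^\ast$, the incident edges belonging to $C_{t_1}$ are separated in the rotation from those belonging to $C_{t_2}$ by edges incident to $t_1$ on one side and $t_2$ on the other, using planarity of $G$ and that $t_1 \ne t_2$ are distinct vertices on the boundary of $f$. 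Hence no such shared vertex can lie in a $2$-connected component, so within each $2$-connected component only one $C_t$ contributes edges, and that component, being a $2$-connected subgraph of the cycle $C_t$, is $C_t$ itself (or rather a subgraph of a cycle that is $2$-connected, hence a cycle). This yields that each $2$-connected component is a cycle, so $\bigcup_{t \in \terms} \delta(t)$ is a disjoint-at-vertices union of cycles glued at cut vertices, i.e.\ a collection of cacti.

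The main obstacle I anticipate is the careful planar-topological argument that a shared dual vertex $v^\ast \in C_{t_1} \cap C_{t_2}$ is necessarily a cut vertex of the union — one must rule out the configuration where $C_{t_1}$ and $C_{t_2}$ "cross" at $v^\ast$ (interleave in the rotation around $f$), which would keep $v^\ast$ inside a $2$-connected component. Here I would invoke that $f$ is a face of $G$ whose boundary walk, read cyclically, visits $t_1$ and $t_2$; the edges of $C_{t_1}$ at $v^\ast$ are exactly the two dual edges dual to the two primal edges of the boundary walk of $f$ immediately preceding and following an occurrence of $t_1$, and similarly for $t_2$. If the two cycles interleaved, the boundary walk of $f$ would have to alternate $t_1, t_2, t_1, t_2$, and using connectivity of $G \setminus \terms$ (Rule~\ref{red:emwc:cc}) together with a short argument one derives a contradiction or shows $v^\ast$ is still a cut vertex. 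I expect this local case analysis, rather than any global counting, to be where the real work lies; the rest is bookkeeping with the standard duality between edge cuts around a vertex and face-bounding cycles in the dual.
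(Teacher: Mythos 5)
Your proposal correctly identifies the two facts that make the lemma true — the $\delta(t)$ are edge-disjoint face-bounding cycles in $G^\ast$ (using Rule~\ref{red:emwc:2terms}), and the connectivity of $G\setminus\terms$ (Rule~\ref{red:emwc:cc}) must be brought to bear somewhere. But the strategy you build on top of those facts has a genuine gap: you try to show that any dual vertex $v^\ast$ shared by two cycles $C_{t_1}$ and $C_{t_2}$ is an articulation point of the union, using a purely local argument about the rotation of edges around $v^\ast$. Even if you successfully rule out a ``crossing'' (interleaving) at each shared vertex, this does not imply that shared vertices are cut vertices. Two simple cycles that touch tangentially at two distinct vertices $a^\ast, b^\ast$ (no interleaving at either) still have a $2$-connected union — a theta-like graph with four internally disjoint $a^\ast$--$b^\ast$ paths — and neither $a^\ast$ nor $b^\ast$ is a cut vertex. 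Such a configuration with two face boundaries sharing two vertices but no edges is genuinely possible in a plane graph, so you cannot exclude it by reasoning at a single vertex; the obstruction has to be global.

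The paper's proof sidesteps this entirely by arguing at the level of cycles rather than vertices. It takes an arbitrary simple cycle $C$ in $\bigcup_t\delta(t)$ that uses edges from at least two different $\delta(t_i)$ (which is exactly what exists iff some $2$-connected component is not one of the $\delta(t)$), and regards $C$ as a Jordan curve $\gamma$ in the plane. Since $C$ is simple and each $\delta(t_i)$ is itself a cycle, for each contributing $t_i$ there is an edge of $\delta(t_i)$ not on $C$; from that, taking the nonterminal endpoint of one such edge (nonterminal by Rule~\ref{red:emwc:2terms}) and of an edge of $\delta(t_i)\cap C$, one gets nonterminal vertices on both sides of $\gamma$. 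But every primal edge crossing $\gamma$ is dual to an edge of $C\subseteq\bigcup_t\delta(t)$, hence incident to a terminal, so $G\setminus\terms$ has no edge crossing $\gamma$ and is disconnected — contradicting that Rule~\ref{red:emwc:cc} is not applicable. To repair your proof you would need to replace the per-vertex cut-vertex claim by this per-cycle separation argument; the ``short argument'' you defer to is really the whole proof.
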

\begin{proof}
Let $H_0=\bigcup_{t \in \terms} \delta(t)$ be a subgraph of the dual of $G$. First, note that if Rule \ref{red:emwc:2terms} is not applicable, then
$\delta(t)$, for $t \in \terms$, are edge-disjoint cycles in $G^\ast$. We claim that these cycles are precisely $2$-connected components of $H_0$. For the sake of contradiction, assume that there exists a simple cycle $C$ in $H_0$ that contains edges from cycles $\delta(t_1),\delta(t_2),\ldots,\delta(t_p)$, where $p\geq 2$. Since $C$ is simple, we can assume that for each $i$, there exists an edge of $\delta(t_i)$ not contained in $C$. Let $\gamma$ be the curve on the plane corresponding to cycle $C$. Observe that edges of $G$ crossing $\gamma$ are precisely the primal edges of $C$. Take $t_1$ and observe that in $G$ there is an edge incident to $t_1$ crossing $\gamma$, and there is an edge incident to $t_1$ not crossing $\gamma$. Since Rule \ref{red:emwc:2terms} is not applicable, we conclude that there exist nonterminal vertices on both sides of the curve $\gamma$. As each edge of $H_0$ is incident to a terminal, removing $\terms$ from $G$ disconnects nonterminal vertices on different sides of $\gamma$, and Rule~\ref{red:emwc:cc} would be applicable. This is a contradiction.
\end{proof}

%Moreover, each edge of $\bigcup_{t \in \terms} \delta(t)$ connects $\terms$ with $G \setminus \terms$.
%As $G \setminus \terms$ is connected (Rule \ref{red:emwc:cc} is not applicable),
%in the dual of $G$, each edge of $\bigcup_{t \in \terms} \delta(t)$ is incident to the face
%that contains the drawing of the (primal) graph $G \setminus T$.
%Hence, $\delta(t)$ for $t \in \terms$ are exactly the doubly connected components of $\bigcup_{t \in \terms} \delta(t)$ in the dual of $G$.

We now construct two subgraphs $H_0$ and $H_s$ of the dual of $G$.
Let $H_0 = \bigcup_{t \in \terms} \delta(t)$. We note that, by Lemma \ref{lem:emwc:cacti},
for each connected component $C$ of $H_0$, the closed walk around the outer face of $C$ is an Eulerian tour of $C$ -- as shown on Figure~\ref{fig:cacti}~(a).

\begin{figure}[h]
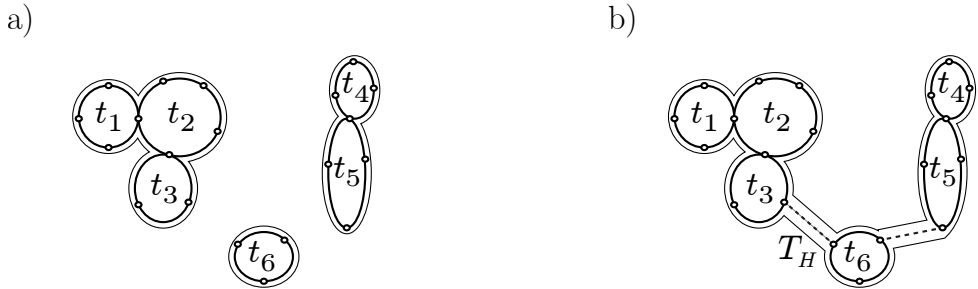

\centering
\svg{0.8\textwidth}{cacti}
\caption{Panel a) shows the set of cacti together with their Eulerian tours, i.e., the graph $H_0$. Panel b) shows the construction of graph $H_s$, 
where the three Eulerian tours are jointed together using two copies of paths $P$ and $P'$.}
\label{fig:cacti}
\end{figure}

We now construct a connected subgraph $H_s$ of the dual that contains as subgraph $H_0$. We first contract all connected components of $H_0$ to 
vertices, and find a minimum spanning tree $T_H$ over these vertices (i.e., a $2$-approximate Steiner tree). We set $H_s = H_0 \cup T_H$. Observe that 
$|H_0| = \sum_{t \in \terms} |\delta(t)| = k \leq 2k_{OPT}$. Moreover, by Lemma \ref{lem:emwc:dual-bound} the distance
between any two terminals in the dual is bounded by $\Oh(k_{OPT}^3)$, so the cost of the MST $T_H$ is bounded by $\Oh(k_{OPT}^4)$. 
We infer that $|H_s| = \Oh(k_{OPT}^4)$.

Now consider a multigraph $H_{2s}$ obtained by taking a union of $H_0$ and two copies of $T_H$. 
We observe that $H_{2s}$ is Eulerian, and let $W$ be its Eulerian tour. 
Note that $W$ is a closed walk around the outer face of $H_s$ and each edge
of $H_0$ appears exactly once on $W$ and each edge of $H_s \setminus H_0=T_H$ appears exactly twice on $W$.
Hence, $|W| = \Oh(k_{OPT}^4)$.
We cut the dual of $G$ open along $W$. That is, we start with $G^\ast$, the dual of $G$,
we duplicate each edge of $H_s \setminus H_0$ and, for each
vertex $v \in V(H_s)$, we create a of copies of $v$ equal to the number of appearances of $v$ on $W$.
Let $\nG^\ast$ be the graph obtained in this way. In $\nG^\ast$ the walk $W$ becomes a simple cycle,
enclosing a face $f_W$. We fix an embedding of $\nG^\ast$ where $f_W$ is the outer face.
In this way $\nG^\ast$ is a brick with perimeter of length $\Oh(k_{OPT}^4)$.
Let $\pi$ be a mapping that assigns to each edge of $\nG^\ast$ its corresponding edge of $G$ and $G^\ast$.

We apply Theorem \ref{thm:main} to the brick $\nG^\ast$, obtaining a set $F'$
of size $\Oh(k_{OPT}^{568})$. The set $F'$ naturally projects to a set $F \subseteq E(G)$ via the mapping $\pi$.
We claim that we may return the set $F$ in our algorithm. That is, to finish the proof of Theorem \ref{thm:emwc}
we prove the following lemma.
\begin{lemma}\label{lem:emwc:finish}
There exists a minimum solution $X$ to \pemwc{} on $(G,\terms)$ that is contained in $F$.
\end{lemma}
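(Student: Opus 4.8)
The goal is to show that the projection $F = \pi(F')$ of the graph $F'$ obtained by applying Theorem~\ref{thm:main} to the cut-open dual $\nG^\ast$ contains a minimum edge multiway cut of $(G,\terms)$. The plan is to take an arbitrary minimum solution $X$ and massage it, step by step, into a solution $X'$ that lives inside $F$, while never increasing its size. First I would invoke the reductions: since Rules~\ref{red:emwc:first}--\ref{red:emwc:Ki-dist} are exhaustively applied, we know $k = |\bigcup_{t\in\terms}\delta(t)| \le 2k_{OPT}$, the dual $G^\ast$ has diameter $\Oh(k_{OPT}^3)$ by Lemma~\ref{lem:emwc:dual-bound}, and $H_0 = \bigcup_{t\in\terms}\delta(t)$ is a union of cacti in $G^\ast$ by Lemma~\ref{lem:emwc:cacti}; these facts were already used to bound $|\prm \nG^\ast| = \Oh(k_{OPT}^4)$ and hence $|F'| = \Oh(k_{OPT}^{568})$. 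The main conceptual point, which I would isolate as a preliminary claim, is the standard duality: a set $X \subseteq E(G)$ is an edge multiway cut for $(G,\terms)$ if and only if, in the dual $G^\ast$, the set $X$ (viewed via the primal--dual edge bijection) together with $H_0$ has the property that every connected component of $H_0$ is enclosed by a cycle of $X \cup H_0$ separating it appropriately — more precisely, I would phrase it as: $X$ is a multiway cut iff for the subgraph $X^\ast \cup H_0$ of $G^\ast$, no face of $G$ corresponding to a vertex incident to two different cycles $\delta(t_1),\delta(t_2)$ is reachable across... Actually the cleanest formulation: $X$ separates $t_i$ from $t_j$ in $G$ iff in $G^\ast$ some cycle through $\delta(t_i)$-edges and $X^\ast$-edges encircles $t_i$, i.e.\ $\delta(\reach{t_i}{X})^\ast$ is a closed walk in $X^\ast \cup \delta(t_i)$ enclosing $t_i$ but not $t_j$.

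\textbf{Key steps in order.} (1) Without loss of generality assume $X$ is a minimal multiway cut, so every edge of $X$ lies on the boundary between $\reach{t}{X}$ and its complement for some $t$; equivalently $X^\ast$ decomposes into closed walks, and $X^\ast \cup H_0$, after we discard the parts of $X^\ast$ not needed, can be taken so that each connected component of $X^\ast \cup H_0$ is ``anchored'' on $H_0$. (2) Cut the dual open along $W$ exactly as in the construction of $\nG^\ast$: since $W$ is a closed walk around the outer face of $H_s \supseteq H_0$ and traverses each edge of $H_0$ once and each edge of $H_s\setminus H_0$ twice, the cut operation turns $W$ into the simple cycle $\prm \nG^\ast$, and every edge of $G^\ast$ not on $W$ has a unique preimage in $\nG^\ast$ while edges of $W$ may split into two. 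A minimum solution $X$, when lifted to $\nG^\ast$, becomes a disjoint union of connectors whose anchors lie on $\prm \nG^\ast$ (the anchors are precisely the places where the solution touched $H_s$, which after cutting ends up on the perimeter): this is the analogue of the ``an optimal Steiner tree becomes a forest of connectors in the cut-open brick'' argument from the proofs of Theorems~\ref{thm:pst} and~\ref{thm:psf}. I would argue that each connected component $Y$ of the lifted $X$ that is strictly enclosed by $\prm\nG^\ast$ connects exactly the set of its anchors on the perimeter. (3) Apply property (iii) of Theorem~\ref{thm:main}: for the terminal set $\terms_Y := V(Y) \cap V(\prm\nG^\ast)$, the graph $F'$ contains an optimal Steiner tree $T_Y$ connecting $\terms_Y$ in $\nG^\ast$, and $|T_Y| \le |Y|$ since $Y$ is itself a connector for $\terms_Y$. (4) Replace each $Y$ by the corresponding $T_Y$ and project back via $\pi$; call the result $X'$. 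We have $|X'| \le |X| = k_{OPT}$ and $X' \subseteq F$. (5) The remaining obligation is to check $X'$ is still a valid multiway cut for $(G,\terms)$. Here I would use the duality characterization again: the replacement $Y \rightsquigarrow T_Y$ preserves the connectivity among the anchors in $\nG^\ast$, and the anchors are precisely the vertices of $H_s$ touched; since the cycles $\delta(t)$ are entirely inside $F$ (they sit on $\prm \nG^\ast$, and $\prm\nG^\ast \subseteq F'$ by property (i) of Theorem~\ref{thm:main}, hence $\pi(\prm\nG^\ast) \supseteq H_0$), and since $H_0 \cup X'$ has, around each $\delta(t)$, the same pattern of enclosing closed walks as $H_0 \cup X$ did — because $T_Y$ reconnects the same anchors — the set $X'$ separates the terminals exactly as $X$ did.

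\textbf{Where the difficulty lies.} The routine parts are the size bound and the ``cut open $\Rightarrow$ solution becomes a collection of connectors with anchors on the perimeter'' step, which mirror the proofs of Theorems~\ref{thm:pst} and~\ref{thm:psf} almost verbatim. The genuinely delicate step is (5): verifying that swapping a connected piece $Y$ of the dual solution for an arbitrary optimal Steiner tree $T_Y$ on the same anchor set does not break the separation property in the \emph{primal} graph. The subtlety is that $Y$ might separate its anchors from \emph{each other} in some planar-topological way that $T_Y$, having a different shape, fails to replicate — in particular $Y$ could be non-tree-like or could enclose primal vertices in a way that matters, whereas what property (iii) of Theorem~\ref{thm:main} guarantees is only abstract connectivity of the anchors, not the planar embedding type of the connecting subgraph. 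I expect the resolution to hinge on the fact already exploited in the construction (and in Lemma~\ref{lem:emwc:cacti} / the cacti structure): because $H_s$ is cut open into a simple cycle and all of $H_0$ — which carries all the ``enclosing'' information about which terminal is separated from which — ends up on the perimeter, the only thing a dual connector $Y$ strictly inside $\prm\nG^\ast$ can do is glue together perimeter anchors; whether two terminals $t_i, t_j$ end up separated depends only on \emph{which} perimeter anchors get glued, not on \emph{how}. Making this ``only which, not how'' statement precise — presumably by a homotopy / winding-number argument in the plane minus the two terminal faces, in the spirit of the fundamental-group computations in Lemma~\ref{lem:dp:tree-to-bricks} and the core theorem — is the crux of the proof, and I would spend the bulk of the write-up there.
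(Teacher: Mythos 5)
Your high-level plan matches the paper's: lift a minimum solution $X$ to $\nG^\ast$, carve out the pieces of $X$ strictly enclosed by $\prm\nG^\ast$ as connectors with anchors on the perimeter, swap each such piece for a same-or-smaller Steiner tree from $F'$ via property~(iii) of Theorem~\ref{thm:main}, and project back. You also correctly flag step~(5) — showing that the swap preserves the multiway-cut property — as the genuine crux. The gap is that you leave that step as an intuition, and the intuition you offer is not the argument that actually closes it.

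Two concrete issues. First, your formulation ``the only thing a dual connector can do is glue together perimeter anchors; whether two terminals end up separated depends only on \emph{which} anchors get glued, not on \emph{how}'' is an oversimplification and, taken literally, is not what the proof establishes. The paper does \emph{not} prove that the separation pattern is preserved; it proves the weaker statement that $X'$ remains a multiway cut, by contradiction: it assumes a terminal-to-terminal path $P$ avoids $X'$ and \emph{reroutes} $P$ so that it also avoids $X$. The rerouting has two ingredients you have not supplied: (a)~for each segment $Q_i$ of $P$ between consecutive $H_s$-edges, if the two perimeter-copies it hits lie on different arcs of $\prm\nG^\ast$ induced by $\dupaterms$, then $Q_i$ must cross $Z'$, so $P$ meets $X'$; (b)~if they lie on the same arc, then the portion of $Q_i$ that enters the region between $Y$-edges can be detoured along faces incident to $Y$'s endpoints but not in $Y$, and — crucially — these detour edges avoid $X$ \emph{because $Y$ is defined as an equivalence class of the relation $\rel$}: any $X$-edge incident to a face incident to $Y$ would be $\rel$-related to $Y$ and hence inside $Y$. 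Without defining $\rel$ and taking $Y$ to be an equivalence class (rather than merely a connected component of the lifted $X$), this step fails. Your ``winding number / fundamental group'' hunch points in a vaguely topological direction, but the actual argument is a combinatorial path-detour in the planar dual, not a homotopy computation; the planarity is used only to guarantee that the boundary walk around the relevant face region is connected.

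Second, you propose to replace \emph{all} components of the lifted $X$ simultaneously. The paper instead picks $X$ to minimize $|X \setminus F|$ and swaps a \emph{single} offending equivalence class $Y$, deriving a contradiction. This extremal, one-at-a-time formulation is not cosmetic: with simultaneous replacement, a violating path $P$ could thread through several replaced pieces, and the rerouting argument would have to juggle all of them at once (in particular, the detour around one $Y_i$ must avoid not just $X$ but also the other replacements $Z_j$, which you would have to control). The extremal version sidesteps this entirely because all other pieces of $X$ are untouched. You should adopt it.

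So: right skeleton, right diagnosis of where the difficulty lies, but the core step is genuinely missing and the sketch you give for it (``preserve the enclosing pattern'') is not the argument that works. To complete the proof you need the equivalence relation $\rel$, the extremal choice of $X$, the arc decomposition $A_1,\ldots,A_{\max(1,|\dupaterms|)}$ of $\prm\nG^\ast$, and the face-detour construction that exploits $\rel$ to avoid $X$.
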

\begin{proof}
Let $X$ be a solution to \pemwc{} on $(G,\terms)$ that minimizes $|X \setminus F|$.
By contradiction, assume $X \setminus F \neq \emptyset$.

We define the following binary relation $\rel$ on $X$: $\rel(e,e')$
if and only if there exists a walk in $G^\ast$ containing $e$ and $e'$, with all edges in $X$ and all internal vertices
not in $V(H_s)$. Clearly, $\rel$ is symmetric and reflexive. We show that it is also transitive.
Assume $\rel(e,e')$ and $\rel(e',e'')$, with witnessing paths $P$ and $P'$.
If $e=e'$ or $e'=e''$, the claim is obvious, so assume otherwise.
We may assume that $P$ starts with $e$ and ends with $e'$ and $P'$ starts with $e'$ and ends with $e''$.
If $P$ and $P'$ traverse $e'$ in the same direction then $P \cup P'$ is a witness to $\rel(e,e'')$, as $P$ and $P'$
are of length at least two.
In the other case, $(P \setminus e') \cup (P' \setminus e')$ is a witness to $\rel(e,e'')$.
Thus, $\rel$ is an equivalence relation.

Note that any edge of $X \cap H_s$ is in a singleton equivalence class of $\rel$.
Let $Y$ be the equivalence class of $\rel$ that contains an element of $X \setminus F$. As $H_s \subseteq F$, we infer that $Y \cap H_s = \emptyset$
and, consequently, $Y$ is also a subgraph (subset of edges) of $\nG^\ast$.
Let $\dupaterms = V(\prm \nG^\ast) \cap V(Y)$ in $\nG^\ast$. We note that $Y$ is a connected subgraph of $\nG^\ast$
that connects $\dupaterms$ -- see Figure~\ref{fig:connector-pemwc}.

\begin{figure}[h]
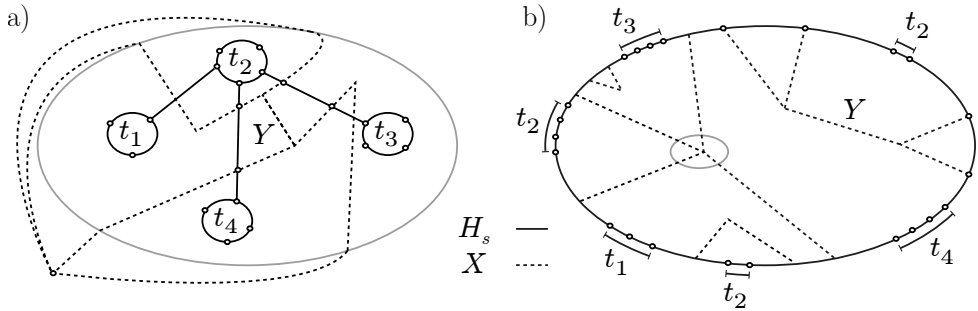

\centering
\svg{0.8\textwidth}{connector-pemwc}
\caption{The figure shows the solution $X$ to \pemwc{} and set $Y$ in a) the dual graph $G^\ast$ and b) the cut open dual graph $\nG^\ast$.}
\label{fig:connector-pemwc}
\end{figure}

By the properties of $F'$, there exists a set $Z' \subseteq F'$ that connects $\dupaterms$ in $\nG^\ast$
and $|Z'| \leq |Y|$. Let $Z = \pi(Z') \subseteq F$. We claim that $X' := (X \setminus Y) \cup Z$ is a solution
to $(G,\terms)$ as well. This would contradict the choice of $X$, as $|X'| \leq |X|$ and $|X' \setminus F| < |X \setminus F|$.

So assume the contrary, and let $P$ be a path connecting two terminals $t^1$ and $t^2$ in $G \setminus X'$. We may assume
that $P$ does not contain any terminal as an internal vertex.
Note that $P$ starts and ends with an edge of $H_0 \subseteq H_s$. As Rule \ref{red:emwc:2terms} is not applicable,
$P$ is of length at least two. Let $e_0,e_1,e_2,\ldots,e_d$ be the edges of $P \cap H_s$, in the order of their appearance
on $P$, let $e_i = u_iv_i$, where $u_i$ lies closer on $P$ to $t^1$ than $v_i$ does.
Note that $e_0,e_d \in H_0$ but $e_i \in H_s \setminus H_0$ for $1 \leq i < d$.
Let $\nG^{\ast\ast}$ be the dual of $\nG^\ast$. For each $i = 1,2,\ldots,d$, we define a cycle $Q_i$ in $\nG^{\ast\ast}$ as follows. Consider first path $P[u_{i-1},v_i]$, and observe that every edge of this path apart from the first and the last is present in $\nG^{\ast\ast}$. Therefore, in $P[u_{i-1},v_i]$ replace the edge $e_{i-1}$ (belonging to $H_s$) with the copy of $e_{i-1}$ in $\nG^{\ast\ast}$
that leads from the outer face of $\nG^\ast$ to the face $v_{i-1}$, and replace the edge $e_i$ with a copy
of $e_i$ in $\nG^{\ast\ast}$ that leads from $u_i$ to the outer face of $\nG^\ast$.
Although $Q_i$ is a cycle in $\nG^{\ast\ast}$, we call the aforementioned copy of $e_{i-1}$ {\em{the first arc}} of $Q_i$, and the copy of $e_i$ {\em{the last arc}}.

The set $\dupaterms$ splits $\prm \nG^\ast$ into a number of arcs $A_1,A_2,\ldots,A_{\max(1,|\dupaterms|)}$. If, for some $1 \leq i \leq d$, the first and the last
edge of the cycle $Q_i$ lies in different arcs $A_\alpha$ and $A_\beta$, then $Q_i$ intersects $Z'$, and, consequently,
$P$ intersects $Z$, a contradiction to the choice of $P$ -- see Figure~\ref{fig:path-pemwc}~(a). Hence, for all $1 \leq i \leq d$, the first and the last arc of $Q_i$
lies in the same arc $A_{\alpha(i)}$. We now reach a contradiction by showing that $t^1$ and $t^2$ lie in the same connected component
of $G \setminus X$.

\begin{figure}[h]
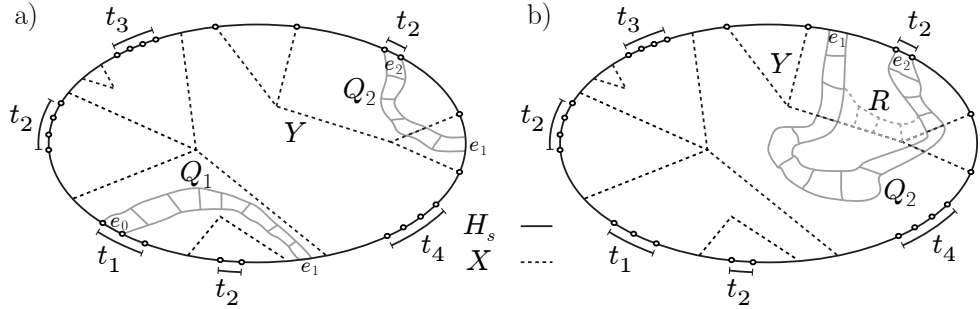

\centering
\svg{0.8\textwidth}{path-pemwc}
\caption{The path $P$ in $G$ can be seen as a sequence of faces in the dual. On panel (a) the 
last and the first edge of $Q_2$ lie in different arcs, whereas on (b) these edge belong to the same arc.}
\label{fig:path-pemwc}
\end{figure}

As $P$ avoids $X'$ and $Y \cap H_s = \emptyset$, $P$ avoids $X \cap H_s$ and $e_0,e_1,\ldots,e_d \notin X$. Let $i$ be the smallest integer such that $e_i$ does not lie in the same
connected component of $G \setminus X$ as $t^1$. If such $i$ does not exist, the claim is proven as $t^2$ is an endpoint of $e_d$.
Consider $P[v_{i-1},u_i]$; note that this is also a subpath of $Q_i$, as it does not contain any edge of $H_s$.
Recall that $P$ avoids $X \setminus Y$. Hence, $P[v_{i-1},u_i]$ intersects $Y$. Moreover, 
the first and the last edge of $P[v_{i-1},u_i]$ lies on the same arc $A_{\alpha(i)}$, so $P[v_{i-1},u_i]$ intersects $Y$ at least twice. 
We treat now $P[v_{i-1},u_i]$ as a subpath of $Q_i$, i.e., a path in $\nG^{\ast\ast}$.
Let $f_1$ be the first face of $\nG^\ast$ on $P[v_{i-1},u_i]$ that is incident to an edge of $Y$, and let $f_2$ be the last such face. Observe that the prefix of $P[v_{i-1},u_i]$ up to $f_1$ and the suffix of $P[v_{i-1},u_i]$ from $f_2$ avoid both $X'$ and $Y$, so they also avoid $X$.

Let us now show that there exists a path $R$ in $\nG^{\ast\ast}$ connecting $f_1$ and $f_2$
that uses only edges that in $\nG^\ast$ are incident to the endpoints of $Y$, but do not belong to $Y$ nor $\prm \nG^\ast$. Existence of path $R$ can be inferred as follows. Take the set of faces $F_{\alpha(i)}$ of $\nG^\ast$ that are reachable in $\nG^{\ast\ast}$ from edges of the arc $A_{\alpha(i)}$ without passing through the infinite face of $\nG^\ast$, or traversing edges of $Y$. Consider also $F_{\alpha(i)}$ as a subset of plane obtained by gluing these faces together along all the edges between them that are not contained in $Y$. By the definition of $Y$ as an equivalence class of $\rel$, the boundary of $F_{\alpha(i)}$ is a closed walk that consists of arc $A_{\alpha(i)}$ and edges of $Y$ that are incident to faces of $F_{\alpha(i)}$. By the definition, both of $f_1$ and $f_2$ are incident to the part of boundary of $F_{\alpha(i)}$ that is contained in $Y$. Path $R$ can be then obtained by traversing faces of $F_{\alpha(i)}$ along its boundary, choosing the direction of the traversal so that part of the boundary of $F_{\alpha(i)}$ that is the arc $A_{\alpha(i)}$ is not traversed -- see Figure~\ref{fig:path-pemwc}~(b).

Since $Y$ is an equivalence class of $\rel$, edges of $R$ do not belong to $X$ (as otherwise they would be in relation with the edges of $Y$). Let $R'$ be $P[v_{i-1},u_i]$ with subpath between faces $f_1$ and $f_2$ replaced with $R$.
If we now project $R'$ to $G^\ast$ and $G$ using $\pi$, we infer that $v_{i-1}$ and $u_i$ lie in the same connected
component of $G \setminus X$, a contradiction to the choice of $i$. This finishes the proof of the lemma, and concludes the proof of Theorem \ref{thm:emwc}.
\end{proof}

%!TEX root = pst-kernel.tex

\section{Extending to bounded-genus graphs}\label{sec:genus}

In this section we extend the results from planar graphs to bounded genus graphs, using the framework of Borradaile et al.~\cite{cora:genus}.
The idea is to reduce the bounded genus case to the planar case by cutting the graph embedded on a surface of bounded genus into a planar graph, using only a cutset of small size.

As in~\cite{cora:genus}, we assume that we are given a combinatorial embedding of genus $g$ of an input graph $G$, where the interior of each face is homeomorphic to an open disc.
We proceed as in Sections~4.1 and~4.2 of~\cite{cora:genus}: given a brick embedded on a surface of genus $g$ (i.e., a graph with a designated face),
we may cut along a number of ``short'' cutpaths to make the brick planar. More precisely, the following theorem summarizes the results of~\cite{cora:genus} in our terminology, in particular the proved guarantees about the behaviour of procedures \verb-Preprocess- and \verb-Planarize- in~\cite{cora:genus}.

\begin{theorem}[\cite{cora:genus}, with adjusted terminology and parameter $\mu$ set to $1$]\label{thm:cora-summary}
Let $G$ be a connected graph embedded into a surface of genus $g$, and let $\terms\subseteq V(G)$ be a set of terminals in $G$. Let $OPT$ be the weight of an optimum Steiner tree connecting $\terms$ in $G$. Then one can in $\Oh(|G|)$ time find subgraphs $CG$ and $G'$ of $G$ such that the following holds:
\begin{itemize}
\item $CG\subseteq G'\subseteq G$, $CG$ and $G'$ are connected, and $CG$ contains all the terminals of $\terms$;
\item all the vertices and edges of $G'$ are at distance at most $4OPT$ from $\terms$ in $G'$, and $G'$ contains all the vertices and edges of $G$ that are at distance at most $2OPT$ from $\terms$ in $G$;
\item cutting $G'$ along $CG$ results in a planar graph $G_p$ with the infinite face (corresponding to cut-open $CG$) being a simple cycle of length at most $8(2g+2)OPT$.
\end{itemize}
\end{theorem}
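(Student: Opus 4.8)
The plan is to follow the reduction of Borradaile, Demaine, and Tazari~\cite{cora:genus} and merely verify that their procedures \verb-Preprocess- and \verb-Planarize- can be instantiated to give the parameter translations claimed above; the topological heavy lifting is borrowed wholesale from~\cite{cora:genus} (and ultimately from the tree--cotree / cut-graph machinery of Eppstein). The one thing we must supply is that $OPT$ is not given to us, so I would start by computing a constant-factor approximation $\tau$ of the optimum Steiner tree weight --- for concreteness a $2$-approximate Steiner tree, obtained as a minimum spanning tree of the terminal metric, which (as recalled in~\cite{cora:genus}) can be produced in $\Oh(|G|)$ time in our setting --- so that $OPT \le \tau \le 2\,OPT$. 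All radius thresholds below are stated in terms of $\tau$ and then reexpressed in terms of $OPT$ by absorbing the factor $2$.

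First I would run \verb-Preprocess-: compute a shortest-path forest rooted at $\terms$ in $G$, delete every vertex and edge at distance more than $2\tau$ from $\terms$, and let $G'$ be the connected component containing the terminals. They all lie in one component, since every optimal Steiner tree is at distance at most $OPT \le \tau < 2\tau$ from $\terms$ and hence survives the deletion together with all of $\terms$ that it connects. Two observations give the first two bullets. Any vertex or edge of $G$ at distance at most $2\tau \ge 2\,OPT$ from $\terms$ in $G$ lies on a path that stays within that radius, hence is not deleted and stays in the terminal component, so $G'$ contains everything within distance $2\,OPT$ of $\terms$ in $G$. Conversely, if a vertex (or edge) of $G'$ is at distance $d \le 2\tau$ from $\terms$ in $G$, the path realizing $d$ is preserved, so it is still at distance $d \le 2\tau \le 4\,OPT$ from $\terms$ within $G'$. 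The $2$-approximate Steiner tree $T_S$ connecting $\terms$ has weight at most $2\,OPT \le 2\tau$, so every point of it is within $2\tau$ of $\terms$ and hence $T_S \subseteq G'$; likewise all cut cycles built below live in $G'$, so $CG \subseteq G' \subseteq G$ is maintained.

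Next I would run \verb-Planarize- to build the cut graph $CG$. Fix one terminal $t_0$ and grow a shortest-path tree $T$ rooted at $t_0$ in $G'$. Since every vertex of $G'$ is within $4\,OPT$ of $\terms$ and every terminal is within $2\,OPT$ of $t_0$ along $T_S \subseteq G'$, the tree $T$ has depth $\Oh(OPT)$. Applying the tree--cotree (system-of-loops) argument of~\cite{cora:genus}: with respect to $T$ there are exactly $2g$ edges of $G'$ lying neither in $T$ nor in a fixed spanning tree of the dual restricted to the remaining edges, and each such edge together with its fundamental cycle in $T$ is a loop $C_i$ through $t_0$ of length at most $2\,\mathrm{depth}(T)+1 = \Oh(OPT)$; cutting $G'$ along $\bigcup_{i=1}^{2g} C_i$ planarizes it. Set $CG := T_S \cup \bigcup_{i=1}^{2g} C_i$. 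Then $CG$ is connected (all $C_i$ pass through $t_0 \in V(T_S)$), contains $\terms$, and $\wei{CG} \le 2\,OPT + 2g\cdot\Oh(OPT) = \Oh(g\cdot OPT)$. Cutting $G'$ along $CG$ yields a planar $G_p$ whose new infinite face is bounded by a closed walk traversing each edge of $CG$ at most twice, hence of length at most $2\wei{CG}$; a doubling of edges and splitting of vertices along this walk --- exactly as in the cut-open constructions used elsewhere in this paper, cf.\ Figure~\ref{fig-intro:cutopen} --- turns it into a simple cycle of the same length. Tallying the constants along the lines of~\cite{cora:genus} brings this within the claimed $8(2g+2)\,OPT$, and all three computations run in $\Oh(|G|)$ time by the linear-time implementations given there.

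The main obstacle is the topological core of \verb-Planarize-: proving that cutting along $T_S$ together with the $2g$ fundamental loops really yields a planar graph whose complementary region is a single open disk, so that the new outer face is bounded by one closed walk. This is precisely the statement that the $2g$ loops generate the first homology of the surface (equivalently, that $CG$ is a cut graph), and I would invoke it directly from the cut-graph results cited in~\cite{cora:genus} rather than reprove it. The rest --- the distance bounds after \verb-Preprocess-, the weight bound on $CG$, the passage from closed walk to simple cycle, and pinning down the precise constant $8(2g+2)$ --- is elementary bookkeeping. One should also dispatch the edge cases $|\terms| \le 1$ (trivial) and $g = 0$ (the construction degenerates to the ordinary planar cut-open step).
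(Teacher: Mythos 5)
The paper does not actually prove Theorem~\ref{thm:cora-summary}: it is stated explicitly as a summary of the \texttt{Preprocess} and \texttt{Planarize} procedures of Borradaile et al.~\cite{cora:genus}, with no proof supplied beyond the citation. Your proposal likewise defers the topological core (that the tree--cotree loops form a cut graph) and the precise constant $8(2g+2)$ to~\cite{cora:genus}, while sketching the radius-trimming and system-of-loops construction those procedures are built on; this is essentially the same approach as the paper, modulo minor bookkeeping imprecisions (the $+1$ in the loop-length bound presumes unit weights, and the ``fundamental cycle'' is really the concatenation of two root paths with the non-tree edge) that you already acknowledge and defer.
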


Let us remark that a combinatorial embedding of $G'$ can be easily derived from a combinatorial embedding of $G$ by removing all the vertices and edges not present in $G'$, and replacing each new face whose interior ceased to be homeomorphic to an open disc with a number of disc faces.

By combining Theorem~\ref{thm:cora-summary} with Theorem~\ref{thm:main} we obtain the following.

\begin{theorem}[Main Theorem for graphs of bounded genus]\label{thm:main-genus}
Let $B$ be a connected graph, with a combinatorial embedding into a surface of genus $g$. Let $f$ be a simple face of $B$. Then one can find in $\Oh(|\prm f|^{142}\cdot (g+1)^{142}\cdot |B|)$ time a subgraph $H\subseteq B$ such that
\begin{itemize}
\item[(i)] $\prm f \subseteq H$,
\item[(ii)] $|E(H)| = \Oh(|\prm f|^{142}\cdot (g+1)^{142})$, and
\item[(iii)] for every set $\terms \subseteq V(\prm f)$, $H$ contains some optimal Steiner tree in $B$ connecting $\terms$.
\end{itemize}
\end{theorem}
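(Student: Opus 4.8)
The idea is to combine the cutting machinery of Theorem~\ref{thm:cora-summary} with the planar Main Theorem (Theorem~\ref{thm:main}) in the same spirit as the reduction sketched for \pST{} in Section~\ref{ss-over:applications}, except that here the ``terminals on a face'' structure is already given, so we avoid re-approximating a tree. First I would set up some preliminary reductions on $B$. Since condition (iii) concerns only Steiner trees whose terminal set lies on $\prm f$, and since $\prm f$ itself connects any $\terms\subseteq V(\prm f)$, an optimal Steiner tree connecting such a $\terms$ has weight at most $|\prm f|$; in particular we may discard all vertices and edges of $B$ at distance more than $|\prm f|$ from $V(\prm f)$, as no optimal solution of the relevant kind uses them, and then restrict attention to the connected component of $B$ containing $\prm f$ (other components contribute nothing). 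After this, the diameter of $B$ (measured from $V(\prm f)$) is at most $|\prm f|$, which is the crucial bound that prevents the ``$\Oh(gd)$'' error term in~\cite{cora:genus} from blowing up.

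Second, I would apply Theorem~\ref{thm:cora-summary} with the terminal set $\terms:=V(\prm f)$ and with $OPT$ being the weight of an optimal Steiner tree connecting $V(\prm f)$ in $B$. Note $OPT\le|\prm f|$, and after the preliminary pruning every vertex of $B$ lies within distance $|\prm f|\le OPT$... — here I need to be a little careful: the bound I want is that $G'$ (the subgraph produced by Theorem~\ref{thm:cora-summary}) contains everything relevant, and the face $f$ survives. Since $G'$ contains all vertices and edges within distance $2OPT$ of $\terms$, and (after pruning) $B$ only contains vertices within distance $|\prm f|\le OPT\le 2OPT$ of $\terms=V(\prm f)$, we get $G'=B$, so in particular $f$ is still a face of $G'$ and $\prm f\subseteq CG$. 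Cutting $G'=B$ along $CG$ yields a planar graph $G_p$ whose infinite face is a simple cycle $\prm{G_p}$ of length $\Oh((g+1)\cdot OPT)=\Oh((g+1)|\prm f|)$; moreover $CG\supseteq\prm f$ means that every vertex of $\prm f$ maps, under the cutting, to at least one vertex of $\prm{G_p}$. Thus $G_p$ is a brick, and there is a natural projection $\pi:E(G_p)\to E(B)$ (and correspondingly on vertices) sending each cut-open copy back to its original.

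Third, I would invoke Theorem~\ref{thm:main} on the brick $G_p$ to obtain a subgraph $\widehat H\subseteq G_p$ with $\prm{G_p}\subseteq\widehat H$, $|E(\widehat H)|=\Oh(|\prm{G_p}|^{142})=\Oh((g+1)^{142}|\prm f|^{142})$, and the property that for every $S\subseteq V(\prm{G_p})$, $\widehat H$ contains an optimal Steiner tree connecting $S$ in $G_p$. Set $H:=\pi(\widehat H)\cup\prm f$. Conditions (i) and (ii) are then immediate. For condition (iii), fix $\terms\subseteq V(\prm f)$ and let $T$ be an optimal Steiner tree in $B$ connecting $\terms$; lift $T$ to $G_p$ by taking $\pi^{-1}(T)$ and discarding one copy of each doubly-covered edge, as in the proof of Theorem~\ref{thm:pst} (Section~\ref{sec:pst-psf}). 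This yields a forest $\widehat T\subseteq G_p$ with $|\widehat T|=|T|$; each connected component $C$ of $\widehat T$ is a connector for $V(C)\cap V(\prm{G_p})$, so by the guarantee of Theorem~\ref{thm:main} applied to $G_p$ we may replace each $C$ by an optimal Steiner tree $T_C\subseteq\widehat H$ connecting $V(C)\cap V(\prm{G_p})$ with $|T_C|\le|C|$. The resulting subgraph $\widehat T_H\subseteq\widehat H$ satisfies $|\widehat T_H|\le|\widehat T|=|T|$, and (since $T$ was connected and the replacements preserve connectivity of the projection, again exactly as in the \pST{} argument) $\pi(\widehat T_H)\cup\prm f\subseteq H$ is a connected subgraph of $B$ connecting $\terms$ with weight at most $|T|$. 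Hence it is an optimal Steiner tree connecting $\terms$, and it lies in $H$. The running time is dominated by the call to Theorem~\ref{thm:main} on $G_p$, namely $\Oh(|\prm{G_p}|^{142}|G_p|)=\Oh((g+1)^{142}|\prm f|^{142}|B|)$, since $|G_p|=\Oh(|B|)$ and the preprocessing and cutting steps run in $\Oh(|B|)$ time.

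\textbf{Main obstacle.} The subtle point I would expect to need the most care is the projection argument for condition (iii): one must verify that cutting along $CG$ and then projecting back genuinely preserves connectivity in the way the \pST{} proof exploits, \emph{and} that no subtlety arises from a vertex of $\prm f$ having several copies on $\prm{G_p}$ — in particular that an optimal Steiner tree $T$ in $B$ connecting $\terms\subseteq V(\prm f)$ really does lift to a \emph{forest} each of whose components has all its $\prm{G_p}$-boundary vertices among copies that Theorem~\ref{thm:main}'s guarantee can reconnect, and that the reconnections can be re-projected to a connected subgraph of $B$. This is essentially the content of the \pST{} proof, but because $\terms$ need not equal all of $V(\prm f)$ (unlike the full $2$-approximation cut-open trick) one should double-check that the only terminals we care about, namely $\terms$, remain jointly connected after the component-wise replacements — this follows because the projection $\pi$ identifies all copies of each original vertex, so wherever two copies were connected through a component $C$, their images remain connected through $\pi(T_C)$, and the original connectivity of $T$ through shared vertices is thus restored.
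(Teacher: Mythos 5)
Your proof takes essentially the same route as the paper's: apply Theorem~\ref{thm:cora-summary} with terminal set $V(\prm f)$, cut open along $CG$ to obtain a planar brick with perimeter $\Oh((g+1)|\prm f|)$, apply Theorem~\ref{thm:main}, and project back. The paper's proof is more economical in one respect: it does not perform your preliminary pruning step, but instead works directly with the subgraph $B'\subseteq B$ supplied by Theorem~\ref{thm:cora-summary} and closes the argument with the single observation that no edge of $B\setminus B'$ can lie on any optimum Steiner tree connecting a subset of $V(\prm f)$ (since any such tree has weight at most $|\prm f|-1<2\cdot OPT$, so it stays within the distance-$2OPT$ neighbourhood that $B'$ is guaranteed to contain). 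Your extra pruning is harmless, but note that it forces you to reconstruct a valid combinatorial embedding of the pruned graph (faces whose interiors cease to be open discs must be subdivided — the paper flags exactly this when remarking on Theorem~\ref{thm:cora-summary}), and it is this machinery that the paper avoids by not pruning at all.

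One concrete slip: in the passage where you argue $G'=B$ you write ``$|\prm f|\le OPT\le 2OPT$''. The first inequality is backwards: since the perimeter $\prm f$ minus one edge is a Steiner tree on $|\prm f|$ vertices, and any tree spanning $|\prm f|$ terminals has at least $|\prm f|-1$ edges, one has $OPT=|\prm f|-1$. What you actually need is $|\prm f|\le 2OPT$, which does hold (for $|\prm f|\ge 2$, so for every simple face), so the conclusion survives, but as written the chain is false and should be corrected. After that fix, the rest of your argument — in particular the component-wise replacement of the lifted forest $\widehat T$ by pieces of $\widehat H$ and the re-projection — is the same argument as in the proof of Theorem~\ref{thm:pst} and goes through unchanged.
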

\begin{proof}
Let $\terms_0=V(\prm f)$. Observe that if $OPT$ is the optimum weight of a Steiner tree connecting $\terms_0$ in $B$, then $OPT\leq |\prm f|$. We apply the algorithm of Theorem~\ref{thm:cora-summary} to $B$, obtaining graphs $B'$ and $CB$ with the promised guarantees. Note that if $B_p$ is the planar brick obtained from $B'$ by cutting open along $CB$, then $|\prm B_p|\leq 8(2g+2)|\prm f|$. The theorem now follows from an application of Theorem~\ref{thm:main} to the brick $B_p$, and projecting the obtained subgraph $H_p\subseteq B_p$ back to $B'$. Note here that no edge of $B$ that is not present in $B'$ can participate in any optimum Steiner tree connecting any subset of $\terms_0$.
\end{proof}

Using Theorem~\ref{thm:main-genus} instead of Theorem~\ref{thm:main}, 
we immediately obtain bounded-genus variants 
of Theorems~\ref{thm:pst} and \ref{thm:psf}.

\begin{theorem}\label{thm:pst-genus}
Given a \textsc{Steiner Tree} instance $(G,\terms)$
together with an embedding of $G$ into a surface of genus $g$
where the interior of each face is homeomorphic to an open disc,
one can in 
$\Oh(k_{OPT}^{142} (g+1)^{142} |G|)$ time find a set $F \subseteq E(G)$
of $\Oh(k_{OPT}^{142} (g+1)^{142})$ edges that contains an optimal Steiner tree
connecting $\terms$ in $G$, where $k_{OPT}$ is the size of an optimal
Steiner tree.
\end{theorem}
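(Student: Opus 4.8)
The plan is to derive Theorem~\ref{thm:pst-genus} directly from Theorem~\ref{thm:main-genus} by mimicking the planar reduction of Theorem~\ref{thm:pst} (cutting the graph open along an approximate Steiner tree), but now working entirely on the surface of genus $g$ instead of the plane. First I would compute a $2$-approximate Steiner tree $\Tapx$ for $\terms$ in $G$, exactly as in the proof of Theorem~\ref{thm:pst}: run a breadth-first search from each terminal to get shortest paths between all pairs of terminals in $\Oh(|\terms|\cdot|G|) = \Oh(k_{OPT}|G|)$ time, form the auxiliary metric graph on $\terms$, take a minimum spanning tree, and lift it to a Steiner tree in $G$. This yields $\Tapx$ with $k_{OPT} \le |\Tapx| \le 2k_{OPT}$. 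Since a tree is contractible on any surface, cutting the surface open along $\Tapx$ — duplicating each edge of $\Tapx$ and splitting each vertex $v$ of $\Tapx$ into $d_{\Tapx}(v)-1$ copies, distributed in the embedding so that the doubled edges bound a new face $F$ whose boundary walk is the Euler tour of $\Tapx$ — produces a connected graph $\nG$ embedded on the \emph{same} surface of genus $g$, with a designated face $F$ that is bounded by a simple cycle of length $|\prm F| \le 4k_{OPT}$ (the Euler tour traverses each of the $\le 2k_{OPT}$ edges twice). All terminals of $\terms$ lie on $\prm F$, and there is a natural map $\pi\colon E(\nG)\to E(G)$ identifying each copied edge with its original.

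Next I would apply Theorem~\ref{thm:main-genus} to $\nG$ with the distinguished simple face $F$, obtaining a subgraph $\hat H \subseteq \nG$ with $\prm F \subseteq \hat H$, $|E(\hat H)| = \Oh(|\prm F|^{142}(g+1)^{142}) = \Oh(k_{OPT}^{142}(g+1)^{142})$, and the property that for every $\terms' \subseteq V(\prm F)$, $\hat H$ contains an optimal Steiner tree in $\nG$ connecting $\terms'$. Set $F := \pi(\hat H)$; then $|F| \le |\hat H|$ is within the promised bound, and the running time is dominated by the $\Oh(|\prm F|^{142}(g+1)^{142}\cdot|\nG|) = \Oh(k_{OPT}^{142}(g+1)^{142}|G|)$ call to Theorem~\ref{thm:main-genus} (the approximation and cutting steps take only $\Oh(k_{OPT}|G|)$ time). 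The correctness argument is identical in structure to the planar case: take an optimal Steiner tree $T$ for $\terms$ in $G$, pull it back via $\pi^{-1}$, discard one of any two edges mapping to the same edge of $G$ to get $\hat T \subseteq \nG$ with $|\hat T| = |T|$. Each connected component $C$ of $\hat T$ is a connector for $V(C)\cap V(\prm F)$ in $\nG$, hence can be replaced by an optimal Steiner tree $T_C \subseteq \hat H$ connecting the same anchors with $|T_C|\le|C|$; the resulting graph $\hat T_H$ satisfies $|\hat T_H|\le|\hat T|$ and, because $T$ was connected in $G$, the projection $\pi(\hat T_H)$ is a connected subgraph of $\pi(\hat H)=F$ connecting $\terms$ of cost at most $|T|$, so $F$ contains an optimal Steiner tree for $(G,\terms)$.

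The only genuinely new point compared with the planar proof is verifying that cutting open along a tree on a genus-$g$ surface produces a graph that is still embedded on a genus-$g$ surface with $F$ a disc face bounded by a simple cycle, rather than raising or lowering the genus — but this is standard and is exactly how $\nG$ is obtained in the planar case (there cutting a tree does not change the genus $0$); the same local surgery argument applies verbatim since a tree has a contractible regular neighbourhood. I would also note the two remarks carried over from the planar theorem: the algorithm needs neither $k_{OPT}$ nor an optimal solution as input, and its running time can be capped at $\Oh(|G|^2)$ by simply returning $G$ when $|G|$ is already below the kernel bound. I do not anticipate a real obstacle here — the result is a routine combination of the approximate-solution/cut-open trick with the bounded-genus Main Theorem — so the write-up is mostly a matter of stating the cut-open construction carefully on a surface and then copying the projection argument from the proof of Theorem~\ref{thm:pst}.
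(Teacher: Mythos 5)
Your proposal is correct and takes essentially the same approach as the paper, which simply states that Theorem~\ref{thm:pst-genus} follows ``immediately'' from Theorem~\ref{thm:main-genus} by replacing Theorem~\ref{thm:main} with Theorem~\ref{thm:main-genus} in the proof of Theorem~\ref{thm:pst}. Your write-up in fact spells out the details the paper omits, in particular the verification that cutting the surface open along the contractible subtree $\Tapx$ preserves the genus and yields a cellular embedding with $\prm F$ a simple cycle, which is the only step that requires any genuinely new (if standard) observation beyond the planar case.
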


\begin{theorem}\label{thm:psf-genus}
Given a \textsc{Steiner Forest} instance $(G,\termpairs)$
together with an embedding of $G$ into a surface of genus $g$
where the interior of each face is homeomorphic to an open disc,
one can in 
$\Oh(k_{OPT}^{710} (g+1)^{710} |G|)$ time find a set $F \subseteq E(G)$
of $\Oh(k_{OPT}^{710} (g+1)^{710})$ edges that contains an optimal Steiner forest
connecting $\termpairs$ in $G$, where $k_{OPT}$ is the size of an optimal
Steiner forest.
\end{theorem}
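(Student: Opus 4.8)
The plan is to derive Theorem~\ref{thm:psf-genus} by the exact same two-step recipe that yielded Theorem~\ref{thm:psf} in the planar case, substituting Theorem~\ref{thm:main-genus} for Theorem~\ref{thm:main} wherever the main sparsification theorem was invoked, and paying the extra polynomial factor in $(g+1)$ that Theorem~\ref{thm:main-genus} introduces. Concretely, given a \textsc{Steiner Forest} instance $(G,\termpairs)$ with a combinatorial embedding of $G$ into a surface of genus $g$ with every face a disc, I would first note as in the proof of Theorem~\ref{thm:psf} that an optimal forest has at most $2k_{OPT}$ vertices, so $|\termpairs| = \Oh(k_{OPT}^2)$, and build an approximate solution $T_1$ by taking the union of shortest $s_1s_2$-paths over all pairs $(s_1,s_2)\in\termpairs$; this gives $k_{OPT}\le |T_1|\le |\termpairs| k_{OPT} = \Oh(k_{OPT}^3)=:k_1$. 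Shortest paths here are computed in the abstract graph, so genus is irrelevant to this step.

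Next I would perform the same irrelevant-vertex cleanup: delete from $G$ every vertex (and incident edges) at distance more than $k_1$ from all terminals of $\termpairs$, which is safe because no such vertex lies on a solution of size at most $k_1$. Deleting vertices from the embedded graph keeps it embedded in the same surface (one simply replaces faces whose interior ceases to be a disc by several disc faces, exactly the remark after Theorem~\ref{thm:cora-summary}), so the genus bound is preserved. Then I would treat each connected component $G_0$ separately: for the terminals $\terms_0$ of $\termpairs$ in $G_0$, construct a $2$-approximate Steiner \emph{tree} $T_0$ connecting $\terms_0$, whose size is $\Oh(|\terms_0| k_1)$ since every vertex of $G_0$ is within distance $k_1$ of $\terms_0$. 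At this point, instead of cutting the plane open along $T_0$, I would invoke Theorem~\ref{thm:main-genus} directly: the face to be designated is obtained by a standard ``cut-open along $T_0$'' operation inside $G_0$ — duplicate each edge of $T_0$, split each vertex of $T_0$ into copies according to its degree in $T_0$, and redistribute copies so that a new face $f_0$ appears whose boundary is a simple cycle of length $\Oh(|\terms_0| k_1)$ — and this operation does not change the genus of the embedding (cutting along a tree is a planarity-neutral, genus-neutral surgery; alternatively one may first run the planarizing machinery of Theorem~\ref{thm:cora-summary} and then cut open, but it is cleaner to just apply Theorem~\ref{thm:main-genus} to the resulting embedded graph with designated face $f_0$). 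Applying Theorem~\ref{thm:main-genus} to this embedded graph with designated face $f_0$ yields a subgraph $\hat H$ with $|E(\hat H)| = \Oh(|\prm f_0|^{142}(g+1)^{142}) = \Oh((|\terms_0| k_1)^{142}(g+1)^{142})$, and we put the edges of $G_0$ corresponding to $\hat H$ into $F$.

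The correctness argument is identical to the planar case: an optimal Steiner forest $T$ of $(G,\termpairs)$, restricted to $G_0$ and pulled back through the cut-open map, splits into a disjoint union of connectors each anchored on $\prm f_0$, so property (iii) of Theorem~\ref{thm:main-genus} lets us replace each connector by one inside $\hat H$ of no greater size while preserving all the connections between anchors, hence preserving membership of both endpoints of every pair of $\termpairs$ in the same component; projecting back, $F$ contains an optimal forest. For the size and time bounds I would sum over components: $\sum_0 |\terms_0| = 2|\termpairs| = \Oh(k_{OPT}^2)$ and $k_1 = \Oh(k_{OPT}^3)$, so $\sum_0 (|\terms_0|k_1)^{142} = \Oh((k_{OPT}^2 \cdot k_{OPT}^3)^{142}) = \Oh(k_{OPT}^{710})$ by convexity of $x\mapsto x^{142}$, giving $|F| = \Oh(k_{OPT}^{710}(g+1)^{710})$ and running time $\Oh(k_{OPT}^{710}(g+1)^{710}|G|)$ (the cleanup, approximate solutions, and cut-open operations are all linear or low-polynomial in $|G|$ and dominated by this term). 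The only genuinely new point compared with Theorem~\ref{thm:psf} — and the step I would be most careful about — is verifying that the ``cut open along $T_0$'' surgery applied to a genus-$g$ embedding still yields a genus-$g$ embedding with a simple designated face of the claimed length, so that Theorem~\ref{thm:main-genus} is actually applicable; one can alternatively sidestep this by first invoking Theorem~\ref{thm:cora-summary} (with the irrelevant-vertex reduction already absorbing its distance constraints, since all surviving vertices are within $k_1\ge 2\cdot OPT$ of the terminals once $T_1$ is an upper bound on $OPT$) and then applying the planar Theorem~\ref{thm:main} to the resulting planar brick, which is exactly the route taken in the proof of Theorem~\ref{thm:main-genus} itself. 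Either way the bookkeeping is routine; the substance is entirely inherited from the planar proof and from Theorem~\ref{thm:main-genus}.
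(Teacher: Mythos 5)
Your proposal is correct and follows exactly the route the paper intends: since the paper justifies Theorem~\ref{thm:psf-genus} with the single sentence ``Using Theorem~\ref{thm:main-genus} instead of Theorem~\ref{thm:main}, we immediately obtain bounded-genus variants of Theorems~\ref{thm:pst} and~\ref{thm:psf},'' your fleshing out (same cleanup as in the planar \pSF{} proof, cut open each component along the $2$-approximate Steiner tree $T_0$---a genus-preserving surgery since cutting along a tree keeps the Euler characteristic, giving a simple designated face of length $\Oh(|\terms_0|k_1)$---then apply Theorem~\ref{thm:main-genus} and sum the bounds by superadditivity of $x\mapsto x^{142}$) is precisely the intended argument, and the arithmetic $3\cdot 142 + 2\cdot 142 = 710$ checks out. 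One minor caution: in your ``alternative'' aside, the remark that the cleanup ``absorbs'' the distance requirement of Theorem~\ref{thm:cora-summary} because ``$k_1 \ge 2\cdot OPT$'' is not justified (the relevant $OPT$ there is the Steiner-\emph{tree} optimum for $V(\prm f_0)$ in the cut-open component, which has no a priori relation to the forest bound $k_1$), but this is harmless since your primary route applies Theorem~\ref{thm:main-genus} as a black box and that theorem performs the Borradaile-et-al. preprocessing internally.
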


We note that the arguments of Section~\ref{sec:emwc} for \pemwcname{}
heavily rely on the planarity of the input graph,
and the question of a polynomial kernel for \textsc{Multiway Cut}
on graphs of bounded genus remains open.

We can plug the kernel given by Theorem~\ref{thm:pst-genus} directly into the algorithm of Tazari~\cite{tazari:mfcs10} for \problemST{} on graphs of bounded genus to obtain the following result:

\begin{corollary} \label{cor:pst-genus-subexp}
Given a graph $G$ with an embedding into a surface of genus $g$
where the interior of each face is homeomorphic to an open disc, a terminal set $\terms \subseteq V(G)$, and an integer $k$,
one can in $2^{\Oh_{g}(\sqrt{k \log k})} + \Oh(k_{OPT}^{142} (g+1)^{142} |G|)$ time decide whether the \pST{} instance $(G,\terms)$ has a solution with at most $k$ edges.
\end{corollary}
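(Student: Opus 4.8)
\textbf{Proof proposal for Corollary~\ref{cor:pst-genus-subexp}.}
The plan is to combine the bounded-genus kernelization result (Theorem~\ref{thm:pst-genus}) with the subexponential-time parameterized algorithm of Tazari~\cite{tazari:mfcs10} for \problemST{} on bounded-genus graphs, in exactly the same pipeline as Corollary~\ref{cor:subexp} uses in the planar case. First I would recall that Tazari's algorithm, given a graph $G$ embedded on a surface of genus $g$ (with each face an open disc) together with a terminal set $\terms$ and an integer $k$, decides in time $2^{\Oh_g(\sqrt{k\log k})}\cdot |G|^{\Oh(1)}$ whether there is a Steiner tree with at most $k$ edges; the hidden constants in the exponent depend on $g$, which is why we write $\Oh_g(\cdot)$. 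The essential point, as noted in the paragraph preceding Corollary~\ref{cor:subexp}, is that this running time is the sum of a subexponential term in $k$ and a polynomial term in $|G|$ only if we first shrink $|G|$ down to a function of $k$ --- precisely what the kernel provides.

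The argument then proceeds in two steps. \emph{Step one: kernelize.} Apply the algorithm of Theorem~\ref{thm:pst-genus} to $(G,\terms)$ (using the given embedding). In $\Oh(k_{OPT}^{142}(g+1)^{142}|G|)$ time this produces a set $F\subseteq E(G)$ of size $\Oh(k_{OPT}^{142}(g+1)^{142})$ that contains an optimal Steiner tree connecting $\terms$. As remarked after Theorem~\ref{thm:pst-intro}, the algorithm does not need to know $k_{OPT}$: if at any point the graph is already smaller than the promised bound we simply keep it, so the running time is also bounded by $\Oh((g+1)^{142}|G|^2)$ unconditionally. Let $G'$ be the subgraph of $G$ induced by the edge set $F$ together with all terminals (isolated terminals are handled trivially, or we may simply discard the instance as infeasible if some terminal is not spanned, noting that $F$ contains an optimal --- hence spanning --- tree whenever one exists). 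Crucially, $G'$ inherits an embedding into the same surface of genus $g$: deleting vertices and edges and, if necessary, splitting faces whose interiors cease to be open discs into disc faces, as described after Theorem~\ref{thm:cora-summary}. Since $F$ contains an optimal Steiner tree for $\terms$ in $G$, the instance $(G',\terms)$ has a Steiner tree with at most $k$ edges if and only if $(G,\terms)$ does.

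\emph{Step two: solve the kernel.} Run Tazari's algorithm on $(G',\terms,k)$. Its running time is $2^{\Oh_g(\sqrt{k\log k})}\cdot |G'|^{\Oh(1)}$. Now there are two cases for the comparison between $|G'|$ and $k$: if $k \ge k_{OPT}$ then a solution of size $\le k$ exists and we may as well assume $k = \Oh(k_{OPT}^{142}(g+1)^{142})$ is not larger than (a polynomial in) the kernel size --- but in fact we do not even need this refinement. Since $|G'| = \Oh(k_{OPT}^{142}(g+1)^{142}) = \poly(k,g)$ whenever $k \ge k_{OPT}$ (and if $k < k_{OPT}$ the algorithm correctly reports NO), the polynomial factor $|G'|^{\Oh(1)}$ is bounded by $2^{\Oh_g(\log k)}$, which is absorbed into $2^{\Oh_g(\sqrt{k\log k})}$. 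Adding the kernelization time from Step one, the total is $2^{\Oh_g(\sqrt{k\log k})} + \Oh(k_{OPT}^{142}(g+1)^{142}|G|)$, as claimed. The only mild subtlety --- and the step I would be most careful about --- is the bookkeeping of where $g$ enters: it appears both as a polynomial factor $(g+1)^{142}$ multiplying $|G|$ in the kernel-construction cost (explicit, and harmless) and inside the exponent of Tazari's subexponential term as $\Oh_g(\sqrt{k\log k})$ (hidden in the $\Oh_g$ notation, matching the statement). One should also double-check that Tazari's algorithm indeed accepts the (possibly non-2-cell, then re-triangulated) embedding of the kernel $G'$ and that re-embedding $G'$ does not increase the genus; both follow from the remark accompanying Theorem~\ref{thm:cora-summary} that face-splitting yields a valid embedding of the same genus.
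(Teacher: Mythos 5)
Your proposal takes exactly the approach the paper intends: the paper gives only a one-line justification (``plug the kernel of Theorem~\ref{thm:pst-genus} into Tazari's algorithm for bounded genus''), and your two-step argument --- kernelize with Theorem~\ref{thm:pst-genus}, then run Tazari's bounded-genus algorithm on the kernel --- is a faithful and more detailed expansion of it, including the correct remark that the kernel inherits an embedding of the same genus via the face-splitting observation after Theorem~\ref{thm:cora-summary}.

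The only spot worth tightening is your treatment of the case $k < k_{OPT}$. You write ``if $k < k_{OPT}$ the algorithm correctly reports NO,'' but as written this does not control the running time: the kernel $G'$ has size $\Oh(k_{OPT}^{142}(g+1)^{142})$, which when $k < k_{OPT}$ is not bounded by $\poly(k,g)$, so Tazari's $2^{\Oh_g(\sqrt{k\log k})}\cdot |G'|^{\Oh(1)}$ run could exceed the stated budget because of the superlinear polynomial factor in $|G'|$. The standard fix is to make the size check explicit: after kernelization, if $|G'| > c\, k^{142}(g+1)^{142}$ for the kernel constant $c$, output NO immediately (correct, since a YES instance would force $k_{OPT} \le k$ and hence $|G'| \le c\,k_{OPT}^{142}(g+1)^{142} \le c\,k^{142}(g+1)^{142}$); otherwise $|G'|=\poly(k,g)$ and Tazari's polynomial factor is absorbed into $2^{\Oh_g(\sqrt{k\log k})}$ exactly as you argue. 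With that one sentence added, the running-time bound $2^{\Oh_g(\sqrt{k\log k})} + \Oh(k_{OPT}^{142}(g+1)^{142}|G|)$ follows cleanly.
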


In this corollary, the hidden constant in $\Oh_{g}(\cdot)$ is some computable function of $g$.

%!TEX root = pst-kernel.tex

\section{\pemwcname{}: Subexponential-Time Algorithm}\label{sec:emwc-subexp}

In this section we show that the approach of Tazari for \problemST{}~\cite{tazari:mfcs10}
can be extended to \textsc{Edge Multiway Cut}.
\begin{theorem}\label{thm:mwc-subexp}
Given a planar graph $G$, a terminal set $\terms \subseteq V(G)$, and an integer $k$, one can in $|G|^{\Oh(\sqrt{k})}$ time decide whether the \pemwcname{} instance $(G,\terms)$ has a solution with at most $k$ edges.
\end{theorem}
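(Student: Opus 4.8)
The plan is to follow the kernelization-plus-bidimensionality-style route that Tazari~\cite{tazari:mfcs10} uses for \pST{}, with the kernel replaced by the one from Theorem~\ref{thm:emwc}. First I would apply Theorem~\ref{thm:emwc} to the instance $(G,\terms)$: either the optimal cutset has size larger than $k$, in which case we may reject (the size of the cutset is what we are parameterizing by, and $k_{OPT}>k$ means no solution with at most $k$ edges exists), or we obtain a set $F\subseteq E(G)$ with $|F|=\Oh(k^{568})$ containing an optimal solution. Since an optimal multiway cut is a subset of $F$, we may restrict our search entirely to edges of $F$; more precisely, every edge of $E(G)\setminus F$ can be forced to be ``non-deleted'', which morally lets us work with a graph whose ``interesting'' part has only $\poly(k)$ edges. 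Crucially, a multiway cut of size at most $k$, viewed in the planar graph $G$, corresponds to a connected (after taking the union with the cycles $\delta(t)$ around terminals, as in Section~\ref{sec:emwc:diam}) subgraph of the dual of bounded size, and the reductions of Section~\ref{sec:emwc} already bound the diameter of the dual by $\Oh(k_{OPT}^3)=\Oh(k^3)$.

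Next, I would invoke the standard bidimensionality/branchwidth machinery. After the kernelization reductions, the dual $G^\ast$ has diameter $\Oh(k^3)$ and the solution lives on a subgraph of total size $\Oh(k^{568})$; but more usefully, a YES-instance with a solution of size $\le k$ has a dual-side object (the union of the cut with the terminal cycles) of size $\Oh(k)$. By the planar excluded-grid / contraction-bidimensionality argument, if the branchwidth (equivalently, treewidth up to constants) of the relevant part of $G$ is $\Omega(\sqrt{k})$ with a sufficiently large constant, then $G$ contains a large grid minor, which would force any multiway cut separating the $\Omega(k)$-many ``spread out'' terminals to have more than $k$ edges --- a contradiction. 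Hence on a YES-instance the branchwidth is $\Oh(\sqrt{k})$. One then computes a branch decomposition of width $\Oh(\sqrt{k})$ in polynomial time (Seymour--Thomas ratcatcher, or the planar branchwidth algorithms) and runs a dynamic programming over it for \pemwcname{}, which takes $2^{\Oh(w)}\cdot\poly(|G|)=2^{\Oh(\sqrt{k})}\cdot\poly(|G|)$ time; the claimed bound $|G|^{\Oh(\sqrt{k})}$ subsumes this. If no branch decomposition of width $\Oh(\sqrt{k})$ exists, we safely reject. This is exactly the schema of Tazari's Theorem for \pST{}, and the modification for \pemwcname{} (provided ``for completeness'' as promised in the introduction, cf.\ the remark preceding Corollary~\ref{cor:subexp}) only needs the DP over branch/tree decompositions for \textsc{Edge Multiway Cut}, which is routine: the state records, for the boundary vertices of a branch node, the partition into ``reachability classes'' and which terminal-class each belongs to, giving $w^{\Oh(w)}$ states.

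The subtle point --- and the main obstacle --- is the bidimensionality argument itself for \pemwcname{}. The quantity ``size of the edge multiway cut'' is not obviously minor-monotone or contraction-closed in the primal graph, so one cannot naively say ``large grid minor $\Rightarrow$ large cut''. The correct statement is on the \emph{dual} side, combined with the diameter bound: after the Section~\ref{sec:emwc} reductions the terminals are ``spread'' in $G$ (each $\delta(t)$ is a nontrivial cycle in $G^\ast$, the $\delta(t)$ are edge-disjoint cacti, and the dual has small diameter), so a solution of size $\le k$ is a Steiner-forest-like object in $G^\ast$ of size $\Oh(k)$; then one argues that if $G$ (or its relevant subgraph) had branchwidth $cr$ with $r\gg \sqrt{k}$, the planar excluded-grid theorem would give an $r\times r$ grid minor, across which any cut separating the relevant terminals must use $\Omega(r^2)\gg k$ edges. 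Making ``relevant terminals are spread across the grid'' precise is where care is needed, and it is essentially the content of adapting Tazari's analysis; I would lean on the reductions already established in Section~\ref{sec:emwc} (which guarantee a $2$-approximation of size $k$, a bounded dual diameter, and the cactus structure of $\bigcup_t\delta(t)$) to drive this. Once that is in place, the rest is the standard ``kernelize, bound branchwidth, then DP'' pipeline, and the $|G|^{\Oh(\sqrt{k})}$ running time follows immediately, as in~\cite{tazari:mfcs10}.
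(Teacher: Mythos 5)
Your route diverges from the paper's in a way that matters, and the divergence hides a genuine gap.

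You propose: kernelize via Theorem~\ref{thm:emwc}, then argue via bidimensionality/excluded-grid that on a YES-instance the (relevant part of the) graph has branchwidth $\Oh(\sqrt{k})$, then DP. The paper's proof of Theorem~\ref{thm:mwc-subexp} does not use the kernel at all, and in particular does not invoke bidimensionality; instead it runs Baker's layering in the dual $G^\ast$: do BFS from an arbitrary face, group the BFS-layer edges into $\ell = \lceil\sqrt{k}\rceil$ classes $L_0,\dots,L_{\ell-1}$, observe that some $L_i$ meets the (hypothetical) solution $X$ in at most $\sqrt{k}$ edges, branch over $|G|^{\Oh(\sqrt{k})}$ guesses for $Y = X\cap L_i$, contract $L_i\setminus Y$ in $G$, and note that the resulting graph has treewidth $\Oh(\sqrt{k})$ because its dual is $G^\ast \setminus (L_i\setminus Y)$, which is an $\ell$-outerplanar graph with $|Y|\le\sqrt{k}$ extra edges. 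The DP then runs in $|G|^{\Oh(\sqrt{k})}$ time. Note that this works for \emph{any} planar \pemwcname{} instance, with no kernelization and no assumption on where terminals lie. The kernel is used only afterwards, in Corollary~\ref{cor:subexp}(2), to shrink $|G|$ to $\poly(k)$ and thereby get $2^{\Oh(\sqrt{k}\log k)}$.

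The gap in your argument is precisely the step you flag as ``where care is needed'': there is no excluded-grid argument that bounds branchwidth by $\Oh(\sqrt{k})$ on YES-instances of \pemwcname{}. As the paper stresses in the introduction, \textsc{Multiway Cut} (like \textsc{Steiner Tree} and \textsc{Steiner Forest}) is a subset problem for which bidimensionality fails: a large grid minor in $G$ does \emph{not} force a large multiway cut, because the terminals and hence the optimal cut may live entirely in a small portion of the graph. The Section~\ref{sec:emwc} reductions give a $2$-approximate cut, a bounded dual diameter ($\Oh(k^3)$, not $\Oh(\sqrt{k})$), and the cactus structure of $\bigcup_t\delta(t)$, but none of these forces terminals to be ``spread across'' a putative grid, nor does a large grid with spread-out terminals generically imply a cut of size $\Omega(\mathrm{grid\ side})^2$ --- one can always cut cheaply around a low-degree terminal. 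So even after the kernel, the branchwidth of the relevant subgraph is only bounded by $\Oh(\sqrt{|F|}) = \Oh(k^{284})$ from planarity, far from $\Oh(\sqrt{k})$. (Your approach would, if it worked, give $2^{\Oh(\sqrt{k})}\cdot\poly(|G|)$, which is strictly better than the paper's Corollary~\ref{cor:subexp}(2) and should itself be a warning sign.) To repair the proof you should abandon the win/win grid argument and switch to Baker's layering in the dual, which sidesteps the bidimensionality obstruction entirely.
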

\begin{proof}
First, assume that $(G,\terms,k)$ is a YES-instance and let $X$ be an arbitrary minimum solution.
We follow Baker's approach in $G^\ast$, the dual of $G$.
Let $f$ be an arbitrary vertex of $G^\ast$.
Perform breadth-first search in $G^\ast$, starting from $f$, and let
$E_j$, $j=0,1,2,\ldots$
be the set of edges of $G^\ast$ that connect the vertices of
distance $j$ from $f$ with vertices of distance $(j+1)$.
Note that the sets $E_j$ are pairwise disjoint, but $\bigcup_j E_j$ may be a proper
subset of $E(G^\ast)$.
Denote $\ell = \lceil \sqrt{k} \rceil$.
For $0 \leq i < \ell$, let $L_i = \bigcup_{j \geq 0} E_{i+j\ell}$.
Branch into $\ell$ subcases, guessing an index
$0 \leq i < \ell$ where $|X \cap L_i| \leq \sqrt{k}$.
Furthermore, branch into $(\lfloor \sqrt{k} \rfloor + 1)$ subcases
guessing $|X \cap L_i|$ and branch into at most $|V(G)|^{\lfloor \sqrt{k} \rfloor}$
subcases guessing the set $X \cap L_i$ itself. 
Label each branch with a pair $(i,Y)$: the index of the layer $L_i$
and the set $Y \subseteq L_i$ guessed (that is supposed to be $X \cap L_i$).
Contract the edges of $L_i \setminus Y$ in the graph $G$ (keeping multiple edges).
Let $H$ be the obtained graph.

We claim that after this operation the treewidth of $H$ is bounded by $\Oh(\sqrt{k})$.
By~\cite{tw-dual}, it suffices to bound the treewidth of $H^\ast$, the dual of $H$.
Recall that a contraction of an edge in a planar graph corresponds to a deletion of this edge
in the dual. Hence, $H^\ast$ is isomorphic to $G^\ast \setminus (L_i \setminus Y)$.
However, each connected component of $G^\ast \setminus L_i$ is $\ell$-outerplanar,
and $|Y| \leq \sqrt{k}$. This finishes the proof of the treewidth bound of $H^\ast$
and, consequently, of $H$.

To finish the proof of the theorem it suffices to note that a given \textsc{Multiway Cut}
instance $(G,\terms)$, equipped with a tree decomposition of $G$ of width $t$, one can decide whether this instance has a solution of size at most $k$ in $(|S|t)^{\Oh(t)} \mathrm{poly}(|G|)$ time by a straightforward dynamic-programing routine\footnote{We observe that this straightforward algorithm can be easily improved to a $t^{\Oh(t)}\mathrm{poly}(|G|)$-time algorithm, since for a connected component intersecting the bag we do not need to remember precisely which terminal is contained in it, but only whether such a terminal exists or not. This running time can be further refined to $2^{\Oh(t)}\mathrm{poly}(|G|)$ using the framework of sphere-cut decompositions and Catalan structures~\cite{DornFT12}.}.
Indeed, suppose we consider a bag $B$ in the tree decomposition
and we define $A \subseteq V(G)$ to be union of bags in the subtree rooted at $B$ (including $B$ itself). Then
in a state of the dynamic-programing algorithm we need to
remember the following information ($F$ is a solution that conforms to the state): for each vertex $z \in B$, which terminal lies in the same connected component
of $G[A]\setminus F$ as the vertex $z$, and how the vertices of $B$ are partitioned by the connected components
of $G[A]\setminus F$. 
Since $|\terms| \leq |G|$ and $t \leq |G|$, this implies a $|G|^{\Oh(t)}$ algorithm. In our case $t$ is $\Oh(\sqrt{k})$, which implies the theorem.
\end{proof}
By pipelining the kernelization algorithm of Theorem~\ref{thm:emwc-intro}
with Theorem~\ref{thm:mwc-subexp} we obtain the second claim
of Corollary~\ref{cor:subexp}.

%!TEX root = pst-kernel.tex

\section{\pSF{}: No Subexponential-Time Algorithm}\label{sec:sf-lb}
In this section, we prove Theorem~\ref{thm:psf-eth}, which states that no algorithm can decide in $2^{o(k)} \poly(|G|)$ time whether \pSF{} instances $(G,\mc{\terms})$ have a solution with at most $k$ edges, unless the Exponential Time Hypothesis fails. The Exponential Time Hypothesis was proposed by Impagliazzo, Paturi, and Zane~\cite{eth}. Using the formulation by Fomin and Kratsch~\cite{fomin:book}, it hypothesizes that no algorithm can decide instances of \pSAT{} in $2^{o(n)}$ time, where $n$ is the number of variables in the formula of the instance. Using the Sparsification Lemma~\cite{eth}, this is equivalent (see~\cite{fomin:book}) to the hypothesis that no algorithm can decide instances of \pSAT{} in $2^{o(m)}$ time, where $m$ is the number of clauses in the formula of the instance. It is this formulation of the Exponential Time Hypothesis that we rely on here.

To prove Theorem~\ref{thm:psf-eth}, we need a reduction from \pSAT{} to \pSF{}. We use the following intermediate problem, which was also considered by Bateni~\etal\cite{marx-bateni} in their NP-hardness reduction of \pSF{} on planar graphs of treewidth~$3$. Let the boolean relation $R(f,g,h)$ be equal to $(f = h) \vee (g = h)$. Then an \emph{$R$-formula} is a conjunction of relations $R(f,g,h)$, where each of $f,g,h$ can be a boolean variable, true ($1$), or false ($0$). For example, $R(x_{1},x_{2},x_{3}) \wedge R(x_{1},0,x_{2}) \wedge R(0,1,x_{3})$ is a valid $R$-formula. We explicitly mention here that it is critical that in $R(f,g,h)$ none of $f,g,h$ can be the negation of a boolean variable. Then one can define the following problem:\\

\defproblemnoparam{\rSAT}{An $R$-formula $\phi$.}{Decide whether $\phi$ is satisfiable.}\\

Bateni~\etal\cite{marx-bateni} essentially show the following result as part of their Theorem~8.2:

\begin{lemma}[\cite{marx-bateni}] \label{lem:psf-reduction}
Let $\phi$ be an $R$-formula on $n$ variables and $m$ clauses. Then in polynomial time one can construct an instance $(G_{\phi},\mc{\terms}_{\phi})$ of \pSF{} such that $G_{\phi}$ is a planar graph of treewidth $3$, and $(G_{\phi},\mc{\terms}_{\phi})$ has a solution with at most $n+3m$ edges if and only if $\phi$ is satisfiable.
\end{lemma}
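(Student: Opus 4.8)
The statement is essentially due to Bateni, Hajiaghayi and Marx~\cite{marx-bateni}; the plan is to recall their gadget construction, adapt it to our \rSAT{} formulation, and verify the three claimed properties: planarity, treewidth~$3$, and the \emph{exact} budget equivalence. Fix an \rSAT{} instance $\phi$ with variables $x_1,\dots,x_n$ and clauses $R(f_1,g_1,h_1),\dots,R(f_m,g_m,h_m)$, and let $d_i$ be the number of occurrences of $x_i$ among the literals $f_j,g_j,h_j$, so $\sum_{i=1}^n d_i=3m$. A truth assignment will be encoded by the choice, in the output forest, of one of two parallel paths inside each variable gadget; each clause gadget will be a constant-size local piece whose terminal pairs can be connected at no extra cost precisely when the relation $R$ is satisfied by the incident values.

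First I would describe the \emph{variable gadget} for $x_i$: two internally disjoint paths $\pi_i^{1}$ (``true'') and $\pi_i^{0}$ (``false'') between poles $a_i,b_i$, each of length $d_i+1$, whose internal vertices are the $d_i$ \emph{ports} of $x_i$, one per occurrence. Putting the pair $(a_i,b_i)$ into $\mc{\terms}_{\phi}$ forces any solution to buy a connected $a_ib_i$-subgraph; the cheapest ones, of weight $d_i+1$, are exactly $\pi_i^{1}$ and $\pi_i^{0}$, and this choice is the value of $x_i$. Over all variables this already uses $\sum_{i=1}^n(d_i+1)=3m+n$ edges, i.e.\ the whole budget, so the clause gadgets must contribute \emph{zero} edges to the canonical solution: the clause gadget for $R(f,g,h)$ only identifies, or links by terminal pairs through a constant number of auxiliary ``match'' vertices $M_0^{(j)},M_1^{(j)}$, the specific ports of $f,g,h$ assigned to this occurrence, routed so that the true-paths of $f$ and of $g$ both pass through $M_1^{(j)}$, their false-paths both through $M_0^{(j)}$, with symmetric wiring for $h$; constant literals $0,1$ are handled by wiring to a fixed global gadget permanently in its false- (resp.\ true-) configuration. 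I would then run a short case analysis over the $2^3$ value patterns, checking that the clause terminal pairs are all connected by the union of the chosen paths iff $h\in\{f,g\}$, and — crucially — that when $R$ fails, any forest connecting those pairs must buy at least one extra edge, so no solution of weight $\le n+3m$ can realise a non-satisfying pattern.

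The equivalence then splits as usual. If $\phi$ is satisfiable, taking the path matching the assignment in each variable gadget yields a forest of weight exactly $n+3m$ that, by the case analysis, connects every pair in $\mc{\terms}_{\phi}$. Conversely, a forest $F$ of weight at most $n+3m$ is, by the pole pairs and the tightness of the bound, forced to be exactly one full path per variable and nothing else; this reads off a truth assignment, and the satisfaction of the clause terminal pairs forces it to satisfy every clause. For the structural claims: each variable gadget is a series--parallel ``ladder'' of treewidth~$2$, and clause gadgets are of constant size attached along these ladders, giving $G_\phi$ treewidth~$3$; planarity is obtained by laying the clauses out in a linear order and routing each variable's ladder monotonically through its ports, using that the \rSAT{} instance handed to this reduction (arising from the preceding step in the chain from \pSAT{}) can be taken to have a caterpillar/planar incidence structure, so the ladders nest without crossings. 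Everything is computed in polynomial time.

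The main obstacle is making the budget exactly tight in both directions at once: the clause gadgets must enforce $R$ while adding no edge to the optimum, which rigidly constrains how the three involved ports and the match vertices may be identified or connected, and one must rule out that a solution saves an edge on some variable ladder (e.g.\ via a shortcut through a shared clause vertex) and reinvests it elsewhere; getting these local identifications right while keeping the assembled graph planar and of treewidth~$3$ is the delicate bookkeeping at the heart of the proof.
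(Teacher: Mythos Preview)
The paper does not prove this lemma: it is stated with a citation to Bateni, Hajiaghayi and Marx, introduced by ``Bateni~\etal\ essentially show the following result as part of their Theorem~8.2'', and then used as a black box. So there is no in-paper proof to compare against; your write-up is an attempt to reconstruct the cited argument.

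Your overall shape (tight-budget gadget reduction, with pole pairs forcing a binary choice per variable) is the right genre, but there is a genuine gap in the planarity argument. The lemma, as stated, applies to an \emph{arbitrary} $R$-formula~$\phi$; you explicitly rely on the extra hypothesis that ``the \rSAT{} instance handed to this reduction \dots\ can be taken to have a caterpillar/planar incidence structure, so the ladders nest without crossings''. That hypothesis is not part of the lemma. With your layout --- each variable gadget is a long cycle that physically threads through every clause in which the variable occurs --- two variables whose occurrence sets interleave along the clause order will have crossing ladders, and a general $R$-formula has no reason to avoid this. So as a proof of the lemma as stated, the planarity step fails; at best you are proving a weaker statement restricted to inputs with a planar incidence graph. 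The construction that actually yields planarity (and treewidth~$3$) for \emph{every} $R$-formula is organised differently: the graph itself has a path/caterpillar-like spine with constant-size local pieces, and consistency between different occurrences of the same variable is enforced purely through terminal pairs in~$\mathcal{\terms}_\phi$, which do not live in the graph and hence do not threaten planarity. Your version pushes that consistency into the graph via long shared paths, which is exactly what creates the crossing problem.

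Two smaller issues worth flagging. First, your budget count $\sum_i(d_i+1)=n+3m$ assumes every literal slot is a variable; with constants $0/1$ present (which the lemma allows), $\sum_i d_i<3m$ and you must say where the missing edges go --- your ``fixed global gadget'' for constants would have to absorb them, but you do not account for this. Second, having the true-paths of $f$ and $g$ literally pass through a common vertex~$M_1^{(j)}$ merges the two variable cycles at a point; you then need to argue that no solution can exploit such a shared vertex to shortcut one variable's pole pair through another variable's ladder, which you defer to ``rule out that a solution saves an edge \dots'' without giving the argument. These are fixable, but together with the planarity gap they mean the sketch does not yet establish the lemma.
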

We can use this lemma to prove the following result, which is stronger than Theorem~\ref{thm:psf-eth}, and thus implies it.

\begin{theorem}
If there is an algorithm that can decide in time $2^{o(k)} \poly(|G|)$ whether \pSF{} instances $(G,\mc{\terms})$, where $G$ has treewidth $3$, have a solution with at most $k$ edges, then the Exponential Time Hypothesis fails.
\end{theorem}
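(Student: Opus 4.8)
The plan is to chain together the reduction of Lemma~\ref{lem:psf-reduction} with the standard ETH-based lower bound for \pSAT{}, using \rSAT{} as the intermediate problem. First I would recall that, under ETH and the Sparsification Lemma, \pSAT{} cannot be solved in $2^{o(n+m)}$ time, and hence (after a routine reduction) neither can \rSAT{} in $2^{o(n+m)}$ time where $n$ is the number of variables and $m$ the number of clauses of the $R$-formula. The only missing link here is a linear-size reduction from \pSAT{} (or directly from $3$-CNF-SAT) to \rSAT{}: given a $3$-CNF formula, build an $R$-formula on $O(n+m)$ variables and $O(n+m)$ clauses. This is folklore — the relation $R(f,g,h) \equiv (f=h)\vee(g=h)$ is expressive enough to encode OR-gates and the constants $0,1$ are available — but since the paper does not state it explicitly, I would spell out the gadget construction (simulating a clause $\ell_1\vee\ell_2\vee\ell_3$ by a constant number of $R$-relations and auxiliary variables, handling negated literals by introducing a fresh variable forced to the negation via a small gadget built from $R$ and the constants). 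The key point is that both the number of variables and the number of clauses of the resulting $R$-formula are linear in $n+m$.

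Next I would invoke Lemma~\ref{lem:psf-reduction}: starting from an $R$-formula $\phi$ with $n'$ variables and $m'$ clauses, we obtain in polynomial time a \pSF{} instance $(G_\phi,\mc{\terms}_\phi)$ with $G_\phi$ planar of treewidth $3$, such that $(G_\phi,\mc{\terms}_\phi)$ has a solution with at most $k := n'+3m'$ edges iff $\phi$ is satisfiable. Crucially, $k = n'+3m' = O(n'+m')$ is linear in the size of the $R$-formula, which in turn is linear in $n+m$ for the original \pSAT{} instance; also $|G_\phi|$ is polynomial in $n+m$.

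Now suppose, for contradiction, that there is an algorithm deciding whether a \pSF{} instance $(G,\mc{\terms})$ with $\mathrm{tw}(G)=3$ has a solution with at most $k$ edges in time $2^{o(k)}\poly(|G|)$. Running it on $(G_\phi,\mc{\terms}_\phi,k)$ decides satisfiability of $\phi$ in time $2^{o(k)}\poly(|G_\phi|) = 2^{o(n+m)}\cdot(n+m)^{O(1)} = 2^{o(n+m)}$, and composing with the linear-size reductions above decides \pSAT{} in $2^{o(m)}$ time (equivalently $2^{o(n+m)}$), contradicting ETH. This establishes the strengthened statement, and Theorem~\ref{thm:psf-eth} follows immediately since a treewidth-$3$ instance is in particular a planar instance.

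I expect the main obstacle to be the \pSAT{}-to-\rSAT{} reduction: one must be careful that \rSAT{} forbids negated variables inside $R$-relations, so negations have to be precomputed into fresh variables using only $R$ and the constants $0,1$, and one must verify that this (and the clause-OR gadgets) can be done with only a constant number of new variables and $R$-relations per clause, so that linearity of both parameters is preserved. Everything else — the ETH/Sparsification setup, the invocation of Lemma~\ref{lem:psf-reduction}, and the final contradiction argument — is routine bookkeeping about linear blow-ups.
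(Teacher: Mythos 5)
Your proposal is correct and takes essentially the same route as the paper: a linear-size reduction from \pSAT{} to \rSAT{}, followed by Lemma~\ref{lem:psf-reduction} to obtain a treewidth-$3$ \pSF{} instance with $k = O(m)$, from which the ETH contradiction follows. The gadget you deferred is precisely what the paper spells out inside the theorem's proof: $R(x_i^+,x_i^-,1)\wedge R(x_i^+,x_i^-,0)$ forces $x_i^+\neq x_i^-$ (handling negated literals), and each clause $(a\vee b\vee c)$ contributes $R(a',b',y_j^+)\wedge R(0,c',y_j^-)\wedge R(y_j^+,y_j^-,1)$ with two fresh variables, giving $2n+2m$ variables and $2n+3m$ relations, hence $k = 8n+11m \leq 35m$.
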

\begin{proof}
Consider an instance of \pSAT{} and let $\psi$ be the CNF-formula of this instance. Let $n$ denote the number of variables that appear in $\psi$ and let $m$ denote the number of clauses of $\psi$. Since each clause contains at most three variables, $m \geq n/3$ and thus $n \leq 3m$.

We first construct an $R$-formula $\phi$ that is equivalent to $\psi$.
For each variable $x_{i}$ ($i\in\{1,\ldots,n\}$) that appears in $\psi$, add the \emph{variable relations} $R(x_{i}^{+},x_{i}^{-},1)$ and $R(x_{i}^{+},x_{i}^{-},0)$ to $\phi$. Here $x_{i}^{+}$ and $x_{i}^{-}$ are new variables, which indicate whether $x_{i}$ will be true or false respectively. Note that the relations ensure that $T'(x_{i}^{+})\not= T'(x_{i}^{-})$ for any truth assignment $T'$ that satisfies both relations.
Now consider a clause $C_{j} = (a \vee b \vee c)$ of $\psi$ ($j\in\{1,\ldots,m\}$) --- if $C_{j}$ actually contains at most two literals, then we pretend that $c=0$; if $C_{j}$ contains one literal, then we also pretend that $b=0$. Define $a'$ as follows. If $a$ is a variable $x_{i}$, then let $a' = x_{i}^{+}$. If $a$ is the negation of a variable $x_{i}$, then let $a' = x_{i}^{-}$. Otherwise, \ie if $a = 0$ or $a=1$, then let $a'=a$. Define $b'$ and $c'$ similarly. Then, add to $\phi$ two new variables $y_j^+$ and $y_j^-$, and the following \emph{clause relations}: $R(a',b',y_{j}^{+})$, $R(0, c', y_{j}^{-})$, and $R(y_{j}^{+},y_{j}^{-},1)$. We claim that $\psi$ is satisfiable if and only if $\phi$ is satisfiable. 

Suppose that $\psi$ is satisfiable, and let $T$ be a satisfying truth assignment for $\psi$. We extend $T$ to also cover negations of variables, \ie $T(\lnot x_{i}) = \lnot T(x_{i})$. We construct a satisfying truth assignment $T'$ for $\phi$ as follows. If $T(x_{i}) = 1$, then let $T'(x_{i}^{+}) = 1$ and $T'(x_{i}^{-}) = 0$; otherwise, let $T'(x_{i}^{+}) = 0$ and $T'(x_{i}^{-}) = 1$. This satisfies all variable relations. Consider any clause $C_{j} = (a \vee b \vee c)$ of $\psi$. If $T(a) = 1$ or if $T(b)=1$, then set $T'(y_{j}^{+}) = 1$ and $T'(y_{j}^{-}) = 0$. Otherwise, \ie if $T(a)=0$ and $T(b) =0$, then $T(c) = 1$, and set $T'(y_{j}^{+}) = 0$ and $T'(y_{j}^{-}) = 1$. This satisfies all clause relations of $\phi$. Hence, $T'$ is a satisfying truth assignment for $\phi$.

Suppose that $\phi$ is satisfiable, and let $T'$ be a satisfying truth assignment for $\phi$. We construct a satisfying truth assignment $T$ for $\psi$ as follows: set $T(x_{i}) = T'(x_{i}^{+})$ for each variable in $\psi$. Again, we extend $T$ to also cover negations of variables, \ie $T(\lnot x_{i}) = \lnot T(x_{i})$. Consider any clause $C_{j} = (a \vee b \vee c)$ of $\psi$. If $T'(y_{j}^{-}) = 1$, then it follows from the clause relations that $T(c) = 1$. Otherwise, \ie if $T'(y_{j}^{-}) = 0$, then it follows from the clause relations that $T'(y_{j}^{+}) = 1$ and thus $T(a)=1$ or $T(b)=1$. Therefore, the clause is satisfied. Hence, $T$ is a satisfying truth assignment for $\psi$. This proves the claim.

Observe that $\phi$ has $2n+2m$ variables and $2n+3m$ relations. Moreover, $\phi$ can be constructed in polynomial time. Now apply the construction of Lemma~\ref{lem:psf-reduction} to $\phi$ in polynomial time. This yields an instance $(G_{\phi},\mc{\terms}_{\phi})$ of \pSF{} such that $G_{\phi}$ is a planar graph of treewidth $3$, and $(G_{\phi},\mc{\terms}_{\phi})$ has a solution with at most $8n+11m$ edges if and only if $\phi$ is satisfiable.
Using the above claim, $(G_{\phi},\mc{\terms}_{\phi})$ has a solution with at most $8n+11m$ edges if and only if $\psi$ is satisfiable. Note that $8n+11m \leq 35m$. Therefore, the existence of an algorithm as in the theorem statement would imply an algorithm that decides instances of \pSAT{} in $2^{o(m)}$ time. This proves the theorem.
\end{proof}

%moved to intro
%\input{conclusions}

\subsection*{Acknowledgements}

We thank Daniel Lokshtanov and Saket Saurabh for showing us
the application of Baker's approach to \textsc{Planar Multiway Cut} (Theorem~\ref{thm:mwc-subexp})
and for allowing us to include the proof in this paper.
Moreover, we acknowledge the discussions with Daniel Lokshtanov that lead to the discovery
that the NP-hardness proof for \textsc{Steiner Forest} on planar graphs of treewidth~$3$ of Bateni~\etal\cite{marx-bateni} can be strengthened to also refute a subexponential-time algorithm.

We would like also to acknowledge the support and extremely productive atmosphere
at Dagstuhl Seminars 13121, 13421 and 14071.
At the first one, major technical ideas of the proof of Theorem~\ref{thm:main} were developed.
At the second one, the fundaments of the weighted variant (Theorem~\ref{thm:weighted})
were laid, whereas many important details were discussed and straightened during the third one.

\bibliographystyle{plain}
\bibliography{pst-kernel}

\end{document}